\newtcolorbox{myalgorithm}[1][]{
    colback=gray!10, 
    colframe=black, 
    arc=5pt, 
    boxrule=0.5pt, 
    left=0pt, right=0pt, top=0pt, bottom=0pt 
}
\newcommand{\Exp}{\mathop{\mathbb{E}}}
\newcommand{\E}{\mathcal{E}}
\newcommand{\Sh}{\ensuremath{\mathsf{Stab}}}
\newcommand{\V}{\ensuremath{\mathcal{V}}}
\newcommand{\C}{\ensuremath{\mathcal{C}}}
\newcommand{\A}{\ensuremath{\mathcal{A}}}
\newcommand{\poly}{\ensuremath{\mathsf{poly}}}
\newcommand{\id}{\ensuremath{\mathbb{I}}}
\newcommand{\G}{\ensuremath{\mathcal{G}}}
\def\01{\{0,1\}}
\newcommand{\ketbra}[2]{|#1\rangle\langle#2|}
\newcommand{\be}{\begin{equation}}
\newcommand{\ee}{\end{equation}}
\newcommand{\ba}{\begin{array}}
\newcommand{\ea}{\end{array}}
\newcommand{\bea}{\begin{eqnarray}}
\newcommand{\eea}{\end{eqnarray}}
\newcommand{\Id}{\mathbbm{1}}
\DeclarePairedDelimiter\ceil{\lceil}{\rceil}
\DeclareMathOperator{\Tr}{Tr}
\newcommand{\ra}{\rangle}
\newcommand{\la}{\langle}
\newcommand{\opt}{\textsf{opt}}
\newcommand{\norm}[1]{\left\lVert#1\right\rVert}
\newcommand{\calA}{{\cal A }}
\newcommand{\calB}{{\cal B }}
\newcommand{\calL}{{\cal L }}
\newcommand{\calP}{{\cal P }}
\newcommand{\calF}{{\cal F }}
\newcommand{\calG}{{\cal G }}
\newcommand{\calV}{{\cal V }}
\newcommand{\calE}{{\cal E }}
\newcommand{\calU}{{\cal U }}
\newcommand{\calC}{{\cal C }}
\newcommand{\calS}{{\cal S }}
\newcommand{\calZ}{{\cal Z }}
\newcommand{\calT}{{\cal T }}
\newcommand{\FF}{\mathbb{F}}
\newcommand{\Selfcorrection}{\textsc{Self-Correction}}
\newcommand{\PFR}{\textsf{PFR}}
\newcommand{\APFR}{\textsf{APFR}}
\newcommand{\LCU}{\textsf{LCU}}
\newcommand{\BSG}{\textsf{BSG}}
\newcommand{\PREP}{\mathrm{PREP}}
\newcommand{\SEL}{\mathrm{SEL}}
\newcommand{\gowers}[2]{\textsc{Gowers}\left({#1},{#2}\right)}
\newcommand{\weyl}[1]{\textsf{Weyl}\left({#1}\right)}
\newcommand{\stabfidelity}[1]{\calF_\calS\left({#1}\right)}
\definecolor{citegreen}{HTML}{208054}
\definecolor{citeblue}{HTML}{0055cc}
\newtheorem{theorem}{Theorem}[section]
\newtheorem{definition}[theorem]{Definition}
\newtheorem{prop}[theorem]{Proposition}
\newtheorem{lemma}[theorem]{Lemma}
\newtheorem{remark}{Remark}
\newtheorem{corollary}[theorem]{Corollary}
\newtheorem{fact}[theorem]{Fact}
\newtheorem{claim}[theorem]{Claim}
\newtheorem{problem}[theorem]{{Problem}}
\newtheorem{result}[theorem]{{Result}}
\newcommand{\customlabel}[2]{%
\protected@write \@auxout {}{\string \newlabel {#1}{{#2}{}}}}
\global\long\def\argmax{\operatornamewithlimits{argmax}}
\newcommand{\Dpsi}{D_\Psi}
\newcommand{\bsgtest}{\hyperref[algo:bsg_test]{\textsf{BSG-TEST}}}
\newcommand{\edgetest}{\hyperref[algo:edge_test]{\textsf{EDGE-TEST}}}
\newcommand{\sample}{\hyperref[algo:sample]{\textsf{SAMPLE}}}
\newcommand{\sampleAone}{\hyperref[algo:sample_small_doubling_set]{\textsf{Sample-Small-Doubling-Set}}}
\newcommand{\pfrsubgroup}{\hyperref[algo:obtain_pfr_subgroup_V]{\textsf{PFR-Subgroup}}}
\newcommand{\findstabilizer}{\hyperref[algo:find_good_stab]{\textsf{Find-Stabilizer}}}
\newcommand{\symgramschmidt}{\hyperref[alg:sym_gram_schmidt]{\textsf{Symplectic Gram-Schmidt}}}
\def\widebreve{\mathpalette\wide@breve}
\def\wide@breve#1#2{\sbox\z@{$#1#2$}%
     \mathop{\vbox{\m@th\ialign{##\crcr
\kern0.08em\brevefill#1{0.8\wd\z@}\crcr\noalign{\nointerlineskip}%
                    $\hss#1#2\hss$\crcr}}}\nolimits}
\def\brevefill#1#2{$\m@th\sbox\tw@{$#1($}%
  \hss\resizebox{#2}{\wd\tw@}{\rotatebox[origin=c]{90}{\upshape(}}\hss$}
\title{Learning stabilizer structure of quantum states}
\begin{document}

\author{
Srinivasan Arunachalam\\[2mm]
IBM Quantum\\
\small Almaden Research Center\\
\small \texttt{Srinivasan.Arunachalam@ibm.com}
\and
Arkopal Dutt\\[2mm]
IBM Quantum\\
\small   Cambridge, Massachusetts\\
\small \texttt{arkopal@ibm.com}
}

\date{\today}

\maketitle

\begin{abstract}
We consider the task of learning a \emph{structured stabilizer decomposition} of an arbitrary $n$-qubit quantum state $\ket{\psi}$:  for every $\varepsilon > 0$, output a succinctly describable state $\ket{\phi}$ with stabilizer-rank $\poly(1/\varepsilon)$ such that $\ket{\psi}=\ket{\phi}+\ket{\phi'}$ where $\ket{\phi'}$ has stabilizer fidelity at most $\varepsilon$. We firstly show the existence of such decompositions using the inverse theorem for the Gowers-$3$ norm of quantum states that was recently established~\cite[STOC'25]{ad2024tolerant}. 

\vspace{2mm}

Algorithmizing the inverse theorem is key to \emph{learning} such a decomposition. To this end, we initiate the task of \emph{self-correction} of a state $\ket{\psi}$ with respect to the class of states $\C$: given copies of $\ket{\psi}$ which has fidelity $\geq \uptau$ with a state in~$\calC$, output  $\ket{\phi} \in \calC$ with fidelity $|\la \phi | \psi \ra|^2 \geq \Omega(\uptau^C)$ for some constant $C>1$. {Assuming} the \emph{algorithmic} polynomial Frieman-Rusza ($\textsf{APFR}$) conjecture in the high-doubling regime (whose combinatorial version was resolved in a recent breakthrough~\cite[Annals of Math.'25]{gowers2023conjecture}), we give a $\poly(n,1/\varepsilon)$-time algorithm for self-correction of stabilizer states. 

\vspace{2mm}

Given access to the state preparation unitary $U_\psi$ for $\ket{\psi}$ and its controlled version $\textsf{con}U_\psi$, we  give a $\poly(n,1/\varepsilon)$-time protocol that learns a structured stabilizer decomposition of $\ket{\psi}$. Without assuming $\textsf{APFR}$, we give a $\poly(n,(1/\varepsilon)^{\log 1/\varepsilon})$-time protocol for the same task. Our techniques extend to finding structured decompositions over high stabilizer-dimension states, by giving a new tolerant tester for these states.

\vspace{2mm}

As our main application, we give learning algorithms for states $\ket{\psi}$ promised to have \emph{stabilizer extent} $\xi$, given access to $U_\psi$ and $\textsf{con}U_\psi$. We give a protocol that outputs $\ket{\phi}$ which is constant-close to~$\ket{\psi}$ in time $\poly(n,\xi^{\log \xi})$, which can be improved to $\poly(n,\xi)$ assuming $\textsf{APFR}$. This gives an unconditional learning algorithm for stabilizer-rank $\kappa$ states in time $\poly(n,\kappa^{\kappa^2})$. As far as we know, efficient learning arbitrary states with even stabilizer-rank~$\kappa\geq 2$ was unknown.
\end{abstract}

\newpage

\setcounter{tocdepth}{2}
{\small \tableofcontents}

\newpage 

\part{Introduction}

\section{Introduction}\label{sec:intro}

Over the last few decades, Gowers norms for functions have played a central role  in understanding the structure of classical functions and patterns in a sequence of integers~\cite{green2008primes,roth1953certain,tao2007structure, Gowers1998,GreenTao08_U3Inverse,samorodnitsky2007low,HatamiHatamiLovett2019HOFA}.  Influential works involving Gowers norm have qualitatively shown that, if the Gowers norm of a function $f$ is large, then  $f$ is ``structured" and if the norm is small, then $f$ is ``pseudorandom". This \emph{structure vs.~randomness} paradigm (beyond Gowers norm) has been instrumental in obtaining influential results in mathematics and theoretical computer science~\cite{tao2007dichotomy,
cohen2021structure,chen2022beyond,santhanam2007circuit,oliveira2017pseudodeterministic,alweiss2020improved,alon2003testing,bhattacharyya2010optimal}.  In this theme, seminal works of Gowers, Green and Wolf~\cite{green2006montreal,gowers2010true} have shown how to ``extract structure" by decomposing an arbitrary $f$ with high Gowers-$3$ norm into a \emph{sum} of structured objects, which was subsequently algorithmized by Tulsiani and Wolf~\cite{tulsiani2014quadratic}.  Here, we consider a similar ``structure vs.~randomness" paradigm for \emph{quantum states}. 

In a recent work,~\cite{ad2024tolerant} considered the notion of Gowers-$3$ norm for a  quantum state $\ket{\psi}$  and showed $(i)$ it is closely related to the well-known \emph{stabilizer fidelity}, the maximum fidelity of $\ket{\psi}$ with stabilizer states $\Sh$ denoted $\calF_{\mathcal{S}}(\ket{\psi})$; and $(ii)$ proved an \emph{inverse theorem}, i.e., if  Gowers-$3$ norm of $\ket{\psi}$ is $\geq \tau$, then   $\calF_{\mathcal{S}}(\ket{\psi})\geq \tau^C$~\cite{mehraban2024improved,bao2024tolerant} for some $C>1$. Although inverse Gowers theorems for states~\cite{ad2024tolerant} hints at the possibility of there being ``some structure" in states with high Gowers-$3$ norm, it is far from clear if this is structure is \emph{efficiently obtainable}. This brings us to the first task that we  initiate in this work, which we call \emph{self-correction}\footnote{We chose self-correction for the name of this task, inspired by classical literature~\cite{tulsiani2014quadratic,blum1990self}. Although~self-correction has been used in the context of quantum memories~\cite{bacon2006operator,bravyi2013quantum}, we are not aware if they are related.} for stabilizer states (which is unexplored in quantum computing as far as we know):
\begin{quote}
\begin{center}
    \emph{If $\ket{\psi}$ is promised to have $\calF_{\mathcal{S}}(\ket{\psi})\geq \tau$, output a stabilizer $\ket{s}$ with $|\langle s|\psi\rangle|^2\geq \tau^C$}.
\end{center}
\end{quote}
One can view self-correction as a weaker form of \emph{agnostic learning} wherein the goal is to output a $\ket{\phi}$ for which  $|\langle \phi|\psi\rangle|\geq \tau -\varepsilon$ for $\varepsilon>0$. Agnostic learning  of stabilizers was recently considered~\cite{grewal2023improved,chen2024stabilizer} where they gave quasi-polynomial time algorithms. However, it is often not necessary to be $(\tau-\varepsilon)$-close but being $\poly(\tau)$-close (as in self-correction is sufficient), motivating our question.

Beyond the similarity to agnostic learning, a natural followup question is, can we extract \emph{further} structure, i.e., similar to the classical works~\cite{green2006montreal,gowers2010true},  can we write $\ket{\psi}$ as a sum of ``structured objects" which in this case would be \emph{stabilizer states}? In quantum computing, expressing a quantum state $\ket{\psi}$ as a  sum of stabilizer states is well-known as a \emph{stabilizer rank decomposition} and this has played an important role~\cite{bravyi2016trading,bravyi2019simulation} in understanding the simulability limits of circuits and states. In fact, one of the fastest classical simulation methods for quantum computing is based on stabilizer rank decompositions. This naturally motivates the question
\begin{quote}
\centering
\emph{Can we efficiently decompose every $\ket{\psi}$ into a structured and unstructured part?}
\end{quote}
One can envision a few applications of this: $(i)$ understanding if an unknown $\ket{\psi}$ admits a succinct description, $(ii)$ using the structured part to learn \emph{properties} of $\ket{\psi}$, and $(iii)$ learning low stabilizer-rank~states. The last application being a long-standing open question in quantum learning theory. 

\subsection{Main results}\label{subsec:main_results}
We give a positive answer to the two questions posed above: $(i)$ a \emph{polynomial-time} self-correction algorithm for stabilizers assuming the algorithmic \emph{polynomial Freiman-Ruzsa} $(\PFR)$ conjecture, and $(ii)$ a polynomial-time algorithm  for learning structured stabilizer-rank decompositions (assuming algorithmic $\PFR$ conjecture) along with a quasi-polynomial time algorithm without this assumption.   Before giving more details about these results, we discuss our conjecture in additive combinatorics.

\textbf{Polynomial Freiman-Ruzsa conjecture.} The Freiman-Ruzsa theorem \cite{freiman1987structure,ruzsa1999analog} is a cornerstone of additive combinatorics with diverse applications to theoretical computer science~\cite{lovett2015exposition}. To state the conjecture, we say a set $A$ has \emph{doubling constant} $K$ if $|A+A|\le K|A|$, where  $A+A = \{a+a' \;;\; a,a' \in A\}$. 
In this setting, the $\PFR$ conjecture states that sets $A \subseteq \FF_2^n$ with $|A+A|\le K|A|$ is covered by $\poly(K)$ translates of a subspace $V \subset \FF_2^n$ of size $|V| \leq |A|$. This conjecture was open for decades before being resolved recently in a seminal work of Gowers, Green, Manners and Tao~\cite{gowers2023conjecture}, who showed that
\begin{restatable}{theorem}{combpfrtheorem}(Combinatorial $\PFR$ theorem)
\label{thm:marton_conjecture}
Suppose $A \subseteq\mathbb{F}_2^{n}$ has doubling constant $K$, then $A$ is covered by at most $2K^{9}$ cosets of some subgroup $H \subset \textsf{span}(A)$ of size $|H| \leq |A|$.
\end{restatable}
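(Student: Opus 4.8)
The plan is to follow the entropy method of Gowers, Green, Manners and Tao~\cite{gowers2023conjecture}, since obtaining a dependence on $K$ that is \emph{polynomial} (rather than exponential, as one would get from Freiman's theorem) appears to require it. The first move is to replace sets by random variables and reduce to an \emph{entropic} formulation. For $\mathbb{F}_2^n$-valued random variables $X,Y$ define the Ruzsa distance $d(X;Y):=\mathbb{H}(X'+Y')-\tfrac12\mathbb{H}(X)-\tfrac12\mathbb{H}(Y)$, where $X',Y'$ are independent copies of $X,Y$ and $\mathbb{H}(\cdot)$ is Shannon entropy; $d$ is a genuine pseudometric and obeys a ``Ruzsa calculus'' mirroring the Pl\"unnecke--Ruzsa sumset inequalities. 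If $U_A$ is uniform on $A$, then $|A+A|\le K|A|$ gives $d(U_A;U_A)\le\log K$, so it suffices to prove the \emph{entropic PFR conjecture}: whenever $d(X;X)\le\log K$ there is a subgroup $H\le\mathbb{F}_2^n$ with $d(X;U_H)\le C\log K$ and $\bigl|\mathbb{H}(X)-\log|H|\bigr|\le C\log K$. A Ruzsa-covering argument (last step below) then returns to the combinatorial statement.

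For the entropic statement the plan is a variational argument. Fix a small $\eta>0$ and consider $\tau(X_1,X_2):=d(X_1;X_2)+\eta\,d(X_1;X)+\eta\,d(X_2;X)$; since all variables live on the finite group $\mathbb{F}_2^n$ and entropy is continuous, $\tau$ is minimized at some pair $(X_1,X_2)$. The heart of the argument is the \emph{fibring identity}: for a homomorphism $\pi$ and independent inputs, $d$ decomposes as the distance between the $\pi$-images, plus the conditional distance between the fibres, plus a mutual-information term of the form $\mathbb{I}(\,\cdot\; : \pi\text{-data}\mid \pi\text{-difference})$. Applying this to four independent samples $X_1,X_2,X_3,X_4$ from the minimizing marginals (with $\pi$ a suitable sum map), and substituting the combined variables $X_1+X_3,\;X_2+X_4,\;X_1+X_2,\;X_3+X_4$ back into $\tau$, the minimality of $(X_1,X_2)$ forces all the resulting mutual-information terms --- e.g.\ $\mathbb{I}(X_1+X_2 : X_1+X_3 \mid X_1+X_2+X_3+X_4)$ and its symmetrizations --- to be $O(\eta\log K)$.

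The next step is to extract \emph{exact} group structure from these near-vanishing informations. The governing algebraic fact is that an $\mathbb{F}_2^n$-valued random variable $Z$ with $d(Z;Z)=0$ must be uniform on a coset of a subgroup. Letting $\eta\to 0$ (or taking $\eta$ small and tracking errors), the conditional near-independences above say that a suitable symmetrized combination of $X_1$ and $X_2$ is, up to $O(\eta\log K)$, exactly such a $Z$; this produces $H$, and the entropy bookkeeping along the way yields $d(X;U_H)\le C\log K$ and $\log|H|=\mathbb{H}(X)\pm C\log K$, after which a standard post-processing step lets us assume $|H|\le|A|$ and $H\subseteq\textsf{span}(A)$. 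Finally, to return to sets: $d(U_A;U_H)\le c$ implies by Ruzsa calculus that $|A+H|\le K'|H|$ with $K'=\poly(K)$, and the Ruzsa covering lemma then covers $A$ by at most $K'$ cosets of $H$. Sharpening every Ruzsa-type inequality in this chain and optimizing $\eta$ together with the choice of combined variables is what drives the exponent down to the value claimed, giving $2K^9$.

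The step I expect to be the main obstacle is the fibring substitution combined with the equality analysis: the combined variables must be chosen so that the error terms are genuinely \emph{linear} in $\eta$ (hence vanish in the limit) while the limiting object remains an honest coset of a subgroup rather than merely an approximate one --- this is where the entropy identities have to be deployed with real precision. A secondary, purely quantitative difficulty is that the exponent of $K$ is highly sensitive to every inequality used, so reaching the clean constant $9$ (rather than the larger exponent in the first version of the argument) demands a tight, bespoke optimization of the whole pipeline.
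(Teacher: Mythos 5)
This theorem is not proved in the paper at all: it is quoted verbatim from Gowers--Green--Manners--Tao~\cite{gowers2023conjecture} as a black-box external result (the paper even labels it ``Combinatorial $\PFR$ theorem'' and uses \verb|\restatable| only to cite it later), so there is no in-paper proof to compare your attempt against. Your outline is a broadly faithful pr\'ecis of the actual GGMT argument --- reduction to the entropic form via Ruzsa distance, minimization of the functional $\tau$, the fibring identity applied to a carefully chosen sum map across four independent samples, and the final return to sets via the Ruzsa covering lemma --- but it is a description of that proof rather than a proof, and nearly every step (the Ruzsa calculus inequalities, the fibring identity, the $100\%$-inverse theorem that $d(Z;Z)=0$ forces a coset of a subgroup, the decrement argument, and the exponent optimization down to $9$) is deferred without verification.

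One substantive inaccuracy worth flagging: you describe the endgame as ``letting $\eta\to 0$'' and extracting a variable that is uniform on a coset ``up to $O(\eta\log K)$''. In GGMT the parameter $\eta$ is a \emph{fixed} small constant, and the core of the proof is a contradiction showing that the global minimizer $(X_1,X_2)$ of $\tau$ must satisfy $d(X_1;X_2)=0$ \emph{exactly} (otherwise one of the fibred substitutions strictly decreases $\tau$); the exact $100\%$-inverse theorem then produces an honest subgroup, not an approximate one. There is no limiting process and no ``approximate coset'' to stabilize. This matters because the whole point of the entropy approach is that it converts an approximate sumset hypothesis into an \emph{exact} algebraic conclusion at the minimizer, which is what yields a genuine subgroup rather than a Bohr-set-like object.

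For the purposes of this paper, the correct treatment of the statement is exactly what the authors do: cite it. If you wanted a self-contained appendix you would need to reproduce the GGMT entropy machinery in full, which is well beyond the scope of either this paper or a proof sketch.
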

A natural question is can this be algorithmized, i.e., can one \emph{find the subspace $V$} or output an oracle for $V$? This has not received much attention  (in part because $\PFR$ itself was open until recently). Our result relies on the fact that this combinatorial theorem can be algorithmized.
\begin{restatable}{conjecture}{algopfrconjecture}
(High-doubling algorithmic $\PFR$ conjecture)
\label{conj:algopfrconjecture}
Let $K \geq 1$. Suppose  $A \subseteq\mathbb{F}_2^{2n}$ has doubling constant~$K$. Given random samples from $A$ and membership oracle for $A$ (i.e., on input $x$ outputs if $x\in A$ or not), there is a $\poly(n,K)$-time  procedure that outputs a membership oracle for the subgroup $H$ (whose size is at most $|A|$), such that $A$ is covered by $\poly(K)$-many cosets of~$H$.
\end{restatable}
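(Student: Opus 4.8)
The plan is to algorithmize the proof of Theorem~\ref{thm:marton_conjecture}. Structurally the task reduces to the following: give a $\poly(n,K)$-time procedure that, from random samples of $A$ and the membership oracle, outputs a basis (equivalently, a membership oracle) for a subspace $H\subseteq\mathbb{F}_2^{2n}$ with $|A|/\poly(K)\le|H|\le|A|$ that sits inside a bounded iterated sumset of $A$, say $H\subseteq A+A+A+A$ (over $\mathbb{F}_2$ this is the same as $2A-2A$). Given such an $H$, the covering conclusion is purely combinatorial: Pl\"unnecke--Ruzsa gives $|A+H|\le|5A|\le K^{5}|A|\le\poly(K)\,|H|$, and Ruzsa's covering lemma (take a maximal family of pairwise-disjoint translates $a_i+H$ with $a_i\in A$) forces $A$ into at most $|A+H|/|H|=\poly(K)$ cosets of $H$ (using $H+H=H$). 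The key point is that the conjecture only asks for a membership oracle for $H$, never for the coset representatives, so we never have to produce the $a_i$; the covering comes ``for free'' once $H$ has the two displayed properties, and shrinking $|H|$ into the window $(|A|/2,|A|]$ -- by intersecting with $O(\log K)$ random hyperplanes, if needed -- costs only another $\poly(K)$ factor in the number of cosets.

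For the core step -- producing such an $H$ -- the route I would attempt is to make the entropic proof of Gowers--Green--Manners--Tao~\cite{gowers2023conjecture} constructive. That argument manipulates a constant number of independent copies of $U_A$, the uniform distribution on $A$, which we can sample, and runs a monotone descent on the entropic Ruzsa distance $\tau[X;Y]=\mathbb{H}(X+Y)-\tfrac12\mathbb{H}(X)-\tfrac12\mathbb{H}(Y)$, ending at a variable that is $O(\log K)$-close, in the entropic sense, to uniform on a coset of a subspace $H$ with $\log|H|=\log|A|\pm O(\log K)$. The intermediate random variables are sums, conditionings and ``fibres'' of $O(1)$ copies of $U_A$, so the plan is to $(a)$ sample them by rejection sampling against $U_A^{\otimes O(1)}$, arguing that acceptance probabilities stay $\ge 1/\poly(K)$ because every relevant density is within $\poly(K)$ of uniform on a bounded sumset of $A$; $(b)$ estimate, from those samples, the per-step \emph{decrements} of $\tau$ accurately enough to drive the descent; and $(c)$ read a basis of $H$ off the endpoint by taking the affine span of $O(n)$ samples of the endpoint variable (whose pairwise differences lie in $H$), discarding the rare differences that escape $H$ via a filtering step.

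I expect $(b)$ and $(c)$ to be the main obstacles, and it is because of them that we pose the statement as a conjecture rather than a theorem. In $(b)$, the quantities defining $\tau$ are Shannon entropies of distributions over $\mathbb{F}_2^{2n}$ whose support can be exponentially large, and Shannon entropy is not estimable from $\poly(n)$ samples in general; one has to exploit the near-flatness forced by small doubling, so that Shannon entropy is pinned down by the easily-estimated collision probability up to an additive $\log\poly(K)$, and it is not evident that this control survives uniformly along the entire descent. In $(c)$, the endpoint is only \emph{statistically} close to a coset, so the reconstruction of $H$ must be robust both to that noise and to our inability to compute $|A|$ exactly, and one must bound the number of descent steps by a potential whose per-step drop we can only estimate with additive slack. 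A cleaner fallback would be a purely combinatorial, Balog--Szemer\'edi--Gowers-style scheme -- iteratively refine $A$ to a subset of doubling $1+o(1)$ and read off its linear span -- which is far more amenable to oracle access, but with the currently known machinery this gives only a \emph{quasipolynomial} dependence on $K$ (Sanders-type bounds), short of the $\poly(K)$ the conjecture demands. See the open-problems section and~\cite{algopfr} for partial progress on exactly this question.
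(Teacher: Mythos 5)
This statement is a \emph{conjecture}, not a theorem: the paper assumes it (Conjecture~\ref{conj:algopfrconjecture}) and never proves it, and the open-questions section explicitly lists ``what is the complexity of algorithmizing $\PFR$?'' as unresolved, noting that the best known algorithmic bound~\cite{algopfr} only gives $\poly(n,2^K)$. Your write-up is honest about this --- you say ``it is because of them that we pose the statement as a conjecture rather than a theorem'' --- so there is no gap to flag; there is simply no proof in the paper to compare against, and you correctly did not manufacture one.

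Your framing of what a proof would need is sound and consistent with the paper's own discussion. The reduction to producing a subgroup $H\subseteq 4A$ with $|A|/\poly(K)\le|H|\le|A|$ is a standard and correct one: $H\subseteq 4A$ together with Pl\"unnecke--Ruzsa gives $|A+H|\le|5A|\le K^{5}|A|\le\poly(K)|H|$, and Ruzsa covering (with $H-H=H$ since $H$ is a subspace over $\FF_2$) then places $A$ in $\poly(K)$ cosets of $H$, all without ever needing to name coset representatives --- exactly matching what the conjecture demands (a membership oracle for $H$, nothing more). The obstacles you identify to algorithmizing the entropic GGMT descent --- that entropies of exponentially-supported distributions are not estimable from $\poly(n)$ samples unless near-flatness pins them to collision probabilities, and that the endpoint is only statistically close to a coset so the span-reconstruction step must be made noise-robust --- are the genuine technical sticking points, and are in the spirit of the $\poly(n,K^{\log K})$ vs.\ $n^{O(\log\log K)}$ vs.\ $\poly(n,K)$ hierarchy the paper poses as open. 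One small remark: the known partial result~\cite{algopfr} mentioned in the paper does \emph{not} go via algorithmizing the GGMT descent (it is a ``roundabout way via dequantized stabilizer bootstrapping''), so the specific line of attack you sketch --- constructive entropic descent --- remains untried even in the $\poly(n,2^K)$ regime; your combinatorial BSG-style fallback (with its Sanders-type quasipolynomial loss) is a third, distinct route. All three are consistent with the paper treating this as a conjecture.
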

Recently, \cite{algopfr} gave a $\poly(n,2^K)$-time algorithm for this task. In additive combinatorics, $K$ is generally assumed to be a constant (or at most poly-logarithmic in $n$), in which case the result is efficient and settles the above conjecture in the small-doubling regime. In quantum learning, we are interested in the high-doubling regime of $K=\poly(n)$, in which case this conjecture is open. \textcolor{black}{We believe this regime will also spur new ideas for structurally understanding these  high-doubling sets in additive combinatorics.}  We refer to the open questions section for further discussion.
\paragraph{Our results.} We are now ready to state our results. Along the way we briefly discuss their significance and context, before giving the proof outlines after that.

\noindent \textbf{\emph{1. Self correction}.} Our first result is a  polynomial-time algorithm that solves the self-correction task that we defined above. 
Throughout the paper, we  define the \emph{stabilizer fidelity} of a state $\ket{\psi}$ as its maximal fidelity over all stabilizer states, i.e.,
$$
\stabfidelity{\ket{\psi}}=\max_{\ket{\phi}\in \textsf{Stab}}\{|\langle \phi|\psi\rangle|^2\},
$$
where $\textsf{Stab}$ is the class of $n$-qubit stabilizer states. 
\begin{restatable}{theorem}
{selfcorrectionstatement}
\label{thm:self_correction} ($\Selfcorrection$)
Let $\uptau>0$. Let~$\ket{\psi}$ be an unknown $n$-qubit quantum state such that $\stabfidelity{\ket{\psi}} \geq \uptau$. Assuming the high-doubling algorithmic $\PFR$ Conjecture~\ref{conj:algopfrconjecture}, there is a protocol that with probability $1-\delta$, outputs a $\ket{\phi}\in \Sh$ such that $|\langle\phi |\psi\rangle|^2 \geq  \tau^C$ (for a universal constant $C>1$) using $\poly(n,1/\uptau,\log(1/\delta))$ time and copies of~$\ket{\psi}$.
\end{restatable}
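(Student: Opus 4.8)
The plan is to reduce self-correction to the algorithmic PFR conjecture by passing through the Gowers-$3$ norm, mirroring the classical argument of Tulsiani–Wolf but in the quantum setting. First I would use the relationship between stabilizer fidelity and the Gowers-$3$ norm of a state established in \cite{ad2024tolerant}: the promise $\stabfidelity{\ket{\psi}} \geq \uptau$ implies $\gowers{\ket{\psi}}{3} \geq \uptau$ (up to polynomial loss), and conversely the inverse theorem gives that any state with Gowers-$3$ norm $\geq \uptau'$ has stabilizer fidelity $\geq (\uptau')^{c}$. So it suffices to \emph{find} a stabilizer state witnessing the large Gowers-$3$ norm. The Gowers-$3$ norm of $\ket\psi$ being large means that the ``derivative phase function'' --- informally, the function $b \mapsto$ (phase of $\overline{\psi(x)}\,\psi(x+b)$ averaged over $x$) --- has large correlation structure, which classically would be extracted by identifying an approximately-linear/quadratic phase on a large set.

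The key steps, in order: $(1)$ Estimate, using samples of $\ket\psi$ (via Bell-difference sampling / the Gowers-norm estimation subroutine of \cite{ad2024tolerant}), a set $A \subseteq \FF_2^{2n}$ of Weyl operators $W_a$ for which $|\langle \psi | W_a | \psi \rangle|$ is non-negligible; the largeness of the Gowers-$3$ norm forces $A$ to have small doubling constant $K = \poly(1/\uptau)$. $(2)$ Obtain a membership oracle for $A$ using copies of $\ket\psi$ and $U_\psi$ (estimating each $|\langle\psi|W_a|\psi\rangle|$ to sufficient precision). $(3)$ Feed the random samples from $A$ and the membership oracle into the algorithmic PFR procedure (Conjecture~\ref{conj:algopfrconjecture}) to obtain, in $\poly(n,K)$ time, a membership oracle for a subgroup $H$ with $|H| \leq |A|$ such that $A$ is covered by $\poly(K)$ cosets of $H$. $(4)$ The subgroup $H$ (a collection of commuting or near-commuting Weyl operators after a symplectic cleanup — here I would invoke the \symgramschmidt{} routine to extract a maximal isotropic subspace $H' \subseteq H$) determines a stabilizer group, hence a candidate stabilizer state $\ket{\phi}$; the covering property translates back into $|\langle \phi | \psi\rangle|^2 \geq \poly(\uptau)$ via the inverse-theorem estimates. $(5)$ Finally, use $O(\log(1/\delta)/\poly(\uptau))$ additional copies and the SWAP test (using $\mathsf{con}U_\psi$) to verify the fidelity of the output, repeating the whole procedure $O(\log(1/\delta))$ times to boost the success probability to $1-\delta$.

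I expect the main obstacle to be step $(4)$ — translating the purely additive-combinatorial output of PFR (a subgroup $H$ of $\FF_2^{2n}$ covering $A$ by few cosets) into an actual \emph{stabilizer state} with provably large fidelity. The subtlety is that $H$ need not be isotropic with respect to the symplectic form, so the Weyl operators indexed by $H$ need not generate a stabilizer group; one has to carefully pass to a large isotropic subspace and argue that the corresponding projector still captures $\poly(\uptau)$ of the mass of $\ket\psi$. This is exactly where the quantum argument departs from the classical one, and it requires combining the combinatorial covering bound with the operator-norm/phase estimates from the inverse theorem of \cite{ad2024tolerant} rather than a direct Fourier-analytic computation. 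A secondary technical point is ensuring all the sampling and membership-oracle simulations run in $\poly(n,1/\uptau)$ time with the stated failure probability, which is routine but needs the Chernoff bounds and union bounds to be set up with the right parameters so that the PFR procedure's correctness is not compromised by oracle errors.
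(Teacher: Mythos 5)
Your high-level outline shares the paper's skeleton (relate stabilizer fidelity to the Gowers-$3$ norm, sample Weyl operators with large expectation, apply algorithmic PFR, extract a stabilizer group via symplectic Gram--Schmidt), but it omits the step that is both mathematically necessary and, in the paper, the main technical contribution: the algorithmic Balog--Szemer\'edi--Gowers ($\BSG$) theorem.

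In step $(1)$ you claim that ``the largeness of the Gowers-$3$ norm forces $A$ to have small doubling constant $K = \poly(1/\uptau)$.'' This is not true. Large Gowers-$3$ norm implies (after the randomized choice-set construction of Lemma~\ref{lem:sample_bds} and Theorem~\ref{thm:large_set_gamma_qPsi_expectations}) only that the sampled set $S \subseteq S_{\gamma/4}$ is an \emph{approximate subgroup}: $\Pr_{x,y\in S}[x+y\in S]\geq\poly(\gamma)$. This is a much weaker statement than bounded doubling --- a set can have density $\poly(\gamma)$ for sums landing back inside while $|S+S|$ remains enormous --- so you cannot hand $S$ directly to the PFR procedure, which requires $|S+S|\leq K|S|$. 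Passing from the approximate-group condition to a genuine small-doubling subset $S'\subseteq S$ is exactly the content of the $\BSG$ theorem, and algorithmizing it is where the paper spends Section~\ref{subsec:BSG_test} and Appendix~\ref{appsec:proof_bsg}. The difficulty there is compounded because (unlike Tulsiani--Wolf, who can sample $S$ uniformly via a choice function) one only has samples from the Bell-difference distribution $D_\Psi$, which gives an upper bound but no pointwise lower bound on $D_\Psi(x)$, so the classical membership test for $T(u)$ has to be reanalyzed from scratch with length-$3$-path and concentration arguments. Your step $(2)$ (``obtain a membership oracle for $A$ by estimating $|\langle\psi|W_a|\psi\rangle|$'') builds the oracle for the wrong set: a membership oracle for $S_\gamma$ does not help because $S_\gamma$ has no doubling guarantee; the oracle PFR needs is for the $\BSG$ output $A^{(2)}(u)$, which is what Algorithm~\ref{algo:bsg_test} provides.

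Your step $(4)$ is closer but still undersells the work: after PFR and symplectic Gram--Schmidt one obtains $UVU^\dagger=\calP^k\times\calP^m_\calZ$ with $k=O(\log(1/\gamma))$, but there is no single ``maximal isotropic subspace'' of $V$ that suffices. One has to cover $\calP^k$ by the $2^k+1$ stabilizer groups of a MUB (Fact~\ref{fact:mub_unsigned_stab}), handle the $\pm1$ sign assignments, and --- because enumerating over all signs on $n$ qubits is exponential --- sample the $(n-k)$-qubit computational-basis part by actually measuring $U\ket{\psi}$ (Lines~\ref{algostep:measure_kqubits}--\ref{algostep:measure_last_qubits} of Algorithm~\ref{algo:find_good_stab}). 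Without that measurement-driven search one cannot pick out a fully specified signed stabilizer state in $\poly(n,1/\uptau)$ time. So the plan as written has a genuine hole at step $(1)$/$(2)$ and an incomplete argument at step $(4)$.
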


We provide the formal statement of this result in Section~\ref{sec:selfcorrectionprotocol}. In the table below, we compare our work with the state-of-the-art results in this direction.
\begin{table}[!ht]
\small
\centering
\begin{tabular}{|c | c  | c  |c|} 
\hline
\makecell{$\calF_\calS(\ket{\psi}) = \opt$} & \makecell{Output} & \makecell{Access\\ model} &\makecell{Time\\ complexity} \\ [0.5ex] 
\hline
\makecell{\cite{grewal2023improved}\\$\opt \geq \tau$} &  \makecell{Stabilizer  state\\[1mm] with fidelity $\geq \tau-\varepsilon$}  &  Sample access & $2^{n/\tau^4}/\varepsilon^2$  \\ 
\hline
\makecell{\cite{chen2024stabilizer}\\ $\opt \geq \tau$} &  \makecell{Stabilizer state\\[1mm] with fidelity $\geq \tau-\varepsilon$}  &  Sample access & $\poly(n,1/\varepsilon)\cdot  (1/\tau)^{\log (1/\tau)}$  \\ 
\hline
\makecell{\emph{This work}$^*$\\ $\opt \geq \tau$} &  \makecell{Stabilizer state \\[1mm] with fidelity $\geq \tau^C$}  &  Sample access & $\poly(n,1/\tau)$  \\ 
\hline
\end{tabular}
\caption{Summary of results. In the access model, sample access refers to an algorithm which is only given copies of $\ket{\psi}$ and $U_\psi,\textsf{con}U_\psi$ access refers to an algorithm which is given access to state preparation unitary $U_\psi$ (and its controlled version $\textsf{con}U_\psi$) such that $U_\psi\ket{0^n}=\ket{\psi}$. The $^*$ on this work is to indicate that it relies on the high-doubling algorithmic $\PFR$ conjecture.} 
\label{tab:summary_results_paper}
\end{table}
Like we mentioned in the introduction, in comparison to the harder task of agnostic learning~\cite{grewal2023improved,chen2024stabilizer} for which we have a quasi-polynomial algorithm, our work can be viewed as answering a weaker question, but on the flip-side we are able give a polynomial-time algorithm for this task (assuming the high-doubling algorithmic $\PFR$ conjecture).  Furthermore, we remark that our algorithm involves only single-copy and two-copy measurements of the unknown quantum state. 

\noindent \textbf{\emph{2. Structured decomposition}.} 
In order to prove our self-correction result, our main contribution is to algorithmize the inverse Gowers-$3$ theorem of quantum states that was established in~\cite{ad2024tolerant}. In additive combinatorics and higher-order Fourier analysis, inverse theorems often imply \emph{decomposition theorems}~\cite{green2006montreal,gowers2010true,gowers2011linearFpn,gowers2011linearZn} where the goal is to express a bounded function $f:\FF_2^n \rightarrow \mathbb{C}$ as
$$
f = \sum_{i=1}^k \beta_i g_i + f_{\text{unstruct}}
$$
where $g_i$ for all $i \in [k]$ are \emph{succinctly} representable functions from a defined class of functions $\calF$, $\beta_i\in \mathbb{C}$ and and $f_{\text{unstruct}}$ is nearly orthogonal to all functions in $\calF$ i.e., $\la f, g \ra \leq \varepsilon$ for all $g \in \calF$ (for some error parameter $\varepsilon > 0$). Notably, the seminal Goldreich-Levin algorithm~\cite{goldreich1989decomp} works with $\calF$ being the set of parity functions and outputs in $\poly(n)$ time all the large Fourier characters of $f$ and $f_{\text{unstruct}}$ which has low correlation with any parity function. In quadratic Fourier analysis, the inverse theorem of the Gowers-$3$ norm of Boolean functions~\cite{samorodnitsky2007low,GreenTao08_U3Inverse} is leveraged to obtain a structured decomposition of $f$ in terms of quadratic phase polynomials, and an unstructured part involving a function with low $\ell_1$ norm and another which has low correlation with any quadratic function. From an algorithmic point of view, these proofs were originally algorithmized by Tulsiani and Wolf~\cite{tulsiani2014quadratic}, and improved recently by Briet and Castro-Silva~\cite{briet2025near} who gave a classical algorithm for this task, surprisingly via \emph{dequantized} stabilizer bootstrapping~\cite{chen2024stabilizer}!

Taking inspiration from these results, the natural question  in the quantum setting is: Can we use the inverse theorem of Gowers-$3$ norm of quantum states to give a structured stabilizer \emph{decomposition theorem} for quantum states and furthermore does it admit an \emph{efficient} {algorithm}? Our second main result answers this affirmatively by showing that any self-corrector of stabilizer states with runtime $T$ can be iteratively applied to an arbitrary state $\ket{\psi}$ to ``find" a decomposition of $\ket{\psi}$ in terms of a structured part and an unstructured part in time $\poly(n,T)$. In order to perform this recursive procedure, we will require access to the state preparation unitary for the unknown state $\ket{\psi}$. Throughout this paper, for an $n$-qubit quantum state $\ket{\psi}$, we let $U_\psi$ be a unitary that prepares $\ket{\psi}$ and $\textsf{con}U_\psi$ be the controlled-version~of~$U_\psi$. We now state this result more formally.

\begin{result}
\label{result:learn_decomposition}
\textcolor{black}{ Let $\calA$ be an algorithm such that: for every $\varepsilon$, given copies of an unknown $n$-qubit state $\ket{\varphi}$ with $\stabfidelity{\ket{\varphi}} \geq \varepsilon$, outputs a stabilizer state $\ket{\phi}$ such that $|\la \phi| \varphi \ra|^2 \geq \eta(\varepsilon)$.
Then, there is an algorithm $\calA'$ such that: for every $\ket{\psi}$, given access to $U_\psi,\textsf{con}U_\psi$, invokes $\A$ $k\leq O(1/\eta(\varepsilon)^2)$ many times and outputs 
$\beta \in \calB_\infty^k,\alpha \in \calB_\infty$,\footnote{Here $\calB_\infty^k$ refers to the unit ball, i.e., $\calB_\infty^k=\{a\in \mathbb{C}^k:|a_i|\leq 1 \text{ for all } i \in [k]\}$.} stabilizers $\{\ket{\phi_i}\}_{i\in [k]}$ such that one can write $\ket{\psi}$ as}
$$
  \ket{\psi}=\sum_{i\in [k]} \beta_i \ket{\phi_i}+\alpha\ket{\phi^\perp},
$$
where the residual state $\ket{\phi^\perp}$ satisfies $|\alpha|^2\cdot \stabfidelity{\ket{\phi^\perp}} < \varepsilon$.
\end{result}
The above result can be viewed as a quantum analogue of quadratic Fourier analysis in additive combinatorics. We emphasize that the result implicitly shows: $(i)$ the \emph{existence} of a stabilizer decomposition for every quantum state with non-negligible stabilizer fidelity, and $(ii)$ gives an algorithm that outputs this structure. 
The above result can be applied onto arbitrary base~subroutines~$\calA$ such as the $\Selfcorrection$ algorithm of Theorem~\ref{thm:self_correction} or the algorithm of Chen et al.~\cite{chen2024stabilizer}. We informally give the consequence below (stating it formally in Section~\ref{sec:iteratedselfcorrection}).
\begin{corollary}
For an unknown $n$-qubit state $\ket{\psi}$, given access to $U_\psi,\textsf{con}U_\psi$, there is a
\begin{enumerate}[$\qquad 1.$]
    \item $\poly(n,\varepsilon)$ algorithm using $\Selfcorrection$ (assuming $\APFR$~conjecture~\ref{conj:algopfrconjecture}), 
    \item $\poly(n,(1/\varepsilon)^{\log 1/\varepsilon})$-time algorithm using stabilizer bootstrapping,
\end{enumerate}
that outputs  
$\beta \in \calB_\infty^{k},\alpha \in \calB_\infty$ and stabilizer states $\{\ket{\phi_i}\}_{i\in [k]}$ such that one can write $\ket{\psi}$ as
$$
\ket{\psi}=\sum_{i\in [k]} \beta_i \ket{\phi_i}+\alpha\ket{\phi^\perp},
$$
where $k \leq \poly(1/\varepsilon)$ and the residual  state $\ket{\phi^\perp}$ satisfies $|\alpha|^2\cdot \stabfidelity{\ket{\phi^\perp}} \leq \varepsilon$. 
\end{corollary}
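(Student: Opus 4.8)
The plan is to obtain the corollary by invoking Result~\ref{result:learn_decomposition} twice, with two different choices of the base stabilizer learner $\calA$, and then tracking the two parameters $\eta(\cdot)$ (the fidelity guarantee) and $T$ (the running time) through the generic reduction. So I would first recall how Result~\ref{result:learn_decomposition} turns such an $\calA$ into a decomposition algorithm $\calA'$ running in $\poly(n,T)$; the corollary is then immediate.

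The reduction proceeds by iterative \emph{peeling}. Maintain an unnormalized residual $\ket{R_t}=\ket{\psi}-\sum_{i\le t}\beta_i\ket{\phi_i}$, with $\ket{R_0}=\ket{\psi}$. While the residual still carries stabilizer mass $\max_{\ket{s}\in\Sh}|\la s|R_t\ra|^2 \ge\varepsilon$ — which equals $\norm{R_t}^2\cdot\stabfidelity{\ket{r_t}}$ for $\ket{r_t}=\ket{R_t}/\norm{R_t}$, and which one detects with the tolerant stabilizer tester of~\cite{ad2024tolerant} applied to copies of $\ket{r_t}$ together with an estimate of $\norm{R_t}^2$ — we feed copies of $\ket{r_t}$ to $\calA$ (legitimately, since $\norm{R_t}\le 1$ forces $\stabfidelity{\ket{r_t}}\ge\varepsilon$), obtain $\ket{\phi_{t+1}}\in\Sh$ with $|\la\phi_{t+1}|r_t\ra|^2\ge\eta(\varepsilon)$, set $\beta_{t+1}=\la\phi_{t+1}|R_t\ra$, and recurse. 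Copies of $\ket{r_t}$ are prepared from $U_\psi$, its controlled version $\textsf{con}U_\psi$, and the controlled Clifford circuits preparing $\ket{\phi_1},\dots,\ket{\phi_t}$, via a linear-combination-of-unitaries state preparation followed by amplitude amplification; the $\ell_1$-weight of the combination is $\Lambda_t=1+\sum_{i\le t}|\beta_i|\le 1+\sqrt{t}$ (using $\sum_i|\beta_i|^2\le 1$, shown next), and while peeling $\norm{R_t}^2\ge\varepsilon$, so each copy of $\ket{r_t}$ costs $\poly(t,1/\varepsilon)$ oracle calls. Two invariants finish the analysis: $\norm{R_{t+1}}^2=\norm{R_t}^2-|\beta_{t+1}|^2\le(1-\eta(\varepsilon))\norm{R_t}^2$, so $\sum_i|\beta_i|^2=1-\norm{R_k}^2\le 1$; and each peeling step has $|\beta_i|^2\ge\eta(\varepsilon)\norm{R_{i-1}}^2\ge\varepsilon\,\eta(\varepsilon)$, whence $k\le 1/(\varepsilon\,\eta(\varepsilon))=O(1/\eta(\varepsilon)^2)$ using $\eta(\varepsilon)\le\varepsilon$. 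When the loop halts, setting $\alpha=\norm{R_k}\in[0,1]$ and $\ket{\phi^\perp}=\ket{r_k}$ gives $|\alpha|^2\stabfidelity{\ket{\phi^\perp}}=\max_{\ket{s}\in\Sh}|\la s|R_k\ra|^2<\varepsilon$, while $|\beta_i|\le\norm{R_{i-1}}\le 1$, so $\beta\in\calB_\infty^k$ and $\alpha\in\calB_\infty$. Summing $\poly(t,1/\varepsilon)\cdot T$ over the $k=\poly(1/\eta(\varepsilon))$ rounds, with per-round confidence boosted to $1-\poly(\varepsilon)/k$ for a union bound, yields total running time $\poly(n,T)$.

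For item~1, instantiate $\calA$ with the \Selfcorrection{} algorithm of Theorem~\ref{thm:self_correction}: under the algorithmic $\PFR$ conjecture it has $\eta(\varepsilon)=\varepsilon^{C}$ and $T=\poly(n,1/\varepsilon,\log(1/\delta))$, so the reduction outputs $k=O(\varepsilon^{-2C})=\poly(1/\varepsilon)$ stabilizers in time $\poly(n,1/\varepsilon)$, inheriting the $\PFR$ assumption. For item~2, instantiate $\calA$ with the stabilizer-bootstrapping agnostic learner of Chen et al.~\cite{chen2024stabilizer} run with slack $\varepsilon/2$: on a state with stabilizer fidelity $\ge\varepsilon$ it returns a stabilizer of fidelity $\ge\varepsilon/2$, i.e.\ $\eta(\varepsilon)=\Theta(\varepsilon)$, in time $T=(n/\varepsilon)^{O(\log(n/\varepsilon))}$ and with no appeal to $\PFR$; the reduction then gives $k=O(1/\varepsilon^2)$ stabilizers in time $(n/\varepsilon)^{O(\log(n/\varepsilon))}$. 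In both cases $\ket{\psi}=\sum_{i\in[k]}\beta_i\ket{\phi_i}+\alpha\ket{\phi^\perp}$ with $|\alpha|^2\stabfidelity{\ket{\phi^\perp}}\le\varepsilon$, as claimed.

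The step I expect to be the main obstacle is the residual-access routine: producing faithful copies of the normalized residual $\ket{r_t}$ from only $U_\psi$, $\textsf{con}U_\psi$ and the controlled Clifford circuits for the $\ket{\phi_i}$, while keeping the amplitude-amplification overhead polynomial — this is exactly where the loop invariant $\norm{R_t}^2\ge\varepsilon$ and the $\ell_1$-weight bound $\Lambda_t\le 1+\sqrt{k}$ are both needed — and controlling how errors propagate: the coefficients $\beta_i$ are only estimated, so one must allocate a per-round error budget of $\poly(\varepsilon)/k$ and verify that $\calA$, the tolerant tester, and the overlap estimates are all robust to input copies that are $\poly(\varepsilon)/k$-perturbed in trace distance, which has to be checked separately for each of the two base learners.
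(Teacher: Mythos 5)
Your overall strategy matches the paper's: iterate a base stabilizer learner $\calA$, maintain a residual state prepared via $\LCU$ over $U_\psi$, $\textsf{con}U_\psi$, and the controlled Clifford circuits for the $\ket{\phi_i}$, and bound the iteration count by a potential argument on the $\ell_2$ norm of the residual. Your error-free bookkeeping (with the unnormalized residual $\ket{R_t}$ and the identity $\|R_{t+1}\|^2=\|R_t\|^2-|\beta_{t+1}|^2$) is equivalent to the paper's version with normalized residuals $\ket{\psi_t}$ and cumulative factors $r_j$, and your two instantiations — $\Selfcorrection$ with $\eta(\varepsilon)=\varepsilon^C$, and stabilizer bootstrapping with $\eta(\varepsilon)=\Theta(\varepsilon)$ — are exactly the paper's (Theorem~\ref{thm:restatementofselfcorr} and Theorem~\ref{thm:iterSC_stab_bootstrap}). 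So far so good.

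The gap is in your error analysis, and it is precisely the one the paper flags as the central technical difficulty. You propose to "allocate a per-round error budget of $\poly(\varepsilon)/k$", using the a priori bound $k=O(1/\eta^2)$. But the coefficient $\beta_{t+1}=\la\phi_{t+1}|R_t\ra$ cannot be estimated directly via a Hadamard test — that would require $\textsf{con}V_t$ where $V_t$ is the $\LCU$ circuit preparing the residual, i.e.\ a controlled-controlled-$U_\psi$ which the oracle model does not provide. Instead one must compute $\beta_{t+1}=\la\phi_{t+1}|\psi\ra-\sum_{j\le t}\beta_j\la\phi_{t+1}|\phi_j\ra$ using a Hadamard-test estimate of $\la\phi_{t+1}|\psi\ra$ and the previously computed $\beta_j$. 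This recursion compounds: if $e_j$ denotes the error in $\widehat{\beta}_j$, the recursion $e_{t+1}\le \delta'+\sum_{j\le t}e_j$ blows up geometrically, so a \emph{fixed} per-round budget $\delta'=\poly(\varepsilon)/k$ does not keep the accumulated error under control across $k$ rounds. The paper's fix (Algorithm~\ref{alg:iterSC_main} and Lemma~\ref{lem:properties_beta}) is to \emph{re-estimate every} $\la\phi_j|\psi\ra$ for $j\le t$ at round $t$ to a shrinking accuracy $\delta/(3t^4)$, so that the entire vector of coefficients is recomputed from scratch with matched precision each round; this is what makes the orthogonality defect in Claim~\ref{claim:diff_l2_norm_residuals_iterSC} small enough for the potential argument to survive, and it is what keeps the overall sample complexity polynomial. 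Your sketch attributes the difficulty to "robustness to $\poly(\varepsilon)/k$-perturbed input copies", which is a different (and less severe) concern. If you want to close the gap, the re-estimation schedule — and the cascade of Lemma~\ref{lem:properties_beta}, Corollary~\ref{corr:properties_r}, Claim~\ref{claim:approxtildec}, Claim~\ref{claim:high_norm_residual_state}, and Claim~\ref{claim:diff_l2_norm_residuals_iterSC} — is the missing content.
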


\paragraph{\emph{3. Learning states with low stabilizer extent.}} The notion of \emph{stabilizer extent} was introduced by Bravyi et al.~\cite{bravyi2019simulation} as a way to operationally quantify the ``non-stabilizerness" of a quantum state. On a high-level, the stabilizer extent of a quantum state $\ket{\psi}$ is defined as the minimal $\sum_i |c_i|$ when optimized over all possible stabilizer decompositions of $\ket{\psi}=\sum_i c_i\ket{\phi_i}$ where $\ket{\phi_i}$ are stabilizer states. \textcolor{black}{ These results gave simulation algorithms for $\ket{\psi}$ with runtimes scaling polynomially in the stabilizer extent of $\ket{\psi}$. A natural question is, given low stabilizer extent states are simulatable efficiently, are they also \emph{learnable} efficiently? This has been the primary quest of a sequence of works (which we discuss below)~\cite{aaronsontalk,montanaro2017learning,arunachalam2022optimal,grewal2023efficient,leone2024learningstab,hangleiter2024bell,leone2024learning}.}  
Our main application answers this positively by giving a learning algorithm for states with bounded stabilizer~extent. 

\begin{result}\label{result:learn_stab_extent}
Given access to $U_\psi,\textsf{con}U_\psi$ for an unknown $\ket{\psi}$ with stabilizer extent $\xi$, there is a
\begin{enumerate}[$\qquad 1.$]
\item $\poly(n,\xi)$ algorithm using $\Selfcorrection$ (assuming $\APFR$~conjecture~\ref{conj:algopfrconjecture}),
\item $\poly(n,\xi^{\log \xi})$ algorithm using stabilizer bootstrapping,
\end{enumerate}
which outputs $\ket{\phi}$ that is close to $\ket{\psi}$ upto constant trace distance.
\end{result}
We provide the formal statement of this result in Section~\ref{sec:learn_stab_extent}. \textcolor{black}{ Prior to this we aren't aware even of quasi-polynomial algorithms for learning these states (even if given access to $U_\psi,\textsf{con}U_\psi$); and assuming $\APFR$ Conjecture~\ref{conj:algopfrconjecture}, we resolve the main open question asked by Grewal et al.~\cite{grewal2022low}.}

Additionally, using recent results relating stabilizer extent and stabilizer rank, we also give learning algorithms for states with bounded \emph{stabilizer rank}. Learning quantum states with stabilizer structure has a rich history. The first works on learning stabilizer states were by~\cite{aaronsontalk,montanaro2017learning} who showed that stabilizer states were learnable efficiently. Subsequently, Lai and Cheng~\cite{lai2022learning} showed that states prepared by Clifford circuits with one layer of $O(\log n)$ many $T$ gates can be learned efficiently. A sequence of works has since appeared considering the more general question of learning states produced by arbitrary Clifford circuits with $O(\log n)$ many $T$ gates~\cite{grewal2023efficient,leone2024learningstab,hangleiter2024bell,leone2024learning} and have been able to give efficient algorithms for this task. The key insight into these algorithms is that the states produced by a Clifford circuit with $t$ many $T$ gates are stabilized by an Abelian group of $2^{n-2t}$ Paulis or in other words, have a \emph{stabilizer dimension} of at least $n-2t$. These learning algorithms then exploit this structure. Such states are known to have  stabilizer extent at most $\poly(n)$~\cite{grewal2022low}. 

However, $n$-qubit states with stabilizer extent $\poly(n)$ are more general and could have $\poly(n)$ stabilizer dimension, leading to exponential-time learning algorithms using~\cite{grewal2023efficient}.\footnote{An example is as follows: for $m\geq 1$, let $\mathcal{C}=\{\ket{0}_{S}\otimes \ket{W_m}_{\ket{\bar{S}}}\}_{|S|=m}$, where $\ket{W_m}=\frac{1}{\sqrt{m}}\sum_{i=1}^m\ket{e_i}$ and $e_i\in \FF_2^m$ is the  basis state with $1$ in the $i$th coordinate and $0$s elsewhere. The stabilizer extent of $\ket{\psi}\in \mathcal{C}$ equals $\sqrt{m}$ but the stabilizer dimension of these states is $n-m+1$ (the stabilizers are $(\{\Id,Z\}^{S}\otimes \{\Id^{\bar{S}},Z^{\bar{S}}\})$ which has dimension~$n-m+1$.}
It is then desirable to have efficient  learning algorithms for states with bounded stabilizer extent, an outstanding question in quantum learning,  explicitly raised in~\cite{arunachalam2022optimal,grewal2022low,anshu2024survey}. \textcolor{black}
{Here, we give a polynomial-time learning algorithm using the state preparation unitary and its controlled version, assuming $\APFR$ Conjecture~\ref{conj:algopfrconjecture}, along with an \emph{unconditional} quasi-polynomial time algorithm.
For stabilizer-rank $k$ states, we use the recently established result of $\xi\leq k^k$~\cite{kalra2025stabilizer} to give an unconditional $\poly(n,k^{k^2})$ learning algorithm for stabilizer-rank $k$ states. Prior to our work, we were not aware of efficient learning algorithms for even states promised to have stabilizer~rank~$2$!} 

\paragraph{\textbf{\emph{4. Further implications}}} We have two further applications of our main result.\\[2mm]
\textbf{\emph{$(i)$}} \textcolor{black}{\textbf{\emph{Estimating stabilizer-extent fidelities}}.} A natural task considered in~\cite{flammia2011direct,aaronson:shadow,huang2020predicting} was the estimation of fidelities of an unknown quantum state $\ket{\psi}$ with classes of succinctly describable states. The classical shadows protocol of \cite{huang2020predicting} can be used to perform these predictions efficiently. However,  can one \emph{extract a succinctly describable} ``mimicking state" $\ket{\sigma}$ such that this state can be used to obtain an $\varepsilon$-approximation of the overlap of the unknown $\ket{\psi}$ with any low stabilizer-extent state $\ket{s}$ i.e., $|\langle s | \sigma \rangle - \langle s | \psi \ra| \leq \varepsilon$? One can view our iterative self-correction result as saying: for every $\ket{\psi}$, given access to $U_\psi,\textsf{con}U_\psi$, one can extract a  succinctly describable``mimicking state"  $\ket{\sigma} = \sum_i \beta_i\ket{\phi_i}$ in $\poly(n)$-time (assuming $\APFR$ Conjecture~\ref{conj:algopfrconjecture}) and quasi-polynomial otherwise.
Recent works of~\cite{king2024triply,chen2024optimal,tran2025one}  considered the task of preparing a mimicking state for an arbitrary quantum state in order to estimate $\langle \psi|P|\psi\rangle$ for all Paulis $P$~\cite{huang2020predicting}. Unfortunately, the works of~\cite{king2024triply,tran2025one,chen2024optimal} construct the mimicking state via a feasibility problem over the convex set of $n$-qubit states which may take \emph{exponential time}. In contrast, for the task of estimating inner products of an unknown state $\ket{\psi}$ with low stabilizer-extent states, we give a procedure that is efficient (assuming $\APFR$ Conjecture~\ref{conj:algopfrconjecture}).  We give more details in Section~\ref{sec:mimickingfidelity};\\[2mm]
\textbf{\emph{$(ii)$}} \textcolor{black}{\textbf{\emph{Testing and learning more general states}}.}  We also consider  states of the form $\ket{\psi}=\sum_i c_i\ket{\phi_i}$ where $\ket{\phi_i}$ has stabilizer dimension $n-t$ (this generalizes the usual definition of stabilizer rank where $t=0$). We give learning algorithms for these quantum states using the same access model as above running in $\poly(n,2^t)$-time (assuming $\APFR$ Conjecture~\ref{conj:algopfrconjecture}). This is achieved by algorithmizing a new local Gowers-$3$ inverse theorem which we show here: If the Gowers-$3$ norm of $\ket{\psi}$ is $\geq \gamma$ then it has $\geq \poly(\gamma)$ fidelity with an $n-O(\log(1/\gamma))$ stabilizer dimension state. As a result, we  give a tolerant tester for states with high stabilizer dimension which extends \cite{ad2024tolerant}.
\vspace{-1.5mm}
\subsection{Proof sketch of self-correction}\label{sub_sec:proof_sketches}
In this section, we will give a proof sketch of our $\Selfcorrection$ result.
The starting point of proving Theorem~\ref{thm:self_correction} is the tolerant testing of stabilizer states' algorithm presented in~\cite{ad2024tolerant}, where they showed an inverse theorem for the Gowers-$3$ norm of quantum states. Below, we will state our main results in the context of stabilizer fidelity (for simplicity in presentation) which we know is polynomially related to Gowers-$3$ norm by~\cite{ad2024tolerant}. Before that, we define some notation: for a quantum state $\ket{\psi}$, the characteristic distribution is defined as $p_\Psi(x)=2^{-n}\cdot |\langle \psi|W_x|\psi\rangle|^2$, and the Weyl distribution is defined as its convolution $q_\Psi=4^n (p_\Psi\star p_\Psi)$. It is well-known that Bell-difference sampling~\cite{gross2021schur} allows us to sample from $q_\Psi$ and estimate the expression $\Exp_{x\sim q_\Psi}[|\langle \psi|W_x|\psi\rangle|^2]$. 
\begin{restatable}[\cite{ad2024tolerant}]{theorem}{gowersstates}\label{thm:inversegowersstates}
    Let $\gamma\in [0,1]$. If $\ket{\psi}$ is an $n$-qubit state such that $\Exp_{x\sim q_\Psi}[2^np_\Psi(x)] \geq \gamma$, then there is an $n$-qubit stabilizer state $\ket{\phi}$ such that $|\la \psi | \phi \ra|^2 \geq \Omega(\gamma^C)$ for some constant $C>1$.
\end{restatable}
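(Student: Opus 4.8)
This statement is a restatement of the inverse theorem of~\cite{ad2024tolerant}, so it suffices to cite that work; for completeness here is the argument one would run, which also underlies the sharpened constants of~\cite{mehraban2024improved,bao2024tolerant}. The plan is to first turn the analytic hypothesis into a combinatorial one. Since Bell-difference sampling~\cite{gross2021schur} returns $a+b$ for two independent draws $a,b$ from the characteristic distribution $p_\Psi$, and $2^np_\Psi(x)=|\langle\psi|W_x|\psi\rangle|^2\in[0,1]$,
\[
\gamma\ \leq\ \Exp_{x\sim q_\Psi}\big[2^np_\Psi(x)\big]\ =\ \Exp_{a,b\sim p_\Psi}\big[|\langle\psi|W_{a+b}|\psi\rangle|^2\big]\ =\ 2^n\sum_{a,b}p_\Psi(a)p_\Psi(b)p_\Psi(a+b).
\]
Bounding the triple sum by Young's inequality, $\sum_{a,b}p_\Psi(a)p_\Psi(b)p_\Psi(a+b)\leq\|p_\Psi\|_1\|p_\Psi\|_2^2=\|p_\Psi\|_2^2$, yields $\sum_x|\langle\psi|W_x|\psi\rangle|^4=4^n\|p_\Psi\|_2^2\geq\gamma\cdot2^n$. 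I would then run a dyadic pigeonhole over the $O(\log(1/\gamma))$ magnitude scales of $|\langle\psi|W_x|\psi\rangle|^2$ to isolate a level set $A\subseteq\FF_2^{2n}$ on which $p_\Psi$ is flat up to a factor of $2$ and which carries a $\poly(\gamma)$ fraction of the total $\ell_2^2$-mass, i.e.\ $\sum_{x\in A}|\langle\psi|W_x|\psi\rangle|^2\geq\poly(\gamma)\cdot2^n$, so in particular $|A|\geq\poly(\gamma)\cdot2^n$; retaining the convolution structure rather than discarding it via Young, essentially the same mass bound forces $A$ to have large additive energy, $|\{(a,b,c,d)\in A^4:a+b=c+d\}|\geq\poly(\gamma)\cdot|A|^3$. (Polylogarithmic factors in $1/\gamma$ are polynomially absorbed into the final exponent $C$.)

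Next comes the standard additive-combinatorics pipeline. Balog--Szemer\'edi--Gowers applied to $A$ gives $A'\subseteq A$ with $|A'|\geq\poly(\gamma)\,|A|$ and doubling constant $K:=|A'+A'|/|A'|=\poly(1/\gamma)$, and the combinatorial $\PFR$ theorem (Theorem~\ref{thm:marton_conjecture}) then covers $A'$ by at most $2K^9$ cosets of a subgroup $H\leq\FF_2^{2n}$ with $|H|\leq|A'|$. Pigeonholing over these cosets produces a translate $t+H$ with $B:=\{h\in H:h+t\in A'\}$ satisfying $|B|\geq|H|/(2K^9)$, and tracing the size and level bounds from the first step, $\sum_{h\in B}|\langle\psi|W_{h+t}|\psi\rangle|^2\geq\poly(\gamma)\cdot2^n$. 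The last and most quantum-specific step converts this large, additively structured collection of large Weyl characters into a stabilizer state. Using that anticommuting Weyl operators cannot both have large expectation, $|\langle\psi|W_x|\psi\rangle|^2+|\langle\psi|W_y|\psi\rangle|^2\leq1$, together with the structural facts about characteristic distributions of states in~\cite{gross2021schur,ad2024tolerant}, I would prune $B$ --- losing at most a $\poly(\gamma)$ fraction --- so that the operators $\{W_{h+t}:h\in B\}$ pairwise commute; a short computation with the symplectic form then shows the subgroup they generate is isotropic, and I extend it to a Lagrangian $\Lambda\supseteq\{h+t:h\in B\}$. The $2^n$ stabilizer states $\{\ket{\phi_{\vec{s}}}\}$ that are jointly stabilized up to signs by $\{W_x:x\in\Lambda\}$ form an orthonormal basis with $|\langle\psi|\phi_{\vec{s}}\rangle|^2=\tfrac1{2^n}\sum_{x\in\Lambda}\varepsilon_{\vec{s}}(x)\langle\psi|W_x|\psi\rangle\geq0$, so from $\sum_{\vec{s}}|\langle\psi|\phi_{\vec{s}}\rangle|^2=1$ and character orthogonality over $\Lambda$,
\begin{align*}
\max_{\vec{s}}\,|\langle\psi|\phi_{\vec{s}}\rangle|^2\ \geq\ \sum_{\vec{s}}|\langle\psi|\phi_{\vec{s}}\rangle|^4\ &=\ \frac1{2^n}\sum_{x\in\Lambda}|\langle\psi|W_x|\psi\rangle|^2\\
&\geq\ \frac1{2^n}\sum_{h\in B}|\langle\psi|W_{h+t}|\psi\rangle|^2\ \geq\ \poly(\gamma),
\end{align*}
which produces a stabilizer state $\ket{\phi}$ with $|\langle\psi|\phi\rangle|^2\geq\Omega(\gamma^C)$.

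The two genuinely delicate points, I expect, are: (i) the analytic-to-combinatorial passage --- since $p_\Psi$ need not be flat, the dyadic restriction and its logarithmic losses are forced, and one must argue that the resulting $A$ is an \emph{exponentially large} set with real additive energy, not merely a set carrying some $p_\Psi$-mass; and (ii) the symplectic bookkeeping of the last step --- upgrading the purely additive subgroup delivered by $\PFR$ to an isotropic one and absorbing the coset shift $t$, all while keeping every loss polynomial in $\gamma$. Both are carried out in~\cite{ad2024tolerant} (which, via Theorem~\ref{thm:marton_conjecture}/\cite{gowers2023conjecture}, replaces an earlier quasipolynomial dependence by a polynomial one), with the exponent $C$ subsequently sharpened in~\cite{mehraban2024improved,bao2024tolerant}.
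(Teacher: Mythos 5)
Since the statement is a restatement of the inverse theorem of~\cite{ad2024tolerant}, citing it is certainly sufficient; your expository sketch, however, departs from that proof at the final step in a way that does not actually work.

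The skeleton you describe --- turn the analytic hypothesis into a large set with additive structure, apply Balog--Szemer\'edi--Gowers, apply $\PFR$ --- is the right one. (Your route in, via a dyadic level set and Young's inequality, differs from the randomized choice-set construction of~\cite{ad2024tolerant}, restated here around Lemma~\ref{lem:sample_bds}, and you would still need to localize the triple-convolution mass to a \emph{single} dyadic scale before you can speak of the additive energy of one set $A$; that is an elided side lemma rather than a missing idea.) The genuine gap is in how you turn the $\PFR$ subgroup into a stabilizer state. You propose to prune $B$ to a pairwise-commuting subset while ``losing at most a $\poly(\gamma)$ fraction,'' citing the uncertainty relation $|\langle\psi|W_x|\psi\rangle|^2+|\langle\psi|W_y|\psi\rangle|^2\leq1$ for anticommuting Weyls. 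That relation gives you no leverage here: every element of $B$ has $2^np_\Psi\approx\gamma$, so any anticommuting pair has sum on the order of $2\gamma\leq 1$ and is unconstrained, and there is no reason the isotropic center of $V:=\mathrm{span}(B)$ should carry an appreciable fraction of the $p_\Psi$-mass that $B$ does. What~\cite{ad2024tolerant} actually proves (quoted here as Theorems~\ref{thm:stabilizer_covering_group} and~\ref{thm:characterization_good_stab}) is \emph{not} that $V$ can be shrunk to an isotropic set. Rather, symplectic Gram--Schmidt puts $UVU^\dagger=\calP^k\times\calP_\calZ^m$ with $k=O(\log(1/\gamma))$ because $|V|\lesssim 2^n/\gamma$, and the genuinely non-isotropic $V$ is then \emph{covered} by the $2^k+1=\poly(1/\gamma)$ Lagrangian subspaces arising from the mutually unbiased bases on the $k$-qubit block (Fact~\ref{fact:mub_unsigned_stab}); pigeonholing over this covering produces one Lagrangian $G$ with $\sum_{x\in G}p_\Psi(x)\geq\poly(\gamma)$, to which Theorem~\ref{fact:lower_bound_stabilizer_fidelity} is applied. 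Your pruning step should be replaced by this covering-and-averaging argument.
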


The main contribution here is to algorithmize the proof of the above theorem in order to \emph{find} a stabilizer state with the guarantee as stated, and thereby obtain our self-correction protocol. We emphasize that \emph{almost all} steps involved in the proof of the theorem above were existential and algorithmizing each one of them is the main technical work involved in the $\Selfcorrection$ algorithm. We describe the algorithm by first discussing the algorithmic primitives that we use. 
\vspace{-1.5mm}
\subsubsection{Algorithmic components}
In order to discuss the algorithmic subroutines, we first give a high-level idea of the proof of Theorem~\ref{thm:inversegowersstates} and along the way motivate these  subroutines. 

\textbf{\emph{(i) Bell sampling.}} 
In~\cite{ad2024tolerant}, they observed that if $\Exp_{x\sim q_\Psi}[|\langle \psi|W_x|\psi\rangle|^2] \geq \gamma$, then there is a large approximate subgroup $S \subseteq \{x \in \FF_2^{2n}: 2^n p_\Psi(x) \geq \gamma/4\}$ with size $|S|\in [\gamma/2 , 2/\gamma ]\cdot 2^n$~satisfying
$$
\Pr_{x,y \in S}[ x+y \in S]\geq \poly(\gamma).
$$
Note that for stabilizer states, $\gamma=1$, and $S$ would be the stabilizer subgroup, but in~\cite{ad2024tolerant} they showed that a $\gamma$-lower bound implies the existence of a large approximate group $S$. Unfortunately, $S$ is exponentially sized which is too expensive to store in memory. Our first simple observation is that the well-known Bell difference sampling subroutine can be utilized to sample elements from $S$ efficiently and with $\poly(\gamma)$ probability, so we do have ``access" to this approximate subgroup $S$ via samples but do not have an explicit description of $S$. From here onwards, we \emph{condition} on the event of sampling from $S$ and we will crucially use the fact that we are sampling from this corresponding distribution in order to establish a few concentration properties, which we discuss next.
     
\textbf{\emph{(ii) $\BSG$ Test.}}
The next step in the proof of~\cite{ad2024tolerant} was to apply the Balog-Szemeredi-Gowers ($\BSG$) theorem~\cite{balog1994statistical,gowers2001new} to $S$, which implies the existence of a large subset $S' \subseteq S$ which has a small doubling constant. In particular,~\cite{ad2024tolerant} showed that 
$$
|S' + S'| \leq \poly(1/\gamma) |S|, \enspace |S'| \geq \poly(\gamma)\cdot  |S|.
$$
Again since $S'$ is exponentially large, we will not hold this nor construct it. It would be ideal to have access to this $S'$ and for this one would like to algorithmize the $\BSG$ theorem. Our main contribution here will be to describe a membership test for $S'$. Along with the sampling subroutine from $S$ in $(i)$, this allows us to obtain samples from $S'$. To achieve this we use the main ideas by Tulsiani and Wolf~\cite{tulsiani2014quadratic} who showed how to algorithmize the $\BSG$ theorem for Boolean functions. There are two fundamental issues in using~\cite{tulsiani2014quadratic} as a blackbox which we state first:  $(i)$ in~\cite{tulsiani2014quadratic}, they fix a good \emph{choice function} $\phi:\FF_2^n\rightarrow \FF_2^n$ and define the set $S_\phi=\{(x,\phi(x)):x\in \FF_2^n\}$ for which they prove properties about. However for us, our set $S$ need not necessarily have this product structure. 
$(ii)$ More importantly, in~\cite{tulsiani2014quadratic} once they have $\phi$, one can sample from the set $S_\phi=\{(x,\phi(x)):x\in \FF_2^n\}$ \emph{uniformly} by first sampling $x\in \FF_2^n$ and then outputting $(x,\phi(x))$. In contrast, the samples that we obtain in our choice set, i.e., the approximate subgroup $S$, comes from the Bell difference sampling distribution, requiring us to reanalyze all their claims.

Our main contribution is, we are able to  show that all their arguments also goes through when we are handed samples from the Weyl distribution $q_\Psi$ corresponding to Bell difference sampling. In order to port the uniform-distribution arguments (conditioned on sampling from the approximate subgroup $S$) from~\cite{tulsiani2014quadratic} to ours where we sample from the Weyl distribution, we open up several  arguments in~\cite{sudakov2005question,tulsiani2014quadratic} and use the analytic properties of the graphs used in Balog-Szemeredi~\cite{balog1994statistical} to show new concentration inequalities of the Weyl distribution along the way (which may be of independent~interest). Crucial to the the proof of~\cite{tulsiani2014quadratic} is that every point in the choice set has a distributional weight of exactly $1/|S|$ (since it is the uniform distribution), whereas for us, we work with the Weyl distribution which does not have this property. Note that if one had access to copies of the state's conjugate $\ket{\psi^*}$, then we could have obtained the choice set via Bell sampling whose distribution corresponds to $p_\Psi$ and then, we would have the promise that $p_\Psi(x) \geq \gamma\cdot 2^{-n},\,\, \forall x \in S$. However, as having access to $\ket{\psi^*}$ is unnatural, we limit ourselves to having only access to samples from the Weyl distribution $q_\Psi$. Proving analogous statements under the latter distribution is significantly more challenging due to the lack of lower bounds on $q_\Psi(x)$ for all $x\in S$, and circumventing this is the main technical contribution of our~ $\BSG$~test.
 
\textbf{\emph{ (iii) Algorithmic $\PFR$ conjecture.}}
The next step in~\cite{ad2024tolerant} is to apply the recently proven $\PFR$ theorem~(Theorem~\ref{thm:marton_conjecture})~\cite{gowers2023conjecture} to~$S'$. Using this, one can observe that $S'$ is covered by few translates of a subgroup $V \subseteq S'$ and with some analysis,~\cite{ad2024tolerant} showed that 
$$
\Exp_{x \in V}\left[2^n p_\Psi(x) \right] \geq \poly(\gamma).
$$
In~\cite{ad2024tolerant}, the structural statement of the $\PFR$ statement was sufficient. As in the previous steps, we would ideally like an algorithmic version of~\cite{gowers2023conjecture} but this seems non-trivial to obtain.
Hence, we assume the algorithmic $\PFR$ conjecture in the high-doubling regime (Conjecture~\ref{conj:algopfrconjecture}). 
Since we have already constructed sample and membership access to $S'$ as described in $(i,ii)$, we will later show that $\APFR$ Conjecture~\ref{conj:algopfrconjecture} can be utilized to construct a basis for $V$.
    
\textbf{\emph{(iv) Clifford synthesis for subgroup transformation.}} In \cite{ad2024tolerant}, it was then shown that the subgroup $V$ can be covered by a set of $\poly(1/\gamma)$ unsigned stabilizer subgroups. This allowed them to conclude that there is a stabilizer state $\ket{\phi}$ such that $|\la \psi | \phi \ra|^2 \geq \poly(\gamma)$. We now aim to algorithmize this: \emph{determine} the stabilizer group and thus the stabilizer state.
To aid us, we seek to find a Clifford unitary $U$ that 
\begin{align}
\label{eq:introsgs}
U V U^\dagger = \la Z_1, X_1, \ldots, Z_k, X_k, Z_{k+1}, Z_{k+2}, \ldots, Z_{k+m} \ra = \calP^k \times \calP_\calZ^m,
\end{align}
where $\calP^k$ is the $k$-qubit Pauli group. Note that there exist algorithms for synthesizing such a unitary when $V$ is an isotropic subspace~\cite{grewal2023efficient} but the $V$ in question here is product of an isotropic subspace and the $k$-qubit Pauli group. We use the following procedure for synthesizing $U$. We first use the \emph{symplectic} Gram-Schmidt procedure  to obtain the centralizer $C_V$ of $V$ which is the subset of Paulis in $V$ which commute with all of $V$ and the anti-commutant $A_V = V \backslash \la C_V \ra$. We then show that given bases of $C_V$ and $A_V$, one can determine a Clifford circuit with $O(n^2)$ gates implementing $U$ in $O(n^3)$ time. There are still three issues which we need to tackle in order to find the stabilizer state, $(a)$ the stabilizer group is not-fully specified (since $k+m\leq n$), $(b)$ the group above is an \emph{unsigned} stabilizer covering and $(c)$ even if we brute force over the first $k$-qubits (which is admissible since $\exp(k)=\poly(1/\gamma)$), it is unclear how one should view the last $n-k$ bits. We describe how these are circumvented in the next section.

We are done describing the algorithmic primitives of $\Selfcorrection$. Essentially, with  these subroutines we have avoided using exponential memory (and time) in \emph{storing} the approximate subgroups crucial in the proof of Theorem~\ref{thm:inversegowersstates} and instead have oracle access to these~sets.

\subsubsection{$\Selfcorrection$ algorithm}
Describing the algorithmic subroutines by themselves does not immediately give a $\Selfcorrection$ protocol. We describe in three steps below how these subroutines are combined together.

$(i)$ First, given copies of $\ket{\psi}$, the algorithm uses Bell difference sampling combined with the $\BSG$ test to obtain samples from a set $S'$ which has a small doubling constant.\footnote{This is not quite what the $\BSG$ test guarantees, but for the proof overview this picture simplifies the presentation.} As mentioned earlier, proving the correctness  requires reanalyzing~\cite{tulsiani2014quadratic} when given samples from this~distribution.

$(ii)$ The goal is to  construct a basis for the subspace $V$ whose translates covers $S'$. Now, the algorithmic $\PFR$ Conjecture~\ref{conj:algopfrconjecture} only gives us \emph{membership oracle} access, as is common in additive combinatorics and also we know only translates of $V$ covers $S'$, but we need a subspace that captures a large fraction of the ``weight'' of the Weyl distribution $p_\Psi$. The algorithm now samples $t = O(n^2)$ points from this set $S'$  (according to the Weyl distribution) and takes their span and one can show that this span covers at least a $\gamma$-fraction of elements of $S'$. Using a membership oracle by algorithmic $\PFR$, we can retain the points which are in the subspace. After this we use several linear algebraic observations to obtain a subgroup $V$ that is small subgroup and satisfies $\Exp_{x\in V}[2^n p_\Psi(x)]\geq \gamma$. The core of this  step crucially gives the  combinatorial-to-algebraic bridge in going from the the dense but unstructured set $S'$ to a succinctly describable~subgroup.

$(iii)$ Once the subgroup $V$ is known, the goal is to construct an  stabilizer state $\ket{\phi}$ with  $|\langle \psi|\phi\rangle|^2\geq \poly(\gamma)$. We apply the symplectic Gram--Schmidt procedure to $V$ to get a Clifford
unitary $U$ as in Eq.~\eqref{eq:introsgs}
where $k = O(\log(1/\gamma))$ and $k+m \leq n$. 
 One can  show that $UVU^\dagger$ can be covered using $\poly(1/\gamma)$ unsigned stabilizer groups corresponding to mutually unbiased bases (MUBs) on the first $k$-qubits and $\{I,Z\}^{\otimes (n-k)}$ on the last $(n-k)$ qubits. Moreover, due to the promise of a stabilizer state having high fidelity with $\ket{\psi}$, we observe that there is an $n$-qubit stabilizer \emph{product} state promised to have $\geq \poly(\gamma)$ fidelity with $U\ket{\psi}$, of the form $\ket{\varphi_z}\otimes\ket{z}$ where $\ket{\varphi_z}$ is a $k$-qubit stabilizer corresponding to an MUB and $\ket{z}$ is an $(n-k)$-qubit basis state. To determine this state, one option is to then construct all the stabilizer states of this form with all possible phase choices $\{-1,1\}^{2n}$ for the different generators. However, this is an exponentially large set of candidate states. We instead enumerate over all possible phases over the first $k$ qubits corresponding to the MUBs leading to a  set of $\poly(1/\gamma)$ candidate states for $\ket{\varphi_z}$. We then show that $\ket{\varphi_z}$ can be determined  by measuring the first $k$ qubits of $U \ket{\psi}$ in the corresponding basis and conditioned on this, the computational basis state $\ket{z}$ can be  obtained by measuring the last $(n-k)$ qubits of $U\ket{\psi}$ in the computational basis. We then estimate its fidelity with $U\ket{\psi}$ using classical shadows and output $U^\dagger(\ket{\varphi_{z^*}\otimes \ket{z^*}})$ where the stabilizer state $\ket{\varphi_{z^*}}\otimes \ket{z^*}$ has the highest fidelity with $U\ket{\psi}$.

\subsection{Proof sketch of iterated Self-Correction}
So far, we saw that $\Selfcorrection$, on input $\ket{\varphi}$ satisfying $\stabfidelity{\ket{\varphi}}\geq \gamma$, produced a stabilizer state $\ket{\phi_1}$ such that $|\langle \varphi|\phi_1\rangle|^2\geq \gamma^C$.  A natural question is, can we recurse this further to find \emph{another} stabilizer state $\ket{\phi_2}$ such that $|\langle \varphi|\big(\beta_1\ket{\phi_1} + \beta_2\ket{\phi_2}\big)\rangle|^2\geq \gamma'$ (for some $\beta_1,\beta_2 \in \mathbb{C}$) such that $\gamma'\geq \gamma^C$? This is precisely what we do below by giving an algorithm that learns a structured stabilizer decomposition of an arbitrary state $\ket{\psi}$.
In particular, given access to $U_\psi,\textsf{con}U_\psi$, we output a structured decomposition of $\ket{\psi}$ where the structured part is a stabilizer-rank $k\leq 1/\eta^2$ state and the unstructured part has low stabilizer fidelity.
To this end, let us denote the $\Selfcorrection$ algorithm as $\calA$ and denote the iterative learning algorithm as $\calL$. In the first step, $\calL$ checks if $\Exp_{x \sim q_\Psi}[|\la \psi | W_x | \psi \ra|^2] \geq  \varepsilon^6$ (which is a proxy for the Gowers-$3$ norm) or not by using the protocol of \cite{ad2024tolerant}. If so, $\calL$ proceeds and if not, it stops and outputs $\ket{\psi}$. When $\calL$ proceeds, it uses $\calA$ on copies of $\ket{\psi}$ to learn $\ket{\phi_1} \in \Sh$ such that $| \la \phi_1 | \psi \ra|^2 \geq \eta$ (where $\eta = \varepsilon^{6C}$). We can then~write
$$
\ket{\psi} = c_1 \ket{\phi_1} + r_1 \ket{\phi_1^\perp},
$$
where $\ket{\phi_1^\perp}$ is a state orthogonal to $\ket{\phi_1}$, and $c_1$ (resp.~$r_1$) are the corresponding coefficients  of $\ket{\phi_1}$ (resp.~$\ket{\phi_1^\perp}$). Note that these coefficients satisfy
$
c_1 = \la \phi_1 | \psi \ra$ and $|r_1| = \sqrt{1 - |c_1|^2}$. 
Let us denote the new state of interest as $\ket{\psi_2}=\ket{\phi_1^\perp}$ which is nothing but
$$
\ket{\psi_2} = \left(\ket{\psi} - c_1 \ket{\phi_1} \right)/|r_1|,
$$
In the second step of $\calL$, we first check if $|r_1| < \varepsilon$ which would indicate we have accomplished state tomography of $\ket{\psi}$ with $\ket{\phi_1}$. If not, we then proceed to prepare the state $\ket{\psi_2}$ via linear combination of unitaries ($\LCU$) to implement $(U_\psi - c_1 W_1)/|r_1|$ where $W_1$ is the Clifford unitary preparing the stabilizer state $\ket{\phi_1}$. It can be shown that the corresponding success probability of preparation is high if $|r_1|\geq \varepsilon$. At this point, we estimate the Gowers-$3$ norm of $\ket{\psi_2}$: $(i)$ if it is at most $\varepsilon^6$, we terminate and output $\ket{\phi^\perp}$ in our theorem statement as $\ket{\psi_2}$, and $(ii)$ if  Gowers-$3$ norm of $\ket{\psi_2}$ is high i.e., $\Exp_{x \sim q_{\Psi_2}}[|\la \psi_2 | W_x | \psi_2 \ra|^2] \geq  \varepsilon^6$, we run $\Selfcorrection$ on $\ket{\psi_2}$ using $\calA$ to output a stabilizer state $\ket{\phi_2}$ such that $|\la \psi_2 | \phi_2 \ra|^2 \geq \eta$. At this point, one can then express
$$
\ket{\psi_2} = \ket{\phi_1^\perp} = c_2 \ket{\phi_2} + r_2 \ket{\phi_2^\perp},
$$
where $c_2 = \la \phi_2 | \phi_1^\perp \ra$ with the promise that $|c_2|^2 \geq \eta$, and $\ket{\phi_2^\perp}$ is some state that is orthogonal to $\ket{\phi_2}$. We have so far then expressed our original state $\ket{\psi}$ as
$$
\ket{\psi} = c_1 \ket{\phi_1} + r_1 c_2 \ket{\phi_2} + r_1 r_2 \ket{\phi_2^\perp},
$$
where  we have the promise that the coefficients $|c_1|^2 \geq \eta$ and $|r_1|^2|c_2|^2 \geq \eta^2$. The above equation illustrates the beginning of learning a {structured} decomposition of $\ket{\psi}$ in terms of stabilizer states $\ket{\phi_1}$ and $\ket{\phi_2}$.  Recursing this procedure, we produce a list of stabilizer states $\ket{\phi_i}$, $\beta_i\in \mathbb{C}$ where
$$
\ket{\psi}=\sum_{i=1}^k \beta_i \ket{\phi_i}+ \alpha_{k+1}\ket{\phi^\perp}.
$$
The algorithm stops when either one of the following two conditions are met, either $\prod_{i=1}^t |r_t|^2 < \varepsilon$ or if $\Exp_{x \sim q_{\Psi_t}}[|\la \psi_t | W_x | \psi_t \ra|^2] <\varepsilon^6$. This in turn implies that the residual state $\alpha_{k+1}\ket{\phi^\perp}$ satisfies the guarantee of the theorem.  
For simplicity, let us denote $\ket{\widehat{\psi}_t}:=\sum_i \beta_i \ket{\phi_i}$. In order to understand the complexity of the iterated $\Selfcorrection$ algorithm $\calL$, it remains to argue the following:

\textbf{\emph{1.}} \emph{Upper bound on the number of iterations $k$.} To argue that the process above stops after not-too-many steps, we consider the $\ell_2$ norms of  $\ket{\Psi_t}=\ket{\psi}-\ket{\widehat{\psi}_t}$. Using the orthogonality of $\ket{\phi_t}$ and $\ket{\Psi_{t+1}}$ (by construction of the $\ket{\psi_i}$ states above) as well as stopping criteria requirement that $\Exp_{x \sim q_{\Psi_t}}[|\la \psi_t | W_x | \psi_t \ra|^2] \geq \varepsilon^6$ and $\prod_{i \in [t-1]} |r_i|^2 \geq \varepsilon$ for all $t\leq k$, we show that
$$
\norm{\Psi_t}_2^2 - \norm{\Psi_{t+1}}_2^2 \geq \eta^2, \quad \text{ for all } t\leq k
$$
and summing this from $t=1,\ldots,k$, we get 
$
k\eta^2\leq \norm{\Psi_{1}}_2^2 - \norm{\Psi_{k}}_2^2 \leq 1,$ implying that $k\leq 1/\eta^2$.

\textbf{\emph{2.}} \emph{Algorithms to implement each step.} As part of $\calL$, we need the ability to estimate Gowers-$3$ norm of states, prepare the intermediate state $\ket{\psi_i}$ and computation of coefficients $\{c_i\}_i$. To $\varepsilon$-approximate the Gowers-$3$ norm, we use the $\poly(n,1/\varepsilon)$-time protocol from~\cite{ad2024tolerant}. To compute the coefficients $c_i$, we use the Hadamard test (which also allows us to track phase information). Finally,  to prepare the intermediate states $\ket{\psi_i}$ in each iteration, we use the well-known linear combination of unitaries ($\LCU$) method~\cite{childs2012lcu} and for this step, we require the ability to implement $U_\psi,\textsf{con}U_\psi$. Since one can construct the Clifford unitary $U_{\phi_i}$ which prepares the stabilizer state $\ket{\phi_i}$ efficiently, preparing the state $\ket{\psi_i}$ using $U_\psi,\textsf{con}U_\psi,U_{\phi_i},\textsf{con}U_{\phi_i}$ can be done efficiently. 
   
\textbf{\emph{3.}} \emph{Errors in subroutines.} The two steps above  are made formal in Section~\ref{sec:errorfreeSelfcorrection} (assuming all the subroutines are noiseless). Unfortunately, when we take into account the errors in estimating the Gowers norm, Hadamard test and $\LCU$ implementation, almost \emph{all} the identities that we use above to argue the correctness of the algorithm and to obtain an upper bound on $k$ do not hold. In the process, we  modify the algorithm (one of the concerns with iterated $\Selfcorrection$ is we do not know what \emph{is} the stopping point, so the accuracy up to which the parameters are estimated needs to be recalibrated across different iterations). Our final algorithm (which is resilient to these errors) incorporates all these details and making this rigorous is the most technical part of our work. Proving the correctness and convergence of the modified algorithm with noisy estimates is done using fairly involved calculations and is proven in Section~\ref{sec:errorSCsection}. With this, we prove Result~\ref{result:learn_decomposition}. 

While our discussion so far has focused on applying the $\Selfcorrection$ protocol iteratively to learn a structured decomposition, we could have replaced it with the stabilizer bootstrapping algorithm of Chen et al.~\cite{chen2024stabilizer} that would have allowed us, on input of a state $\ket{\psi_t}$ with the promise $\calF_{\calS}(\ket{\psi_t}) \geq \varepsilon$, to output a stabilizer state $\ket{\phi_t}$ with the promise that $|\la \psi_t | \phi_t \ra|^2 \geq \varepsilon/2$ in $\poly(n,(1/\varepsilon)^{\log 1/\varepsilon})$ time. So in every iteration, we would have $\eta = \varepsilon/2$ and the previous arguments hold for this value of $\eta$. The complexity here is worse than the one where we used $\Selfcorrection$ as the base algorithm, but it doesn't require the $\APFR$ Conjecture~\ref{conj:algopfrconjecture}.

\paragraph{Application.} We now give a learning algorithm for  states with low stabilizer extent. Let $\ket{\psi}=\sum_i c_i\ket{\phi_i}$ be the unknown  state with $\sum_i |c_i|\leq \xi$. By running our iterative procedure on $\ket{\psi}$ (with error set to $\varepsilon'/\xi$), we  obtain a $\ket{\widetilde{\psi}}$ such that stabilizer fidelity of $\ket{\psi}-\ket{\widetilde{\psi}}$ is $\leq \varepsilon'/\xi$. In particular, 
$$
|1-\langle \psi|\widetilde{\psi}\rangle|=|\langle \psi|{\psi}\rangle-\langle \psi|\widetilde{\psi}\rangle|\leq \sum_i |c_i|\cdot |\langle \phi_i| \psi-\widetilde{\psi}\rangle|\leq \xi\cdot \varepsilon=\varepsilon',
$$
which is precisely the requirement for tomography of states with low stabilizer extent. However, recall that $\ket{\widetilde{\psi}}$ isn't a valid quantum state, so we renormalize it and that gives a constant-distance tomography protocol.  Our results for learning low stabilizer rank states follows immediately by using recent results~\cite{mehraban2024improved,kalra2025stabilizer} who showed that stabilizer rank $k$ states satisfy $\xi(\ket{\psi})\leq k^{O(k)}$.  Using similar ideas as above, we also give learning algorithms for states which can be expressed as $\ket{\psi}=\sum_i c_i\ket{\phi_i}$ where $\ket{\phi_i}$ have stabilizer dimension $n-t$,  with a $2^t$ overhead in the complexity.
\subsection{Open questions} 
 Our work opens up several interesting directions for future work.

\begin{enumerate}
\item \emph{Complexity of algorithmic $\PFR$:} The first question that arises from this work is, what is the complexity of algorithmizing $\PFR$? In a recent work, \cite{algopfr} gave an algorithmic $\PFR$ result whose complexity scaled as $\poly(n,2^K)$. Often in additive combinatorics, $K$ is treated as a constant, but in this work, $K$ could scale with $n$. 
Can we improve this complexity to $\poly(n,K^{\log K})$ (using ideas from~\cite{ben2014sampling}), or $n^{O(\log \log K)}$ (since the proof of~\cite{gowers2023conjecture} involves an iterative with $\log \log K$ iterations) or even $\poly(n,K)$? If so, the latter results would immediately have consequences for our results. 
\item \emph{Remove the need of unitary access}: Our iterated $\Selfcorrection$ protocol required access to the unitary preparing $\ket{\psi}$ and its controlled version. Can we remove the latter requirement and only work in the setting where we have copies of $\ket{\psi}$?
\item \emph{Further generalizations.} A natural extension is the one of considering mixed states. Recently,~\cite{iyer2024tolerant} extended the tolerant tester of \cite{ad2024tolerant} to \emph{mixed states}. Few results here naturally extend to the setting of mixed states (we mention it as part of our statements), but extending the applications of our main results is unclear. In another direction, recent work has extended Bell sampling and the notion of Gowers norms to qudits~\cite{allcock2024beyond,bu2025quantum,bu2025stabilizer}. This raises the question if one can extend the results in this paper to that for qudit~systems?

\item \emph{Implications to complexity theory?} Classically, agnostic learning and self-correction has been shown to have implications to list-decoding error correcting codes and also in the PCP theorem~\cite{tulsiani2014quadratic,ben2014sampling,hatami2018cubic}.  
Our iterated protocol could potentially allow to extract the ``classical part" of an arbitrary quantum object, leaving as a residual the ``truly quantum" aspect of the object. Does this have implications beyond learning?
\item \emph{Self-correction beyond stabilizers?}  A natural question is, can one obtain an efficient self-correction algorithm for other classes of states? For example, there has been recent progress in improved tomography protocols for free-fermionic states~\cite{bittel2025optimal}. Could one then extend the results in this work to obtain analogous results for free-fermionic states? 
\end{enumerate}

\subsection{Organization of paper}
In Part I, Section~\ref{sec:prelims}  of the paper we  state and prove several lemmas and describe the notational convention that we use throughout the paper. In Part II of the paper, we give our self correction results:  in Section~\ref{sec:subroutines} we give our algorithmic components for self correction and in Sections~\ref{sec:selfcorrectionprotocol} and~\ref{sec:improperselfcorrection}, we give our proper and improper self correction algorithms respectively. In Part III of the paper, we give our iterative procedure which could use an arbitrary base algorithm:  in Section~\ref{sec:iteratedselfcorrection} we give our entire iterative self correction procedure and in Section~\ref{sec:applications} we present our applications for iterated $\Selfcorrection$ procedure.   

We have written this paper in a modular fashion to aid the reader. Since the main technical contributions are two-fold, a self-correction algorithm and an iterative  procedure (which could use the base algorithm as either $\Selfcorrection$ or stabilizer bootstrapping~\cite{chen2024stabilizer}), a reader who would like to just understand the iterative procedure along with the now-well-known stabilizer bootstrapping framework can directly read Part III of the paper (and skip Part II of the~paper).

\paragraph{Acknowledgments.} SA and AD thank Sergey Bravyi for multiple useful discussions, and Theodore (Ted) J. Yoder for introducing us to the symplectic Gram-Schmidt procedure. We thank Madhur Tulsiani for helpful clarifications on~\cite{tulsiani2014quadratic}. We thank Sabee Grewal and Pulkit Sinha for several discussions. We also thank Jop Briet, David Gosset,  Patrick Rall, and Ewout van den Berg for discussions. SA and AD thank the Institute for Pure and Applied Mathematics (IPAM) for its hospitality throughout the long program “Mathematical and Computational Challenges in Quantum Computing” in Fall 2023 during which part of this work was initiated. 

\section{Preliminaries}
\label{sec:prelims}
For $n\geq 1$, let $[n]=\{1,\ldots,n\}$.  For a set $A\subseteq \FF_2^n$ and $k\geq 1$, define $kA=\{\sum_{i\in S}a_i:a_i\in A\}_{|S|=k}$, so $2A=\{a_1+a_2:a_1,a_2\in A\}$. We define the \emph{doubling constant} of $A$ as the smallest integer $t$ such that $|2A|\leq t|A|$. Throughout this paper, for a quantum state $\ket{\psi}$, we will denote $U_{\psi}$ to denote a unitary that prepares $\ket{\psi}$, i.e., $U_{\psi}\ket{0^n}=\ket{\psi}$.  For $\varepsilon\in (0,1)$, we say $f(\varepsilon)=\poly(\varepsilon)$ if there exists a constant $c_1,c_2\geq 1$ such that $f(\varepsilon)=c_1\varepsilon^{c_2}$.\footnote{In this paper, there are several polynomial factors that we have not explicitly optimized, so we use the convention $\poly(\varepsilon)$ to make the exposition easier to follow.}  We define $\calB_\infty^k$ as the unit complex ball, i.e., $x\in \calB_\infty^k$ if $x_i \in \mathbb{C}$ for all $i\in [k]$ and $|x_i| \in (0,1)$. 

\subsection{Weyl operators}
The single-qubit Pauli matrices matrices are defined as follows
$$\id=\begin{pmatrix}
1 & 0\\
0 & 1
\end{pmatrix}, X=\begin{pmatrix}
0 & 1\\
1 & 0
\end{pmatrix}, Y=\begin{pmatrix}
0 & -i\\
i & 0
\end{pmatrix},Z=\begin{pmatrix}
1 & 0\\
0 & -1
\end{pmatrix}
$$
It is well-known that the $n$-qubit Pauli matrices $\{\id,X,Y,Z\}^n$ form an  {orthonormal basis} for $\mathbb{C}^n$.    In particular, for every $x=(a,b)\in \mathbb{F}_2^{2n}$, one can define the \emph{Weyl operator}
$$
W_x=i^{a\cdot b} (X^{a_1}Z^{b_1}\otimes X^{a_2}Z^{b_2} \otimes \cdots \otimes X^{a_n}Z^{b_n}).
$$  
and these operators $\{W_x\}_{x \in \FF_2^{2n}}$ are orthonormal. Note that each Weyl operator is a Pauli operator and indeed, every Pauli operator is a Weyl operator up to a phase. Throughout the paper we will denote  $\calP^k$ to be the $k$-qubit Pauli group and $\calP_Z^m = \{I,Z\}^{\otimes m}$. 
 For $x,y \in \mathbb{F}_2^{2n}$, we write $x=(x_1, x_2)$ with $x_1$ denoting the first $n$ bits of $x$ and $x_2$ denoting the last $n$ bits (similarly for $y=(y_1,y_2)$). We define the \emph{symplectic inner product} as
\begin{equation}
    [x,y] = \la x_1, y_2 \ra + \la x_2, y_1 \ra \mod 2. 
    \label{eq:symplectic_inner_product}
\end{equation}
We define the commutation relations of Paulis $W_x,W_y$ for all $x,y \in \FF_2^{2n}$ using the symplectic product of their corresponding strings $x,y$. In particular, we say that two Paulis $W_x, W_y$ commute if $[x,y]=0$ and $W_x, W_y$ anti-commute if $[x,y]=1$. It is evident that $W_x W_y= (-1)^{[x,y]} W_y W_x$. Furthermore every   $\ket{\psi}$ can be written~as
$$
\ketbra{\psi}{\psi}=\frac{1}{2^n}\sum_{x\in \mathbb{F}_2^n} \alpha_x \cdot W_x,
$$
where   
$$
\alpha_x=\Tr(W_x \ketbra{\psi}{\psi}), \qquad \frac{1}{2^n}\sum_x \alpha_x^2=1.
$$   
Below we will use  {$p_\Psi(x)=\alpha_x^2/2^n$}, so that $\sum_x p_\Psi(x)=1$.    Since $n$-qubit Paulis can be associated with ($2n$)-bit strings, we will often refer to a Pauli $P$ by $x\in \01^{2n}$ by which we mean $P=W_x$. 

We define the \emph{characteristic distribution}~as
\begin{equation}
\label{eq:defnofppsi}
    p_\Psi(x) = \frac{|\la \psi | W_x | \psi \ra|^2}{2^n},
\end{equation}
which satisfies $\sum_{x \in \FF_2^{2n}} p_\Psi(x)=1$. It is well-known that one can \emph{sample} from the characteristic distribution by carrying out Bell sampling on $\ket{\psi} \otimes \ket{\psi^\star}$, where $\ket{\psi^\star}$ is the conjugate of $\ket{\psi}$. The \emph{Weyl distribution}~\cite{gross2021schur} is defined as $q_\Psi$ as
\begin{equation}
    q_\Psi(x) = \sum_{y \in \mathbb{F}_2^{2n}} p_\Psi(y) p_\Psi(x+y).
    \label{eq:weyl_distribution}
\end{equation}
We will require the following upper bounds on $q_\Psi(x),p_\Psi(x)$ for any $x \in \FF_2^{2n}$ (following the presentation of~\cite{chen2024stabilizer}).
\begin{fact}
\label{lem:ub_qPsi}
For any $x \in \FF_2^{2n}$, $q_\Psi(x),p_\Psi(x) \leq 2^{-n}$.
\end{fact}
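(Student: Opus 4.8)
The bound on $p_\Psi$ is immediate from unitarity. Since $W_x$ is unitary and $\ket{\psi}$ has unit norm, the Cauchy--Schwarz inequality gives $|\la \psi | W_x | \psi \ra| \le \norm{\ket{\psi}}\cdot \norm{W_x\ket{\psi}} = 1$, and hence $p_\Psi(x) = |\la \psi | W_x | \psi \ra|^2 / 2^n \le 2^{-n}$ for every $x \in \FF_2^{2n}$.

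For $q_\Psi$, I would use the definition in Eq.~\eqref{eq:weyl_distribution} together with the fact (recorded right after Eq.~\eqref{eq:defnofppsi}) that $p_\Psi$ is a probability distribution, i.e.\ $\sum_{y \in \FF_2^{2n}} p_\Psi(y) = 1$. Bounding each factor $p_\Psi(x+y)$ appearing in the convolution by the uniform bound $2^{-n}$ just established yields
\[
q_\Psi(x) = \sum_{y \in \FF_2^{2n}} p_\Psi(y)\, p_\Psi(x+y) \;\le\; \Bigl(\max_{z\in\FF_2^{2n}} p_\Psi(z)\Bigr)\sum_{y \in \FF_2^{2n}} p_\Psi(y) \;=\; \max_{z\in\FF_2^{2n}} p_\Psi(z) \;\le\; 2^{-n},
\]
which is the claimed inequality.

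There is no genuine obstacle here: the only ingredients are unitarity of the Weyl operators, normalization of $\ket{\psi}$, and the normalization $\sum_y p_\Psi(y) = 1$, after which the estimate on $q_\Psi$ is just Hölder's inequality ($\ell_1 \times \ell_\infty$) applied to the convolution. I note for later use that the argument in fact gives the slightly sharper statement $q_\Psi(x) \le \max_{z} p_\Psi(z)$ for all $x$, which would be the form to invoke if one ever needed a bound stronger than $2^{-n}$ for a specific family of states.
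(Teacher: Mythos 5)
Your proof is correct, and it is in fact simpler than the one in the paper. For $p_\Psi$ both proofs are the same one-liner from unitarity. For $q_\Psi$, however, you take a genuinely different route: you treat $q_\Psi = p_\Psi \star p_\Psi$ as an $\ell_1$–$\ell_\infty$ pairing and bound one factor by its sup, giving $q_\Psi(x) \le \max_z p_\Psi(z) \le 2^{-n}$ directly. The paper instead first rewrites the convolution using a Fourier self-duality identity for the characteristic distribution, namely $\sum_y p_\Psi(y)p_\Psi(x+y) = \sum_y (-1)^{[x,y]} p_\Psi(y)^2$, and then reinterprets the resulting sum as a two-copy expectation of $\sum_y W_y^{\otimes 2}$ and invokes the identity $\frac{1}{2^n}\sum_y W_y^{\otimes 2} = \mathsf{SWAP}$. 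Your argument avoids both the Fourier identity and the swap-operator computation, uses only the normalization $\sum_y p_\Psi(y)=1$, and, as you note, yields the sharper bound $q_\Psi(x)\le\max_z p_\Psi(z)$ as a byproduct. The one thing the paper's route buys is that it exercises machinery (the $(-1)^{[x,y]}$ duality, the swap trick) that recurs elsewhere in the surrounding literature; as a self-contained proof of this fact, your version is the cleaner choice. Incidentally, the paper's displayed derivation has a dangling clause ("where the second equality is by") with no justification filled in — your argument sidesteps exactly the step that sentence was supposed to justify.
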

\begin{proof}
The bound on $p_\Psi(x)$ is trivial by definition of Eq.~\eqref{eq:defnofppsi} (since the number of that equation is at most $1$). The upper bound on $q_\Psi(x)$ is folklore.
\begin{align}
 q_\Psi(x) = \sum_{y \in \mathbb{F}_2^{2n}} p_\Psi(y) p_\Psi(x+y)
 &= \sum_{y \in \mathbb{F}_2^{2n}} (-1)^{[x,y]}p_\Psi(y) p_\Psi(y)\\
 &=\frac{1}{2^{2n}}\sum_y (-1)^{[x,y]}|\langle \psi|W_y|\psi\rangle|^2|\langle \psi|W_{y}|\psi\rangle|^2\\
 &\leq \frac{1}{2^{2n}}|\langle \psi^{\otimes 2}|\sum_y W_y^{\otimes 2}|\psi^{\otimes 2}\rangle|^2=\frac{1}{2^{n}}|\langle \psi^{\otimes 2}|\textsf{SWAP}|\psi^{\otimes 2}\rangle|^2=2^{-n},
\end{align}
where the second equality is by
the penultimate equality is by definition that $\sum_y W_y^{\otimes 4}=2^n \textsf{SWAP}$ (where $\textsf{SWAP}$ is the usual swap operator on a bipartite system.
\end{proof}

We will use the following two facts that were established by prior work~\cite{ad2024tolerant}. 
\begin{fact}[{\cite[Lemma~2.9]{ad2024tolerant}}]
\label{fact:relation_qPsi_and_pPsi} 
For every $\ket{\psi}$ we have that
\begin{equation*}
    \Exp_{x \sim q_\Psi}\left[|\la \psi | W_x | \psi \ra|^2 \right] = 2^{3n} \Exp_{y \in \mathbb{F}_2^n} \left[\sum_{\alpha \in \mathbb{F}_2^{n}} {p}_\Psi^3(y,\alpha)\right].
\end{equation*}    
\end{fact}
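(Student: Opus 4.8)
The plan is to show that both sides equal $2^{2n}\sum_{z\in\FF_2^{2n}} p_\Psi(z)^3$. For the right-hand side this is immediate: writing the average over $y\in\FF_2^{n}$ explicitly and noting that $(y,\alpha)$ ranges over all of $\FF_2^{2n}$, we get $2^{3n}\,\Exp_{y\in\FF_2^{n}}[\sum_{\alpha\in\FF_2^{n}} p_\Psi^{3}(y,\alpha)] = 2^{2n}\sum_{z\in\FF_2^{2n}} p_\Psi(z)^{3}$. For the left-hand side, I would first use $p_\Psi(x)=|\braketbra{\psi}{W_x}{\psi}|^2/2^n$ (Eq.~\eqref{eq:defnofppsi}) to write $\Exp_{x\sim q_\Psi}[|\braketbra{\psi}{W_x}{\psi}|^2]=2^n\sum_{x}q_\Psi(x)\,p_\Psi(x)$, then expand $q_\Psi$ as the convolution in Eq.~\eqref{eq:weyl_distribution}, and finally reparametrize $(a,b,c)=(x,y,x+y)$:
$$
\Exp_{x\sim q_\Psi}\big[|\braketbra{\psi}{W_x}{\psi}|^2\big] \;=\; 2^n\sum_{x,y\in\FF_2^{2n}} p_\Psi(x)\,p_\Psi(y)\,p_\Psi(x+y) \;=\; 2^n\sum_{a+b+c=0} p_\Psi(a)\,p_\Psi(b)\,p_\Psi(c).
$$
So the whole statement reduces to the identity $\sum_{a+b+c=0} p_\Psi(a)p_\Psi(b)p_\Psi(c)=2^n\sum_{z} p_\Psi(z)^3$.

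The one place where purity of $\ket{\psi}$ enters is a self-duality of $p_\Psi$ under the symplectic Fourier transform: for every $y\in\FF_2^{2n}$, $\sum_{x\in\FF_2^{2n}} (-1)^{[x,y]}\, p_\Psi(x) = 2^n\, p_\Psi(y)$. I would prove this with a $\textsf{SWAP}$-trick computation. From $W_xW_y=(-1)^{[x,y]}W_yW_x$ and $W_y^2=\Id$ one gets $W_yW_xW_y=(-1)^{[x,y]}W_x$, so conjugating the identity $\sum_x W_x\otimes W_x=2^n\,\textsf{SWAP}$ (already used in the proof of Fact~\ref{lem:ub_qPsi}) by $\Id\otimes W_y$ gives
$$
\sum_{x} (-1)^{[x,y]}\,W_x\otimes W_x \;=\; 2^n\,(\Id\otimes W_y)\,\textsf{SWAP}\,(\Id\otimes W_y) \;=\; 2^n\,\textsf{SWAP}\,(W_y\otimes W_y).
$$
Tracing this against $(\ketbra{\psi}{\psi})^{\otimes 2}$ and using $\textsf{SWAP}\,\ket{\psi}^{\otimes 2}=\ket{\psi}^{\otimes 2}$ (hence $(\ketbra{\psi}{\psi})^{\otimes 2}\,\textsf{SWAP}=(\ketbra{\psi}{\psi})^{\otimes 2}$) yields $\sum_x(-1)^{[x,y]}|\braketbra{\psi}{W_x}{\psi}|^2=2^n\braketbra{\psi}{W_y}{\psi}^2=2^n|\braketbra{\psi}{W_y}{\psi}|^2$ (the last step since $W_y$ is Hermitian), which is the claimed identity after dividing by $2^n$. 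This is, in effect, the identity $q_\Psi(x)=\sum_y(-1)^{[x,y]}p_\Psi(y)^2$ already invoked in the proof of Fact~\ref{lem:ub_qPsi}, unwound one step.

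To finish the combinatorial identity I would expand the constraint as a symplectic character sum, $[a+b+c=0]=2^{-2n}\sum_{s\in\FF_2^{2n}}(-1)^{[s,a+b+c]}$ (valid since $[\cdot,\cdot]$ is symmetric and nondegenerate over $\FF_2$), which factorizes the triple sum into a cube; one application of the self-duality identity then collapses it:
$$
\sum_{a+b+c=0} p_\Psi(a)p_\Psi(b)p_\Psi(c) \;=\; \frac{1}{2^{2n}}\sum_{s}\Big(\sum_{a}(-1)^{[s,a]}p_\Psi(a)\Big)^{3} \;=\; \frac{1}{2^{2n}}\sum_{s}\big(2^n p_\Psi(s)\big)^3 \;=\; 2^n\sum_{s} p_\Psi(s)^3.
$$
Plugging back gives $\Exp_{x\sim q_\Psi}[|\braketbra{\psi}{W_x}{\psi}|^2]=2^{2n}\sum_z p_\Psi(z)^3$, matching the right-hand side.

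I expect the only real content to be the self-duality of $p_\Psi$ (the $\textsf{SWAP}$-trick step); expanding $q_\Psi$, reparametrizing, and the character-sum bookkeeping are routine. An equivalent shortcut, avoiding the triple-sum reformulation, is to cite $q_\Psi(x)=\sum_y(-1)^{[x,y]}p_\Psi(y)^2$ directly and write $\Exp_{x\sim q_\Psi}[\,\cdot\,]=2^n\sum_{x,y}(-1)^{[x,y]}p_\Psi(x)p_\Psi(y)^2=2^n\sum_y p_\Psi(y)^2\big(\sum_x(-1)^{[x,y]}p_\Psi(x)\big)=2^{2n}\sum_y p_\Psi(y)^3$, finishing with a single use of the self-duality identity.
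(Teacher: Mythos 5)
Your proposal is correct. Note that the paper does not actually prove this statement — it is imported verbatim as a citation to Lemma~2.9 of~\cite{ad2024tolerant} — so there is no in-paper proof to compare against; what you have supplied is a self-contained derivation. The pivotal ingredient you isolate, the symplectic self-duality $\sum_{x}(-1)^{[x,y]}p_\Psi(x)=2^n p_\Psi(y)$ for pure states, is exactly the place purity enters, and your $\textsf{SWAP}$-trick proof of it is sound: conjugating $\sum_x W_x\otimes W_x = 2^n\,\textsf{SWAP}$ by $\Id\otimes W_y$, using $(\Id\otimes W_y)\,\textsf{SWAP}\,(\Id\otimes W_y)=\textsf{SWAP}\,(W_y\otimes W_y)$, and then tracing against $\ketbra{\psi}{\psi}^{\otimes2}$ (exploiting $\textsf{SWAP}\ket{\psi}^{\otimes2}=\ket{\psi}^{\otimes2}$) gives precisely the claimed identity. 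The reduction of the LHS to the triple convolution $2^n\sum_{a+b+c=0}p_\Psi(a)p_\Psi(b)p_\Psi(c)$ and the character-sum collapse are routine and checked out; so does your shortcut via $q_\Psi(x)=\sum_y(-1)^{[x,y]}p_\Psi(y)^2$, which is in fact the unexplained "second equality" in the paper's own proof of Fact~\ref{lem:ub_qPsi}. One cosmetic remark: the RHS normalization works because in the statement $y$ ranges over $\FF_2^{n}$ (not $\FF_2^{2n}$), so $\Exp_{y\in\FF_2^n}$ carries a factor $2^{-n}$ and the prefactor $2^{3n}$ collapses to $2^{2n}$ on the full $\FF_2^{2n}$ sum, as you observed.
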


\begin{lemma}[{\cite[Lemma~3.8]{ad2024tolerant}}]\label{lem:est_weyl_exp}
    Let $\ket{\psi}$ be an arbitrary $n$-qubit quantum state. One can estimate $\Exp_{x \sim q_\Psi}\left[|\la \psi | W_x | \psi \ra|^2 \right]$ upto additive error $\delta$ using $O(1/\delta^2)$ copies of $\ket{\psi}$ and $O(n/\delta^2)$ gates.
\end{lemma}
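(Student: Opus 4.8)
The plan is to exhibit a bounded, unbiased estimator for $\Exp_{x \sim q_\Psi}[|\la \psi | W_x | \psi \ra|^2]$ and then apply a Hoeffding bound. Two observations make this straightforward. First, each Weyl operator is Hermitian, so $\la \psi | W_x | \psi \ra \in \RR$ (this is also visible from $\tfrac{1}{2^n}\sum_x \alpha_x^2 = 1$ with $\alpha_x = \la \psi | W_x | \psi \ra$), and hence $|\la \psi | W_x | \psi \ra|^2 = \la \psi | W_x | \psi \ra^2$. Second, Bell difference sampling~\cite{gross2021schur} produces, using $O(1)$ copies of $\ket{\psi}$ and $O(n)$ gates, a string $x \in \FF_2^{2n}$ distributed exactly as $q_\Psi$ (notably, it uses copies of $\ket{\psi}$ itself, not of its conjugate $\ket{\psi^\star}$).

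Given these, here is the estimator I would use. Draw $x \sim q_\Psi$ by Bell difference sampling; then, conditioned on $x$, take two more fresh copies of $\ket{\psi}$ and measure the Pauli observable $W_x$ on each. Since $W_x$ is a signed tensor product of single-qubit Paulis, this is implemented by an $O(n)$-gate single-qubit Clifford rotation followed by a computational-basis measurement, giving outcomes $b_1, b_2 \in \{-1, +1\}$ with $\Exp[b_j \mid x] = \la \psi | W_x | \psi \ra$. As the two copies are independent, $\Exp[b_1 b_2 \mid x] = \la \psi | W_x | \psi \ra^2 = |\la \psi | W_x | \psi \ra|^2$, so the random variable $Z := b_1 b_2 \in \{-1,+1\}$ satisfies $\Exp[Z] = \Exp_{x \sim q_\Psi}[|\la \psi | W_x | \psi \ra|^2]$, and one realization of $Z$ costs $O(1)$ copies of $\ket{\psi}$ and $O(n)$ gates. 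Averaging $m = O(1/\delta^2)$ independent copies of $Z$ and applying Hoeffding's inequality (each $Z$ lies in $[-1,1]$) shows the empirical mean is within $\delta$ of the target with probability at least $2/3$; boosting the success probability to $1-\delta'$ costs an extra $O(\log(1/\delta'))$ factor. Summing the per-sample costs gives $O(1/\delta^2)$ copies of $\ket{\psi}$ and $O(n/\delta^2)$ gates.

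The only step that is not elementary bookkeeping is the input fact that Bell difference sampling draws from $q_\Psi$ using copies of $\ket{\psi}$ alone — I would invoke~\cite{gross2021schur} for this as a black box. One could alternatively derive the estimate from Fact~\ref{fact:relation_qPsi_and_pPsi} by rewriting the target as $2^{3n}\Exp_y[\sum_{\alpha} p_\Psi^3(y,\alpha)]$, but the direct measurement estimator above is cleaner and yields the stated copy and gate bounds immediately.
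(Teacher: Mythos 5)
Your proposal is correct and is essentially the standard argument (and the one reflected in this paper's own Algorithm \sample\ and Claim~\ref{claim:high_exp_value_bds}): Bell-difference-sample $x \sim q_\Psi$ from four copies of $\ket{\psi}$, then measure $W_x$ on two fresh copies of $\ket{\psi}$ and take the product of the $\pm 1$ outcomes, which conditionally has mean $\la\psi|W_x|\psi\ra^2 = |\la\psi|W_x|\psi\ra|^2$; Hoeffding on the $\{-1,+1\}$-valued estimator gives $O(1/\delta^2)$ samples and $O(n/\delta^2)$ gates. No gaps.
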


\subsection{Stabilizer subgroups and stabilizer states}

\paragraph{Pauli gates.} The $2$-qubit Pauli matrices matrices are defined as follows
$$\id=\begin{pmatrix}
1 & 0\\
0 & 1
\end{pmatrix}, X=\begin{pmatrix}
0 & 1\\
1 & 0
\end{pmatrix}, Y=\begin{pmatrix}
0 & -i\\
i & 0
\end{pmatrix},Z=\begin{pmatrix}
1 & 0\\
0 & -1
\end{pmatrix}
$$
It is well-known that the $n$-qubit Pauli matrices $\{\id,X,Y,Z\}^n$ form an  {orthonormal basis} for $\mathbb{C}^n$.    In particular, for every $x=(a,b)\in \mathbb{F}_2^{2n}$, one can define the \emph{Weyl operator}
$$
W_x=i^{a\cdot b} (X^{a_1}Z^{b_1}\otimes X^{a_2}Z^{b_2} \otimes \cdots \otimes X^{a_n}Z^{b_n}).
$$  
and these operators $\{W_x\}_{x \in \FF_2^{2n}}$ are orthonormal. 

\paragraph{Clifford and $T$ gates.}  Clifford unitaries are those generated by Hadamard gate $\textsf{Had}=\frac{1}{\sqrt{2}}\begin{pmatrix}
1 & 1\\
1 & -1
\end{pmatrix}$, controlled-$X$ gate and $S=\begin{pmatrix}
1 & 0\\
0 & i
\end{pmatrix}$ gate. In order to achieve universal quantum computing, we also need the  $T$ gate is defined as $T=\begin{pmatrix}
1 & 0\\
0 & i
\end{pmatrix}$.

\paragraph{Stabilizer fidelity.}  We denote $\calF_\calS(\ket{\psi})$ to be the maximum \emph{stabilizer fidelity} of a quantum state $\ket{\psi}$, i.e., the overlap between $\ket{\psi}$ and the ``closest" (to $\ket{\psi})$) stabilizer state $\ket{s}$. 
More formally,
$$
\calF_\calS(\ket{\psi})=\max_{\ket{s}\in \textsf{Stab}} |\langle s|\psi\rangle|^2,
$$

\paragraph{Stabilizer dimension.} We first define the \emph{unsigned stabilizer group} as 
$$
\weyl{\ket{\psi}}=\{x\in \FF_2^{2n}: \langle \psi|W_x|\psi\rangle \in \{-1,1\}\}
$$ to be the Pauli matrices that stabilize $\ket{\psi}$.  We say that an $n$-qubit pure quantum state $\ket{\psi}$ has stabilizer dimension of $k$ if $\ket{\psi}$ is stabilized by an Abelian group of $2^k$ Pauli operators, in other words $\dim(\weyl{\ket{\psi}}) = k$. A stabilizer state has the maximal stabilizer dimension of $n$. Let $\Sh(n-t)$ be the states with stabilizer dimension $n-t$, i.e., if $\ket{\psi}\in \Sh(n-t)$, then $\dim(\weyl{\ket{\psi}})\geq n-t$.

\paragraph{Lagrangian subspace.} We say a subspace $S \subset \FF_2^{2n}$ is isotropic when $[x,y]=0$ for all $x,y \in S$ i.e., all the Weyl operators corresponding to the strings in $S$ \emph{commute} with each other. We say that an isotropic subspace $S$ is a Lagrangian subspace when it is of \emph{maximal size} $2^n$ i.e., $|S| = 2^n$. This ties in with the fact that a maximal set of $n$-qubit commuting Paulis is of size $2^n$. We will often use the following well-known theorem that relates stabilizer fidelity to the weight of the Weyl distribution inside a Lagrangian subspace. 
\begin{theorem}[\cite{gross2021schur,grewal2024agnostic}]
\label{fact:lower_bound_stabilizer_fidelity}
For every $\ket{\psi}$ and Lagrangian subspace $T\subseteq \mathcal{P}_n$, we have $$
\stabfidelity{\ket{\psi}}\geq \Exp_{P\in T}[|\langle \psi|P|\psi\rangle|^2].
$$
Furthermore, if $\calF_\calS(\ket{\psi}) \geq \eta$, then
\begin{equation*}
  \Exp_{x \sim q_\Psi}\left[|\la \psi | W_x | \psi \ra|^2 \right] \geq \eta^6.
\end{equation*}
\end{theorem}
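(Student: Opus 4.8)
The plan is to derive both inequalities from the complete orthonormal basis of stabilizer states attached to the Lagrangian subspace $T$. First I would recall the standard stabilizer-formalism fact underlying~\cite{gross2021schur,grewal2024agnostic}: a Lagrangian $T\subseteq\FF_2^{2n}$ has exactly $2^n$ stabilizer states $|s_1\rangle,\dots,|s_{2^n}\rangle$ whose unsigned stabilizer group contains $T$, these form an orthonormal basis, and the sign functions $\epsilon_j\colon T\to\{\pm1\}$ defined by $\epsilon_j(x)=\langle s_j|W_x|s_j\rangle$ obey the orthogonality relation $\tfrac{1}{2^n}\sum_{x\in T}\epsilon_j(x)\epsilon_k(x)=\delta_{jk}$ (equivalently $\Tr(\ketbra{s_j}{s_j}\ketbra{s_k}{s_k})=\delta_{jk}$, using that each product $\epsilon_j\epsilon_k$ is an additive character of $T$). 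Writing $\ketbra{\psi}{\psi}$ in this basis, the off-diagonal terms drop out of $\langle\psi|W_x|\psi\rangle=\Tr(W_x\ketbra{\psi}{\psi})$ for $x\in T$ because $W_x|s_j\rangle=\epsilon_j(x)|s_j\rangle$, so that
\[
\langle\psi|W_x|\psi\rangle \;=\; \sum_j p_j\,\epsilon_j(x)\quad(x\in T),\qquad p_j:=|\langle s_j|\psi\rangle|^2,\ \ \textstyle\sum_j p_j=1 .
\]

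For the first inequality I would substitute this into $\Exp_{P\in T}[|\langle\psi|P|\psi\rangle|^2]=\tfrac{1}{2^n}\sum_{x\in T}\bigl(\sum_j p_j\epsilon_j(x)\bigr)^2$ and apply the orthogonality relation to collapse it to $\sum_j p_j^2$. Then $\sum_j p_j^2\le(\max_j p_j)\sum_j p_j=\max_j p_j\le\stabfidelity{|\psi\rangle}$ because each $|s_j\rangle$ is a stabilizer state. (Equivalently one can skip the index computation via $\tfrac{1}{2^n}\sum_{P\in T}P\otimes P=\sum_j\ketbra{s_j}{s_j}^{\otimes2}$, which immediately yields $\Exp_{P\in T}[\langle\psi|P|\psi\rangle^2]=\sum_j|\langle s_j|\psi\rangle|^4$.)

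For the second inequality I would take $T$ to be the Lagrangian subspace of the stabilizer state closest to $|\psi\rangle$, so $\max_j p_j=\stabfidelity{|\psi\rangle}\ge\eta$. Unfolding definitions, $\Exp_{x\sim q_\Psi}[|\langle\psi|W_x|\psi\rangle|^2]=2^n\sum_{x,y}p_\Psi(x)p_\Psi(y)p_\Psi(x+y)$ (from $|\langle\psi|W_x|\psi\rangle|^2=2^np_\Psi(x)$ and $q_\Psi(x)=\sum_y p_\Psi(y)p_\Psi(x+y)$), and discarding all but the nonnegative terms with $x,y\in T$ (valid since $T$ is a subgroup) lower-bounds this by $2^n\sum_{x,y\in T}p_\Psi(x)p_\Psi(y)p_\Psi(x+y)$. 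Put $g(x):=2^np_\Psi(x)=\langle\psi|W_x|\psi\rangle^2$ for $x\in T$; expanding $g$ via the displayed formula, its Fourier coefficients over the group $T$ are $\widehat g(\xi)=\sum_{j,k:\,\epsilon_j\epsilon_k=\xi}p_jp_k\ge0$, with $\widehat g(\mathbf 1)=\sum_j p_j^2$ at the trivial character. The elementary identity $\sum_{x,y\in T}g(x)g(y)g(x+y)=|T|^2\sum_\xi\widehat g(\xi)^3$ then gives
\[
\Exp_{x\sim q_\Psi}\bigl[|\langle\psi|W_x|\psi\rangle|^2\bigr]\;\ge\;2^{n}\cdot 2^{-3n}\cdot 2^{2n}\sum_\xi\widehat g(\xi)^3\;=\;\sum_\xi\widehat g(\xi)^3\;\ge\;\widehat g(\mathbf 1)^3=\Bigl(\sum_j p_j^2\Bigr)^3\;\ge\;\eta^6 .
\]

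I expect the only genuinely delicate step to be the first one: pinning down the $2^n$-element stabilizer basis of a Lagrangian with the correct sign conventions for the Weyl operators $W_x=i^{a\cdot b}(\cdots)$, and checking that the restriction of $x\mapsto\langle\psi|W_x|\psi\rangle$ to $T$ carries exactly the nonnegative weights $p_j$ on the (twisted) sign functions $\epsilon_j$. Everything afterwards is bookkeeping with the orthogonality relation, the convolution identity, and nonnegativity of $\widehat g$; in particular, the proof makes transparent that the whole loss from $\eta$ to $\eta^6$ comes precisely from retaining only the trivial Fourier mode of $g$.
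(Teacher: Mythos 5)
Your proof is correct. The first inequality is the standard argument from the cited works: attach to the Lagrangian $T$ the orthonormal basis of $2^n$ stabilizer states, use that each $\epsilon_j\epsilon_k$ is an additive character of $T$ (the $\pm 1$ projective cocycle of the Weyl representation cancels in the product) so that the sign-function orthogonality collapses $\Exp_{P\in T}[|\langle\psi|P|\psi\rangle|^2]$ to $\sum_j p_j^2\le\max_j p_j\le\stabfidelity{\ket\psi}$. This matches the source.

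For the second inequality your route differs from the one the paper itself reuses (see its Lemma~\ref{lemma:completeness_stabdim}, which at $t=0$ reproduces this statement). The paper starts from the moment identity $\Exp_{x\sim q_\Psi}[|\langle\psi|W_x|\psi\rangle|^2]=2^{-n}\sum_{x}|\langle\psi|W_x|\psi\rangle|^6$, restricts the sum to $T$, and applies the power-mean/H\"older inequality to turn the sixth moment over $T$ into the cube of the second moment, which by the first part is $\ge\eta^2$; this gives $\eta^6$. You instead work directly from the triple-convolution form $2^n\sum_{x,y}p_\Psi(x)p_\Psi(y)p_\Psi(x+y)$, which unfolds $q_\Psi$ with no appeal to the self-duality of $p_\Psi$ for pure states, restrict to $T$, and use that the Fourier transform of $g=2^n p_\Psi|_T$ is nonnegative (since $\widehat g(\xi)=\sum_{\epsilon_j\epsilon_k=\xi}p_jp_k\ge 0$) together with the identity $\sum_{x,y\in T}g(x)g(y)g(x+y)=|T|^2\sum_\xi\widehat g(\xi)^3$ to keep only the trivial mode $\widehat g(\mathbf 1)=\sum_j p_j^2\ge\eta^2$. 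The two arguments lose exactly the same information — discarding nontrivial Fourier modes of $g$ is the dual-side mirror of the power-mean step — and hence yield the same exponent $6$; your version is slightly more self-contained because it avoids the nontrivial pure-state identity equating the $q_\Psi$-expectation to the raw $L^6$ moment. All the small checks you flag (Weyl operators are Hermitian on $T$ so $\langle\psi|W_x|\psi\rangle$ is real; only $j=k$ gives the trivial character; $T$ is a subgroup so the restriction of the convolution is legal) hold.
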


\begin{lemma}[{\cite[Lemma~4.6]{grewal2023efficient}}]
\label{lem:state_isotropic_space_high_mass}
Let $A$ be an isotropic subspace of dimension $n-t$, and suppose that
$$
\sum_{x \in A}p_\Psi(x) \geq 2^{-t} \eta
$$
Then, there exists a state $\ket{\phi}$ such that $A \subseteq \weyl{\ket{\phi}}$ and $|\la \phi | \psi \ra|^2 \geq \eta$. In particular, $\ket{\phi} = U^\dagger \ket{\varphi}\ket{x}$, where $\ket{x}$ is an $(n-t)$-qubit basis state,
$$
\ket{\varphi} := \frac{(I \otimes \bra{x})U\ket{\psi}}{\norm{(I \otimes \bra{x})U\ket{\psi}}_2}
$$
is a $t$-qubit quantum state, and $U$ is a Clifford circuit mapping $A$ to $\la Z_{t+1},\ldots,Z_{n}\ra$.
\end{lemma}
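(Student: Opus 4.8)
\emph{Proof plan.} The idea is to first Clifford-rotate $A$ to the standard isotropic subspace of diagonal Paulis on the last $n-t$ qubits, and then recognize the mass of $p_\Psi$ on that subspace as a rescaling of the collision probability of a computational-basis measurement on those qubits. Concretely: since $A\subseteq\FF_2^{2n}$ is isotropic of dimension $n-t$, the Weyl operators $\{W_x:x\in A\}$ pairwise commute, and by the symplectic Gram--Schmidt procedure (equivalently, transitivity of the symplectic group on isotropic subspaces of fixed dimension) there is a Clifford $U$ whose induced symplectic map sends $A$ bijectively onto $B:=\langle Z_{t+1},\dots,Z_n\rangle$, with $U W_x U^\dagger=\pm\,(I_t\otimes Z^{b(x)})$ for some linear bijection $b:A\to\FF_2^{n-t}$. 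Since $p_{U\Psi}(Ux)=|\langle\psi|U^\dagger W_{Ux}U|\psi\rangle|^2/2^n=p_\Psi(x)$, the hypothesis $\sum_{x\in A}p_\Psi(x)\ge 2^{-t}\eta$ becomes $\sum_{y\in B}p_{U\Psi}(y)\ge 2^{-t}\eta$, so it suffices to prove the claim for $\ket{\psi'}=U\ket{\psi}$ with $A$ replaced by $B$.

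Next I would identify this mass with a collision probability. Write $\ket{\psi'}=\sum_{z\in\FF_2^{n-t}}\sqrt{q_z}\,\ket{\varphi_z}\ket{z}$, where $q_z$ is the probability of outcome $z$ when the last $n-t$ qubits of $\ket{\psi'}$ are measured in the computational basis and $\ket{\varphi_z}$ is the normalized post-measurement state on the first $t$ qubits. For $y\in B$ we have $W_y=I_t\otimes Z^b$ for the corresponding $b\in\FF_2^{n-t}$, hence $\langle\psi'|W_y|\psi'\rangle=\sum_z q_z(-1)^{\langle b,z\rangle}=\widehat{q}(b)$. Parseval then gives $\sum_{y\in B}|\langle\psi'|W_y|\psi'\rangle|^2=\sum_{b}\widehat{q}(b)^2=2^{n-t}\sum_z q_z^2$, so $\sum_{y\in B}p_{U\Psi}(y)=2^{-t}\sum_z q_z^2$. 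Combining with the hypothesis yields $\sum_z q_z^2\ge\eta$, and since $\sum_z q_z^2\le(\max_z q_z)\sum_z q_z=\max_z q_z$, there is an outcome $x^\star$ with $q_{x^\star}\ge\eta$.

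Finally I would assemble the state and check the two conclusions. Set $\ket{\varphi}:=\ket{\varphi_{x^\star}}=(I_t\otimes\bra{x^\star})U\ket{\psi}\big/\norm{(I_t\otimes\bra{x^\star})U\ket{\psi}}_2$ and $\ket{\phi}:=U^\dagger(\ket{\varphi}\ket{x^\star})$. Then $\langle\phi|\psi\rangle=\langle\varphi|\langle x^\star|U|\psi\rangle=\sqrt{q_{x^\star}}$, so $|\langle\phi|\psi\rangle|^2=q_{x^\star}\ge\eta$. For the stabilizer property, for each $x\in A$ we get $W_x\ket{\phi}=U^\dagger(U W_x U^\dagger)\ket{\varphi}\ket{x^\star}=\pm\,U^\dagger\ket{\varphi}\otimes(Z^{b(x)}\ket{x^\star})=\pm(-1)^{\langle b(x),x^\star\rangle}\ket{\phi}$, hence $\langle\phi|W_x|\phi\rangle\in\{-1,1\}$ and therefore $A\subseteq\weyl{\ket{\phi}}$.

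The argument has no genuine difficulty; the two places that need care are (i) invoking symplectic Gram--Schmidt to obtain the Clifford $U$ while correctly tracking the $\pm$ signs on $U W_x U^\dagger$ (an \emph{unsigned} $A$ only forces $W_x$ to map to a $Z$-type Pauli up to sign, which is exactly what is used when verifying $A\subseteq\weyl{\ket{\phi}}$), and (ii) matching normalizations so that the $2^{-t}$ in the hypothesis cancels after Parseval, leaving the clean bound $\max_z q_z\ge\eta$.
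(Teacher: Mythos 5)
Your proof is correct and takes essentially the same route as the cited source (Grewal et al.) and the paper's own closely-related computations: Clifford-rotate the isotropic subspace to $\langle Z_{t+1},\ldots,Z_n\rangle$, identify the Pauli mass with a Fourier/Parseval identity giving $\sum_{y\in B} p_{U\Psi}(y)=2^{-t}\sum_z q_z^2$, and then extract a heavy outcome $x^\star$ with $q_{x^\star}\ge\eta$ to build $\ket{\phi}=U^\dagger(\ket{\varphi_{x^\star}}\ket{x^\star})$. The two caveats you flag — tracking only unsigned conjugation $UW_xU^\dagger=\pm(I\otimes Z^{b(x)})$, and the cancellation of the $2^{-t}$ — are exactly the points that need care, and you handle both. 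This is the same Parseval computation that the paper carries out in more generality in Proposition~\ref{prop:prob_mass_Pk_PZm} and Lemma~\ref{lem:lower_bound_stab_dim_fidelity} (where the subgroup is $\calP^k\times\calP_Z^{n-k}$ rather than a purely diagonal isotropic subspace); your version is the $k=0$ specialization.
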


\begin{definition}[$t$-doped states]
A $t$-doped Clifford circuit is a quantum circuit that consists of Clifford gates and at most $t$ many single-qubit non-Clifford gates. A $t$-doped state is the output of a $t$-doped Clifford circuit on $\ket{0^n}$.
\end{definition}
Grewal et al.~\cite[Lemma~4.2]{grewal2023improved} showed that $t$-doped states have stabilizer dimension at least $n-2t$. Throughout this paper we will denote $\calS(n-t)$ to be the set of states with stabilizer dimension at least $n-t$.

\paragraph{Stabilizer rank.}
Introduced in~\cite{bravyi2016trading}, we say a quantum state $\ket{\psi}$ has \emph{stabilizer rank $k$}, if $\ket{\psi}$ can be expressed $\ket{\psi}=\sum_{i=1}^k c_i\ket{s_i}$ where $c_i\in \mathbb{C}$ and $\ket{s_i}$ are stabilizer states. More formally, we define stabilizer rank of a state $\ket{\psi}$ as
$$
\chi(\ket{\psi})=\min\{k: \ket{\psi}=\sum_{i=1}^k c_i \ket{s_i}, \ket{s_i} \text{ are stabilizer states}\}.
$$
Also, the \emph{stabilizer extent} of a quantum state is the minimal $\ell_1$ norm of the coefficients in the decomposition above. Formally, stabilizer extent of $\ket{\psi}$ is defined as 
$$
\xi(\ket{\psi})=\min\{\sum_i |c_i|: \ket{\psi}=\sum_i c_i \ket{s_i}, \ket{s_i} \text{ are stabilizer states}\}.
$$
We will often encounter operators  $\psi$ which (up to renormalization) can be written as a stabilizer rank $k$ state, i.e., $\psi=\sum_{i\in [k]} c_i \ket{s_i}$ which upto a renormalization $\alpha=\|\sum_{i\in [k]} c_i \ket{s_i}\|_2$ is a valid quantum~state.
In~\cite{mehraban2024improved,mehraban2024quadratic} it was shown that there exists a function $\delta:\mathbb{N}\rightarrow \mathbb{N}$ such that for every $\ket{\phi}$, we have $\chi(\ket{\phi})\leq \poly(n,2^{\delta(\xi(\ket{\phi})})$, i.e., stabilizer rank is polynomial in $n$ and exponential in some function that only dependence on the stabilizer extent $\xi(\ket{\phi})$. More recently, this result was improved in \cite{kalra2025stabilizer} which provided an expression for this function $\delta$.
\begin{theorem}[{\cite[Theorem~1]{kalra2025stabilizer}}]\label{thm:ub_stab_extent_stab_rank_states}
Let $\ket{\psi}$ be an $n$-qubit state with stabilizer rank $\kappa$. Then,
$$
\xi(\ket{\psi}) \leq \sqrt{e} \cdot (2 \kappa)^{(2\kappa + 1)/2}.
$$
\end{theorem}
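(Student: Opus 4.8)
The idea is to show that the \emph{minimal-rank} decomposition of $\ket{\psi}$ already has small $\ell_1$ weight, by controlling that weight through the Gram matrix of the participating stabilizer states. Fix a decomposition $\ket{\psi}=\sum_{i=1}^{\kappa} c_i\ket{\phi_i}$ witnessing $\chi(\ket{\psi})=\kappa$; minimality forces $\ket{\phi_1},\dots,\ket{\phi_\kappa}$ to be linearly independent. Let $G\in\CC^{\kappa\times\kappa}$ be the Gram matrix $G_{ij}=\la\phi_i|\phi_j\ra$ and $h=(\la\phi_j|\psi\ra)_j$. Since $\ket{\psi}=\sum_i c_i\ket{\phi_i}$ is itself a stabilizer decomposition, $\xi(\ket{\psi})\le\|c\|_1$; taking inner products with each $\ket{\phi_j}$ gives $Gc=h$, and combining $c^\dagger G c=\||\psi\ra\|^2=1$ with $c^\dagger G c\ge\lambda_{\min}(G)\|c\|_2^2$ yields $\|c\|_2\le\lambda_{\min}(G)^{-1/2}$, hence
$$
\xi(\ket{\psi})\ \le\ \|c\|_1\ \le\ \sqrt{\kappa}\,\|c\|_2\ \le\ \sqrt{\kappa/\lambda_{\min}(G)}.
$$
So it suffices to lower bound the smallest eigenvalue of a $\kappa\times\kappa$ stabilizer Gram matrix.

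The next step is purely linear-algebraic. Since $\Tr G=\kappa$ and $G\succ 0$, the eigenvalues $\lambda_1\ge\cdots\ge\lambda_\kappa>0$ satisfy $\lambda_1+\cdots+\lambda_{\kappa-1}\le\kappa$, so by AM--GM $\prod_{i=1}^{\kappa-1}\lambda_i\le(\kappa/(\kappa-1))^{\kappa-1}<e$, whence
$$
\lambda_{\min}(G)\ =\ \frac{\det G}{\prod_{i=1}^{\kappa-1}\lambda_i}\ \ge\ \frac{\det G}{e},\qquad\text{so}\qquad \xi(\ket{\psi})\ \le\ \sqrt{e\,\kappa/\det G}.
$$
Writing $\det G=\prod_{i=1}^{\kappa}d_i^2$ with $d_i=\mathrm{dist}\big(\ket{\phi_i},\,\mathrm{span}\{\ket{\phi_1},\dots,\ket{\phi_{i-1}}\}\big)$ the Gram--Schmidt residual, the heart of the proof is the estimate $d_i^2\gtrsim 1/i$. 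This gives $\det G\gtrsim 1/\kappa!$ and then, by Stirling ($\kappa!\le e\,\kappa^{\kappa+1/2}e^{-\kappa}$), the bound $\xi(\ket{\psi})\le\sqrt{e\kappa\cdot\kappa!}\le 2\sqrt{e}\,\kappa^{(\kappa+1)/2}$, the $e^{-\kappa/2}$ in Stirling providing exactly the slack needed to absorb the extra $\kappa^{1/4}$ and to get the constant $2\sqrt e$. The case $i=2$ is immediate since two distinct stabilizer states overlap with magnitude at most $1/\sqrt2$, so $d_2^2=1-|\la\phi_1|\phi_2\ra|^2\ge 1/2$; in general $d_i^2=1-g^\dagger G_{i-1}^{-1}g$ where $G_{i-1}$ is the Gram matrix of the first $i-1$ states and $g=(\la\phi_j|\phi_i\ra)_{j<i}$, a positive quantity that must be shown not to be too small.

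\textbf{Main obstacle.} The delicate point is exactly this last estimate: quantifying how nearly linearly dependent $\kappa$ stabilizer states can be. I would attack it using two structural facts. First, the overlap $\la\phi_i|\phi_j\ra$ of two stabilizer states is either $0$ or $2^{-k_{ij}/2}\omega$ with $\omega^8=1$; in particular it lies in $\ZZ[\omega_8,1/\sqrt2]$ with a power-of-two denominator controlled by $k_{ij}$, so $d_i^2$ is a \emph{positive} number in this ring whose denominator is bounded in terms of $\max_{j<i}k_{ij}$ — giving an \emph{unconditional but lossy} bound $\xi(\ket{\psi})\le 2^{\poly(n)}$. Second, a common-Clifford normal-form reduction (a global Clifford preserves all overlaps and $\xi$) that puts the $\kappa$ stabilizer states, up to a fixed computational-basis tensor factor, onto a ``core'' of only $\poly(\kappa)$ qubits; this caps every $k_{ij}$, hence every relevant denominator, by $\poly(\kappa)$ and turns the lossy bound into $\xi(\ket{\psi})\le 2^{\poly(\kappa)}$. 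Pushing the core down to $O(\kappa)$ qubits — together with the sharp residual bound $d_i^2\gtrsim 1/i$ and careful constant-tracking through the AM--GM and Stirling steps above — is what is required to reach the stated exponent $(\kappa+1)/2$ and constant $2\sqrt{e}$. All the other steps (the Gram-matrix manipulation, the eigenvalue/AM--GM bound, the Gram--Schmidt factorization, Stirling) are routine; the dimension reduction and the per-step residual bound are where the real work lies.
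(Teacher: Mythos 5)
Your framework is plausible and the elementary steps are all correct: passing to a minimal-rank decomposition (which forces linear independence of the $\ket{\phi_i}$), the chain $\xi\le\|c\|_1\le\sqrt{\kappa}\|c\|_2\le\sqrt{\kappa/\lambda_{\min}(G)}$ via $c^\dagger Gc=1$, the AM--GM bound $\lambda_{\min}(G)\ge\det G/e$ from $\Tr G=\kappa$, the Cholesky/Gram--Schmidt factorization $\det G=\prod_i d_i^2$, and the Stirling arithmetic that converts $\det G\ge 1/\kappa!$ into $2\sqrt{e}\,\kappa^{(\kappa+1)/2}$. All of this correctly reduces the theorem to a single lower bound on the Gram determinant of linearly independent stabilizer states.

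But that single bound — the per-step residual estimate $d_i^2\gtrsim 1/i$ (equivalently $\det G\gtrsim 1/\kappa!$) — is the entire content of the theorem, and you never prove it. You explicitly flag it as ``where the real work lies,'' and the two structural facts you offer do not close the gap. The algebraic observation that stabilizer overlaps lie in $\ZZ[\omega_8,1/\sqrt2]$ with power-of-two denominators only gives a crude positivity statement: the denominator of $d_i^2=\det G_i/\det G_{i-1}$ accumulates all prior $k_{ij}$'s and can grow like $2^{\Theta(i^2)}$ even on a small core, so this route would at best give $\det G\gtrsim 2^{-O(\kappa^2)}$ and hence $\xi\lesssim 2^{O(\kappa^2)}$, which is strictly weaker than $\kappa^{(\kappa+1)/2}=2^{O(\kappa\log\kappa)}$. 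The proposed Clifford dimension reduction onto a $\poly(\kappa)$-qubit core is itself unproven, and even granting it, it only caps the $k_{ij}$'s — it does not by itself yield the much sharper claim that the Gram--Schmidt residual decays no faster than harmonically. In particular, you verify $d_2^2\ge 1/2$ (from the $1/\sqrt2$ overlap bound for distinct stabilizer states), but already at $i=3$ you give no argument that $d_3^2\ge 1/3$. The determinant bound is exactly what makes this theorem nontrivial, and without it the proposal is an unfinished reduction rather than a proof.
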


When the arbitrary state $\ket{\psi}$ is promised to be produced by a circuit consisting of Clifford gates and $t$ many $T$ gates, the following result is known that gives a better upper bound on the stabilizer extent of $\ket{\psi}$.
\begin{lemma}[{\cite[Lemma~2.5]{grewal2022low}}]\label{lem:stab_extent_clifford_T_circs}
    Let $U$ be a circuit with Clifford gates and  $t$ many $T$ gates. Let $\ket{\psi} = U \ket{0^n}$. Then,
    $$
    \xi(\ket{\psi}) \leq \left(1 + \frac{1}{\sqrt{2}}\right)^t.
    $$
\end{lemma}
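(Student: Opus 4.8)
The plan is to track how the stabilizer extent evolves as we build up $\ket{\psi}=U\ket{0^n}$ one gate at a time. I would first record two elementary closure properties of $\xi$ (both going back to Bravyi et al.~\cite{bravyi2019simulation}): \emph{Clifford invariance}, $\xi(C\ket{\chi})\le\xi(\ket{\chi})$ for any Clifford $C$ — immediate, since a stabilizer decomposition $\ket{\chi}=\sum_i c_i\ket{s_i}$ is mapped by $C$ to the decomposition $C\ket{\chi}=\sum_i c_i\,(C\ket{s_i})$ with the same coefficients and the same number of stabilizer states $C\ket{s_i}$ (applying this with $C^{\dagger}$ gives the matching lower bound, so in fact equality holds); and \emph{sub-additivity under linear combinations}: if $\ket{\chi}=\sum_j\lambda_j\ket{\chi_j}$ with each $\ket{\chi_j}$ a normalized state, then $\xi(\ket{\chi})\le\sum_j|\lambda_j|\,\xi(\ket{\chi_j})$, which follows by concatenating stabilizer decompositions of the $\ket{\chi_j}$'s and applying the triangle inequality to the resulting $\ell_1$-norm (and taking infima).

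The key sub-claim I would establish is a per-$T$-gate bound: for every $n$-qubit state $\ket{\chi}$ and every qubit $q$, $\xi(T_q\ket{\chi})\le\bigl(1+\tfrac1{\sqrt2}\bigr)\xi(\ket{\chi})$, where $T_q$ is the $T$ gate (phase $e^{i\pi/4}$ on $\ket{1}$) on qubit $q$. Given a stabilizer decomposition $\ket{\chi}=\sum_i c_i\ket{s_i}$, I would expand each $T_q\ket{s_i}$ into two stabilizer states as follows: writing the qubit-$q$ computational branches of a stabilizer state $\ket{s}$ as $\ket{s}=\ket{0}_q\ket{\mu_0}+\ket{1}_q\ket{\mu_1}$, we have $Z_q\ket{s}=\ket{0}_q\ket{\mu_0}-\ket{1}_q\ket{\mu_1}$, hence
\[
T_q\ket{s}=\ket{0}_q\ket{\mu_0}+e^{i\pi/4}\ket{1}_q\ket{\mu_1}=\tfrac{1+e^{i\pi/4}}{2}\,\ket{s}+\tfrac{1-e^{i\pi/4}}{2}\,Z_q\ket{s},
\]
and both $\ket{s}$ and $Z_q\ket{s}$ are stabilizer states. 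The $\ell_1$-norm of the two coefficients is $\tfrac12\bigl(|1+e^{i\pi/4}|+|1-e^{i\pi/4}|\bigr)=\cos(\pi/8)+\sin(\pi/8)\le 1+\tfrac1{\sqrt2}$ (using $\cos(\pi/8)\le1$ and $\sin(\pi/8)\le\sin(\pi/4)=\tfrac1{\sqrt2}$). Substituting this into $T_q\ket{\chi}=\sum_i c_i\,T_q\ket{s_i}$ yields a stabilizer decomposition of $T_q\ket{\chi}$ whose coefficient $\ell_1$-norm is at most $\bigl(1+\tfrac1{\sqrt2}\bigr)\sum_i|c_i|$; taking the infimum over decompositions of $\ket{\chi}$ proves the sub-claim.

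Finally I would assemble these: write $U=C_t\,T^{(t)}\,C_{t-1}\,T^{(t-1)}\cdots C_1\,T^{(1)}\,C_0$ with each $C_j$ Clifford and each $T^{(j)}$ a single-qubit $T$ gate (any Clifford$+T$ circuit has this form after grouping maximal Clifford runs). Set $\ket{\psi_0}=C_0\ket{0^n}$, which is a stabilizer state with $\xi(\ket{\psi_0})=1$, and $\ket{\psi_j}=C_j\,T^{(j)}\ket{\psi_{j-1}}$, so that $\ket{\psi}=\ket{\psi_t}$. Combining Clifford invariance with the per-gate bound gives $\xi(\ket{\psi_j})\le\xi(T^{(j)}\ket{\psi_{j-1}})\le\bigl(1+\tfrac1{\sqrt2}\bigr)\xi(\ket{\psi_{j-1}})$, and iterating $t$ times yields $\xi(\ket{\psi})\le\bigl(1+\tfrac1{\sqrt2}\bigr)^t$. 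The hard part here is really just pinning down the two closure properties of $\xi$ — especially sub-additivity, where one must be a little careful that the extent (an infimum over unbounded-length decompositions) behaves well under concatenation; an alternative route that sidesteps the gate-by-gate bookkeeping is to teleport each $T$ onto a fresh $\ket{A}$ magic state and invoke sub-multiplicativity of $\xi$ under tensor products, but the direct argument above is cleaner.
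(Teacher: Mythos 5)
Your proof is correct. The paper states this lemma as a citation to \cite{grewal2022low} and gives no proof of its own, so there is nothing internal to compare against; your gate-by-gate argument --- Clifford invariance of $\xi$, sub-additivity of $\xi$ under linear combinations (which is clean here because the number of $n$-qubit stabilizer states is finite, so the infimum in the definition of $\xi$ is attained and concatenation of optimal decompositions is licit), and the two-term decomposition $T_q\ket{s}=\tfrac{1+e^{i\pi/4}}{2}\ket{s}+\tfrac{1-e^{i\pi/4}}{2}Z_q\ket{s}$ --- is the standard argument and is essentially what the cited reference does. One remark worth recording: with the present paper's convention $\xi(\ket{\psi})=\min\sum_i|c_i|$ (the minimal coefficient $\ell_1$-norm, not its square), your per-gate factor is exactly $\cos(\pi/8)+\sin(\pi/8)=\sqrt{1+1/\sqrt{2}}$, so your argument actually yields the sharper bound $\xi(\ket{\psi})\le\bigl(1+\tfrac{1}{\sqrt{2}}\bigr)^{t/2}$. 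The exponent $t$ in the stated lemma reflects the convention of the cited source (and of Bravyi et al.), where stabilizer extent is defined as the \emph{square} of the minimal $\ell_1$-norm, so that $\bigl(\cos(\pi/8)+\sin(\pi/8)\bigr)^{2t}=\bigl(1+\tfrac{1}{\sqrt{2}}\bigr)^{t}$. You explicitly weaken $\cos(\pi/8)+\sin(\pi/8)\le 1+1/\sqrt{2}$ to match the stated form, which is fine and makes the proposal valid under either convention.
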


\paragraph{Mutually unbiased bases $(MUB)$.}
Often, we will attempt to cover a subgroup of Weyl operators by a union of Lagrangian subspaces (aka unsigned stabilizer subgroups), also called a \emph{stabilizer covering}. We  make use of the following fact regarding MUBs from~\cite{bandyopadhyay2002new,flammia2020efficient}.
\begin{fact}\label{fact:mub_unsigned_stab}
The group of $k$-qubit Weyl operators $\{W_x\}_{x \in \FF_2^{2k}}$ can be covered by $2^k + 1$ many $k$-qubit stabilizer groups $\{G_i\}_{i \in [2^k + 1]}$. Each of these are disjoint up to the identity element.
\end{fact}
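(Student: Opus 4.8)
The plan is to translate the statement into the language of symplectic geometry over $\FF_2$ and then exhibit an explicit partition of the nonzero Weyl operators into Lagrangian subspaces. Recall that $W_x$ and $W_y$ commute iff $[x,y]=0$, and that (up to signs) a $k$-qubit stabilizer group is precisely the set $\{\pm W_x : x \in L\}$ for a Lagrangian subspace $L\subseteq \FF_2^{2k}$, i.e.\ an isotropic subspace of the maximal dimension $k$; such an $L$ contains $2^k$ vectors, hence $2^k-1$ nonzero ones. Since $|\FF_2^{2k}\setminus\{0\}| = 2^{2k}-1 = (2^k-1)(2^k+1)$, it suffices to produce $2^k+1$ Lagrangians $L_1,\dots,L_{2^k+1}$ that pairwise intersect only in the origin: then their nonzero parts partition $\FF_2^{2k}\setminus\{0\}$, and any valid choice of signs turning each $\{W_x : x\in L_i\}$ into a genuine stabilizer group $G_i$ yields the claimed covering, with $G_i\cap G_j = \{W_0\} = \{I\}$ for $i\neq j$.

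To build such a \emph{symplectic spread}, I would identify $\FF_2^{2k}$ with $\FF_{2^k}\oplus\FF_{2^k}$ as $\FF_2$-vector spaces and use that, in characteristic $2$, the form $[(a,b),(c,d)] = \Tr_{\FF_{2^k}/\FF_2}(ad+bc)$ is an alternating nondegenerate $\FF_2$-bilinear form, so after a symplectic change of basis it agrees with $[\cdot,\cdot]$. For $\lambda\in\FF_{2^k}$ put $L_\lambda = \{(a,\lambda a): a\in\FF_{2^k}\}$, and also $L_\infty = \{(0,b): b\in\FF_{2^k}\}$. Each is an $\FF_2$-subspace of dimension $k$; $L_\lambda$ is isotropic since $[(a,\lambda a),(a',\lambda a')] = \Tr(\lambda a a' + \lambda a a') = 0$, and $L_\infty$ trivially so; and for $\lambda\neq\mu$ one has $L_\lambda\cap L_\mu = \{0\}$ (from $(\lambda-\mu)a = 0$) and $L_\lambda\cap L_\infty = \{0\}$. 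This produces $2^k+1$ Lagrangians pairwise meeting only at $0$ whose union is all of $\FF_2^{2k}$, exactly as required.

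Finally, each Lagrangian $L_i$ must be lifted to an actual stabilizer group: picking an $\FF_2$-basis $x_1,\dots,x_k$ of $L_i$, assign Hermitian Weyl representatives $P_1,\dots,P_k$ (each $P_j = \pm W_{x_j}$) and take $G_i := \langle P_1,\dots,P_k\rangle$; pairwise commutativity of the $P_j$ guarantees this is a well-defined abelian group of order $2^k$ that avoids $-I$, hence a stabilizer group with underlying subspace $L_i$. Indexing these $2^k+1$ groups gives the statement. Alternatively, and more briefly, one can invoke the known equivalence between such spreads and complete sets of $2^k+1$ mutually unbiased bases in dimension $2^k$, whose existence in prime-power dimension is classical~\cite{bandyopadhyay2002new,flammia2020efficient}. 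The only genuinely delicate point is the explicit symplectic identification via the trace form (verifying nondegeneracy and that each $L_\lambda$ is isotropic); the sign-lifting step is routine once commutativity is in hand.
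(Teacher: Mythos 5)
Your proof is correct. The paper itself does not prove Fact~\ref{fact:mub_unsigned_stab}: it treats the statement as known and simply cites \cite{bandyopadhyay2002new,flammia2020efficient}. What you have supplied is a self-contained argument via the classical (desarguesian) symplectic spread of $\FF_2^{2k}$ obtained by identifying $\FF_2^{2k}\cong\FF_{2^k}\oplus\FF_{2^k}$ and using the trace pairing. The pieces all check out: the trace form $\Tr_{\FF_{2^k}/\FF_2}(ad+bc)$ is alternating (in characteristic $2$) and nondegenerate, hence symplectically equivalent to the standard form; each $L_\lambda=\{(a,\lambda a)\}$ and $L_\infty=\{(0,b)\}$ is a $k$-dimensional isotropic, i.e.\ Lagrangian, subspace; pairwise intersections are $\{0\}$ because $\FF_{2^k}$ is a field; and the count $(2^k+1)(2^k-1)=2^{2k}-1$ shows the nonzero parts exhaust $\FF_2^{2k}\setminus\{0\}$, so the spread is a partition and the "covering" and "disjoint up to identity" claims both follow. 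Since the fact as used in the paper only requires the unsigned stabilizer groups (Lagrangian subspaces), the sign-lifting paragraph is not strictly needed, but it too is correct: with a fixed commuting Hermitian generating set $P_1,\dots,P_k$ the $2^k$ products $\prod_j P_j^{a_j}$ are distinct, none equals $-I$ because $\{x_j\}$ is independent, and $G_i\cap G_j=\{I\}$ because the underlying subspaces meet only in $0$. In short, your argument is a valid and standard proof of what the paper takes on citation.
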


\subsection{Gowers norm and inverse theorems}
For any $n$-qubit quantum state $\ket{\psi}=\sum_{x \in \{0,1\}^n} f(x)\ket{x}$ where $f=(f(x))_x$ is an $\ell_2$-normed vector, one can define its Gowers-$k$ norm~\cite{ad2024tolerant} as follows  
\begin{equation}
    \gowers{\ket{\psi}}{k} = 2^{n/2} \left[ \Exp_{x,h_1,h_2,\ldots,h_k \in \{0,1\}^n} \prod_{\omega \in \{0,1\}^k} C^{|\omega|} f(x + \omega \cdot h) \right]^{1/2^{k}},
\end{equation}
where $C^{|\omega|} f = f$ if $|\omega| := \sum_{j \in [k]} \omega_k$ is even and is $\overline{f}$ if $|\omega|$ is odd with $\overline{f}$ denoting the complex conjugate of $f$. We state facts about Gowers norm of quantum states, that will be useful~later.
\begin{fact}[{\cite[Lemma~3.3]{ad2024tolerant}}]
\label{fact:relation_expectation_paulis_qPsi_and_pPsi} 
For every $\ket{\psi}$ we have that 
\begin{equation*}
    {\gowers{\ket{\psi}}{3}^{8}}=\Exp_{x \sim p_\Psi}\left[|\la \psi | W_x | \psi \ra|^2 \right]  \geq  \Exp_{x \sim q_\Psi}\left[|\la \psi | W_x | \psi \ra|^2 \right] \geq \underbrace{\left(\Exp_{x \sim p_\Psi}\left[|\la \psi | W_x | \psi \ra|^2 \right] \right)^2}_{= \gowers{\ket{\psi}}{3}^{16}}. 
\end{equation*}
\end{fact}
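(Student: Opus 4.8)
\textbf{Proof plan for Fact~\ref{fact:relation_expectation_paulis_qPsi_and_pPsi}.} The plan is to recognize that all three quantities are natural moments of the characteristic distribution $p_\Psi$ and the Weyl distribution $q_\Psi$, and to relate them by unwinding the convolution definition of $q_\Psi$ together with an application of Cauchy--Schwarz (or power-mean) over the appropriate probability space. The cleanest route is to establish the two inequalities separately after first nailing down the exact expressions for the two endpoints.

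\emph{Step 1: Identify the endpoints as Gowers-norm moments.} First I would observe that $\Exp_{x\sim p_\Psi}[|\la\psi|W_x|\psi\ra|^2] = \sum_x p_\Psi(x)\cdot 2^n p_\Psi(x) = 2^n\sum_x p_\Psi(x)^2$, and that this equals $\gowers{\ket{\psi}}{3}^8$ by the definition of the Gowers-$3$ norm of a quantum state (this is exactly the content of~\cite[Lemma 3.3]{ad2024tolerant}, and can also be checked by expanding the $2^3=8$-fold product in the definition and using that $\widehat{|f|^2}$-type sums collapse to $\sum_x p_\Psi(x)^2$ after the Fourier/Weyl expansion of $\ketbra{\psi}{\psi}$). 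Similarly, $\big(\Exp_{x\sim p_\Psi}[|\la\psi|W_x|\psi\ra|^2]\big)^2 = \big(2^n\sum_x p_\Psi(x)^2\big)^2$, which matches $\gowers{\ket{\psi}}{3}^{16}$. So after Step~1 the claimed identities at the two ends are in place and only the two inequalities in the middle remain.

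\emph{Step 2: The left inequality $\Exp_{x\sim p_\Psi}[\cdot]\ge \Exp_{x\sim q_\Psi}[\cdot]$.} Using $|\la\psi|W_x|\psi\ra|^2 = 2^n p_\Psi(x)$ on both sides, this reduces to showing $2^n\sum_x p_\Psi(x)^2 \ge 2^n\sum_x q_\Psi(x)p_\Psi(x)$, i.e. $\sum_x p_\Psi(x)^2 \ge \sum_x q_\Psi(x)p_\Psi(x)$. Expanding $q_\Psi(x)=\sum_y p_\Psi(y)p_\Psi(x+y)$, the right side is $\sum_{x,y}p_\Psi(x)p_\Psi(y)p_\Psi(x+y)$. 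I would bound this using the standard inequality $p_\Psi(x+y)\le \max_z p_\Psi(z)$ only if that were tight enough — but cleaner is: the right side is $\la p_\Psi, p_\Psi\star p_\Psi\ra$ (an autocorrelation triple-product), and by Cauchy--Schwarz / Young-type reasoning $\sum_{x,y}p_\Psi(x)p_\Psi(y)p_\Psi(x+y)\le \sum_x p_\Psi(x)^2$; concretely, writing $a(x)=p_\Psi(x)$, we have $\sum_{x,y}a(x)a(y)a(x+y) \le \big(\sum_x a(x)^2\big)^{1/2}\big(\sum_x (\sum_y a(y)a(x+y))^2\big)^{1/2}$ and then $\sum_x (a\star a)(x)^2 \le (\sum_x a(x)^2)^2$ since $\|a\star a\|_2\le \|a\|_1\|a\|_2 = \|a\|_2$ (using $\|a\|_1=1$), giving the bound. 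Actually this direction also follows immediately from Fact~\ref{fact:relation_qPsi_and_pPsi} combined with a power-mean bound, but the self-contained computation above is short.

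\emph{Step 3: The right inequality $\Exp_{x\sim q_\Psi}[\cdot] \ge \big(\Exp_{x\sim p_\Psi}[\cdot]\big)^2$.} This is the heart of the statement and where I expect the only real (if mild) obstacle. Substituting $|\la\psi|W_x|\psi\ra|^2=2^np_\Psi(x)$ this becomes $2^n\sum_x q_\Psi(x)p_\Psi(x) \ge \big(2^n\sum_x p_\Psi(x)^2\big)^2$. I would prove it by a single Cauchy--Schwarz step: we already used $q_\Psi = p_\Psi\star p_\Psi$, so $\sum_x q_\Psi(x)p_\Psi(x) = \sum_{x,y}p_\Psi(y)p_\Psi(x+y)p_\Psi(x)$, which is the number $T:=\sum_{a,b,c:\,a+b+c \text{ related}}\dots$; more usefully, think of it as $\Exp_{y\in\FF_2^{2n}}[\,\widehat{h}(y)^2\,]$-style after passing to the symplectic-Fourier side. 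The slickest argument: by Fact~\ref{fact:relation_qPsi_and_pPsi}, $\Exp_{x\sim q_\Psi}[|\la\psi|W_x|\psi\ra|^2] = 2^{3n}\Exp_{y}\big[\sum_\alpha p_\Psi(y,\alpha)^3\big]$, and by Cauchy--Schwarz in the form $\Exp[Z^3]\cdot\Exp[Z]\ge \Exp[Z^2]^2$ (applied with $Z$ the random variable $2^np_\Psi$ under a suitable measure) one gets $\Exp_y[\sum_\alpha p_\Psi^3] \ge (\text{something})$; I would instead directly apply Cauchy--Schwarz to $\sum_{x}p_\Psi(x)^2 = \sum_x p_\Psi(x)^{1/2}\cdot p_\Psi(x)^{3/2}\le (\sum_x p_\Psi(x))^{1/2}(\sum_x p_\Psi(x)^3)^{1/2} = (\sum_x p_\Psi(x)^3)^{1/2}$, hence $\sum_x p_\Psi(x)^3 \ge (\sum_x p_\Psi(x)^2)^2$. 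Finally I relate $\sum_x p_\Psi(x)^3$ to $\sum_x q_\Psi(x)p_\Psi(x)$: indeed $\sum_x q_\Psi(x)p_\Psi(x)=\sum_{x,y}p_\Psi(y)p_\Psi(x+y)p_\Psi(x)\ge \sum_x p_\Psi(x)^3$ is \emph{false} in general, so the correct bridge is the exact identity in Fact~\ref{fact:relation_qPsi_and_pPsi} which shows $\Exp_{x\sim q_\Psi}[|\la\psi|W_x|\psi\ra|^2] = 2^{3n}\Exp_y[\sum_\alpha p_\Psi(y,\alpha)^3] = 2^n \cdot 2^{2n}\sum_x p_\Psi(x)^3 \cdot$ (matching the $n$-qubit normalization), so that combining with the Cauchy--Schwarz bound $\sum_x p_\Psi(x)^3\ge(\sum_x p_\Psi(x)^2)^2$ and $\Exp_{x\sim p_\Psi}[|\la\psi|W_x|\psi\ra|^2]=2^n\sum_x p_\Psi(x)^2$ yields exactly $\Exp_{x\sim q_\Psi}[\cdot]\ge (\Exp_{x\sim p_\Psi}[\cdot])^2$. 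The main thing to be careful about is bookkeeping the factors of $2^n$ correctly in the $n$-qubit versus $2n$-bit normalizations; the underlying inequality is just Cauchy--Schwarz applied once.
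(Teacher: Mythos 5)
The paper does not reprove this fact; it is imported verbatim from \cite[Lemma~3.3]{ad2024tolerant}, so there is no in-text argument to compare yours against. Taken as a free-standing reconstruction, your plan is sound: once you have (a) the Gowers-moment identity $\gowers{\ket{\psi}}{3}^8=\Exp_{x\sim p_\Psi}[|\la\psi|W_x|\psi\ra|^2]=2^n\sum_x p_\Psi(x)^2$ and (b) the cubic-moment identity of Fact~\ref{fact:relation_qPsi_and_pPsi}, both inequalities reduce to one application of Young/Cauchy--Schwarz each. You take (a) and (b) on citation, which is consistent with how the paper itself handles the statement, though citing (a) as \cite[Lemma~3.3]{ad2024tolerant} is mildly circular since that lemma \emph{is} the statement in question; a fully self-contained proof would require carrying out the eight-fold expansion you only gesture at.

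Two local remarks. First, your Step~2 via Young, $\|p_\Psi\star p_\Psi\|_2\le\|p_\Psi\|_1\|p_\Psi\|_2=\|p_\Psi\|_2$ and hence $\sum_x p_\Psi(x)q_\Psi(x)\le\sum_x p_\Psi(x)^2$, is correct and is a genuinely different (and self-duality-free) route for the left inequality. Once Fact~\ref{fact:relation_qPsi_and_pPsi} is invoked anyway, a slightly shorter route is to write $\Exp_{x\sim q_\Psi}[|\la\psi|W_x|\psi\ra|^2]=2^{2n}\sum_x p_\Psi(x)^3$ and use the pointwise bound $p_\Psi(x)\le 2^{-n}$ from Fact~\ref{lem:ub_qPsi} to get $2^{2n}\sum_x p_\Psi(x)^3\le 2^n\sum_x p_\Psi(x)^2$. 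Second, the constant you write in Step~3 is off by $2^n$: unwinding $\Exp_{y\in\FF_2^n}$ in Fact~\ref{fact:relation_qPsi_and_pPsi} gives $2^{3n}\cdot 2^{-n}\sum_x p_\Psi(x)^3=2^{2n}\sum_x p_\Psi(x)^3$, not $2^{3n}\sum_x p_\Psi(x)^3$. You flag this concern yourself, and fortunately the ingredient you actually establish via Cauchy--Schwarz, $\sum_x p_\Psi(x)^3\ge\bigl(\sum_x p_\Psi(x)^2\bigr)^2$ (using $\sum_x p_\Psi(x)=1$), is exactly what the correct chain needs: $2^{2n}\sum_x p_\Psi(x)^3\ge 2^{2n}\bigl(\sum_x p_\Psi(x)^2\bigr)^2=\bigl(2^n\sum_x p_\Psi(x)^2\bigr)^2=\bigl(\Exp_{x\sim p_\Psi}[|\la\psi|W_x|\psi\ra|^2]\bigr)^2$. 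So the argument goes through once the normalization is repaired; it is a cosmetic slip rather than a gap.
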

We have the following inverse theorem for the Gowers-$3$ norm of quantum states that was established in~\cite{ad2024tolerant}.
\gowersstates*
As it is easier to estimate $\Exp_{x \sim q_\Psi}\left[|\la \psi | W_x | \psi \ra|^2 \right]$ which is a proxy for the Gowers-$3$ norm given the relation in Fact~\ref{fact:relation_expectation_paulis_qPsi_and_pPsi}, we also have the following inverse theorem.
\begin{restatable}{theorem}{inverseweyl}\label{thm:inverse_weyl_exp_states}
    Let $\gamma\in [0,1]$. If $\ket{\psi}$ is an $n$-qubit quantum state such that $\Exp_{x \sim q_\Psi}\left[|\la \psi | W_x | \psi \ra|^2 \right] \geq \gamma$, then there is an $n$-qubit stabilizer state $\ket{\phi}$ such that $|\la \psi | \phi \ra|^2 \geq \Omega(\gamma^{C'})$ for some constant $C'>1$.
\end{restatable}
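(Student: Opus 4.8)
The plan is to observe that Theorem~\ref{thm:inverse_weyl_exp_states} is, up to purely cosmetic notation, a restatement of Theorem~\ref{thm:inversegowersstates}, so that the proof is a one-line unfolding of definitions. First I would recall from Eq.~\eqref{eq:defnofppsi} that for every $x\in\FF_2^{2n}$ one has $2^n p_\Psi(x) = |\la\psi|W_x|\psi\ra|^2$, whence
$$
\Exp_{x\sim q_\Psi}\big[2^n p_\Psi(x)\big] \;=\; \Exp_{x\sim q_\Psi}\big[|\la\psi|W_x|\psi\ra|^2\big] \;\ge\; \gamma .
$$
So the hypothesis of Theorem~\ref{thm:inverse_weyl_exp_states} coincides verbatim with the hypothesis of Theorem~\ref{thm:inversegowersstates}, and I would simply invoke the latter to get a stabilizer state $\ket{\phi}$ with $|\la\psi|\phi\ra|^2 \ge \Omega(\gamma^{C})$, then set $C' := C > 1$.

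If one instead wishes to route the argument through the inverse theorem in its Gowers-norm formulation (phrased via $\Exp_{x\sim p_\Psi}$ rather than $\Exp_{x\sim q_\Psi}$), I would insert one extra step using Fact~\ref{fact:relation_expectation_paulis_qPsi_and_pPsi}: since the characteristic expectation dominates the Weyl expectation, $\gowers{\ket{\psi}}{3}^8 = \Exp_{x\sim p_\Psi}[|\la\psi|W_x|\psi\ra|^2] \ge \Exp_{x\sim q_\Psi}[|\la\psi|W_x|\psi\ra|^2] \ge \gamma$, hence $\gowers{\ket{\psi}}{3} \ge \gamma^{1/8}$; feeding this into the Gowers-$3$ inverse theorem of~\cite{ad2024tolerant} produces a stabilizer state with $|\la\psi|\phi\ra|^2 \ge \Omega\big((\gamma^{1/8})^{C''}\big) = \Omega(\gamma^{C'})$ after absorbing the factor $1/8$ into the exponent $C' = C''/8 > 1$.

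I do not anticipate any real obstacle: the statement is recorded separately only so that the downstream arguments (in particular the $\BSG$ test and the self-correction protocol) can invoke the inverse theorem directly in terms of the quantity $\Exp_{x\sim q_\Psi}[|\la\psi|W_x|\psi\ra|^2]$, which --- unlike the exact Gowers-$3$ norm --- is efficiently estimable to additive error $\delta$ from $O(1/\delta^2)$ copies of $\ket{\psi}$ via Bell difference sampling (Lemma~\ref{lem:est_weyl_exp}). The only bookkeeping point is that both statements merely assert the existence of \emph{some} constant exceeding $1$, so no care is required in tracking how $C'$ depends on $C$.
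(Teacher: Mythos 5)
Your first route is exactly right and matches what the paper intends: since $2^n p_\Psi(x) = |\la\psi|W_x|\psi\ra|^2$ by Eq.~\eqref{eq:defnofppsi}, the hypotheses of Theorem~\ref{thm:inverse_weyl_exp_states} and Theorem~\ref{thm:inversegowersstates} are literally identical, and the paper leaves the derivation implicit for this reason. One cosmetic quibble with your alternative route: the step ``$C'=C''/8>1$'' silently assumes $C''>8$, which you have no reason to believe; but this is moot, because for $\gamma\le 1$ a bound $|\la\psi|\phi\ra|^2\ge \Omega(\gamma^{C''/8})$ can always be weakened to $\Omega(\gamma^{C'})$ for any larger $C'>1$, so the convention $C'>1$ is never a constraint.
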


\subsection{Useful  facts and subroutines}\label{sec:useful_subtroutines}

\begin{fact}
\label{fact:lowerboundexpectation}
Let $Y$ be a random variable with $|Y|\leq 1$. If $\Exp[Y]\geq \varepsilon$, then 
$$
\Pr[Y\geq \delta]\geq (\varepsilon-\delta)/(1-\delta)
$$
\end{fact}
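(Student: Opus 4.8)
The plan is a standard reverse-Markov argument obtained by splitting the expectation at the threshold $\delta$. Write $p := \Pr[Y \geq \delta]$ and decompose
$$
\Exp[Y] = \Exp\big[Y \cdot \mathbf{1}[Y \geq \delta]\big] + \Exp\big[Y \cdot \mathbf{1}[Y < \delta]\big].
$$

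First I would bound each term separately. On the event $\{Y \geq \delta\}$ the hypothesis $|Y|\le 1$ gives $Y \le 1$, so the first term is at most $\Pr[Y\geq\delta] = p$. For the second term, the pointwise inequality $Y\cdot\mathbf{1}[Y<\delta] \le \delta\cdot\mathbf{1}[Y<\delta]$ holds (on the event $\{Y<\delta\}$ it is immediate, and off the event both sides are zero), regardless of the sign of $\delta$; hence the second term is at most $\delta(1-p)$. Combining the two bounds with the hypothesis $\Exp[Y]\ge\varepsilon$ gives
$$
\varepsilon \;\le\; \Exp[Y] \;\le\; p + \delta(1-p) \;=\; \delta + p(1-\delta).
$$

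Assuming $\delta < 1$ (the case $\delta \ge 1$ being trivial, since then $1-\delta \le 0$ and the claimed bound is vacuous or follows from $\Pr[Y\ge\delta]\ge 0$), we may rearrange to obtain $p(1-\delta) \ge \varepsilon - \delta$, i.e. $\Pr[Y \geq \delta] \ge (\varepsilon-\delta)/(1-\delta)$, as claimed.

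I do not anticipate any real obstacle; the only points requiring a line of care are the sign handling of $\delta$ in the pointwise bound for the low branch, and separating off the degenerate regime $\delta \ge 1$ so that the final division preserves the inequality direction.
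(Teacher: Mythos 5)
The paper states this fact without proof (it is standard); your argument is the classic reverse-Markov computation and is correct in the regime that matters. One small slip: the parenthetical dismissal of $\delta \geq 1$ is not quite right. If $\delta > 1$ (so that $\Pr[Y\geq\delta]=0$), then $\varepsilon-\delta$ and $1-\delta$ are both negative, so $(\varepsilon-\delta)/(1-\delta)$ is positive (in fact at least $1$), and the claimed inequality is actually false, not vacuous. The statement therefore implicitly requires $\delta<1$, which is how it is always invoked in the paper (with $\delta<\varepsilon\leq 1$). Other than this cosmetic point, your decomposition $\Exp[Y]=\Exp[Y\mathbf{1}[Y\geq\delta]]+\Exp[Y\mathbf{1}[Y<\delta]]$, the bounds $Y\mathbf{1}[Y\geq\delta]\leq\mathbf{1}[Y\geq\delta]$ and $Y\mathbf{1}[Y<\delta]\leq\delta\mathbf{1}[Y<\delta]$, and the rearrangement are all correct and constitute a complete proof.
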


\begin{fact}(Hoeffding bound)
\label{fact:hoeffding_sampling}
Let $Y$ be a random variable with $|Y|\leq 1$ and $\widehat{\mu}$ is the empirical average obtained from $T$ samples, then
$$
\Pr\left[|\Exp[Y] - \widehat{\mu}| > a \right] \leq \exp(-\Omega(a^2 T)).
$$
\end{fact}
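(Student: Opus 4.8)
This is the classical Hoeffding tail bound, and the plan is to derive it by the exponential-moment (Chernoff) method. Write $\widehat\mu = \frac1T\sum_{i=1}^T Y_i$ where $Y_1,\dots,Y_T$ are i.i.d.\ copies of $Y$, so that $\Exp[\widehat\mu]=\Exp[Y]=:\mu$, and set $Z_i := Y_i-\mu$. Then $\Exp[Z_i]=0$ and, since $|Y_i|\le 1$, each $Z_i$ takes values in an interval of length at most $2$.

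The first step is Hoeffding's lemma: if $Z$ has $\Exp[Z]=0$ and is supported on an interval $[c,d]$, then $\Exp[e^{sZ}]\le \exp\!\big(s^2(d-c)^2/8\big)$ for all $s\in\R$. I would include the one-line proof: the cumulant generating function $\varphi(s):=\log\Exp[e^{sZ}]$ satisfies $\varphi(0)=\varphi'(0)=0$, and $\varphi''(s)$ equals the variance of $Z$ under the tilted measure $dP_s \propto e^{sZ}\,dP$, which is at most $(d-c)^2/4$ because any random variable supported on $[c,d]$ has variance at most $(d-c)^2/4$; integrating $\varphi''\le (d-c)^2/4$ twice from $0$ to $s$ gives the claim.

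Next, by independence and Hoeffding's lemma applied with $(d-c)\le 2$,
\[
\Exp\Big[\exp\big(s\textstyle\sum_{i} Z_i\big)\Big] = \prod_{i=1}^T \Exp[e^{sZ_i}] \le \exp\!\big(T s^2 \cdot \tfrac{4}{8}\big) = \exp\!\big(Ts^2/2\big).
\]
For the upper tail, Markov's inequality applied to $\exp(s\sum_i Z_i)$ with $s>0$ yields
\[
\Pr\big[\widehat\mu-\mu > a\big] = \Pr\Big[\textstyle\sum_i Z_i > aT\Big] \le \exp\!\big(Ts^2/2 - saT\big),
\]
and choosing $s=a$ gives $\Pr[\widehat\mu-\mu>a]\le \exp(-a^2T/2)$. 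The lower tail $\Pr[\widehat\mu-\mu<-a]$ is identical with $s<0$ (equivalently, apply the above to $-Y$), and a union bound over the two tails gives $\Pr[|\Exp[Y]-\widehat\mu|>a]\le 2\exp(-a^2T/2)=\exp(-\Omega(a^2T))$, as claimed.

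There is no genuine obstacle here; the only point needing care is the range, namely that $|Y|\le 1$ forces the relevant interval length to be $2$ rather than $1$, which merely changes the absolute constant absorbed into $\Omega(\cdot)$ and is harmless for every downstream application in the paper.
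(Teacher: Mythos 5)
Your proof is correct: it is the standard Chernoff/exponential-moment derivation via Hoeffding's lemma, with the right handling of the range $[-1,1]$ (interval length $2$) and the constant absorbed into the $\Omega(\cdot)$. The paper states this as a well-known fact without proof, so there is no alternative argument to compare against; your write-up would serve as a complete justification.
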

We will need the following subroutines often used as primitives in various quantum algorithms.

\paragraph{Fidelity estimation.}
We can measure the fidelities of an unknown $n$-qubit quantum state $\Psi$ with a set of stabilizer states efficiently using classical shadows~\cite{huang2020predicting}.
\begin{lemma}[\cite{huang2020predicting}]
\label{lem:classicalshadow}
Given an $n$-qubit quantum state $\Psi$ and $M$ stabilizer states $\{ \ket{\phi_j} \}_{j \in [M]}$, there is an algorithm that estimates the fidelity $\la \phi_j | \Psi | \phi_j \ra $ to error at most $\varepsilon$ for all $j \in [M]$ with probability at least $1-\delta$, requiring $O\left(\frac{1}{\varepsilon^2} \log \frac{M}{\delta} \right)$ sample complexity and $O\left( \frac{M}{\varepsilon^2} n^2 \log \frac{M}{\delta} \right)$ time complexity. The algorithm uses only single-copy measurements.
\end{lemma}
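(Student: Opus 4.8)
\textbf{Target: Lemma~\ref{lem:classicalshadow} (classical-shadow fidelity estimation).} The plan is to obtain the lemma as a specialization of the classical shadows protocol of~\cite{huang2020predicting} to the \emph{global} (uniformly random $n$-qubit Clifford) measurement ensemble. Concretely, I would take $N$ copies of $\Psi$; for the $i$-th copy sample a uniformly random Clifford $U_i$, apply it, measure in the computational basis to obtain $b_i \in \{0,1\}^n$, and form the snapshot $\widehat{\rho}_i = (2^n+1)\, U_i^\dagger \ketbra{b_i}{b_i} U_i - \id$, which satisfies $\Exp[\widehat{\rho}_i] = \Psi$. For each target $j$ the single-shot estimate of $\la \phi_j | \Psi | \phi_j \ra = \Tr(\ketbra{\phi_j}{\phi_j}\, \Psi)$ is $\widehat{o}_{i,j} = (2^n+1)\,|\la \phi_j | U_i^\dagger | b_i \ra|^2 - 1$, and the reported estimate $\widehat{o}_j$ is the median-of-means of $\{\widehat{o}_{i,j}\}_{i}$ over $K = \Theta(\log(M/\delta))$ batches of $L = \Theta(1/\varepsilon^2)$ snapshots each, so $N = KL = O(\tfrac{1}{\varepsilon^2}\log\tfrac{M}{\delta})$ and every measurement is single-copy.

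The argument then rests on two quantitative facts. First, a variance bound: each $O_j = \ketbra{\phi_j}{\phi_j}$ is a rank-one projector with $\Tr(O_j^2) = 1$, so the shadow-norm estimate for the global Clifford ensemble from~\cite{huang2020predicting} gives $\Var[\widehat{o}_{i,j}] \le \| O_j \|_{\mathrm{shadow}}^2 \le 3\,\Tr\big((O_j - 2^{-n}\Tr(O_j)\,\id)^2\big) = 3(1 - 2^{-n}) \le 3$, uniformly in $n$; by Chebyshev each batch mean is then $\varepsilon$-accurate with probability $\ge 2/3$ once $L = \Theta(1/\varepsilon^2)$, the median over $K = \Theta(\log(M/\delta))$ batches is $\varepsilon$-accurate except with probability $\le \delta/M$, and a union bound over the $M$ targets yields correctness with probability $\ge 1-\delta$ and the stated sample complexity. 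Second, efficient post-processing: $U_i^\dagger\ket{b_i}$ is itself a stabilizer state, so $\widehat{o}_{i,j}$ only requires the overlap $|\la \phi_j | U_i^\dagger | b_i \ra|^2$ of two stabilizer states, which standard stabilizer-formalism arithmetic computes in $O(n^2)$ time; summing over the $N\cdot M$ pairs gives total running time $O(\tfrac{M}{\varepsilon^2} n^2 \log\tfrac{M}{\delta})$.

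There is no genuine obstacle here, as the statement is a direct instantiation of~\cite{huang2020predicting}; the only points needing care are (i) using the global Clifford ensemble rather than the random-Pauli one — the latter would make $\| O_j \|_{\mathrm{shadow}}^2$ blow up exponentially on a rank-one projector — and (ii) checking that the $2^n$ factor in the snapshot cancels against the $2^{-n}$ in $\Tr(O_{j,0}^2) = 1 - 2^{-n}$, where $O_{j,0}$ is the traceless part of $O_j$, so that the variance, and hence both the sample and time complexities, are dimension-independent exactly as claimed.
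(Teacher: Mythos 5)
Your proposal is correct and matches the paper's implicit reasoning exactly: the lemma is stated as a direct citation of~\cite{huang2020predicting}, and what you give is precisely the global-Clifford classical shadows argument (rank-one projector $\Rightarrow$ shadow norm $\le 3$, median-of-means over $\Theta(\log(M/\delta))$ batches of $\Theta(1/\varepsilon^2)$ snapshots, $O(n^2)$ stabilizer-overlap post-processing) that underlies it.
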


It was shown in Chen et al.~\cite{chen2024stabilizer} that this can be extended to estimating fidelities with a set of states with high stabilizer dimension as well.
\begin{lemma}[{\cite[Lemma~4.17]{chen2024stabilizer}}]\label{lem:shadows_stab_dim}
Given $t\in \mathbb{N}$, an $n$-qubit quantum state $\Psi$, and $M$ Clifford unitaries $\{ U_j \}_{j \in [M]}$, there is an algorithm that estimates the fidelity $\Tr(\la 0^{n-t} |U_j^\dagger \Psi U_j | 0^{n-t} \ra)$ to error at most $\varepsilon$ for all $j \in [M]$ with probability at least $1-\delta$, requiring $O\left(\frac{2^{2t}}{\varepsilon^2} \log \frac{2^t M}{\delta} \right)$ sample complexity and $O\left( \frac{2^{3t} M}{\varepsilon^2} n^2 \log \frac{2^t M}{\delta} \right)$ time complexity. The algorithm uses only single-copy measurements.
\end{lemma}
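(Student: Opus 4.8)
The plan is to reduce this to the $t=0$ case already established in Lemma~\ref{lem:classicalshadow}, by expanding the target quantity as a sum of $2^t$ ordinary stabilizer–state fidelities. Concretely, for each $j\in[M]$ the partial projection can be written out in the computational basis of the first $t$ qubits as
\begin{align*}
\Tr\!\left(\la 0^{n-t}|\,U_j^\dagger \Psi U_j\,|0^{n-t}\ra\right)
&= \sum_{z\in\{0,1\}^t} \la z\,0^{n-t}|\, U_j^\dagger \Psi U_j\, |z\,0^{n-t}\ra \\
&= \sum_{z\in\{0,1\}^t} \la \phi_{j,z}|\,\Psi\,|\phi_{j,z}\ra,
\end{align*}
where $\ket{\phi_{j,z}} := U_j\big(\ket{z}\otimes\ket{0^{n-t}}\big)$ is a stabilizer state (a Clifford applied to a computational basis state), whose stabilizer tableau can be computed from $U_j$ in $\poly(n)$ time; for fixed $j$ the family $\{\ket{\phi_{j,z}}\}_z$ is orthonormal, so each such fidelity lies in $[0,1]$.

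Given this identity, I would invoke Lemma~\ref{lem:classicalshadow} on the collection of $M\cdot 2^t$ stabilizer states $\{\ket{\phi_{j,z}}\}_{j\in[M],\,z\in\{0,1\}^t}$ with target accuracy $\varepsilon' := \varepsilon/2^t$ and failure probability $\delta$. From a single batch of classical shadows (single-copy random Clifford measurements on copies of $\Psi$) this yields estimates $\widehat f_{j,z}$ with $|\widehat f_{j,z} - \la\phi_{j,z}|\Psi|\phi_{j,z}\ra|\le \varepsilon/2^t$ for all $j,z$ simultaneously with probability at least $1-\delta$. One then outputs $\widehat F_j := \sum_{z\in\{0,1\}^t}\widehat f_{j,z}$ as the estimate of the $j$-th fidelity; by the triangle inequality over the $2^t$ terms, $|\widehat F_j - \Tr(\la 0^{n-t}|U_j^\dagger\Psi U_j|0^{n-t}\ra)|\le 2^t\cdot(\varepsilon/2^t)=\varepsilon$, which is the required guarantee, and the measurements are inherited single-copy.

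For the resource bounds I would simply substitute $M\mapsto M2^t$ and $\varepsilon\mapsto\varepsilon/2^t$ into the complexities of Lemma~\ref{lem:classicalshadow}: the sample complexity becomes $O\!\big(\tfrac{2^{2t}}{\varepsilon^2}\log\tfrac{M2^t}{\delta}\big)$, and the time complexity becomes $O\!\big(\tfrac{M2^t\cdot 2^{2t}}{\varepsilon^2}n^2\log\tfrac{M2^t}{\delta}\big)=O\!\big(\tfrac{M2^{3t}}{\varepsilon^2}n^2\log\tfrac{M2^t}{\delta}\big)$, the $\poly(n)$ cost of forming the $M2^t$ tableaux and summing being absorbed into the $n^2$ factor. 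There is no genuine obstacle here; the only points requiring care are (i) choosing the per-term accuracy to be $\varepsilon/2^t$ rather than $\varepsilon$, so that the error accumulated over the $2^t$ branches stays bounded by $\varepsilon$, and (ii) observing that the classical shadows are collected once and reused across all $j$ and all $z$, so partitioning into $M2^t$ "virtual" targets incurs only the logarithmic union-bound overhead already reflected in the bounds above.
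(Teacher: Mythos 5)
This lemma is cited in the paper as Lemma 4.17 of Chen et al.\ [chen2024stabilizer], so the paper itself supplies no proof to compare against. Your argument is correct and self-contained: the identity
$\Tr\!\left(\la 0^{n-t}|U_j^\dagger \Psi U_j|0^{n-t}\ra\right)=\sum_{z\in\{0,1\}^t}\la\phi_{j,z}|\Psi|\phi_{j,z}\ra$
with $\ket{\phi_{j,z}}=U_j\ket{z,0^{n-t}}$ holds, each $\ket{\phi_{j,z}}$ is a stabilizer state with efficiently computable tableau, and invoking Lemma~\ref{lem:classicalshadow} on the $M2^t$ states with per-term accuracy $\varepsilon/2^t$ followed by the triangle inequality reproduces exactly the stated sample and time bounds, with single-copy measurements inherited from the shadow protocol.

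One remark: your route is the natural one given the $2^{2t}$ in the target bound, but it is not the only one. The observable $\Pi_j := U_j(I_t\otimes\ketbra{0^{n-t}}{0^{n-t}})U_j^\dagger$ is a rank-$2^t$ projector, and its Clifford-shadow norm squared is $O(\Tr(\Pi_j^2))=O(2^t)$, so estimating $\Tr(\Pi_j\Psi)$ directly (rather than splitting into $2^t$ rank-one pieces) would give sample complexity $O\!\big(\tfrac{2^t}{\varepsilon^2}\log\tfrac{M}{\delta}\big)$, improving the quoted bound by a factor of roughly $2^t$. Since the statement you are asked to prove already has the $2^{2t}$ scaling, your union-bound decomposition is entirely adequate for the purpose at hand.
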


\paragraph{State tomography.} We will also require the following tomography protocol~\cite{guctua2020fast}.
\begin{lemma}[Full tomography via single-copy measurements~\cite{guctua2020fast}]\label{lem:state_tomo}
Given copies of an $n$-qubit quantum state $\ket{\Psi}$, there is an algorithm that outputs a density matrix $\widehat{\Psi}$ such that $d_{\mathrm{tr}}(\Psi, \widehat{\Psi}) \leq \varepsilon$  with probability
at least $1-\delta$. The algorithm performs $O(2^{4n} n \log(1/\delta)/\varepsilon^2)$ single-copy measurements on $\Psi$ and takes $O(2^{4n} n^2 \log(1/\delta)/\varepsilon^2)$ time.
\end{lemma}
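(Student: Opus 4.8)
The plan is to perform Pauli (direct) tomography: estimate every Pauli coefficient of the unknown state to a suitable additive accuracy, reassemble these into a Hermitian trace-one matrix, project onto the set of density matrices, and convert Hilbert--Schmidt error into trace error via Cauchy--Schwarz. Write $\rho := \ketbra{\Psi}{\Psi}$ (the argument is identical if $\Psi$ is mixed) and recall from the preliminaries that $\rho = 2^{-n}\sum_{x\in\FF_2^{2n}} \alpha_x W_x$ with $\alpha_x = \Tr(W_x\rho)$ and $\alpha_0 = 1$. Each Weyl operator $W_x = i^{a\cdot b}X^aZ^b$ is Hermitian (indeed $W_x^\dagger = i^{a\cdot b}X^aZ^b = W_x$) with spectrum $\{-1,+1\}$, and one can sample a $\pm 1$ random variable of expectation $\alpha_x$ from a single copy of $\Psi$: measure qubit $i$ in the $X$-, $Y$- or $Z$-eigenbasis according to the $i$-th tensor factor of $W_x$ (any basis on the identity factors), then output the signed product of the outcomes on the non-identity factors. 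Each such shot, with its post-processing, costs $O(n)$ time.

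Fix an accuracy parameter $\widetilde\varepsilon>0$. For each of the $4^n-1$ nonidentity strings $x$, take $T_0 = O(\widetilde\varepsilon^{-2}\log(4^n/\delta))$ independent such measurements and let $\widehat\alpha_x$ be their empirical mean; set $\widehat\alpha_0 := 1$. By the Hoeffding bound (Fact~\ref{fact:hoeffding_sampling}) and a union bound over the $4^n$ strings, with probability at least $1-\delta$ we have $|\widehat\alpha_x - \alpha_x| \le \widetilde\varepsilon$ for every $x$ simultaneously. Set $M := 2^{-n}\sum_{x\in\FF_2^{2n}}\widehat\alpha_x W_x$, a Hermitian operator with $\Tr M = 1$. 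Using orthonormality of the Weyl operators, $\Tr(W_xW_y) = 2^n\delta_{x,y}$,
\[ \norm{M-\rho}_2^2 \;=\; \frac{1}{2^n}\sum_{x\in\FF_2^{2n}}|\widehat\alpha_x - \alpha_x|^2 \;\le\; \frac{4^n}{2^n}\,\widetilde\varepsilon^2 \;=\; 2^n\widetilde\varepsilon^2 . \]
Let $\widehat\Psi$ be the Hilbert--Schmidt projection of $M$ onto the closed convex set of $n$-qubit density matrices (obtained from the eigendecomposition of $M$, clipping negative eigenvalues and renormalizing, in $O(2^{3n})$ time). Since $\rho$ lies in this set, the projection is non-expansive with $\rho$ as a fixed point, so $\norm{\widehat\Psi-\rho}_2 \le \norm{M-\rho}_2 \le 2^{n/2}\widetilde\varepsilon$.

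To pass to trace distance, note that for any $2^n\times 2^n$ Hermitian matrix $A$, Cauchy--Schwarz gives $\norm{A}_1 \le \sqrt{2^n}\,\norm{A}_2$, hence
\[ d_{\mathrm{tr}}(\widehat\Psi,\rho) \;=\; \tfrac12\norm{\widehat\Psi-\rho}_1 \;\le\; \tfrac12\sqrt{2^n}\,\norm{\widehat\Psi-\rho}_2 \;\le\; \tfrac12\, 2^n\widetilde\varepsilon . \]
Choosing $\widetilde\varepsilon := 2^{1-n}\varepsilon$ yields $d_{\mathrm{tr}}(\widehat\Psi,\rho)\le\varepsilon$. The total number of single-copy measurements is $(4^n-1)T_0 = O\big(4^n\cdot (4^n/\varepsilon^2)\,(n+\log(1/\delta))\big) = O(2^{4n} n\log(1/\delta)/\varepsilon^2)$, using $\log(4^n/\delta) = O(n+\log(1/\delta)) = O(n\log(1/\delta))$ for $\delta\le 1/2$. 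The running time is the measurement cost $O(n)\cdot 4^n T_0 = O(2^{4n} n^2\log(1/\delta)/\varepsilon^2)$ plus the $O(2^{3n})$ projection and the $O(2^{4n})$ cost of writing $M$ as an explicit matrix, all of which is $O(2^{4n} n^2\log(1/\delta)/\varepsilon^2)$. (A slightly cheaper variant measures in the $3^n$ product-Pauli bases, reusing each shot for up to $2^n$ Paulis, but this is not needed to meet the stated bound.)

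The main obstacle is the dimension loss $\norm{\cdot}_1 \le \sqrt{2^n}\,\norm{\cdot}_2$: it forces the per-Pauli accuracy down to $\Theta(\varepsilon/2^n)$, which inflates the shot count per Pauli to $\Theta(4^n/\varepsilon^2)$ and, together with the $\Theta(4^n)$ Paulis, is precisely what produces the $2^{4n}$ scaling. The two points requiring care are that the phase $i^{a\cdot b}$ keeps $W_x$ Hermitian (so the estimator is unbiased), and that projecting $M$ onto the density-matrix set cannot increase its distance to $\rho$ (true because that projection is non-expansive and fixes $\rho$).
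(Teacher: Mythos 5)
The paper does not prove this lemma at all; it imports it verbatim from Guta et al.~\cite{guctua2020fast}, so there is no in-paper argument to compare against. Your proposal is a correct, self-contained derivation of the stated bounds, and the arithmetic checks out: $\norm{M-\rho}_2^2 = 2^{-n}\sum_x|\widehat\alpha_x-\alpha_x|^2 \le 2^n\widetilde\varepsilon^2$, the rank bound $\norm{A}_1\le 2^{n/2}\norm{A}_2$ forces $\widetilde\varepsilon = \Theta(\varepsilon/2^n)$, and $4^n$ Paulis times $O(4^n\varepsilon^{-2}(n+\log(1/\delta)))$ shots each gives exactly the $O(2^{4n}n\log(1/\delta)/\varepsilon^2)$ scaling. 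It is worth noting that the cited work actually proves a stronger statement (projected least squares with structured POVMs and matrix concentration achieves roughly $O(2^{3n}/\varepsilon^2)$ copies for trace-norm error); the lemma as stated in the paper is deliberately loose, and your elementary Pauli-tomography argument is sufficient to certify it without any of that machinery. That is a reasonable trade: a weaker bound for a proof that uses nothing beyond Hoeffding, orthonormality of the Weyl basis, and Cauchy--Schwarz.

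One small correction: the parenthetical describing the Hilbert--Schmidt projection as ``clipping negative eigenvalues and renormalizing'' is not right --- that map is not the metric projection onto the density-matrix set and is not non-expansive in general. The actual HS projection diagonalizes $M$ and Euclidean-projects the eigenvalue vector onto the probability simplex (i.e., subtracts a uniform constant and clips at zero, with the constant chosen so the result has trace one). Your non-expansiveness argument is valid for that true projection, which is computable in the same $O(2^{3n})$ time, so the fix is purely to the parenthetical description, not to the logic.
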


\paragraph{Subroutines for subspaces of Paulis.}
We will also require the following subroutines to generate a Clifford circuit given an isotropic subspace of $\FF_2^{2n}$ or a pair of anti-commuting Paulis.
\begin{lemma}[{\cite[Lemma~3.2]{grewal2023efficient}}]\label{lem:clifford_isotropic_subspace}
Given a set of $m$ vectors whose span is a $d$-dimensional isotropic subspace $A \subset \FF_2^{2n}$, there exists an efficient algorithm that outputs a Clifford circuit $C$ such that $CAC^\dagger = 0^{2n-d} \otimes \FF_2^d$. The algorithm runs in $O(mn\cdot\min(m,n))$ time and the gate complexity of $C$ is $O(nd)$.
\end{lemma}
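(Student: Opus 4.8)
The plan is to use the standard correspondence between Clifford unitaries and symplectic linear maps of $\FF_2^{2n}$: conjugation by a Clifford $C$ sends a Pauli label $x$ to $Mx$ for some $M$ preserving the symplectic form (i.e.\ $[Mx,My]=[x,y]$ for all $x,y$), and conversely every such $M$ is implemented by a Clifford that can be synthesized gate-by-gate over $\{H,S,\textrm{CNOT}\}$. So it suffices to build, step by step, a Clifford that carries a basis of $A$ to $\langle Z_{n-d+1},\dots,Z_n\rangle$, which in the coordinate convention of the statement is exactly $0^{2n-d}\otimes\FF_2^d$; conceptually this amounts to extending the isotropic set $A$ to a symplectic basis (the symplectic Gram--Schmidt procedure, see e.g.~\cite{wilde2009logical}), done constructively and with an eye on the resource count.

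Concretely, first I would Gaussian-eliminate the $m$ input vectors over $\FF_2$ to obtain an explicit basis $v_1,\dots,v_d$ of $A$; this is the dominant cost, $O(mn\cdot\min(m,n))$ time. Because $A$ is isotropic, $[v_i,v_j]=0$ for all $i,j$, and this stays true under Clifford conjugation. Now process the generators one at a time, maintaining the invariant that after stage $i$ the circuit built so far maps $v_1,\dots,v_i$ to $Z_{n-d+1},\dots,Z_{n-d+i}$ and maps $v_{i+1},\dots,v_d$ to Paulis that commute with $Z_{n-d+1},\dots,Z_{n-d+i}$ (equivalently, with no $X$-component on qubits $n-d+1,\dots,n-d+i$). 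At stage $i$: let $P$ be the current image of $v_i$; by the invariant $P$ has no $X$-component on the already-fixed qubits, and since $v_i$ is independent of $v_1,\dots,v_{i-1}$, $P$ is not in $\langle Z_{n-d+1},\dots,Z_{n-d+i-1}\rangle$, so $P$ has nontrivial support on some not-yet-fixed qubit. Apply $O(n)$ gates: $H,S$ on not-yet-fixed qubits to turn any $X/Y$ support of $P$ into $Z$ support, then CNOTs to sweep the resulting $Z$-support onto qubit $n-d+i$ --- using the already-fixed qubits only as controls, never as targets, so that their $Z$-generators are untouched, and using controlled gates out of fixed qubits to cancel any residual $Z$ on them --- leaving $P$ equal to $Z_{n-d+i}$. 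One checks the invariant is restored for $v_{i+1},\dots,v_d$ because $[v_i,v_j]=0$ is preserved. Over $d$ stages this produces $C$ with $O(nd)$ gates in $O(nd)$ extra time (subsumed in the bound above), and by construction $CAC^\dagger=\langle Z_{n-d+1},\dots,Z_n\rangle=0^{2n-d}\otimes\FF_2^d$.

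The main obstacle is the bookkeeping in the inductive step: one must check that the $O(n)$ cleanup gates for the $i$-th generator neither reinstate $X$-support on the processed qubits nor alter the already-fixed $Z$-generators (handled by never targeting a fixed qubit with a CNOT and by peeling off residual $Z$'s with CNOTs controlled on those qubits), and that $v_{i+1},\dots,v_d$ retain their commutation property --- which follows since Cliffords preserve every symplectic product among the $v_j$ and $A$ is isotropic. This isotropy is precisely what guarantees each cleanup step is feasible; executing it with only $O(n)$ gates per generator, rather than synthesizing a full symplectic matrix, is what yields the $O(nd)$ gate bound.
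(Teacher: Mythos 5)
This lemma is cited verbatim from Grewal, Iyer, Kretschmer, and Liang~\cite{grewal2023efficient}; the paper gives no proof of its own, so there is nothing internal to compare against. Your argument is a correct, self-contained reproof that follows the standard route: Gaussian-eliminate the $m$ input vectors over $\FF_2$ to a basis of the isotropic space, then build the Clifford generator-by-generator, maintaining the invariant that the already-processed generators have become the target $Z$'s and all remaining images still commute with them (which is automatic because conjugation preserves the symplectic form and $A$ is isotropic). The resource bounds check out: Gaussian elimination on an $m\times 2n$ matrix of rank $d\le\min(m,n)$ costs $O(mnd)=O(mn\min(m,n))$, each of the $d$ synthesis stages uses $O(n)$ one- and two-qubit gates, and the $O(nd^2)$ cost of tracking the remaining images through those gates is dominated by the Gaussian-elimination term. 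One small point you could make more explicit: when sweeping the $Z$-support onto the target qubit, the CNOTs that clear residual $Z$ on already-fixed qubits should have those fixed qubits as \emph{controls} with an unfixed qubit as target (so CNOT$_{f\to j}$ sends $Z_j\mapsto Z_fZ_j$, cancelling $P$'s $Z_f$), which also keeps $Z_f$ itself fixed and introduces no new $X$-support on fixed qubits — you gesture at this but the directionality of the cancellation is worth one more sentence. Otherwise this matches the proof one would find in the cited reference.
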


\begin{lemma}[{\cite[Lemma~1]{bravyi2021clifford}}]\label{lem:clifford_anticomm_paulis}
There exists an algorithm that takes as input anti-commuting $n$-qubit Pauli operators $P$ and $P'$
and outputs an $n$-qubit Clifford circuit $U$ such that
$$
UPU^\dagger = X_1 \quad \text{and} \quad UP'U^\dagger = Z_1.
$$
The circuit $U$ has a CNOT cost of $\leq 3n/2 + O(1)$. The algorithm has runtime $O(n)$.
\end{lemma}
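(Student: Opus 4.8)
The plan is to work entirely with the symplectic encoding of Paulis and to track how Clifford conjugation acts on the length-$2n$ symplectic vectors $p,p'\in\FF_2^{2n}$ of $P,P'$; since I will conjugate honest Hermitian Paulis rather than just symplectic vectors, phases are tracked for free, so at the very end a single-qubit Pauli on qubit $1$ (costing $0$ CNOTs) can correct any residual sign so that we land exactly on $X_1$ and $Z_1$ rather than on $\pm X_1,\pm Z_1$. First I would \emph{normalise qubit by qubit} using single-qubit Cliffords only: on each qubit $i$, the pair of restrictions $(P_i,P_i')$ lies in one of the five orbits of the diagonal $S_3$-action of the one-qubit Clifford group on $(\FF_2^2)^2$, and I bring it to a canonical representative among $(I,I)$, $(X,I)$, $(I,Z)$, $(X,X)$, $(X,Z)$. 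The restrictions $P_i,P_i'$ anticommute (equivalently contribute $1$ to $[p,p']$) precisely for the canonical form $(X,Z)$ and commute otherwise, so the hypothesis $[p,p']=1$ forces an \emph{odd} number of qubits of type $(X,Z)$. I pick one such qubit as a pivot and, after at most one \textsf{SWAP} (i.e.\ $O(1)$ CNOTs), assume it is qubit $1$, so now $P=X_1\otimes(\cdots)$ and $P'=Z_1\otimes(\cdots)$.

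Next I would clear qubits $2,\dots,n$ against the pivot, processing the four remaining types in a fixed order so that they do not interfere. A type-$(X,I)$ qubit $a$ is cleared by one $\mathrm{CX}_{1\to a}$ (which sends $P$'s $X_1X_a\mapsto X_1$ and leaves $P'$ untouched on qubit $a$); a type-$(I,Z)$ qubit $b$ by one $\mathrm{CX}_{b\to 1}$ (symmetrically on $P'$, leaving $P$'s $X_1$ fixed); all type-$(X,X)$ qubits $c_1,\dots,c_m$, where $P$ and $P'$ carry the \emph{same} operator $X$, are handled by a single chain $\mathrm{CX}_{c_1\to c_2},\dots,\mathrm{CX}_{c_1\to c_m}$, which clears $c_2,\dots,c_m$ from \emph{both} $P$ and $P'$ at once, after which the leftover $c_1$ is cleared by $O(1)$ further one- and two-qubit gates; and the (even number of) type-$(X,Z)$ qubits are \emph{paired}, one CNOT between the two members of a pair turning a transverse pair into one type-$(X,I)$ qubit plus one type-$(I,Z)$ qubit, each of which is then cleared by one CNOT. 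A direct check of the CNOT conjugation rules ($X_c\mapsto X_cX_t$, $Z_t\mapsto Z_cZ_t$, others fixed) shows each step clears its target qubit without disturbing the pivot form $X_1,Z_1$ or re-introducing a component on an already-cleared qubit. Writing $a,b,c,d$ for the numbers of the four types among qubits $2,\dots,n$ (so $a+b+c+d=n-1$ with $d$ even), the CNOT count is $a+b+c+O(1)$ for the first three types and $\tfrac32 d$ for the transverse pairs, i.e.\ $n-1+\tfrac{d}{2}+O(1)\le\tfrac{3n}{2}+O(1)$; the runtime is $O(n)$ since every step is a constant-cost update to the two length-$2n$ vectors $p,p'$ (touching at most two coordinates), there are $O(n)$ such updates, and the emitted circuit has $O(n)$ gates.

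The main obstacle — and the reason this is more than a one-line argument — is shaving the CNOT count from the $2n$ produced by the naive two-phase strategy (first conjugate $P$ to $X_1$ by a CNOT chain, then separately conjugate the image of $P'$ to $Z_1$) down to $\tfrac32 n$. The saving rests on the two observations above: that qubits where $P$ and $P'$ agree can be merged by one CNOT chain because a single CNOT acts usefully on both operators simultaneously, and that although a lone transverse qubit cannot be cleared by one CNOT, a \emph{pair} of transverse qubits can be converted by one CNOT into a ``$P$-only'' and a ``$P'$-only'' qubit, each cheap to finish. The routine-but-delicate part is verifying that the chosen processing order never disturbs the pivot or a cleared qubit, and that the final sign is always correctable by a qubit-$1$ Pauli.
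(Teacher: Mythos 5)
The paper does not prove this lemma; it is imported verbatim from the cited reference \cite[Lemma~1]{bravyi2021clifford}, so there is no in-paper proof to compare against. Your argument is a correct, self-contained reconstruction of that result and follows the same strategy as the cited proof (single-qubit normalisation into the five orbit types, the parity argument forcing an odd number of transverse qubits, merging the qubits where $P$ and $P'$ agree via one CNOT chain, and pairing the remaining transverse qubits at $3$ CNOTs per pair), and the count $n-1+d/2+O(1)\le 3n/2+O(1)$, the sign fix by a qubit-$1$ Pauli, and the $O(n)$ runtime all check out.
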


\paragraph{Stabilizer bootstrapping.} In a recent work Chen et al.~\cite{chen2024stabilizer} considered the task of agnostic learning stabilizer states,  and proved the following following theorem.
\begin{theorem}
\label{thm:sitanbootstrapping}
    Let $\mathcal{C}$ be  the class of stabilizer states. Fix any $\varepsilon\leq \tau \in (0,1)$. 
There is an algorithm that, given access to copies of $\rho$ with $\max_{|\phi'\rangle \in \mathcal{C}} |\langle \phi' | \rho | \phi' \rangle| \geq \tau$ for , outputs a $|\phi\rangle \in \C$ such that $|\langle \phi | \rho | \phi \rangle| \geq \tau - \varepsilon$ with high probability.
The algorithm performs single-copy and two-copy measurements on at most $n\cdot \poly(1/\varepsilon,(1/\tau)^{\log 1/\tau})$ copies of $\rho$ and runs in time $\poly(n,1/\varepsilon, (1/\tau)^{\log 1/\tau})$.
\end{theorem}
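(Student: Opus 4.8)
The plan is to reconstruct, generator by generator, a full $n$-qubit signed stabilizer group whose stabilizer state $\ket\phi$ matches the witness $\ket s$ promised by $\langle s|\rho|s\rangle\geq\tau$, where the candidate generators come from Bell difference sampling and their signs are fixed by a branching search. Throughout, the algorithm keeps a commuting list of signed Paulis $(P_1,b_1),\dots,(P_j,b_j)$ together with the post-selected state $\rho_j=\Pi_j\rho\Pi_j/\Tr(\Pi_j\rho)$, where $\Pi_j$ projects onto the common $b_i$-eigenspace of the $P_i$. The key invariant is: on the branch whose choices agree with $\ket s$ (i.e.\ $P_i\ket s=b_i\ket s$ for all $i\le j$), one has $\langle s|\rho_j|s\rangle=\langle s|\rho|s\rangle/\Tr(\Pi_j\rho)\geq\langle s|\rho|s\rangle\geq\tau$, since $\ket s$ lies inside the range of $\Pi_j$; hence on the good branch the running state always has stabilizer fidelity at least $\tau$, and, crucially, this quantity is nondecreasing in $j$.

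Given $\rho_j$, a new generator is produced by Bell difference sampling: by (the mixed-state version of) Theorem~\ref{fact:lower_bound_stabilizer_fidelity}, $\Exp_{x\sim q_{\rho_j}}[\,|\langle W_x\rangle_{\rho_j}|^2\,]\geq\calF_\calS(\rho_j)^6\geq\tau^6$, and the standard analysis of Bell difference sampling on near-stabilizer states shows the sampled unsigned Pauli lands in the stabilizer group of the closest stabilizer state with probability $\poly(\tau)$. Drawing $\poly(1/\tau)$ samples and taking their span recovers, with high probability, the still-undetermined part of $\ket s$'s stabilizer group, yielding a fresh commuting generator $P_{j+1}$; after $\leq n$ rounds the group has rank $n$ and pins down a candidate state. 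To fix $b_{j+1}$ we measure $P_{j+1}$ on $\rho_j$ and estimate $\Pr[+1]$ to additive error $\poly(\varepsilon)$ using $\poly(1/\varepsilon)$ copies: if one outcome is near-deterministic we commit to it (this is safe because on the good branch the correct outcome has probability $\geq\tau$, so once the threshold is set below $\tau$ a near-deterministic outcome \emph{is} the correct one); otherwise we \emph{branch}, recursing on $b_{j+1}\in\{\pm1\}$.

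The crux is controlling the size of the search tree. On the good branch, at every genuine branching step $\langle s|\rho_j|s\rangle$ is multiplied by $1/\Pr[\text{good outcome}]>1$, which already caps the number of branchings along that path; with a careful choice of the ambiguity threshold and a slightly more refined potential than the fidelity with a single witness, one shows the tree has at most $(1/\tau)^{O(\log 1/\tau)}$ leaves (combining with the $\poly(1/\tau)$ candidate Paulis tried per round and the $\leq n$ rounds). For each leaf one obtains a stabilizer state $\ket\phi$; estimate all overlaps $\langle\phi|\rho|\phi\rangle$ to error $\varepsilon/3$ via classical shadows (Lemma~\ref{lem:classicalshadow}) and output the best. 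The good leaf returns $\ket\phi=\ket s$ up to $\poly(\varepsilon)$ error accumulated over the $n$ rounds (absorbed by rescaling $\varepsilon$), so $\langle\phi|\rho|\phi\rangle\geq\tau-\varepsilon$ with high probability; summing the per-round sample and time costs over the tree yields the claimed $n\cdot\poly(1/\varepsilon,(1/\tau)^{\log 1/\tau})$ copies and $\poly(n,1/\varepsilon,(1/\tau)^{\log 1/\tau})$ time.

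I expect the branching analysis to be the main obstacle. Two requirements must be made to coexist: the ``approximate stabilizer group'' statement must be quantitative enough that a $\poly(1/\tau)$-size batch of Bell samples spans the relevant subgroup with good probability, and the near-deterministic/ambiguous threshold must be tuned below $\tau$ (so rounding never kills the good branch) yet coarse enough that only $O(\log 1/\tau)$ genuine branchings occur along it. It is precisely this tension — together with keeping the total accumulated rounding error $O(\varepsilon)$ across all $n$ rounds — that produces the quasipolynomial dependence and forms the technical heart of the proof.
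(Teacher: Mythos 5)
This theorem is not proved in the paper: it is Theorem~\ref{thm:sitanbootstrapping}, a black-box citation to the stabilizer bootstrapping algorithm of Chen, Gong, Ye, and Zhang~\cite{chen2024stabilizer}, so there is no ``paper's own proof'' to compare your attempt against. Evaluating your sketch on its own merits, it captures the broad spirit of stabilizer bootstrapping (Bell-difference sampling to propose Paulis, post-selected projection onto Pauli eigenspaces, a branching search over ambiguous sign choices, bounding the search tree by a monotone potential, final classical-shadow tournament over the leaves), and your observation that $\langle s|\rho_j|s\rangle$ is nondecreasing along the good branch is the right invariant. However, there is a concrete gap in the branching analysis that you yourself flag, and it is not merely ``a detail.''

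The tension is the following. To avoid discarding the good branch, you need the ambiguity threshold to sit below the probability $\Pr[\text{good outcome}]$, which you only lower-bound by $\tau$. But if the threshold is $\Theta(\tau)$, then a genuine branching multiplies $\langle s|\rho_j|s\rangle$ by only $1/(1-\Theta(\tau)) = 1 + \Theta(\tau)$, so the fidelity cap of $1$ gives $O\bigl((1/\tau)\log(1/\tau)\bigr)$ branchings, not $O(\log(1/\tau))$; the resulting tree has $2^{\Omega(1/\tau)}$ leaves, which is exponential rather than quasipolynomial in $1/\tau$. Conversely, if the threshold is a constant (giving $O(\log(1/\tau))$ branchings), then on the good branch the correct outcome may have probability only $\tau$, and the deterministic-commit rule would lock in the \emph{wrong} sign. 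Your parenthetical ``a slightly more refined potential than the fidelity with a single witness'' is exactly where the real work lies, and as written the sketch does not escape this trade-off. You also assert an intermediate ``approximate stabilizer group'' concentration statement for Bell-difference sampling on the post-selected $\rho_j$ without a quantitative version that would let you conclude a $\poly(1/\tau)$-size batch spans a rank-$n$ subgroup with the right signs; controlling this across all $n$ rounds of generator extraction (so the per-round candidate pool and the per-round rounding error both compose correctly) is a second, independent gap. In short: the outline is in the right family of arguments, but the two quantitative claims that make it quasipolynomial rather than exponential are the ones left unproved.
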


\newpage

\part{Self-Correction of Stabilizer States}
The main contribution of this work is a $\Selfcorrection$ protocol for stabilizer states, which  algorithmizes the main result in~\cite{ad2024tolerant}. In this part of the paper, we prove two versions of our self-correction results, when the output state is proper (i.e., it is a stabilizer state) and when the output state is improper (i.e., it is an arbitrary state). 
In order to describe our $\Selfcorrection$ algorithm, we separate our presentation into three sections. In Section~\ref{sec:subroutines}, we first present the algorithmic components that are used to algorithmize the inverse Gowers-$3$ theorem of quantum states from~\cite{ad2024tolerant}. Describing these subroutines by themselves does not immediately give a $\Selfcorrection$ protocol and we describe how these subroutines are combined together to give the final protocol in Section~\ref{sec:selfcorrectionprotocol}. Finally in Section~\ref{sec:improperselfcorrection} we present an improper algorithm for $\Selfcorrection$. 

\section{Algorithmic components for Self correction}
\label{sec:subroutines}
Throughout this section we will work  with $n$-qubit quantum states $\ket{\psi}$ with   $\Exp_{x \sim q_\Psi}[|\la \psi | W_x | \psi \ra|^2] \geq \gamma$ (which is a good \emph{proxy} for high Gowers-$3$ norm, see Fact~\ref{fact:relation_expectation_paulis_qPsi_and_pPsi}).  Denote $2^np_\Psi(x)=|\la \psi | W_x | \psi \ra|^2$. 
For a state~$\ket{\psi}$ and $\gamma>0$, define the set $S_\gamma\subseteq \FF_2^{2n}$ as 
\begin{align}
\label{eq:defnofsgamma}
S_\gamma = \{x \in \FF_2^{2n} :  2^np_\Psi(x) \geq \gamma\}.
\end{align}
First recall the main result of~\cite{ad2024tolerant}.
\gowersstates*
Below, all our self correction results are stated in terms of output distributions of Bell difference sampling and finally when we prove our theorem statement,  we will relate this to Gowers norm and prove our main Theorem~\ref{thm:self_correction}.
We now describe the proof of the theorem above, and how attempting to algorithmize the involved sequence of arguments motivate the algorithmic subroutines that we discuss below.  
\begin{enumerate}[$(i)$]
    \item \emph{Approximate subgroup}: In~\cite{ad2024tolerant} it was first observed that if $\Exp_{x\sim q_\Psi}[|\langle \psi|W_x|\psi\rangle|^2] \geq \gamma$, then there exists a large approximate subgroup $S \subseteq \mathbb{F}_2^{2n}$ such that $S \subseteq S_{\gamma/4}$ with size $|S|\in [\gamma/2 , 2/\gamma ]\cdot 2^n$ satisfying 
    $$
    \Pr_{x,y \in S}[x+y \in S]\geq \poly(\gamma).
    $$
    Since $S$ is exponentially sized, we will not be able to hold this in memory let alone hope to construct it. We will instead show how Bell difference sampling of $\ket{\psi}$ can be utilized to sample elements from $S$ with high probability. This is described in Section~\ref{sec:bellsampling}.
    \item \emph{Small doubling set}: Applying the Balog-Szemeredi-Gowers ($\BSG$) theorem~\cite{balog1994statistical,gowers2001new} to $S$ shows the existence of a large subset $S' \subseteq S$ that has a small doubling constant. In particular, \cite{ad2024tolerant} showed that 
    $$
    |S' + S'| \leq \poly(1/\gamma) |S|, \enspace |S'| \geq \poly(\gamma)\cdot  |S|.
    $$
    Again since $S'$ is exponentially large, we will not hold this nor construct it. We will describe a membership test for $S'$ and combined with the sampling subroutine from $S$ in $(i)$ will allow us to obtain samples from $S'$. The membership test is described in Section~\ref{subsec:BSG_test}.
    \item \emph{Subgroup with large mass}: Applying the polynomial Freiman-Ruzsa theorem~\cite{gowers2023conjecture} to~$S'$, one can observe that $S'$ is covered by few translates of a subgroup $V \subseteq S'$ and with some analysis, \cite{ad2024tolerant} showed that 
    $$
    \Exp_{x \in V}\left[2^n p_\Psi(x) \right] \geq \poly(\gamma).
    $$
    Assuming sample and membership access to $S'$ as described in $(iii)$, we will show that a basis of $V$ can be determined assuming the algorithmic $\PFR$ conjecture (Conjecture~\ref{conj:algopfrconjecture}).  We describe this directly as part of the $\Selfcorrection$ protocol.
    \item \emph{Efficient stabilizer covering}: The subgroup $V$ is shown to have a stabilizer covering of $\poly(1/\gamma)$, which allowed \cite{ad2024tolerant} to conclude that there is a stabilizer state $\ket{\phi}$ such that $|\la \psi | \phi \ra|^2 \geq \poly(\gamma)$.  To determine this stabilizer state, we need to determine a stabilizer covering of $V$. To this end, we use the Symplectic Gram Schmidt procedure that helps obtain a stabilizer covering of $V$ and describe this procedure in Section~\ref{sec:symplectic_gram_schmidt}.
\end{enumerate}
The entire goal of the subroutines will be to avoid using exponential memory (and time) in \emph{storing} the sets described above and instead have indirect access to these sets via oracles. 

\subsection{Bell difference sampling}
\label{sec:bellsampling}
The starting point of proving Theorem~\ref{thm:inversegowersstates}  was,  if $\Exp_{x \sim q_\Psi}[|\la \psi | W_x | \psi \ra|^2]  \geq \gamma$  then there exists a set $S \subseteq S_\gamma$ of size at least $\poly(\gamma) \cdot 2^n$ that is an \emph{approximate} group, i.e.,
$$
\Pr_{x,y\sim S}[x+y\in S]\geq \poly(\gamma),
$$
where the probability is uniformly random in $S$. 
More formally,~\cite{ad2024tolerant} showed the following. 

\begin{theorem}[{\cite[Theorem~4.6]{ad2024tolerant}}]
\label{thm:large_set_gamma_qPsi_expectations}
Let $\gamma >0$. If $\Exp_{x \sim q_\Psi}[|\la \psi | W_x | \psi \ra|^2] \geq \gamma$, then there exists $S \subseteq \mathbb{F}_2^{2n}$  satisfying $(i)$ $|S|\in [\gamma^2/80 , 4/\gamma ]\cdot 2^n$, $(ii)$ $S \subseteq S_{\gamma/4}$, $(iii)$ $\Pr_{x, y \in S}[ x+y \in S]\geq \Omega(\gamma^5)$, and $(iv)$ additionally $0^{2n}$ lies in $S$.
\end{theorem}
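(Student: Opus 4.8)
The plan is to show that the raw super-level set $S:=S_{\gamma/4}=\{x\in\FF_2^{2n}: 2^np_\Psi(x)\ge\gamma/4\}$ already satisfies all four properties, via a short truncation argument on the tripling functional of $p_\Psi$. Throughout, set $f(x):=2^np_\Psi(x)=|\langle\psi|W_x|\psi\rangle|^2$, so that $0\le f(x)\le 1$ and $\sum_{x}f(x)=2^n$. The first step is to rewrite the hypothesis in terms of $f$ alone: unfolding the definition $q_\Psi(x)=\sum_y p_\Psi(y)p_\Psi(x+y)$ gives
$$
\Exp_{x\sim q_\Psi}\big[|\langle\psi|W_x|\psi\rangle|^2\big]=\sum_x q_\Psi(x)f(x)=\frac{1}{2^{2n}}\sum_{x,y\in\FF_2^{2n}}f(x)f(y)f(x+y),
$$
so that the assumption $\Exp_{x\sim q_\Psi}[|\langle\psi|W_x|\psi\rangle|^2]\ge\gamma$ is equivalent to $\sum_{x,y}f(x)f(y)f(x+y)\ge\gamma\cdot 2^{2n}$.

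Next I would truncate this sum to triples lying inside $S$. Consider the pairs $(x,y)$ for which at least one of $x$, $y$, $x+y$ fails to lie in $S$. If $x\notin S$ then $f(x)<\gamma/4$, and since $\sum_{x}f(x+y)=2^n$ for every fixed $y$, we get $\sum_{x\notin S,\,y}f(x)f(y)f(x+y)<(\gamma/4)\sum_{x,y}f(y)f(x+y)=(\gamma/4)\,2^{2n}$; the cases $y\notin S$ and $x+y\notin S$ are bounded identically (reindexing by $w=x+y$ in the last case). A union bound over the three cases yields
$$
\sum_{x,y\,:\,x,y,x+y\in S}f(x)f(y)f(x+y)\ >\ \gamma\,2^{2n}-3(\gamma/4)\,2^{2n}\ =\ (\gamma/4)\,2^{2n},
$$
and since every summand is at most $1$, this gives $N:=|\{(x,y):x,y,x+y\in S\}|>(\gamma/4)\,2^{2n}$.

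The four conclusions then drop out. Property $(ii)$ is immediate and $(iv)$ holds because $f(0^{2n})=|\langle\psi|\psi\rangle|^2=1\ge\gamma/4$. For the size bounds, $2^n=\sum_x f(x)\ge(\gamma/4)|S|$ gives $|S|\le(4/\gamma)\,2^n$, while $|S|^2\ge N>(\gamma/4)\,2^{2n}$ gives $|S|>(\sqrt{\gamma}/2)\,2^n\ge(\gamma^2/80)\,2^n$ for $\gamma\in(0,1]$. Finally,
$$
\Pr_{x,y\in S}[x+y\in S]=\frac{N}{|S|^2}>\frac{(\gamma/4)\,2^{2n}}{(4/\gamma)^2\,2^{2n}}=\frac{\gamma^3}{64}=\Omega(\gamma^5).
$$

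The main thing to get right is the bookkeeping in the truncation step: one must check that each of the three ``bad'' contributions is genuinely at most $(\gamma/4)\,2^{2n}$ — which uses only the marginal identities $\sum_x f(x)=\sum_x f(x+y)=2^n$ — and pick the threshold (here $\gamma/4$) so that after subtracting the three error terms a positive $\Omega(\gamma)$ fraction of the tripling mass survives. I note that the argument uses no quantum structure beyond $0\le|\langle\psi|W_x|\psi\rangle|^2\le 1$ and $\sum_xp_\Psi(x)=1$; the purity-type bound $\sum_x p_\Psi(x)^2\le 2^{-n}$ (equivalently $q_\Psi(x)\le 2^{-n}$, Fact~\ref{lem:ub_qPsi}) is not needed, although it yields the alternative identity $\sum_x q_\Psi(x)f(x)=2^{-n}\sum_x f(x)^3$ from which one could equally start. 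If one additionally wanted the elements of $S$ to carry comparable $p_\Psi$-weight (convenient for the Bell-sampling access developed next), one could first dyadically partition the level sets of $p_\Psi$ and pigeonhole before running the same truncation, but the plain set $S_{\gamma/4}$ already meets the stated requirements.
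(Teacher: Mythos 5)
Your proof is correct, and it takes a genuinely different and simpler route from the one the cited paper~\cite{ad2024tolerant} uses. The cited proof is probabilistic: it defines a \emph{random} subset $X\subseteq S_{\gamma/4}$ by including each $x$ independently with probability $2^np_\Psi(x)$, shows $\Pr_X[|X|\geq \gamma^2 2^n/80]$ is large, shows $\Exp_X[\Pr_{x,y\in X}[x+y\in X]]\geq\poly(\gamma)$, and then concludes via the probabilistic method that a good $X$ exists. (This is visible in the footnotes to the proof of Lemma~\ref{lem:sample_bds}, and the $80$ in $\gamma^2/80$ betrays a concentration calculation.) You instead take the \emph{deterministic} set $S=S_{\gamma/4}$ and run a clean truncation of the tripling functional $2^{-2n}\sum_{x,y}f(x)f(y)f(x+y)$, using only $0\le f\le 1$ and $\sum_x f(x)=2^n$, to show directly that $S$ itself already has $\Omega(\gamma^3)\cdot 2^{2n}$ additive triples. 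This is more elementary (no second-moment or concentration machinery), and it gives a quantitatively better conclusion: you get density $\Pr_{x,y\in S}[x+y\in S]>\gamma^3/64=\Omega(\gamma^3)$ and size $|S|>\frac{\sqrt\gamma}{2}\cdot 2^n$, both strictly stronger than the stated $\Omega(\gamma^5)$ and $\gamma^2/80\cdot 2^n$. Your bookkeeping in the truncation step is correct: each of the three ``bad'' sums is bounded by $(\gamma/4)2^{2n}$ using only the marginal identity $\sum_x f(x+y)=2^n$ for fixed $y$, and the union bound is an over-count in the right direction.

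The one thing worth being aware of (you hint at it yourself): the randomized construction in~\cite{ad2024tolerant} is not an accident. The rest of this paper needs not just the \emph{existence} of $S$ but the ability to \emph{sample} from it, and the random choice set $X$ (include $x$ with probability $2^np_\Psi(x)$) is exactly what Algorithm~\ref{algo:sample} produces by Bell-difference-sampling and then filtering by measuring $W_x^{\otimes 2}$. Your deterministic $S_{\gamma/4}$ is easy to test membership in approximately but not easy to sample uniformly from, so the probabilistic construction is doing real algorithmic work downstream even though, as you correctly observe, it is dispensable for the existence statement itself. Your dyadic-pigeonholing remark at the end correctly identifies the fix if one wanted comparable $p_\Psi$-weight on the elements. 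As a proof of Theorem~\ref{thm:large_set_gamma_qPsi_expectations} as stated, yours is complete and cleaner.
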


As part of algorithmizing Theorem~\ref{thm:inversegowersstates}, we first devise a protocol, that given copies of $\ket{\psi}$,  efficiently samples elements from the approximate subgroup $S$ in Theorem~\ref{thm:large_set_gamma_qPsi_expectations}. To this end, we have the following claim regarding sampling from the set $S_\gamma$ in Eq.~\eqref{eq:defnofsgamma}.

\begin{claim}
\label{claim:high_exp_value_bds}
If $\ket{\psi}$ with $\Exp_{x \sim q_\Psi}[|\la \psi | W_x | \psi \ra|^2] \geq \gamma$, then
\begin{enumerate}
    \item Bell sampling on $\ket{\psi} \otimes \ket{\psi^\star}$ outputs an $x\in S_{\gamma/4}$ with prob.~$\geq~{3\gamma}/{4}$.
    \item Bell difference sampling on $\ket{\psi}^{\otimes 4}$ outputs an $x\in S_{\gamma/4}$ with prob.~$\geq~{3\gamma}/{4}$.
\end{enumerate}
\end{claim}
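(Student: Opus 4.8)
The two claims are really one statement applied to different states: in part 1 the relevant single-copy-per-register distribution is $q_\Psi$ (Bell sampling on $\ket{\psi}\otimes\ket{\psi^\star}$ samples $x$ with probability $q_\Psi(x)$, by the standard Schur-sampling / Gross--Nezami--Walter calculation), and in part 2 the distribution is again $q_\Psi$ because Bell \emph{difference} sampling on $\ket{\psi}^{\otimes 4}$ produces $x=x_1+x_2$ where $x_1,x_2$ are two i.i.d.\ draws from $p_\Psi$, i.e.\ $x\sim q_\Psi$ by the convolution definition \eqref{eq:weyl_distribution}. So in both cases I would reduce to the following single inequality: if $\Exp_{x\sim q_\Psi}[2^n p_\Psi(x)]\ge\gamma$, then $\Pr_{x\sim q_\Psi}[2^n p_\Psi(x)\ge\gamma/4]\ge 3\gamma/4$. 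I would first state these two reductions (they are ``well known'' — cite \cite{gross2021schur} for Bell sampling and the definition \eqref{eq:weyl_distribution} for the difference-sampling convolution — and spell out the one-line calculation), so that the claim becomes purely about the random variable $Y:=2^np_\Psi(x)$ with $x\sim q_\Psi$.

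\textbf{Main step.} Now apply Fact~\ref{fact:lowerboundexpectation} (the reverse Markov / Paley--Zygmund-type bound) to $Y$. By Fact~\ref{lem:ub_qPsi} we have $p_\Psi(x)\le 2^{-n}$, hence $Y=2^np_\Psi(x)\le 1$, so $Y$ is a $[0,1]$-valued random variable. The hypothesis gives $\Exp[Y]\ge\gamma$. Taking $\delta=\gamma/4$ in Fact~\ref{fact:lowerboundexpectation} yields
\[
\Pr\big[Y\ge \gamma/4\big]\ \ge\ \frac{\gamma-\gamma/4}{1-\gamma/4}\ =\ \frac{3\gamma/4}{1-\gamma/4}\ \ge\ \frac{3\gamma}{4},
\]
since $1-\gamma/4\le 1$. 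The event $\{Y\ge\gamma/4\}$ is exactly $\{2^np_\Psi(x)\ge\gamma/4\}=\{x\in S_{\gamma/4}\}$ by the definition \eqref{eq:defnofsgamma} of $S_\gamma$, so this is precisely the assertion for whichever sampling procedure we are using.

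\textbf{What is and isn't hard.} There is essentially no obstacle here — the content is entirely in (a) correctly identifying that both Bell sampling and Bell difference sampling realize the distribution $q_\Psi$ on the output string, and (b) invoking the boundedness $2^np_\Psi\le 1$ together with the reverse-Markov fact. The only mild subtlety is being careful that Bell sampling on $\ket{\psi}\otimes\ket{\psi^\star}$ gives $q_\Psi$ and not $p_\Psi$; one way I would present this cleanly is to note that Bell sampling on two copies of a (possibly conjugated) state returns the symplectic/XOR of two independent $p_\Psi$-samples in the appropriate sense, which is $q_\Psi$ by \eqref{eq:weyl_distribution} — this is the observation already used implicitly around Lemma~\ref{lem:est_weyl_exp} and Fact~\ref{fact:relation_expectation_paulis_qPsi_and_pPsi}. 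Once that bookkeeping is in place, the probability bound is a two-line application of Fact~\ref{fact:lowerboundexpectation}, and the proof is complete.
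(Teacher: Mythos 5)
Your identification of the output distribution for part 1 is wrong, and it is not a harmless slip: the paper itself explicitly states in Section~\ref{sec:prelims} that ``one can \emph{sample} from the \emph{characteristic distribution} $p_\Psi$ by carrying out Bell sampling on $\ket{\psi}\otimes\ket{\psi^\star}$.'' Bell sampling on a single pair $\ket{\psi}\otimes\ket{\psi^\star}$ yields one draw from $p_\Psi$, not $q_\Psi$. It is Bell \emph{difference} sampling — i.e.\ performing Bell sampling twice and XORing the outcomes, which for $\ket{\psi}^{\otimes 4}$ yields $x_1+x_2$ with $x_1,x_2\sim p_\Psi$ i.i.d.\ — that realizes the convolution $q_\Psi=p_\Psi\star p_\Psi$. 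You actually flag this as the ``only mild subtlety'' and then resolve it in exactly the wrong direction. As written, your reduction therefore does not prove part 1: you establish $\Pr_{x\sim q_\Psi}[2^np_\Psi(x)\ge\gamma/4]\ge 3\gamma/4$, but part 1 requires the same tail bound under $p_\Psi$.

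The fix is to invoke Fact~\ref{fact:relation_expectation_paulis_qPsi_and_pPsi}, which gives
\[
\Exp_{x\sim p_\Psi}\bigl[2^np_\Psi(x)\bigr]\;\ge\;\Exp_{x\sim q_\Psi}\bigl[2^np_\Psi(x)\bigr]\;\ge\;\gamma,
\]
so the hypothesis also lower-bounds the $p_\Psi$-expectation, and then apply Fact~\ref{fact:lowerboundexpectation} to the random variable $Y=2^np_\Psi(x)$ with $x\sim p_\Psi$. This is exactly what the paper does. Your treatment of part 2 (Bell difference sampling realizes $q_\Psi$, then apply Fact~\ref{fact:lowerboundexpectation} with $\delta=\gamma/4$) is correct and matches the paper. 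So the core estimate is right, but part 1 as you have argued it rests on a false premise and needs the extra bridging step through Fact~\ref{fact:relation_expectation_paulis_qPsi_and_pPsi}.
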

\begin{proof}
To see the first bullet, Fact~\ref{fact:relation_expectation_paulis_qPsi_and_pPsi}  implies that 
$$
\Exp_{x \sim p_\Psi}\left[ 2^np_\Psi(x) \right]= \Exp_{x \sim p_\Psi}\left[ \langle \psi|W_x|\psi\rangle|^2\right] \geq \Exp_{x \sim q_\Psi}\left[|\la \psi | W_x | \psi \ra|^2 \right]\geq \gamma.
$$ 
Applying Fact~\ref{fact:lowerboundexpectation} for $\delta:=\gamma/4$ and $\varepsilon:=\gamma$ yields $\Pr_{x \sim p_\Psi}\left[2^np_\Psi(x) \geq \gamma/4 \right] \geq 3\gamma/4$. Observe that Bell sampling on $\ket{\psi}\otimes \ket{\psi^\star}$ allows us to sample $x$ according to the characteristic distribution $p_\Psi$. So the first part of the claim follows. To see the second bullet, again apply Fact~\ref{fact:lowerboundexpectation} for $\delta:=\gamma/4$ and $\varepsilon:=\gamma$ and this yields $\Pr_{x \sim q_\Psi}\left[ 2^np_\Psi(x) \geq \gamma/4 \right] \geq 3\gamma/4$. Observe that Bell difference sampling on four copies $\ket{\psi}$ allows us to sample $x$ according to the Weyl distribution~$q_\Psi$. So the second part of the claim~follows.
\end{proof}
We remark that in practice, copies of the unknown state's conjugate $\ket{\psi^\star}$ may not be available, hence one might not have access to the Bell sampling distribution, and only access to the Bell difference sampling. The claim above shows that one can carry out Bell difference sampling on copies of $\ket{\psi}$ to still produce Paulis $x \in \FF_2^{2n}$ with high expectation~values. We now state our main lemma which allows to sample from the set $S$ in Theorem~\ref{thm:large_set_gamma_qPsi_expectations}.

\begin{lemma}\label{lem:sample_bds}
Let $\upsilon>0$. Suppose $\ket{\psi}$ with $\Exp_{x \sim q_\Psi}[|\la \psi | W_x | \psi \ra|^2] \geq \gamma$.  There is a distribution over subsets of $S\subseteq S_\gamma$ such that: $|S|\geq 2^n\cdot \gamma^2/80$ and every element $x\in S$ is  (independently) sampleable using $\widetilde{O}(1/\gamma^2\cdot \log(1/\upsilon))$ copies of $\ket{\psi}$ and $\widetilde{O}(n/\gamma^2\cdot \log(1/\upsilon))$  time 
and furthermore with probability at least $1-\upsilon$, we have that
\begin{align}
\label{eq:noisycorollaryofAD24}
\Pr_{S}\left[[|S|/2^n \geq \gamma^2/80] \cap \Big[\Pr_{x,y\sim S}[x+y\in S]\geq \gamma^5/20\Big]\right]\geq \gamma^7/80.
\end{align}
\end{lemma}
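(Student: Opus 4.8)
The plan is to algorithmize the proof of Theorem~\ref{thm:large_set_gamma_qPsi_expectations}. Its conclusion is existential and phrased in terms of the \emph{uniform} distribution on an approximate subgroup living inside a level set $S_{\Theta(\gamma)}=\{x:2^np_\Psi(x)\ge\Theta(\gamma)\}$ (as is already implicit in Claim~\ref{claim:high_exp_value_bds}; the threshold in the lemma's ``$S_\gamma$'' is to be read up to the absolute constant appearing there, since the literal $S_\gamma$ can be tiny). Two things separate that statement from an efficient sampler: the level sets are cut out by the unknown quantities $2^np_\Psi(x)=|\la\psi|W_x|\psi\ra|^2$, and Bell difference sampling produces $q_\Psi$-weighted rather than uniform samples. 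I would handle the first with a cheap randomized membership test and the second by re-running the additive-combinatorics argument of~\cite{ad2024tolerant} against the Weyl distribution; the latter is where the polynomial loss in~\eqref{eq:noisycorollaryofAD24} originates.

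First I would build the membership test: for fixed $x$, measuring $\widetilde{O}(\gamma^{-2}\log(1/\upsilon'))$ copies of $\ket\psi$ in the eigenbasis of the Pauli $W_x$ and averaging estimates $2^np_\Psi(x)=|\la\psi|W_x|\psi\ra|^2$ to additive error $\gamma/16$ except with probability $\upsilon'$ (Fact~\ref{fact:hoeffding_sampling}); thresholding gives a test that accepts all of $S_{\gamma/4}$ and rejects everything outside $S_{\gamma/8}$ except with probability $\upsilon'$. Combined with Bell difference sampling this yields a sampler for any set $S$ cut out by a constant number of such level-set conditions: Bell-difference-sample $x\sim q_\Psi$, run the tests, output $x$ if all accept and otherwise repeat. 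By Claim~\ref{claim:high_exp_value_bds} each round lands in $S_{\gamma/4}$ (hence is accepted for the basic choice $S=S_{\gamma/4}$) with probability $\ge 3\gamma/4-\upsilon'$, so $O(\gamma^{-1}\log(1/\upsilon))$ rounds return a sample except with probability $\upsilon$; the dominant cost per returned element is a single test, giving the claimed $\widetilde{O}(\gamma^{-2}\log(1/\upsilon))$ copies and $\widetilde{O}(n\gamma^{-2}\log(1/\upsilon))$ time, and conditioned on the test randomness the sampler draws i.i.d.\ from $\nu:=q_\Psi$ restricted to a random set $S$ sandwiched between the nominal level sets. For $S=S_{\gamma/4}$ one gets $|S|\ge q_\Psi(S_{\gamma/4})\cdot 2^n\ge(3\gamma/4)2^n$ using $q_\Psi(x)\le 2^{-n}$ (Fact~\ref{lem:ub_qPsi}), so the size event in~\eqref{eq:noisycorollaryofAD24} is not the difficulty.

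The hard part is the closure event $\Pr_{x,y\sim S}[x+y\in S]\ge\gamma^5/20$ together with the fact that one gets it only for a $\ge\gamma^7/80$ fraction of the ``distribution over subsets $S$'': here Theorem~\ref{thm:large_set_gamma_qPsi_expectations} cannot be invoked as a black box, because it controls the uniform measure on an idealized approximate subgroup whereas our $x,y$ are $q_\Psi$-weighted. I would revisit its proof. The hypothesis unwinds, via the definition of $q_\Psi$ in Eq.~\eqref{eq:weyl_distribution}, to $\sum_{x,y}p_\Psi(x)p_\Psi(y)p_\Psi(x+y)\ge\gamma 2^{-n}$, from which the approximate subgroup is produced by a popularity/pivoting step; the natural algorithmic version draws a pivot $g$ by Bell difference sampling and takes $S$ to be the set defined by the testable conditions ``$x\in S_{\gamma/4}$'' and ``$x+g\in S_{\gamma/4}$'', and one shows that for at least a $\gamma^7/80$ fraction of pivots this $S$ is simultaneously of size $\ge(\gamma^2/80)2^n$ and closed (up to the $\gamma^5/20$ slack) under its own sampling distribution $\nu$. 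This simultaneously explains the ``distribution over subsets $S$'', the probability $\gamma^7/80$, and the extra $\gamma^2$--$\gamma^3$ loss relative to the clean $\Omega(\gamma^5)$ of the uniform case. The quantitative core is re-deriving the Cauchy--Schwarz / additive-energy estimates of~\cite{ad2024tolerant,balog1994statistical,sudakov2005question} when every point of $S$ carries a $q_\Psi$-weight in $[0,2^{-n}]$ instead of the uniform weight $1/|S|$; the bound $2^np_\Psi(x)\ge\gamma/4$ available throughout $S$ controls the one-sided non-uniformity, and a Hoeffding/union bound over the $\poly(1/\gamma)$ tests invoked converts the in-expectation conclusion into the ``with probability $\ge1-\upsilon$'' concentration. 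I expect the weight bookkeeping in this reanalysis, and checking that the pivoting argument of~\cite{ad2024tolerant} genuinely survives replacing the uniform measure by $q_\Psi$, to be the delicate parts; the test, the sampling loop, and the complexity accounting are routine.
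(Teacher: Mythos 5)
Your proposal misses the central mechanism the paper uses: the random set $S$ is not a deterministic level set cut out by a threshold test, and the randomness over subsets does not come from pivots. The paper's proof (inherited from \cite{ad2024tolerant}) defines $S$ as a \emph{random choice set} $X\subseteq S_{\gamma/4}$ obtained by independently including each $x\in S_{\gamma/4}$ with probability $2^np_\Psi(x)$, and cites Eq.~\eqref{eq:impliedbyAD24}, which states that this randomized construction yields a dense approximate group with probability $\ge\gamma^7/80$. The $\sample$ algorithm then simulates drawing from this random choice set with a \emph{one-shot} quantum experiment per candidate: Bell-difference-sample $x\sim q_\Psi$, measure $W_x^{\otimes 2}$ on $\ket{\psi}^{\otimes 2}$ once, and retain $x$ only on a $+1$ outcome. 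A single measurement (rather than an $\widetilde{O}(\gamma^{-2})$-sample estimate of $2^np_\Psi(x)$) both serves as the membership decision and realizes the ``include with probability $2^np_\Psi(x)$'' coin flip. Each round thus costs $O(1)$ copies and succeeds with probability $\Omega(\gamma^2)$, giving the stated $\widetilde{O}(\gamma^{-2}\log(1/\upsilon))$ complexity. The $\gamma^7/80$ in \eqref{eq:noisycorollaryofAD24} is then imported directly from \cite{ad2024tolerant}; no reanalysis under $q_\Psi$ is attempted or needed for this lemma (that is deferred to the BSG test).

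Two concrete problems with your approach. First, your threshold-based membership test produces a (nearly) fixed set $S$, so to get a ``distribution over subsets $S$'' you introduce a pivot $g$ and define $S$ by ``$x\in S_{\gamma/4}$ and $x+g\in S_{\gamma/4}$.'' There is no support in the cited theorem for the claim that a $\gamma^7/80$-fraction of pivots yield a set that is simultaneously dense and approximately closed; that statement would require a fresh argument, and it is not what \cite{ad2024tolerant} proves. Second, your complexity accounting is off: the sampler rejects all but an $\Omega(\gamma)$ fraction of Bell-difference draws, and each draw requires a threshold test costing $\widetilde{O}(\gamma^{-2}\log(1/\upsilon'))$ copies, so the total per returned element is $\widetilde{O}(\gamma^{-3})$ rather than the claimed $\widetilde{O}(\gamma^{-2})$. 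You also propose to redo the additive-energy estimates under $q_\Psi$-weights, but the closure probability $\Pr_{x,y\sim S}[x+y\in S]$ in the lemma is with respect to the \emph{uniform} measure on $S$ (matching Theorem~\ref{thm:large_set_gamma_qPsi_expectations}); the weight reweighting is a red herring here. In short, you correctly diagnose that the step from existence to algorithm must respect the sampling distribution, but the key insight — that a one-shot $W_x^{\otimes 2}$ measurement \emph{is} the random coin defining the choice set — is missing.
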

\begin{proof}
In~\cite{ad2024tolerant} they showed the following: consider the set $S_{\gamma/4}$, and define a randomized \emph{choice set} $X\subseteq S_{\gamma/4}$ by including  every $x\in S_{\gamma/4}$ into $X$ with probability $2^np_\Psi(x)$. In~\cite{ad2024tolerant} they show that $X$ is dense with high probability, i.e., $|X|\geq 2^n\cdot \poly(\gamma)$. Then, they showed that this randomized procedure produces an $X$ that satisfies\footnote{We remark that in the proof of~\cite[Theorem~4.5]{ad2024tolerant}, they showed the following: if $L(X)=\Pr_{x,y\sim X}[x+y\in X]$, then $\Exp_{X}[L(X)]\geq \poly(\gamma)$ which they used to conclude there exists a good $X$. The existence of a good $X'$ was sufficient there but for $\Selfcorrection$ here, we will use the stronger statement that ``most" $X$'s are good, i.e., and use that $\Exp_{X}[L(X)]\geq \poly(\gamma)$ as we write below.}
\footnote{In \cite{ad2024tolerant}, it is shown that $\Pr_{X}[|X|/2^n \geq \gamma^2/80] \geq 1 - \exp(-0.4 \gamma^2) \geq \gamma^2/4$ for $\gamma \in [0,1]$ and $\Exp_{X}[L(X) | \,\, |X|/2^n \geq \gamma^2/80]\geq \gamma^5/10$ which implies $\Pr_{X}[L(X) \geq \gamma^5/20 | \,\, |X|/2^n \geq \gamma^2/80]\geq \gamma^5/20$ using Fact~\ref{fact:lowerboundexpectation}.}
\begin{align}
\label{eq:impliedbyAD24}
\Pr_{X}\left[[|X|/2^n \geq \gamma^2/80] \cap \Big[\Pr_{x,y\sim X}[x+y\in X]\geq \gamma^5/20\Big]\right]\geq \gamma^7/80.
\end{align}
Now, we first observe that, with probability $\geq 1-\delta$, we can find an element in $X'$  efficiently by consuming $O(1/\gamma^2\cdot \log(1/\delta))$ copies of $\ket{\psi}$. This is witnessed by the algorithm below
\begin{myalgorithm}
\begin{algorithm}[H]
    \caption{\sample($\gamma$, $\delta$)} \label{algo:sample}
    \setlength{\baselineskip}{1.5em} 
    \DontPrintSemicolon 
    \KwInput{$O(1/\gamma^2\cdot (\log 1/\delta))$ copies of $\ket{\psi}$.}
     \KwOutput{A list of Paulis $\{W_x\}$}
    Set  $M \leftarrow O((\log 1/\delta)/\gamma^{2})$ and initialize empty list $\calL \leftarrow \emptyset$\\
    \While {$m \leq M$}{
        Carry out Bell difference sampling on $\ket{\psi}^{\otimes 4}$ to produce a Pauli string $x \in \FF_2^{2n}$ \\
        Measure $\ket{\psi}^{\otimes 2}$ with respect to $W_x^{\otimes 2}$ to obtain eigenvalue outcome $b$ \\
        If $b=1$, then append $x$ to $\calL$ i.e., $\calL \leftarrow \calL \cup \{x\}$.
    }
    \Return $\calL$
\end{algorithm}
\end{myalgorithm}
We now analyze this algorithm. First Fact~\ref{claim:high_exp_value_bds} implies that step $(3)$ of the algorithm outputs an $x\in S_{\gamma/4}$ with probability $\geq 3\gamma/4$. Step $(4,5)$ essentially simulates the following: with probability $2^n p_\Psi(x)$, we include the $x\in \calL$ that was sampled in step $(3)$, else do not place it in $\calL$. Since the probability of steps $(3-5)$ succeeding is $\geq \Omega(\gamma^2)$, we set $M=O(\log (1/\delta)/\gamma^2)$ so that with probability $\geq 1-\delta$ (over the randomness of sampling in steps $(3-5)$), there exists a so-called ``good" $x\in \calL$ which lies in $X'$.\footnote{We remark that in $\calL$, there are going to be several other ``non-good" $x$s, which we choose to keep.}  
Now,  Eq.~\eqref{eq:impliedbyAD24} implies~the lemma statement.
\end{proof}
We remark in the lemma statement above, writing $S$ down might take exponential time (so below we will never choose to write it down), but with probability $\gamma^7/80$, there exists a \emph{good} $S$ that satisfies two properties $(i)$ each $x\in S$ can be sampled in polynomial time and $(ii)$ this good $S$ is approximately a subgroup, i.e.,  
$$
\Pr_{x,y\sim S}[x+y\in S]\geq \gamma^5/20.
$$
As part of algorithmizing Theorem~\ref{thm:inverse_weyl_exp_states}, we have so far discussed the subroutine $\sample$ that allows us to sample points from an approximate group with high probability. For the rest of our discussion, let us condition on the event that $\sample$ gives us elements from $S$, which happens with probability $\geq \gamma^7/80$ (Lemma~\ref{lem:sample_bds}). The corresponding distribution over elements from $\sample$ conditioned on landing in $S$, which we denote by $\Dpsi$, is given by\footnote{To see this, observe that in $\sample$, we first sample $v \sim q_\Psi$, then retain with probability $2^n p_\Psi(v)$, else discard~it.}
\begin{equation}
\label{eq:induced_dist_SAMPLE}
    \Dpsi(x) = \frac{q_\Psi(x) 2^n p_\Psi(x)}{\sum_{y \in S} q_\Psi(y)\cdot  2^n p_\Psi(y)}.
\end{equation}
We have the following useful fact regarding the values that $\Dpsi(x)$ take.
\begin{fact}
\label{fact:D_ub}
For every $x \in S$,  we have
$$ 
q_\Psi(x)\cdot \frac{\gamma}{4}\leq \Dpsi(x) \leq \frac{2^{10}\cdot 10^2}{\gamma^{10} \cdot |S|} \leq 2^{-n} \cdot \frac{8.2\cdot 10^6}{\gamma^{12}}.    
$$
\end{fact}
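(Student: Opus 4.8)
The plan is to write $\Dpsi(x) = q_\Psi(x)\,2^n p_\Psi(x)/Z$, where $Z := \sum_{y\in S} q_\Psi(y)\,2^n p_\Psi(y)$ is the normalizing constant from Eq.~\eqref{eq:induced_dist_SAMPLE}, and to control the numerator and the denominator separately. Throughout I will use that $S \subseteq S_{\gamma/4}$ (from Theorem~\ref{thm:large_set_gamma_qPsi_expectations}), so every $x\in S$ satisfies $\gamma/4 \le 2^n p_\Psi(x) \le 1$, the latter by Fact~\ref{lem:ub_qPsi}, and that $q_\Psi(x) \le 2^{-n}$ also by Fact~\ref{lem:ub_qPsi}. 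The approximate-subgroup property $\Pr_{y,z\sim S}[y+z\in S] \ge \gamma^5/20$ and the size bound $|S| \ge \gamma^2 2^n/80$ are exactly what is guaranteed for the ``good'' $S$ (Lemma~\ref{lem:sample_bds}, Eq.~\eqref{eq:impliedbyAD24}).

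The lower bound is immediate: the numerator is at least $q_\Psi(x)\cdot(\gamma/4)$ since $2^n p_\Psi(x)\ge \gamma/4$ on $S$, while $Z = \sum_{y\in S} q_\Psi(y)\,2^n p_\Psi(y) \le \sum_{y} q_\Psi(y) = 1$ because $2^n p_\Psi(y)\le 1$ and $q_\Psi$ is a distribution. Hence $\Dpsi(x) \ge q_\Psi(x)\,\gamma/4$.

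For the upper bound, the numerator is at most $2^{-n}$ (using $q_\Psi(x)\le 2^{-n}$ and $2^n p_\Psi(x)\le 1$), so the whole task reduces to a lower bound on $Z$ that scales like $|S|^2$. The key step — and the one nonobvious move in the whole argument — is to unfold the convolution $q_\Psi(y) = \sum_z p_\Psi(z)p_\Psi(y+z)$ and throw away all but the ``closed'' triples:
\begin{align}
Z \;=\; 2^n \sum_{y\in S}\sum_{z\in\FF_2^{2n}} p_\Psi(y)p_\Psi(z)p_\Psi(y+z) \;\ge\; 2^n \sum_{\substack{y,z\in S\\ y+z\in S}} p_\Psi(y)p_\Psi(z)p_\Psi(y+z).
\end{align}
On the restricted index set all three factors are at least $\gamma/(4\cdot 2^n)$ (since $y,z,y+z\in S\subseteq S_{\gamma/4}$), and the number of such pairs is $|S|^2\Pr_{y,z\sim S}[y+z\in S] \ge |S|^2\gamma^5/20$. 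Multiplying out gives $Z \ge \gamma^8 |S|^2/(1280\cdot 2^{2n})$, hence $\Dpsi(x) \le 2^{-n}/Z \le 1280\cdot 2^n/(\gamma^8 |S|^2)$.

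Finally, both closed-form bounds follow by inserting the size estimate $|S|\ge \gamma^2 2^n/80$, i.e.\ $2^n \le 80|S|/\gamma^2$: this turns $1280\cdot 2^n/(\gamma^8|S|^2)$ into $102400/(\gamma^{10}|S|) = 2^{10}\cdot 10^2/(\gamma^{10}|S|)$, and applying $1/|S| \le 80/(\gamma^2 2^n)$ once more bounds this by $(2^{10}\cdot 10^2\cdot 80)\,2^{-n}/\gamma^{12} = 8192000\cdot 2^{-n}/\gamma^{12} \le 8.2\cdot 10^6\cdot 2^{-n}/\gamma^{12}$. I expect the convolution-unfolding/restriction-to-closed-triples step to be the only delicate point; the remaining manipulations are purely arithmetic, and the minor bookkeeping about whether $S\subseteq S_{\gamma/4}$ or $S\subseteq S_\gamma$ is harmless since only $2^n p_\Psi \ge \gamma/4$ on $S$ is used.
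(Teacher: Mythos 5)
Your proposal is correct and follows essentially the same route as the paper's proof: unfold the convolution $q_\Psi(y)=\sum_z p_\Psi(z)p_\Psi(y+z)$, restrict to closed triples $y,z,y+z\in S$, use $2^n p_\Psi \ge \gamma/4$ on $S$ together with $\Pr_{y,z\sim S}[y+z\in S]\ge\gamma^5/20$ to lower bound the normalizing constant, and then insert $|S|\ge\gamma^2 2^n/80$ to get the two stated numerical bounds. The arithmetic matches the paper's to the constant $8.192\cdot 10^6\le 8.2\cdot 10^6$.
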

\begin{proof}
The lower bound is easy to see, observe that the denominator of $D_\Psi(x)$ from the expression of Eq.~\eqref{eq:induced_dist_SAMPLE} is at most $\sum_{y\in \FF^n} q_\Psi(x)=1$ and the numerator $q_\Psi(x) 2^np_\Psi(x) \geq q_\Psi(x) \cdot \gamma/4$ since for every $x\in S$, we have $2^np_\Psi(x)\geq \gamma/4$.

To prove the upper bound, we firstly note that the numerator of the expression of $\Dpsi(x)$ from Eq.~\eqref{eq:induced_dist_SAMPLE} is bounded as 
\begin{equation}\label{eq:ub_Dpsi_numerator}
q_\Psi(x) 2^n p_\Psi(x) \leq q_\Psi(x) \leq 2^{-n},    
\end{equation}
where we used $ p_\Psi(x),q_\Psi(x) \leq 2^{-n}$ by Lemma~\ref{lem:ub_qPsi}. We bound the denominator of $\Dpsi$ in Eq.~\eqref{eq:induced_dist_SAMPLE}~by
\begin{align}
\sum_{y \in S} q_\Psi(y) 2^n p_\Psi(y) &= 2^n \sum_{y \in S} p_\Psi(y) \sum_{a \in \FF_2^{2n}} p_\Psi(a) p_\Psi(a+y) \nonumber\\
&\geq 2^n \sum_{a,y \in S} p_\Psi(a) p_\Psi(y) p_\Psi(a + y) \nonumber \\
&\geq 2^n \sum_{a,y \in S} p_\Psi(a) p_\Psi(y) p_\Psi(a + y) [a+y \in S] \nonumber \\
&\geq 2^n \frac{\gamma^5}{20} |S|^2 2^{-3n} \frac{\gamma^3}{64} \\
&\geq \frac{\gamma^{10} \cdot 2^{-n} \cdot |S|}{2^{10}\cdot 10^2}
\geq \frac{\gamma^{12}}{8.2\cdot 10^6}
\label{eq:ub_Dpsi_denom}
\end{align}
where in the third line, we have used that $\Pr_{a,y \in S}[a+y \in S] \geq \gamma^5/20 \implies \sum_{a,y \in S}[a+y \in S] \geq (\gamma^5/20)|S|^2$ and that $2^n p_\Psi(x) \geq \gamma/4$ for all $x \in S$. In the fourth line, we used the lower bound on the size of $S$ from Lemma~\ref{lem:sample_bds} of $|S|\geq \gamma^2\cdot 2^n/80$. Combining Eq.~\eqref{eq:ub_Dpsi_denom},~\eqref{eq:ub_Dpsi_numerator} gives us the desired result.
\end{proof}

\subsection{BSG Test}\label{subsec:BSG_test}
In~\cite{ad2024tolerant}, after determining the approximate subgroup $S$, they used the well-known Balog-Szemeredi-Gowers ($\BSG$) theorem~\cite{balog1994statistical,gowers2001new} on $S$ to show the existence of a set $S' \subseteq S$ which has small doubling.
\begin{theorem}[$\BSG$ Theorem]\label{thm:bsg}
    Let $G$ be an Abelian group and $S \subseteq G$. If 
    \begin{equation}
    \label{eq:requiredpromiseofS}
        \Pr_{s,s' \in S}[s + s' \in S] \geq \varepsilon,
    \end{equation}
    then there exists $S' \subseteq S$ of size $|S'| \geq (\varepsilon/3) \cdot |S|$ such that $|S' + S'| \leq (6/\varepsilon)^8 \cdot |S|$.
\end{theorem}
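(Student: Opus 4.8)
The plan is to prove Theorem~\ref{thm:bsg} by the standard graph-theoretic route for the Balog--Szemer\'edi--Gowers theorem, following~\cite{balog1994statistical,gowers2001new} and the neighbourhood argument of~\cite{sudakov2005question} as adapted to the algorithmic setting in~\cite{tulsiani2014quadratic}. First I would translate the additive hypothesis into graph language: set $N=|S|$ and build the bipartite graph $H$ on two disjoint copies $S_L,S_R$ of $S$, joining $a\in S_L$ to $b\in S_R$ exactly when $a+b\in S$. Then condition~\eqref{eq:requiredpromiseofS} says precisely that $H$ has at least $\varepsilon N^2$ edges, so $H$ is a dense bipartite graph, and the whole problem becomes one of extracting additive structure from this density.

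The heart of the proof is a clustering step that produces a dense subset $S'\subseteq S$ all of whose pairs are ``robustly connected'' in $H$. Concretely, I would use dependent random choice: sample $b_0\in S_R$ uniformly and pass to its neighbourhood $A=N_H(b_0)\subseteq S_L$. A first-moment computation gives $\Exp|A|\ge \varepsilon N$, and Cauchy--Schwarz gives $\Exp|A|^2=\tfrac1N\sum_{a,a'\in S_L}c(a,a')\ge \varepsilon^2 N^2$, where $c(a,a')$ denotes the number of common right-neighbours of $a,a'$ in $S_R$. Controlling in parallel the expected number of pairs inside $A$ that are poorly connected, fixing a choice of $b_0$ for which $|A|$ is simultaneously large and these bad pairs are few (via a weighted first moment of the form $\Exp[\,|A|^2-K\cdot(\#\text{bad pairs in }A)\,]$), and then deleting the offending vertices, should yield $S'\subseteq S$ with $|S'|\ge(\varepsilon/3)N$ such that every $x,y\in S'$ is joined by at least $\poly(\varepsilon)\,N^{2}$ paths of length $3$ in $H$. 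Here it is essential to measure connectivity by length-$3$ paths from a left vertex $x$ to a right vertex $y$ rather than by common neighbours: such a path $x - b_1 - a_1 - y$ certifies the representation $x+y=(x+b_1)+(a_1+b_1)+(a_1+y)$ as a sum of three elements of $S$, and --- unlike the naive common-neighbour count --- the relevant ``badness'' can be cleared by removing only a $\poly(\varepsilon)$-fraction of vertices rather than almost all of them.

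With $S'$ in hand the sumset bound is a counting argument. For each $z\in S'+S'$ fix a pair $x,y\in S'$ with $z=x+y$; the $\poly(\varepsilon)N^2$ length-$3$ paths joining $x$ to $y$ give $\poly(\varepsilon)N^2$ distinct ordered triples $(s_1,s_2,s_3)\in S^3$ with $s_1+s_2+s_3=z$ (the path is recovered from the triple). Since each ordered triple in $S^3$ contributes to at most one value $z$, summing over $z\in S'+S'$ gives $|S'+S'|\cdot \poly(\varepsilon)N^2\le N^3$, hence $|S'+S'|\le \poly(1/\varepsilon)\,N$; carefully bookkeeping the constants (the $\sqrt2$ and the vertex-deletion losses in the clustering step, and the explicit power of $\varepsilon$ lost in the path count) pins this down to $|S'+S'|\le(6/\varepsilon)^8 N$, which together with $|S'|\ge(\varepsilon/3)N$ is the claim.

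I expect the clustering step to be the main obstacle. The subtlety is that merely bounding the \emph{expected} number of poorly connected pairs in a random neighbourhood does not by itself yield a large subset in which \emph{every} pair is well connected: deleting bad pairs is a vertex-cover problem, and a sparse ``bad-pair graph'' can still have an enormous minimum cover (for instance a disjoint union of moderately sized cliques), so a naive approach would leave only $O_\varepsilon(1)$ vertices. Overcoming this requires both the two-level first-moment argument (large $|A|$ \emph{and} few bad pairs) and the switch to length-$3$ connectivity, which together keep the surviving density at $\ge \varepsilon/3$ while still guaranteeing that \emph{every} surviving pair is well connected --- and the latter is indispensable, since $|S'+S'|$ counts every $z=x+y$, so even one uncontrolled pair would break the representation count. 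The remaining work is routine tracking of polynomial-in-$1/\varepsilon$ factors.
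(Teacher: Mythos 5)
The paper does not prove Theorem~\ref{thm:bsg}: the statement is quoted from the literature with citations to~\cite{balog1994statistical,gowers2001new}, and no proof is offered anywhere in the paper. The BSG-flavoured argument that the paper \emph{does} carry out is the algorithmic membership-oracle construction in Section~\ref{subsec:BSG_test} and Appendix~\ref{appsec:proof_bsg} (Lemma~\ref{lemma:lb_size_A1}, Claim~\ref{claim:doubling_A2}), but that is a different statement: it works with the non-uniform distribution $D_\Psi$ and the graph $\G(S,\E)$ on a single vertex set, and it invokes Zhao's length-$3$-paths lemma (Lemma~\ref{lem:length3_paths}) as a black box rather than proving BSG from scratch. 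So there is no paper proof of this exact statement to compare against.

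As a free-standing sketch of the theorem, your plan is the correct standard route (Gowers' bipartite-graph argument, refined by Sudakov--Szemer\'edi--Vu and Balog), and you correctly identify the key subtlety: vertex-deletion alone cannot clear the bad-pair graph (its minimum vertex cover can be huge even when it is sparse), which is precisely why one passes from common-neighbour counts to length-$3$-path connectivity, and why the clustering step uses a weighted first moment to keep $|A|$ large while controlling bad pairs. One concrete slip: the identity $x+y=(x+b_1)+(a_1+b_1)+(a_1+y)$ holds only in characteristic~$2$. Since the theorem is stated for an arbitrary Abelian group $G$, the decomposition must be signed, $x+y=(x+b_1)-(a_1+b_1)+(a_1+y)$. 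Your counting step is unaffected --- for fixed $(x,y)$ the map $(b_1,a_1)\mapsto(x+b_1,\,a_1+b_1,\,a_1+y)$ is still injective, and a triple $(s_1,s_2,s_3)\in S^3$ determines $z=s_1-s_2+s_3$ uniquely --- but you should bound the representation count with respect to signed sums, not $S+S+S$. Finally, the explicit constants $(\varepsilon/3)$ and $(6/\varepsilon)^8$ are asserted but not derived in the sketch; they do not coincide with the bounds in the source you invoke (Lemma~\ref{lem:length3_paths} gives $\Delta^2/8$ density and $\Delta^6/1600$ paths), so one would need to trace them through whichever specific proof is followed rather than treat the bookkeeping as routine.
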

As part of continuing to algorithmize~\cite{ad2024tolerant}, our natural next step is to then attempt to algorithmize the  BSG theorem above. However, just like $S$, the set $S'$ from the $\BSG$ Theorem is exponentially sized and thus could not have been constructed efficiently. We instead follow the strategy in \cite{tulsiani2014quadratic} by constructing a membership oracle for $S'$. 

\paragraph{Comparison to prior work.} Before we describe how to obtain a membership oracle to $S'$, we discuss first as to why our algorithmic $\BSG$ theorem is not simply a quantization of the sampling procedure of~\cite{tulsiani2014quadratic}. 
In their work, instead of defining a choice set as we have here, they consider a \emph{choice function} $\phi:\FF_2^n\rightarrow \FF_2$ chosen probabilistically and show (like how we do above) that with $\poly(\gamma)$ probability one could obtain a good $\phi$. But now, conditioned on a good choice function $\phi$, they are able to sample an element \emph{uniformly at random} from their approximate subgroup $S_\phi=\{(x,\phi(x)):x\in \FF_2^n\}$ by simply picking a uniformly random $x\in \FF_2^n$ and outputting $(x,\phi(x))$. In our setting, however, we have a \emph{choice set} $X'$ which we do not know how to \emph{uniformly sample} from and in fact the distribution of points $v\in X'$ is given by $\Dpsi$ in Eq.~\eqref{eq:induced_dist_SAMPLE}.  
Consequently, in this work we give a new  analysis to account for the fact that the points $v\in \FF^{2n}$ are sampled from $\Dpsi$ (instead of uniform from $S_\phi$). To this end, we crucially  use several properties of the graphs we are dealing with here (and their structural properties) to ensure all the arguments in algorithmizing $\BSG$ goes through even when the points are sampled from the Bell difference sampling distribution (which could be of independent interest).

\paragraph{Back to algorithmizing $\BSG$.} 
We first observe that the $\sample$ subroutine allow us to sample elements from $S'$ with high probability. To that end, it will be helpful to work with graphs associated with the approximate subgroup $S$ from Lemma~\ref{lem:sample_bds} and the set of elements obtained after calling $\sample$ multiple times. For a set $S$ (which will eventually be chosen at random), define a graph $\G(S,\E)$ on the vertex set $S$ and edge set $\E$ defined as
\begin{equation}\label{eq:graph_S}
    \E := \left\{ (x, y) : x+y \in S \text{ and } 2^np_\Psi(x), 2^np_\Psi(y), 2^np_\Psi(x+y) \geq \gamma/4 \right\},
\end{equation}
where we place an edge $(x,y)$ if the expectation values of the Weyl operators corresponding to $x,y,x+y$ are high, and $x+y \in S$. However, we will never work with this graph in practice as it is described over the set $S$, which is exponentially large. Instead, we will sample elements from $S$ with high probability using $\sample$ of Lemma~\ref{lem:sample_bds}. Let us call this resulting set $\calV$. We then define the graph $\G(\V,\E_\zeta)$ on the vertex set $\calV$ and edge set $\calE_\zeta$ for $\zeta> 0$, defined as
\begin{equation}
\label{eq:graphdefinition}
    \calE_{\zeta}:= \left\{ (x, y) : x+y \in S \text{ and } 2^np_\Psi(x), 2^np_\Psi(y), 2^np_\Psi(x+y) \geq \zeta \right\},
\end{equation}
where we place an edge $(x,y)$ if the expectation values of the Weyl operators corresponding to $x,y,x+y$ are high, and $x+y$ is in $S$, even if not in $\calV$. This respects the condition of placing an edge between two nodes in $\G(S,\E)$ which we do not have access to. Note that then for $\zeta = \gamma/4$, $\G(\V,\E_\zeta)$ is the subgraph of $\G(S,\E)$ over the nodes of $\V$. 

\subsubsection{Edge Test}\label{sec:edgetest}
Given vertices $x,y$, the goal of $\edgetest$ is to decide membership of $(x,y)$ in the edge set of $\G(\V,\E_\zeta)$, for which we use Algorithm~\ref{algo:edge_test} below.
\begin{myalgorithm} \setstretch{1.35}
\begin{algorithm}[H]
    \label{algo:edge_test}
    \setlength{\baselineskip}{1.5em} 
    \DontPrintSemicolon 
    \caption{\edgetest($x$, $y$, $\zeta$, $\zeta'$, $\delta$)} 
    \KwInput{Vertices $x, y \in \FF_2^{2n}$, vertex set $\calV$, error parameters $\zeta_1,\zeta_2$, failure probability $\delta$}
    \KwOutput{Presence of edge $(x,y)$}
    Set $T \leftarrow O(1/{\zeta'}^{2} \log(1/\delta))$ \\
    Obtain $\zeta'$-approximate estimates $\alpha_x$ of $2^np_\Psi(x)$, $\alpha_y$ of $2^np_\Psi(y)$, $\alpha_{x+y}$ of $2^np_\Psi(x+y)$ by measuring $\ket{\psi^{\otimes 2}}$ with respect to $W_x^{\otimes 2}$, $W_y^{\otimes 2}$, $W_{x+y}^{\otimes 2}$ respectively, using $T$ copies of $\ket{\psi}$. \\
    \uIf{$\alpha_x,\alpha_y,\alpha_{x+y} \geq \zeta$}{
        Measure $\ket{\psi}^{\otimes 2}$ with respect to $W_{x+y}^{\otimes 2}$ to obtain eigenvalue outcome $b$ \\
        Set $\mathrm{FLAG} \leftarrow b$.
    }
    \Else{Set $\mathrm{FLAG} \leftarrow 0$.}
    \Return $\mathrm{FLAG}$
\end{algorithm}
\end{myalgorithm}
Note that the algorithm for $\edgetest$ could have checked in $x+y$ is in the vertex set $\calV$ that has been sampled so far before attempting to check if it would have been retained with probability $2^n p_\Psi(x+y)$ in the graph $\G(S,\E)$. However, for brevity, we omit this extra check and just repeat even if $x+y \in \calV$. Finally, due to the \emph{approximate} nature of evaluating $2^n p_\Psi(x)$ for any $x \in \FF_2^{2n}$, we work with graphs $\calG^1(\calV,\calE_{\zeta + \zeta'})$ and $\calG^2(\calV,\calE_{\zeta- \zeta'})$ whose edge sets satisfy $\calE_{\zeta + \zeta'} \subseteq \calE_{\zeta - \zeta'}$.
\begin{claim}
    Given $\zeta,\zeta',\delta > 0$, the output of $\edgetest(x, y, \zeta, \zeta', \delta)$ with $M=O(1/\zeta_2^{ 2} \log(1/\delta))$ copies of $\ket{\psi}$, satisfies the following guarantee with probability at least $1-\delta$
    \begin{enumerate}[(i)]
        \item $\edgetest(x, y, \zeta, \zeta', \delta) = 1 \implies (x,y) \in \calE_{\zeta_1- \zeta_2}$.
        \item $\edgetest(x, y, \zeta, \zeta', \delta) = 0 \implies (x,y) \in \calE_{\zeta_1+ \zeta_2}$.
    \end{enumerate}
\end{claim}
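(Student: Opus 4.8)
The plan is to separate the argument into a concentration step, which controls the three estimates formed in Step~2, and a branch-by-branch case analysis, which translates those estimates together with the eigenvalue outcome $b$ into a two-sided statement about the edge sets $\calE_{\zeta\pm\zeta'}$ (these are the $\calE_{\zeta_1\pm\zeta_2}$ of the claim once we identify $\zeta_1=\zeta$, $\zeta_2=\zeta'$). Note first that $\edgetest$ consumes $O(T)=O(\zeta'^{-2}\log(1/\delta))$ copies of $\ket{\psi}$ — three batches of $T=\Theta(\zeta'^{-2}\log(1/\delta))$ two-copy measurements in Step~2, plus one more measurement in Step~4 — which matches the stated budget $M$.

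For the concentration step, observe that a measurement of $W_x^{\otimes2}$ on $\ket{\psi}^{\otimes2}$ returns a $\{\pm1\}$-valued outcome with mean $\langle\psi^{\otimes2}|W_x^{\otimes2}|\psi^{\otimes2}\rangle=\langle\psi|W_x|\psi\rangle^2=2^np_\Psi(x)$ (using that each $W_x$ is Hermitian with $W_x^2=\Id$, so $2^np_\Psi(x)$ is real and nonnegative), and the same holds for $y$ and for $x+y$. Hence the empirical average of $T=\Theta(\zeta'^{-2}\log(1/\delta))$ such outcomes is an unbiased estimator of $2^np_\Psi(x)$, and Fact~\ref{fact:hoeffding_sampling} with $a=\zeta'$ yields $|\alpha_x-2^np_\Psi(x)|\le\zeta'$ except with probability $\le\delta/3$; a union bound over the three estimates gives the event $\mathsf{Acc}$, that all of $\alpha_x,\alpha_y,\alpha_{x+y}$ are $\zeta'$-accurate, with $\Pr[\mathsf{Acc}]\ge1-\delta$. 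We condition on $\mathsf{Acc}$ for the remainder of the argument.

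Now the case analysis. For part~(i): if $\edgetest$ outputs $1$, the ``if'' branch must have been taken, so $\alpha_x,\alpha_y,\alpha_{x+y}\ge\zeta$, hence $2^np_\Psi(x),2^np_\Psi(y),2^np_\Psi(x+y)\ge\zeta-\zeta'$ on $\mathsf{Acc}$; moreover the output equals $b=1$, and the outcome $b$ of the Step~4 measurement of $W_{x+y}^{\otimes2}$ realises exactly the randomised inclusion rule that $\sample$ uses to define the approximate group $S$, so $b=1$ witnesses $x+y\in S$. Combining, $(x,y)\in\calE_{\zeta-\zeta'}=\calE_{\zeta_1-\zeta_2}$. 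For part~(ii) it is cleanest to prove the contrapositive $(x,y)\in\calE_{\zeta+\zeta'}\Rightarrow\edgetest=1$: if $(x,y)\in\calE_{\zeta+\zeta'}$ then $2^np_\Psi(x),2^np_\Psi(y),2^np_\Psi(x+y)\ge\zeta+\zeta'$, so on $\mathsf{Acc}$ all three estimates are $\ge\zeta$ and the ``if'' branch is entered, while $x+y\in S$ forces $b=1$; thus $\edgetest$ outputs $1$. Equivalently $\edgetest=0\Rightarrow(x,y)\notin\calE_{\zeta+\zeta'}$, i.e.\ every edge of $\calG^1(\calV,\calE_{\zeta+\zeta'})$ is accepted, which is the intended content of~(ii). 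Both parts hold on $\mathsf{Acc}$, hence with probability at least $1-\delta$.

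The delicate point — and where most of the care goes — is the handling of the membership predicate ``$x+y\in S$''. The set $S$ and the vertex set $\calV$ are themselves random, defined through the inclusion coins flipped inside $\sample$, whereas $\edgetest$ re-measures $W_{x+y}^{\otimes2}$ rather than consulting any stored description of $S$; so ``$x+y\in S$'' must be read as being re-instantiated by a freshly drawn coin at each call, and one must verify that this convention is consistent with the implicitly-used graph $\calG(S,\E)$ of Eq.~\eqref{eq:graph_S} and, more importantly, that the downstream $\BSG$ test only ever queries $\edgetest$ in a way compatible with per-call resampling (e.g.\ by bounding the number of distinct pairs it tests and coupling all the fresh coins to a single draw of $S$). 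By contrast, the concentration bound and the asymmetric threshold bookkeeping — which is precisely what forces a two-sided sandwich $\calE_{\zeta+\zeta'}\subseteq(\text{accepted edges})\subseteq\calE_{\zeta-\zeta'}$ rather than an exact characterization of the test's behaviour — are routine.
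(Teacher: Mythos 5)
Your proof is correct and matches the paper's intent, which is expressed as a one-line remark (``follows from the definitions of the edge sets and Hoeffding''): concentrate the three estimates $\alpha_x,\alpha_y,\alpha_{x+y}$ via Fact~\ref{fact:hoeffding_sampling} with a union bound, then branch on the ``if'' test, using the fact that $\calE_{\zeta+\zeta'}\subseteq\calE_{\zeta-\zeta'}$ lets a $\zeta'$-accurate estimate distinguish the two thresholds. You also correctly spot that part~(ii) as printed reads ``$\in$'' when it should read ``$\notin$'' (otherwise it asserts the opposite of what the sandwich $\calE_{\zeta+\zeta'}\subseteq\{\text{accepted}\}\subseteq\calE_{\zeta-\zeta'}$ provides, and would be false, e.g., whenever all three true expectations are tiny).

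Two small cautions. First, the step ``$x+y\in S$ forces $b=1$'' in the proof of~(ii) is a tautology only under the per-call convention you introduce in the last paragraph --- i.e.\ that ``$x+y\in S$'' is \emph{defined} by the coin $b$ flipped in Step~4 of that same call, exactly as the remark in the paper (``we omit this extra check and just repeat'') suggests. Under a fixed-$S$ reading the implication would be false because $b$ would be an independent fresh coin, so you are right to flag this and to note that the downstream $\BSG$ machinery must be compatible with per-call resampling; the paper does not make this consistency argument explicit either, so that burden lives outside this claim rather than in your proof of it. Second, a purely cosmetic point: the algorithm spends $T$ copies across all three estimates, not $T$ per estimate as your phrasing (``three batches of $T$'') suggests; this changes nothing asymptotically but is worth aligning with the pseudocode.
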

\begin{proof}
The claim follows from the definitions of the edge sets $\calE_{\zeta - \zeta'}$, $\calE_{\zeta + \zeta'}$ and using Fact~\ref{fact:hoeffding_sampling}.    
\end{proof}

\subsubsection{Setup of $\BSG$ Test}
\label{subsec:description_bsg_test}
For a vertex $u\in \calV$ and $\zeta > 0$, we define $N_\zeta(u)$ as the set of neighbors of $u$ in the graph $\G(S,\E_\zeta)$:
\begin{equation}
\label{eq:bsg_set_N}
N_\zeta(u) := \left\{v \in S : (u,v) \in \calE_\zeta \right\}.    
\end{equation}
Note that $|N_\zeta(u)|$ is monotonically decreasing with increasing $\zeta$.  With this definition of $N_\zeta(\cdot)$, we first show that if we sample a $u$ from the $D_\Psi$ distribution, the neighborhood of $u$ has a large size (i.e., $|N_\zeta(u)|$ is large) and in particular, the \emph{weight} under $D_\Psi$, in this neighborhood is large. We make this formal in the lemma below. We prove this in Appendix~\ref{appsec:proof_bsg}.\footnote{We remark that in this section, we explicitly specify various constant factors since the parameters of the $\BSG$ test needs specification of these parameter settings to prove its correctness.}
\begin{restatable}{lemma}{sizeNu}
\label{lem:Dpsilowerbound}
For $\zeta \in (0,\gamma/4]$, we have the following
$$
\Exp_{u\sim D_\Psi}\Big[\sum_{v\in N_{\gamma/4}(u)}D_\Psi(v)\Big]\geq  \gamma^{64}/(2^{39} \times 10^{15}), \text{ and } \Exp_{u\sim D_\Psi}[|N_{\zeta}(u)|]\geq  \gamma^{74}/(2^{49} \times 10^{17}),
$$
\end{restatable}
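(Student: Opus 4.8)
The plan is to reduce both quantities to counting sums over the approximate group $S$ involving the ``co-degree'' $d_S(u):=|\{v\in S : u+v\in S\}|$, lower bound $\Dpsi$ pointwise by a multiple of $d_S(u)$, and then finish with two counting inequalities. First I would observe that for vertices in $S$ the weight conditions in the edge set $\calE_\zeta$ of \eqref{eq:graphdefinition} are automatic: since $S\subseteq S_{\gamma/4}$, for $u,v\in S$ and $\zeta\in(0,\gamma/4]$ we have $2^np_\Psi(u),2^np_\Psi(v)\ge\gamma/4\ge\zeta$, and if $u+v\in S$ then also $2^np_\Psi(u+v)\ge\gamma/4\ge\zeta$. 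Hence for every $u\in S$ and every $\zeta\in(0,\gamma/4]$ (in particular for $\zeta=\gamma/4$),
\[
N_\zeta(u)=\{v\in S:u+v\in S\},\qquad |N_\zeta(u)|=d_S(u).
\]
Since $\Dpsi$ is supported on $S$, the two quantities in the lemma become $\sum_{u,v\in S:\,u+v\in S}\Dpsi(u)\Dpsi(v)$ and $\sum_{u\in S}\Dpsi(u)\,d_S(u)$. Next I would bound $\Dpsi$ pointwise: writing $\Dpsi(u)=q_\Psi(u)\cdot 2^np_\Psi(u)/Z$ with $Z=\sum_{y\in S}q_\Psi(y)2^np_\Psi(y)\le\sum_y q_\Psi(y)=1$ (using $2^np_\Psi\le1$ from Fact~\ref{lem:ub_qPsi}), and restricting the defining sum $q_\Psi(u)=\sum_y p_\Psi(y)p_\Psi(u+y)$ to the $d_S(u)$ indices $y$ with $y,u+y\in S$ (each contributing $\ge(\gamma/(4\cdot2^n))^2$), one gets $q_\Psi(u)\ge\frac{\gamma^2}{16\cdot2^{2n}}d_S(u)$ and hence $\Dpsi(u)\ge\frac{\gamma^3}{64\cdot 2^{2n}}d_S(u)$ for all $u\in S$.

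The second inequality is then immediate. By Cauchy--Schwarz,
\[
\sum_{u\in S}\Dpsi(u)d_S(u)\ \ge\ \frac{\gamma^3}{64\cdot 2^{2n}}\sum_{u\in S}d_S(u)^2\ \ge\ \frac{\gamma^3}{64\cdot 2^{2n}}\cdot\frac{\big(\textstyle\sum_{u\in S}d_S(u)\big)^2}{|S|},
\]
and $\sum_{u\in S}d_S(u)=|\{(u,v)\in S^2:u+v\in S\}|\ge(\gamma^5/20)|S|^2$ by the approximate-group property of $S$ (Lemma~\ref{lem:sample_bds}); with $|S|\ge\gamma^2 2^n/80$ this is of order $\poly(\gamma)\cdot2^n$, comfortably exceeding the (deliberately loose) $n$-independent bound stated.

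For the first inequality, the pointwise bound gives $\sum_{u,v\in S:u+v\in S}\Dpsi(u)\Dpsi(v)\ge\frac{\gamma^6}{64^2\cdot 2^{4n}}\sum_{u,v\in S:u+v\in S}d_S(u)d_S(v)$, so since $|S|\le 4\cdot2^n/\gamma$ I need an \emph{energy-type} bound $\sum_{u,v\in S:u+v\in S}d_S(u)d_S(v)\ge\poly(\gamma)\,|S|^4$; the naive Cauchy--Schwarz only produces $|S|^3$, which would leave an uncancelled $2^{-n}$. To gain the extra factor of $|S|$ I would pass to the heavy set $G:=\{u\in S:d_S(u)\ge(\gamma^5/40)|S|\}$; a one-line averaging argument gives $\sum_{u\in G}d_S(u)\ge(\gamma^5/40)|S|^2$. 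Restricting the outer sum to $u\in G$, pulling out $d_S(u)\ge(\gamma^5/40)|S|$, and rewriting $\sum_{u\in G,\,v\in S:\,u+v\in S}d_S(v)=\sum_{v\in S}d_S(v)\,c(v)$ with $c(v):=|\{u\in G:u+v\in S\}|\le d_S(v)$, one bounds this below by $\sum_v c(v)^2\ge\big(\sum_v c(v)\big)^2/|S|=\big(\sum_{u\in G}d_S(u)\big)^2/|S|\ge\poly(\gamma)|S|^3$, so that $\sum_{u,v\in S:u+v\in S}d_S(u)d_S(v)\ge\poly(\gamma)|S|^4$. Plugging in $|S|\ge\gamma^2 2^n/80$ cancels the $2^{-4n}$ and yields the first bound.

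The main obstacle is exactly this last energy estimate --- obtaining the $|S|^4$ rather than $|S|^3$ scaling of $\sum_{u,v\in S:u+v\in S}d_S(u)d_S(v)$. A Fourier/Parseval approach does not obviously work because the natural expression for this sum is an odd power of $\widehat{1_S}$ and is subject to cancellations; the two-step ``heavy vertex'' argument is what carries it through. Everything else (the reduction and the pointwise bound on $\Dpsi$) is routine, and carefully tracking the constants through the steps above produces exponents in $\gamma$ well below the stated $64$ and $74$.
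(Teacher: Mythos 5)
Your proof is correct and takes a genuinely different route to the key counting estimate. Both you and the paper reduce the first quantity (via the pointwise bound $D_\Psi(u)\ge q_\Psi(u)\cdot 2^n p_\Psi(u)$ and the lower bound $2^np_\Psi\ge\gamma/4$ on $S$) to the number of length-$3$ walks $auvb$ in the graph on $S$, and both need this count to be $\Omega(\poly(\gamma)|S|^4)$ so that the $2^{-4n}$ cancels. The difference is in how that count is obtained. The paper invokes Lemma~\ref{lem:length3_paths}, a BSG-type graph lemma: it extracts a dense $A'\subseteq S$ such that \emph{every} pair in $A'\times A'$ is joined by $\Omega(\gamma^{30}|S|^2)$ length-$3$ paths, then sums over $A'\times A'$. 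You instead give a self-contained ``heavy-vertex'' argument that lower-bounds the \emph{aggregate} walk count directly: pass to the heavy set $G=\{u:d_S(u)\ge(\gamma^5/40)|S|\}$, note by averaging that $\sum_{u\in G}d_S(u)\ge(\gamma^5/40)|S|^2$, pull out the heavy degree, and close with one Cauchy--Schwarz on $c(v)=|\{u\in G:u+v\in S\}|$ using $c(v)\le d_S(v)$. Your route is more elementary in that it proves only the total count rather than the evenly-spread-paths statement of the cited lemma, and it yields a tighter exponent in $\gamma$ (around $\gamma^{29}$ versus the paper's $\gamma^{64}$), so it comfortably implies the claimed bound. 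For the second inequality you also diverge: the paper derives it a posteriori from the first using the pointwise \emph{upper} bound $D_\Psi(v)\le O(\gamma^{-10}/|S|)$ of Fact~\ref{fact:D_ub} and then drops the factor $|S|\ge 1$, whereas you derive it directly from your pointwise \emph{lower} bound $D_\Psi(u)\ge\gamma^3 d_S(u)/(64\cdot 2^{2n})$ plus one Cauchy--Schwarz. Both are valid, and both exceed the stated constants with room to spare.
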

The proof of this lemma relies on the combinatorial structure of $S$, in particular we use that there exists a large subset $A'$ of $S$ that has many \emph{length-$3$ paths} (i.e., many $a,x,y,b$ with $a,b \in A'$ and $x,y \in S$ such that there are edges between every neighboring pair of elements).  Next, within $N_\zeta(u)$, we define a subset of vertices $Q_\zeta(u)$ as the set $\{v_1,\ldots,v_k\}\subseteq N_\zeta(u)$ which \emph{do not} have many neighbors in common (where ``many" here is weighted by the underlying distribution $\Dpsi$). For example, we discard a $v_i\in N(u)$ if there are ``many" $w\in N(u)$ such that $|N(v_i)\cap N(w)|$ is ``small" (when defining all measures with respect to $\Dpsi$). More formally, for thresholds $\rho_1,\rho_2 \in (0,1)$, we~define
\begin{equation}\label{eq:bsg_set_S}
Q_\zeta(u) := \left\{v \in N_\zeta(u) : \Pr_{v_1\sim \Dpsi}\left[v_1 \in N_\zeta(u) \text{ and } \Pr_{v_2\sim \Dpsi} \left[v_2 \in N_\zeta(v) \cap N_\zeta(v_1) \right] \leq \rho_1 \right] > \rho_2 \right\}.
\end{equation}
Finally, define $T_\zeta(u) := N_\zeta(u) \backslash Q_\zeta(u)$ which contains all the vertices in $N_\zeta(u)$ that have many neighbors in common. Formally, 
\begin{align}
\label{eq:bsg_set_T}
T_\zeta(u) &:= N_\zeta(u) \setminus Q_\zeta(u) \nonumber \\
&:= \left\{v \in N_\zeta(u) : \Pr_{v_1\sim \Dpsi}\left[v_1 \in N_\zeta(u) \text{ and } \Pr_{v_2\sim \Dpsi} \left[v_2 \in N_\zeta(v) \cap N_\zeta(v_1) \right] \leq \rho_1 \right] \leq \rho_2 \right\}
\end{align}
By working with these graphs, our goal is to  leverage the characterization of a small doubling set shown by Sudakov et al.~\cite{sudakov2005question}. Particularly, they showed if the graph $\G(S,\E)$ (Eq.~\eqref{eq:graph_S}) has high density, then $T(u)$ for a random vertex $u \in S$ is a good choice for a set with small doubling, with high probability.  
However, like we mentioned earlier, unfortunately in our setting, we are not able to obtain \emph{uniform} samples from $S$ (which is possible in~\cite{tulsiani2014quadratic}) and so instead we also adapt the proof of~\cite{sudakov2005question}. 
Additionally, to avoid holding $T(u)$ in memory, we will not  be building the entire set $T(u)$ and instead construct a \emph{membership oracle} for the set $T(u)$ using Algorithm~\ref{algo:bsg_test} called $\bsgtest$ which we present now before giving its analysis.

\subsubsection{Algorithm of BSG test}
\label{sec:algo_BSG}
In Algorithm~\ref{algo:bsg_test}, we present the procedure used to decide membership in $T_\zeta(u)$. In particular, this is done by checking the condition of Eq.~\eqref{eq:bsg_set_T}. To account for the approximate nature of the $\edgetest$ and estimation of the thresholds corresponding to $T(u)$, we will in fact run the $\bsgtest$ and Algorithm~\ref{algo:bsg_test} with a set of parameters as we now describe.

\paragraph{Parameters of $\BSG$ test.} We now describe the parameters  used in Algorithm~\ref{algo:bsg_test}. The intuition for these parameters will be clear in the proof. We set $\rho = \gamma^5/20$, which is a lower bound on the density of the graph $\G(S,\E_{\gamma/4})$. Accordingly, we then set thresholds corresponding to $T(u)$ (Eq.~\eqref{eq:bsg_set_T}) as $\rho_1 = \gamma^{350}/(10240 C_1^3 C_2^5)$, and $\rho_2 = 9\gamma^{202}/(2560 C_1 C_2^3)$ for constants $C_1=2^{10}\cdot 10^2, C_2 = 2^{39} \times 10^{15}$. Given $\delta$, we take $r$ and $s$ to be $\poly(1/\rho, \log(1/\delta))$ so that with probability at least $1-\delta$, the error in the estimates of $Y_{k \ell}$ and $Z_{k \ell}$ computed as part of Algorithm~\ref{algo:bsg_test} is at most $\rho_1/100$. We also ensure that the error in all estimates used by $\edgetest$ is at most $\rho_1/100$.

To choose $\zeta_1,\zeta_2,\zeta_3$, we follow an approach inspired by \cite[Lemma 4.10]{tulsiani2014quadratic}. Let $\rho_3 = \gamma^{349}/(2560 C_1^3 C_2^5)$. Consider the interval $[\gamma/180, \gamma/18]$ and divide it into $1/\rho_3$ equal and consecutive sub-intervals of size $(\gamma \cdot \rho_3)/20$ each. To select $\zeta,\mu$, we randomly choose one of these $1/\rho_3$ many sub-intervals and choose positive parameters $\zeta,\mu$ so that $\zeta-\mu$ and $\zeta + \mu$ are the endpoints of this interval. The chosen sub-interval is then of the form $[\zeta-\mu, \zeta+\mu]$ with $\zeta$ chosen to be the center of this sub-interval and $\mu$, half the width of this sub-interval. We then set $\zeta_1 = \zeta_3 = \zeta + \mu/2$ and $\zeta_2 = \zeta - \mu/2$.
To see how the choice of error parameters influence the definition of $T(u)$ (Eq.~\eqref{eq:bsg_set_T}) and also deal with the approximate nature of the test, we revise our notation in defining the~set 
\begin{align}
\label{eq:approx_sets_in_bsg}
&T(u, \zeta_1, \zeta_2, \zeta_3, \rho_1, \rho_2) \\
&= \left\{ v \in N_{\zeta_1}(u) : \Pr_{v_1 \sim \Dpsi}\left[v_1 \in N_{\zeta_2}(u) \text{ and } \Pr_{v_2 \sim \Dpsi} \left[v_2 \in N_{\zeta_3}(v) \cap N_{\zeta_3}(v_1) \right] \leq \rho_1 \right] \leq \rho_2 \right\}.
\end{align}

\begin{myalgorithm} 
\setstretch{1.35}
\begin{algorithm}[H]
    \label{algo:bsg_test}
    \caption{\bsgtest($u$, $v$, $\zeta_1$, $\zeta_2$, $\zeta_3$, $\rho_1$, $\rho_2$, $\delta$)} 
    \setlength{\baselineskip}{1.5em} 
    \DontPrintSemicolon 
    \KwInput{Vertices $u$ and $v$, error parameters $\zeta_1,\zeta_2, \zeta_3$, thresholds $\rho_1,\rho_2$, failure prob.~$\delta$}
    \KwOutput{Flag $F$ indicating if $v \in T(u)$~(defined in Eq.~\eqref{eq:bsg_set_T}) or not}
    
    Set $r, s \leftarrow \poly(1/\gamma \log(1/\delta))$, $\zeta' = \rho_1/100$, $\delta'=\delta \poly(\gamma)$. \\
    Obtain a set of $r$ samples $\{z^{(k)}\}_{k \in [r]} \leftarrow \sample(r, \zeta_1, \zeta_2, \delta)$ \\
    For each $k \in [r]$, obtain a set of $s$ samples $\{w^{(k,\ell)}\}_{\ell \in [s]} \leftarrow \sample(s, \zeta_1, \zeta_2, \delta)$ \\
    If $\edgetest(u,v,\zeta_1, \zeta', \delta')=0$, then set output $F \leftarrow 0$ \label{algo_step:edge_test_u_v} \\
    \For{$k \in [r]$}{
        $X_k \leftarrow \edgetest(u, z^{(k)}, \zeta_2, \zeta', \delta')$ \\
        \For{$\ell \in [s]$}{
            $Y_{k\ell} \leftarrow \edgetest(v, w^{(k, \ell)}, \zeta_3, \zeta', \delta')$ \\
            $Z_{k\ell} \leftarrow \edgetest(z^{(k)}, w^{(k, \ell)}, \zeta_3, \zeta', \delta')$ \\
        }
        \eIf{$\frac{1}{s} \sum_{\ell=1}^s Y_{k \ell} \cdot Z_{k \ell} \leq \rho_1$}
        {$B_k \leftarrow 1$}
        {$B_k \leftarrow 0$}
    }
    Set $F \leftarrow 1$ if $\frac{1}{r} \sum_{k=1}^r X_k \cdot B_k \leq \rho_2$ and $F \leftarrow 0$ otherwise. \\
    \Return $F$
\end{algorithm}    
\end{myalgorithm}

\paragraph{Guarantee of {$\BSG$} Test.}
The guarantee of the $\BSG$ test is described in the following theorem. 
\begin{restatable}{theorem}{bsgtestlemma}
\label{lemma:analysis_bsg}
Let $\delta > 0$ and parameters $\rho_1,\rho_2,r,s$ be chosen as in Section~\ref{sec:algo_BSG}. For every $u \in S$ and choice of $\zeta_1,\zeta_2,\zeta_3$ as described above, there exist two sets $A^{(1)}(u) \subseteq A^{(2)}(u)$ defined as follows
\begin{align*}
    A^{(1)}(u) := T(u,\zeta + \mu,\zeta-\mu,\zeta+\mu,\rho_1,\rho_2),\quad A^{(2)}(u) := T(u,\zeta,\zeta,\zeta,10\rho_1/11,10\rho_2/9),
\end{align*}
where $\mu = \zeta_1 - \zeta_2$, such that the output of $\bsgtest$ satisfies the following with probability $\geq 1-\delta$
\begin{enumerate}[(i)]
    \item $\bsgtest(u, v, \zeta_1, \zeta_2, \zeta_3, \rho_1, \rho_2, \delta)=1 \implies v \in A^{(2)}(u)$.
    \item $\bsgtest(u, v, \zeta_1, \zeta_2, \zeta_3, \rho_1, \rho_2, \delta)=0 \implies v \notin A^{(1)}(u)$.
\end{enumerate}
Furthermore, with probability at least $\Omega(\gamma^{487})$ over $u \sim \Dpsi$ and $\zeta_1,\zeta_2,\zeta_3$, we~have
$$
|A^{(1)}(u)| \geq \Omega(\gamma^{138}) \cdot |S| \text{ and } |A^{(2)}(u) + A^{(2)}(u)| \leq O(\gamma^{-932}) \cdot |S|. 
$$
In particular, this implies $|A^{(1)}(u)|/|A^{(2)}(u)|\geq \Omega(\gamma^{1070})$.
\end{restatable}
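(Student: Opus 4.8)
The plan is to split the proof into two essentially independent parts: (A) showing that $\bsgtest$ is a correct one-sided membership oracle for a set of the form $T(u,\cdot)$, squeezed between the two sets $A^{(1)}(u)\subseteq A^{(2)}(u)$, which is a pure concentration argument; and (B) establishing the structural bounds on $|A^{(1)}(u)|$ and $|A^{(2)}(u)+A^{(2)}(u)|$, which adapts the additive-combinatorial arguments of Sudakov et al.~\cite{sudakov2005question} and Tulsiani--Wolf~\cite{tulsiani2014quadratic} from the uniform distribution to the Bell-difference-sampling distribution $\Dpsi$. Throughout I condition on the good event of Lemma~\ref{lem:sample_bds}, so every call to $\sample$ inside $\bsgtest$ returns a point drawn from $\Dpsi$ supported on a fixed good approximate subgroup $S$.

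\textbf{Part (A): correctness of the oracle.} Unwinding Algorithm~\ref{algo:bsg_test}, the quantity $\tfrac1s\sum_\ell Y_{k\ell}Z_{k\ell}$ is an empirical estimate of $\Pr_{v_2\sim\Dpsi}[v_2\in N_{\zeta_3}(v)\cap N_{\zeta_3}(z^{(k)})]$, and $\tfrac1r\sum_k X_kB_k$ estimates $\Pr_{v_1\sim\Dpsi}[v_1\in N_{\zeta_2}(u)\text{ and }\Pr_{v_2}[v_2\in N_{\zeta_3}(v)\cap N_{\zeta_3}(v_1)]\le\rho_1]$, i.e.\ exactly the defining quantity of $T(u,\zeta_1,\zeta_2,\zeta_3,\rho_1,\rho_2)$ from Eq.~\eqref{eq:approx_sets_in_bsg}. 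Taking $r,s=\poly(1/\gamma,\log(1/\delta))$ and applying Hoeffding's bound (Fact~\ref{fact:hoeffding_sampling}) with a union bound over the $O(r)$ outer and $O(rs)$ inner $\edgetest$ calls, every such estimate is within $\rho_1/100$ of its expectation with probability $\ge1-\delta$, while each $\edgetest$ itself is correct for the sandwiched edge sets $\calE_{\zeta_i+\zeta'}\subseteq\calE_{\zeta_i-\zeta'}$ with $\zeta'=\rho_1/100$. Pushing the $\zeta'$ slack in the edge thresholds and the $\rho_1/100$ slack in the two averages through the definition of $T(u,\cdot)$, and using that the sub-interval is chosen so that $\zeta'<\mu/2$, one verifies that $F=1$ forces $v\in A^{(2)}(u)=T(u,\zeta,\zeta,\zeta,10\rho_1/11,10\rho_2/9)$ (the loosest set) while $F=0$ forces $v\notin A^{(1)}(u)=T(u,\zeta+\mu,\zeta-\mu,\zeta+\mu,\rho_1,\rho_2)$ (the tightest set); since $A^{(1)}(u)\subseteq A^{(2)}(u)$ the two are consistent. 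This step carries most of the parameter bookkeeping but each piece is routine.

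\textbf{Part (B): structural bounds.} For the doubling bound I count representations. Fix $u$ for which the $\Dpsi$-weight of $N_\zeta(u)$ exceeds $3\rho_2$; for $v,v'\in A^{(2)}(u)$ both have large codegree ($>10\rho_1/11$) with all but a $10\rho_2/9$-fraction of $N_\zeta(u)$, so there is a positive-weight set of $v_1\in N_\zeta(u)$ having large codegree with \emph{both}. Fixing such a $v_1$, there are $\ge 10\rho_1/11$-weight sets of $v_2$ (with $v+v_2,v_1+v_2\in S$) and of $v_3$ (with $v'+v_3,v_1+v_3\in S$), and writing $v+v'=(v+v_2)+(v_2+v_1)+(v_1+v_3)+(v_3+v')$ exhibits $v+v'$ as a sum of four elements of $S$; since the map $(v_1,v_2,v_3)\mapsto(v+v_2,v_2+v_1,v_1+v_3)$ is injective, converting $\Dpsi$-weights to cardinalities via the bound $\Dpsi(x)\le 2^{-n}\gamma^{-O(1)}$ of Fact~\ref{fact:D_ub} yields $\ge\poly(\gamma)\cdot|S|^3$ such quadruples. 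As $S^4$ has only $|S|^4$ quadruples, this gives $|A^{(2)}(u)+A^{(2)}(u)|\le O(\gamma^{-932})\cdot|S|$. For the lower bound on $|A^{(1)}(u)|$ I bound $|Q(u)|=|N_{\zeta+\mu}(u)\setminus T(u)|$: a vertex lies in $Q(u)$ only if it has small codegree with a $>\rho_2$-fraction of $N(u)$, and a $\Dpsi$-weighted codegree-concentration estimate (the analogue of the cherry-counting lemma of~\cite{sudakov2005question}, now under $\Dpsi$) shows that for the constants $\rho_1,\rho_2$ fixed in Section~\ref{sec:algo_BSG} such vertices carry at most half the $\Dpsi$-weight of $N(u)$; hence $|T(u)|\ge\tfrac12|N_{\zeta+\mu}(u)|\ge\Omega(\gamma^{138})|S|$ after invoking the $|N_\zeta(u)|$-bound of Lemma~\ref{lem:Dpsilowerbound} and converting weight to cardinality. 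Both ``fix a good $u$'' steps succeed for a $\poly(\gamma)$-fraction of $u\sim\Dpsi$ by reverse Markov (Fact~\ref{fact:lowerboundexpectation}) applied to Lemma~\ref{lem:Dpsilowerbound}, and the uniformly random sub-interval of $[\gamma/180,\gamma/18]$ is ``thin'' (so that $N_{\zeta-\mu}(u)$ and $N_{\zeta+\mu}(u)$, hence $A^{(1)}(u)$ and $A^{(2)}(u)$, are comparable) for all but a $\poly(\gamma)$-fraction of the $1/\rho_3$ disjoint choices, since the number of Pauli strings with $2^np_\Psi$-value in $[\gamma/180,\gamma/18]$ is only $O(2^n/\gamma)$ by Markov; combining yields the claimed $\Omega(\gamma^{487})$. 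The final ratio bound is immediate from $|A^{(1)}(u)|\ge\Omega(\gamma^{138})|S|$ and $|A^{(2)}(u)|\le|A^{(2)}(u)+A^{(2)}(u)|\le O(\gamma^{-932})|S|$.

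\textbf{Main obstacle.} Part (A) is routine; the real work is the $\Dpsi$-version of Part (B). The cherry-counting and codegree-concentration statements of~\cite{sudakov2005question,tulsiani2014quadratic} are proved for the uniform distribution on $S$, where every point has weight exactly $1/|S|$, whereas under $\Dpsi$ the point weights vary across the full range allowed by Fact~\ref{fact:D_ub}; one therefore has to re-derive the underlying second-moment / Cauchy--Schwarz estimates as concentration inequalities for the graph $\G(S,\E)$ weighted by $\Dpsi$ — the new concentration inequalities for the Bell-difference-sampling distribution advertised earlier — and then track the resulting $\poly(\gamma)$ losses, which is precisely what forces the large exponents ($\gamma^{138},\gamma^{-932},\gamma^{487}$) and makes the parameter calibration of Section~\ref{sec:algo_BSG} delicate.
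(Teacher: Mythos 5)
Your two-part decomposition — (A) concentration/union-bound for the one-sided oracle, (B) structural bounds via weighted codegree counting and length-$4$ representations — is the same skeleton as the paper's proof, and your doubling argument (fixing a common high-codegree $v_1$, writing $v+v'$ as a sum of four elements of $S$, noting injectivity of the representation map, and converting $\Dpsi$-weight to cardinality with Fact~\ref{fact:D_ub}) matches Claim~\ref{claim:doubling_A2} essentially step for step.

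There is, however, a genuine gap in your lower bound on $|A^{(1)}(u)|$, and it sits exactly at the point the paper flags as the new difficulty of working under $\Dpsi$ rather than the uniform distribution. You argue that the $\Dpsi$-weight of $Q(u)=N_{\zeta+\mu}(u)\setminus T(u)$ is at most half the $\Dpsi$-weight of $N_{\zeta+\mu}(u)$, and then write ``hence $|T(u)|\ge\tfrac12|N_{\zeta+\mu}(u)|$.'' That inference is false when point weights are non-uniform: $Q(u)$ could be almost all of $N_{\zeta+\mu}(u)$ by cardinality while carrying negligible $\Dpsi$-weight, since Fact~\ref{fact:D_ub} provides only an \emph{upper} bound on $\Dpsi(v)$ for $v\in S$ and no per-element lower bound. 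Your parenthetical ``converting weight to cardinality'' points toward the right fix (a weight lower bound on $T(u)$, not a cardinality ratio against $N(u)$, combined with the upper bound $\Dpsi(v)\le C_1/(\gamma^{10}|S|)$), but as written the intermediate cardinality comparison short-circuits the argument. The paper avoids this trap by introducing the auxiliary set $H_{\gamma'}(u)$ of vertices in $N_{\zeta+\mu}(u)$ whose $\Dpsi$-weight is at least $\gamma'/|N_{\zeta+\mu}(u)|$ (Eq.~\eqref{eq:set_Hu}): these vertices have weight bounded on \emph{both} sides, so $|H_{\gamma'}(u)|$ can be bounded below from a weight estimate and $\Exp_u[|Q'(u)|]$ can be bounded above as a \emph{cardinality} via the bad-pair count, and the subtraction $|A^{(1)}(u)\cap H_{\gamma'}(u)|\ge |H_{\gamma'}(u)|-|Q'(u)|$ is between like quantities. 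This device is absent from your sketch. A second, smaller issue in the same vein: the claim ``such vertices carry at most half the $\Dpsi$-weight of $N(u)$ for the constants fixed in Section~\ref{sec:algo_BSG}'' is stated as a per-$u$ fact, but the bad-pair count only controls $\Exp_u[|Q'(u)|]$; passing to a per-$u$ statement requires a further reverse Markov step (Fact~\ref{fact:lowerboundexpectation}) whose cost is a genuine source of the exponent $\gamma^{138}$ and of the probability $\Omega(\gamma^{487})$. You should make this reverse-Markov loss explicit rather than absorbing it into ``combining yields the claimed exponent.''
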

On a high level, the proof of Theorem~\ref{lemma:analysis_bsg} follows similar to that in \cite{tulsiani2014quadratic} except the following: in~\cite{tulsiani2014quadratic}, their definitions of the sets $N_\zeta(\cdot),T_\zeta(\cdot),Q_\zeta(\cdot)$ (that we defined in the start of this section) were defined with respect to the uniform distribution over $S$ whereas we need to account for $u$ sampled from the $\Dpsi$ distribution. To that end, we modify the correctness analysis of their main theorem proof and account for the modified tests and parameters used here. 

We emphasize that the core challenge in redoing their analysis is, for~\cite{tulsiani2014quadratic} they know that for every $u$, the value of the distribution on the point $u$ was $1/|S|$, whereas for us, we only have an \emph{upper bound} for every $D_\Psi(u)$ but not a lower bound for every element $u \in S$. Instead, we use several properties of these sets and show that the cumulative weight on these sets (under $D_\Psi$) has a large-enough lower bound, with which we prove the theorem. Additionally, when using the $\bsgtest$ as part of our eventual $\Selfcorrection$ protocol, we will require that Bell difference sampling concentrates on the set $A^{(2)}(u)$. We show that Bell difference sampling in fact concentrates on $A^{(1)}(u) \subseteq A^{(2)}(u)$ for a good $u$ (which we sample with $\poly(\gamma)$ probability as in the theorem above).
To this end, our main contribution is proving  the following lemma.\footnote{We remark that there are several polynomial factors in $\bsgtest$, that we believe can be optimized further. Additionally, it may be possible to improve the polynomial-factors using an algorithmic version of the $\BSG$ theorem in~\cite{schoen2015new} as attempted recently in \cite{neumann2025adaptive} for the classical setting.}
\begin{restatable}{lemma}{lemmacoreBSG}
\label{lemma:lb_size_A1}
Consider the context of Theorem~\ref{lemma:analysis_bsg}. Let $\gamma'=\gamma^{64}/(2^{41}\cdot 10^{15})$.  Define $H_{\gamma'}(u)$  as
\begin{equation}\label{eq:set_Hu}
    H_{\gamma'}(u) := \{ v \in N_{\zeta + \mu}(u) : D_\Psi(v) \geq \gamma' \cdot |N_{\zeta + \mu}|^{-1} \}.
\end{equation}
Then, with probability 
$\Omega(\gamma^{487})$ over $u \sim \Dpsi$ and parameters $\zeta_1,\zeta_2,\zeta_3$, we have
\begin{enumerate}[$(i)$]
    \item $|H_{\gamma'}(u)| \geq \Omega(\gamma^{74}) \cdot |S|$,
    \item $|A^{(1)}(u)| \geq |A^{(1)}(u) \cap H_{\gamma'}(u)| \geq \Omega(\gamma^{138})\cdot|S|$,
    \item $\sum_{v\in A^{(1)}(u)}D_\Psi(v) \geq \Omega(\gamma^{202})$.
\end{enumerate}
\end{restatable}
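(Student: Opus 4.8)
The plan is to deduce all three items from the good event $\mathcal{G}$ of Theorem~\ref{lemma:analysis_bsg} (which occurs with probability $\Omega(\gamma^{487})$ over $u\sim\Dpsi$ and the parameters $\zeta_1,\zeta_2,\zeta_3$), together with one extra observation. Since the chosen sub-interval $[\zeta-\mu,\zeta+\mu]$ is contained in $[\gamma/180,\gamma/18]$, we have $\zeta+\mu\le\gamma/18<\gamma/4$, so $N_{\gamma/4}(u)\subseteq N_{\zeta+\mu}(u)$ by monotonicity of $N_\zeta(\cdot)$ in $\zeta$; hence Lemma~\ref{lem:Dpsilowerbound} gives $\Exp_{u\sim\Dpsi}\!\left[\sum_{v\in N_{\zeta+\mu}(u)}\Dpsi(v)\right]\ge \gamma^{64}/(2^{39}\cdot10^{15})=4\gamma'$ for every admissible parameter choice, and Fact~\ref{fact:lowerboundexpectation} then shows a $\ge2\gamma'$-fraction of $u\sim\Dpsi$ satisfies $\sum_{v\in N_{\zeta+\mu}(u)}\Dpsi(v)\ge2\gamma'$. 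Since $2\gamma'\gg\gamma^{487}$ and this is exactly the neighborhood-weight condition imposed on $u$ inside the proof of Theorem~\ref{lemma:analysis_bsg}, we may assume it holds on $\mathcal{G}$ (reading $|N_{\zeta+\mu}|$ as $|N_{\zeta+\mu}(u)|$ in the definition of $H_{\gamma'}(u)$).

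\emph{Part $(i)$.} Split $N_{\zeta+\mu}(u)=H_{\gamma'}(u)\sqcup L$ with $L=N_{\zeta+\mu}(u)\setminus H_{\gamma'}(u)$. Every $v\in L$ has $\Dpsi(v)<\gamma'\,|N_{\zeta+\mu}(u)|^{-1}$, so $\sum_{v\in L}\Dpsi(v)<\gamma'$, and hence $\sum_{v\in H_{\gamma'}(u)}\Dpsi(v)\ge2\gamma'-\gamma'=\gamma'$. Combining with the pointwise bound $\Dpsi(v)\le 2^{10}\cdot10^2/(\gamma^{10}|S|)$ from Fact~\ref{fact:D_ub} gives $|H_{\gamma'}(u)|\ge\gamma'\gamma^{10}|S|/(2^{10}\cdot10^2)=\Omega(\gamma^{74})\,|S|$.

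\emph{Parts $(ii)$ and $(iii)$.} Set $Q(u):=N_{\zeta+\mu}(u)\setminus A^{(1)}(u)$, the vertices of $N_{\zeta+\mu}(u)$ failing the defining condition of $T(u,\zeta+\mu,\zeta-\mu,\zeta+\mu,\rho_1,\rho_2)$. The core of the proof of Theorem~\ref{lemma:analysis_bsg}, i.e.\ the $\Dpsi$-weighted form of the Sudakov et al.~\cite{sudakov2005question} double-counting over length-$3$ paths with $\rho_1,\rho_2$ set as in Section~\ref{sec:algo_BSG}, shows that for $\Dpsi$-typical $u$ most $\Dpsi$-weighted pairs $v,v_1\in N(u)$ have a $\Dpsi$-heavy common neighborhood, from which a Markov bound yields $\sum_{v\in Q(u)}\Dpsi(v)\le\gamma'/2$. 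Since $H_{\gamma'}(u)\subseteq N_{\zeta+\mu}(u)$, we have $H_{\gamma'}(u)\setminus A^{(1)}(u)\subseteq Q(u)$, so $\sum_{v\in A^{(1)}(u)\cap H_{\gamma'}(u)}\Dpsi(v)\ge\sum_{v\in H_{\gamma'}(u)}\Dpsi(v)-\sum_{v\in Q(u)}\Dpsi(v)\ge\gamma'-\gamma'/2=\gamma'/2$; dividing again by the upper bound of Fact~\ref{fact:D_ub} gives $|A^{(1)}(u)\cap H_{\gamma'}(u)|\ge\Omega(\gamma^{74})\,|S|\ge\Omega(\gamma^{138})\,|S|$, which is $(ii)$. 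Then $(iii)$ is immediate: each $v\in H_{\gamma'}(u)$ has $\Dpsi(v)\ge\gamma'\,|N_{\zeta+\mu}(u)|^{-1}\ge\gamma'/|S|$, so $\sum_{v\in A^{(1)}(u)}\Dpsi(v)\ge|A^{(1)}(u)\cap H_{\gamma'}(u)|\cdot(\gamma'/|S|)\ge\Omega(\gamma^{138})\cdot\gamma^{64}=\Omega(\gamma^{202})$.

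\emph{Main obstacle.} The only non-routine ingredient is the mass bound $\sum_{v\in Q(u)}\Dpsi(v)\le\gamma'/2$, and this is exactly where the argument must depart from~\cite{tulsiani2014quadratic}: in the uniform setting one bounds the \emph{cardinality} of the bad set $Q(u)$ directly, whereas here $\Dpsi$ is known only through the two-sided estimates of Fact~\ref{fact:D_ub}, so the Sudakov-type path-counting must be rerun entirely in terms of $\Dpsi$-mass (placing every vertex of a length-$3$ path according to $\Dpsi$), and $\rho_1,\rho_2$ must be calibrated against the worst-case ratio of $\Dpsi(v)$ to the uniform weight $1/|S|$ — which is what inflates all the exponents. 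I would therefore prove this lemma in tandem with Theorem~\ref{lemma:analysis_bsg}, carrying along the heavy-vertex bookkeeping above; alternatively, if the cardinality lower bound in Theorem~\ref{lemma:analysis_bsg} is itself obtained by exhibiting $\Omega(\gamma^{138})|S|$ heavy vertices inside $A^{(1)}(u)$, then $(ii)$ can simply be read off, with $(i)$ and $(iii)$ remaining the short weight computations above.
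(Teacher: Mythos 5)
Your parts $(i)$ and $(iii)$ follow the paper's argument quite closely: the split of $N_{\zeta+\mu}(u)$ into $H_{\gamma'}(u)$ and its light complement, the use of $\Dpsi(v)\le C_1/(\gamma^{10}|S|)$ from Fact~\ref{fact:D_ub} to convert mass into cardinality for $(i)$, and the pointwise $\Dpsi(v)\ge\gamma'/|S|$ on heavy vertices to convert back for $(iii)$, are all exactly what the paper does.

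The gap is in $(ii)$, which you correctly call the only non-routine ingredient and then do not prove. The assertion that ``a Markov bound yields $\sum_{v\in Q(u)}\Dpsi(v)\le\gamma'/2$'' is not something that falls out of the setup; it is the content of the lemma. The paper's appendix argument is organized differently from your sketch in two ways that matter. First, it controls $\Exp_{u\sim\Dpsi}[|Q'(u)|]$ as a cardinality in expectation (not as a pointwise $\Dpsi$-mass), and only afterwards uses Fact~\ref{fact:lowerboundexpectation} to pass to a pointwise statement for a $\poly(\gamma)$-fraction of $u$. Second, and more substantively, the bound on $\Exp_u[|Q'(u)|]$ splits into two error terms: one coming from the genuinely bad $(v,v_1)$ pairs (controlled via $\rho_1$, a sum over bad pairs and the $\Dpsi(u)$-weight of $u$'s witnessing them), and a second term of the form $|S|\cdot\Exp_u\big[|N_{\zeta-\mu}(u)\setminus N_{\zeta+\mu}(u)|\big]$ coming from the asymmetry between the $\zeta-\mu$ and $\zeta+\mu$ neighborhoods used in the definitions of $A^{(1)}(u)$. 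This second term is killed by subdividing $[\gamma/180,\gamma/18]$ into $1/\rho_3$ sub-intervals and pigeonholing, which costs an extra $\rho_3=\Omega(\gamma^{349})$ factor in the probability over the parameter choice — and it is precisely this $\rho_3$-loss, multiplied into the $\Omega(\gamma^{138})$ probability over $u$, that produces the claimed $\Omega(\gamma^{487})$. Your sketch omits the asymmetric-neighborhood error term and the pigeonhole entirely, so even granting the $\Dpsi$-mass bound on $Q(u)$ you would not account for the stated probability.

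One more caution on framing: Lemma~\ref{lemma:lb_size_A1} is one of the ingredients \emph{used} in the proof of Theorem~\ref{lemma:analysis_bsg} (together with Claim~\ref{claim:doubling_A2}), not a consequence of it, so ``deduce all three items from the good event of Theorem~\ref{lemma:analysis_bsg}'' is circular as a starting point. Your closing ``alternative'' — that $(ii)$ can be read off if the $\Omega(\gamma^{138})|S|$ bound is established by exhibiting heavy vertices inside $A^{(1)}(u)$ — is in fact how the paper proceeds, but that calculation is Lemma~\ref{lemma:lb_size_A1}'s own proof, not something available to cite from Theorem~\ref{lemma:analysis_bsg}.
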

\begin{proof}We provide the proof of $(i),(iii)$ here and defer the proof of $(ii)$ to shown in Appendix~\ref{appsec:proof_bsg} (since it is similar to~\cite{tulsiani2014quadratic}).  We first show $(i)$. Using
 Fact~\ref{fact:lowerboundexpectation} and Lemma~\ref{lem:Dpsilowerbound}, we have that
\begin{align}\label{eq:Donvislargewhensampled}
\Pr_{u\sim D_\Psi}\Big[\sum_{v\in N_{\zeta + \mu}(u)}D_\Psi(v)\geq \gamma^{64}/(2C_2)\Big]\geq \gamma^{64}/(2 C_2),
\end{align}
where $C_2 = 2^{39} \times 10^{15}$. Let $\gamma' = \gamma^{64}/(4 C_2)$. For a good $u$ for which the event of Eq.~\eqref{eq:Donvislargewhensampled} is true, we then observe that
\begin{align*}
2 \gamma' \leq \sum_{v\in N_{\zeta + \mu}(u)}D_\Psi(v)&=\sum_{v\in H_{\gamma'}(u)}D_\Psi(v) + \sum_{v\in N_{\zeta + \mu}(u)\backslash H_{\gamma'}(u)}D_\Psi(v)\\
&\leq |H_{\gamma'}(u)|\cdot \frac{C_1}{\gamma^{10} \cdot |S|} + \gamma' \cdot |N_{\zeta + \mu}|^{-1} \cdot |N_{\zeta + \mu}(u) \backslash H_{\gamma'}(u)|\\
&\leq |H_{\gamma'}(u)|\cdot \frac{C_1}{\gamma^{10} \cdot |S|} + \gamma',
\end{align*}
where we used the upper bound of $D_\Psi$ from Fact~\ref{fact:D_ub} (with $C_1 = 2^{10} \cdot 10^2$) and definition of $H(u)$ (Eq.~\eqref{eq:set_Hu}) in the second line. This implies that for a good $u$
\begin{equation}\label{eq:HuislargeifDpsilarge}
|H_{\gamma'}(u)|\geq \gamma^{74}/(4 C_1 C_2) \cdot |S|.    
\end{equation}
We can then lower bound
\begin{align}\label{eq:exp_Hu_lb}
\Exp_{u\sim D_\Psi}[|H_{\gamma'}(u)|]&\geq \Exp_{u\sim D_\Psi}\Big[|H_{\gamma'}(u)|\Big|\sum_{v\in N(u)}D_\Psi(v)\geq 2\gamma'\Big]\cdot \Pr_{u\sim D_\Psi}\Big[\sum_{v\in N(u)}D_\Psi(v)\geq 2\gamma'\Big]\\
&\geq  \Exp_{u\sim D_\Psi}\Big[|H_{\gamma'}(u)|\Big|\sum_{v\in N(u)}D_\Psi(v)\geq 2\gamma'\Big]\cdot (2\gamma')\\
&\geq \gamma^{138}/(4C_1 C_2^2) \cdot |S|,
\end{align}
where we used Eq.~\eqref{eq:Donvislargewhensampled} in the second inequality and third inequality used Eq.~\eqref{eq:HuislargeifDpsilarge}.

We now prove item $(iii)$ assuming $(ii)$. We lower bound the weight under the distribution $D_\Psi$ over $A^{(1)}(u)$ by considering its overlap with $H_{\gamma'}(u)$ as follows:
\begin{align}
\Exp_{u\sim D_\Psi}\Big[\sum_{v\in A^{(1)}(u)}D_\Psi(v)\Big]
&\geq \Exp_{u\sim D_\Psi}\Big[\sum_{v\in A^{(1)}(u)\cap H_{\gamma'}(u)} D_\Psi(v)\Big] \\
&\geq \gamma' \cdot |S|^{-1} \cdot \Exp_{u\sim D_\Psi}\Big[|A^{(1)}(u)\cap H_{\gamma'}(u)|\Big]\\
&\geq \gamma' \cdot |S|^{-1} \cdot \frac{\gamma^{138}}{8 C_1 C_2^2} \cdot |S| \\
&\geq \frac{\gamma^{202}}{32 C_1 C_2^3}.
\end{align}
In particular, for a good $u\sim D_\Psi$ and parameters $\zeta_1,\zeta_2,\zeta_3$, which we sample with probability $\geq \Omega(\gamma^{487})$, we have that 
\begin{align}
\sum_{v\in A^{(1)}(u)}D_\Psi(v)\geq \gamma^{202}/(64 C_1 C_2^3),
\end{align}
 concluding the proof of the lemma.
\end{proof}
Using Lemma~\ref{lemma:lb_size_A1}, we then show that the set $A^{(2)}(u)$ as defined in Theorem~\ref{lemma:analysis_bsg} is also pretty large with high probability over the choice of $u$ (from $D_\Psi$) since $A^{(1)}(u) \subseteq A^{(2)}(u)$. We defer the proof of this statement to Appendix~\ref{appsec:proof_bsg}.
\begin{restatable}{claim}{smalldoublingAtwo}
\label{claim:doubling_A2}
Consider the context of Theorem~\ref{lemma:analysis_bsg} and let $|A^{(1)}(u)| \geq \Omega(\gamma^{138}) \cdot |S|$.~Then,  
$$
|A^{(2)}(u)+A^{(2)}(u)| \leq O(1/\gamma^{932}) \cdot |A^{(2)}(u)|
$$
\end{restatable}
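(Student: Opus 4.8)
The plan is to carry out the Sudakov--Szemer\'edi--Vu argument~\cite{sudakov2005question} underlying the $\BSG$ theorem (Theorem~\ref{thm:bsg}), in the algorithmic form of Tulsiani and Wolf~\cite{tulsiani2014quadratic}, reworked for the fact that our vertices come from the Bell-difference distribution $\Dpsi$ rather than the uniform distribution on $S$. First I would reduce to the sumset bound $|A^{(2)}(u)+A^{(2)}(u)|\le \poly(1/\gamma)\cdot|S|$ for the $u$ and parameters $\zeta_1,\zeta_2,\zeta_3$ that are good in the sense of Lemma~\ref{lemma:lb_size_A1}; granting this, the hypothesis $|A^{(1)}(u)|\ge \Omega(\gamma^{138})|S|$ together with $A^{(1)}(u)\subseteq A^{(2)}(u)$ gives $|S|\le O(\gamma^{-138})|A^{(2)}(u)|$, hence the stated $|A^{(2)}(u)+A^{(2)}(u)|\le O(\gamma^{-932})|A^{(2)}(u)|$ --- this is the only point at which the hypothesis of the claim is used.

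The key structural step is that, for a good $u$, \emph{every} pair $v,v'\in A^{(2)}(u)$ is ``doubly connected'': there is $\Dpsi$-mass at least $\poly(\gamma)$ of vertices $v_1\in N_\zeta(u)$ such that both $\sum_{w_2\in N_\zeta(v)\cap N_\zeta(v_1)}\Dpsi(w_2)>10\rho_1/11$ and $\sum_{w_2\in N_\zeta(v')\cap N_\zeta(v_1)}\Dpsi(w_2)>10\rho_1/11$. Indeed, the defining inequality of $A^{(2)}(u)=T(u,\zeta,\zeta,\zeta,10\rho_1/11,10\rho_2/9)$ from Eq.~\eqref{eq:approx_sets_in_bsg} bounds, separately for $v$ and for $v'$, the $\Dpsi$-probability of a ``bad'' $v_1\in N_\zeta(u)$ (one with small common-neighbor mass) by $10\rho_2/9$; a union bound removes both bad sets, and Lemma~\ref{lem:Dpsilowerbound} (via Eq.~\eqref{eq:Donvislargewhensampled}, using that $N_\zeta(u)\supseteq N_{\zeta+\mu}(u)$) gives $\sum_{v_1\in N_\zeta(u)}\Dpsi(v_1)\ge \poly(\gamma)$ for a good $u$. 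The parameters of Section~\ref{sec:algo_BSG} are calibrated precisely so that $10\rho_2/9$ is much smaller than this mass, leaving $\ge \poly(\gamma)$ doubly-good mass.

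Next I would turn connectivity into additive representations. For a doubly-good $v_1$ and any $w_2\in N_\zeta(v)\cap N_\zeta(v_1)$, $w_2'\in N_\zeta(v')\cap N_\zeta(v_1)$, the telescoping identity $v+v'=(v+w_2)+(w_2+v_1)+(v_1+w_2')+(w_2'+v')$ exhibits $v+v'$ as an ordered sum of four elements of $S$ (each summand is an edge of $\G(S,\E_\zeta)$, hence lies in $S$, and $N_\zeta(\cdot)\subseteq S$ by Eq.~\eqref{eq:bsg_set_N}), and once $v$ is fixed the triple $(w_2,v_1,w_2')$ is recovered from the $4$-tuple, so distinct triples give distinct $4$-tuples. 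Converting the $\Dpsi^{\otimes 3}$-mass $\ge \poly(\gamma)$ of admissible triples into a cardinality through the upper bound $\Dpsi(x)\le O(\gamma^{-12})2^{-n}$ of Fact~\ref{fact:D_ub} and the size bound $|S|\le 4\gamma^{-1}2^n$ of Theorem~\ref{thm:large_set_gamma_qPsi_expectations} shows that every $w\in A^{(2)}(u)+A^{(2)}(u)$ has at least $\poly(\gamma)\cdot|S|^3$ representations as an ordered sum of four elements of $S$. Since the number of such representations summed over all $w$ is exactly $|S|^4$, it follows that $|A^{(2)}(u)+A^{(2)}(u)|\le |S|^4/(\poly(\gamma)\,|S|^3)=\poly(1/\gamma)\cdot|S|$, and the reduction of the first paragraph completes the proof.

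The main obstacle, as throughout this section, is the non-uniformity of $\Dpsi$: in~\cite{tulsiani2014quadratic} every point of the choice set has weight exactly $1/|S|$, so ``probability $\ge c$'' and ``$\ge c|S|$ points'' are interchangeable, whereas here one has only the two-sided bounds $q_\Psi(x)\cdot\gamma/4\le \Dpsi(x)\le O(\gamma^{-12})2^{-n}$ of Fact~\ref{fact:D_ub} and in particular no uniform lower bound on $\Dpsi(x)$ over $x\in S$. Every averaging above must therefore be tracked as a $\Dpsi$-mass --- of good $v_1$, of common-neighbor sets, of admissible triples --- invoking the quantitative mass lower bounds of Lemmas~\ref{lem:Dpsilowerbound} and~\ref{lemma:lb_size_A1}, and cardinalities are recovered only at the very end through the upper bound in Fact~\ref{fact:D_ub}. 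Making the doubly-connected step hold for \emph{all} pairs (which is what lets us avoid any bad-pair deletion and keep the sumset bound clean) is exactly why $\rho_1,\rho_2$ and the sampling interval for $\zeta_1,\zeta_2,\zeta_3$ were fixed to those particular polynomial-in-$\gamma$ values; the rest is the lengthy but routine exponent bookkeeping that lands the final bound at $O(\gamma^{-932})$.
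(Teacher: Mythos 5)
Your proposal is correct and follows essentially the same Sudakov--Szemer\'edi--Vu / Tulsiani--Wolf argument as the paper's proof: a union bound over the defining inequality of $T(u,\zeta,\zeta,\zeta,10\rho_1/11,10\rho_2/9)$ gives, for any $v,v'\in A^{(2)}(u)$, a $\poly(\gamma)$-mass of common ``doubly-good'' middle vertices, each with many common neighbours with both $v$ and $v'$; telescoping across the resulting length-$4$ paths yields many $S^4$-representations of $v+v'$; and dividing the total count $|S|^4$ by the per-element count gives the sumset bound. The genuine (though minor) variation is in where the middle vertex lives and where the hypothesis enters: you take $v_1\in N_\zeta(u)$ with $\Dpsi$-mass $\Omega(\gamma^{64})$ from Lemma~\ref{lem:Dpsilowerbound} and then invoke the hypothesis $|A^{(1)}(u)|\ge\Omega(\gamma^{138})|S|$ at the very end to pass from a bound $\le\poly(1/\gamma)|S|$ to one against $|A^{(2)}(u)|$, whereas the paper restricts the middle vertex to $A^{(2)}(u)$ (mass $\Omega(\gamma^{202})$ from Lemma~\ref{lemma:lb_size_A1}(iii)) and treats the $|A^{(1)}(u)|$ hypothesis as implicit in ``good $u$''. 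The two routes trade a $\gamma^{138}$ factor between the mass of the middle-vertex set and the $|S|$-to-$|A^{(2)}(u)|$ conversion, so the exponents land in the same place. Your final double-counting step, comparing against $|S|^4$ (after fixing a canonical $v$ so that paths inject into $4$-tuples), is cleaner than the paper's bound via $|A^{(2)}(u)|^4$: the summands $v+w_2$, $w_2+v_1$, etc., live in $S$ rather than $A^{(2)}(u)$, so $|S|^4$ is the natural cap, and the hypothesis is then genuinely needed exactly where you flag it.
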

With these two claims, we are now ready to complete the proof of Theorem~\ref{lemma:analysis_bsg}.
\begin{proof}[\textbf{Proof of Theorem~\ref{lemma:analysis_bsg}}]
We present the algorithm for deciding membership in $A^{(1)}(u)$ and $A^{(2)}(u)$ in Algorithm~\ref{algo:bsg_test}. Using the definitions of $A^{(1)}(u)$ and $A^{(2)}(u)$ and our earlier observation  that $A^{(1)}(u) \subseteq A^{(2)}(u)$. Moreover, by Lemma~\ref{lemma:lb_size_A1} and Claim~\ref{claim:doubling_A2}, we have that with probability $\Omega(\gamma^{487})$ over the choice of $u \sim D_\Psi$ and parameters $\zeta_1,\zeta_2,\zeta_3$, we obtain
$$
|A^{(1)}(u)| \geq \Omega(\gamma^{138}) |S|, \qquad |A^{(2)}(u)+A^{(2)}(u)| \leq O(1/\gamma^{932}) \cdot |A^{(2)}(u)|.
$$
It then just remains to argue that for the choice of thresholds $\rho_1, \rho_2$ the lemma statement holds. 
For a given failure probability $\delta > 0$ and by choosing $r,s = \poly(1/\gamma \log(1/\delta))$, we ensure that for a given $v \in N_{\zeta}(u)$ and $v_1 \in N_{\zeta}(u)$, the estimate of $\Pr_{v_2 \sim \Dpsi} \left[v_2 \in N_{\zeta+\mu}(v) \cap N_{\zeta+\mu}(v_1) \right]$ is at most $\gamma^{350}/(10^6 C_1^3 C_2^5)$ or $\rho_1/8$. Similarly, the estimate of $\Pr_{v_1 \sim \Dpsi}\left[v_1 \in N_{\zeta - \mu}(u) \text{ and }  [(v,v_1) \text{ is bad}] \right]$ is at most  $\gamma^{350}/(10^6 C_1^3 C_2^5)$ or $\rho_2/10$.
We also choose the error parameter corresponding to the $\edgetest$ to be $\gamma^{350}/(10^6 C_1^3 C_2^5)$ to ensure that elements are appropriately placed in $N_{\zeta - \mu}(u)$ and $N_{\zeta + \mu}(u)$ where recall that $\mu = (\gamma \cdot \rho_3)/40 = \gamma^{350}/(51200 C_1^3 C_2^5)$.
\end{proof}

\subsection{Symplectic Gram-Schmidt procedure}\label{sec:symplectic_gram_schmidt}
So far, the $\bsgtest$ test produced a membership oracle for a subset which is dense and has small doubling constant. By the polynomial Freiman Ruzsa theorem (Theorem~\ref{thm:marton_conjecture}), this implies that this subset can be covered by polynomially many cosets of the so-called ``Freiman subgroup". Moreover,~\cite{ad2024tolerant} showed that this subgroup has a small stabilizer covering and a stabilizer state corresponding to one of these stabilizer subgroups in the covering has high fidelity with $\ket{\psi}$. In order to find a stabilizer state whose stabilizer subgroup has high overlap with this Freiman subgroup, we now introduce the symplectic Gram-Schmidt (SGS) procedure (Algorithm~\ref{alg:sym_gram_schmidt})~\cite{wilde2009logical,silva2001lectures,fattal2004entanglement} that will enable us in determining the stabilizer covering of the Freiman subgroup. We will discuss this again later when we give our $\Selfcorrection$ protocol by combining the subroutines discussed here, but the above gives the context for discussing the SGS procedure.

We follow the presentation of this procedure as given by Wilde~\cite{wilde2009logical}. The SGS procedure takes as input,  a generating set of Paulis of a subgroup $V$ and outputs a new generating set that is in a canonical form, i.e., a form where each generator anti-commutes with at most one other generator. In particular, Algorithm~\ref{alg:sym_gram_schmidt} will output 
\begin{enumerate}
    \item $C_V$ which will be a basis for $V\cap V^\perp$ (which is an isotropic subspace by definition)
    \item $A_V$ which will be a basis for $V\backslash \langle C_V\rangle$.
\end{enumerate}
Put together, $C_V \cup A_V$ forms a new generating set for $V$. 
\begin{fact}[Symplectic Gram-Schmidt procedure]
Given a generating set of the subgroup $V$ of dimension $m \leq 2n$, Algorithm~\ref{alg:sym_gram_schmidt} outputs a basis $C_V$ of $V\cap V^\perp$ and a basis $A_V$ of the quotient $V \setminus \la C_V \ra$ in time $O(nm^2)$ time.
\end{fact}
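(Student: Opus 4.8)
The plan is to verify three things about Algorithm~\ref{alg:sym_gram_schmidt}: that on termination the output generators are in the claimed canonical symplectic form, that $\langle C_V\rangle$ equals $V\cap V^\perp$ while $\langle C_V\cup A_V\rangle=V$, and that the running time is $O(nm^2)$. The argument is organized around a loop invariant. After a preliminary Gaussian elimination (costing $O(nm^2)$) we may assume the input is a basis $g_1,\dots,g_d$ of $V$ with $d=\dim V\le m$. At the start of each iteration the current generators split into a \emph{processed} part --- hyperbolic pairs $(a_1,b_1),\dots,(a_j,b_j)$ with $[a_i,b_i]=1$ and all other pairwise symplectic products zero, together with central elements $c_1,\dots,c_\ell$ commuting with every processed generator --- and an \emph{unprocessed} part, such that $(a)$ the processed and unprocessed generators together span $V$, and $(b)$ every unprocessed generator commutes with every processed generator. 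The elementary step is: the algorithm picks an unprocessed $g$; if some unprocessed $h$ satisfies $[g,h]=1$ it forms the new pair $(g,h)$ and replaces every remaining unprocessed $g'$ by $g'+[g',h]\,g+[g',g]\,h$, which one checks commutes with both $g$ and $h$ using $[g,h]=1$ and bilinearity of $[\cdot,\cdot]$; otherwise $g$ commutes with all unprocessed generators, hence (by $(b)$) with a spanning set of $V$, so $g$ is genuinely central and is moved to $C_V$.

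Next I would check that the invariant is preserved. Span preservation $(a)$ is immediate, since $g'\mapsto g'+[g',h]g+[g',g]h$ and any row operations from the Gaussian elimination are invertible $\FF_2$-linear maps on the list of generators. For $(b)$: the update is designed so the freshly created pair commutes with all still-unprocessed generators, and it continues to commute with every \emph{later}-processed pair and central element because each of those is an $\FF_2$-linear combination of generators that already commuted with $(g,h)$ and $[\cdot,\cdot]$ is bilinear. Since each iteration removes at least one generator from the unprocessed part, the loop runs $O(d)=O(m)$ times and on termination the unprocessed part is empty.

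From the terminal invariant I would then read off correctness. The $(a_i,b_i)$ and the $c_k$ satisfy the canonical relations and span $V$; since all operations were invertible the output still has $d$ elements, hence is a basis of $V$, so $C_V\cup A_V$ is a basis of $V$. Each $c_k$ commutes with this whole basis, hence with all of $V$, giving $\langle C_V\rangle\subseteq V\cap V^\perp$; conversely, if $v\in V\cap V^\perp$, writing $v=\sum_i(\alpha_i a_i+\beta_i b_i)+\sum_k\gamma_k c_k$ and pairing with $b_j$ and $a_j$ forces $\alpha_j=\beta_j=0$ for all $j$ via the canonical relations, so $v\in\langle C_V\rangle$; thus $\langle C_V\rangle=V\cap V^\perp$ and $C_V$ is a basis of it, while $A_V=\{a_i,b_i\}_i$ projects to a basis of the quotient $V\setminus\langle C_V\rangle$ since $C_V\cup A_V$ is a basis of $V$. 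For the runtime, each of the $O(m)$ iterations performs $O(m)$ evaluations of $[\cdot,\cdot]$ to locate an anticommuting partner and $O(m)$ vector updates in $\FF_2^{2n}$; each symplectic product and each update is $O(n)$, giving $O(nm)$ per iteration and $O(nm^2)$ in total.

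I expect the only mildly delicate point to be the bookkeeping in invariant $(b)$: the claim that a hyperbolic pair, once processed, stays symplectically orthogonal to \emph{all} pairs and central elements created afterwards. The clean way to see this is to avoid re-verifying commutation at every later step and instead use that every element produced later is an $\FF_2$-combination of elements each of which already commuted with the given pair, so bilinearity finishes the job; everything else is routine linear algebra over $\FF_2$ together with a standard operation count.
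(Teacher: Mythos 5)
The paper does not actually prove this Fact; it is presented as a known property of the symplectic Gram--Schmidt procedure and attributed to Wilde (reference \cite{wilde2009logical}), with Algorithm~\ref{alg:sym_gram_schmidt} given only for concreteness. Your proof is therefore a self-contained reconstruction rather than a parallel to an argument in the paper, and it is correct. The loop invariant you maintain (processed hyperbolic pairs and central elements in canonical symplectic form, unprocessed residue still spanning $V$ and symplectically orthogonal to the processed part) is exactly the right one; the calculation that $g'\mapsto g'+[g',h]g+[g',g]h$ has zero symplectic pairing with both $g$ and $h$ is correct over $\FF_2$ using $[g,h]=1$ and the alternating property; the bilinearity argument that a pair, once processed, stays orthogonal to everything produced later is precisely what is needed to close the invariant; and the $O(m)$-iteration, $O(nm)$-per-iteration count matches $O(nm^2)$.

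One small but real addition in your proof is the preliminary Gaussian elimination to a basis. As written, Algorithm~\ref{alg:sym_gram_schmidt} applied to a linearly \emph{dependent} generating set can append repeated or dependent elements into $C_V$ via the commuting branch (the check ``$g\neq I$'' does not detect dependence, only the identity), so the output need not be a basis. Your independence-preservation observation shows that if the input is already a basis, nothing ever collapses to $I$ or to a duplicate, so the preliminary reduction is harmless there and necessary only for the fully general statement. This makes your proof slightly more careful than the Fact/Algorithm pairing as printed, which is a strength worth keeping.
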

\begin{myalgorithm} \setstretch{1.35}
\begin{algorithm}[H]
    \label{alg:sym_gram_schmidt}
    \caption{\symgramschmidt($g_1,g_2,\dots,g_m$)} 
    \setlength{\baselineskip}{1.5em} 
    \DontPrintSemicolon 
    \KwInput{List of generators of a subgroup $V$= $\{g_1,g_2,\dots,g_m\}$}
    \KwOutput{Basis $C_V$ of the center of group $V$ i.e., $\la C_V \ra = V \cap V^\perp$ and basis $A_V$ of the quotient i.e., $\la A_V \ra = V \setminus \la C_V \ra$.}
    
    Initialize empty lists $C_V \leftarrow \emptyset$ and $A_V \leftarrow \emptyset$ \\
    Initialize list $G \leftarrow \{g_1,g_2,\dots,g_{m}\}$ \\
    \While{$G \neq \emptyset$}{
        Remove the first element $g$ from $G$ \\
        Find the first element of $G$ that anticommutes with $g$. If it exists, call it $h$ and remove it from $G$. \\
        \uIf{$h$ exists}{
            Update $A_V \leftarrow A_V \cup \{g,h\}$ \\
            \For{$x \in G$}{
                If $[x,h]=0$, set $a \leftarrow 0$ else set $a \leftarrow 1$. \\
                If $[x,g]=0$, set $b \leftarrow 0$ else set $b \leftarrow 1$. \\
                Replace $x \leftarrow x g^a h^b$. \\
            }
        }
        \Else{
            If $g \neq I$, append $g$ to $C_V$ i.e., $C_V \leftarrow C_V \cup \{g\}$. \\
        }
    }
    \Return $C_V$ and $A_V$
\end{algorithm} 
\end{myalgorithm}
Given a subgroup $V \subseteq \FF_2^{2n}$ of Paulis, we know that there exists a Clifford unitary $U$ that maps it to a canonical form as stated in the following fact  where we denote $\calP^k$ as the set of $k$-qubit Paulis and we denote $\calP^{m}_{\calZ} = \{I,Z\}^{\otimes m}$.
\begin{fact}\label{fact:clifford_action_on_group}(\cite{fattal2004entanglement}) Suppose $V$ is a subgroup of the $n$-qubit Pauli group. Then, there exists $m+k \leq n$ and an $n$-qubit Clifford $U$~such~that
\begin{equation}
    U V U^\dagger = \la Z_1, X_1, \ldots, Z_k, X_k, Z_{k+1}, Z_{k+2}, \ldots, Z_{k+m} \ra = \calP^k \times \calP_\calZ^m.
\end{equation}
\end{fact}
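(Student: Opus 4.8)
The plan is to derive the Clifford $U$ directly from the output of the symplectic Gram--Schmidt procedure (Algorithm~\ref{alg:sym_gram_schmidt}), synthesizing it qubit-by-qubit using the two Clifford-synthesis primitives already recorded, namely Lemma~\ref{lem:clifford_anticomm_paulis} (for an anticommuting pair) and Lemma~\ref{lem:clifford_isotropic_subspace} (for an isotropic subspace). First I would run $\symgramschmidt$ on a generating set of $V$ to obtain a basis $C_V = \{c_1,\dots,c_m\}$ of the center $V\cap V^\perp$ together with a basis $A_V = \{g_1,h_1,\dots,g_k,h_k\}$ of the quotient $V\setminus\la C_V\ra$, where the canonical-form guarantee of the procedure says $[g_i,h_i]=1$ for every $i$ while every other pair of generators among $C_V\cup A_V$ commutes. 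Since $C_V\cup A_V$ is a basis of $V$, the subset $\{c_1,\dots,c_m,g_1,\dots,g_k\}$ is linearly independent, and it is isotropic (the $g_i$ pairwise commute, commute with each $c_j$, and the $c_j$ commute with everything); as an isotropic subspace of $\FF_2^{2n}$ has dimension at most $n$, this already establishes $m+k\le n$, and also $\dim V = 2k+m$.

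Next I would build $U$ iteratively. For the first hyperbolic pair $(g_1,h_1)$, Lemma~\ref{lem:clifford_anticomm_paulis} gives a Clifford $U_1$ with $U_1 g_1 U_1^\dagger = X_1$ and $U_1 h_1 U_1^\dagger = Z_1$. Any other generator $p\in C_V\cup A_V$ commutes with both $g_1$ and $h_1$, so $U_1 p U_1^\dagger$ commutes with both $X_1$ and $Z_1$; the only single-qubit Pauli commuting with both $X$ and $Z$ is the identity, so $U_1 p U_1^\dagger$ is identity on qubit $1$. Thus conjugation by $U_1$ leaves an $(n-1)$-qubit instance on qubits $2,\dots,n$ with the same canonical structure (Clifford conjugation preserves commutation relations and independence). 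Recursing on the remaining $k-1$ pairs produces Cliffords $U_2,\dots,U_k$, and $\widetilde U := U_k\cdots U_1$ sends $\la g_i,h_i : i\in[k]\ra$ to $\la Z_1,X_1,\dots,Z_k,X_k\ra$ and sends $\la C_V\ra$ into the Pauli group on qubits $k+1,\dots,n$. That image is an $m$-dimensional isotropic subspace, so Lemma~\ref{lem:clifford_isotropic_subspace} gives a Clifford $U_0$ acting only on qubits $k+1,\dots,n$ mapping it to $\la Z_{k+1},\dots,Z_{k+m}\ra$; because $U_0$ touches none of the first $k$ qubits it fixes $Z_i,X_i$ for $i\le k$. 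Setting $U := U_0\widetilde U$ then yields $UVU^\dagger = \la Z_1,X_1,\dots,Z_k,X_k,Z_{k+1},\dots,Z_{k+m}\ra = \calP^k\times\calP_\calZ^m$.

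The main obstacle is bookkeeping rather than conceptual: one must check that each conjugation preserves the canonical anticommutation structure, isotropy, and linear independence so that the recursion is well-defined at every stage, and that the later $U_i$ (and the final $U_0$) act only on the yet-untouched qubits, so they do not disturb the $X_i,Z_i$ already placed. Signs never enter, since the statement concerns the unsigned Pauli group $V$ (equivalently, the subspace of $\FF_2^{2n}$), and the symplectic Gram--Schmidt output and all the synthesis lemmas are stated at that level; if a signed version were wanted one would afterwards absorb the residual signs into a Pauli correction, but that is not needed here.
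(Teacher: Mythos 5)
Your proof is correct and follows essentially the same route the paper takes in the proof of the adjacent algorithmic version, Claim~\ref{claim:clifford_synthesis_subgroup} (the Fact itself is just cited from \cite{fattal2004entanglement}): run $\symgramschmidt$ to split $V$ into a center $C_V$ and hyperbolic pairs $A_V$, then synthesize $U$ piece by piece via Lemma~\ref{lem:clifford_anticomm_paulis} and Lemma~\ref{lem:clifford_isotropic_subspace}, with the dimension bound $m+k\le n$ coming from the isotropy of $\{c_1,\dots,c_m,g_1,\dots,g_k\}$. The only difference is the order --- you map the anticommuting pairs first and the center last, so the cleanup argument is a direct ``a Pauli commuting with both $X_1$ and $Z_1$ is identity on qubit $1$'' observation, whereas the paper maps the center first and then has to multiply the $g_i',h_i'$ by products of $Z_{k+1},\dots,Z_{k+m}$ to clear those qubits; this ordering choice is cosmetic and both variants are sound.
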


We now claim that the Clifford unitary in the fact above can  be determined efficiently.
\begin{claim}\label{claim:clifford_synthesis_subgroup}
    Suppose $V$ is a subgroup of the $n$-qubit Pauli group. The Clifford unitary $U$ that takes $V$ to the canonical form of Fact~\ref{fact:clifford_action_on_group} can be determined in $O(n^3)$ time.
\end{claim}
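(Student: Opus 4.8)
The plan is to first bring the given generating set of $V$ into a canonical form with the symplectic Gram--Schmidt procedure, and then build $U$ one hyperbolic pair at a time using the two Clifford-synthesis primitives recalled in Lemma~\ref{lem:clifford_anticomm_paulis} and Lemma~\ref{lem:clifford_isotropic_subspace}. First I would run \symgramschmidt on a generating set of $V$ to obtain $C_V=\{c_1,\dots,c_m\}$, a basis for the center $V\cap V^\perp$, and $A_V=\{g_1,h_1,\dots,g_k,h_k\}$, where (by the canonical-form guarantee) each $g_i$ anti-commutes with $h_i$ and commutes with every other element of $C_V\cup A_V$, while every $c_j$ commutes with all of $C_V\cup A_V$. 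By the fact stated just before this claim, this step costs $O(nm^2)=O(n^3)$ time since $m=\dim V\le 2n$, and it reduces the problem to synthesizing a Clifford that sends $(g_i,h_i)\mapsto(Z_i,X_i)$ for $i\in[k]$ and $c_j\mapsto Z_{k+j}$ for $j\in[m]$; the order within each hyperbolic pair is immaterial because $\langle Z_i,X_i\rangle=\langle X_i,Z_i\rangle$, and $k+m\le n$ since $\langle g_1,\dots,g_k\rangle\cup C_V$ spans an isotropic subspace of dimension $k+m$.

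The synthesis proceeds in $k+1$ stages. Entering stage $i$ (for $1\le i\le k$) we maintain the invariant that, on the working register of qubits $i,\dots,n$, we still hold the anti-commuting pair $(g_i,h_i)$ together with the not-yet-processed generators $g_{i+1},h_{i+1},\dots,g_k,h_k,c_1,\dots,c_m$, all of which act as the identity on qubits $1,\dots,i-1$. Applying Lemma~\ref{lem:clifford_anticomm_paulis} to $(g_i,h_i)$ yields an $(n-i+1)$-qubit Clifford $U_i$ with $U_i g_i U_i^\dagger=X_i$ and $U_i h_i U_i^\dagger=Z_i$; we then conjugate all remaining generators by $U_i$. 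Since each remaining generator commuted with both $g_i$ and $h_i$, its conjugate commutes with both $X_i$ and $Z_i$ and hence acts trivially on qubit $i$, which restores the invariant for qubits $i+1,\dots,n$. After stage $k$, the images of $c_1,\dots,c_m$ span an isotropic subspace (isotropy is preserved under Clifford conjugation) of the Pauli group on qubits $k+1,\dots,n$, so in the last stage we invoke Lemma~\ref{lem:clifford_isotropic_subspace} on this subspace to obtain a Clifford $C$ acting only on qubits $k+1,\dots,n$ that maps it to $\langle Z_{k+1},\dots,Z_{k+m}\rangle$ (composing with a qubit permutation within $\{k+1,\dots,n\}$ to align indices with the $0^{2(n-k)-m}\otimes\FF_2^m$ normal form of the lemma). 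Because $C$ is supported on qubits $k+1,\dots,n$, it commutes with $X_1,Z_1,\dots,X_k,Z_k$ and therefore does not disturb the already-placed hyperbolic pairs; hence $U:=C\,U_k\cdots U_1$ conjugates $V$ into the canonical form of Fact~\ref{fact:clifford_action_on_group}.

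For the running time: Lemma~\ref{lem:clifford_anticomm_paulis} produces each $U_i$ in $O(n)$ time as a circuit of $O(n)$ gates, and conjugating a single Pauli through one Clifford gate is an $O(1)$ update of its symplectic vector, so pushing the $O(n)$ tracked generators through $U_i$ costs $O(n^2)$; over the $k\le n$ stages this is $O(n^3)$. The final call to Lemma~\ref{lem:clifford_isotropic_subspace} costs $O(n^3)$ (with $O(n)$ input vectors), and \symgramschmidt costs $O(n^3)$, so the overall runtime is $O(n^3)$; concatenating the circuits for $U_1,\dots,U_k,C$ and the permutation gives an explicit Clifford circuit for $U$ with $O(n^2)$ gates. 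I expect the only delicate point to be the bookkeeping that makes the peeling valid: one must check that the canonical-form property output by \symgramschmidt is exactly what localizes the untouched generators off qubit $i$ after stage $i$, and that the isotropic-cleanup step genuinely commutes past the hyperbolic pairs already synthesized. This is routine to verify and involves no deep step.
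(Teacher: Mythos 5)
Your proof is correct and uses essentially the same ingredients as the paper's: the symplectic Gram--Schmidt routine to obtain $C_V$ and $A_V$, Lemma~\ref{lem:clifford_anticomm_paulis} for each hyperbolic pair, Lemma~\ref{lem:clifford_isotropic_subspace} for the center, and the same peeling/commutation argument that localizes the untouched generators off the just-processed qubit, with the same $O(n^3)$ accounting. The only (minor) organizational difference is the order: you synthesize the $k$ hyperbolic pairs first and then clean up the isotropic center, whereas the paper maps the center to $\langle Z_{k+1},\dots,Z_{k+m}\rangle$ up front and must then multiply each $g_i',h_i'$ by a suitable product of those $Z$'s to clear their support on qubits $k+1,\dots,k+m$ before applying Lemma~\ref{lem:clifford_anticomm_paulis}; your ordering sidesteps that extra bookkeeping step but is otherwise the same argument.
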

\begin{proof}
We first run Algorithm~\ref{alg:sym_gram_schmidt} on $V$ to produce a basis $C_V = \{s_1,\ldots,s_m\}$ (for some $m$) of  $V \cap V^\perp$ and a basis $A_V=\{(g_i,h_i)\}_{i \in [k]}$ (for some $k$) of the remaining elements outside $\langle C_V\rangle$, such that the pair $g_i,h_i$ anti-commute with each other but commute with all the other generators in $C_V$ or $A_V$. This takes time $O(n^3)$. Using Lemma~\ref{lem:clifford_isotropic_subspace} on the basis $C_V$ (which spans an isotropic subspace of dimension $m$), we can determine a Clifford circuit $U_1$ of gate complexity $O(n^2)$ in time $O(n^3)$~such~that
$$
U_1 \la C_V \ra U_1^\dagger = \la Z_{k+1}, Z_{k+2}, \ldots, Z_{k+m} \ra.
$$
So far, we have then transformed the group $V$ under the action of the Clifford circuit $U_1$ as
$$
U_1 V U_1^\dagger = \la U_1 g_1 U_1^\dagger, U_1 h_1 U_1^\dagger, \ldots, U_1 g_k U_1^\dagger, U_1 h_k U_1^\dagger, Z_{k+1}, \ldots, Z_{k+m} \ra.
$$
Denoting $g_i' = U_1 g_i U_1^\dagger$ and $h_i'=U_1 h_i U_1^\dagger$ for all $i \in [k]$. The time complexity of determining these new generators is $O(n^3)$, consuming $O(n^2)$ time for each generator (i.e., computing $UgU^\dagger$ takes trivially $O(n^2)$ time). We note that the commutation relations of $g_i$ (or $h_i$) with all the other generators in $C_V$ or $A_V$ remain unchanged under the action of $U_1$ i.e., $[a', b'] = [a, b]$ for every $a,b \in C_V \cup A_V$ and where $a'=U_1 a U_1^\dagger, b' = U_1 b U_1^\dagger$. In other words, $g_i'$ anti-commutes with $h_i'$ and commutes with everything else. Moreover, we can multiply each $g_i'$ (or $h_i'$) with combinations of $\{Z_{k+1},\ldots,Z_{k+m}\}$ such that $g_i'$ (or $h_1'$) involve identity on qubits $k+1,\ldots, k+m$, and without changing their commutation relations as $\{Z_{k+1},\ldots,Z_{k+m}\}$ commute with all of them.

Now, starting with the first anti-commuting pair of Paulis $(g_1',h_1')$, we use Lemma~\ref{lem:clifford_anticomm_paulis} to determine another Clifford circuit $W_1$ of gate complexity $O(n)$ in time $O(n)$ such that $W_1 g_1' W_1^\dagger = X_1$ and $W_1 h_1' W_1^\dagger = Z_1$. This circuit $W_1$ will not act on qubits $k+1,\ldots,k+m$ due to our earlier action of ensuring the Paulis $g_i',h_i'$ act trivially as the identity on these qubits. We have so far then transformed the group $V$ under the action of $W_1 U_1$ as
$$
W_1 U_1 V U_1^\dagger W_1^\dagger = \la X_1, Z_1, W_1 g_2' W_1^\dagger, W_1 h_2' W_1^\dagger, \ldots, W_1 g_k' W_1^\dagger, W_1 h_k' W_1^\dagger, Z_{k+1}, \ldots, Z_{k+m} \ra.
$$
Considering the new set of Paulis $g_i'' = W_1 g_i' W_1^\dagger$ and $h_i'' = W_1 h_i' W_1^\dagger$ for all $i \in \{2,\ldots,k\}$ which again have the same commutation relations as before, we can repeat the process above of ensuring that these new Paulis only act trivially on qubits $1,k+1,\ldots,k+m$ and then use Lemma~\ref{lem:clifford_anticomm_paulis} to determine the next Clifford circuit $W_2$ that has the desired action of transforming $g_2'',h_2''$ to $X_2,Z_2$. We repeat this until all the pairs of anti-commuting Paulis are exhausted, obtaining a new Clifford circuit $W_i$ for $i \in [k]$ in each step. This takes $k \leq n$ rounds. The desired Clifford unitary is then $U=W_k W_{k-1} \ldots W_1 U_1$ which transforms $V$ to the canonical form of Fact~\ref{fact:clifford_action_on_group} as
$$
U V U^\dagger = \la Z_1, X_1, \ldots, Z_k, X_k, Z_{k+1}, Z_{k+2}, \ldots, Z_{k+m} \ra.
$$
Overall, we consumed $O(n^3)$ time in determining $U_1$, $O(n)$ time in determining each $W_i$ for $i \in [k]$ where $k \leq n$ anti-commuting pairs, $O(n^3)$ time in determining the intermediate transformed generators after obtaining $U_1$ and $O(n^2)$ time in determining the intermediate transformed generators after obtaining each $W_i$ for each $i \in [k]$. The total time complexity is then $O(n^3)$.
\end{proof}

\section{Algorithm for Self-Correction}
\label{sec:selfcorrectionprotocol}
In this section, we now prove Theorem~\ref{thm:self_correction} using the subroutines that we have discussed earlier (Section~\ref{sec:subroutines}). The learning algorithm  on a high-level follows the following three- step approach
\begin{enumerate}
    \item First, given copies of $\ket{\psi}$ we use $\sample$ and the $\bsgtest$ to sample elements from $A^{(2)}$ which will also lie in $A^{(1)}$ with high probability. We prove this in Section~\ref{sec:determinesubgroup}. 
    \item We next use the algorithmic $\PFR$ conjecture~\ref{conj:algopfrconjecture} to give a polynomial time procedure that uses samples from the first step to find a ``small" subgroup $V$ whose weight is still high (i.e., $\Exp_{x\in V}[2^np_\Psi(x)]\geq \gamma$) and furthermore $|V|\sim 2^n\cdot \poly(\gamma)$. We prove this in Section~\ref{sec:succintgroup}.
    \item Given access to subgroup $V$, copies of $\ket{\psi}$, using the $\symgramschmidt$ and other~subroutines, we find a stabilizer $\ket{\phi}$ such that $|\langle \psi|\phi\rangle|^2\geq \poly(\gamma)$. We prove this in Section~\ref{sec:findstabilizer}.
\end{enumerate}
\noindent In Section~\ref{sec:puttingtogethereverything} we put all of these steps  together and prove our main theorem. 
\begin{theorem}
\label{thm:restatementofselfcorr}
Let $\gamma>0$, $C > 1$ be a universal constant and $\eta = \Omega(\gamma^C)$. Let~$\ket{\psi}$ be an unknown $n$-qubit quantum state such that $\Exp_{x \sim q_\Psi}[|\la \psi | W_x | \psi \ra|^2] \geq \gamma$. Assuming the algorithmic $\PFR$ conjecture, there is a protocol that with probability $1-\delta$, outputs a $\ket{\phi}\in \Sh$ such that $|\langle\phi |\psi\rangle|^2 \geq  \eta$ using $\poly(n,1/\gamma,\log(1/\delta))$ time and copies of~$\ket{\psi}$.
\end{theorem}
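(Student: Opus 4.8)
The plan is to algorithmize, stage by stage, the existential proof of Theorem~\ref{thm:inversegowersstates} using the subroutines of Section~\ref{sec:subroutines}, following the three-step outline given at the start of this section. \emph{Stage 1: sampling from a small-doubling set.} By Theorem~\ref{thm:large_set_gamma_qPsi_expectations} there is an approximate subgroup $S\subseteq S_{\gamma/4}$ with $|S|\ge 2^n\gamma^2/80$ and $\Pr_{x,y\in S}[x+y\in S]\ge\Omega(\gamma^5)$. I would call $\sample$ (Lemma~\ref{lem:sample_bds}) to obtain, with probability $\ge\gamma^7/80$, sample access to such a ``good'' $S$, with the induced distribution $\Dpsi$ of Eq.~\eqref{eq:induced_dist_SAMPLE}. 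Feeding these samples into $\bsgtest$ and invoking Theorem~\ref{lemma:analysis_bsg} together with Lemma~\ref{lemma:lb_size_A1}, I would fix a random $u\sim\Dpsi$ and parameters $\zeta_1,\zeta_2,\zeta_3$: with probability $\Omega(\gamma^{487})$ this yields a membership oracle for a set $A:=A^{(2)}(u)$ with $A^{(1)}(u)\subseteq A$, $|A^{(1)}(u)|\ge\Omega(\gamma^{138})|S|$, $|A+A|\le O(\gamma^{-932})|A|$, and (crucially) Bell-difference sampling via $\sample$ concentrates on $A^{(1)}(u)\subseteq S$ with $\sum_{v\in A^{(1)}(u)}\Dpsi(v)\ge\Omega(\gamma^{202})$. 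Thus I have both sample and membership access to a set $A$ of doubling constant $K=\poly(1/\gamma)$.

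\emph{Stage 2: extracting a succinct subgroup.} Here I would apply the algorithmic $\PFR$ conjecture (Conjecture~\ref{conj:algopfrconjecture}) to $A$: in $\poly(n,K)$ time it returns a membership oracle for a subgroup $H\subseteq\textsf{span}(A)$ with $|H|\le|A|$ whose $\poly(K)$ cosets cover $A$. Since $\PFR$ gives only an oracle and only a covering statement, the real work -- and, I expect, the chief obstacle -- is converting this into an \emph{explicit basis} for a subgroup $V$ that still carries a $\poly(\gamma)$ fraction of the $p_\Psi$-mass. I would draw $t=O(n^2)$ samples from $A^{(1)}(u)$ under $\Dpsi$, discard those outside $H$ using the oracle, and argue -- combining the weight concentration of Stage~1 with $|H|\le|A|$ and the upper bound on $\Dpsi$ from Fact~\ref{fact:D_ub} -- that the span of the retained vectors already captures a $\poly(\gamma)$ fraction of $\sum_x 2^n p_\Psi(x)$; standard $\FF_2$-linear algebra then reads off a basis of a subgroup $V$ with $|V|\le 2^n\poly(\gamma)$ and $\Exp_{x\in V}[2^n p_\Psi(x)]\ge\poly(\gamma)$. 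This is the combinatorial-to-algebraic bridge of Section~\ref{sec:succintgroup}, and it is delicate precisely because algorithmic $\PFR$ hands back nothing explicit: one must show that a random ``span-then-restrict-to-$H$'' step recovers a genuine high-mass subgroup, in polynomial time.

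\emph{Stage 3: synthesizing the stabilizer state.} Given a basis of $V$, the symplectic Gram--Schmidt procedure and Claim~\ref{claim:clifford_synthesis_subgroup} give, in $O(n^3)$ time, a Clifford $U$ with $UVU^\dagger=\calP^k\times\calP_\calZ^m$, $k=O(\log(1/\gamma))$, $k+m\le n$; the mass bound transfers to $\Exp_{x\in UVU^\dagger}[|\la U\psi|W_x|U\psi\ra|^2]\ge\poly(\gamma)$. Covering $\calP^k$ by $2^k+1=\poly(1/\gamma)$ MUB stabilizer groups (Fact~\ref{fact:mub_unsigned_stab}) and extending each by $\calP_\calZ^{n-k}$ on the last $n-k$ qubits, $UVU^\dagger$ is covered by $\poly(1/\gamma)$ Lagrangian subspaces, so by averaging one of them, $T$, has $\Exp_{P\in T}[|\la U\psi|P|U\psi\ra|^2]\ge\poly(\gamma)$; Theorem~\ref{fact:lower_bound_stabilizer_fidelity} (equivalently Lemma~\ref{lem:state_isotropic_space_high_mass} with $t=k$) then guarantees a stabilizer state of the form $\ket{\varphi_z}\otimes\ket{z}$, with $\ket{\varphi_z}$ a $k$-qubit MUB state and $\ket{z}$ an $(n-k)$-qubit computational-basis state, having $\poly(\gamma)$ fidelity with $U\ket{\psi}$. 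To \emph{find} it -- and thereby resolve the issues that the stabilizer group is underspecified ($k+m\le n$), unsigned, and opaque on the last $n-k$ qubits -- I would enumerate the $\poly(1/\gamma)$ MUBs on the first $k$ qubits, for each measure the first $k$ qubits of $U\ket{\psi}$ in that basis to read off $\ket{\varphi_z}$ and the last $n-k$ qubits in the computational basis to read off $\ket{z}$, and estimate each candidate $U^\dagger(\ket{\varphi_z}\otimes\ket{z})$'s fidelity with $\ket{\psi}$ by classical shadows (Lemma~\ref{lem:classicalshadow}).

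\emph{Assembling.} Since the good events of Stages~1--2 each hold only with probability $\poly(\gamma)$, I would repeat the whole pipeline $\poly(1/\gamma)\cdot\log(1/\delta)$ times, collect all candidate stabilizer states, estimate every candidate's fidelity with $\ket{\psi}$ to accuracy $\poly(\gamma)$ by classical shadows, and output the best; a union bound shows that with probability $\ge 1-\delta$ some run succeeds and the reported fidelity is $\ge\eta=\Omega(\gamma^C)$ for a universal constant $C$ absorbing the polynomial losses. Every stage runs in $\poly(n,1/\gamma,\log(1/\delta))$ time and uses that many copies of $\ket{\psi}$ (the enumeration over $2^k+1=\poly(1/\gamma)$ MUBs being admissible as $k=O(\log(1/\gamma))$), which proves Theorem~\ref{thm:restatementofselfcorr}. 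Finally, Theorem~\ref{thm:self_correction} follows by applying this with $\gamma=\uptau^6$, using $\stabfidelity{\ket{\psi}}\ge\uptau\Rightarrow\Exp_{x\sim q_\Psi}[|\la\psi|W_x|\psi\ra|^2]\ge\uptau^6$ from Theorem~\ref{fact:lower_bound_stabilizer_fidelity}.
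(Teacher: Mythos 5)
Your overall plan tracks the paper's proof of Theorem~\ref{thm:restatementofselfcorr} closely (same three-stage pipeline, same subroutines, same final repetition-and-shadows argument), but there is a real gap in Stage~2 that would break the argument as you have written it. You say you would ``draw $t=O(n^2)$ samples from $A^{(1)}(u)$ under $\Dpsi$, discard those outside $H$ using the oracle,'' and then span the survivors. This step can silently fail: Conjecture~\ref{conj:algopfrconjecture} only guarantees that $A^{(2)}(u)$ is covered by $\poly(K)$ cosets $a_i + H$, and nothing forces the trivial coset ($a_i=0$) to be among them, let alone to absorb a non-negligible fraction of the $\Dpsi$-mass. All of your samples could lie in nontrivial cosets and you would discard everything.

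The paper's $\pfrsubgroup$ avoids this precisely by not testing the raw samples against $H$. Instead it forms the pairwise sums $L = \{z_i + z_j : z_i,z_j \in A'\}$ and keeps $L' = \{w \in L : \textsf{MEM}_H(w)=1\}$: whenever two samples land in the same coset $a^\star + H$ their sum lies in $H$, so Claim~\ref{claim:tstar_high_overlap_coset_H} (pigeonhole over the $2K^9$ cosets, with $t^\star = (4n^2+\log(10/\delta))\cdot 2K^9$ samples) guarantees $|L'|\ge 4n^2+\log(10/\delta)$. One then works with $\overline{Q}=w_1+Q^\star$ for a fixed $w_1$, notes $\mathrm{span}(Q^\star)=\mathrm{span}(\overline{Q})\cup(w_1+\mathrm{span}(\overline{Q}))$, and uses Fact~\ref{fact:remove_coset} to transfer the $p_\Psi$-mass from $\mathrm{span}(Q^\star)$ (which is only shown to carry mass because it contains $A^{(1)}\cap\mathrm{span}(Q^\star)$ by Claim~\ref{claim:span_z_high_intersection_A1}) down to the subgroup $V=\mathrm{span}(L')\supseteq\mathrm{span}(\overline{Q})$. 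The ``filter directly, then span'' route you sketch has no analogue of these coset-shifting steps, and without them you neither get a nonempty $L'$ nor a mass lower bound on $V$. The remaining stages (symplectic Gram--Schmidt, MUB covering, the measure-then-classical-shadows search for $\ket{\varphi_z}\otimes\ket{z}$, and the $\poly(1/\gamma)\log(1/\delta)$-fold repetition) match the paper's $\findstabilizer$ and assembly and are fine.
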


\subsection{Sampling points from a small doubling set}
\label{sec:determinesubgroup}
In the previous section, we showed how one can sample from the approximate subgroup $S$ efficiently using $\sample$ (Lemma~\ref{lem:sample_bds}) and introduced the $\bsgtest$ (Theorem~\ref{lemma:analysis_bsg}) that gives us a membership test for a small doubling set lying inside $S$. In particular, we showed that there is a good choice of $u \in S$, which can be obtained efficiently using Bell difference sampling and a set of parameters for which the sets $A^{(1)}(u),A^{(2)}(u)$ (as defined in Theorem~\ref{lemma:analysis_bsg}) are dense and have small doubling constants respectively. For notational convenience, conditioned on such a \emph{good} $u$, we will simply refer to $A^{(1)}(u)$ as $A^{(1)}$ and $A^{(2)}(u)$ as $A^{(2)}$. In this section, we give a procedure that, given copies of $\ket{\psi}$ allows to efficiently sample from the small doubling set $A^{(1)}$.

\begin{lemma}
\label{lem:sampling_small_doubling_set}
Let $\gamma,\delta \in (0,1), t\geq 1$. Suppose $\ket{\psi}$ is an unknown $n$-qubit state with $\Exp_{x \sim q_\Psi}[|\la \psi | W_x | \psi \ra|^2] \geq \gamma$. Then, there is a quantum algorithm that with probability $1-\delta$, produces at least $t$ elements from $A^{(1)}$ using $O(t\cdot \poly(1/\gamma)\log(1/\delta))$ copies of $\ket{\psi}$ and runs in $O(nt\poly(1/\gamma)\log(1/\delta))$~time.
\end{lemma}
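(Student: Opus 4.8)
\emph{Proof strategy.} The plan is to upgrade the membership oracle $\bsgtest$ of Theorem~\ref{lemma:analysis_bsg} into a \emph{sampler} for $A^{(1)}$ by running it in a rejection loop on top of the $\sample$ subroutine of Lemma~\ref{lem:sample_bds}. Concretely: first draw a candidate vertex $u$ together with random interval parameters $\zeta_1,\zeta_2,\zeta_3$ as in Section~\ref{sec:algo_BSG} (via $\sample$); then repeatedly draw $v\sim\Dpsi$ using $\sample$, run $\bsgtest(u,v,\zeta_1,\zeta_2,\zeta_3,\rho_1,\rho_2,\delta')$ with a small internal failure probability $\delta'$, and keep $v$ iff the test returns $1$. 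Conditioned on all invocations of $\bsgtest$ being correct (a high-probability event after a union bound), Theorem~\ref{lemma:analysis_bsg} gives the sandwich $A^{(1)}(u)\subseteq\{v : \bsgtest(u,v,\dots)=1\}\subseteq A^{(2)}(u)$; hence every retained element lies in the small-doubling set $A^{(2)}(u)$, while the retained elements collectively carry all of the $\Dpsi$-mass of $A^{(1)}(u)$ — both of which is what the downstream algorithmic $\PFR$ step needs.

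The quantitative input is Lemma~\ref{lemma:lb_size_A1}$(iii)$: for a \emph{good} tuple $(u,\zeta_1,\zeta_2,\zeta_3)$, realized by a single random draw with probability $\Omega(\gamma^{487})$, one has $\sum_{v\in A^{(1)}(u)}\Dpsi(v)\ge\Omega(\gamma^{202})$. So, conditioned on such a good tuple, each loop iteration retains an element with probability $\ge\Omega(\gamma^{202})$, and by a Chernoff bound (Fact~\ref{fact:hoeffding_sampling}) it suffices to run $O(t\,\gamma^{-202}\log(1/\delta))$ iterations to collect $t$ retained elements except with probability $\delta/3$. Each iteration costs one $\sample$ call ($\widetilde{O}(\gamma^{-2}\log(1/\delta'))$ copies, $n$ times that in time) and one $\bsgtest$ call, which itself spawns $r=s=\poly(1/\gamma,\log(1/\delta'))$ further $\sample$ calls and $rs$ many $\edgetest$ calls, each of the latter using $\poly(1/\gamma)\log(1/\delta')$ copies; taking $\delta'=\delta/\poly(t,1/\gamma,\log(1/\delta))$ and a union bound over the $\poly(t,1/\gamma,\log(1/\delta))$ total invocations keeps the overall failure probability $O(\delta)$ and the totals at $O(t\,\poly(1/\gamma)\log(1/\delta))$ copies and $O(nt\,\poly(1/\gamma)\log(1/\delta))$ time, absorbing $\log\log$ factors into $\poly(1/\gamma)$.

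The step I expect to need the most care is that we only obtain a \emph{good} tuple with probability $\Omega(\gamma^{487})$ and cannot certify goodness directly. I would draw $m=O(\gamma^{-487}\log(1/\delta))$ independent candidate tuples — so that with probability $\ge 1-\delta/3$ at least one is good — and, for each, estimate $\Pr_{v\sim\Dpsi}[\bsgtest(u,v,\dots)=1]$ to additive accuracy $\poly(\gamma)$ with a few extra $\sample$/$\bsgtest$ calls: a good $u$ passes the threshold $\Omega(\gamma^{202})$, and any $u$ that passes it admits efficient rejection sampling by the same Chernoff argument. Since one still cannot single out the fully good $u$ among those that pass, the clean formulation is that the algorithm returns, for every passing candidate, a batch of $t$ retained elements; for the good $u$ this batch consists of $t$ elements of (the set sandwiched between) $A^{(1)}(u)$ and $A^{(2)}(u)$, and the enclosing $\Selfcorrection$ protocol will in any case test each resulting candidate stabilizer state and keep the best, so carrying this $\poly(1/\gamma)$-length list is harmless. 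Orthogonally, one must check that the $\poly(\gamma)$-scale slack designed into $\rho_1,\rho_2$, into the random sub-interval choice of $\zeta_1,\zeta_2,\zeta_3$, and into the $\edgetest$ accuracy composes so that the sandwich holds \emph{simultaneously} for every one of the polynomially many points $v$ queried across all iterations; this is precisely Theorem~\ref{lemma:analysis_bsg} plus a union bound, but tracking the nested error parameters is the fiddly part of the write-up.
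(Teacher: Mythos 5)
Your proof is correct and takes essentially the same approach as the paper: sample candidate $(u,\vec\zeta)$ tuples and filter points via $\bsgtest$ using the sandwich $A^{(1)}(u)\subseteq\{v:\bsgtest=1\}\subseteq A^{(2)}(u)$, invoking Lemma~\ref{lemma:lb_size_A1}$(iii)$ for the $\Omega(\gamma^{202})$ acceptance rate and Theorem~\ref{lemma:analysis_bsg}'s $\Omega(\gamma^{487})$ bound on hitting a good tuple, with the output necessarily being a $\poly(1/\gamma)$-length list of candidate batches since goodness of $u$ cannot be certified directly. The only cosmetic difference is that the paper samples a single pool $\calV$ of $M=t\cdot\poly(1/\gamma)\log(1/\delta)$ points upfront and reuses it as both the candidate $u$'s and the filtered $v$'s (Algorithm~\ref{algo:sample_small_doubling_set}), whereas you separate the candidate-drawing and rejection-loop phases and add an intermediate density-estimation screen; also, you omit the extra $\Omega(\gamma^{7})$ conditioning cost on the event that $\sample$ lands in a good approximate subgroup $S$ (Lemma~\ref{lem:sample_bds}), but that is absorbed into the $\poly(1/\gamma)$.
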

\begin{proof}
The procedure that we invoke in order to prove the lemma is fairly straightforward. We run $\sample$ $M$ many times (which we fix later) and then run $\bsgtest$ on these samples for parameters as in Theorem~\ref{lemma:analysis_bsg}.
We summarize this in the algorithm below.
\begin{myalgorithm} \setstretch{1.3}
\begin{algorithm}[H]
    \label{algo:sample_small_doubling_set}
    \caption{\sampleAone($t$,$\gamma,\delta$)} 
    \setlength{\baselineskip}{1.5em} 
    \DontPrintSemicolon 
    \KwInput{Parameter $t \in \mathbb{N}$, copies of $\ket{\psi}$ with $\Exp_{x \sim q_\Psi}[|\la \psi | W_x | \psi \ra|^2] \geq \gamma$, failure prob.~$\delta$}
    \KwOutput{Collection of sets $\calA = \{A_i\}_{i \in [N_s]}$ with $N_s = O(1/\gamma^{350})$ s.t. $\exists A' \in \calA$ with at least $t$ samples from $A^{(1)}$} \vspace{2mm}
    Set $\rho=\gamma^5/20$ and $M = 400t/\poly(\rho)$ \\
    Call $\sample(\gamma, \delta)$ $M$ many times to produce set $\calV$ \\
    Set collection of parameters $P = \{(\zeta^{(i)}_1,\zeta^{(i)}_2,\zeta^{(i)}_3)\}_{i}$  as in Theorem~\ref{lemma:analysis_bsg} \\
    Set ${A}'(u,\vec{\eta}) = \emptyset$ for all $u \in \widetilde{S}$ and $\vec{\zeta} = (\zeta_1,\zeta_2,\zeta_3) \in P$. \\
    Let $\calA = \varnothing$. \\
    \For{$u \in \calV$}{
        \For{$(\zeta_1,\zeta_2,\zeta_3) \in P$}{
            \For{$v \in \calV \setminus \{u\}$}{
                Run $\bsgtest(u, v, \zeta_1,\zeta_2,\zeta_3)$ and let $F_{u,v}$ be the output \\
                If $F_{u,v}=1$, then ${A}'(u,\overline{\eta}) \leftarrow {A}'(u,\overline{\zeta}) \cup \{v\}$ \\
            }
        }
        \lIf{$|A'(u,\overline{\zeta})| \geq t$}{$\calA \leftarrow \calA \cup A'(u,\overline{\eta})$}
    }
    \Return $\calA$
\end{algorithm} 
\end{myalgorithm}
From Lemma~\ref{lem:sample_bds}, we know that calling $\sample$ allows us to produce samples in an approximate group $S$ with density at least $\gamma^5/20$, with probability at least $\Omega(\gamma^7)$. We will set $\rho = \gamma^5/20$. Let us condition first on having a good $S$ (whose probability we will incorporate  at the end). The corresponding conditional distribution $D_\Psi$ is given by Eq.~\eqref{eq:induced_dist_SAMPLE}. From Theorem~\ref{lemma:analysis_bsg}, we are guaranteed with probability at least $\Omega(\gamma^{487})$ over the choice of $u \sim D_{\Psi}$ and parameters $\overline{\zeta} = (\zeta_1,\zeta_2,\zeta_3)$ that we sampled a $u$ for which $A^{(1)}(u)$ is comparatively large and $A^{(2)}(u)$ has small doubling. Let us assume that we would sample a good vertex $u$ and parameters $\zeta_1,\zeta_2,\zeta_3$. We thus use the following procedure and as described in Algorithm~\ref{algo:sample_small_doubling_set}.

Let $M \in \mathbb{N}$ be a parameter to be fixed later. We firstly sample a set of at least $M$ elements, which we denote by $\calV = \{v_i\}_{i \in [M]}$, by calling $\sample$ $O(M/\gamma^2 \log(1/\delta))$ many times. As mentioned above, we have conditioned on sampling from $S$ (which occurs with probability $\Omega(\gamma^7)$), so the set $\calV$ contains elements from the approximate subgroup $S$ with density $\rho$. For each $u \in \calV$, we now run $\bsgtest(u,v,\vec{\zeta})$ by iterating over all $v \in \calV \setminus u$ and all sets of parameters $\vec{\zeta} \in P$ with $\delta' = \delta/(5M)$. We retain only those $v$(s) on which the $\bsgtest$ outputs $1$ into a set $A'(u,\vec{\zeta})$. As mentioned above, we have conditioned on sampling a good $u$ and parameters, so we are guaranteed that there exists a set $A'(u)$ (where we have suppressed the dependence on the choice of parameters $\vec{\zeta}$) containing many elements from $A^{(1)}(u)$.

We now determine $M$ such that with probability at least $1-\delta/5$, the $\bsgtest$ will output $1$ on at least $t$ many elements in $\calV$, for a good $u$ and parameters. First observe by Lemma~\ref{lemma:lb_size_A1} that for a good $u$, we have 
$$
\Pr_{x\sim \Dpsi}[x\in A^{(1)}(u)]\geq \Omega(\gamma^{202}),
$$
Hence, the probability of hitting $A^{(1)}(u)$ is at least $\Omega(\gamma^{202})$ and the expected size of $A'(u)$ can then be lower bounded as $\Exp_{u \sim D_\Psi}[|A'(u)|] \geq \gamma^{202} \cdot M$. Using the Hoeffding bound in Fact~\ref{fact:hoeffding_sampling}, we have\footnote{To see this, consider the indicator random variable $X_i = [v_i \in A^{(1)}(u)]$ for all $v_i \in A'(u)$.}
$$
\Pr[ |A'(u)| \leq t] \leq 2 \exp\left(-2 (\gamma^{202} M-t)^2/M \right),
$$
which can be upper bounded by $\delta/5$ for the choice of $M =O\left(t/\gamma^{202}\cdot  \log (1/\delta) \right))$. 
The value of $M$ was obtained after conditioning on sampling from $S$ which occurs with probability $\geq \Omega(\gamma^7)$ (Lemma~\ref{lem:sample_bds}) and good choice of $u$ along with parameters which occurs with probability $\Omega(\gamma^{487})$ (Theorem~\ref{lemma:analysis_bsg}). Hence, setting $M = \poly(1/\gamma)\cdot  t\cdot  \log(1/\delta)$ proves the claim statement.
\end{proof}

We have so far shown that Lemma~\ref{lem:sampling_small_doubling_set} gives us a procedure to sample points from the small doubling set $A^{(1)}$ efficiently. We now argue that with enough sampled points, we can obtain a linear subspace that has high intersection with $A^{(1)}$. This is formally described below.

\paragraph{Subspace with high intersection with $A^{(1)}$.} We show the following claim, adapted from \cite[Claim~4.14]{tulsiani2014quadratic}, which states that the span of the $t$ sampled points has high overlap with $A^{(1)}$ for a high enough value of $t$ (assuming all the points lied in $A^{(1)}$). 
\begin{claim}[adapted {\cite[Claim~4.14]{tulsiani2014quadratic}}]\label{claim:span_z_high_intersection_A1}
Let $\delta > 0$, $\rho = \poly(\gamma)$. $\{z_1,\ldots,z_t\}$ be $t$ sampled elements from $A^{(1)}$ (according to the distribution $D$ in Eq.~\eqref{eq:induced_dist_SAMPLE}).  For every $t \geq 4n^2 + \log(10/\delta)$, with probability $\geq 1-\delta/5$, we have that
\begin{enumerate}[(i)]
    \item $|\la z_1,\ldots,z_t \ra \cap A^{(1)}| \geq \gamma^{62}/2 \cdot |A^{(1)}|$.
    \item $\mathrm{dim}(\la z_1,\ldots,z_t \ra) \geq n - \log(12/\rho)$
\end{enumerate}
\end{claim}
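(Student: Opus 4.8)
The samples are drawn i.i.d.\ from $D_\Psi$ (Eq.~\eqref{eq:induced_dist_SAMPLE}) conditioned on lying in $A^{(1)} := A^{(1)}(u)$ for the good $u$ fixed above; write $D_{A^{(1)}}$ for this conditional distribution and $V_i := \la z_1,\dots,z_i\ra \subseteq \FF_2^{2n}$, so $V_0\subseteq V_1\subseteq\cdots\subseteq V_t$ and $\dim(V_t)\le 2n$. Let $p_i := \Pr_{z\sim D_{A^{(1)}}}[z\in V_i]$, which is non-decreasing in $i$ since the $V_i$ are nested. The plan has three parts: $(a)$ show $p_t\ge 1/2$ with probability $\ge 1-\delta/5$ (this is the adaptation of \cite[Claim~4.14]{tulsiani2014quadratic}); $(b)$ convert the weight bound $p_t\ge 1/2$ into the counting bound of part $(i)$ using Fact~\ref{fact:D_ub} and item $(iii)$ of Lemma~\ref{lemma:lb_size_A1}; $(c)$ derive part $(ii)$ from part $(i)$ together with the size lower bounds on $A^{(1)}$ and $S$.

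For $(a)$ I would argue via a stopping time. Put $T^\star := \min\{i : p_i \ge 1/2\}$ (with $T^\star=\infty$ if no such $i$ exists); this is a stopping time since each $p_i$ is measurable w.r.t.\ $\mathcal F_i := \sigma(z_1,\dots,z_i)$ and the sequence $(p_i)$ is monotone. For $i<T^\star$ we have $p_i<1/2$, hence $\Pr[z_{i+1}\notin V_i\mid\mathcal F_i] = 1-p_i\ge 1/2$, and on that event $\dim V_{i+1}=\dim V_i+1$. Setting $X_j:=\mathbbm{1}[z_j\notin V_{j-1}]$ and $M_i:=\sum_{j=1}^{i\wedge T^\star}\big(X_j-\Exp[X_j\mid\mathcal F_{j-1}]\big)$, the process $(M_i)$ is a martingale with increments in $[-1,1]$, so Azuma's inequality gives $\Pr[M_t\le -\lambda]\le \exp(-\lambda^2/(2t))$. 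On $\{T^\star>t\}$ we have $\sum_{j\le t}X_j = \dim V_t \le 2n$ while $\sum_{j\le t}\Exp[X_j\mid\mathcal F_{j-1}]\ge t/2$, so $M_t\le 2n - t/2$; taking $\lambda = t/2-2n$ and using $t\ge 4n^2+\log(10/\delta)$ (which makes $\exp(-\lambda^2/(2t))\le\delta/5$) yields $\Pr[T^\star>t]\le\delta/5$. Hence, with probability $\ge 1-\delta/5$, $p_t\ge 1/2$, i.e.\ $\sum_{v\in V_t\cap A^{(1)}}D_\Psi(v)\ge\tfrac12\sum_{v\in A^{(1)}}D_\Psi(v)$.

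For $(b)$: Fact~\ref{fact:D_ub} gives $D_\Psi(v)\le O(\gamma^{-10})/|S|$ for every $v\in S\supseteq A^{(1)}$, so $\sum_{v\in V_t\cap A^{(1)}}D_\Psi(v)\le |V_t\cap A^{(1)}|\cdot O(\gamma^{-10})/|S|$; combining this with item $(iii)$ of Lemma~\ref{lemma:lb_size_A1} (which lower bounds $\sum_{v\in A^{(1)}}D_\Psi(v)$ by $\Omega(\gamma^{202})$) and with $|A^{(1)}|\le|S|$ gives $|V_t\cap A^{(1)}|\ge \poly(\gamma)\cdot|S|\ge\poly(\gamma)\cdot|A^{(1)}|$; carrying the explicit constants of \cite{ad2024tolerant} through this chain yields the stated $|V_t\cap A^{(1)}|\ge(\gamma^{62}/2)\,|A^{(1)}|$, which is part $(i)$. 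For $(c)$: from part $(i)$ and the bounds $|A^{(1)}|\ge\Omega(\gamma^{138})|S|$ (item $(ii)$ of Lemma~\ref{lemma:lb_size_A1}) and $|S|\ge\gamma^2 2^n/80$ (Theorem~\ref{thm:large_set_gamma_qPsi_expectations}), we obtain $|V_t|\ge|V_t\cap A^{(1)}|\ge(\rho/12)\cdot 2^n$ for $\rho=\poly(\gamma)$, and since $|V_t|$ is a power of two this forces $\dim V_t\ge n-\log(12/\rho)$. A union bound (part $(ii)$ being deterministic given part $(i)$) gives the overall failure probability $\le\delta/5$.

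The main obstacle is part $(a)$: porting the span-covering argument of \cite{tulsiani2014quadratic} --- originally carried out for \emph{uniform} samples from a small-doubling set --- to samples drawn from the Bell-difference-sampling distribution $D_\Psi$. The difficulty is that $D_\Psi$ only admits a per-point \emph{upper} bound (Fact~\ref{fact:D_ub}) and no matching uniform lower bound, so a weight statement such as $p_t\ge 1/2$ does not immediately translate into a counting statement about $|V_t\cap A^{(1)}|$; one must route through the aggregate-weight bound of item $(iii)$ of Lemma~\ref{lemma:lb_size_A1} (itself one of the more delicate ingredients of the $\bsgtest$ analysis), and the $\poly(\gamma)$ loss incurred there is precisely what degrades the clean factor $1/2$ of the uniform case into the factor $\gamma^{62}/2$ appearing in part $(i)$. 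The stopping-time martingale in $(a)$ is then the natural way to make ``most fresh samples already lie in $V_t$'' rigorous, given only the crude global bounds available on $D_\Psi$.
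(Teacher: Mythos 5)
Your proof is correct in spirit but takes a genuinely different route from the paper's. The paper adapts \cite[Claim~4.14]{tulsiani2014quadratic} directly: it bounds $\Pr_{z_1,\ldots,z_t}[|\la z_1,\ldots,z_t\ra\cap A^{(1)}|\text{ small}]$ by a union bound over all $O(2^{4n^2})$ subspaces $T$ with small intersection, showing that for each such $T$ the per-sample conditional probability $\Pr_{z_i\sim D_\Psi}[z_i\in T\mid z_i\in A^{(1)}]\leq 1/2$, and hence the probability all $t$ samples land in $T$ is $\leq 2^{-t}$; for $t\geq 4n^2+\log(10/\delta)$ the union bound closes. You instead run a stopping-time/Azuma argument on the dimension of the running span $V_i$: before the first time $T^\star$ at which $V_i$ captures half the conditional weight, every fresh sample increases $\dim V_i$ with probability $\geq 1/2$, and since $\dim V_i\leq 2n$ the martingale concentration forces $T^\star\leq t$ with probability $1-\delta/5$; you then convert the weight statement $p_t\geq 1/2$ to the count statement via Fact~\ref{fact:D_ub} and item $(iii)$ of Lemma~\ref{lemma:lb_size_A1}. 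Both approaches bottleneck on exactly the same two non-uniform-distribution ingredients (the per-point upper bound on $D_\Psi$ and the aggregate $\Omega(\gamma^{202})$ lower bound on the $D_\Psi$-mass of $A^{(1)}$), but the paper uses them to get a \emph{per-sample-per-subspace} probability bound, whereas you use them to make a single deterministic weight-to-count conversion at the end. Your route is arguably cleaner structurally — the ``dimension grows or you're done'' framing is more modular — whereas the paper's route keeps the bookkeeping closer to the TW template. One caveat: your explicit constant tracking (``yields the stated $\gamma^{62}/2$'') is optimistic — chasing through $D_\Psi(v)\leq O(\gamma^{-10})/|S|$ and the $\Omega(\gamma^{202})$ mass bound gives an exponent closer to $\gamma^{212}$, not $\gamma^{62}$; but the paper's own derivation has the same looseness, and the claim only needs a $\poly(\gamma)$ factor for downstream use, so this is a constants discrepancy rather than a gap. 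Your derivation of item $(ii)$ from item $(i)$ (via $|A^{(1)}|\geq\Omega(\gamma^{138})|S|$ and $|S|\geq\gamma^2 2^n/80$) matches what the paper defers to TW for.
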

\begin{proof}
For the first part, we consider $\langle z_1,\ldots,z_t\rangle$ and bound the probability that its intersection with $A^{(1)}$ is small. To this end, we need to bound
\begin{align}
\label{eq:claim4.14tw}
&\Pr_{z_1,\ldots,z_t\sim \Dpsi}\Big[|\la z_1,\ldots,z_t \ra \cap A^{(1)}| < \gamma^{62}/C' \cdot |A^{(1)}|\Big]\\
&=\sum_{T:|T\cap A^{(1)}|\leq \gamma^{62}/C' |A^{(1)}|}\Pr\Big[z_1,\ldots,z_t\in T| \{z_i\}_{i \in [t]} \in A^{(1)}\Big]\cdot \\
&\quad \qquad \qquad \qquad \Pr\Big[\langle z_1,\ldots,z_t\rangle=T|\{z_i\}_{i \in [t]} \in A^{(1)}, \{z_i\}_{i \in [t]} \in T\Big],
\end{align}
where the sum is over all subspaces $T$ satisfying the condition $|T\cap A^{(1)}|\leq (\gamma^{62}/C')\cdot  |A^{(1)}|$. Now, we bound the first probability term from above as follows: for each $i$,  
we have that
\begin{align}
    \Pr_{z_i\sim \Dpsi}[z_i\in T|z_i\in A^{(1)}]&=\frac{\Pr_\Dpsi[z_i\in T\cap A^{(1)}]}{\Pr_\Dpsi[z_i\in A^{(1)}]}\\
    &\leq \frac{\sum_{z_i\in T\cap A^{(1)}}q_\Psi(z_i)\cdot 2^np_\Psi(z_i)}{\sum_{z \in A^{(1)}} q_\Psi(z) \cdot 2^np_\Psi(z)}\\
    &\leq \frac{|T\cap A^{(1)}|}{|A^{(1)}|}\cdot \frac{2^{-n}}{C \gamma^{202}} \\
    &\leq \frac{|T\cap A^{(1)}|}{|A^{(1)}|}\cdot \frac{2^{-n}}{C' \gamma^{62} /2^n},
\end{align}
for some universal constants $C,C'>1$, 
where in the inequality we used that $p_\Psi(z),q_\Psi(z)\leq 2^{-n}$ for all $z$ in the numerator (by Lemma~\ref{lem:ub_qPsi}) and Lemma~\ref{lemma:lb_size_A1} for the denominator. Now, since we assumed that  $|T\cap A^{(1)}|/|A^{(1)}|\leq \gamma^{62}/2$, we have that $\Pr_{z_i\sim \Dpsi}[z_i\in T|z_i\in A^{(1)}]\leq 1/(2C')$ and so can upper bound the expression in Eq.~\eqref{eq:claim4.14tw} as
$
2^{-t}\cdot O(2^{4n^2}),
$
since there are $O(2^{4n^2})$ subspaces in $\FF_2^{2n}$. So  for every $t\geq 4n^2+\log 10/\delta$, we get our desired bound in item $(i)$. The same analysis as above and the same argument as in~\cite{tulsiani2014quadratic} gives us the second item as well.
\end{proof}

\subsection{Finding a subgroup with high probability mass}
\label{sec:succintgroup}
In the previous section, we showed that one can obtain samples from the small-doubling set $A^{(1)}$ using the $\bsgtest$ and using copies of $\ket{\psi}$ with high probability. We now aim to utilize these samples from $A^{(1)}$ to determine a subgroup $V$ that has large probability mass as was shown in \cite{ad2024tolerant} and has indicated at the beginning of this section (Section~\ref{sec:selfcorrectionprotocol}). Our first observation is that $A^{(2)}$ (and thereby $A^{(1)}$) are small doubling sets and the combinatorial $\PFR$ theorem states that these sets can be covered by a small number of cosets of a Freiman subgroup. For convenience, we restate the combinatorial $\PFR$ theorem below. 

\combpfrtheorem*

As mentioned before in Section~\ref{sec:intro}, we currently do not have an algorithm for determining the subgroup $H$ given some form of access to $A$. We thus assume the algorithmic $\PFR$ conjecture that we restate below for convenience.
\algopfrconjecture* 
We will use the above conjecture to show the following theorem.
\begin{theorem}
\label{thm:analysis_algo_V}
Let  $C > 1$ be a universal constant. Let $\ket{\psi}$ be an $n$-qubit state satisfying $\Exp_{x \sim q_\Psi}[|\la \psi | W_x | \psi \ra|^2] \geq \gamma$. Assume the algorithmic $\PFR$ Conjecture~\ref{conj:algopfrconjecture} is true, there is a procedure that with probability $\geq 1-\delta$, outputs a subgroup $V$ with $n - O(\log(1/\gamma)) \leq \dim(V) \leq n + O(\log(1/\gamma))$ such that
$$
\Exp_{x \in V} [2^n p_\Psi(x)] \geq \poly(\gamma).
$$    
This procedure consumes
$$
\widetilde{O}( \poly(n,1/\gamma, \log(1/\delta)))
$$ copies of $\ket{\psi}$ and 
$$
\widetilde{O}(\poly(n,1/\gamma, \log(1/\delta)))$$
time complexity.
\end{theorem}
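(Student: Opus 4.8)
The plan is to algorithmize the chain ``approximate subgroup $\to$ small-doubling set $\to$ Freiman subgroup $\to$ heavy subgroup'' from~\cite{ad2024tolerant}, but using only the \emph{oracle}-based access to these exponentially large sets that the previous subroutines provide. Write $S$ for the approximate subgroup of Theorem~\ref{thm:large_set_gamma_qPsi_expectations}, so that $|S|\in[\gamma^2/80,\,4/\gamma]\cdot 2^n$ and $2^np_\Psi(x)\ge\gamma/4$ for every $x\in S$, and condition once and for all (absorbing the relevant $\poly(\gamma)$ success probabilities into the final failure parameter after $O(\log(1/\delta))$ repetitions) on having a good $S$ and a good vertex $u$ with good parameters $\zeta_1,\zeta_2,\zeta_3$ as in Theorem~\ref{lemma:analysis_bsg}. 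First I would assemble access to a small-doubling set: Theorem~\ref{lemma:analysis_bsg} turns $\bsgtest$ into a (per-query w.h.p.\ correct) membership oracle $\mathcal O$ for a set $\widetilde A$ with $A^{(1)}\subseteq\widetilde A\subseteq A^{(2)}$, where $|A^{(1)}|\ge\Omega(\gamma^{138})|S|$, $|A^{(2)}|\le|S|$, and $|A^{(2)}+A^{(2)}|\le O(\gamma^{-932})|A^{(2)}|$; hence $\widetilde A$ has doubling constant $K=|\widetilde A+\widetilde A|/|\widetilde A|\le\poly(1/\gamma)$. Lemma~\ref{lem:sampling_small_doubling_set} then supplies $M=\poly(n,1/\gamma,\log(1/\delta))$ samples from $A^{(1)}\subseteq\widetilde A$ (distributed as $D_\Psi$ restricted to $A^{(1)}$). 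Feeding $\mathcal O$ and these samples to Conjecture~\ref{conj:algopfrconjecture} applied to $\widetilde A$ returns a membership oracle for a subgroup $H$ with $|H|\le|\widetilde A|\le|S|\le(4/\gamma)2^n$ such that $\widetilde A$ (hence $A^{(1)}$) is covered by $m\le\poly(K)=\poly(1/\gamma)$ cosets of $H$. Two technical points to discharge here: amplify $\mathcal O$ and union-bound its error over the $\poly(n,K)$ queries $\PFR$ makes, and observe that samples from $A^{(1)}\subseteq\widetilde A$ serve as the ``random samples from $\widetilde A$'' the conjecture asks for.

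Next I would locate a heavy coset and pass to a subgroup. Since $A^{(1)}$ meets at most $m$ cosets of $H$ and $|A^{(1)}|\ge\poly(\gamma)\,2^n$, some coset $w_0+H$ has $|A^{(1)}\cap(w_0+H)|\ge|A^{(1)}|/m\ge\poly(\gamma)\,2^n$; algorithmically, for each sampled point $w_0$ I use the oracle for $H$ to count how many of the other samples $w$ satisfy $w+w_0\in H$ and keep the best $w_0$, so that (by Fact~\ref{fact:hoeffding_sampling}) w.h.p.\ it yields a set $B$ of $\ge\poly(\gamma)\,M$ samples lying in $\mathcal C:=A^{(1)}\cap(w_0+H)$. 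Fix $b_0\in B$ and set $V:=\langle\,b+b_0:b\in B\,\rangle$; since $B\subseteq w_0+H$ we have $b+b_0\in H$, so $V\subseteq H$ and thus $\dim V\le\dim H\le n+\log(4/\gamma)=n+O(\log(1/\gamma))$. The set $\mathcal C+b_0\subseteq H$ is dense (size $\ge\poly(\gamma)2^n$) with small doubling (a translate of a subset of $A^{(2)}$), and $B+b_0$ samples $\ge\poly(\gamma)M$ of its points; taking $M$ above the $4n^2+\log(10/\delta)$ threshold, Claim~\ref{claim:span_z_high_intersection_A1} (a union bound over the $2^{O(n^2)}$ subspaces of $\FF_2^{2n}$) gives w.h.p.\ $|V\cap(\mathcal C+b_0)|\ge\poly(\gamma)|\mathcal C|\ge\poly(\gamma)2^n$, whence also $\dim V\ge n-O(\log(1/\gamma))$.

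The hard part is showing the probability mass on $V$, i.e.\ $\Exp_{x\in V}[2^np_\Psi(x)]\ge\poly(\gamma)$, equivalently $\sum_{x\in V}p_\Psi(x)\ge\poly(\gamma)\,|V|/2^n$. Because $\mathcal C\subseteq A^{(1)}\subseteq S_{\gamma/4}$ we get $\sum_{x\in w_0+H}p_\Psi(x)\ge|\mathcal C|\cdot(\gamma/4)/2^n\ge\poly(\gamma)$; partitioning $w_0+H$ into its $|H|/|V|$ cosets of $V$, some coset $g^\ast+V$ carries $\ge\poly(\gamma)\,|V|/|H|$ of that mass, so $\Exp_{x\in g^\ast+V}[2^np_\Psi(x)]\ge\poly(\gamma)\,2^n/|H|\ge\poly(\gamma)$. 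It remains to transfer this from a coset onto $V$ itself, for which I would use the coset-to-subgroup monotonicity of $p_\Psi$: with $\rho=\ketbra{\psi}{\psi}$ and $V_0=V\cap V^\perp$ isotropic, $P_{V_0}=\frac1{|V_0|}\sum_{x\in V_0}W_x^{\otimes2}$ is a projector (the $W_x^{\otimes2}$ commute for $x$ in an isotropic set, and $W_g^{\otimes2}$ commutes with $P_{V_0}$ for every $g$ since $W_g^{\otimes2}W_x^{\otimes2}W_g^{-\otimes2}=(-1)^{2[g,x]}W_x^{\otimes2}=W_x^{\otimes2}$), so $\sum_{x\in g+V_0}|\la\psi|W_x|\psi\ra|^2=|V_0|\Tr(W_g^{\otimes2}P_{V_0}\rho^{\otimes2})=|V_0|\Tr(W_g^{\otimes2}P_{V_0}\rho^{\otimes2}P_{V_0})\le|V_0|\Tr(P_{V_0}\rho^{\otimes2})=\sum_{x\in V_0}|\la\psi|W_x|\psi\ra|^2$; iterating along the $\calP^k$-factor of $V$ (whose dimension is $O(\log(1/\gamma))$ by the upper bound on $\dim V$, so $|V|/|V_0|=\poly(1/\gamma)$) gives $\sum_{x\in V}p_\Psi(x)\ge\sum_{x\in g^\ast+V}p_\Psi(x)\ge\poly(\gamma)$, and since $|V|\le(4/\gamma)2^n$ we conclude $\Exp_{x\in V}[2^np_\Psi(x)]\ge\poly(\gamma)$.

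Finally the accounting: $\bsgtest$, $\sampleAone$, the $\PFR$ oracle, the coset search and the span/linear-algebra steps all run in $\poly(n,1/\gamma,\log(1/\delta))$ time and consume $\poly(n,1/\gamma,\log(1/\delta))$ copies of $\ket{\psi}$, and a union bound over the $O(1)$ conditioning events (good $S$; good $u,\zeta_1,\zeta_2,\zeta_3$; correctness of $\mathcal O$ on all $\PFR$ queries; the heavy coset; the good span) keeps the total failure probability below $\delta$. I expect the main obstacle to be Step~3 of the mass argument above: landing the $\poly(\gamma)$ weight on the \emph{subgroup} $V$ rather than on a coset of it. This genuinely needs the monotonicity property, which is clean for isotropic subgroups but must be pushed through the (at most $\poly(1/\gamma)$-sized) non-isotropic part of $V$ --- and it is precisely here that the bound $\dim V\le n+O(\log(1/\gamma))$ is indispensable, since the target inequality is false for, e.g., $V$ equal to the full Pauli group. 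A secondary obstacle is invoking Conjecture~\ref{conj:algopfrconjecture} faithfully given only a randomized \emph{gap}-membership oracle (between $A^{(1)}$ and $A^{(2)}$) and non-uniform samples, which is what forces the amplification and union-bound bookkeeping.
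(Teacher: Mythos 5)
Your approach is structurally the same as the paper's: sample from $A^{(1)}$, apply the algorithmic $\PFR$ oracle to obtain $H$, locate a heavy coset $w_0+H$, form $V$ as the span of translates of samples inside that coset, bound $\dim V$ from above via $V\subseteq H$ and from below via Claim~\ref{claim:span_z_high_intersection_A1}, and transfer the probability mass from a heavy coset of $V$ onto $V$ itself via coset-to-subgroup monotonicity. This is precisely the route taken in the paper's $\pfrsubgroup$ procedure, with minor and harmless variants (the paper collects pairwise sums into $L'=L\cap H$ and takes $V=\mathrm{span}(L')$, proving the heavy-coset statement as Claim~\ref{claim:tstar_high_overlap_coset_H}, whereas you find the heavy coset explicitly and translate).

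The one substantive gap is your re-derivation of the coset-to-subgroup monotonicity. The paper simply cites Fact~\ref{fact:remove_coset} (Claim~4.13 of~\cite{ad2024tolerant}), which holds for an arbitrary subgroup $W\subseteq\FF_2^{2n}$. Your projector argument is correct for the isotropic part $V_0=V\cap V^\perp$ (modulo a sign you suppress: $W_{g+y}^{\otimes 2}=(-1)^{[g,y]}W_g^{\otimes 2}W_y^{\otimes 2}$, so the operator appearing is $\sum_{y\in V_0}(-1)^{[g,y]}W_y^{\otimes 2}$, which is still $|V_0|$ times a projector for isotropic $V_0$, so the conclusion survives). But the claimed ``iterating along the $\calP^k$-factor'' does not follow: applying the isotropic bound to each of the $|V|/|V_0|$ cosets of $V_0$ appearing in $g^\ast+V$ only yields $\sum_{x\in g^\ast+V}p_\Psi(x)\le(|V|/|V_0|)\sum_{x\in V_0}p_\Psi(x)$, and the right side need not be bounded by $\sum_{x\in V}p_\Psi(x)$; one loses exactly the factor $|V|/|V_0|=4^k=\poly(1/\gamma)$ that you cannot afford. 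Note also that $\frac{1}{|V|}\sum_{x\in V}W_x^{\otimes 2}$ is \emph{not} a projector for non-isotropic $V$ (one gets $M_V^2=(|V_0|/|V|)M_{V_0}$), so the isotropic restriction in your argument is genuinely binding and not removable in place. The clean fix is either to cite Fact~\ref{fact:remove_coset} as the paper does, or to prove it for general $V$ via the self-duality $\sum_v(-1)^{[x,v]}p_\Psi(v)=2^np_\Psi(x)$: then $\sum_{x\in V}p_\Psi(z'+x)=\frac{|V|}{2^n}\sum_{v\in V^\perp}(-1)^{[z',v]}p_\Psi(v)\le\frac{|V|}{2^n}\sum_{v\in V^\perp}p_\Psi(v)=\sum_{x\in V}p_\Psi(x)$ by nonnegativity of $p_\Psi$, with no isotropy assumption and no iteration.
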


Before proving this theorem and giving the corresponding algorithm, we give the following fact from \cite{ad2024tolerant} that comments on the probability mass of a subgroup in relation to the probability mass in one of its cosets.
\begin{fact}[{\cite[Claim~4.13]{ad2024tolerant}}]\label{fact:remove_coset}
    Let $W \subseteq \FF_2^{2n}$ be a subgroup. For any $z'$, we have\footnote{Clearly the case where $z'\in W$ is an equality, but we write it out this way for succinctness.}
   \[
   \Exp_{x \in W}[2^n p_\Psi(x)]\geq
   \begin{cases} 
      \Exp_{x \in W}[2^n p_\Psi(z'+x)]
 & \text{ if }z'\in W \\
       \Exp_{x \in W}[2^n p_\Psi(z' + x)] & \text{ if }z'\notin W.
   \end{cases}
\] 
\end{fact}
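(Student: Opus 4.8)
The plan is to prove the inequality by passing to the symplectic Fourier side, where the characteristic distribution behaves particularly nicely. Write $g(y) := 2^n p_\Psi(y) = |\langle\psi|W_y|\psi\rangle|^2$; since each Weyl operator is Hermitian, $\langle\psi|W_y|\psi\rangle \in \RR$, so $g(y) = \langle\psi|W_y|\psi\rangle^2 \geq 0$ and $\sum_y g(y) = 2^n$ (by $\sum_y p_\Psi(y)=1$).

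First I would record the key structural fact: with the symplectic Fourier transform $\widehat g(z) := \sum_{y\in\FF_2^{2n}} (-1)^{[y,z]} g(y)$, one has $\widehat g(z) = 2^n g(z) \geq 0$. This follows from the Pauli expansion $\ketbra{\psi}{\psi} = 2^{-n}\sum_y \langle\psi|W_y|\psi\rangle W_y$: conjugating by $W_z$ multiplies the $W_y$-term by $(-1)^{[y,z]}$, so on one hand $\Tr\!\big(\ketbra{\psi}{\psi}\cdot W_z\ketbra{\psi}{\psi}W_z\big) = 2^{-n}\sum_y g(y)(-1)^{[y,z]} = 2^{-n}\widehat g(z)$, and on the other hand this trace equals $\langle\psi|W_z|\psi\rangle^2 = g(z)$. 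Even without the exact value, the left side is $\Tr$ of a product of two positive semidefinite operators, hence $\widehat g(z)\geq 0$; this is the only place the quantum structure of $p_\Psi$ is genuinely used.

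Next I would apply finite-group Poisson summation over the symplectic dual $W^\perp := \{z\in\FF_2^{2n} : [z,w] = 0 \ \forall w\in W\}$. Non-degeneracy of the symplectic form gives $(W^\perp)^\perp = W$, so character orthogonality yields the Iverson-bracket identity $[y\in W] = |W^\perp|^{-1}\sum_{z\in W^\perp}(-1)^{[y,z]}$. Summing against $g$ and using $|W|\cdot|W^\perp| = 2^{2n}$,
\[
\Exp_{x\in W}[2^n p_\Psi(x)] \;=\; \frac{1}{|W|}\sum_{x\in W} g(x) \;=\; \frac{1}{2^{2n}}\sum_{z\in W^\perp}\widehat g(z) \;=\; \sum_{z\in W^\perp} p_\Psi(z),
\]
and, running $y = z'+x$ over the coset $z'+W$ in place of $W$,
\[
\Exp_{x\in W}[2^n p_\Psi(z'+x)] \;=\; \frac{1}{2^{2n}}\sum_{z\in W^\perp}(-1)^{[z',z]}\widehat g(z) \;=\; \sum_{z\in W^\perp}(-1)^{[z',z]} p_\Psi(z).
\]
Subtracting the two identities gives $\Exp_{x\in W}[2^n p_\Psi(x)] - \Exp_{x\in W}[2^n p_\Psi(z'+x)] = 2\sum_{z\in W^\perp :\, [z',z]=1} p_\Psi(z) \geq 0$, since $p_\Psi\geq 0$. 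When $z'\in W = (W^\perp)^\perp$, every $z\in W^\perp$ satisfies $[z',z]=0$, so the sum is empty and the inequality is an equality, matching the footnote in the statement. This handles both cases of the claimed inequality simultaneously.

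The argument is short once set up this way, so I do not expect a serious obstacle; the one step warranting care is the self-duality/nonnegativity $\widehat g = 2^n g \geq 0$ above (equivalently, that $y\mapsto 2^n p_\Psi(y)$ is its own symplectic Fourier transform, which is folklore but should be derived cleanly), after which everything reduces to routine finite-abelian Fourier analysis on $\FF_2^{2n}$ with the symplectic pairing $[\cdot,\cdot]$.
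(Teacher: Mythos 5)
Your proof is correct. The paper does not reprove this fact (it is quoted from \cite[Claim~4.13]{ad2024tolerant}), but your argument is exactly the standard one behind that claim: the self-duality $\widehat{g}=2^n g\geq 0$ of the characteristic distribution under the symplectic Fourier transform, followed by Poisson summation over $W^\perp$, which exhibits the difference of the two averages as $2\sum_{z\in W^\perp:\,[z',z]=1}p_\Psi(z)\geq 0$ and gives the equality case for $z'\in W$ for free. All steps, including the normalization $|W|\cdot|W^\perp|=2^{2n}$ and the trace computation establishing $\widehat g(z)=2^n g(z)$, check out.
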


\begin{claim}\label{claim:tstar_high_overlap_coset_H}
Consider context of Theorem~\ref{thm:analysis_algo_V}. Assume we obtain an approximate group $S$ satisfying the guarantee of Lemma~\ref{lem:sample_bds} and have a good choice of $u$ and parameters as defined in Theorem~\ref{lemma:analysis_bsg}. Let $H$ be the subgroup obtained by applying Theorem~\ref{thm:marton_conjecture} to $A^{(2)}$. Then, given a set $A'$ containing at least $t^\star$ samples from $A^{(1)}$, obtained via Lemma~\ref{lem:sampling_small_doubling_set}, there exists a coset $a^\star + H$ such that $Q = A' \cap (a^\star + H)$ satisfies
$$
|A^{(1)} \cap Q| \geq 4n^2 + \log(10/\delta), \text{ and } |A^{(1)} \cap \mathrm{span}(Q)| \geq (\gamma^{62}/2) \cdot |A^{(1)}|,
$$
as long as
$$
t^\star \geq \big( 4n^2 + \log(10/\delta)\big)\cdot \poly(1/\gamma).
$$
\end{claim}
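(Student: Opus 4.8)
The plan is to feed the small-doubling set $A^{(2)}$ through the combinatorial $\PFR$ theorem, pigeonhole over the resulting cosets of the Freiman subgroup $H$, and then port the span-covering argument of Claim~\ref{claim:span_z_high_intersection_A1} to the samples that happen to land inside one such coset. To set up: by Claim~\ref{claim:doubling_A2} the set $A^{(2)}$ has doubling constant $K = O(\gamma^{-932})$, so applying Theorem~\ref{thm:marton_conjecture} to $A^{(2)}$ (this is the $H$ in the claim, with $|H| \leq |A^{(2)}|$) covers $A^{(2)}$, and hence its subset $A^{(1)}$, by at most $2K^{9} = \poly(1/\gamma)$ cosets of $H$. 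These cosets are pairwise disjoint, so they simultaneously partition $A^{(1)}$ and the multiset of the $\geq t^\star$ samples from $A^{(1)}$ that sit inside $A'$ (the ones produced in Lemma~\ref{lem:sampling_small_doubling_set}).

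For the first conclusion I would take $a^\star + H$ to be the coset carrying the \emph{largest} number of these $A^{(1)}$-samples: a deterministic pigeonhole over the $\leq 2K^9$ cosets puts at least $t^\star/(2K^9)$ of them into $a^\star + H$, hence into $Q = A'\cap(a^\star+H)$, so choosing $t^\star \geq (4n^2 + \log(10/\delta))\cdot 2K^9 = (4n^2 + \log(10/\delta))\cdot\poly(1/\gamma)$ already forces $|A^{(1)}\cap Q| \geq 4n^2 + \log(10/\delta)$. One can instead pigeonhole on $D_\Psi$-mass: Lemma~\ref{lemma:lb_size_A1}$(iii)$ gives $\sum_{v\in A^{(1)}}D_\Psi(v)\geq\Omega(\gamma^{202})$, so some coset carries a $\geq \Omega(\gamma^{202})/(2K^9)$ share of this mass, and then a Hoeffding bound (Fact~\ref{fact:hoeffding_sampling}) on the $t^\star$ i.i.d.\ samples $\sim D_\Psi$ conditioned on $A^{(1)}$ drives $\geq 4n^2 + \log(10/\delta)$ of them into that coset except with probability $\delta/10$.

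For the second conclusion, condition on the event above; the $t' \geq 4n^2 + \log(10/\delta)$ points $z_1,\dots,z_{t'}$ of $Q\cap A^{(1)}$ are then i.i.d.\ copies of $D_\Psi$ restricted to $A^{(1)}\cap(a^\star+H)$. I would rerun the union-bound argument behind Claim~\ref{claim:span_z_high_intersection_A1}: the failure event $|\mathrm{span}(\{z_1,\dots,z_{t'}\})\cap A^{(1)}| < (\gamma^{62}/2)|A^{(1)}|$ forces all the $z_i$ into some subspace $T\subseteq\FF_2^{2n}$ with $|T\cap A^{(1)}| < (\gamma^{62}/2)|A^{(1)}|$; one bounds $\Pr[z\in T \mid z\in A^{(1)}\cap(a^\star+H)]$ using the $D_\Psi$ upper bound of Fact~\ref{fact:D_ub}, the lower bound on the $D_\Psi$-mass of $A^{(1)}\cap(a^\star+H)$ from the previous paragraph, and $|A^{(1)}|\geq\Omega(\gamma^{138})|S|$ (Lemma~\ref{lemma:lb_size_A1}$(ii)$); a union bound over the $2^{O(n^2)}$ subspaces $T$ (together with $t' \geq 4n^2+\log(10/\delta)$) then rules out every bad $T$ except with probability $\delta/5$. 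Since $\mathrm{span}(\{z_1,\dots,z_{t'}\})\subseteq\mathrm{span}(Q)$, this yields $|A^{(1)}\cap\mathrm{span}(Q)|\geq(\gamma^{62}/2)|A^{(1)}|$, and a union bound over all the failure events closes the argument.

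The step I expect to be the main obstacle is exactly this last span-covering step. In~\cite{tulsiani2014quadratic} the analogous samples are \emph{uniform} over the choice set, so restricting to a single coset is essentially harmless, whereas here the argument must be carried out for $D_\Psi$ conditioned on a coset of $H$, and conditioning on a coset of a small-doubling set can sharply weaken the weight bounds one has available. The delicate part is keeping the covering number $2K^9$, the induced $D_\Psi$-mass $\Omega(\gamma^{202})/(2K^9)$ on the good coset, the $\gamma^{62}$ intersection threshold, and the combinatorial fact that $\mathrm{span}(Q)\subseteq H\cup(a^\star+H)$ (so $|\mathrm{span}(Q)|\leq 2|A^{(2)}|$) all mutually consistent — the same kind of polynomial-in-$\gamma$ bookkeeping that drives the proof of Claim~\ref{claim:span_z_high_intersection_A1}, which itself adapts~\cite[Claim~4.14]{tulsiani2014quadratic}.
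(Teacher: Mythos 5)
Your proposal takes essentially the same route as the paper's proof: apply the combinatorial $\PFR$ theorem to $A^{(2)}$ (doubling constant $K = O(\gamma^{-932})$ from Theorem~\ref{lemma:analysis_bsg}), cover $A^{(1)}\subseteq A^{(2)}$ by $\leq 2K^9$ cosets of $H$, pigeonhole to find a coset $a^\star + H$ carrying at least $t^\star/(2K^9) \geq 4n^2 + \log(10/\delta)$ of the $A^{(1)}$-samples in $A'$, and then run the span argument of Claim~\ref{claim:span_z_high_intersection_A1} on those samples. The paper does exactly this, and then simply cites Claim~\ref{claim:span_z_high_intersection_A1} to conclude $|A^{(1)}\cap\mathrm{span}(Q)|\geq(\gamma^{62}/2)|A^{(1)}|$.

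Where you diverge is in flagging a subtlety the paper treats implicitly. As you observe, Claim~\ref{claim:span_z_high_intersection_A1} is stated for samples drawn from $D_\Psi$ conditioned on $A^{(1)}$, whereas the points in $Q$ are further conditioned on landing in $a^\star + H$, so the claim cannot be invoked verbatim. Your plan to rerun the union-bound argument for $D_\Psi$ conditioned on $A^{(1)}\cap(a^\star+H)$ is the careful fix: the denominator in the conditional probability bound becomes the $D_\Psi$-mass of $A^{(1)}\cap(a^\star+H)$, which is smaller than the mass of $A^{(1)}$ by up to a $2K^9$ factor, and correspondingly the intersection threshold in the conclusion should be taken $\poly(\gamma)$ smaller than the $\gamma^{62}/2$ written in the paper. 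This does not harm anything downstream, since Theorem~\ref{thm:analysis_algo_V} only needs a $\poly(\gamma)$ fraction, but your bookkeeping is the precise one.

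One caveat in your write-up: with the deterministic argmax choice of $a^\star$ (picking the coset with the most samples after observing them), the points in $Q\cap A^{(1)}$ are \emph{not} quite i.i.d.\ from $D_\Psi$ restricted to $A^{(1)}\cap(a^\star+H)$, because of the selection bias introduced by the argmax. Your own union bound over failure events is the right repair: for each of the $\leq 2K^9$ cosets, bound the span-failure probability conditionally, union-bound over cosets, and combine with the pigeonhole that guarantees some coset carries enough samples. Alternatively, your $D_\Psi$-mass pigeonhole route picks $a^\star$ a priori, so the conditional i.i.d.\ claim is exactly true there, at the cost of a Hoeffding step to show enough samples land in that coset.
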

\begin{proof}
By applying Theorem~\ref{thm:marton_conjecture} to $A^{(2)}$, which is a small doubling set with $K = O(1/\gamma^{932})$ (by Theorem~\ref{lemma:analysis_bsg}), we obtain that there exists a subgroup $H$ satisfying
\begin{equation}\label{eq:pfr_on_A2}
|H|\leq |A^{(2)}| \text{ and } A^{(1)} \subseteq A^{(2)} \subseteq \bigcup \limits_{i \in [2 K^9]} (a_i + H),    
\end{equation}
where we have denoted the different cosets of $H$ as $\{a_i+H\}$ with $\{a_i\}$ being the translates. These cosets of $H$ must also cover $A^{(1)} \cap A'$.\footnote{They also cover $A'$ itself since $A' \subseteq A^{(2)}$ but we do not need to use this fact here.} Thus, there exists a coset $a^\star$, such that 
$$
|A^{(1)} \cap A' \cap (a^\star +H)|\geq (2K^9)^{-1}\cdot |A^{(1)} \cap A'|.
$$
By choosing $t^\star \geq \big( 4n^2 + \log(10/\delta)\big)\cdot (2K^9)$, we then have $|A^{(1)} \cap A'| \geq \big( 4n^2 + \log(10/\delta)\big)\cdot(2K^9)$ and hence ensure that $|A^{(1)} \cap A' \cap (a^\star +H)| \geq 4n^2 + \log(10/\delta)$. Note that this involves $100t^\star \poly(1/\gamma)$ calls to the sampling procedure inside Algorithm~\ref{algo:sample_small_doubling_set} of Lemma~\ref{lem:sampling_small_doubling_set}. Defining $Q = \mathrm{span}(A' \cap (a^\star + H))$, Claim~\ref{claim:span_z_high_intersection_A1} then implies that
$$
|A^{(1)} \cap Q| \geq (\gamma^{62}/2) \cdot |A^{(1)}|,
$$
which gives us the desired result.
\end{proof}

We are now ready to prove the main theorem in this section.
\begin{proof}[Proof of Theorem~\ref{thm:analysis_algo_V}]
We start off by sampling
\begin{equation}\label{eq:value_tstar}
t^\star=\big( 4n^2 + \log(10/\delta)\big)\cdot (2K^9),    
\end{equation}
many points from $A^{(1)}$, which we will denote by $A' = \{z_1,\ldots,z_{t^\star}\}$ and with $K=O(1/\gamma^{932})$ (see Theorem~\ref{lemma:analysis_bsg}. We will argue in the proof later why $t^\star$ is chosen this way. We then run Algorithm~\ref{algo:obtain_pfr_subgroup_V} that takes as input $A'$ and will output the basis of a subspace $V$ that is promised to match the guarantees of Theorem~\ref{thm:analysis_algo_V}.\footnote{We make a remark about Step $(2)$ of the algorithm below. Our algorithmic $\PFR$ conjecture~\ref{conj:algopfrconjecture} requires \emph{uniform} samples from $A^{(2)}$ to output a subspace $H$. Here, we draw samples from $A^{(2)}$  according to $D$ given in Eq.~\eqref{eq:induced_dist_SAMPLE}. Although this is not the uniform distribution, the calculations that we mimicked in the proofs in this section showed that sampling from $D_\Psi$ for dense sets is ``almost" same as close to the uniform distribution upto a $\poly(\gamma)$ loss in parameters. So if we solved Conjecture~\ref{conj:algopfrconjecture} as stated, we could use it as a subroutine for step $(2)$ below.}

\begin{myalgorithm} \setstretch{1.35}
\begin{algorithm}[H]
    \label{algo:obtain_pfr_subgroup_V}
    \caption{\pfrsubgroup($A',\gamma,\delta$)} 
    \setlength{\baselineskip}{1.5em} 
    \DontPrintSemicolon 
    \KwInput{Set of elements $A'$, parameter $\gamma$, sample and  membership oracle access to $A^{(2)}$, failure prob.~$\delta$}
    \KwOutput{Subgroup $V$ defined by its basis}
    \Promise{If input $A'$ contains at least $t^\star = \big( 4n^2 + \log(10/\delta)\big)\cdot (2K^9)$ samples from $A^{(1)}$ then output $V =\mathrm{span}(\{w_i\}_{i \in [t]})$ with $t \geq 4n^2 + \log(10/\delta)$ s.t. $\Exp_{x \in V}[2^n p_\Psi(x)] \geq \Omega(\poly(\gamma))$, $|V| \leq |A^{(2)}|$, and $|V| \geq \poly(\gamma) |A^{(1)}|$} \vspace{2mm}
    Set $t = 4n^2 + \log(10/\delta)$ and $K = O(1/\gamma^{932})$. \\
    Output membership oracle for subgroup $H$, denoted by $\textsf{MEM}_H$, using algorithmic $\PFR$ (Conjecture~\ref{conj:algopfrconjecture}) given sample and  and membership oracle to $A^{(2)}$. \\
    Set $L = \{z_i + z_j : z_i, z_j \in A'\}$ and $L' = \varnothing$. \\
    \For{$w \in L$}{
        If $\textsc{MEM}_H(w)=1$, then $L' \leftarrow L' \cup \{w\}$. \\
    }
    \uIf{$|L'| \geq t$}{\Return $V = \mathrm{span}(L')$}\Else{\Return $\perp$}
\end{algorithm} 
\end{myalgorithm}
In Algorithm~\ref{algo:obtain_pfr_subgroup_V}, we define the set 
$$
L =  \{z_i + z_j : z_i, z_j \in A'\},
$$
and then apply the membership oracle for the subgroup $H$ (obtained via Conjecture~\ref{conj:algopfrconjecture}) to each element in $L$ to create the set $L'$. In other words, the set $L'$ is defined as
$$
L' = \{z_i + z_j : z_i, z_j \in A',\, z_i + z_j \in H\}.
$$
We will now argue that $(i)$ $L'$ is pretty large for the chosen value of $t^\star$ and $(ii)$ $V = \mathrm{span}(L')$ has high probability mass in addition to satisfying the stated size guarantees.

$(i)$ To comment on the size of $L'$, let us first analyze the elements in $L'$. From Theorem~\ref{thm:marton_conjecture}, we know that $A^{(2)}$ is covered by $2K^9$ cosets of $H$ and $A' \subseteq A^{(2)}$ (since $A'$ contains points from $A^{(1)}\subseteq A^{(2)}$). Denoting the cosets of $H$ as $\{a_i + H\}$, we have that for any $z_i, z_j$ belonging to same coset i.e., $z_i,z_j \in a + H$, the sum $z_i + z_j \in L'$. Thus, to show that $L'$ is large, it is enough to show that $A'$ has a high overlap with a coset of $H$. Using Claim~\ref{claim:tstar_high_overlap_coset_H} and the value of $t^\star$ in Eq.~\eqref{eq:value_tstar}, we ensure that there is a coset $a^\star + H$ such that the set $Q^\star = A' \cap (a^\star +H)$ satisfies
\begin{equation}\label{eq:overlap_A1_Qstar}
|A^{(1)} \cap Q^\star|\geq 4n^2 + \log(10/\delta), \text{ and } |A^{(1)} \cap \mathrm{span}(Q^\star)| \geq (\gamma^{62}/2) \cdot |A^{(1)}|
\end{equation}
Denoting the elements in $Q^\star = \{w_i\}$, we also define $\overline{Q} = w_1 + Q^\star = \{w_1 + w_i : w_i \in Q^\star\}$ where $w_1$ is a fixed point in $Q^\star$. Using our earlier observation regarding elements $z_i,z_j$ in the same coset of $H$ would $z_i + z_j \in H$, we then have that $\overline{Q} \subseteq L'$. We can thus lower bound the size of $L'$ as
\begin{equation}\label{eq:size_Lprime}
|L'| \geq |\overline{Q}| \geq |Q^\star| \geq |A^{(1)} \cap Q^\star|\geq 4n^2 + \log(10/\delta),
\end{equation}
where we have used the definition of $Q^\star$ and Eq.~\eqref{eq:overlap_A1_Qstar} in the last inequality. This proves claim $(i)$.

$(ii)$ We note that $V = \mathrm{span}(L')$ satisfies $|V| \leq |H| \leq |A^{(2)}|$ as $L'$ by construction only contains elements in $H$ and the upper bound on $|H|$ follows from Theorem~\ref{thm:marton_conjecture}. Thus,
\begin{equation}\label{eq:ub_dim_V}
    \dim(V) \leq n + O(\log(1/\gamma)).
\end{equation}
To comment on the lower bound of $|V|$, we note that $\overline{Q} \subseteq L'$ and thus 
\begin{align}
    \label{eq:QbarinsideV}
    \mathrm{span}(\overline{Q}) \subseteq \mathrm{span}(L')= V
\end{align}
To lower bound $|V|$, it is then enough to comment on a lower bound on $|\mathrm{span}(\overline{Q})|$. As $\overline{Q} = w_1 + Q^\star$ for a fixed point $w_1 \in Q^\star$, we also have $Q^\star = w_1 + \overline{Q}$ which implies
\begin{equation}\label{eq:size_span_Qbar}
    \mathrm{span}(Q^\star) = \mathrm{span}(w_1 + \overline{Q}) = \mathrm{span}(\overline{Q}) \cup (w_1 + \mathrm{span}(\overline{Q})) \implies |\mathrm{span}(\overline{Q})| \geq |\mathrm{span}(Q^\star)|/2,
\end{equation}
where in the last implication we used $|w_1 + \mathrm{span}(\overline{Q})| = |\mathrm{span}(\overline{Q})|$. Using Eq.~\eqref{eq:size_span_Qbar} and Eq.~\eqref{eq:overlap_A1_Qstar}, we thus have
\begin{equation}\label{eq:size_V}
|V| \geq |\mathrm{span}(\overline{Q})| \geq |\mathrm{span}(Q^\star)|/2 \geq |A^{(1)} \cap \mathrm{span}(Q^\star)|/2 \geq (\gamma^{62}/4) \cdot |A^{(1)}|.
\end{equation} 
Since  $|A^{(1)}|\geq 2^n\cdot \poly(\gamma)$, we have
\begin{equation}\label{eq:lb_dim_V}
    \dim(V) \geq n - O(\log(1/\gamma)).
\end{equation}
It remains to show that the probability mass in $V$ is high. Towards that, we first note
\begin{equation}\label{eq:prob_mass_span_Qstar}
    \Exp_{x \in \mathrm{span}(Q^\star)}[2^n p_\Psi(x)] \geq \frac{1}{|\mathrm{span}(Q^\star)|} \sum_{x \in A^{(1)} \cap \mathrm{span}(Q^\star)} 2^n p_\Psi(x) \geq \frac{1}{|A^{(2)}|} \cdot \frac{\gamma^{62} |A^{(1)}|}{2} \cdot \frac{\gamma}{8} \geq \poly(\gamma),
\end{equation}
where we have used that $|\mathrm{span}(Q^\star)| \leq |a^\star + H| \leq |H| \leq |A^{(2)}|$ in the first inequality. In the final inequality, we used that that all the elements in $A^{(1)}$ have expectation $\geq \gamma/8$, Eq.~\eqref{eq:overlap_A1_Qstar} to comment on the size of $|A^{(1)} \cap \mathrm{span}(Q^\star)|$ and $|A^{(1)}|/|A^{(2)}| \geq \poly(\gamma)$ by Theorem~\ref{lemma:analysis_bsg}.
Using our earlier observation that $\mathrm{span}(Q^\star) = \mathrm{span}(\overline{Q}) \cup (w_1 + \mathrm{span}(\overline{Q}))$, we also have
\begin{align}
\poly(\gamma) \leq \Exp_{x \in \mathrm{span}(Q^\star)}[2^n p_\Psi(x)] 
&= \frac{1}{|\mathrm{span}(Q^\star)|}\left(\sum_{x \in \mathrm{span}(\overline{Q})}[2^n p_\Psi(x)] + \sum_{x \in w_1 + \mathrm{span}(\overline{Q})}[2^n p_\Psi(x)]\right) \\
&= \frac{|\mathrm{span}(\overline{Q})|}{|\mathrm{span}(Q^\star)|}\left(\Exp_{x \in \mathrm{span}(\overline{Q})}[2^n p_\Psi(x)] + \Exp_{x \in w_1 + \mathrm{span}(\overline{Q})}[2^n p_\Psi(x)]\right) \\
&\leq \frac{2|\mathrm{span}(\overline{Q})|}{|\mathrm{span}(Q^\star)|} \cdot \Exp_{x \in \mathrm{span}(\overline{Q})}[2^n p_\Psi(x)],
\label{eq:prob_mass_Qbar}
\end{align}
where we used $\Exp_{x \in \mathrm{span}(\overline{Q})}[2^n p_\Psi(x)] \geq \Exp_{x \in w_1 + \mathrm{span}(\overline{Q})}[2^n p_\Psi(x)]$ by Fact~\ref{fact:remove_coset} in the last inequality. This implies that
\begin{equation}
\Exp_{x \in V}[2^n p_\Psi(x)] \geq \frac{1}{|V|} \sum_{x \in \mathrm{span}(\overline{Q})} 2^n p_\Psi(x) \geq \frac{|\mathrm{span}(Q^\star)|}{2 |V|} \cdot \poly(\gamma) \geq \frac{\gamma^{62}}{4} \cdot \frac{|A^{(1)}|}{|A^{(2)}|} \cdot \poly(\gamma) \geq \poly(\gamma),
\end{equation}
where the first inequality used Eq.~\eqref{eq:QbarinsideV}, Eq,~\eqref{eq:prob_mass_Qbar} in the second inequality and $|V| \leq |H| \leq |A^{(2)}|$ along with $|\mathrm{span}(Q^\star)| \geq \gamma^{62}/2 |A^{(1)}|$ ( from Eq.~\eqref{eq:overlap_A1_Qstar}) in the third inequality and $|A^{(1)}|/|A^{(2)}| \geq \poly(\gamma)$ by Theorem~\ref{lemma:analysis_bsg} in the final inequality. This proves the desired properties of $V$.

The runtime of the algorithm is entirely dominated by $O(n^2 \poly(1/\gamma))$ calls to the $\bsgtest$ to ensure collection of $A'$ containing at least $t^\star$ elements from $A^{(1)}$, which in turn calls the $\sample$ subroutine $\poly(1/\gamma)$ many times, each taking time $O(n \poly(1/\gamma))$ time and $O(1/\gamma)$ samples (Lemma~\ref{lem:sample_bds}). The overall time complexity is thus $\tilde{O}(n^4 \poly(1/\gamma))$ with a corresponding sample complexity of $\tilde{O}(n^3 \poly(1/\gamma))$. Since the complexity of algorithmic $\PFR$ conjecture is $\poly(n,1/\gamma)$, the sample and time complexity in the theorem statement follows.
\end{proof}

\subsection{Finding the stabilizer state}
\label{sec:findstabilizer}
We have so far determined a subgroup $V$  that has high probability mass, i.e., $\Exp_{x \in V}[2^n p_\Psi(x)] \geq \poly(\gamma)$. We now show that given such a subgroup $V$, we can determine a stabilizer state that has high fidelity with the state $\ket{\psi}$. Formally, we have the following result.
\begin{theorem}\label{thm:analysis_algo_stab_state}
Let $\gamma,\delta \in (0,1]$, $\ket{\psi}$ be an $n$-qubit state with $\Exp_{x\sim q_\Psi}\left[ |\la \psi | W_x | \psi \ra|^2 \right]\geq \gamma$ and $V~\subseteq~\FF_2^{2n}$ be a subgroup such that $\Exp_{x \in V}[2^n p_\Psi(x)] \geq \Omega(\gamma^C)$ (for a universal constant $C > 1$). Algorithm~\ref{algo:find_good_stab} outputs a stabilizer state $\ket{\phi}$  such that $|\la \phi| \psi\ra|^2 \geq \Omega(\gamma^C)$ with probability $\geq 1-\delta$, consuming 
$$
\widetilde{O}(\poly(1/\gamma) \log 1/\delta)
$$ copies of $\ket{\psi}$ and 
$$
\widetilde{O}\Big(n^3 + n^2 \cdot \poly(1/\gamma)  \log 1/\delta \Big)
$$ total number of gates.
\end{theorem}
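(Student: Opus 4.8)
The goal is to algorithmize the final structural step of~\cite{ad2024tolerant}: convert the subgroup $V$ with $\Exp_{x\in V}[2^np_\Psi(x)]\ge\poly(\gamma)$ into an explicit stabilizer state having $\poly(\gamma)$ fidelity with $\ket{\psi}$. \textbf{Step 1 — normal form and smallness of $k$.} First I would run $\symgramschmidt$ on a generating set of $V$ and invoke Claim~\ref{claim:clifford_synthesis_subgroup} to compute, in $O(n^3)$ time, an $n$-qubit Clifford $U$ with $UVU^\dagger=\calP^k\times\calP_\calZ^m$ as in Fact~\ref{fact:clifford_action_on_group}; thus $\dim(V)=2k+m$ and $k+m\le n$. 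The key observation is that $k=O(\log(1/\gamma))$: if $\rho$ denotes the reduced state of $U\ket{\psi}$ on qubits $1,\dots,k+m$, then since $\calP^k\times\calP_\calZ^m$ acts trivially on the remaining qubits and is a sub-collection of the $(k{+}m)$-qubit Paulis,
$$
\sum_{x\in V}|\la\psi|W_x|\psi\ra|^2=\sum_{P\in\calP^k\times\calP_\calZ^m}|\Tr(P\rho)|^2\le 2^{k+m}\Tr(\rho^2)\le 2^{k+m},
$$
so $\Exp_{x\in V}[2^np_\Psi(x)]=2^{-(2k+m)}\sum_{x\in V}|\la\psi|W_x|\psi\ra|^2\le 2^{-k}$, and the promise forces $2^k=\poly(1/\gamma)$. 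Since $V$ is the subgroup produced by Theorem~\ref{thm:analysis_algo_V} we also have $\dim(V)=n\pm O(\log(1/\gamma))$, which together with $k=O(\log(1/\gamma))$ gives $t:=n-(k+m)=O(\log(1/\gamma))$.

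\textbf{Step 2 — a polynomial-size Lagrangian covering that carries the mass.} By Fact~\ref{fact:mub_unsigned_stab}, the $k$-qubit Weyl group $\calP^k$ is covered by $2^k+1=\poly(1/\gamma)$ unsigned stabilizer groups $G_1,\dots,G_{2^k+1}$, each describable in $\poly(1/\gamma)$ time because $k=O(\log(1/\gamma))$. Padding with $Z_{k+1},\dots,Z_n$, each $L_i:=G_i\times\la Z_{k+1},\dots,Z_n\ra$ is a Lagrangian subspace of $\FF_2^{2n}$ and $\bigcup_i L_i\supseteq\calP^k\times\calP_\calZ^m=UVU^\dagger$. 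Because $U$ is Clifford, $\Exp_{y\in UVU^\dagger}[|\la U\psi|W_y|U\psi\ra|^2]=\Exp_{x\in V}[|\la\psi|W_x|\psi\ra|^2]\ge\poly(\gamma)$; distributing this mass over the $2^k+1$ sets $G_i\times\calP_\calZ^m$ yields an index $i^\star$ with $\Exp_{y\in L_{i^\star}}[|\la U\psi|W_y|U\psi\ra|^2]\ge 2^{-t}\poly(\gamma)=\poly(\gamma)$ (using $|L_{i^\star}|=2^n$ and $t=O(\log(1/\gamma))$). Applying Lemma~\ref{lem:state_isotropic_space_high_mass} to the state $U\ket{\psi}$ with the Lagrangian $L_{i^\star}$ (the case $t=0$ there) produces a stabilizer state $\ket{\phi'}$ with $L_{i^\star}\subseteq\weyl{\ket{\phi'}}$ and $|\la\phi'|U\psi\ra|^2\ge\poly(\gamma)$; comparing dimensions forces $\weyl{\ket{\phi'}}=L_{i^\star}$, so $\ket{\phi'}$ is one of the $2^n$ mutually orthogonal stabilizer states whose (signed) stabilizer group has underlying Lagrangian $L_{i^\star}$.

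\textbf{Step 3 — finding, certifying, and accounting.} For each $i$ I would compute a $k$-qubit Clifford $W_i$ with $W_iG_iW_i^\dagger=\la Z_1,\dots,Z_k\ra$, repeatedly prepare $(W_i\otimes I)U\ket{\psi}$, and measure in the computational basis, obtaining an outcome $(\beta,z)\in\{0,1\}^k\times\{0,1\}^{n-k}$ that names the candidate stabilizer state $U^\dagger(W_i^\dagger\ket{\beta}\otimes\ket{z})$ (with $W_i^\dagger\ket{\beta}$ a $k$-qubit stabilizer state from the MUB of $G_i$). As $\ket{\phi'}$ is the common $+1$-eigenstate of a signed version of $L_{i^\star}$, for $i=i^\star$ the state $U^\dagger\ket{\phi'}$ is exactly one such candidate, whose naming outcome occurs with probability $|\la\phi'|U\psi\ra|^2\ge\poly(\gamma)$; hence $\poly(1/\gamma)\log(1/\delta)$ repetitions per $i$ collect, with probability $\ge 1-\delta/2$, a list of $M=\poly(1/\gamma)\log(1/\delta)$ candidate stabilizer states containing one of fidelity $\ge\poly(\gamma)$ with $\ket{\psi}$. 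Finally I would estimate each candidate's fidelity with $\ket{\psi}$ to additive error a sufficiently small $\poly(\gamma)$ via classical shadows (Lemma~\ref{lem:classicalshadow}) and output the candidate with the largest estimate, which then has fidelity $\ge\poly(\gamma)=\Omega(\gamma^C)$. Accounting: $\symgramschmidt$ and synthesis of $U$ cost $O(n^3)$ (Claim~\ref{claim:clifford_synthesis_subgroup}); the measurement phase uses $\poly(1/\gamma)\log(1/\delta)$ copies and $O(n^2)$ gates per copy (for $U$ and $W_i$), i.e.\ $O(n^2\poly(1/\gamma)\log(1/\delta))$ gates; the shadow phase uses $\poly(1/\gamma)\log(1/\delta)$ copies and $\widetilde{O}(n^2\poly(1/\gamma)\log(1/\delta))$ time (Lemma~\ref{lem:classicalshadow} with $M,1/\varepsilon=\poly(1/\gamma)$), and producing the output stabilizer state's tableau from its circuit costs $O(n^3)$ once. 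Summing gives $\widetilde{O}(\poly(1/\gamma)\log(1/\delta))$ copies and $\widetilde{O}(n^3+n^2\poly(1/\gamma)\log(1/\delta))$ gates, matching the statement.

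\textbf{Main obstacle.} The crux is Steps~1–2: showing that the number $k$ of symplectic pairs in the normal form of $V$ obeys $\Exp_{x\in V}[2^np_\Psi(x)]\le 2^{-k}$ (so the MUB covering has only $\poly(1/\gamma)$ pieces), and that padding the heaviest covering piece to a full Lagrangian retains $\poly(\gamma)$ Weyl mass — which is exactly where the range $\dim(V)=n\pm O(\log(1/\gamma))$ inherited from Theorem~\ref{thm:analysis_algo_V} enters, since without it $t=n-(k+m)$ could be $\Theta(n)$ and the padded subspace arbitrarily light. Given these, the remaining work — the basis-measurement loop that enumerates candidate stabilizer states, the fidelity certification, and the $O(n^3)$ bookkeeping for $\symgramschmidt$/Clifford synthesis — is routine.
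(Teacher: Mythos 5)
Your proposal is correct and follows the same architecture as the paper's proof (symplectic Gram--Schmidt to bring $V$ into the form $\calP^k\times\calP_\calZ^m$, an MUB covering of $\calP^k$, a sampling phase that names a polynomial-size list of candidate stabilizer states, and classical shadows to select the best), but it differs in two local respects. First, you derive $k=O(\log(1/\gamma))$ directly: since $UVU^\dagger$ is a subcollection of the $(k+m)$-qubit Paulis acting trivially elsewhere, $\sum_{x\in V}|\la\psi|W_x|\psi\ra|^2\le 2^{k+m}\Tr(\rho^2)\le 2^{k+m}$, so $\Exp_{x\in V}[2^np_\Psi(x)]\le 2^{-k}$ and the mass hypothesis forces $k=O(\log(1/\gamma))$; the paper instead cites Theorem~\ref{thm:stabilizer_covering_group}, and separately Theorem~\ref{thm:characterization_good_stab} for the form $\ket{\varphi_z}\otimes\ket{z}$ of the good stabilizer state, whereas you re-derive existence via Lemma~\ref{lem:state_isotropic_space_high_mass} applied to the padded Lagrangian $L_{i^\star}=G_{i^\star}\times\langle Z_{k+1},\ldots,Z_n\rangle$. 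Your self-contained argument is arguably cleaner, and you are right to flag the one additional ingredient it needs: that $t=n-(k+m)=O(\log(1/\gamma))$, which rests on $\dim(V)=n\pm O(\log(1/\gamma))$ coming from Theorem~\ref{thm:analysis_algo_V} — a property of the specific $V$ that the pipeline produces rather than of the theorem's stated hypothesis, and the same dependence is hidden in the paper behind the citation to the prior work. Second, your measurement routine is a mild simplification of Algorithm~\ref{algo:find_good_stab}: instead of pre-enumerating the $2^k(2^k+1)$ signed $k$-qubit MUB stabilizer states and doing a two-stage measurement (project onto $\ketbra{\varphi_i}{\varphi_i}$ on the first $k$ qubits, then computationally on the remaining $n-k$), you apply a single $k$-qubit Clifford $W_i$ per MUB and measure all $n$ qubits computationally, reading both the sign pattern $\beta$ and the tail $z$ from one outcome. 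Sample and time budgets match the theorem either way; strictly the theorem asserts this guarantee for Algorithm~\ref{algo:find_good_stab} as written, so if presented as a proof of that exact statement you would want to describe the paper's two-stage variant, but your version is equivalent in correctness and complexity.
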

To prove the above theorem, we require the following theorem from \cite{ad2024tolerant} that states a subgroup $V$ having high probability mass must necessarily have a low \emph{stabilizer covering} (whose definition will be clear in the theorem statement below).
\begin{theorem}[{\cite[Theorem~4.14]{ad2024tolerant}}]\label{thm:stabilizer_covering_group}
Let $\ket{\psi}$ be an $n$-qubit state such that there exists a subgroup $V \subseteq \FF_2^{2n}$ with $\Exp_{x \in V}[2^n p_\Psi(x)] \geq \Omega(\gamma^C)$ (for a universal constant $C > 1$). Then, 
\begin{enumerate}[(i)]
    \item there exists a Clifford unitary $U$ and $m+k\leq n$ with $k\leq \log(\gamma^{-C})$ such that $$UVU^\dagger = \la Z_1,X_1,\ldots,Z_k,X_k,Z_{k+1},\ldots,Z_{k+m}\ra$$
    \item $V$ can be covered by a union of $2^{k}+1 = O(\gamma^{-C})$ many stabilizer subgroups $G_j\subseteq \01^{2n}$. Particularly, $\{G_j\}$ correspond to mutually unbiased bases of Fact~\ref{fact:mub_unsigned_stab}.
\end{enumerate}
\end{theorem}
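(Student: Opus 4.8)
The plan is to obtain both parts essentially by combining the canonical-form statement for Pauli subgroups (Fact~\ref{fact:clifford_action_on_group}) and the mutually-unbiased-bases covering (Fact~\ref{fact:mub_unsigned_stab}) with a single mass estimate that forces the ``anticommuting part'' of $V$ to be small; the quantitative hypothesis $\Exp_{x\in V}[2^np_\Psi(x)]\ge\Omega(\gamma^C)$ enters only in that estimate.

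\textbf{Part (i).} First I would apply Fact~\ref{fact:clifford_action_on_group} to the subgroup $V$ of the $n$-qubit Pauli group: this already produces some $m+k\le n$ and a Clifford $U$ with $UVU^\dagger=\langle Z_1,X_1,\dots,Z_k,X_k,Z_{k+1},\dots,Z_{k+m}\rangle=\calP^k\times\calP_\calZ^m$, so the only thing left is to bound $k$. Writing $\ket{\psi'}:=U\ket\psi$ and using that Clifford conjugation permutes Weyl operators up to sign and is a bijection $V\to UVU^\dagger$, one has $\Exp_{x\in V}[2^np_\Psi(x)]=\Exp_{y\in UVU^\dagger}[|\langle\psi'|W_y|\psi'\rangle|^2]$. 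The key estimate I would prove is that for any \emph{pure} $n$-qubit state $\ket{\psi'}$,
\begin{equation*}
\Exp_{y\in\calP^k\times\calP_\calZ^m}\big[|\langle\psi'|W_y|\psi'\rangle|^2\big]\le 2^{-k}.
\end{equation*}
To see this, split the qubits into the block $A$ of size $k$ and its complement $\bar A$; every element of $\calP^k\times\calP_\calZ^m$ is $P\otimes Q$ with $P$ an arbitrary Pauli on $A$ and $Q$ a fixed-support Pauli on $\bar A$. For fixed $Q$, set $M_Q:=\Tr_{\bar A}\big((I_A\otimes Q)\ketbra{\psi'}{\psi'}\big)$, so $\langle\psi'|P\otimes Q|\psi'\rangle=\Tr(P M_Q)$, and Parseval over the $4^k$ Paulis on $A$ gives $\frac{1}{4^k}\sum_P|\Tr(PM_Q)|^2=2^{-k}\|M_Q\|_F^2\le 2^{-k}\|M_Q\|_{\mathrm{op}}\|M_Q\|_1\le 2^{-k}$, where $\|M_Q\|_{\mathrm{op}}\le1$ follows from $|\langle u|M_Q|v\rangle|=|\langle\psi'|(\ketbra{v}{u}\otimes Q)|\psi'\rangle|\le 1$ and $\|M_Q\|_1\le 1$ follows since partial trace is trace-norm contractive and $I_A\otimes Q$ is unitary. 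Averaging over $Q$ gives the estimate, and comparing with the hypothesis yields $2^{-k}\ge\Omega(\gamma^C)$, i.e.\ $k\le\log(\gamma^{-C})$ after adjusting the universal constant.

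\textbf{Part (ii).} Here I would lift the MUB covering through $U$. By Fact~\ref{fact:mub_unsigned_stab}, the $k$-qubit Weyl group $\calP^k$ is covered by $2^k+1$ stabilizer subgroups $G_1,\dots,G_{2^k+1}$. For each $i$, the set $G_i\times\calP_\calZ^m$ (with identity on the last $n-k-m$ qubits) is an isotropic subspace of $\FF_2^{2n}$ of dimension $k+m\le n$, hence extends to a Lagrangian subspace $\widehat G_i$ — for instance $\widehat G_i=\langle G_i\rangle\times\langle Z_{k+1},\dots,Z_n\rangle$ — which is an unsigned stabilizer subgroup. Since $\bigcup_i G_i=\calP^k$ we get $\bigcup_i\widehat G_i\supseteq \calP^k\times\calP_\calZ^m=UVU^\dagger$, so conjugating back and setting $G_j:=U^\dagger\widehat G_iU$ (Clifford conjugation maps Lagrangians to Lagrangians, so each $G_j$ is again a stabilizer subgroup) yields $V\subseteq\bigcup_j G_j$ with $2^k+1=O(\gamma^{-C})$ stabilizer subgroups, whose first-$k$-qubit parts are exactly the MUB stabilizer groups of Fact~\ref{fact:mub_unsigned_stab}. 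This is precisely (ii).

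\textbf{Main obstacle.} Everything apart from the $2^{-k}$ mass estimate is bookkeeping on top of Facts~\ref{fact:clifford_action_on_group} and~\ref{fact:mub_unsigned_stab}; the crux — and the only place the quantitative hypothesis is used — is that estimate, which formalizes the intuition that a Pauli subgroup whose symplectic form has rank $2k$ cannot carry more than a $2^{-k}$ fraction of the stabilizer mass of any pure state. Its one slightly delicate point is the pair of norm bounds $\|M_Q\|_{\mathrm{op}},\|M_Q\|_1\le 1$ for the generally non-Hermitian operator $M_Q$; granting those, the rest is Parseval on the $k$-qubit Pauli basis and a union over the cosets of $\calP^k$ inside $UVU^\dagger$.
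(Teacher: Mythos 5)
Your proof is correct, and it follows essentially the route the paper relies on: the theorem is imported from \cite{ad2024tolerant} rather than reproved here, but your key $2^{-k}$ mass estimate is exactly Proposition~\ref{prop:prob_mass_Pk_PZm} of this paper (which gives $\sum_{y\in\calP^k,\,z\in\calP_\calZ^{n-k}}|\la\psi'|W_y\otimes W_z|\psi'\ra|^2=2^n\sum_a|\alpha_a|^4\le 2^n$, hence an average of at most $2^{-k}$), and your Part (ii) is the same lift of the MUB covering of Fact~\ref{fact:mub_unsigned_stab} through the Clifford of Fact~\ref{fact:clifford_action_on_group}. Your norm bounds $\|M_Q\|_{\mathrm{op}},\|M_Q\|_1\le 1$ and the Parseval step all check out, so the argument is complete.
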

\begin{remark}
The stabilizer subgroups in the above theorem could have been alternately constructed using Paulis but this leads to a worse polynomial dependence (see~\cite{ad2024tolerant} for details). \footnote{The assumption regarding existence of a subgroup $V$ with high probability mass is implied by $\Exp \limits_{x \sim q_\Psi}\left[|\la \psi | W_x | \psi \ra|^2 \right] \geq \gamma$ or $\gowers{\ket{\psi}}{3}^8 \geq \gamma$ for different universal constants $C$. See~\cite{ad2024tolerant} for details.}
\end{remark}

Moreover, we will use following characterization of the stabilizer state $\ket{\phi}$ that is promised to have high fidelity with the unknown state $\ket{\psi}$ and which states that the unsigned stabilizer group of $\ket{\phi}$ corresponds to one of the subgroups in the $\poly(1/\gamma)$-sized stabilizer covering of $W$ (Theorem~\ref{thm:analysis_algo_stab_state})\footnote{This can also be observed as a consequence of applying Lemma~\ref{lem:state_isotropic_space_high_mass} to the stabilizer subgroup in the stabilizer covering of $V$ (Theorem~\ref{thm:analysis_algo_stab_state}) that maximizes the probability mass.}. This was implicit in the proof of the inverse Gowers-$3$ norm of quantum states~\cite[Proof of Theorem~4.2]{ad2024tolerant} which we state here explicitly.
\begin{theorem}[{\cite[Proof of Theorem~4.2]{ad2024tolerant}}]\label{thm:characterization_good_stab}
Let $\ket{\psi}$ be an $n$-qubit state with $\Exp_{x\sim q_\Psi}\left[|\la\psi | W_x | \psi \ra|^2 \right]\geq \gamma$ and  $V \subset \FF_2^{2n}$ be the subgroup in Theorem~\ref{thm:stabilizer_covering_group}. Then, there exists an $n$-qubit stabilizer state $\ket{\phi}$ such that $G \subset \weyl{\ket{\phi}}$ where $G$ is an unsigned stabilizer subgroup in the covering of $V$ from Theorem~\ref{thm:stabilizer_covering_group} and $|\la \phi | \psi \ra|^2 \geq \Omega(\gamma^C)$ (for a universal constant $C>1$). Additionally, considering the Clifford unitary $U$ from Theorem~\ref{thm:stabilizer_covering_group}, the stabilizer state $U^\dagger \ket{\phi}$ can be expressed as
$$
U^\dagger \ket{\phi} = \ket{\varphi_z} \otimes \ket{z},
$$
where $\ket{z}$ is an $n-k$ basis state, $\ket{\varphi_z}$ is a $k$-qubit stabilizer state corresponding to~$U G U^\dagger$.
\end{theorem}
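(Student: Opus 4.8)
The plan is to combine the stabilizer-covering structure of $V$ from Theorem~\ref{thm:stabilizer_covering_group} with an averaging argument over the $\poly(1/\gamma)$ covering subgroups and a Parseval-type identity that turns ``the weight of $\ket\psi$ inside a Lagrangian'' into an explicit overlap. First I would invoke Theorem~\ref{thm:stabilizer_covering_group} to fix the Clifford $U$, the parameters $k\le\log(\gamma^{-C})$ and $m$ with $k+m\le n$, and the covering $V\subseteq\bigcup_{j\in[2^k+1]}G_j$ with $G_j=U^\dagger\bigl(H_j\otimes\{I,Z\}^{\otimes(n-k)}\bigr)U$, where $\{H_j\}_{j\in[2^k+1]}$ are the $k$-qubit mutually-unbiased stabilizer groups of Fact~\ref{fact:mub_unsigned_stab}. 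Replacing $\ket\psi$ by $U\ket\psi$ (Clifford conjugation permutes the $W_x$ up to sign, hence permutes $\{p_\Psi(x)\}$ and preserves $\Exp_{x\in V}[2^np_\Psi(x)]$) I may assume $U=\Id$, so $V=\calP^k\times\calP_\calZ^m$ and $G_j=H_j\otimes\{I,Z\}^{\otimes(n-k)}$; this conjugation is undone at the very end. Throughout I will use that the subgroup $V$ produced here (cf.\ Theorem~\ref{thm:analysis_algo_V} and the proof of Theorem~4.2 of~\cite{ad2024tolerant}) has $\dim(V)\ge n-O(\log(1/\gamma))$, equivalently $t:=n-(k+m)=k+(n-\dim V)=O(\log(1/\gamma))$, so that $2^t=\poly(1/\gamma)$; this is the one place where the precise provenance of $V$ (rather than merely ``some subgroup of large weight'') matters.

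Next I would localize to a single MUB frame. Since the $H_j$ tile $\calP^k$ (pairwise disjoint away from the identity, Fact~\ref{fact:mub_unsigned_stab}), the sets $\{V\cap G_j\}_j$ cover $V$, so $\sum_{x\in V}2^np_\Psi(x)\le\sum_j\sum_{x\in V\cap G_j}2^np_\Psi(x)$ and some $j^\star$ satisfies $\sum_{x\in V\cap G_{j^\star}}2^np_\Psi(x)\ge\frac1{2^k+1}\sum_{x\in V}2^np_\Psi(x)$. Writing $A:=V\cap G_{j^\star}=H_{j^\star}\otimes\calP_\calZ^m\otimes\{I\}^{\otimes t}$ and using $|V|=2^{2k+m}$, $|A|=2^{k+m}$, this gives $\Exp_{x\in A}[2^np_\Psi(x)]\ge\frac{2^k}{2^k+1}\,\Exp_{x\in V}[2^np_\Psi(x)]\ge\tfrac12\,\Omega(\gamma^C)=:\eta$, and $A$ is an isotropic subgroup of dimension $n-t$ inside the Lagrangian $G_{j^\star}=H_{j^\star}\otimes\{I,Z\}^{\otimes(n-k)}$.

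The heart of the argument converts $\Exp_{x\in A}[2^np_\Psi(x)]\ge\eta$ into an explicit product stabilizer state. Let $\{\ket{\varphi_b}\}_{b\in\{0,1\}^k}$ be the joint eigenbasis of $H_{j^\star}$ (so $\weyl{\ket{\varphi_b}}=H_{j^\star}$), and expand $\ket\psi=\sum_{b\in\{0,1\}^k,\ z\in\{0,1\}^{n-k}}c_{b,z}\,\ket{\varphi_b}\otimes\ket z$ in the orthonormal basis $\{\ket{\varphi_b}\otimes\ket z\}$. For $x=(P,Q)\in A$ with $P\in H_{j^\star}$ and $Q$ a $Z$-type Pauli supported on qubits $k+1,\dots,k+m$, one has $W_x=W^{(k)}_P\otimes Z^Q$, hence $\la\psi|W_x|\psi\ra=\sum_{b,z}|c_{b,z}|^2\,\epsilon_b(P)\,(-1)^{\la Q,z\ra}$, where $\epsilon_b(P)=\pm1$ is the eigenvalue of $W^{(k)}_P$ on $\ket{\varphi_b}$ and $\la Q,z\ra$ involves only the first $m$ coordinates of $z$. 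Averaging $|\cdot|^2$ over $x\in A$ and using character orthogonality over $H_{j^\star}\cong\FF_2^k$ and over the $Z$-part $\cong\FF_2^m$ yields the clean identity $\Exp_{x\in A}[2^np_\Psi(x)]=\sum_{b}\sum_{w\in\{0,1\}^m}p_{b,w}^2$ with $p_{b,w}:=\sum_{z:\ z\text{ starts with }w}|c_{b,z}|^2$. Since $\sum_{b,w}p_{b,w}=1$ we get $\max_{b,w}p_{b,w}\ge\sum_{b,w}p_{b,w}^2\ge\eta$; fixing a maximizer $(b^\star,w^\star)$ and noting that $p_{b^\star,w^\star}$ is a sum of the $2^t=\poly(1/\gamma)$ terms $|c_{b^\star,z}|^2$ over $z$ starting with $w^\star$, some such $z^\star$ satisfies $|c_{b^\star,z^\star}|^2\ge\eta/2^t=\Omega(\gamma^{C'})$. (This can be read as a refinement of Lemma~\ref{lem:state_isotropic_space_high_mass}, made explicit so as to see the product form.)

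Finally I would assemble the state and undo the conjugation. Put $\ket{\varphi_z}:=\ket{\varphi_{b^\star}}$ and $\ket z:=\ket{z^\star}$; then $\ket{\varphi_z}\otimes\ket z$ is a stabilizer state with $\weyl{\ket{\varphi_z}\otimes\ket z}=\weyl{\ket{\varphi_{b^\star}}}\otimes\weyl{\ket{z^\star}}=H_{j^\star}\otimes\{I,Z\}^{\otimes(n-k)}=G_{j^\star}$, and $|\la\varphi_z\otimes z|\psi\ra|^2=|c_{b^\star,z^\star}|^2\ge\Omega(\gamma^{C'})$, all in the $U=\Id$ frame. Translating back, $\ket\phi:=U^\dagger(\ket{\varphi_z}\otimes\ket z)$ has $|\la\phi|\psi\ra|^2=\Omega(\gamma^{C'})$, $\weyl{\ket\phi}=U^\dagger G_{j^\star}U=:G$ is one of the unsigned stabilizer subgroups in the covering of $V$ from Theorem~\ref{thm:stabilizer_covering_group} (so $G\subseteq\weyl{\ket\phi}$, indeed with equality), and $U\ket\phi=\ket{\varphi_z}\otimes\ket z$ exhibits the claimed product form with $\ket{\varphi_z}$ the $k$-qubit stabilizer state whose unsigned group is $UGU^\dagger=H_{j^\star}$ (the statement is written with $U^\dagger\ket\phi$, which matches once the orientation of $U$ in Theorem~\ref{thm:stabilizer_covering_group} is fixed, cf.\ the sketch in Section~\ref{sub_sec:proof_sketches}); renaming $C'$ as a universal constant $C$ finishes the proof. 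The step I expect to be the main obstacle is the one in the third paragraph: one must ensure the weight $\eta$ inside $A$ can genuinely be concentrated onto a single computational-basis string, and this is precisely where $t=n-(k+m)=O(\log(1/\gamma))$ — a property of the specific $V$ built in the construction, not of an arbitrary high-weight subgroup — is essential, since without it the fidelity would degrade by the factor $2^{n-k-m}$ rather than by $\poly(1/\gamma)$.
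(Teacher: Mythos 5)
Your proof is correct and follows essentially the route the paper intends: the result is stated as implicit in~\cite{ad2024tolerant}, and the accompanying footnote indicates exactly your strategy of averaging over the MUB covering from Theorem~\ref{thm:stabilizer_covering_group} and converting the mass captured by the best covering subgroup into fidelity with a product stabilizer state via Lemma~\ref{lem:state_isotropic_space_high_mass}. The only difference is that you re-derive that conversion from scratch through an explicit character-orthogonality/Parseval computation and pigeonholing (rather than citing the lemma), and you correctly flag the two points that matter — the essential use of $\dim(V)\ge n-O(\log(1/\gamma))$ so that $2^{t}=\poly(1/\gamma)$, and the $U$ versus $U^\dagger$ orientation slip in the statement.
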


We now give the proof of Theorem~\ref{thm:analysis_algo_stab_state} which gives a guarantee on the correctness of Algorithm~\ref{algo:find_good_stab} using subroutines that algorithmize Theorem~\ref{thm:stabilizer_covering_group} and Theorem~\ref{thm:characterization_good_stab} as we will see shortly.

\begin{proof}[Proof of Theorem~\ref{thm:analysis_algo_stab_state}]Before giving the algorithm that proves the theorem, we first give an outline of the algorithm. Given the subgroup $V$ in terms of its basis, we first use Claim~\ref{claim:clifford_synthesis_subgroup}, which in turn uses $\symgramschmidt$, to determine the Clifford unitary $U$~(Fact~\ref{fact:clifford_action_on_group}) that transforms $V$ to the form of $\calP^k \otimes \calP^{m}_{\calZ}$,~i.e.,
$$
U V U^\dagger = \la Z_1, X_1, \ldots, Z_k, X_k, Z_{k+1}, Z_{k+2}, \ldots, Z_{k+m} \ra = \calP^k \times \calP_\calZ^m,
$$
where the values of $k$ and $m \leq n-k$ are determined as part of Claim~\ref{claim:clifford_synthesis_subgroup}. This consumes $O(n^3)$ time and $U$ has a gate complexity of $O(n^2)$.

Furthermore, Theorem~\ref{thm:stabilizer_covering_group} guarantees that $k \leq \log(\gamma^{-C})$. The goal now is to construct a \emph{good} stabilizer state $\ket{\phi}$ corresponding to the unsigned stabilizer subgroups in $UVU^\dagger$ which has high fidelity with $U \ket{\psi}$ and then one can output $U^\dagger \ket{\phi}$ (which would then have high fidelity with $\ket{\psi}$) by noting that $U$ was a Clifford circuit. 

To find this good stabilizer, one (slightly wrong) approach could have been: 
\begin{enumerate}[$(i)$]
    \item Construct an unsigned stabilizer covering of $W=UVU^\dagger$, denoted by $\calS$, by considering the $2^k+1$ many unsigned $k$-qubit stabilizer subgroups $\{A_i\}$ covering $\calP^k$ using Fact~\ref{fact:mub_unsigned_stab} and then setting $\calS = \cup_{i \in [M]} (A_i \times \calP_Z^m)$. Let $M=2^k +1 \leq O(\gamma^{-C})$.
    \item Since $m+k\leq n$, extend each stabilizer subgroup $A_i \times \calP_Z^m$ for all $i \in [M]$ to an $n$-qubit stabilizer group $\widetilde{A}_i$ arbitrarily,
    \item Consider all the stabilizer states with an unsigned stabilizer group of $\widetilde{A}_i$ for all $i \in [M]$,
    \item Determine the fidelity of all the stabilizer states so obtained with $U \ket{\psi}$ output the stabilizer state with maximal fidelity as $U^\dagger \ket{\phi}$. 
\end{enumerate}
The issue is that in step $(iii)$ we would obtain $O(2^n)$ signed stabilizer groups by arbitrarily assigning signs of $\{\pm 1\}$ to the generators in the unsigned stabilizer group $\widetilde{A}_i$ and would thus have $O(2^n)$ stabilizer states corresponding to each $\widetilde{A}_i$ for each $i \in [M]$. To overcome this issue, we first note that we can extend the stabilizer groups in the stabilizer covering  by completing $W = UVU^\dagger$ to the full $n$-qubit Pauli group obtain $W'$:
$$
W' = \la Z_1, X_1, \ldots, Z_k, X_k, Z_{k+1}, Z_{k+2}, \ldots, Z_{k+m}, \ldots, Z_n \ra = \calP^k \times \calP_\calZ^{n-k}.
$$
We will now call the unsigned stabilizer covering of $W'$ as $\widetilde{\calS} = \cup_{i \in [M]} \widetilde{A}_i$ (where $M = O(\gamma^{-C})$). By definition observe that $I^{\otimes k} \otimes \{I,Z\}^{\otimes (n-k)} \subseteq \widetilde{A}_i$ for all $i \in [M]$, we have that any stabilizer state $\ket{\phi}$ with an unsigned stabilizer group of $\widetilde{A}_i$ can be expressed as a product stabilizer state as follows
\begin{equation}\label{eq:prod_stab_state}
    \ket{\phi} = \ket{\varphi_x} \otimes \ket{x},
\end{equation}
where $\ket{\varphi}$ is a $k$-qubit stabilizer state and $\ket{x}$ is an $(n-k)$-qubit computational basis state.  The natural question is, how to find this  $x$? For that, we observe that from Theorem~\ref{thm:characterization_good_stab}, we know that there exists an $n$-qubit stabilizer state of the from $\ket{\varphi_{x^*}} \otimes \ket{{x^*}}$  such that $|\la \psi| U^\dagger (\ket{\varphi_{x^*}} \otimes \ket{x^*})|^2 \geq \poly(\gamma)$. In other words,
\begin{align}
\label{eq:stabandz}
\max_{x\in \{0,1\}^{n-k}} |\la \psi|U^\dagger (\ket{\varphi_x} \otimes \ket{x})|^2\geq \poly(\gamma).
\end{align}
The above also implies that
\begin{equation}\label{eq:opt_product_stab_state}
\ket{\varphi_{x^*}} \otimes \ket{{x^*}} = \argmax_{x\in \{0,1\}^{n-k}} |\la \psi|U^\dagger (\ket{\varphi_x} \otimes \ket{x})|^2.
\end{equation}
To determine this stabilizer state $\ket{\varphi_{x^*}} \otimes \ket{{x^*}}$ which has high fidelity with $U \ket{\psi}$, it is then enough to consider all those states of the from in Eq.~\eqref{eq:prod_stab_state} and achieves Eq.~\eqref{eq:stabandz}. Now we find this $x^*$ by  measuring and sampling from the $U\ket{\psi}$ state.   With this we are ready to state the final~algorithm.
\begin{myalgorithm}
\setstretch{1.35}
\begin{algorithm}[H]
    \label{algo:find_good_stab}
    \setlength{\baselineskip}{1.5em} 
    \DontPrintSemicolon 
    \caption{\findstabilizer($V,\gamma,\delta$)} 
    \KwInput{Access to copies of $\ket{\psi}$ with $\Exp_{x\sim q_\Psi}\left[ |\la \psi | W_x | \psi \ra|^2 \right]\geq \gamma$, subgroup $V$ such that $\Exp_{x \in V} [2^n p_\Psi(x)] \geq \poly(\gamma)$ (as in Theorem~\ref{thm:analysis_algo_stab_state}), failure probability $\delta$.} 
    \KwOutput{Output stabilizer state $\ket{\phi}$ such that $|\la \phi | \psi \ra|^2 \geq \Omega(\gamma^C)$}
    \vspace{2mm}
    Run $\symgramschmidt$, that on input the basis for $V$, outputs its centralizer $C_V$ (basis for $V\cap V^\perp$ and anti-commutant $A_V$ (basis for $V \backslash \la C_V \ra$). \\ 
    Obtain Clifford circuit $U$ and integers $m,k\geq 1$ such that $UVU^\dagger = \calP^k \times \calP_{\calZ}^m$ using Claim~\ref{claim:clifford_synthesis_subgroup} on inputs of $C_V$ and $A_V$ 
   \label{algostep:find_clifford}\\
    Set $W \leftarrow UVU^\dagger$ and extend $W$ to $\widetilde{W} = \calP^k \times \calP_{\calZ}^{n-k}$ \\
    Let $Q$ be the set of unsigned stabilizer groups corresponding to MUBs on $k$-qubits (Fact~\ref{fact:mub_unsigned_stab}) (note $|Q| = 2^k + 1$) \label{algostep:mub}\\
    Let $\Phi=\{\ket{\varphi_1},\ldots,\ket{\varphi_K}\}$ with $N_s=2^k\cdot (2^k+1)$ be the set of the stabilizer states with unsigned stabilizer groups in $Q$ along with all possible sign assignments of $\{\pm 1\}$. \label{algostep:signed_stab_groups} \\
    Let $N_r=\poly(1/\gamma)$ and initialize empty list $\calL \leftarrow \emptyset$.\\
    \For{$i=1:N_s$}{ \label{algostep:sample_good_choices_z}
        \For{$j=1:N_r$}{
            Measure the first $k$ qubits of $U \ket{\psi}$ in the $\{\ketbra{\varphi_i}{\varphi_i},\id-\ketbra{\varphi_i}{\varphi_i}$\} basis. Let measurement outcome be $m$. \label{algostep:measure_kqubits}\\
            \If{$m=0$}{
                Measure the remaining $(n-k)$ qubits in the computational basis. Let measurement outcome be $z$. \\
                Update $\calL \leftarrow \calL \cup \{\ket{\varphi_i} \otimes \ket{z} \}$ \label{algostep:mark_collisions}\\
            }
        }
    }
    Compute fidelity of all stabilizers in $\calL$ with $U \ket{\psi}$ using classical shadows in Lemma~\ref{lem:classicalshadow}. \label{algostep:measure_last_qubits} \\
    Obtain the stabilizer state $\ket{\varphi} \otimes \ket{x}$ in $\calL$ that has maximized stabilizer fidelity. \label{algostep:shadows} \\
    \Return stabilizer state $U^\dagger(\ket{\varphi} \otimes \ket{x})$ \\
\end{algorithm}
\end{myalgorithm}
Steps $1-2$ are as we defined before the algorithm. We now discuss  Lines~\ref{algostep:mub}-\ref{algostep:shadows}, Algorithm~\ref{algo:find_good_stab}.
Recall that the goal is to obtain a stabilizer product state $\ket{\varphi_z} \otimes \ket{z}$ that achieves the guarantee of Eq.~\ref{eq:stabandz}, we construct a set $\Phi$ of all possible choices of $k$-qubit stabilizer states $\ket{\varphi_x}$ and then sample \emph{good} choices of $\ket{z}$ for $z \in \{0,1\}^{n-k}$. To construct $\Phi$, we proceed as follows. We consider the set of $2^k+1$ unsigned stabilizer groups corresponding to the $k$-qubit mutually unbiased bases (MUB) of Fact~\ref{fact:mub_unsigned_stab}, which we denote by $Q$. $\Phi$ is then set to contain stabilizer states with signed stabilizer groups corresponding to each unsigned stabilizer group in $Q$ with all possible sign assignments $\{\pm 1\}$. The resulting $\Phi$ will be of size $N_s =2^k (2^k + 1)$. This is executed in Lines~\ref{algostep:mub}-\ref{algostep:signed_stab_groups} of Algorithm~\ref{algo:find_good_stab}.

We now discuss how to obtain good $z\in \01^{n-k}$ This is carried out in Lines~\ref{algostep:sample_good_choices_z}-\ref{algostep:mark_collisions} of Algorithm~\ref{algo:find_good_stab}. We first observe that for a given $i\in [N_s]$, the probability that the probability of obtaining the measurement outcome $m=0$ upon measuring the first $k$ qubits of $U \ket{\psi}$ in the basis $\{\ketbra{\varphi_i}{\varphi_i},I-\ketbra{\varphi_i}{\varphi_i}\}$ in Line~\ref{algostep:measure_kqubits} is given by
\begin{align}
    \Pr[m=0] = \la \psi| U^\dagger (|\varphi_i\ra \la \varphi_i|\otimes \id^{n-k} ) U |\psi\ra &=  \sum_{x\in \FF_2^{n-k}}  \la \psi|U^\dagger ( |\varphi_i\rangle \la \varphi_i|\otimes \ketbra{x}{x} ) U |\psi\ra \\
    & \geq \max_{x\in \FF_2^{n-k}}|\la \psi| U^\dagger (\ket{\varphi_i} \otimes \ket{x} ) |^2
\end{align}
which for some $i \in [N_s]$ will be $\geq \poly(\gamma)$ due to Eq.~\eqref{eq:stabandz} and by construction, $\ket{\varphi_{y}} \in \Phi$. Conditioned on obtaining $m=0$, we then measure the last $(n-k)$ qubits of $U\ket{\psi}$ in the computational basis (Line~\ref{algostep:measure_last_qubits}). Note that for some $i^\star \in [N_s]$, $\ket{\varphi_{i^\star}}$ coincides with $\ket{\varphi_y}$ of Eq.~\eqref{eq:opt_product_stab_state}. For this $i^\star$ and conditioned on $m=0$, we will sample a $z$ (satisfying Eq.~\eqref{eq:opt_product_stab_state}) with probability 
$$
\langle  \psi|U^\dagger \big(\ketbra{\varphi_i}{\varphi_i}\otimes \ketbra{y}{y}\big) U|\psi\rangle = |\la \psi|U^\dagger (\ket{\varphi_y} \otimes \ket{y})|^2 \geq \poly(\gamma).
$$
We thus have that the following: Line~\ref{algostep:measure_kqubits} succeeds with probability $\geq \poly(\gamma)$ for some $i^\star \in [K]$, and would allow us to move to Line~\ref{algostep:measure_last_qubits} which would produce the desired $z\in \FF_2^{n-k}$ with probability $\geq \poly(\gamma)$. Setting $N_r=\poly(1/\gamma)$ ensures both these steps succeed and that $\ket{\varphi_z} \otimes \ket{z} \in \calL$. We now determine the desired state $\ket{\varphi_z} \otimes \ket{z}$ by estimating fidelity of all the states in the list $\calL$ with $U\ket{\psi}$ up to error $O(\gamma^C)$ (with $C > 1$ being the constant of the theorem), using Lemma~\ref{lem:classicalshadow} and outputting the one with maximal fidelity. As $\calL \leq O(4^k N_r) \leq \poly(1/\gamma)$, this consumes $\widetilde{O}\left(\poly(1/\gamma) \log 1/\delta \right)$ sample complexity and $\widetilde{O}\left( n^2 \cdot \poly(1/\gamma) \log(1/\delta)\right)$ time complexity. The stabilizer state $\ket{\phi}$ that matches the guarantee of the theorem is then $U(\ket{\varphi_y} \otimes \ket{y})$. We compute $U(\ket{\varphi_y} \otimes \ket{y})$ classically using \cite{aaranson2004improved} which takes as input the $O(n)$ generators of the stabilizer group of $\ket{\varphi_y} \otimes \ket{y}$ (which we have learned as part of $\findstabilizer$) and classical description of $U$ (which we also have from step $2$ of $\findstabilizer$). Noting that $U$ has $O(n^2)$ gate complexity, this consumes $O(n^2)$ time~\cite{aaranson2004improved}. The overall time complexity is
$$
\widetilde{O}\left(n^3 + n^2 \cdot \poly(1/\gamma)\cdot  \log 1/\delta\right)
$$
with main contributions due to cost of obtaining $U$ and estimation of stabilizer fidelities.
\end{proof}

\subsection{Putting everything together}
\label{sec:puttingtogethereverything}
We now prove the main Theorem~\ref{thm:restatementofselfcorr} of this section, that will then imply our first result. 
\begin{proof}[Proof of Theorem~\ref{thm:restatementofselfcorr}]
Let $t^\star = (4n^2 + \log(10/\delta)) \cdot (2K^9)$ with $K=O(1/\gamma^{932})$.
\begin{enumerate}
    \item Given copies of $\ket{\psi}$ we call the $\sample$ subroutine $O(t^\star/\gamma^2 \log(1/\delta))$ many times to produce a set $\calV$ containing elements from the approximate subgroup $S$ with density at least $\rho = \gamma^5/20$ and size $|S| \geq \Omega(\gamma^2) 2^n$, with probability at least $\Omega(\gamma^7)$. Let $D_\Psi$ be the conditional distribution corresponding to Bell difference sampling conditioned on landing in $S$ (Eq.~\eqref{eq:induced_dist_SAMPLE}).
    \item We then use $\sampleAone$ which calls the $\bsgtest$ on vertices in $\calV$ using the parameters of Theorem~\ref{lemma:analysis_bsg}. By Theorem~\ref{lemma:analysis_bsg}, we are guaranteed that with probability at least $\Omega(\gamma^{487})$ over $u \sim D_\Psi$ and parameter choices (Section~\ref{sec:algo_BSG}) that exists subsets $A^{(1)}(u) \subseteq A^{(2)}(u)$ (as defined in Theorem~\ref{lemma:analysis_bsg}) such that
    $$
    |A^{(1)}(u)| \geq \Omega(\gamma^{138}) \cdot |S|, \quad |A^{(2)}(u) + A^{(2)}(u)| \leq O(1/\gamma^{932}) \cdot |S|.
    $$
    Using $\sampleAone$ of Lemma~\ref{lem:sampling_small_doubling_set}, we sample a set $A'$ containing elements from the small doubling set $A^{(2)}(u)$ corresponding to the good $u$ and with at least $t^\star$ many elements from $A^{(1)}(u)$.
    \item We then determines a basis for a subgroup $V$ for which $\Exp_{x \in V}[2^n p_\Psi(x)] \geq \poly(\gamma)$ and $|V| \leq |A^{(2)}(u)|$ by using $\pfrsubgroup$ (which requires the algorithmic $\PFR$ conjecture~\ref{conj:algopfrconjecture}). It takes as input the set $A'$ containing at least $t^\star$ many samples from $A^{(1)}(u)$.
    \item After having determined a basis for the subgroup $V$, we use $\findstabilizer$ to find a stabilizer state $\ket{\phi}$ corresponding to $V$ such that $|\langle \psi|\phi\rangle|^2\geq \poly(\gamma)$.
\end{enumerate}
Thus, with probability $O(\gamma^{494})$, we find a stabilizer state $\ket{\phi}$ with the desired fidelity with $\ket{\psi}$. 

In $\sampleAone$, the $\bsgtest$ calls the $\sample$ subroutine $\poly(1/\gamma)$ many times, each taking time $O(n \poly(1/\gamma))$ time and $O(1/\gamma)$ samples (Lemma~\ref{lem:sample_bds}). The time complexity of this step is thus $\tilde{O}( \poly(n,1/\gamma))$ with a corresponding sample complexity of $\tilde{O}(n^3 \poly(1/\gamma))$. Once we have determined $V$, we then utilize Theorem~\ref{thm:analysis_algo_stab_state} to determine the stabilizer state that has $\geq \poly(\gamma)$ fidelity with $\ket{\psi}$. This consumes an additional $\widetilde{O}(\poly(1/\gamma) \log 1/\delta)$ copies of $\ket{\psi}$ and $\widetilde{O}(n^3 + n^2 \cdot \poly(1/\gamma)  \log 1/\delta)$ time.  Since the complexity of the algorithmic $\PFR$ conjecture is $\poly(n,1/\gamma)$, the overall self-correction procedure has complexity that scales polynomial in $n,1/\gamma,\log 1/\delta$.
\end{proof}

Finally it is immediate to see the analogous main theorem in terms of Gowers norm of $\ket{\psi}$. 
\begin{theorem}
Let $\varepsilon>0$ and $\eta=\varepsilon^{C}$ for a universal constant $C>1$. Let~$\ket{\psi}$ be an $n$-qubit quantum state such that $\gowers{\ket{\psi}}{3}^8 \geq \varepsilon$.  There is a protocol that with probability $1-\delta$, outputs a $\ket{\phi}\in \Sh$ such that $|\langle\phi |\psi\rangle|^2 \geq  \eta$ using $\poly(n,1/\varepsilon,\log(1/\delta))$ time and copies of~$\ket{\psi}$.
\end{theorem}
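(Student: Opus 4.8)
The plan is to obtain this statement as an immediate corollary of Theorem~\ref{thm:restatementofselfcorr} by converting the Gowers-$3$ norm hypothesis into the lower bound on the Weyl-distribution expectation that Theorem~\ref{thm:restatementofselfcorr} requires. Concretely, Fact~\ref{fact:relation_expectation_paulis_qPsi_and_pPsi} gives the chain
$$
\gowers{\ket{\psi}}{3}^8 = \Exp_{x\sim p_\Psi}\left[|\la\psi|W_x|\psi\ra|^2\right] \geq \Exp_{x\sim q_\Psi}\left[|\la\psi|W_x|\psi\ra|^2\right] \geq \left(\Exp_{x\sim p_\Psi}\left[|\la\psi|W_x|\psi\ra|^2\right]\right)^2 = \gowers{\ket{\psi}}{3}^{16},
$$
so the assumption $\gowers{\ket{\psi}}{3}^8\geq\varepsilon$ immediately yields $\Exp_{x\sim q_\Psi}[|\la\psi|W_x|\psi\ra|^2]\geq\varepsilon^2$. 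I would therefore set $\gamma := \varepsilon^2$ and invoke Theorem~\ref{thm:restatementofselfcorr} (which assumes the algorithmic $\PFR$ conjecture, Conjecture~\ref{conj:algopfrconjecture}) with this value of $\gamma$ and the given failure probability $\delta$: its protocol outputs, with probability $1-\delta$, a stabilizer state $\ket{\phi}\in\Sh$ satisfying $|\la\phi|\psi\ra|^2\geq\Omega(\gamma^{C'})=\Omega(\varepsilon^{2C'})$, where $C'>1$ is the universal constant of that theorem, and it runs in $\poly(n,1/\gamma,\log(1/\delta))=\poly(n,1/\varepsilon,\log(1/\delta))$ time and copies of $\ket{\psi}$.

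It then remains only to fix the constant in the statement. Since $\varepsilon\in(0,1)$, choosing any universal constant $C$ with $\varepsilon^{C}\leq\Omega(\varepsilon^{2C'})$ — for instance $C:=2C'+1$, which absorbs the hidden multiplicative constant once $\varepsilon$ is below a fixed threshold, and the regime of small $\varepsilon$ is the only one of interest — gives $|\la\phi|\psi\ra|^2\geq\varepsilon^{C}=\eta$, as required. The polynomial exponents and the constant $C$ are not optimized here in any case.

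There is essentially no technical obstacle in this argument: the entire content is carried by Theorem~\ref{thm:restatementofselfcorr}, and the only bookkeeping is the substitution $\gamma=\varepsilon^2$ together with the constant adjustment above. I note one minor point for completeness: the protocol of Theorem~\ref{thm:restatementofselfcorr} takes the promised lower bound $\gamma$ as an input parameter, and here we can supply $\gamma=\varepsilon^2$ directly from the Gowers-norm hypothesis, so no separate estimation step is needed; alternatively one could first estimate $\Exp_{x\sim q_\Psi}[|\la\psi|W_x|\psi\ra|^2]$ to additive error $\varepsilon^2/4$ using $O(1/\varepsilon^4)$ copies via Lemma~\ref{lem:est_weyl_exp} and feed in the resulting estimate, which changes nothing asymptotically.
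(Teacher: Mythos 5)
Your proof is correct and follows the same route as the paper: you use Fact~\ref{fact:relation_expectation_paulis_qPsi_and_pPsi} to get $\Exp_{x \sim q_\Psi}[|\la \psi | W_x | \psi \ra|^2] \geq \gowers{\ket{\psi}}{3}^{16} \geq \varepsilon^2$ and then invoke Theorem~\ref{thm:restatementofselfcorr} with $\gamma = \varepsilon^2$, exactly as the paper does when it states that the result "follows immediately by application of Fact~\ref{fact:relation_expectation_paulis_qPsi_and_pPsi}." The constant bookkeeping you add at the end is fine and consistent with the paper's convention of not optimizing polynomial exponents.
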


The proof above follows immediately by the application of Fact~\ref{fact:relation_expectation_paulis_qPsi_and_pPsi}. The proof of Theorem~\ref{thm:self_correction} follows from the observation of Fact~\ref{fact:lower_bound_stabilizer_fidelity} that $\calF_{\calS}(\ket{\psi}) \geq \uptau \implies \Exp_{x \sim q_\Psi}[2^n p_\Psi(x)] \geq \uptau^6$ so $\gamma=\uptau^6$ is what we instantiate with.

\section{Improper self-correction of stabilizer states}
\label{sec:improperselfcorrection}
In this section, we consider the task of \emph{improper} self-correction, i.e., can one do better if we allow the learning algorithm to output an arbitrary quantum state (for the task of self-correction) instead of a stabilizer state? We show that one can obtain a polynomially better dependency in the complexity for this weaker task. From a technical perspective, in order to prove our improper self-correction result, along the way we give a new local Gowers-$3$ inverse theorem showing that if $\Exp_{x \sim q_\Psi}[2^n p_\Psi(x)] \geq \gamma$ then $\calF_{\calS(n-O(\log(1/\gamma))}(\ket{\psi}) \geq \poly(\gamma)$. This then allows us to give a \emph{tolerant testing} algorithm for the class of low-stabilizer dimension states, which is new as far as we know. In particular, we show that $\poly(2^{2t}/\varepsilon)$ many copies of an unknown quantum state $\ket{\psi}$ are sufficient to tolerantly test if $\ket{\psi}$ is $\varepsilon$-close to a state with stabilizer dimension $n-t$ or $\varepsilon^C$-far from all such states (for some $C>1$). We then algorithmize the local Gowers-$3$ inverse theorem to give an improper self-correction protocol, which is the main result of this section and is formally stated below.
\begin{theorem}
\label{thm:improper_SC}
Let $\tau>0$ and $\ket{\psi}$ be such that $\stabfidelity{\ket{\psi}} \geq  \tau$. There is a protocol that with probability $ 1-\delta$, outputs a $\ket{\phi}$ whose stabilizer dimension is at least $n-O(\log(1/\tau))$ such that $|\langle\phi |\psi\rangle|^2 \geq  \tau^{C_2}$ using $\poly(n,1/\tau,\log(1/\delta))$ copies of $\ket{\psi}$ and time.
\end{theorem}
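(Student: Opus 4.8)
The target is the ``improper'' counterpart of Theorem~\ref{thm:inversegowersstates}: instead of charging $O(\log(1/\tau))$ qubits to nail down an unknown stabilizer state, we only ask for \emph{any} state of stabilizer dimension $n-O(\log(1/\tau))$. The first ingredient is a \emph{local} inverse Gowers-$3$ theorem --- if $\Exp_{x\sim q_\Psi}[2^np_\Psi(x)]\ge\gamma$ then there is an isotropic subspace $A\subseteq\FF_2^{2n}$ with $\dim(A)\ge n-O(\log(1/\gamma))$ and $\sum_{x\in A}p_\Psi(x)\ge\poly(\gamma)$. To get this existentially I would feed $\ket{\psi}$ through the same chain as in the proper case (Bell difference sampling $\to$ $\BSG$ $\to$ $\PFR$) to obtain a subgroup $V$ with $\dim(V)\ge n-O(\log(1/\gamma))$ and $\Exp_{x\in V}[2^np_\Psi(x)]\ge\poly(\gamma)$, and then use Theorem~\ref{thm:stabilizer_covering_group}: there is a Clifford $U$ with $UVU^\dagger=\langle Z_1,X_1,\dots,Z_k,X_k,Z_{k+1},\dots,Z_{k+m}\rangle$ and $k\le\log(\gamma^{-C})$. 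Taking one generator from each anticommuting pair gives an isotropic $A\subseteq V$ of dimension $\dim(V)-k\ge n-O(\log(1/\gamma))$, and summing $\sum_{x\in V}p_\Psi(x)\ge\poly(\gamma)$ over the $2^k=\poly(1/\gamma)$ cosets of $A$ in $V$ together with Fact~\ref{fact:remove_coset} (a coset never carries more Weyl mass than its subgroup) yields $\sum_{x\in A}p_\Psi(x)\ge\poly(\gamma)$. With $t:=n-\dim(A)=O(\log(1/\gamma))$ we have $2^{-t}=\poly(\gamma)$, so Lemma~\ref{lem:state_isotropic_space_high_mass} produces a state $\ket{\phi}=U^\dagger(\ket{\varphi}\otimes\ket{z})$ with $A\subseteq\weyl{\ket{\phi}}$ and $|\langle\phi|\psi\rangle|^2\ge\poly(\gamma)$. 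The tolerant tester for $\calF_{\calS(n-t)}$ then follows by pairing this with the estimator of Lemma~\ref{lem:est_weyl_exp} (the completeness direction, that an $\varepsilon$-close witness forces $\Exp_{x\sim q_\Psi}$ to be large, coming from Theorem~\ref{fact:lower_bound_stabilizer_fidelity}).

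\textbf{Algorithmizing.} I would reuse Steps 1--3 of the proper $\Selfcorrection$ protocol (Section~\ref{sec:selfcorrectionprotocol}) verbatim: $\sample$ and $\bsgtest$ to build sample/membership access to the small-doubling set $A^{(1)}$, and $\pfrsubgroup$ (algorithmic $\PFR$) to output a basis of $V$ with $\Exp_{x\in V}[2^np_\Psi(x)]\ge\poly(\gamma)$ and $\dim(V)\ge n-O(\log(1/\gamma))$. Then, in place of $\findstabilizer$, I run $\symgramschmidt$ on $V$ to get its center $C_V$ and an anticommutant $A_V$ of $k\le\log(\gamma^{-C})$ hyperbolic pairs, form the isotropic subspace $A=\langle C_V,g_1,\dots,g_k\rangle$ (one generator per pair), and apply Lemma~\ref{lem:clifford_isotropic_subspace} to compute a Clifford $U$ with $UAU^\dagger=\langle Z_{t+1},\dots,Z_n\rangle$, $t=O(\log(1/\gamma))$, all in $O(n^3)$ time. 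By the existential argument (with Fact~\ref{fact:remove_coset} absorbing the coset shift into the measurement outcome), measuring the last $n-t$ qubits of $U\ket{\psi}$ in the computational basis returns some $z^\star$ with probability $\|(I\otimes\bra{z^\star})U\ket{\psi}\|^2\ge\sum_{x\in A}p_\Psi(x)\ge\poly(\gamma)$, and conditioned on that the $t$-qubit residual $\ket{\varphi_{z^\star}}$ satisfies $|\langle\psi|U^\dagger(\ket{\varphi_{z^\star}}\otimes\ket{z^\star})\rangle|^2\ge\poly(\gamma)$. So I would draw $\poly(1/\gamma)$ measurement samples to form a candidate list of strings $z$; for each, reconstruct the $t$-qubit residual $\ket{\varphi_z}$ by full tomography (Lemma~\ref{lem:state_tomo}, cost $\poly(2^t)=\poly(1/\gamma)$ since $t=O(\log(1/\gamma))$); estimate the fidelity of $U^\dagger(\ket{\varphi_z}\otimes\ket{z})$ with $\ket{\psi}$ via the high-stabilizer-dimension shadows of Lemma~\ref{lem:shadows_stab_dim} (cost $\poly(2^t)$, valid because every candidate has stabilizer dimension $\ge n-t$); and output the best. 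Setting $\gamma=\tau^6$ through Theorem~\ref{fact:lower_bound_stabilizer_fidelity} gives $|\langle\phi|\psi\rangle|^2\ge\tau^{C_2}$ in total cost $\poly(n,1/\tau,\log(1/\delta))$, with a smaller exponent $C_2$ than in Theorem~\ref{thm:self_correction} precisely because $\findstabilizer$ --- whose MUB covering and phase enumeration cost further powers of $\gamma$ --- is replaced by a single round of measurements.

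\textbf{Main obstacle.} The existential local inverse theorem is short once Theorem~\ref{thm:stabilizer_covering_group}, Fact~\ref{fact:remove_coset}, and Lemma~\ref{lem:state_isotropic_space_high_mass} are in hand, the one point needing care being that the number $k$ of anticommuting generator pairs of $V$ is bounded by $O(\log(1/\gamma))$ \emph{uniformly in $n$}, so that the $2^k$ cosets stay $\poly(1/\gamma)$-many. The real work is on the algorithmic side: (i) certifying that the ``good'' outcome $z^\star$ is hit with $\poly(\gamma)$ probability, which reduces to the inequality $\|(I\otimes\bra{z^\star})U\ket{\psi}\|^2\ge\sum_{x\in A}p_\Psi(x)$ together with the mass lower bound just proved; and (ii) that the output is \emph{improper}, which is exactly why Lemma~\ref{lem:shadows_stab_dim} rather than Lemma~\ref{lem:classicalshadow} must be used and why the harmless $2^{O(t)}=\poly(1/\tau)$ overhead propagates through the tomography and shadow steps. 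Everything else is a union bound over the $\poly(1/\gamma)$ candidates and the $\poly(1/\gamma)$ invocations of the subroutines of Section~\ref{sec:subroutines}.
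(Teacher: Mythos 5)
Your proposal is correct and follows the same overall architecture as the paper — reuse Steps~1--3 of the proper $\Selfcorrection$ protocol to extract the PFR subgroup $V$ with $\Exp_{x\in V}[2^np_\Psi(x)]\ge\poly(\gamma)$, run $\symgramschmidt$ to put $V$ in canonical form, and replace the MUB+phase enumeration of $\findstabilizer$ with a single round of computational-basis measurements on the ``fixed'' qubits followed by a small tomography of the $O(\log(1/\gamma))$-qubit residual. The algorithmic core is the same.

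The existential step is where you take a genuinely different (and slightly lossier) route. You peel off an isotropic $A\subseteq V$ with $\dim(A)=\dim(V)-k$, average the Weyl mass over the $2^k$ cosets of $A$ in $V$ using Fact~\ref{fact:remove_coset} to get $\sum_{x\in A}p_\Psi(x)\ge 2^{-k}\sum_{x\in V}p_\Psi(x)$, and then invoke the black-box Lemma~\ref{lem:state_isotropic_space_high_mass}. The paper instead proves a new structural lemma (Proposition~\ref{prop:prob_mass_Pk_PZm} and Corollary~\ref{lem:lower_bound_stab_dim_fidelity_gen}) that is applied \emph{directly} to the full, non-isotropic subgroup $V\subseteq\calP^k\times\calP_Z^{n-k}$: this exploits the fact that $\sum_{W_y\in\calP^k,W_z\in\calP^{n-k}_Z}|\la\psi|W_y\otimes W_z|\psi\ra|^2 = 2^n\sum_a|\alpha_a|^4$, bypassing both the $2^{-k}$ coset loss and the $2^{+t}$ slack in Lemma~\ref{lem:state_isotropic_space_high_mass}. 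Your version yields the same qualitative conclusion with possibly worse polynomial dependence on $\gamma$; the paper's version is cleaner and what actually allows them to track concrete exponents (e.g.\ $\Omega(\gamma^{728})$). On the algorithmic side, you tomograph every candidate $z$ with the generic Lemma~\ref{lem:state_tomo}, whereas the paper first selects the best $z^\star$ by estimating the outcome probability $\Tr(\la 0^{n-k}|U_i\Psi U_i^\dagger|0^{n-k}\ra)$ with Lemma~\ref{lem:shadows_stab_dim} and then tomographs \emph{only} the winning branch using the packaged post-selection lemma (Lemma~\ref{lem:ST_small_state}), with correctness mediated by Lemma~\ref{lem:fidelity_prod_state}. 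Both work; the paper's is a $\poly(1/\gamma)$ factor cheaper.

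One point to tighten: you cite Lemma~\ref{lem:shadows_stab_dim} as estimating ``the fidelity of $U^\dagger(\ket{\varphi_z}\otimes\ket{z})$ with $\ket{\psi}$,'' but as stated it only estimates the outcome probability $\Tr(\la 0^{n-t}|U_j^\dagger\Psi U_j|0^{n-t}\ra)$, which is an \emph{upper bound} on that fidelity (cf.\ Lemma~\ref{lem:fidelity_prod_state}), not the fidelity itself when $\ket{\varphi_z}$ is a generic non-stabilizer state. To turn it into a selection criterion you should either (a) select by the outcome probability and then argue, as the paper does, that the learned residual nearly saturates it, or (b) estimate the overlap $\la\varphi_z|\Psi'|\varphi_z\ra$ separately from the tomographic estimate and multiply. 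Either repair is routine, but the claim as written is a citation mismatch rather than a proof.
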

In order to prove this theorem, we break down the proof into three steps, in Section~\ref{sec:step1} we give an existential proof, that if the stabilizer fidelity is high, then there \emph{exists} a high stabilizer dimension state that is close enough, in Section~\ref{sec:step2} we give a tolerant testing algorithm for states with high stabilizer dimension and in Section~\ref{sec:learning_algo_improper_SC} we give our final algorithm putting together these~steps. Recall the notation that  $\mathcal{S}(n-t)$ is the class of states with stabilizer dimension $n-t$ (which we use often in this section).
\subsection{Local inverse Gowers-3 theorem of quantum states}
\label{sec:step1}
In this section, we prove Step $(1)$, i.e., $\Exp_{x \sim q_\Psi}[2^n p_\Psi(x)] \geq \gamma$ implies there is a stabilizer dimension $n-\log(1/\gamma)$ state close to $\ket{\psi}$. We state it formally below.
\begin{restatable}{theorem}{inv_thm_stab_dim}\label{thm:inverse_gowers3_stab_dim}
Let $\gamma\in [0,1]$, $C'>1$ be a constant. If $\ket{\psi}$ is an $n$-qubit quantum state such that $\Exp_{x \sim q_\Psi}[2^n p_\Psi(x)] \geq \gamma$, then there is an $n$-qubit state $\ket{\phi}$ of stabilizer dimension at least $n - \poly(1/\gamma)$ such that $|\la \psi | \phi \ra|^2 \geq \gamma^{C'}$.\footnote{We will show this for $C'<C$ where $C$ is as defined in Theorem~\ref{thm:inverse_weyl_exp_states} and in particular for $C'=728$.}
\end{restatable}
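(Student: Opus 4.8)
The plan is to run the combinatorial pipeline underlying the global inverse theorem (Theorem~\ref{thm:inverse_weyl_exp_states}) only up to the point where one has the ``Freiman subgroup'' $V$ of \cite{ad2024tolerant}, and then, instead of covering $V$ by Lagrangian subspaces, to extract its \emph{centralizer} $C_V := V \cap V^\perp$. The key observation is that $C_V$ is an isotropic subspace of dimension $n - O(\log(1/\gamma))$ that still carries a $\poly(\gamma)$-fraction of the Weyl mass; feeding it into Lemma~\ref{lem:state_isotropic_space_high_mass} then yields a state of stabilizer dimension $n-O(\log(1/\gamma))$ with $\poly(\gamma)$ fidelity with $\ket{\psi}$ (in fact the argument gives the $n-O(\log(1/\gamma))$ bound stated in the introduction).

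First I would invoke the existential content already established in \cite{ad2024tolerant}: from $\Exp_{x\sim q_\Psi}[2^n p_\Psi(x)]\ge\gamma$, Theorem~\ref{thm:large_set_gamma_qPsi_expectations} gives an approximate subgroup $S\subseteq S_{\gamma/4}$ with $|S|\ge\gamma^2 2^n/80$ and $\Pr_{x,y\in S}[x+y\in S]\ge\Omega(\gamma^5)$; the $\BSG$ theorem (Theorem~\ref{thm:bsg}) gives $S'\subseteq S$ of doubling $\poly(1/\gamma)$ with $|S'|\ge\poly(\gamma)|S|$; and the combinatorial $\PFR$ theorem (Theorem~\ref{thm:marton_conjecture}) together with the mass analysis of \cite{ad2024tolerant} produces a subgroup $V$ with $\Exp_{x\in V}[2^n p_\Psi(x)]\ge\poly(\gamma)$ and $\poly(\gamma)\cdot 2^n \le |V|\le \poly(1/\gamma)\cdot 2^n$, so in particular $\dim(V)=n\pm O(\log(1/\gamma))$.

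Next I would apply Theorem~\ref{thm:stabilizer_covering_group}(i) to $V$: there is a Clifford $U$ and integers $m+k\le n$ with $k\le\log(\gamma^{-C})=O(\log(1/\gamma))$ and $UVU^\dagger=\la Z_1,X_1,\dots,Z_k,X_k,Z_{k+1},\dots,Z_{k+m}\ra$. Reading off the radical of the symplectic form from this canonical form, $C_V=V\cap V^\perp$ is isotropic of dimension $m=\dim(V)-2k\ge n-O(\log(1/\gamma))$; put $t:=n-m=O(\log(1/\gamma))$. Now $V$ is a disjoint union of $|V|/|C_V|=2^{2k}=\poly(1/\gamma)$ cosets of $C_V$, so averaging gives a coset $c^\star+C_V$ with $\Exp_{x\in c^\star+C_V}[2^n p_\Psi(x)]\ge\poly(\gamma)$; Fact~\ref{fact:remove_coset} (applied to the subgroup $C_V$ and the shift $c^\star$) then gives $\Exp_{x\in C_V}[2^n p_\Psi(x)]\ge\Exp_{x\in c^\star+C_V}[2^n p_\Psi(x)]\ge\poly(\gamma)$, that is, $\sum_{x\in C_V}p_\Psi(x)\ge 2^{-t}\poly(\gamma)$. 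Finally Lemma~\ref{lem:state_isotropic_space_high_mass}, applied to the isotropic subspace $C_V$ of dimension $n-t$, produces a state $\ket{\phi}$ with $C_V\subseteq\weyl{\ket{\phi}}$ — hence stabilizer dimension at least $n-t=n-O(\log(1/\gamma))$ — and $|\la\psi|\phi\ra|^2\ge\poly(\gamma)\ge\gamma^{C'}$.

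The structural part of the argument is essentially immediate once Theorem~\ref{thm:stabilizer_covering_group} is in hand — the only new idea is to peel off $C_V$ rather than the Lagrangian covering — so the real work is quantitative. I would need to propagate the explicit polynomial losses through the $\BSG$ step, the $2K^9$ translates in $\PFR$, the $2^{2k}$ cosets of $C_V$ inside $V$, and the $2^{-t}$ slack in Lemma~\ref{lem:state_isotropic_space_high_mass}, and then verify that they compose into a single exponent $C'$ strictly smaller than the constant $C$ of Theorem~\ref{thm:inverse_weyl_exp_states} (the normalizations used here give $C'=728$). This constant-chasing, not any conceptual gap, is the main obstacle.
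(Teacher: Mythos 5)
Your proof is correct, and it takes a genuinely different route from the paper's. The paper develops a new structural lemma (Lemma~\ref{lem:lower_bound_stab_dim_fidelity}, proved via the identity in Proposition~\ref{prop:prob_mass_Pk_PZm}) that bounds fidelity with $\calS(n-k)$ directly in terms of the Weyl mass over the \emph{whole non-isotropic} group $W \subseteq \calP^k \times \calP_Z^{n-k}$; it then applies Corollary~\ref{lem:lower_bound_stab_dim_fidelity_gen} to $V$ itself, which converts the expectation bound $\Exp_{x \in V}[2^n p_\Psi(x)] \geq \Omega(\gamma^{361})$ into the mass bound $\sum_{x \in V} p_\Psi(x) \geq (|V|/2^n)\cdot\Omega(\gamma^{361}) \geq \Omega(\gamma^{728})$, picking up the factor $|V|/2^n = \Omega(\gamma^{367})$ along the way. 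You instead peel off the isotropic radical $C_V = V \cap V^\perp$, pass to a heavy coset of $C_V$ by averaging, fold it back onto $C_V$ with Fact~\ref{fact:remove_coset}, and finish with the preexisting Lemma~\ref{lem:state_isotropic_space_high_mass}. That lemma's built-in $2^{-t}$ slack exactly cancels the $|C_V|/2^n = 2^{-t}$ factor, so your chain avoids the $|V|/2^n$ loss entirely: you actually get the sharper exponent $C' = 361$, not $C' = 728$ as you quote (that figure is the paper's, not what your own normalizations produce). Both routes give stabilizer dimension $n - O(\log(1/\gamma))$, which is stronger than the stated $n - \poly(1/\gamma)$. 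What the paper's approach buys is amortization: Lemma~\ref{lem:lower_bound_stab_dim_fidelity} and Proposition~\ref{prop:prob_mass_Pk_PZm} double as the engine behind the completeness analysis (Lemma~\ref{lemma:completeness_stabdim}) and the algorithmic Theorem~\ref{thm:analysis_algo_stab_dim_state}, so the non-isotropic lemma is developed once and reused. Your approach is more elementary — only tools already on hand — and quantitatively tighter for this particular theorem.
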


To prove Theorem~\ref{thm:inverse_gowers3_stab_dim}, we will require the following result that shows the fidelity of any state with $\calS(n-k)$ is bounded from below by the probability mass over the space obtained by the product of the $k$-qubit Pauli group and a Lagrangian space of dimension $(n-k)$. This can be viewed as an extension of \cite[Lemma~4.6]{grewal2023efficient} who showed that the stabilizer fidelity of any state is bounded by the probability mass over any Lagrangian space below (see Theorem~\ref{fact:lower_bound_stabilizer_fidelity}).

\begin{lemma}\label{lem:lower_bound_stab_dim_fidelity}
Suppose $\ket{\psi}$ is an $n$-qubit state and $\alpha \in (0,1]$. If $W$ is a subgroup of Paulis of the form $W \subseteq \calP^k \times \calP_Z^{n-k}$  such that 
$$
\sum_{x \in W} p_\Psi(x) \geq \alpha
$$
then there exists $\ket{\phi} = \ket{\varphi} \otimes \ket{z} \in \calS(n-k)$ where $\ket{\varphi}$ is a $k$-qubit state and $\ket{z}$ is an $(n-k)$-qubit computational basis state such that
$$
|\la \phi | \psi \ra|^2 \geq \sum_{x \in W} p_\Psi(x) \geq \alpha.
$$
\end{lemma}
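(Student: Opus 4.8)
The plan is to deduce the statement directly from Lemma~\ref{lem:state_isotropic_space_high_mass}, applied to the single isotropic subspace $A := \la Z_{k+1},\ldots,Z_n\ra = \calP_Z^{n-k}$, which has dimension $n-k$. The only thing that needs checking is the mass inequality $\sum_{x\in A}p_\Psi(x) \geq 2^{-k}\cdot\sum_{x\in W}p_\Psi(x)$; granting this, one invokes Lemma~\ref{lem:state_isotropic_space_high_mass} with $t=k$ and $\eta := 2^{k}\sum_{x\in A}p_\Psi(x)$, and since $A$ is literally $\la Z_{k+1},\ldots,Z_n\ra$, the Clifford $U$ promised by that lemma may be taken to be the identity, which is exactly what produces the product form $\ket{\varphi}\otimes\ket{z}$.

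To prove the mass inequality I would Fourier-analyse over the last $n-k$ qubits. For $z\in\FF_2^{n-k}$ set $\ket{\psi_z} := (I^{\otimes k}\otimes\bra{z})\ket{\psi}$ and $q_z := \la\psi_z|\psi_z\ra$, so that $(q_z)_z$ is the distribution of a computational-basis measurement of the last $n-k$ qubits and $\sum_z q_z = 1$. Every element of $\calP^k\times\calP_Z^{n-k}$ has the form $P\otimes Z^b$ with $P$ a $k$-qubit Weyl operator and $Z^b$ diagonal, so
\[
\la\psi|(P\otimes Z^b)|\psi\ra \;=\; \sum_{z\in\FF_2^{n-k}}(-1)^{\la b,z\ra}\,\la\psi_z|P|\psi_z\ra ,
\]
where each $\la\psi_z|P|\psi_z\ra$ is real. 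Using $W\subseteq\calP^k\times\calP_Z^{n-k}$ to pass to the full block, Parseval over $b\in\FF_2^{n-k}$, and the purity identity $\sum_{P\in\calP^k}\la\psi_z|P|\psi_z\ra^2 = q_z^2\cdot 2^{k}$ (this is $\Tr(\ketbra{\varphi_z}{\varphi_z}^2)=1$ written in the Pauli basis, with $\ket{\varphi_z}=\ket{\psi_z}/\sqrt{q_z}$), one obtains
\[
\sum_{x\in W}p_\Psi(x) \;\leq\; \sum_{x\in\calP^k\times\calP_Z^{n-k}}p_\Psi(x) \;=\; \sum_{z\in\FF_2^{n-k}}q_z^{\,2}.
\]
Running the same computation with $P$ restricted to the identity gives $\sum_{x\in A}p_\Psi(x) = 2^{-k}\sum_z q_z^2$, and combining the two displays yields $\sum_{x\in A}p_\Psi(x) \geq 2^{-k}\sum_{x\in W}p_\Psi(x) \geq 2^{-k}\alpha$.

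Finally I would apply Lemma~\ref{lem:state_isotropic_space_high_mass} to $A$ (isotropic, $\dim A = n-k$) with $t=k$ and $\eta := 2^{k}\sum_{x\in A}p_\Psi(x) = \sum_z q_z^2 \in(0,1]$; its hypothesis $\sum_{x\in A}p_\Psi(x)\geq 2^{-t}\eta$ then holds with equality. It returns a state $\ket{\phi}$ with $A\subseteq\weyl{\ket{\phi}}$, hence $\dim\weyl{\ket{\phi}}\geq n-k$ and $\ket{\phi}\in\calS(n-k)$, and with $|\la\phi|\psi\ra|^2\geq\eta = \sum_z q_z^2 \geq \sum_{x\in W}p_\Psi(x)\geq\alpha$. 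Because the Clifford in that lemma need only carry $A$ to $\la Z_{k+1},\ldots,Z_n\ra$ — which is $A$ itself — we take it to be the identity, so $\ket{\phi} = \ket{\varphi}\otimes\ket{z}$ with $\ket{\varphi} = (I^{\otimes k}\otimes\bra{z})\ket{\psi}/\norm{(I^{\otimes k}\otimes\bra{z})\ket{\psi}}$ a $k$-qubit state and $\ket{z}$ an $(n-k)$-qubit computational basis state, as required. I expect the main (minor) obstacle to be the bookkeeping in the Parseval step: one must verify that the factor $2^{k}$ coming from $\sum_{P\in\calP^k}\la\psi_z|P|\psi_z\ra^2 = q_z^2\,2^{k}$ exactly cancels the $2^{-t}$ in the hypothesis of Lemma~\ref{lem:state_isotropic_space_high_mass}, so that the clean bound $|\la\phi|\psi\ra|^2\geq\sum_{x\in W}p_\Psi(x)$ (with no polynomial loss) is retained.
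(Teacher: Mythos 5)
Your proposal is correct and takes essentially the same approach as the paper: the Parseval computation you carry out is precisely the content of Proposition~\ref{prop:prob_mass_Pk_PZm} (with your $q_z$ equal to the paper's $|\alpha_z|^2$), and where the paper concludes directly via the elementary bound $\sum_z q_z^2 \leq \max_z q_z$ and picks the maximizing $z$, you package that final step through Lemma~\ref{lem:state_isotropic_space_high_mass} with $U=I$. Both routes give the same product state and the same bound $|\la\phi|\psi\ra|^2 \geq \sum_{x\in W}p_\Psi(x)$; the direct route is marginally leaner since it avoids an external citation for a one-line inequality, but yours is equally valid.
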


We will require the following structural result to prove the above lemma.
\begin{prop}\label{prop:prob_mass_Pk_PZm}
Let $t\geq 1$. For any $n$-qubit pure quantum state $\ket{\psi}$, the following is true
$$
\sum_{W_y \in \calP^t, W_z \in \calP^{n-t}_Z} |\la \psi | W_y \otimes W_z | \psi \ra|^2 = 2^n \sum_{a \in \FF_2^{n-t}} |\alpha_a|^4,
$$
where $\alpha_x \in \mathbb{C}$ are the amplitudes corresponding to expressing $\ket{\psi} = \sum_{x \in \FF_2^{n-t}} \alpha_x \ket{\phi_x}\ket{x}$ with $\ket{\phi_x}$ being a $t$-qubit state corresponding to the $(n-t)$-qubit computational basis state $\ket{x}$.
\end{prop}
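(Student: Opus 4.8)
The plan is to expand both sides directly in terms of the amplitudes $\alpha_x$ and the $t$-qubit states $\ket{\phi_x}$, and observe that summing over the computational-basis Pauli group $\calP_Z^{n-t}$ on the last $n-t$ qubits produces a Kronecker delta that collapses the double sum over basis strings to a single diagonal sum. More concretely, write $\ket{\psi} = \sum_{x \in \FF_2^{n-t}} \alpha_x \ket{\phi_x}\ket{x}$. For a Pauli $W_y \otimes W_z$ with $W_y \in \calP^t$ and $W_z \in \calP_Z^{n-t}$, note $W_z = \pm Z^{c}$ for some $c \in \FF_2^{n-t}$ (the phase is irrelevant since we take $|\cdot|^2$), and $\bra{x} Z^c \ket{x'} = (-1)^{\la c, x\ra}[x = x']$. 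Hence
$$
\la \psi | (W_y \otimes W_z) | \psi \ra = \sum_{x \in \FF_2^{n-t}} \overline{\alpha_x}\,\alpha_{x}\, (-1)^{\la c, x\ra}\, \la \phi_x | W_y | \phi_x \ra = \sum_{x} |\alpha_x|^2 (-1)^{\la c, x\ra} \la \phi_x | W_y | \phi_x \ra.
$$

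First I would square this and sum over $c \in \FF_2^{n-t}$ (equivalently over $W_z \in \calP_Z^{n-t}$), using the identity $\sum_{c \in \FF_2^{n-t}} (-1)^{\la c, x + x'\ra} = 2^{n-t}[x = x']$. This gives
$$
\sum_{W_z \in \calP_Z^{n-t}} |\la \psi | (W_y \otimes W_z)|\psi\ra|^2 = 2^{n-t} \sum_{x \in \FF_2^{n-t}} |\alpha_x|^4\, |\la \phi_x | W_y | \phi_x \ra|^2.
$$
Then I would sum over $W_y \in \calP^t$. The $t$-qubit Weyl operators form an orthogonal basis, so for any fixed pure $t$-qubit state $\ket{\phi_x}$ we have $\sum_{W_y \in \calP^t} |\la \phi_x | W_y | \phi_x \ra|^2 = 2^t \,\Tr\!\big((\ketbra{\phi_x}{\phi_x})^2\big) = 2^t$ (this is the $t$-qubit analogue of Fact~\ref{lem:ub_qPsi}-style normalization, $\frac{1}{2^t}\sum_y \alpha_y^2 = 1$). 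Combining, the left-hand side equals $2^{n-t}\cdot 2^t \sum_{x} |\alpha_x|^4 = 2^n \sum_{x \in \FF_2^{n-t}} |\alpha_x|^4$, which is exactly the claimed right-hand side.

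This argument is almost entirely routine bookkeeping; the only mild subtlety to get right is the phase convention for $W_z$ (Weyl operators carry an $i^{a\cdot b}$ factor, but restricted to $\calP_Z^{n-t}$ these are just $\pm Z^c$, so phases drop out under $|\cdot|^2$) and making sure the normalization $\sum_{W_y \in \calP^t}|\la\phi_x|W_y|\phi_x\ra|^2 = 2^t$ is invoked for each $x$ separately — it holds because each $\ket{\phi_x}$ is a unit-norm pure state regardless of the value of $\alpha_x$. I do not anticipate a genuine obstacle; the proposition is a structural identity that sets up the proof of Lemma~\ref{lem:lower_bound_stab_dim_fidelity}, where one will combine it with a Cauchy--Schwarz / averaging step to extract the product state $\ket{\varphi}\otimes\ket{z}$.
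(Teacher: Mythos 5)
Your proposal is correct and follows essentially the same route as the paper's proof: expand $\ket{\psi}$ along the computational basis of the last $n-t$ qubits, compute the expectation value, sum over $\calP_Z^{n-t}$ to produce the Kronecker delta $2^{n-t}[a=b]$, then sum over $\calP^t$ using $\sum_{W_y \in \calP^t}|\la\phi_a|W_y|\phi_a\ra|^2 = 2^t$ for each fixed unit vector $\ket{\phi_a}$. The only cosmetic difference is the order in which the two sums are collapsed; the paper performs both at once before invoking the delta, while you collapse the $W_z$-sum first.
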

\begin{proof}
We can write every $\ket{\psi} = \sum_{x \in \FF_2^{n-t}} \alpha_x \ket{\phi_x}\ket{x}$ where the amplitudes $\alpha_x \in \mathbb{C}$, $\ket{\phi_x}$ is a $t$-qubit state corresponding to the $(n-t)$-qubit computational basis state $\ket{x}$ and $\sum_x |\alpha_x|^2 = 1$. We then~have
\begin{align*}
\la \psi | W_y \otimes W_z | \psi \ra &= \sum_{a,b \in \FF_2^{n-t}} \alpha_a \alpha_b^\star \la \phi_b | W_y | \phi_a \ra \cdot \la b | W_z | a \ra \\
&= \sum_{a,b \in \FF_2^{n-t}} \alpha_a \alpha_b^\star \la \phi_b | W_y | \phi_a \ra (-1)^{z \cdot a} [a = b] \\
&= \sum_{a \in \FF_2^{n-t}} |\alpha_a|^2 (-1)^{z \cdot a} \la \phi_a | W_y | \phi_a \ra
\end{align*}

One can then show
\begin{align*}
\sum_{W_y \in \calP^t, W_z \in \calP^{n-t}_Z} |\la \psi | W_y \otimes W_z | \psi \ra|^2 &= \sum_{W_y \in \calP^t, W_z \in \calP^{n-t}_Z} \sum_{a,b \in \FF_2^{n-t}} |\alpha_a|^2 |\alpha_b|^2 \la \phi_a | W_y | \phi_a \ra \overline{\la \phi_b | W_y | \phi_b \ra} (-1)^{z \cdot (a+b)} \\
&=  \sum_{W_y \in \calP^t} \sum_{a,b \in \FF_2^{n-t}} |\alpha_a|^2 |\alpha_b|^2 \la \phi_a | W_y | \phi_a \ra \overline{\la \phi_b | W_y | \phi_b \ra} \sum_{W_z \in \calP^{n-t}_Z} (-1)^{z \cdot (a+b)} \\
&= \sum_{W_y \in \calP^t} \sum_{a,b \in \FF_2^{n-t}} |\alpha_a|^2 |\alpha_b|^2 \la \phi_a | W_y | \phi_a \ra \overline{\la \phi_b | W_y | \phi_b \ra} 2^{n-t} [a=b] \\
&= 2^{n-t} \sum_{W_y \in \calP^t} \sum_{a \in \FF_2^{n-t}} |\alpha_a|^4 \cdot |\la \phi_a | W_y | \phi_a \ra|^2 \\
&= 2^{n-t} \sum_{a \in \FF_2^{n-t}} |\alpha_a|^4 \cdot \sum_{W_y \in \calP^t} |\la \phi_a | W_y | \phi_a \ra|^2 \\
&= 2^{n-t} \sum_{a \in \FF_2^{n-t}} |\alpha_a|^4 \cdot 2^t \\
&= 2^n \sum_{a \in \FF_2^{n-t}} |\alpha_a|^4,
\end{align*}
where in the sixth line, we used that $\sum_{W_y \in \calP^t} |\la \phi_a | W_y | \phi_a \ra|^2 = 2^t \sum_{y \in \FF_2^{2t}} p_{\phi_a}(y) = 2^t$. This proves the desired result.
\end{proof}

We now give the proof of Lemma~\ref{lem:lower_bound_stab_dim_fidelity}.
\begin{proof}[Proof of Lemma~\ref{lem:lower_bound_stab_dim_fidelity}]
We can write every $\ket{\psi} = \sum_{x \in \FF_2^{n-k}} \alpha_x \ket{\phi_x}\ket{x}$ where the amplitudes $\alpha_x \in \mathbb{C}$, $\ket{\phi_x}$ is a $k$-qubit state corresponding to the $(n-k)$-qubit computational basis state $\ket{x}$ and $\sum_x |\alpha_x|^2 = 1$. Using Proposition~\ref{prop:prob_mass_Pk_PZm} and $W \subseteq \calP^k \times \calP_Z^{n-k}$, we have that
\begin{align*}
    \alpha \leq \sum_{x \in W} p_\Psi(x) \leq \frac{1}{2^n} \sum_{\substack{W_y \in \calP^k \\ W_z \in \calP^{n-k}_Z}} |\la \psi | W_y \otimes W_z | \psi \ra|^2 = \sum_{x \in \FF_2^{n-k}} |\alpha_x|^4 \leq \max_{x \in \FF_2^{n-k}} |\alpha_x|^2 \cdot \sum_{x \in \FF_2^{n-k}} |\alpha_x|^2 = \max_{x \in \FF_2^{n-k}} |\alpha_x|^2,
\end{align*}
where we used $\sum_x |\alpha_x|^2 = 1$ in the last inequality. Noting that 
$$
\max_{x \in \FF_2^{n-k}} | \la \psi | (\ket{\varphi_x} \otimes \ket{x} )|^2 = \max_{x \in \FF_2^{n-k}} |\alpha_x|^2 \geq \alpha,
$$
and choosing $\ket{\varphi} = \ket{\varphi_x}$ corresponding to the computational basis state $\ket{x}$ that maximizes the above, completes the proof.
\end{proof}

An immediate corollary of Lemma~\ref{lem:lower_bound_stab_dim_fidelity} is the following.
\begin{corollary}\label{lem:lower_bound_stab_dim_fidelity_gen}
Suppose $\ket{\psi}$ is an $n$-qubit state and $\alpha \in (0,1]$. If $W$ is a subgroup of Paulis such that there exists a Clifford unitary $U$ satisfying $UWU^\dagger \subseteq \calP^k \times \calP_Z^{n-k}$ and
$$
\sum_{x \in W} p_\Psi(x) \geq \alpha
$$
then there exists $\ket{\phi} = U^\dagger(\ket{\varphi} \otimes \ket{z}) \in \calS(n-k)$ where $\ket{\varphi}$ is a $k$-qubit state and $\ket{z}$ is an $(n-k)$-qubit computational basis state such that
$$
|\la \phi | \psi \ra|^2 \geq \sum_{x \in W} p_\Psi(x) \geq \alpha.
$$
\end{corollary}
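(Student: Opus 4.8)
The plan is to derive this as an immediate consequence of Lemma~\ref{lem:lower_bound_stab_dim_fidelity} by transporting the hypotheses through the Clifford $U$: I would work with the rotated state $U\ket{\psi}$ and the rotated subgroup $W' := UWU^\dagger \subseteq \calP^k \times \calP_Z^{n-k}$, apply Lemma~\ref{lem:lower_bound_stab_dim_fidelity} there, and conjugate the resulting state back by $U^\dagger$.

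First I would record the Clifford-covariance of the characteristic distribution. For a Clifford $U$, conjugation sends $W_x \mapsto U W_x U^\dagger = (-1)^{c(x)} W_{\sigma(x)}$ for some sign $(-1)^{c(x)}$ and some bijection $\sigma : \FF_2^{2n} \to \FF_2^{2n}$ (the induced symplectic map). Hence $\la U\psi | W_{\sigma(x)} | U\psi \ra = \la \psi | U^\dagger W_{\sigma(x)} U | \psi \ra = (-1)^{c(x)} \la \psi | W_x | \psi \ra$, so that $p_{U\Psi}(\sigma(x)) = |\la U\psi | W_{\sigma(x)} | U\psi\ra|^2 / 2^n = p_\Psi(x)$. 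Under the standard identification of Pauli subgroups with subspaces of $\FF_2^{2n}$, the map $\sigma$ carries $W$ bijectively onto (the subspace corresponding to) $W'$, and therefore $\sum_{x \in W} p_\Psi(x) = \sum_{y \in W'} p_{U\Psi}(y) \geq \alpha$.

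Next, since $W' \subseteq \calP^k \times \calP_Z^{n-k}$ by hypothesis, I would invoke Lemma~\ref{lem:lower_bound_stab_dim_fidelity} with the state $U\ket{\psi}$ and the subgroup $W'$, obtaining a state $\ket{\phi'} = \ket{\varphi} \otimes \ket{z} \in \calS(n-k)$, with $\ket{\varphi}$ a $k$-qubit state and $\ket{z}$ an $(n-k)$-qubit computational basis state, such that $|\la \phi' | U\psi \ra|^2 \geq \sum_{y \in W'} p_{U\Psi}(y) = \sum_{x \in W} p_\Psi(x) \geq \alpha$. Setting $\ket{\phi} := U^\dagger \ket{\phi'} = U^\dagger(\ket{\varphi} \otimes \ket{z})$, unitary invariance of the fidelity gives $|\la \phi | \psi \ra|^2 = |\la \phi' | U\psi\ra|^2 \geq \sum_{x\in W} p_\Psi(x) \geq \alpha$. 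Finally $\ket{\phi}$ still has stabilizer dimension at least $n-k$: if $G'$ is the abelian group of $2^{n-k}$ Paulis stabilizing $\ket{\phi'}$, then $U^\dagger G' U$ is an abelian group of the same cardinality stabilizing $\ket{\phi}$, so $\ket{\phi} \in \calS(n-k)$ as required.

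I do not expect any genuine obstacle here; the statement is a routine corollary. The only care needed is the bookkeeping around the sign $(-1)^{c(x)}$, which disappears upon taking $|\cdot|^2$, and the identification of Pauli subgroups with subspaces of $\FF_2^{2n}$, so that the hypothesis ``$UWU^\dagger \subseteq \calP^k \times \calP_Z^{n-k}$'' can be read directly as a subspace containment and fed into Lemma~\ref{lem:lower_bound_stab_dim_fidelity}.
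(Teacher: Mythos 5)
Your proposal is correct and matches the paper's intent exactly: the paper labels this as an "immediate corollary" of Lemma~\ref{lem:lower_bound_stab_dim_fidelity} without spelling out the details, and what you have written is precisely the routine conjugation argument (Clifford covariance of $p_\Psi$, unitary invariance of fidelity, transport of the stabilizer group by $U^\dagger$) that underlies that claim.
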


We are now in a position to give a proof of Theorem~\ref{thm:inverse_gowers3_stab_dim}.
\begin{proof}[Proof of Theorem~\ref{thm:inverse_gowers3_stab_dim}]
In \cite[Proof of Theorem~4.2]{ad2024tolerant}, it was shown that if $\Exp_{x \sim q_\Psi}[2^n p_\Psi(x)] \geq \gamma$, then there exists a subgroup $V \subset \FF_2^{2n}$ of size $(4/\gamma) \cdot 2^n \geq |V| \geq \Omega(\gamma^{367})2^n$ such that
$$
\Exp_{x \in V}[2^n p_\Psi(x)] \geq \Omega(\gamma^{361}).
$$
Theorem~\ref{thm:stabilizer_covering_group} then implies that $V$ has a stabilizer covering of $O(\gamma^{-361})$ and there exists a Clifford circuit $U$ such that $UVU^\dagger = \calP^k \times \calP^m_Z$ where $m \leq n - k$ and $k = O(\log(\gamma^{-361}))$. Corollary~\ref{lem:lower_bound_stab_dim_fidelity_gen} then implies that there exists $\ket{\phi} \in \calS(n-k)$ such that
$$
|\la \phi| \psi \ra|^2 \geq \sum_{x \in V} p_\Psi(x) \geq \frac{|V|}{2^n} \cdot \Omega(\gamma^{361}) \geq \Omega(\gamma^{728}).
$$
This concludes the proof of our main theorem.
\end{proof}

\subsection{Tolerant testing high stabilizer-dimension states}
\label{sec:step2}
In this section, we show that one can tolerantly test the class of states with high stabilizer-dimension. 
Below, we first prove completeness in order to show that the tester proposed in~\cite{ad2024tolerant} can also be used to tolerantly-test states with high stabilizer dimension. 

\paragraph{Completeness analysis.} We now present the result on completeness, which extends the result from \cite{gross2021schur} applicable to stabilizer states, to states with high stabilizer dimension.
\begin{lemma}\label{lemma:completeness_stabdim}
Suppose $t < n$ and let $\ket{\psi}$ be an $n$-qubit pure state. If the fidelity of an $n$-qubit state $\ket{\psi}$ with $\calS(n-t)$ is at least $\varepsilon_1$, then\footnote{We remark that  for $t=0$, this reduces to Fact~\ref{fact:lower_bound_stabilizer_fidelity} which comments on a lower bound on $\Exp_{x \sim q_\Psi}\left[ |\la \psi | W_x | \psi \ra|^2 \right]$ when the stabilizer fidelity of $\ket{\psi}$ is known to be high.}  
$$
\Exp_{x \sim q_\Psi}\left[ |\la \psi | W_x | \psi \ra|^2 \right] \geq 2^{-2t} \cdot \varepsilon_1^6. 
$$
\end{lemma}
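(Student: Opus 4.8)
The plan is to put $\ket\psi$ into a convenient normal form using Clifford invariance, restrict the relevant cube-sum over Pauli expectations to a structured (non-isotropic) subgroup, and then finish with a short chain of elementary inequalities. First I would record the closed form $\Exp_{x\sim q_\Psi}[|\la\psi|W_x|\psi\ra|^2]=2^{2n}\sum_{c\in\FF_2^{2n}}p_\Psi(c)^3$, which is immediate from Fact~\ref{fact:relation_qPsi_and_pPsi}. This quantity is invariant under applying any Clifford unitary to $\ket\psi$, since a Clifford permutes the Weyl labels by a symplectic bijection (the phases it introduces are killed by $|\cdot|^2$). Hence, given $\ket\phi\in\calS(n-t)$ with $|\la\phi|\psi\ra|^2\ge\varepsilon_1$, apply the Clifford $U$ produced by Lemma~\ref{lem:clifford_isotropic_subspace} mapping a size-$2^{n-t}$ isotropic subgroup stabilizing $\ket\phi$ onto $\la Z_{t+1},\dots,Z_n\ra$, followed by a Pauli $X$-correction; as in the structure used in Lemma~\ref{lem:state_isotropic_space_high_mass} this makes $\ket\phi=\ket\varphi\otimes\ket{0^{n-t}}$ for some $t$-qubit state $\ket\varphi$, while replacing $\ket\psi$ by a Clifford image with the same value of $\Exp_{q_\Psi}[\cdots]$. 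Writing $\ket\psi=\sum_{a\in\FF_2^{n-t}}\beta_a\ket{\phi_a}\otimes\ket a$ with each $\ket{\phi_a}$ a $t$-qubit state (chosen arbitrarily when $\beta_a=0$) and $\sum_a|\beta_a|^2=1$, the hypothesis becomes $|\beta_0|^2\,|\la\varphi|\phi_0\ra|^2\ge\varepsilon_1$, so in particular $|\beta_0|^2\ge\varepsilon_1$ — and this is the only consequence I will use.

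Next I would lower-bound $\sum_c p_\Psi(c)^3$ by keeping only $c$ in the subgroup $G=\calP^t\times\calP_Z^{n-t}\subseteq\FF_2^{2n}$ of Paulis acting as $I$ or $Z$ on the last $n-t$ qubits. For $c=(u,(0,s))$ with $u\in\FF_2^{2t}$ and $s\in\FF_2^{n-t}$, the block form of $\ket\psi$ gives $\la\psi|W_u\otimes Z^s|\psi\ra=g_u(s):=\sum_a|\beta_a|^2(-1)^{\la s,a\ra}\la\phi_a|W_u|\phi_a\ra$, so $p_\Psi(c)=2^{-n}|g_u(s)|^2$. Then I would chain: (i) the power-mean inequality over the $2^{n-t}$ values of $s$, $\sum_s|g_u(s)|^6\ge 2^{-2(n-t)}\big(\sum_s|g_u(s)|^2\big)^3$; (ii) Parseval on $\FF_2^{n-t}$, $\sum_s|g_u(s)|^2=2^{n-t}\sum_a|\beta_a|^4|\la\phi_a|W_u|\phi_a\ra|^2$; (iii) discarding all branches but $a=0$, $\ge 2^{n-t}|\beta_0|^4|\la\phi_0|W_u|\phi_0\ra|^2$; and (iv) summing over $u\in\FF_2^{2t}$ and applying power-mean again, now to the probability distribution $p_{\phi_0}$ on $\FF_2^{2t}$: $\sum_u|\la\phi_0|W_u|\phi_0\ra|^6=2^{3t}\sum_u p_{\phi_0}(u)^3\ge 2^{3t}\cdot 2^{-4t}=2^{-t}$. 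Tracking the powers of two yields $\sum_{c\in G}p_\Psi(c)^3\ge |\beta_0|^{12}/2^{2n+2t}$, hence
\[
\Exp_{x\sim q_\Psi}\big[|\la\psi|W_x|\psi\ra|^2\big]=2^{2n}\sum_c p_\Psi(c)^3\ \ge\ 2^{2n}\cdot\frac{|\beta_0|^{12}}{2^{2n+2t}}\ =\ \frac{|\beta_0|^{12}}{2^{2t}}\ \ge\ \frac{\varepsilon_1^6}{2^{2t}},
\]
which is the claimed bound.

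I do not anticipate a real obstacle; the only care needed is the bookkeeping of powers of two in the tensor split, and keeping clear why the \emph{non-isotropic} subgroup $G$ is the right object to restrict to: it is precisely the $\calP^t$ factor on the first $t$ qubits that produces the $2^{-t}$ loss in step (iv), which squares to the $2^{-2t}$ appearing in the statement, and for $t=0$ the whole argument degenerates to the known inequality $\calF_\calS(\ket\psi)\ge\eta\Rightarrow\Exp_{q_\Psi}[\cdots]\ge\eta^6$ of Theorem~\ref{fact:lower_bound_stabilizer_fidelity}. It is worth remarking that the $t$-qubit state $\ket\varphi$ ends up unused: the content of the lemma is that a single heavy branch in the computational-basis decomposition already forces $\Exp_{q_\Psi}[\cdots]\ge 2^{-2t}|\beta_0|^{12}$, regardless of how non-stabilizer that branch is.
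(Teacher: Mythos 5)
Your proof is correct and follows essentially the same route as the paper: restrict the cube-sum $2^{2n}\sum_c p_\Psi(c)^3$ to the subgroup $\calP^t\times\calP_Z^{n-t}$, apply a power-mean inequality (your two-stage version, over $s$ then over $u$, produces the same combined loss $2^{-2(n+t)}$ as the paper's single application on the $2^{n+t}$ terms), and evaluate the resulting $\ell^2$ mass via the block decomposition of $\ket{\psi}$, your per-$u$ Parseval step being exactly Proposition~\ref{prop:prob_mass_Pk_PZm}. You are a bit more careful than the written proof in making the Clifford reduction to $\ket{\phi}=\ket{\varphi}\otimes\ket{0^{n-t}}$ explicit, which the paper leaves implicit.
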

\begin{proof} 
From Fact~\ref{fact:relation_qPsi_and_pPsi}, we have that
\begin{align}
\Exp_{x \sim q_\Psi}\left[ |\la \psi | W_x | \psi \ra|^2 \right] = \frac{1}{2^n} \sum_{x \in \FF_2^{2n}} | \la \psi | W_x | \psi \ra|^6 &\geq \frac{1}{2^n} \sum_{W_y \in \calP^t, W_z \in \calP^{n-t}_Z} | \la \psi | W_y \otimes W_z | \psi \ra|^6 \nonumber \\
&\geq \frac{1}{2^{3n+2t}} \left[ \sum_{W_y \in \calP^t, W_z \in \calP^{n-t}_Z} | \la \psi | W_y \otimes W_z | \psi \ra|^2 \right]^3, \label{eq:lb_exp_weyl}
\end{align}
where the last inequality follows from using the relation between $p$-norms on real vectors $x \in \mathbb{R}^m$ via Holders inequality that states $\norm{x}_p \leq m^{1/p - 1/q} \norm{x}_q$, which we instantiate for $p=2$ and $q=6$. We note that the number of terms in the summation of the second inequality is of size $2^{n+t}$. We can write every $\ket{\psi} = \sum_{x \in \FF_2^{n-t}} \alpha_x \ket{\phi_x}\ket{x}$ where the amplitudes $\alpha_x \in \mathbb{C}$, $\ket{\phi_x}$ is a $t$-qubit state corresponding to the $(n-t)$-qubit computational basis state $\ket{x}$ and $\sum_x |\alpha_x|^2 = 1$. The given condition of the fidelity of $\ket{\psi}$ with $\calS(n-t)$ being at least $\varepsilon_1$ can be expressed as
\begin{equation}
\max_{x \in \FF_2^{n-t}} |\alpha_x|^2 = \max_{x \in \FF_2^{n-t}} | \la \psi | (\ket{\varphi_x} \otimes \ket{x} )|^2 \geq \varepsilon_1.
\end{equation}
Using Proposition~\ref{prop:prob_mass_Pk_PZm}, we then have
\begin{align*}
\sum_{W_y \in \calP^t, W_z \in \calP^{n-t}_Z} |\la \psi | W_y \otimes W_z | \psi \ra|^2 &= 2^n \sum_{a \in \FF_2^{n-t}} |\alpha_a|^4 
\geq  2^n \cdot (\max_{a \in \FF_2^{n-t}} |\alpha_a|^2)^2= 2^n \cdot \varepsilon_1^2,
\end{align*}
Substituting the above result into Eq.~\eqref{eq:lb_exp_weyl} gives us
$$
\Exp_{x \sim q_\Psi}\left[ |\la \psi | W_x | \psi \ra|^2 \right] \geq \frac{1}{2^{3n + 2t}} \left[ \sum_{W_y \in \calP^t, W_z \in \calP^{n-t}_Z} | \la \psi | W_y \otimes W_z | \psi \ra|^2 \right]^3 \geq \frac{1}{2^{3n + 2t}} \cdot 2^{3n} \varepsilon_1^6 = 2^{-2t} \varepsilon_1^6,
$$
which proves the desired result.
\end{proof}

\paragraph{Tolerant testing algorithm}
We now present the  problem of testing high stabilizer-dimension states in the tolerant framework and present our testing algorithm thereafter. 
\begin{problem}(Tolerant testing high stabilizer-dimension states)
Fix $0 \leq t \leq n$. Suppose an algorithm is given copies of an unknown $n$-qubit quantum state $\ket{\psi}$ promised
\begin{enumerate}[$(i)$]
    \item  $\ket{\psi}$ is $\varepsilon_1$-close to a a state in $\calS(n-t)$ in fidelity or
    \item  $\ket{\psi}$ is $\varepsilon_2$-far from all states in $\calS(n-t)$
\end{enumerate}
and the goal is to decide which is the case.
\end{problem}

Accompanied by the completeness result~(Lemma~\ref{lemma:completeness_stabdim}), we  show the following result.
\begin{theorem}
\label{thm:test_stab_dim_with_gowers}
Fix $\varepsilon_1 \in (0,1)$ and $t < n$. There is a constant $C > 1$ such that the following is true. There is an algorithm that given $\poly(2^{2t}/\varepsilon_1)$ copies of an $n$-qubit $\ket{\psi}$, can decide if $\max_{\ket{\phi}\in \calS(n-t)}|\langle\phi |\psi\rangle|^2 \geq  \varepsilon_1$ or $\max_{\ket{\phi}\in \calS(n-t)}|\langle\phi |\psi\rangle|^2 \leq  \varepsilon_2$ for every $\varepsilon_2 \leq \left(2^{-2t} \varepsilon_1^6\right)^C$ with gate complexity $n\cdot \poly(2^{2t}/\varepsilon_1)$. 
\end{theorem}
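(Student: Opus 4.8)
The plan is to combine the completeness direction (Lemma~\ref{lemma:completeness_stabdim}) with the soundness direction (the local inverse theorem, Theorem~\ref{thm:inverse_gowers3_stab_dim}), exactly as in the $t=0$ tolerant-testing argument of~\cite{ad2024tolerant}, but tracking the $2^{-2t}$ loss that enters the completeness bound. Concretely: the tester estimates the quantity $\Exp_{x \sim q_\Psi}[|\la \psi | W_x | \psi \ra|^2]$ to additive accuracy $\delta$, which by Lemma~\ref{lem:est_weyl_exp} costs $O(1/\delta^2)$ copies of $\ket{\psi}$ and $O(n/\delta^2)$ gates. We set the accuracy and decision threshold as a function of $\varepsilon_1$ and $t$ so that the YES and NO cases fall on opposite sides of the threshold with a gap.

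For the completeness case, if $\max_{\ket{\phi}\in\calS(n-t)}|\la\phi|\psi\ra|^2 \geq \varepsilon_1$, then Lemma~\ref{lemma:completeness_stabdim} gives $\Exp_{x\sim q_\Psi}[|\la\psi|W_x|\psi\ra|^2] \geq 2^{-2t}\varepsilon_1^6$. For the soundness case, I would argue the contrapositive: if $\Exp_{x\sim q_\Psi}[|\la\psi|W_x|\psi\ra|^2] \geq \gamma$, then by Theorem~\ref{thm:inverse_gowers3_stab_dim} there is a state of stabilizer dimension at least $n - \poly(1/\gamma)$ with fidelity $\geq \Omega(\gamma^{C'})$ with $\ket{\psi}$ (where $C'=728$ per the footnote there). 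The subtlety is that Theorem~\ref{thm:inverse_gowers3_stab_dim} as stated gives a state in $\calS(n - \poly(1/\gamma))$, not $\calS(n-t)$; to land in $\calS(n-t)$ I would instantiate $\gamma = 2^{-2t}\varepsilon_1^6$ and check that $\poly(1/\gamma) = \poly(2^{2t}/\varepsilon_1) \leq t$ in the regime of interest — or, more robustly, invoke the finer statement underlying Theorem~\ref{thm:inverse_gowers3_stab_dim} (via Theorem~\ref{thm:stabilizer_covering_group} and Corollary~\ref{lem:lower_bound_stab_dim_fidelity_gen}) to get the stabilizer dimension down to $n - O(\log(1/\gamma))$, and pick the constant $C$ so that $n - O(\log(1/\gamma)) \geq n - t$ whenever $\varepsilon_2 \leq (2^{-2t}\varepsilon_1^6)^C$. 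Either way, this yields: if $\max_{\ket{\phi}\in\calS(n-t)}|\la\phi|\psi\ra|^2 \leq \varepsilon_2 := (2^{-2t}\varepsilon_1^6)^C$, then $\Exp_{x\sim q_\Psi}[|\la\psi|W_x|\psi\ra|^2] < 2^{-2t}\varepsilon_1^6 / 2$ (absorbing constants into $C$).

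The tester is then: estimate $\Exp_{x\sim q_\Psi}[|\la\psi|W_x|\psi\ra|^2]$ to additive error $\delta = 2^{-2t}\varepsilon_1^6/4$, accept iff the estimate exceeds $2^{-2t}\varepsilon_1^6/2$. By the gap established above, this distinguishes the two cases with constant probability (boostable to $1-\delta'$ by $O(\log(1/\delta'))$ repetitions and majority vote). The copy complexity is $O(1/\delta^2) = O(2^{4t}/\varepsilon_1^{12}) = \poly(2^{2t}/\varepsilon_1)$ and the gate complexity is $O(n/\delta^2) = n\cdot\poly(2^{2t}/\varepsilon_1)$, matching the claim. The main obstacle I anticipate is not the estimation or the arithmetic but the bookkeeping in the soundness step: one must ensure the stabilizer-dimension parameter produced by the inverse theorem is genuinely $\geq n-t$ and not merely $n - \poly(1/\gamma)$ for a polynomial that could exceed $t$; resolving this cleanly requires using the $O(\log(1/\gamma))$-dimension-loss version (Theorem~\ref{thm:stabilizer_covering_group} plus Corollary~\ref{lem:lower_bound_stab_dim_fidelity_gen}) rather than the weaker stated Theorem~\ref{thm:inverse_gowers3_stab_dim}, and then choosing $C$ large enough that $\varepsilon_2 \leq (2^{-2t}\varepsilon_1^6)^C$ forces $O(\log(1/(2^{-2t}\varepsilon_1^6))) \leq t$ — i.e., the contrapositive only needs to be run in the parameter range where $\gamma$ is not too small relative to $2^{-t}$, which is exactly the range the theorem hypothesis $t<n$ together with $\varepsilon_2$'s definition carves out.
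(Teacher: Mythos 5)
Your completeness direction is exactly the paper's (Lemma~\ref{lemma:completeness_stabdim} plus the estimator of Lemma~\ref{lem:est_weyl_exp}), but the soundness step as you've written it does not close, and the fix you propose cannot work. You want to invoke the local inverse theorem (Theorem~\ref{thm:inverse_gowers3_stab_dim}, or its sharper form via Theorem~\ref{thm:stabilizer_covering_group} and Corollary~\ref{lem:lower_bound_stab_dim_fidelity_gen}) at a threshold $\gamma$ that must sit below the completeness floor $2^{-2t}\varepsilon_1^6$. With that $\gamma$, the witness state produced has stabilizer dimension only $n - c\log(1/\gamma)$ for some absolute constant $c \geq 1$; but $\log(1/\gamma) \geq \log\!\big(2^{2t}/\varepsilon_1^6\big) \geq 2t$, so the witness lands in $\calS(n - 2ct) \subsetneq \calS(n-t)$ for every $t \geq 1$. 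Increasing $C$ only shrinks $\varepsilon_2$; it has no effect on the magnitude of the estimation threshold $\gamma$, which is pinned down by the completeness bound, so it cannot push the dimension loss below $t$. The "parameter range where $\gamma$ is not too small relative to $2^{-t}$" you invoke at the end is precisely the range that the completeness bound rules out.

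The paper sidesteps this entirely by not using the local inverse theorem for soundness at all. Since every stabilizer state has stabilizer dimension $n \geq n-t$, one has the trivial inclusion $\calS \subseteq \calS(n-t)$ and hence $\calF_\calS(\ket{\psi}) \leq \calF_{\calS(n-t)}(\ket{\psi})$. In the NO case, $\calF_{\calS(n-t)}(\ket{\psi}) \leq \varepsilon_2$ therefore forces $\calF_\calS(\ket{\psi}) \leq \varepsilon_2$, and the plain (non-local) inverse theorem, Theorem~\ref{thm:inverse_weyl_exp_states}, in contrapositive form gives $\Exp_{x\sim q_\Psi}[|\la\psi|W_x|\psi\ra|^2] \leq O(\varepsilon_2^{1/C})$ with no dimension bookkeeping whatsoever. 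This is weaker than what you were trying to prove (it never produces a witness in $\calS(n-t)$) but it is all the tester needs, and it produces exactly the clean gap $O(\varepsilon_2^{1/C}) \ll 2^{-2t}\varepsilon_1^6$ once $\varepsilon_2 \leq (2^{-2t}\varepsilon_1^6)^C$. Your instinct to reach for the local inverse theorem (which is the harder new machinery in this section) is understandable given the problem is phrased over $\calS(n-t)$, but the monotonicity of stabilizer fidelity under inclusion of the class is what makes the simpler route go through here.
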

\begin{proof}
Let $\delta$ be a parameter that we fix at the end. The testing algorithm simply takes $O(1/\delta^2)$ copies of $\ket{\psi}$ and estimates $\Exp \limits_{x \sim q_\Psi}\left[|\la \psi | W_x | \psi \ra|^2 \right]$ up to additive error $\delta/2$. Observe that in the \emph{no} instance i.e.,  $\max_{\ket{\phi}\in \calS(n-t)}|\langle\phi |\psi\rangle|^2 \leq  \varepsilon_2$,  Theorem~\ref{thm:inverse_weyl_exp_states} implies that 
$$
\Exp \limits_{x \sim q_\Psi}\left[|\la \psi | W_x | \psi \ra|^2 \right] \leq O(\calF_\calS(\ket{\psi})^{1/C}) \leq O(\calF_{\calS(n-t)}(\ket{\psi})^{1/C}) = O(\varepsilon_2^{1/C}),
$$ 
where we used that $\calF_\calS(\ket{\psi}) \leq \calF_{\calS(n-t)}(\ket{\psi})$ for every $t\geq 0$ (i.e., stabilizer fidelity can only become grow for larger classes $\calS\subseteq \calS(n-t)$) and for some constant $C > 0$. In the \emph{yes} instance  i.e.,  $\max_{\ket{\phi}\in \calS(n-t)}|\langle\phi|\psi\rangle|^2~\geq~\varepsilon_1$, Lemma~\ref{lemma:completeness_stabdim} implies that
$$
\Exp \limits_{x \sim q_\Psi}\left[|\la \psi | W_x | \psi \ra|^2 \right] \geq  2^{-2t} \cdot \varepsilon_1^{6}.
$$
By letting 
$$
\delta=\frac{1}{10}\Big(2^{-2t} \cdot \varepsilon_1^{6} - O(\varepsilon_2^{1/C})\Big),
$$
the testing algorithm can distinguish between the \emph{yes} and \emph{no} instances of the problem using 
\begin{align*}
\poly(1/\delta)=\poly\Big(2^{-2t} \cdot\varepsilon_1^{6}- O(\varepsilon_2^{1/C})\Big)^{-1}=\poly(2^{2t}/\varepsilon_1)
\end{align*}
many copies of the unknown state~(Lemma~\ref{lem:est_weyl_exp}), where we used that $\varepsilon_2 \leq \left(2^{-2t} \varepsilon_1^6\right)^C$. The gate complexity of this protocol involves a factor-$n$ overhead in comparison to the sample complexity.
\end{proof}

\subsection{Learning algorithm}
\label{sec:learning_algo_improper_SC}
In this section, we now prove Theorem~\ref{thm:improper_SC} by algorithmizing the local inverse Gowers-$3$ theorem (Theorem~\ref{thm:inverse_gowers3_stab_dim}. Our algorithm will mostly use the subroutines discussed earlier in Section~\ref{sec:subroutines}. Particularly, our learning algorithm,  will only differ from the \Selfcorrection~algorithm in Theorem~Theorem~\ref{thm:restatementofselfcorr} in the step after having determined the subgroup $V$ which has high probability mass.
Overall, the learning algorithm has the following three-step approach
\begin{enumerate}
    \item First, given copies of $\ket{\psi}$ we use $\sample$ and the $\bsgtest$ to sample elements from $A^{(2)}$ which will also lie in $A^{(1)}$ with high probability. We proved this earlier in Section~\ref{sec:determinesubgroup}. 
    \item We next use the algorithmic $\PFR$ conjecture (Conjecture~\ref{conj:algopfrconjecture}) to give a polynomial-time procedure that uses samples from the first step to find a subgroup $V$ whose weight is high (i.e., $\Exp_{x\in V}[2^np_\Psi(x)]\geq \gamma$) and furthermore $|V|\sim 2^n\cdot \poly(\gamma)$. We proved this earlier in Section~\ref{sec:succintgroup}.
    \item Given the subgroup $V$ and copies of $\ket{\psi}$, using the $\symgramschmidt$ procedure and other subroutines, we find an $n$-qubit state $\ket{\phi} \in \calS(n-k)$ with $k \leq O(\log(1/\gamma))$ such that $|\langle \phi|\psi\rangle|^2\geq \poly(\gamma)$. This is where we deviate from the \Selfcorrection~algorithm and we will prove this part in the current section. Putting these steps together will prove Theorem~\ref{thm:improper_SC}.
\end{enumerate}

We now move on to proving the third item from above, which is stated formally below.
\begin{theorem}\label{thm:analysis_algo_stab_dim_state}
Let $\ket{\psi}$ be an $n$-qubit state with $\Exp_{x \sim q_\Psi}[2^n p_\Psi(x)] \geq \gamma$ and $W \subseteq \FF_2^{2n}$ be a subgroup such that $\Exp_{x \in W}[2^n p_\Psi(x)] \geq \Omega(\gamma^{C_1})$ (for a  constant $C_1 > 1$). Algorithm~\ref{algo:find_good_stab_dim} outputs a $\ket{\phi} \in \calS(n-k)$ with $k \leq O(\log(1/\gamma))$ such that $|\la \phi| \psi\ra|^2 \geq \Omega(\gamma^{C_1})$ with probability at least $1-\delta$, consuming 
$
\widetilde{O}\left(\poly(1/\gamma) \cdot \log (1/\delta)\right)
$ copies of $\ket{\psi}$ and 
$
\widetilde{O}\left(n^3 + n^2 \cdot \poly(1/\gamma) \cdot \log(1/\delta) \right)
$ total number of gates.
\end{theorem}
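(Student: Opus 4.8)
We are given the subgroup $W$ with $\Exp_{x\in W}[2^np_\Psi(x)]\ge\Omega(\gamma^{C_1})$, and the plan is to produce the desired state from $W$ by a variant of the final stage of the proper \Selfcorrection{} algorithm (Algorithm~\ref{algo:find_good_stab}), replacing the discrete search over MUB stabilizer states by a direct tomographic extraction of the $O(\log(1/\gamma))$-qubit ``core'' of $\ket{\psi}$. Concretely, Algorithm~\ref{algo:find_good_stab_dim} would proceed as follows. (1) Run \symgramschmidt{} on a generating set of $W$ and feed its output into Claim~\ref{claim:clifford_synthesis_subgroup} to obtain, in $O(n^3)$ time, a Clifford $U$ and integers $k,m$ with $U W U^\dagger=\calP^k\times\calP^m_\calZ$; by Theorem~\ref{thm:stabilizer_covering_group}, $k\le O(\log(1/\gamma))$ and $m\le n-k$. (2) Write $U\ket{\psi}=\sum_{z\in\FF_2^{n-k}}\alpha_z\ket{\phi_z}\ket{z}$, so measuring the last $n-k$ qubits of $U\ket{\psi}$ in the computational basis returns $z$ with probability $|\alpha_z|^2$; repeat this $N_r=\poly(1/\gamma)\log(1/\delta)$ times and let $z^\star$ be the most frequent outcome. (3) Prepare copies of the $k$-qubit post-selected state $\ket{\varphi_{z^\star}}:=\ket{\phi_{z^\star}}$ (measure the last $n-k$ qubits of $U\ket{\psi}$, keep the first $k$ conditioned on the outcome being $z^\star$), run the single-copy tomography of Lemma~\ref{lem:state_tomo} on it to constant trace distance, and let $\ket{\widehat{\varphi}_{z^\star}}$ be the leading eigenvector of the estimate. (4) Output $\ket{\phi}=U^\dagger\bigl(\ket{\widehat{\varphi}_{z^\star}}\otimes\ket{z^\star}\bigr)$.

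For correctness, note first that since $W$ is, up to the Clifford $U$, of the canonical form $\calP^k\times\calP^m_\calZ\subseteq\calP^k\times\calP^{n-k}_\calZ$, Corollary~\ref{lem:lower_bound_stab_dim_fidelity_gen} --- which is Proposition~\ref{prop:prob_mass_Pk_PZm} applied to $U\ket{\psi}$ --- gives $\max_z|\alpha_z|^2\ge\sum_z|\alpha_z|^4\ge\sum_{x\in W}p_\Psi(x)$, and the \emph{true} argmax $z_0$ yields $U^\dagger(\ket{\phi_{z_0}}\otimes\ket{z_0})\in\calS(n-k)$ with fidelity $\ge\sum_{x\in W}p_\Psi(x)$ with $\ket{\psi}$. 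Combined with the fact that the subgroup $W$ this is invoked on is of near-full size ($|W|\ge\poly(\gamma)\,2^n$, by Theorem~\ref{thm:analysis_algo_V}), the hypothesis translates into $\max_z|\alpha_z|^2\ge\sum_{x\in W}p_\Psi(x)=\tfrac{|W|}{2^n}\Exp_{x\in W}[2^np_\Psi(x)]\ge\Omega(\gamma^{C_1})=:\alpha$ (folding the polynomial factors into the exponent). Since there are at most $2/\alpha$ outcomes $z$ with $|\alpha_z|^2\ge\alpha/2$, a Hoeffding bound (Fact~\ref{fact:hoeffding_sampling}) together with a union bound over these ``heavy'' $z$ shows that, for $N_r=\poly(1/\gamma)\log(1/\delta)$, the empirical argmax $z^\star$ has $|\alpha_{z^\star}|^2\ge\alpha/2=\Omega(\gamma^{C_1})$ with probability $\ge 1-\delta/3$. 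The post-selection onto $z^\star$ in step (3) then succeeds with probability $|\alpha_{z^\star}|^2=\Omega(\gamma^{C_1})$, so $\poly(1/\gamma)\log(1/\delta)$ copies of $\ket{\psi}$ suffice for Lemma~\ref{lem:state_tomo} (tomography on $k=O(\log(1/\gamma))$ qubits) to output $\ket{\widehat{\varphi}_{z^\star}}$ with $|\la\widehat{\varphi}_{z^\star}|\varphi_{z^\star}\ra|^2\ge 3/4$. Finally, since $\la\widehat{\varphi}_{z^\star}\otimes z^\star|U|\psi\ra=\alpha_{z^\star}\la\widehat{\varphi}_{z^\star}|\varphi_{z^\star}\ra$, the output obeys $|\la\phi|\psi\ra|^2=|\alpha_{z^\star}|^2\,|\la\widehat{\varphi}_{z^\star}|\varphi_{z^\star}\ra|^2\ge\Omega(\gamma^{C_1})$, and it lies in $\calS(n-k)\subseteq\calS(n-O(\log(1/\gamma)))$ because $\ket{\widehat{\varphi}_{z^\star}}\otimes\ket{z^\star}$ is stabilized by the Abelian group of $2^{n-k}$ Paulis generated by $\{(-1)^{z^\star_j}Z_j:j=k+1,\dots,n\}$ and conjugation by $U^\dagger$ preserves stabilizer dimension.

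The complexity is routine bookkeeping: step (1) costs $O(n^3)$ time and produces $U$ with $O(n^2)$ gates; steps (2)--(3) consume $\widetilde{O}(\poly(1/\gamma)\log(1/\delta))$ copies of $\ket{\psi}$ and, since each round first applies $U$ (costing $O(n^2)$ gates) before measuring, $\widetilde{O}(n^2\poly(1/\gamma)\log(1/\delta))$ additional gates, the tomography on $O(\log(1/\gamma))$ qubits contributing only a further $\poly(1/\gamma)\log(1/\delta)$ time; summing gives the stated $\widetilde{O}(\poly(1/\gamma)\log(1/\delta))$ copies and $\widetilde{O}(n^3+n^2\poly(1/\gamma)\log(1/\delta))$ gates. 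I do not expect a genuine obstacle: the structural heart --- that a high-mass subgroup of canonical shape $\calP^k\times\calP^m_\calZ$ forces a nearby state of stabilizer dimension $\ge n-k$ --- is already supplied by Lemma~\ref{lem:lower_bound_stab_dim_fidelity}/Corollary~\ref{lem:lower_bound_stab_dim_fidelity_gen} together with Theorem~\ref{thm:stabilizer_covering_group}. The one slightly delicate point is ensuring that the empirical argmax $z^\star$ is a genuinely heavy outcome rather than a statistical artefact, which is handled by Hoeffding plus the union bound over the $\le 2/\alpha$ heavy $z$; beyond that, the only care needed (as throughout this paper) is keeping the various polynomial exponents compatible, so that the tomography loss, the $|\alpha_{z^\star}|^2$ factor, and the bound on $\sum_{x\in W}p_\Psi(x)$ compose into a single power of $\gamma$.
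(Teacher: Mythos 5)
Your proposal follows the same high-level plan as the paper's proof of this theorem: run \symgramschmidt{} plus Claim~\ref{claim:clifford_synthesis_subgroup} to bring $W$ into the canonical form $\calP^k\times\calP^m_\calZ$ with $k\le O(\log(1/\gamma))$, use Corollary~\ref{lem:lower_bound_stab_dim_fidelity_gen} (via Proposition~\ref{prop:prob_mass_Pk_PZm}) to show some $z^*$ has $|\alpha_{z^*}|^2\ge\Omega(\gamma^{C_1})$, locate a heavy $z^\star$ by measuring the last $n-k$ qubits of $U\ket{\psi}$, and then recover the $k$-qubit conditional state on the first $k$ qubits. The two places you diverge are genuinely different choices but both fine in spirit: $(i)$ you recover the conditional state by post-selection plus the generic tomography of Lemma~\ref{lem:state_tomo}, whereas the paper invokes the packaged Lemma~\ref{lem:ST_small_state}; $(ii)$ you choose $z^\star$ as the empirical \emph{argmax} over $N_r$ computational-basis measurements, whereas the paper merely requires that the sample set \emph{contains} a heavy $z$, then estimates $\Tr(\la 0^{n-k}|U_i\Psi U_i^\dagger|0^{n-k}\ra)$ for each sampled $z_i$ by the classical-shadow estimator of Lemma~\ref{lem:shadows_stab_dim} and picks the largest estimate.

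Item $(ii)$ contains a real gap in the justification you give. You argue that ``a Hoeffding bound together with a union bound over these $\le 2/\alpha$ heavy $z$'' forces the empirical argmax to satisfy $|\alpha_{z^\star}|^2\ge\alpha/2$. But controlling the empirical frequencies of the heavy $z$'s does not rule out the argmax landing on a \emph{light} $z$: with $2^{n-k}$ light outcomes available, one of them might spuriously accumulate as many hits as $z_0$ does. The conclusion is in fact true with $N_r=\poly(1/\gamma)\log(1/\delta)$, but proving it requires an argument you did not supply, e.g. splitting light outcomes into ``super-light'' ($q_z\le c\alpha$, where one bounds $\sum_z\binom{N_r}{t}q_z^t\le\binom{N_r}{t}(c\alpha)^{t-1}\sum_z q_z$ and the union cost is absorbed by the decaying geometric factor) and ``moderately light'' ($q_z\in[c\alpha,\alpha/2)$, only $O(1/\alpha)$ of them, handled by a standard Chernoff union bound). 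Hoeffding over heavy outcomes alone does not establish the claim. The paper's device of sampling $z_i$'s and then \emph{directly estimating} each post-selection probability via Lemma~\ref{lem:shadows_stab_dim} is cleaner precisely because it avoids any argmax concentration over an exponentially large alphabet: one only needs the sample to contain one heavy $z$ (probability $\ge 1-\delta$ once $N_r\alpha=\Omega(\log(1/\delta))$), and then the $M=N_r$ shadow estimates each carry error $\varepsilon$, so the maximizer is heavy. Apart from this, your structural use of Corollary~\ref{lem:lower_bound_stab_dim_fidelity_gen} together with $|W|\ge\poly(\gamma)2^n$, and the complexity bookkeeping, match the paper.
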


To prove this, we use the following lemma from \cite{chen2024stabilizer}, stated for pure quantum states.
\begin{lemma}[{\cite[Lemma~7.3]{chen2024stabilizer}}]\label{lem:ST_small_state}
Let $\uptau,\varepsilon,\delta > 0$, $t \in \mathbb{N}$, $\ket{\psi}$ be an $n$-qubit state.  Let $U$ be a Clifford  such that $U\ket{\psi}=\sum_x \alpha_x \ket{x}$ satisfies $\alpha:=\sum_{y\in \FF_2^t}|\alpha_{0^{n-t},y}|^2\geq \tau$. Given access to copies of $\ket{\psi}$, there exists an algorithm that outputs the density matrix of a state $\ket{\sigma_0}$ such that $|\langle\psi' | \sigma_0\rangle|^2  \geq 1-\varepsilon$ with probability at least $1-\delta$, where $\ket{\psi'}=\frac{1}{\sqrt{\alpha}}\sum_{y\in \FF_2^t}\alpha_{0^{n-t},y}\ket{y}$. The algorithm performs $2^{O(t)}\log(1/\delta)/(\varepsilon^2 \uptau)$ single-copy measurements on $\rho$ and takes $2^{O(t)}n^2 \log(1/\delta)/(\varepsilon^2\uptau)$ time.
\end{lemma}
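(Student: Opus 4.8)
The plan is the natural two-stage one: post-select on a computational-basis measurement of the first $n-t$ qubits of $U\ket{\psi}$ to manufacture exact copies of $\ket{\psi'}$, and then apply single-copy state tomography to those $t$-qubit copies.

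\emph{Step 1: manufacturing copies of $\ket{\psi'}$.} Apply the given Clifford $U$ to a copy of $\ket{\psi}$ and measure the first $n-t$ qubits in the computational basis. Since $U\ket{\psi}=\sum_x\alpha_x\ket{x}$, the outcome $0^{n-t}$ occurs with probability exactly $\alpha=\sum_{y\in\FF_2^t}|\alpha_{0^{n-t},y}|^2\ge\uptau$, and conditioned on it the remaining $t$ qubits are left \emph{exactly} in $\ket{\psi'}=\frac{1}{\sqrt{\alpha}}\sum_{y\in\FF_2^t}\alpha_{0^{n-t},y}\ket{y}$, with no approximation. Thus each copy of $\ket{\psi}$ yields a fresh i.i.d.\ copy of $\ket{\psi'}$ with probability $\ge\uptau$; a Chernoff bound on the number of successes in $N$ independent trials (a $\mathrm{Bin}(N,\alpha)$ random variable) shows that $N=O(m/\uptau+\log(1/\delta)/\uptau)$ copies of $\ket{\psi}$ suffice to produce at least $m$ copies of $\ket{\psi'}$ with probability $\ge1-\delta/2$.

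\emph{Step 2: tomography and the output.} Feed these $m$ copies of the $t$-qubit state $\ket{\psi'}$ into the single-copy tomography algorithm of Lemma~\ref{lem:state_tomo} (instantiated for $t$ qubits), which with $m=O(2^{4t}t\log(1/\delta)/\varepsilon^2)$ copies outputs a density matrix $\widehat{\Psi}$ with $d_{\mathrm{tr}}(\ketbra{\psi'}{\psi'},\widehat{\Psi})\le\varepsilon$ with probability $\ge1-\delta/2$. The bound $1-\langle\psi'|\widehat{\Psi}|\psi'\rangle\le d_{\mathrm{tr}}(\ketbra{\psi'}{\psi'},\widehat{\Psi})$ already gives fidelity $\langle\psi'|\widehat{\Psi}|\psi'\rangle\ge1-\varepsilon$; if a pure estimate $\ket{\sigma_0}$ is wanted, take the top eigenvector of $\widehat{\Psi}$, for which $|\langle\psi'|\sigma_0\rangle|^2\ge\langle\psi'|\widehat{\Psi}|\psi'\rangle-(1-\lambda_1)\ge1-2\varepsilon$ where $\lambda_1\ge\langle\psi'|\widehat{\Psi}|\psi'\rangle$ is the top eigenvalue (then rescale $\varepsilon$ by a constant). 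A union bound over the two failure events gives overall success probability $\ge1-\delta$.

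\emph{Step 3: accounting.} The total number of copies of $\ket{\psi}$ used is $O(2^{4t}t\log(1/\delta)/(\varepsilon^2\uptau))=2^{O(t)}\log(1/\delta)/(\varepsilon^2\uptau)$. For the running time, each copy incurs one implementation of the $n$-qubit Clifford $U$ ($O(n^2)$ gates) and a measurement; the tomography post-processing of Lemma~\ref{lem:state_tomo} costs $O(2^{4t}t^2\log(1/\delta)/\varepsilon^2)$ and the optional eigendecomposition of the $2^t\times2^t$ matrix $\widehat{\Psi}$ costs $2^{O(t)}$, and since $\uptau\le1$ and $t^2,2^t\le2^{O(t)}$ all these are absorbed into $2^{O(t)}n^2\log(1/\delta)/(\varepsilon^2\uptau)$, matching the claim. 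The argument is essentially routine; the only points needing care are that the post-selected register carries \emph{exactly} $\ket{\psi'}$ (so the copies fed to tomography are unbiased), the post-selection overhead (handled by the concentration bound above), and the conversion from the trace-distance guarantee of Lemma~\ref{lem:state_tomo} to the stated fidelity guarantee via $1-F\le d_{\mathrm{tr}}$ — none of which constitutes a genuine obstacle, the main bookkeeping being that the $2^{O(t)}$ factors from $t$-qubit tomography and post-processing stay within the advertised bound.
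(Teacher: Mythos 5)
The paper does not prove this lemma itself — it is imported verbatim from \cite[Lemma~7.3]{chen2024stabilizer} — so there is no in-paper proof to compare against. Your argument (post-select on outcome $0^{n-t}$ after applying $U$, which succeeds with probability exactly $\alpha\geq\uptau$ and leaves the last $t$ qubits exactly in $\ket{\psi'}$; then run $t$-qubit single-copy tomography and extract the top eigenvector, converting trace distance to fidelity via $1-\langle\psi'|\widehat\Psi|\psi'\rangle\leq d_{\mathrm{tr}}$) is the standard route to this statement, and all the steps, including the eigenvector bound $|\langle\psi'|\sigma_0\rangle|^2\geq\langle\psi'|\widehat\Psi|\psi'\rangle-(1-\lambda_1)$ and the complexity accounting, check out.
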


\begin{lemma}[Folklore]\label{lem:fidelity_prod_state}
Let $t \in \mathbb{N}$, $z \in \{0,1\}^{n-t}$. Let $\ket{\phi} = \ket{\sigma} \otimes \ket{z}$ and $\ket{\psi}$ be two $n$-qubit states. Denote $\Psi' = \la z | \Psi| z \ra/ \Tr(\la z | \Psi | z \ra)$. We have
$$
|\la \psi | \phi \ra|^2 = \Tr(\la z |\Psi | z\ra) \cdot \la \sigma |\Psi'|\sigma\ra \leq \Tr(\la z | \Psi | z \ra).
$$
\end{lemma}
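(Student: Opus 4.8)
The statement is a short direct computation, so the plan is close to the proof itself. First I would rewrite the squared overlap as a matrix element of the rank-one density matrix $\Psi=\ketbra{\psi}{\psi}$. Using $\ket{\phi}=\ket{\sigma}\otimes\ket{z}$ on the bipartition into the first $t$ qubits and the last $n-t$ qubits,
\[
|\la \psi | \phi \ra|^2 = \la \phi | \Psi | \phi \ra = (\bra{\sigma}\otimes\bra{z})\,\Psi\,(\ket{\sigma}\otimes\ket{z}) = \bra{\sigma}\big(\la z | \Psi | z\ra\big)\ket{\sigma},
\]
where $\la z | \Psi | z\ra$ is the (unnormalized, positive semidefinite) $t$-qubit operator obtained by contracting the last $n-t$ qubits of $\Psi$ against $\ket{z}$ on both sides.

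Next I would substitute the definition $\la z | \Psi | z\ra = \Tr(\la z | \Psi | z\ra)\cdot \Psi'$, which is valid whenever $\Tr(\la z | \Psi | z\ra)>0$; when it vanishes, $\la z | \Psi | z\ra$ is the zero operator by positive semidefiniteness, so both sides of the claimed identity are $0$ and the convention is to read the right-hand side as $0$ despite $\Psi'$ being undefined. In the nondegenerate case this gives
\[
|\la \psi | \phi \ra|^2 = \Tr(\la z | \Psi | z\ra)\cdot \bra{\sigma}\Psi'\ket{\sigma},
\]
which is the asserted equality.

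Finally, for the inequality I would observe that $\Psi'$ is a density matrix, hence $\bra{\sigma}\Psi'\ket{\sigma}\le 1$, while $\Tr(\la z | \Psi | z\ra)\ge 0$ (it is the probability of measuring outcome $z$ on the last $n-t$ qubits of $\ket{\psi}$ in the computational basis); multiplying the two bounds yields $|\la \psi | \phi \ra|^2 \le \Tr(\la z | \Psi | z\ra)$. There is no genuine obstacle in this lemma; the only point needing a sentence of care is the degenerate case $\Tr(\la z | \Psi | z\ra)=0$, handled as above.
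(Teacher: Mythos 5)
Your proof is correct. The paper labels this lemma ``Folklore'' and gives no proof, so there is nothing to compare against; your computation (rewriting $|\la\psi|\phi\ra|^2$ as $\la\sigma|\,\la z|\Psi|z\ra\,|\sigma\ra$, factoring out the trace, and bounding $\la\sigma|\Psi'|\sigma\ra\le 1$) is exactly the standard argument, and your handling of the degenerate case $\Tr(\la z|\Psi|z\ra)=0$ is a sensible clarification.
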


\begin{proof}[Proof of Theorem~\ref{thm:analysis_algo_stab_dim_state}]
Before giving the algorithm that proves the theorem, we first give an outline of the algorithm. We start off in a similar manner as we did in Theorem~\ref{thm:analysis_algo_stab_state}. Given the subgroup $V$ in terms of its basis, we first use Claim~\ref{claim:clifford_synthesis_subgroup}, which in turn uses $\symgramschmidt$, to determine the Clifford unitary $U$~(Fact~\ref{fact:clifford_action_on_group}) that transforms $V$ to the form of $\calP^k \otimes \calP^{m}_{\calZ}$,~i.e.,
$$
U V U^\dagger = \la Z_1, X_1, \ldots, Z_k, X_k, Z_{k+1}, Z_{k+2}, \ldots, Z_{k+m} \ra = \calP^k \times \calP_\calZ^m,
$$
where the values of $k$ and $m \leq n-k$ are determined as part of Claim~\ref{claim:clifford_synthesis_subgroup}. This consumes $O(n^3)$ time and $U$ has a gate complexity of $O(n^2)$. Furthermore, Theorem~\ref{thm:stabilizer_covering_group} guarantees that $k \leq \log(\gamma^{-C})$. We complete $W = UVU^\dagger$ to the full $n$-qubit Pauli group obtain $W'$:
$$
W' = \la Z_1, X_1, \ldots, Z_k, X_k, Z_{k+1}, Z_{k+2}, \ldots, Z_{k+m}, \ldots, Z_n \ra = \calP^k \times \calP_\calZ^{n-k}.
$$
From Corollary~\ref{lem:lower_bound_stab_dim_fidelity_gen} and proof of Theorem~\ref{thm:inverse_gowers3_stab_dim}, we know that there exists an $n$-qubit state $\in \calS(n-O(\log(1/\gamma))$ of the from $\ket{\sigma_{z^*}} \otimes \ket{{z^*}}$, where $\ket{z^*}$ is an $(n-k)$ computational basis state and $\ket{\sigma_{x^\star}}$ is a $k$-qubit state, such that $|\la \psi| U^\dagger (\ket{\sigma_{z^*}} \otimes \ket{z^*})|^2 \geq \poly(\gamma)$. In other words,
\begin{align}
\label{eq:genstate_and_z}
\max_{x\in \{0,1\}^{n-k}} |\la \psi|U^\dagger (\ket{\sigma_z} \otimes \ket{z})|^2\geq \poly(\gamma).
\end{align}
The above also implies that
\begin{equation}\label{eq:opt_stab_dim_state}
\ket{\sigma_{z^*}} \otimes \ket{{z^*}} = \argmax_{z\in \{0,1\}^{n-k}} |\la \psi|U^\dagger (\ket{\sigma_z} \otimes \ket{z})|^2.
\end{equation}
To determine this state $\ket{\sigma_{z^*}} \otimes \ket{{z^*}}$, we will use the following algorithm.
\begin{myalgorithm} \setstretch{1.35}
\begin{algorithm}[H]
    \setlength{\baselineskip}{1.4em} 
    \DontPrintSemicolon 
    \caption{\textsf{Find-High-Stab-Dim}($V,\gamma,\delta$)} \label{algo:find_good_stab_dim}
    \KwInput{Access to copies of $\ket{\psi}$ with $\Exp_{x\sim q_\Psi}\left[ |\la \psi | W_x | \psi \ra|^2 \right]\geq \gamma$, basis of subgroup $V$ such that $\sum_{x \in V} [2^n p_\Psi(x)] \geq \poly(\gamma)$ (as in Theorem~\ref{thm:analysis_algo_stab_state}), failure probability $\delta$.}
    \KwOutput{Output $(n-k)$ stabilizer-dimension state $\ket{\phi}$ such that $|\la \phi | \psi \ra|^2 \geq \Omega(\gamma^C)$}
    \vspace{2mm}
    Run $\symgramschmidt$, that on input the basis for $V$, outputs its centralizer $C_V$ (basis for $V\cap V^\perp$) and anti-commutant $A_V$ (basis for $V \backslash \la C_V \ra$). \\ 
    Obtain Clifford circuit $U$ and integers $m,k\geq 1$ such that $UVU^\dagger = \calP^k \times \calP_{\calZ}^m$ using Claim~\ref{claim:clifford_synthesis_subgroup} on inputs of $C_V$ and $A_V$ 
   \label{algostep_improper:find_clifford}\\
    Set $W \leftarrow UVU^\dagger$ and extend $W$ to $\widetilde{W} = \calP^k \times \calP_{\calZ}^{n-k}$ \\
    Let $N_r=\poly(1/\gamma)$ and initialize empty list $\calL \leftarrow \emptyset$.\\
    \For{$i=1:N_r$}{ \label{algostep_improper:sample_good_choices_z}
        Measure the last $(n-k)$ qubits of $U \ket{\psi}$ in the computational basis. Let the $(n-k)$-bit string be $z_i$. \\
        Set $U_i$ to be the Clifford unitary $X^{z_i} U$. \\
        Update $\calL \leftarrow \calL \cup \{U_i\}$
    }
    Estimate $\Tr(\la 0^{n-k} | U_i \Psi U_i^\dagger | 0^{n-k}\rangle)$ within error $\varepsilon=\poly(\gamma)$ for each $U_i \in \calL$ using the classical shadows protocol of Lemma~\ref{lem:shadows_stab_dim}. Let $U_{i^\star}$ be the one with the maximum estimate. \label{algostep_improper:fidelity_shadows} \\
    Use Lemma~\ref{lem:ST_small_state} with sample complexity $2^{O(k)}\log(1/\delta)/\poly(\gamma)$ to learn a $k$-qubit state $\ket{\sigma}$ with the Clifford unitary set to $U_{i^\star}$ such that $|\la \psi | U_{i^\star}^\dagger (\ket{\sigma} \otimes \ket{0^{n-k}})|^2 \geq \Omega(\gamma^C)$. \label{algostep_improper:ST} \\
    \Return $U_{i^\star}^\dagger(\ket{\sigma} \otimes \ket{0^{n-k}})$
\end{algorithm}
\end{myalgorithm}
Steps $1-3$ are as we defined before the algorithm. We now discuss  Lines~\ref{algostep_improper:sample_good_choices_z}-\ref{algostep_improper:ST}, Algorithm~\ref{algo:find_good_stab_dim}. Recall that the goal is to obtain a product state $\ket{\sigma_z} \otimes \ket{z}$ that achieves the guarantee of Eq.~\ref{eq:genstate_and_z}. We observe that by measuring the last $(n-k)$ qubits of $U \ket{\psi}$ in the computational basis, we obtain the $(n-k)$-bit binary string $m \in \{0,1\}^{n-k}$ with probability
$$
\Pr[m = z^*] = \max_{z \in \FF_2^{n-k}} |\la \psi | U^\dagger(\ket{\sigma_z} \otimes \ket{z})|^2 \geq \poly(\gamma),
$$
where we have used Eq.~\eqref{eq:opt_stab_dim_state}. So by measuring the last $(n-k)$ qubits of $U \ket{\psi}$ in the computational basis $N_r = O(\poly(1/\gamma)\log(1/\delta))$ many times, we ensure the collected strings $\{z_i\}_{i \in N_r}$ contains the optimal choice of $z^*$. To determine the optimal one, we use the following procedure. Let $\varepsilon \in (0,1)$ be an error parameter to be decided later. We consider the Clifford $X^{z_i}$ corresponding to each $z_i$ that would have mapped $\ket{z_i}$ to $\ket{0^{n-k}}$ and then utilize Lemma~\ref{lem:shadows_stab_dim} with error set to $\varepsilon/6$ as indicated in Line~\ref{algostep_improper:fidelity_shadows}, Algorithm~\ref{algo:find_good_stab_dim}, to estimate $\Tr(\la 0^{n-k}|U_i \Psi U_i^\dagger | 0^{n-k}\ra)$ with $U_i = X^{z_i} U$ for all $i \in [N_r]$. We then output the $U_i$ with the largest estimated value, which we set to be $U_{i^*}$. This satisfies 
$$
\Tr(\la 0^{n-k}|U_{i^*} \Psi U_{i^*}^\dagger | 0^{n-k}\ra) \geq \max_{z\in \{0,1\}^{n-k}} |\la \psi|U^\dagger (\ket{\sigma_z} \otimes \ket{z})|^2 - 2\varepsilon/3.
$$ 
We then apply Lemma~\ref{lem:ST_small_state} with error set to  $\varepsilon/3$ to learn the corresponding $k$-qubit state $\ket{\sigma}$  such that $\la \sigma | \Psi'|\sigma \ra \geq 1 - \varepsilon/3$, where $\Psi' = \la 0^{n-k}|U_{i^*} \Psi U_{i^*}^\dagger | 0^{n-k}\ra/\Tr(\la 0^{n-k}|U_{i^*} \Psi U_{i^*}^\dagger | 0^{n-k}\ra)$ Then, Lemma~\ref{lem:fidelity_prod_state} implies that
\begin{align*}
|\la \psi | U_{i^*}^\dagger (\ket{\sigma} \otimes \ket{z_{i^*}})|^2 
&= \Tr(\la 0^{n-k} | U_{i^*} \Psi U_{i^*}^\dagger|0^{n-k}\ra) \cdot |\la \psi'|\sigma\ra|^2 \\
&\geq (\max_{z\in \{0,1\}^{n-k}} |\la \psi|U^\dagger (\ket{\sigma_z} \otimes \ket{z})|^2 - 2\varepsilon/3)(1-\varepsilon/3) \\
&\geq \poly(\gamma) - \varepsilon,    
\end{align*}
where we have used that $\max_{z\in \{0,1\}^{n-k}} |\la \psi|U^\dagger (\ket{\sigma_z} \otimes \ket{z})|^2 \geq \poly(\gamma)$ is guaranteed by Theorem~\ref{thm:inverse_gowers3_stab_dim}. Setting $\varepsilon = \poly(\gamma)/100$ gives us the desired result.

Collection of $\{z_i\}$ required $O(\poly(1/\gamma)\log(1/\delta))$ copies of $\ket{\psi}$ and $O(n^2 \poly(1/\gamma)\log(1/\delta))$ time as $U$ has $O(n^2)$ gate complexity. Utilizing Lemma~\ref{lem:shadows_stab_dim} required sample complexity $O(1/\varepsilon^2 \cdot \poly(1/\gamma) \log(1/(\delta\gamma)))$ and $O(n^2/\varepsilon^2 \cdot \poly(1/\gamma) \log(1/(\delta\gamma)))$ time. Finally, applying Lemma~\ref{lem:ST_small_state} utilized a sample complexity of $O(1/\varepsilon^2 \poly(1/\gamma) \log(1/\delta))$ sample complexity and $O(n^2/\varepsilon^2 \poly(1/\gamma) \log(1/\delta))$ time as $k \leq O(\log(1/\gamma))$. Finally, we output $U_{i^\star}^\dagger(\ket{\sigma} \otimes \ket{0^{n-k}})$ by classical simulation using the fact that we have the classical description of $\ket{\sigma}$ in hand and a description of $U_{i^\star}$. This requires $O(n^2 \poly(1/\gamma))$ time~\cite{aaranson2004improved}.

Noting we set $\varepsilon=\poly(\gamma)$, the overall time complexity is
$$
\widetilde{O}\left(n^3 + n^2 \cdot \poly(1/\gamma) \cdot \log(1/\delta) \right),
$$
with main contributions due to cost of obtaining $U$, Lemma~\ref{lem:shadows_stab_dim} and Lemma~\ref{lem:ST_small_state}.
\end{proof}

The proof of Theorem~\ref{thm:improper_SC} then follows from the application of Theorem~\ref{thm:analysis_algo_stab_dim_state} in place of the last step of the proof of Theorem~\ref{thm:restatementofselfcorr}.

\newpage
\part{Learning structured decompositions}
In this section, we discuss how self-correction can be used as a subroutine to learn a \emph{structured} decomposition of an arbitrary $n$-qubit state $\ket{\psi}$ as follows:
\begin{equation}
\label{eq:structuredpart}
\ket{\psi} = \underbrace{\sum_{t=1}^k \beta_t \ket{\phi_t}}_{\text{rank-$k$ stabilizer decomposition}} + \alpha \underbrace{\ket{\phi_R}}_{\text{unstructured}},
\end{equation}
where $\ket{\phi_t}$s are $n$-qubit stabilizer states with $\beta \in \calB_\infty^{k},\alpha \in \calB_\infty$,\footnote{Recall that $\calB_\infty$ as the unit complex ball, i.e., $a\in \calB_\infty$ if $a \in \mathbb{C}$ and $|a| \in (0,1)$}  and $\ket{\phi_R}$ is \emph{unstructured}, by which we mean that its stabilizer fidelity is small. We prove this statement formally in Section~\ref{sec:iteratedselfcorrection} and in  Section~\ref{sec:applications} we give applications of this iterated self-correction algorithm.

\section{Iterative self-correction algorithm}
\label{sec:iteratedselfcorrection}
 Our main result in this section is to show that the structured part of $\ket{\psi}$ (in Eq.~\eqref{eq:structuredpart}), aka the stabilizer rank-$k$ decomposition $\ket{\widehat{\psi}} = \sum_{t=1}^k \beta_t \ket{\phi_t}$ (unnormalized) can be learned efficiently by using our $\Selfcorrection$ algorithm iteratively. This is summarized in the following theorem.

\begin{restatable}{theorem}{iterSCgen}
\label{thm:iterSC_gen}
    Let $\varepsilon,\upsilon\in (0,1)$, $\eta(\varepsilon)$ be a function of $\varepsilon$.\footnote{For technical reasons, we require that $\eta(\varepsilon
    )$ is a monotonically increasing function of $\varepsilon
    $. In the two theorems below, this will be the case.} Let $\ket{\psi}$ be an unknown $n$-qubit quantum state. Let $\calA$ be an algorithm that given copies of  $\ket{\psi}$ satisfying $\Exp_{x \sim q_\Psi}[2^n p_\Psi(x)] \geq \varepsilon$,  with probability at least $1-\delta$, outputs a stabilizer state $\ket{\phi}$ such that $|\la \phi| \psi \ra|^2 \geq \eta(\varepsilon)$. Let $\calS_A$ and  $\calT_\calA$ be the sample and time complexity of $\A$ respectively.
    
    Then, there is an algorithm $\calA'$ that with probability $\geq 1-\upsilon$, satisfies the following:  Given access to $U_\psi,\textsf{con}U_\psi$, outputs 
    $\beta \in \calB_\infty^k,\alpha \in \calB_\infty$  and stabilizer states $\{\ket{\phi_i}\}_{i\in [k]}$ such that one can write $\ket{\psi}$ as
    $$
      \ket{\psi}=\sum_{i\in [k]} \beta_i \ket{\phi_i}+\alpha\ket{\phi^\perp},
    $$
    where the residual state $\ket{\phi^\perp}$ satisfies $|\alpha|^2\cdot \stabfidelity{\ket{\phi^\perp}} < \varepsilon$ and $k \leq O(1/\eta(\varepsilon)^2)$. This algorithm $\calA'$ invokes $\calA$ $k$~times, hence the overall complexity is
    \begin{align*}
        & \text{ Query complexity: }\poly(n,1/\varepsilon,\log(1/\upsilon))\cdot \calS_\calA\\
        & \text{ Time complexity: } \poly(n,1/\varepsilon,\log(1/\upsilon))\cdot \calT_\calA.
    \end{align*}
\end{restatable}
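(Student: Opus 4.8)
The plan is to instantiate the recursive scheme sketched in Section~1.3: run $\calA$ to peel off one stabilizer state at a time, renormalize the orthogonal residual, and recurse until a stopping criterion fires. Concretely, I would maintain at stage $t$ a residual $\ket{\psi_t}$ (with $\ket{\psi_1}=\ket{\psi}$) together with the accumulated coefficients $\alpha_t=\prod_{i<t}r_i$, where $\ket{\psi_t}=c_t\ket{\phi_t}+r_t\ket{\phi_t^\perp}$, $c_t=\la\phi_t|\psi_t\ra$, $|r_t|=\sqrt{1-|c_t|^2}$, and set $\ket{\psi_{t+1}}:=\ket{\phi_t^\perp}$. The stopping rules are: halt if $|\alpha_t|^2<\varepsilon$ (residual weight too small, so the structured part already does tomography), or if $\Exp_{x\sim q_{\Psi_t}}[2^n p_{\Psi_t}(x)]<\varepsilon$ (residual has low Gowers-$3$/stabilizer fidelity, so output $\ket{\phi^\perp}=\ket{\psi_t}$). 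In the complement of both rules, Theorem~\ref{fact:lower_bound_stabilizer_fidelity} (or directly the promise $\Exp_{x\sim q_{\Psi_t}}[2^np_{\Psi_t}(x)]\geq\varepsilon$) lets us feed $\ket{\psi_t}$ to $\calA$ and obtain $\ket{\phi_t}$ with $|c_t|^2\geq\eta(\varepsilon)$.

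The core correctness argument is the potential/convergence bound, which in the noiseless setting is exactly what the excerpt sketches. Let $\ket{\widehat\psi_t}=\sum_{i<t}\beta_i\ket{\phi_i}$ with $\beta_i=\alpha_i c_i$, and $\ket{\Psi_t}=\ket{\psi}-\ket{\widehat\psi_t}=\alpha_t\ket{\psi_t}$ (unnormalized), so $\|\Psi_t\|_2=|\alpha_t|$. Using $\ket{\psi_t}=c_t\ket{\phi_t}+r_t\ket{\phi_t^\perp}$ and orthogonality of $\ket{\phi_t}$ to $\ket{\psi_{t+1}}=\ket{\phi_t^\perp}$, one gets $|\alpha_{t+1}|^2=|\alpha_t|^2|r_t|^2=|\alpha_t|^2(1-|c_t|^2)$, hence
\begin{align*}
\|\Psi_t\|_2^2-\|\Psi_{t+1}\|_2^2 = |\alpha_t|^2|c_t|^2 \geq \varepsilon\cdot\eta(\varepsilon),
\end{align*}
because we only proceed past stage $t$ when $|\alpha_t|^2\geq\varepsilon$ and $|c_t|^2\geq\eta(\varepsilon)$. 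Summing over $t=1,\dots,k$ telescopes to $k\,\varepsilon\,\eta(\varepsilon)\leq\|\Psi_1\|_2^2=1$, giving $k\leq 1/(\varepsilon\,\eta(\varepsilon))=O(1/\eta(\varepsilon)^2)$ (using $\eta(\varepsilon)\le\varepsilon$, which holds for both instantiations). The final residual guarantee is immediate: if we stopped by the first rule then $|\alpha|^2<\varepsilon$ and $\stabfidelity{\ket{\phi^\perp}}\le1$; if by the second rule then $\stabfidelity{\ket{\phi^\perp}}\le\Exp_{x\sim q_{\Psi_k}}[2^np_{\Psi_k}(x)]^{1/6}<\varepsilon^{1/6}$ — here one should be slightly careful with constants and possibly restate the theorem with $|\alpha|^2\stabfidelity{\ket{\phi^\perp}}<\varepsilon$ after adjusting the internal error parameter, exactly as the statement is phrased.

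For the algorithmic implementation of each step I would: estimate $\Exp_{x\sim q_{\Psi_t}}[2^np_{\Psi_t}(x)]$ (equivalently $\gowers{\ket{\psi_t}}{3}^8$ up to the relation in Fact~\ref{fact:relation_expectation_paulis_qPsi_and_pPsi}) to additive error $\poly(\varepsilon)$ via Lemma~\ref{lem:est_weyl_exp} using $O(1/\varepsilon^2)$ copies and $\poly(n)$ time; compute $c_t=\la\phi_t|\psi_t\ra$ (with phase) by a Hadamard test using $\textsf{con}U_{\psi_t}$ and $\textsf{con}U_{\phi_t}$, where $U_{\phi_t}$ is the Clifford preparing $\ket{\phi_t}$ (efficiently recoverable from the stabilizer tableau $\calA$ outputs); and prepare $\ket{\psi_{t+1}}$ via the LCU trick applied to $(U_{\psi_t}-c_t W_t)/|r_t|$, which needs $U_{\psi_t},\textsf{con}U_{\psi_t},U_{\phi_t},\textsf{con}U_{\phi_t}$ and succeeds with probability $\Omega(|r_t|^2)=\Omega(\varepsilon)$ whenever we have not yet accomplished tomography, so $\poly(1/\varepsilon)$ rounds of amplitude-style boosting suffice. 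Since $U_{\psi_{t+1}}$ is built from $t\le k=\poly(1/\varepsilon)$ copies of $U_\psi,\textsf{con}U_\psi$ and Cliffords, each invocation of $\calA$ on $\ket{\psi_t}$ costs $\poly(n,1/\varepsilon)\cdot\calS_\calA$ queries and $\poly(n,1/\varepsilon)\cdot\calT_\calA$ time, and multiplying by $k=\poly(1/\varepsilon)$ iterations and a $\log(1/\upsilon)$ factor for union-bounding all the estimation subroutines' failure probabilities (set each to $\upsilon\cdot\poly(\varepsilon)$) yields the claimed complexity.

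I expect the genuine obstacle to be none of the above idealized steps but rather error propagation: the estimates of $\Exp[|\la\psi_t|W_x|\psi_t\ra|^2]$, of $c_t$, and of the LCU-prepared $\ket{\psi_t}$ are all approximate, so the clean identity $|\alpha_{t+1}|^2=|\alpha_t|^2(1-|c_t|^2)$ and the orthogonality $\la\phi_t|\psi_{t+1}\ra=0$ only hold up to $\poly(\varepsilon)$ slack, and this slack accumulates over $k=\poly(1/\varepsilon)$ iterations. The fix (which the excerpt defers to Section~\ref{sec:errorSCsection}) is to recalibrate the accuracy of the estimates at each iteration — since we do not know a priori when the process stops, we cannot fix a single target error up front — and to redo the telescoping bound with an additive error term per step small enough that $\sum_t(\text{error}_t)$ stays below, say, $\tfrac12$, so that we still get $k\cdot\Omega(\varepsilon\eta(\varepsilon))\le O(1)$. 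Making the per-iteration error budget (geometrically decreasing, or uniformly $\poly(\varepsilon)/k$) compatible with the $\poly(1/\varepsilon)$ sample/time cost, and verifying that the LCU success probability stays bounded below precisely when the stopping rule has not fired, is the delicate part; everything else is bookkeeping.
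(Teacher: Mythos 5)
Your high‑level plan matches the paper exactly: peel one stabilizer state per round using $\calA$ as a black box, track the orthogonal residual $\ket{\psi_{t+1}}=\ket{\phi_t^\perp}$, stop when either the residual weight $|\alpha_t|^2$ or the Gowers‑$3$ proxy drops below threshold, and bound $k$ by the telescoping potential $\|\Psi_t\|_2^2-\|\Psi_{t+1}\|_2^2=|\alpha_t|^2|c_t|^2\geq\varepsilon\,\eta(\varepsilon)$ (the paper states $\geq\eta^2$, which is the same up to using $\varepsilon\geq\eta$). The LCU/Hadamard/Gowers‑estimation implementation and the observation that you must avoid controlled‑controlled‑$U_\psi$ by estimating $\beta_t=\la\phi_t|\psi\ra-\sum_{j<t}\beta_j\la\phi_t|\phi_j\ra$ rather than $\la\phi_t|\psi_t\ra$ directly are also as in the paper's Algorithm~\ref{alg:iterSC_error_free}.

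The gap is in your error‑propagation fix, which you correctly flag as the delicate part but then resolve with schedules that do not work. A ``uniform $\poly(\varepsilon)/k$'' budget requires knowing $k$ in advance, which you yourself note is unavailable — and you cannot simply plug in the a priori bound $k\leq O(1/\eta^2)$, because that bound is itself derived from the telescoping inequality that the errors are spoiling (the paper's Section~\ref{sec:errorSCsection} spells out that with a fixed per‑round error $\delta$ the telescoped progress $k\eta^2-O(k^2\delta)$ is $\leq1$ for \emph{all} $k$, so no bound on $k$ follows). A geometrically decreasing schedule $\delta_t\sim\delta 2^{-t}$ fails on sample complexity: estimating to accuracy $2^{-t}$ at round $t$ costs $\Omega(4^t)$ samples, which is super‑polynomial over $k=\poly(1/\varepsilon)$ rounds. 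Neither schedule addresses the deeper issue that $\beta_t$ is defined \emph{recursively} from $\widetilde\beta_1,\ldots,\widetilde\beta_{t-1}$, so errors compound multiplicatively across rounds regardless of how you diminish the per‑round budget. The idea that actually closes the gap — and that is absent from your proposal — is to \emph{re‑estimate every prior overlap} $\la\phi_j|\psi\ra$, $j\leq t$, afresh at round $t$ to precision $\delta/t^4$, recompute all of $\widetilde\beta_1^{(t)},\ldots,\widetilde\beta_t^{(t)}$ and $\widetilde r_j^{(t)},\widetilde\alpha_j^{(t)}$ from these fresh values, and run the telescoping argument only with the round‑$k$ estimates $\{\widetilde\beta_j^{(k)}\}_{j\leq k}$. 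This ``treat every round as if it were the last'' recomputation is what decouples the accuracy of the potential bound from the unknown stopping time $k$, and what keeps the LCU success probability $\Omega(\varepsilon/t^2)$ and the residual norm $\|\widetilde\psi_{t+1}^{(t)}\|_2\geq 7/8$ certifiably bounded below at every round. The total re‑estimation cost $\sum_{t\le k} t\cdot\poly(t^4/\delta)=\poly(1/\eta)$ is what delivers the claimed $\poly(n,1/\varepsilon,\log(1/\upsilon))\cdot\calS_\calA$ overhead.
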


In particular, if the let $\A$ be the $\Selfcorrection$ algorithm that we discussed in the previous part of this paper, then we obtain the following result. In Section~\ref{sec:boostrappingSC} we discuss the case when one applies the stabilizer booststrapping algorithm of~\cite{chen2024stabilizer} as our base algorithm.

\subsection{Useful subroutines}
We will invoke the following $\Selfcorrection$ theorem which we established earlier.

\selfcorrectionstatement*
Instead of estimating the Gowers-$3$ norm, we will work with its proxy $\Exp_{x \sim q_\Psi}[|\la \psi | W_x | \psi \ra|^2$~(Fact~\ref{fact:relation_expectation_paulis_qPsi_and_pPsi}) for which an inverse theorem exists~(Theorem~\ref{thm:inverse_weyl_exp_states}). A protocol to compute this quantity was established in prior work~\cite{gross2021schur,ad2024tolerant}. 
\begin{lemma}[{\cite[Lemma~3.8]{ad2024tolerant}}]
\label{lem:est_gowers3_norm_states}
Let $\ket{\psi}$ be an $n$-qubit state. We can estimate $\Exp_{x \sim q_\Psi}[|\la \psi | W_x | \psi \ra|^2]$ up to additive error~$\delta$ using $O(1/\delta^2)$ copies of $\ket{\phi}$ and $O(n/\delta^2)$ many one-qubit and two-qubit gates.
\end{lemma}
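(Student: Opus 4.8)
The plan is to exhibit a single-shot unbiased estimator for $\mu := \Exp_{x \sim q_\Psi}[|\la \psi | W_x | \psi \ra|^2]$ that takes values in $[-1,1]$, and then take an empirical average over $O(1/\delta^2)$ independent runs. Two standard ingredients are needed. First, Bell difference sampling on $\ket{\psi}^{\otimes 4}$ produces a string $x \in \FF_2^{2n}$ distributed exactly according to the Weyl distribution $q_\Psi$ of Eq.~\eqref{eq:weyl_distribution} (this is the convolution identity $q_\Psi = p_\Psi \star p_\Psi$ realized by summing two independent characteristic samples; see~\cite{gross2021schur} and Claim~\ref{claim:high_exp_value_bds}), and it costs only $O(n)$ one- and two-qubit gates plus classical post-processing of the measured bits. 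Second, since each Weyl operator $W_x$ is Hermitian with eigenvalues $\pm 1$, measuring the observable $W_x^{\otimes 2}$ on two fresh copies $\ket{\psi}^{\otimes 2}$ yields an outcome $b \in \{-1,+1\}$ with
$$
\Exp[\,b \mid x\,] = \la \psi | W_x | \psi \ra^2 = |\la \psi | W_x | \psi \ra|^2,
$$
where we used that $\la \psi | W_x | \psi \ra \in \RR$ because $W_x = W_x^\dagger$; this measurement also costs $O(n)$ gates once $x$ is known (conjugate $W_x$ to a computational-basis observable by an $O(n)$-gate Clifford and measure).

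Given these, the estimator is: draw $x \sim q_\Psi$ by Bell difference sampling on four copies, then measure $W_x^{\otimes 2}$ on two further copies to obtain $b$. By the tower rule,
$$
\Exp_{x \sim q_\Psi}\big[\Exp[b \mid x]\big] = \Exp_{x \sim q_\Psi}\big[|\la \psi | W_x | \psi \ra|^2\big] = \mu,
$$
and $|b| \le 1$ deterministically. Running this $T$ times with $6$ fresh copies of $\ket{\psi}$ per run and outputting the sample mean $\widehat{\mu}$, Hoeffding's inequality (Fact~\ref{fact:hoeffding_sampling}) gives $\Pr[|\widehat{\mu} - \mu| > \delta] \le 2\exp(-\Omega(\delta^2 T))$, so $T = O(1/\delta^2)$ suffices for constant success probability (and $O(\log(1/\delta')/\delta^2)$ for failure probability $\delta'$). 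In total this consumes $O(1/\delta^2)$ copies of $\ket{\psi}$ and $O(n/\delta^2)$ one- and two-qubit gates, matching the statement. Finally, since $|\la \psi | W_x | \psi \ra|^2 = 2^n p_\Psi(x)$ by Eq.~\eqref{eq:defnofppsi}, the same procedure equivalently estimates $\Exp_{x\sim q_\Psi}[2^n p_\Psi(x)]$, which is the form used elsewhere (Fact~\ref{fact:relation_qPsi_and_pPsi}).

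There is no substantive obstacle here; the only points deserving a line of care are (i) checking $\la \psi | W_x | \psi \ra \in \RR$, so that the modulus-squared and the plain square agree and the estimator is unbiased, which is immediate from $W_x^\dagger = W_x$; and (ii) confirming that Bell difference sampling realizes $q_\Psi$ rather than the characteristic distribution $p_\Psi$ — this is the convolution structure of $q_\Psi$ together with the fact that one Bell difference sample is the sum of two i.i.d.\ samples from $p_\Psi$. Both are standard and already invoked in the excerpt, so I would simply cite them rather than reprove them.
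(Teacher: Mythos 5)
Your proof is correct and is the standard argument (indeed, it is exactly the subroutine that reappears as Steps~3--5 of \sample\ in Algorithm~\ref{algo:sample}): Bell difference sampling on four copies produces $x\sim q_\Psi$, measuring $W_x^{\otimes 2}$ on two fresh copies gives a $\pm 1$ outcome $b$ with $\Exp[b\mid x]=\la\psi|W_x|\psi\ra^2=|\la\psi|W_x|\psi\ra|^2$ (the equality holding because $W_x$ is Hermitian), and Hoeffding over $O(1/\delta^2)$ independent repetitions finishes the job at $O(1/\delta^2)$ copies and $O(n/\delta^2)$ one- and two-qubit gates. This matches the cited proof in~\cite{ad2024tolerant}; no new ideas and no gaps.
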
 
We will use the Hadamard test to estimate overlaps of states.
\begin{lemma}[Hadamard test]
\label{lem:hadamardtest}
    Let $\ket{\psi},\ket{\psi'}$ be quantum states with state preparation unitaries $U_\psi$ and $U_{\psi'}$ respectively. The value of $\mathsf{Re}(\langle \psi|\psi'\rangle)\big)$ (or $\mathsf{Im}(\langle \psi|\psi'\rangle)\big)$) can then be estimated using the Hadamard circuit  up to error $\varepsilon$ using $O(1/\varepsilon^2)$ applications of controlled-$U_\psi,U_{\psi'}$.
\end{lemma}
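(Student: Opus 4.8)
The plan is to use the textbook Hadamard test. First I would adjoin a single control qubit prepared in $\tfrac{1}{\sqrt2}(\ket{0}+\ket{1})$ and an $n$-qubit work register in $\ket{0^n}$, then apply $U_{\psi'}$ to the work register controlled on the control being $\ket{1}$ and $U_\psi$ controlled on the control being $\ket{0}$ (the latter is $\textsf{con}U_\psi$ conjugated by an $X$ on the control); this produces $\tfrac{1}{\sqrt2}\bigl(\ket{0}\ket{\psi} + \ket{1}\ket{\psi'}\bigr)$, where $\ket{\psi} = U_\psi\ket{0^n}$ and $\ket{\psi'} = U_{\psi'}\ket{0^n}$. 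I would then apply a Hadamard to the control and measure it in the computational basis; a direct calculation gives $\Pr[\text{outcome }0] = \tfrac14\bigl\|\,\ket{\psi}+\ket{\psi'}\,\bigr\|_2^2 = \tfrac12\bigl(1 + \mathsf{Re}(\langle\psi|\psi'\rangle)\bigr)$. For the imaginary part I would instead insert an $S^\dagger$ phase gate on the control immediately before the final Hadamard, giving $\Pr[\text{outcome }0] = \tfrac14\bigl\|\,\ket{\psi} - i\ket{\psi'}\,\bigr\|_2^2 = \tfrac12\bigl(1 + \mathsf{Im}(\langle\psi|\psi'\rangle)\bigr)$.

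Next I would repeat the circuit $T = O(1/\varepsilon^2)$ times and output $2\widehat p - 1$, where $\widehat p$ is the empirical frequency of outcome $0$. By the Hoeffding bound (Fact~\ref{fact:hoeffding_sampling}), $\lvert\widehat p - \Pr[\text{outcome }0]\rvert \le \varepsilon/2$ except with probability $\exp(-\Omega(\varepsilon^2 T))$, so $2\widehat p - 1$ estimates $\mathsf{Re}(\langle\psi|\psi'\rangle)$ (resp.\ $\mathsf{Im}(\langle\psi|\psi'\rangle)$) to additive error $\varepsilon$. Each run invokes $\textsf{con}U_\psi$ and $\textsf{con}U_{\psi'}$ once, together with $O(1)$ one-qubit gates, so the total number of controlled-unitary applications is $O(1/\varepsilon^2)$.

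There is essentially no obstacle here — this is the folklore Hadamard test — so the only points requiring care are bookkeeping the controlled gates so that only $\textsf{con}U_\psi$, $\textsf{con}U_{\psi'}$ and their inverses are used (if only the $\ket{1}$-controlled versions are handed to us, one may instead implement $\textsf{con}(U_\psi^\dagger U_{\psi'})$ on $\ket{1}\ket{0^n}$ as $\textsf{con}U_{\psi'}$ followed by $(\textsf{con}U_\psi)^\dagger$), and keeping the convention that $\langle\psi|\psi'\rangle$ means $\langle 0^n|U_\psi^\dagger U_{\psi'}|0^n\rangle$, so that it is exactly the relative phase between the given state-preparation unitaries that is being estimated.
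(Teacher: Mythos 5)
Your proposal is correct and follows exactly the same construction the paper sketches in the remark after the lemma: prepare $\tfrac{1}{\sqrt2}(\ket{0}\ket{\psi}+\ket{1}\ket{\psi'})$ via the two controlled state-preparation unitaries, apply a Hadamard to the control, measure, and note that the bias toward $0$ equals $\mathsf{Re}(\langle\psi|\psi'\rangle)$, with an $S^\dagger$ on the control giving the imaginary part; repeating $O(1/\varepsilon^2)$ times and applying Hoeffding yields the claimed error bound.
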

We remark that this test is slightly different the usual SWAP test. In the test above, one first prepares $\ket{+}\ket{0^n}$, and controlled on $\ket{0}$ prepares $\ket{\psi}$ on the second register and controlled on $\ket{1}$ prepares $\ket{\psi'}$. So the algorithm prepares the state $\frac{1}{\sqrt{2}}\ket{0}\ket{\psi}+\frac{1}{\sqrt{2}}\ket{1}\ket{\psi'}$.  The algorithm then applies Hadamard on the first qubit and measures: the bias in measuring $0$ is exactly $\mathsf{Re}(\langle \psi|\psi'\rangle)\big)$. Repeating $O(1/\varepsilon^2)$ many times gives an $\varepsilon$-estimate of this quantity.

We will need a subroutine to estimate inner product between stabilizer states.
\begin{lemma}[{\cite[Algorithm~5.2]{garcia2014geometry}}]
\label{lem:inner_prod_stab_states}
    Given classical descriptions of two $n$-qubit stabilizer states $\ket{\phi}$ and $\ket{\psi}$, there exists an algorithm running in $O(n^3)$ time to compute $\la \phi | \psi \ra$.
\end{lemma}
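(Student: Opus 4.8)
The plan is to reduce $\la \phi | \psi \ra$ to a handful of $\mathbb{F}_2$-linear-algebra operations on the stabilizer tableaux, together with a careful accounting of Clifford phases. I would assume each stabilizer state is presented in the standard way, say as an explicit Clifford $C_\phi$ with $\ket{\phi} = C_\phi \ket{0^n}$ and likewise $\ket{\psi} = C_\psi \ket{0^n}$ (equivalently, a signed stabilizer tableau together with the generation convention), so that $\la \phi | \psi \ra = \la 0^n | C_\phi^\dagger C_\psi | 0^n \ra$ is well-defined, phase included.

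First I would compute the magnitude. Using $\ketbra{\phi}{\phi} = 2^{-n} \sum_{P \in S_\phi} P$ and the analogous identity for $\psi$, one has $|\la \phi | \psi \ra|^2 = \Tr(\ketbra{\phi}{\phi} \ketbra{\psi}{\psi}) = 2^{-n} \cdot |\{P : P \in S_\phi \cap S_\psi \text{ with matching sign}\}|$, since $\Tr(PQ) = 2^n$ exactly when $P = Q$ as signed Paulis and $0$ otherwise. Let $G_\phi, G_\psi \subseteq \mathbb{F}_2^{2n}$ be the unsigned stabilizer subspaces and $d = \dim(G_\phi \cap G_\psi)$; a basis of the intersection is obtained by Gaussian elimination in $O(n^3)$ time. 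For each of the at most $n$ basis vectors $v$ I would write $W_v$ as a product of the given generators of $S_\phi$ and of $S_\psi$, tracking the resulting signs ($O(n^2)$ each). If some $v$ gets opposite signs, then $\la \phi | \psi \ra = 0$ and we halt; otherwise $|\la \phi | \psi \ra|^2 = 2^{-(n-d)}$.

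If the magnitude is nonzero I would then pin the phase. Running the canonical-form routine of~\cite{aaranson2004improved} on $S_\phi$ (an $O(n^3)$ Gaussian-elimination procedure) produces a Clifford $U$ with $U \ket{\phi} = \ket{0^n}$ exactly, so $\la \phi | \psi \ra = \la 0^n | U | \psi \ra$; conjugating the $n$ generators of $\ket{\psi}$ by $U$ ($O(n^2)$ each) gives the tableau of $U\ket{\psi}$, and extracting $\la 0^n | U | \psi \ra$ with its phase is the standard ``amplitude of a stabilizer state on a basis string'' computation --- row-reduce the signed tableau and accumulate the incurred factors of $\pm 1, \pm i$ --- which costs $O(n^2)$. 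The total running time is dominated by the $O(n^3)$ Gaussian eliminations. The one subtle point, and the part that needs the most care, is the phase bookkeeping: a signed stabilizer group fixes the state only up to a \emph{global} phase, so the argument must begin from an honest handle on that phase (hence the input convention above) and then track factors of $i$ and $-1$ correctly through both the Clifford conjugation and the row reduction that isolates $\la 0^n | \cdot \ra$; checking that these compose to the claimed phase, not merely to its square, is where the real work lies, everything else being routine linear algebra within the $O(n^3)$ budget.
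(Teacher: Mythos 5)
The paper does not give its own proof of this lemma --- it is stated as a citation to Garcia, Markov, and Cross~\cite{garcia2014geometry}, Algorithm~5.2, whose algorithm works with a canonical ``basis form'' for stabilizer states (essentially an affine subspace together with quadratic and linear phase data and an explicit normalization) rather than with an explicit preparation Clifford, and computes the inner product, phase included, directly from those canonical forms. Your proposal takes a genuinely different route, encoding each state by a preparation Clifford $C_\phi$ with $\ket{\phi}=C_\phi\ket{0^n}$, which is a perfectly reasonable encoding and matches how the states actually arise in the paper's algorithms. The magnitude calculation is correct in spirit, though your claim that $\Tr(PQ)=0$ whenever $P\ne Q$ as signed Paulis is false (one has $\Tr(PQ)=-2^n$ when $P=-Q$); what saves the argument is that the sign-discrepancy map $\chi(a)=\epsilon^\phi_a\,\epsilon^\psi_a$ is a homomorphism $G_\phi\cap G_\psi\to\{\pm1\}$, so either it is trivial (giving $|\la\phi|\psi\ra|^2=2^{-(n-d)}$) or its kernel has index two (giving zero), and checking a basis of the intersection does decide which. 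That homomorphism step should be made explicit.

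The genuine gap is in the phase step, and it is the very issue you flag but do not resolve. Running the canonical-form routine of~\cite{aaranson2004improved} on the \emph{tableau} $S_\phi$ produces a Clifford $U$ for which $US_\phi U^\dagger=\langle Z_1,\dots,Z_n\rangle$ with positive signs, hence $U\ket{\phi}=e^{i\theta}\ket{0^n}$ for some $\theta$ that the tableau simply cannot determine --- so ``$U\ket{\phi}=\ket{0^n}$ exactly'' is unjustified. The same objection kills the next step: the signed stabilizer tableau of $U\ket{\psi}$ fixes that state only up to global phase (e.g.\ $\ket{+}$ and $e^{i\pi/4}\ket{+}$ share the tableau $\{+X\}$), so ``row-reduce the signed tableau and accumulate the incurred factors of $\pm1,\pm i$'' cannot yield a well-defined phase for $\la 0^n | U | \psi\ra$. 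To make your approach go through you must actually use the explicit Cliffords you posited as input: take $U=C_\phi^\dagger$, which does send $\ket{\phi}$ to $\ket{0^n}$ with no ambiguity, and then compute $\la 0^n|\,C_\phi^\dagger C_\psi\,|0^n\ra$ by a \emph{phase-tracked} Clifford simulation of the circuit $C_\phi^\dagger C_\psi$ acting on $\ket{0^n}$ --- i.e.\ a tableau that is augmented with an explicit global-phase register updated gate by gate (as in the CH form of Bravyi et al.\ or the affine-form simulation). This is where the $\pm1,\pm i$ bookkeeping actually lives, and it is not supplied by plain tableau row reduction. With this fix the cost is still polynomial (each of $O(n^2)$ Clifford gates updates an $O(n)$-sized row in $O(n)$ time), matching the lemma's $O(n^3)$ bound, but as written the proposal's phase computation does not produce a well-defined answer.
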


\paragraph{Linear combination of unitaries.}
We will require a subroutine to prepare quantum states expressed as linear combination of stabilizer states and $\ket{\psi}$. To this end, we will turn to the method of linear combination of unitaries $\LCU$ introduced by Childs and Wiebe~\cite{childs2012lcu}, and has seen multiple improvements and applications over the years~\cite{berry2015sim,berry2015hamiltonian,childs2017linear}. For a set of unitaries $\{U_1,\ldots,U_k\}$, consider the linear combination of unitaries $V = \sum_{i=1}^k a_i U_i$. Assuming $a_i > 0$ to be non-negative (we can absorb the complex phases into the unitary $U_i$), consider the operators $U_{\PREP}$ and $U_{\SEL}$ defined as
\begin{align}
    U_{\SEL} := \sum_{i \in [k]} \ketbra{i}{i}  \otimes U_i, \quad
    U_{\PREP}\ket{0^m} := \frac{1}{\norm{a}_1} \sum_{i \in [k]} \sqrt{a_i} \ket{i}
    \label{eq:prep_sel_ops}
\end{align}
where $\norm{a}_1 = \sum_i |a_i|$ is the $1$-norm of the coefficients and $m = \ceil{\log_2 k}$. We then have the following lemma~\cite[Lemma~2.1]{kothari2014efficient} regarding the implementation of $W$. 
\begin{lemma}[$\LCU$ Lemma]
\label{lemma:lcu_basic} 
Let $V = \sum_{i=1}^k a_i U_i$ be a linear combination of unitary matrices $U_i$ with $a_i > 0$ for all $i$. Let $U_{\PREP}$, $U_{\SEL}$ be the unitary operators as defined in Eq.~\eqref{eq:prep_sel_ops} and $p:=1/\norm{a}_1^2$. Then, $W := (U_{\PREP}^\dagger \otimes I) U_{\SEL} (U_{\PREP} \otimes I)$ satisfies the following for all states $\ket{\phi}$
$$
W \ket{0^m} \ket{\phi} = \sqrt{p} \ket{0^m} V \ket{\phi} + \ket{\perp},
$$
where the unnormalized state $\ket{\perp}$ depends on $\ket{\phi}$ and satisfies $(\ketbra{0^m}{0^m} \otimes \id) \ket{\perp} = 0$. In particular, if the outcome $0^m$ is obtained in the ancilla qubits, the prepared state is $V \ket{\phi} / \norm{V \ket{\phi}}$ on the system register and this occurs with a probability of $p \cdot \norm{V \ket{\phi}}^2$.
\end{lemma}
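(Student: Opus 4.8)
The plan is to expand the action of $W = (U_{\PREP}^\dagger \otimes \id)\,U_{\SEL}\,(U_{\PREP}\otimes \id)$ on the input $\ket{0^m}\ket{\phi}$ and read off the component of the result supported on the $\ket{0^m}$ block of the ancilla register. First I would note that $U_{\SEL}$ is unitary (it is block-diagonal in the ancilla computational basis, with the $i$-th block the unitary $U_i$), and that $U_{\PREP}$, specified on $\ket{0^m}$ by Eq.~\eqref{eq:prep_sel_ops}, extends to a unitary on the ancilla space because its target vector has unit norm; hence $W$ is unitary and $W\ket{0^m}\ket{\phi}$ is a unit vector. The only property of $U_{\PREP}$ I would actually use is that for each $i\in[k]$ the amplitude $\la i | U_{\PREP} | 0^m\ra$ is a nonnegative real multiple of $\sqrt{a_i}$, with the overall normalization fixed by unitarity so that this amplitude equals $\sqrt{a_i}/\norm{a}_1^{1/2}$; the nonnegativity uses the hypothesis $a_i>0$, which is what lets us absorb phases into the $U_i$ and keeps all amplitudes real, so that no complex conjugates intervene in the bookkeeping below.

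Next I would apply the three factors of $W$ in order. The step $(U_{\PREP}\otimes\id)\ket{0^m}\ket{\phi}$ produces $\sum_i (\sqrt{a_i}/\norm{a}_1^{1/2})\,\ket{i}\ket{\phi}$; then $U_{\SEL}$ maps each branch $\ket{i}\ket{\phi}$ to $\ket{i}\,U_i\ket{\phi}$; then I extract the $\ket{0^m}$-ancilla component of the result of applying $U_{\PREP}^\dagger$, using $\la 0^m | U_{\PREP}^\dagger | i\ra = \overline{\la i | U_{\PREP} | 0^m\ra} = \sqrt{a_i}/\norm{a}_1^{1/2}$. Collecting coefficients, the $\ket{0^m}$-ancilla part of $W\ket{0^m}\ket{\phi}$ is $\ket{0^m}\otimes\big(\norm{a}_1^{-1}\sum_i a_i U_i\ket{\phi}\big) = \norm{a}_1^{-1}\,\ket{0^m}\big(V\ket{\phi}\big) = \sqrt{p}\,\ket{0^m}\big(V\ket{\phi}\big)$ with $p = 1/\norm{a}_1^2$. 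Defining $\ket{\perp} := W\ket{0^m}\ket{\phi} - \sqrt{p}\,\ket{0^m}V\ket{\phi}$ then gives the stated identity, and by construction $\ket{\perp}$ lies entirely in the orthogonal complement of the $\ket{0^m}$ block, i.e. $(\ketbra{0^m}{0^m}\otimes\id)\ket{\perp}=0$.

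For the final ``in particular'' clause I would invoke the Born rule: since $W\ket{0^m}\ket{\phi} = \sqrt{p}\,\ket{0^m}V\ket{\phi} + \ket{\perp}$ is a unit vector written as an orthogonal sum of a term flagged by $\ket{0^m}$ on the ancilla and a term with zero overlap with that flag, measuring the ancilla in the computational basis yields outcome $0^m$ with probability $\norm{\sqrt{p}\,V\ket{\phi}}^2 = p\,\norm{V\ket{\phi}}^2$, and conditioned on that outcome the system register collapses to $\sqrt{p}\,V\ket{\phi}/(\sqrt{p}\,\norm{V\ket{\phi}}) = V\ket{\phi}/\norm{V\ket{\phi}}$.

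I do not expect any real obstacle: the argument is a two-line linear-algebra calculation. The one point to be careful about is the normalization convention for $U_{\PREP}$ — its amplitudes must be the $\sqrt{a_i}$ scaled by $\norm{a}_1^{-1/2}$ (i.e. Eq.~\eqref{eq:prep_sel_ops} read with a square root over $\norm{a}_1$) so that the flagged amplitude comes out to exactly $\sqrt{p}=\norm{a}_1^{-1}$ — together with the use of $a_i>0$ to keep all amplitudes real and positive so that the $U_{\PREP}^\dagger$ step contributes $\sqrt{a_i}$ rather than $\overline{\sqrt{a_i}}$ in the overlap.
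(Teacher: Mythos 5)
Your proof is correct, and it is the standard textbook derivation of the LCU lemma. Note that the paper does not actually prove this statement — it cites it directly from Kothari's thesis ({\cite[Lemma~2.1]{kothari2014efficient}}), so there is no in-paper argument to compare against. You have supplied the expected argument: expand $W$ on $\ket{0^m}\ket{\phi}$, project onto the $\ket{0^m}$ ancilla block, and apply the Born rule.

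One thing you got right that is worth flagging: the paper's Eq.~\eqref{eq:prep_sel_ops} as written has the normalization $U_{\PREP}\ket{0^m} = \frac{1}{\norm{a}_1}\sum_i\sqrt{a_i}\ket{i}$, which is not a unit vector (its squared norm is $1/\norm{a}_1$). You correctly noted the factor should be $\norm{a}_1^{-1/2}$ for $U_{\PREP}$ to be well-defined as a unitary, and that this is exactly what makes the $\ket{0^m}$-flagged amplitude come out to $\sqrt{p} = 1/\norm{a}_1$. Your emphasis on the role of $a_i>0$ — that $\la 0^m|U_{\PREP}^\dagger|i\ra$ equals $\la i|U_{\PREP}|0^m\ra$ without conjugation issues — is also the correct and necessary justification for why phases must be absorbed into the $U_i$.
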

The above \LCU~lemma requires queries to $U_{\PREP}$ and $U_{\SEL}$. To determine the cost of these queries, we will first use the following result from~\cite{mottonen2005prep,sun2023prep} 
to implement $U_{\PREP}$.
\begin{lemma}[State preparation]\label{lem:state_prep}
Let $m \in \mathbb{N}$. There exists a classical algorithm that outputs a quantum circuit $U$, running in time $O(m 2^m)$, that maps $\ket{0^m}$ to any arbitrary $m$-qubit quantum state $\ket{\phi}$. The circuit $U$ uses $O(2^m)$ \textsf{CNOT} gates and single-qubit rotations.
\end{lemma}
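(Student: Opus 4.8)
The plan is to prepare $\ket{\phi}$ by first building the \emph{inverse} circuit — one that disentangles $\ket{\phi}$ one qubit at a time down to $\ket{0^m}$ — and then reversing it. Write $\ket{\phi}=\sum_{x\in\{0,1\}^m}\alpha_x\ket{x}$ and peel off the last qubit: $\ket{\phi}=\sum_{y\in\{0,1\}^{m-1}}\ket{y}\otimes(\alpha_{y0}\ket{0}+\alpha_{y1}\ket{1})$. For each $y$ set $r_y=\sqrt{|\alpha_{y0}|^2+|\alpha_{y1}|^2}$ and, when $r_y>0$, pick angles $\theta_y,\varphi_y,\omega_y$ with $\alpha_{y0}=r_y e^{i\omega_y}\cos(\theta_y/2)$ and $\alpha_{y1}=r_y e^{i\omega_y}e^{i\varphi_y}\sin(\theta_y/2)$ (and $\theta_y=\varphi_y=\omega_y=0$ when $r_y=0$, by convention). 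Applying to the last qubit the $(m-1)$-fold \emph{uniformly controlled} (multiplexed) rotation $R_z(-\varphi_y)R_y(-\theta_y)$, followed by a uniformly controlled phase $e^{-i\omega_y}$, with $y$ ranging over computational-basis states of the first $m-1$ qubits, sends $\ket{\phi}$ to $\big(\sum_y r_y\ket{y}\big)\otimes\ket{0}$. Recursing on the first $m-1$ qubits yields a circuit $U^\dagger$ with $U^\dagger\ket{\phi}=\ket{0^m}$ (the final residual phase folds into a global phase), and $U=(U^\dagger)^\dagger$ is the desired preparation circuit.

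The gate count comes from the standard decomposition of uniformly controlled rotations (Möttönen et al., Bergholm et al.): an $R_y$ or $R_z$ rotation on one target, uniformly controlled by $\ell$ qubits, is implemented by $2^\ell$ single-qubit rotations interleaved with $2^\ell$ \textsf{CNOT}s, where the actual rotation angles are the "natural" angles transformed by a fixed $2^\ell\times 2^\ell$ Walsh–Hadamard/Gray-code matrix and the \textsf{CNOT} targets follow a Gray code. At recursion level $k$ (disentangling the $k$-th qubit, controlled by $k-1$ qubits) this costs $O(2^{k-1})$ \textsf{CNOT}s and single-qubit rotations; the uniformly controlled phase folds into the $R_z$ multiplexor at no extra cost. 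Summing $\sum_{k=1}^m 2^{k-1}=2^m-1$ gives $O(2^m)$ \textsf{CNOT}s and single-qubit gates in total, and reversing the circuit preserves these counts since \textsf{CNOT}s are self-inverse and rotations only flip sign.

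For the classical running time: at level $k$ one computes the $2^{k-1}$ triples $(r_y,\theta_y,\varphi_y)$ from the $2^k$ current amplitudes in $O(2^k)$ arithmetic operations, then applies the fixed linear transform to the angle vectors and lays out the Gray-code \textsf{CNOT} pattern, again $O(2^k)$ operations (using the fast Hadamard-transform-style evaluation of the angle matrix). Summing over the $m$ levels gives $O(m2^m)$ classical time, matching the claim. The main bookkeeping obstacle is verifying the uniformly-controlled-rotation decomposition together with its angle transform and Gray-code structure; since this is a by-now-standard fact we cite it rather than re-derive it, and the only remaining care is the vanishing-amplitude case (dispatched by the zero-angle convention) and tracking the leftover global phase across recursion levels.
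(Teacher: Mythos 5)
Your proposal is correct and is essentially a reconstruction of the Möttönen et al.\ construction, which is precisely what the paper cites for this lemma (it gives no proof of its own, only a reference to~\cite{mottonen2005prep,sun2023prep}). Your qubit-by-qubit disentangling via uniformly controlled $R_y$/$R_z$ multiplexors, the Gray-code/Walsh--Hadamard decomposition with $2^{k-1}$ \textsf{CNOT}s and rotations at level $k$, the handling of vanishing amplitudes, and the $O(m2^m)$ classical bookkeeping all match the cited construction; there is no gap.
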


Secondly, we need an implementation of the select operator $U_{\SEL}$. We will utilize the following lemma from \cite{an2023lchs}, which has also appeared previously in~\cite{childs2017linear,low2018hamiltonian} in less general contexts.
\begin{lemma}[{\cite[Lemma~10, Supplementary]{an2023lchs}}]\label{lem:cost_select}
Let $\{U_i\}_{i \in [k]}$ be a set of $k$ unitaries. Then, the $U_{\SEL} := \sum_{i \in [k]} \ket{i} \bra{i} \otimes U_i$ can be constructed with $\ceil{\log_2(k)}$ queries to  $\textsf{con}U_i$(s).
\end{lemma}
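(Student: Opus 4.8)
The plan is to recall and verify the ``multiplexed unitary'' construction of the $\SEL$ operator, organized around the binary expansion of the index register, and then read off the query count. Set $m = \ceil{\log_2 k}$ and regard the index register as $m$ qubits, identifying each $i \in [k]$ with its binary string $(i_1,\dots,i_m)$ (padding the index set to $\{0,1\}^m$ and declaring $U_i = \id$ on the unused indices, which changes nothing). The key structural point — which holds in the regime in which this lemma is applied, where the summands arise from an arithmetic discretization and are powers $U^{c_i}$ of a common unitary (e.g.\ a fixed evolution operator) — is that one can write $U_i = V_1^{\,i_1} V_2^{\,i_2} \cdots V_m^{\,i_m}$ for fixed target unitaries $V_1,\dots,V_m$; in the power case one takes $V_\ell = U^{2^{\ell-1}}$, so that $V_1^{i_1}\cdots V_m^{i_m} = U^{\sum_\ell i_\ell 2^{\ell-1}} = U^{i}$. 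Given the $V_\ell$, let $\textsf{con}_\ell V_\ell$ denote $V_\ell$ applied to the target controlled on the $\ell$-th index qubit, and define $U' := \textsf{con}_1 V_1 \cdot \textsf{con}_2 V_2 \cdots \textsf{con}_m V_m$.

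The next step is a one-line correctness check: on a computational-basis input $\ket{i}$ of the index register tensored with an arbitrary target state $\ket{\phi}$, the $m$ factors act on disjoint control qubits, the $\ell$-th factor multiplies the target by $V_\ell$ precisely when $i_\ell = 1$, and hence $U'\big(\ket{i}\ket{\phi}\big) = \ket{i}\,V_1^{i_1}\cdots V_m^{i_m}\ket{\phi} = \ket{i}\, U_i\ket{\phi}$; extending linearly over the index basis yields $U' = \sum_{i\in[k]} \ketbra{i}{i}\otimes U_i = U_{\SEL}$. The query accounting is then immediate: the circuit uses exactly one controlled application of a target unitary per bit position, i.e.\ $m = \ceil{\log_2 k}$ uses of (controlled versions of) the $U_i$'s, which is the claimed bound; this is the quantity that feeds into the cost estimate for the \LCU\ procedure of Lemma~\ref{lemma:lcu_basic}.

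I expect the only real issue to be pinning down the structural hypothesis. Indeed, if ``queries to $\textsf{con}U_i$'' meant calls to fixed-index controlled oracles for a genuinely arbitrary family $\{U_i\}_{i\in[k]}$, a circuit making $t<k$ such calls would depend on at most $t$ of the $U_i$, whereas $U_{\SEL}$ depends on all $k$ — so $k$ queries would be necessary; the $\ceil{\log_2 k}$ bound therefore hinges on the $U_i$ being assembled from only $m=\ceil{\log_2 k}$ independent ingredients (the $V_\ell$ above), as in the arithmetically-structured \LCU/LCHS settings where \cite{an2023lchs} states it. The resolution in our write-up is to import their construction directly and to verify that the linear combinations of unitaries we actually need to prepare in the iterated self-correction protocol are of this form. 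The remaining steps — commutation of the single-qubit-controlled factors on basis states, the linear extension over the index basis, and the per-bit query count — are routine bookkeeping.
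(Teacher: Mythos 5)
The paper itself gives no proof of this lemma — it is imported verbatim from~\cite{an2023lchs} — so there is no in-paper argument to compare against. That said, your analysis makes a substantive and correct point worth recording.

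Your multiplexed construction $U' = \textsf{con}_1 V_1 \cdots \textsf{con}_m V_m$ is valid exactly when $U_i = V_1^{i_1}\cdots V_m^{i_m}$, e.g.\ when $U_i = U^i$ is a power of a fixed unitary with $V_\ell = U^{2^{\ell-1}}$, and your lower-bound observation is correct: for a genuinely arbitrary family $\{U_i\}_{i\in[k]}$, a circuit making $t<k$ calls to fixed-index controlled oracles depends on at most $t$ of the $U_i$, whereas $U_{\SEL}$ depends on all $k$, so $\ceil{\log_2 k}$ queries cannot suffice. The cited lemma in~\cite{an2023lchs} lives in the LCHS setting where the select targets are discretized Hamiltonian evolutions $e^{-i j\Delta H}$, i.e.\ precisely the power case, so the structural hypothesis is satisfied there even though it is not repeated in the statement.

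The caveat you flag but defer — "verify that the linear combinations we actually need are of this form" — in fact fails in the present paper's application. In iterated self-correction, the family fed into the select operator is $\{U_\Psi, W_1,\dots,W_{t-1}\}$, where $U_\Psi$ is the state-preparation unitary for the unknown state and each $W_j$ is a Clifford circuit preparing a stabilizer state produced by an earlier round; these are not powers of a common unitary, so the $\ceil{\log_2 k}$ query bound does not apply as stated and the honest count is $k$ (one controlled call per unitary, the trivial multiplexing). This does not damage the paper's conclusions: Corollary~\ref{cor:LCUfinal} already only claims $\poly(n,k)$ time, and with $k \le O(1/\eta^2) = \poly(1/\varepsilon)$ the difference between $\log k$ and $k$ calls is absorbed into the polynomial. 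Nevertheless the lemma as stated is too strong for arbitrary families and should either be restated with the structural hypothesis made explicit or replaced by the trivial $k$-query bound, which is all that is used downstream.
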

The set of unitaries $\{U_j\}$ for us will include $U_\Psi$ and Clifford unitaries (which prepare particular stabilizer states). For the latter, we require the following algorithmic result that outputs a Clifford circuit preparing a specified stabilizer state.
\begin{lemma}[Clifford synthesis~\cite{dehaene2003clifford,patel2003efficient}]\label{lem:clifford_synthesis}
Given the classical description of an $n$-qubit stabilizer state $\ket{\phi}$, there is a quantum algorithm that outputs a Clifford circuit $C$ that prepares $\ket{\phi}$, using $O(n^2)$ many single-qubit and two-qubit Clifford gates.
\end{lemma}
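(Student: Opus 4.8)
The plan is to reduce the task to the isotropic-subspace primitive already developed above. A classical description of $\ket{\phi}$ can be taken, without loss of generality, to be a list of $n$ independent, pairwise-commuting Pauli generators $g_1,\dots,g_n$ of its stabilizer group, each given by a symplectic string $x_i\in\FF_2^{2n}$ together with a sign $s_i\in\{\pm 1\}$; any other common encoding (an amplitude list, or an affine-subspace-plus-quadratic-form pair) can be converted to such a generating set in $\poly(n)$ classical time, so the statement is insensitive to the precise encoding. Since the $g_i$ commute and are independent, $A:=\mathrm{span}\{x_1,\dots,x_n\}$ is a Lagrangian subspace of $\FF_2^{2n}$. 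The goal is a Clifford $C$ with $C\ket{0^n}=\ket{\phi}$, equivalently a Clifford $C_1:=C^\dagger$ sending the signed stabilizer group of $\ket{\phi}$ to $\la Z_1,\dots,Z_n\ra$ with all $+$ signs.

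First I would apply Lemma~\ref{lem:clifford_isotropic_subspace} to $A$ with $d=n$: in $\poly(n)$ time it outputs a Clifford circuit $C_0$ of gate complexity $O(n^2)$ with $C_0AC_0^\dagger=\la Z_1,\dots,Z_n\ra$ as \emph{unsigned} groups. Conjugating the generators, $\{C_0 g_i C_0^\dagger\}$ generate $\la \pm Z_1,\dots,\pm Z_n\ra$; row-reducing the $Z$-exponent vectors over $\FF_2$ (i.e.\ replacing the generating set by suitable products, which is free since it only re-labels a generating set) I may assume $C_0 g_i C_0^\dagger=\epsilon_i Z_i$ with signs $\epsilon_i\in\{\pm 1\}$ read off from the tableau. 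Then, defining $t\in\FF_2^n$ by $t_i=1\iff\epsilon_i=-1$ and noting that conjugation by $X^t:=\bigotimes_i X^{t_i}$ flips exactly the signs of those $Z_i$ with $t_i=1$, the circuit $C_1:=X^tC_0$ sends the signed stabilizer group of $\ket{\phi}$ to $\la Z_1,\dots,Z_n\ra$ with all $+$ signs, so $C_1\ket{\phi}=\ket{0^n}$ up to an irrelevant global phase. Hence $C:=C_1^\dagger=C_0^\dagger X^t$ prepares $\ket{\phi}$ from $\ket{0^n}$. Reversing a Clifford circuit preserves both Cliffordness and the gate count, and $X^t$ contributes at most $n$ single-qubit gates, so $C$ has $O(n^2)$ one- and two-qubit Clifford gates, and its classical description is produced in $\poly(n)$ time, dominated by the cost of Lemma~\ref{lem:clifford_isotropic_subspace} and the $\FF_2$ row reduction.

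The only step requiring genuine care is the bookkeeping of phases: tracking how the $i^{a\cdot b}$ factor built into each Weyl operator, together with each sign $s_i$, transforms under Clifford conjugation, so that the $\epsilon_i$ and the final global phase are computed correctly. This is, however, a routine purely-classical tableau update (precisely the kind performed in the Aaronson--Gottesman stabilizer simulator), so I do not expect a real obstacle here. As an alternative one could instead cite the direct one-shot constructions of~\cite{dehaene2003clifford,patel2003efficient}, which build the $O(n^2)$-gate circuit without passing through Lemma~\ref{lem:clifford_isotropic_subspace}; I would present the reduction above only to keep the argument self-contained within the primitives already established in this paper.
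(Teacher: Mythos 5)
Your proof is correct, and it takes a different route from the paper: the paper gives no proof of this lemma, simply citing~\cite{dehaene2003clifford,patel2003efficient} as black boxes, whereas you derive it self-contained from Lemma~\ref{lem:clifford_isotropic_subspace} (applied with $d=n$, i.e., to the Lagrangian subspace spanned by the unsigned stabilizer generators), followed by an $\FF_2$ row reduction to put the conjugated generators in the form $\epsilon_i Z_i$, a sign-correction by a single Pauli layer $X^t$, and an inversion. This buys self-containment within the primitives already introduced in the paper; the cited constructions build the $O(n^2)$ circuit directly without passing through the isotropic-subspace lemma. The phase bookkeeping you flag is indeed the only place requiring care, and your treatment is sound: Clifford conjugation preserves Hermiticity, so each $C_0 g_i C_0^\dagger$ is a Hermitian Pauli with a well-defined $\pm 1$ sign, and products of commuting Hermitian Paulis can again only pick up $\pm 1$ (not $\pm i$), so the $\epsilon_i\in\{\pm 1\}$ are well-defined after row reduction and the $X^t$ fix-up does exactly what you claim. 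The gate-count and runtime analysis are correct.
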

To construct the controlled versions of $\{U_j\}$, we note that we have query access to $\textsf{con}U_\Psi$ and only need to comment on the controlled versions of the Clifford unitaries obtained via Lemma~\ref{lem:clifford_synthesis}. To obtain $\textsf{con}U_i$ of a Clifford circuit $U_i$, it is enough to put controls on each gate. Further, the resulting Toffoli gates ($\textsf{CCNOT}$) from controlling \textsf{CNOT}, can be decomposed into constantly many \textsf{CNOT} gates and single-qubit rotations. Similarly, controlled rotations can be decomposed into constantly many \textsf{CNOT} gates and single-qubit rotations. The resulting single-qubit and two-qubit gate complexity remains unchanged i.e., $O(n^2)$.

Putting everything together (by invoking Lemma~\ref{lem:state_prep} for $m=\textsf{polylog}(n)$), we obtain the following corollary of the $\LCU$ lemma that we stated above (specialized to our setting).  
\begin{corollary}
\label{cor:LCUfinal}
 Let $\calU=\{C_i\}_i\cup U_\Psi$, where $C_i$ are $n$-qubit Clifford circuits and   $U_\Psi$ is a state preparation unitary for $n$-qubit $\ket{\psi}$. Let $V=\sum_{i\in [k]} a_i U_i$ where $U_i\in \calU$ and $a_i>0$ for all $i$. Given access to $\textsf{con}U_i$s for all $U_i\in \calU$, there is a $\poly(n,k)$-time quantum algorithm that \emph{implements} a unitary $W$ that satisfies the following: with probability $\geq (\norm{V \ket{\phi}}/\|a\|_1)^2$, we have
 $
W \ket{\phi} =  V \ket{\phi}/\norm{V \ket{\phi}}.
$
\end{corollary}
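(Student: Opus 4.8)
\textbf{Proof proposal for Corollary~\ref{cor:LCUfinal}.}

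The plan is to assemble the stated implementation of $W$ out of the three ingredients quoted just above — the abstract $\LCU$ lemma (Lemma~\ref{lemma:lcu_basic}), the state-preparation result (Lemma~\ref{lem:state_prep}) for the $\PREP$ unitary, and the select-operator construction (Lemma~\ref{lem:cost_select}) — together with the bookkeeping that makes the cost $\poly(n,k)$. First I would reduce to the positive-coefficient case: each $U_i\in\calU$ is either a Clifford circuit $C_i$ or $U_\Psi$, and given $a_i\in\CC$ we write $a_i = |a_i| e^{i\theta_i}$ and absorb the phase into the unitary, replacing $U_i$ by $e^{i\theta_i}U_i$ (for a Clifford $C_i$ this is still implementable at the same cost up to a global phase on the controlled version; for $U_\Psi$ we attach the phase in the controlled call). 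So WLOG $a_i>0$, and $V=\sum_{i\in[k]}a_i U_i$ with $\|a\|_1=\sum_i a_i$.

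Next I would instantiate $U_{\PREP}$ and $U_{\SEL}$ on $m=\lceil\log_2 k\rceil$ ancilla qubits exactly as in Eq.~\eqref{eq:prep_sel_ops}. For $U_{\PREP}$, the target ancilla state $\tfrac{1}{\sqrt{\|a\|_1}}\sum_i\sqrt{a_i}\ket{i}$ is a fixed $m$-qubit state whose amplitudes we know classically, so Lemma~\ref{lem:state_prep} gives a circuit for it using $O(2^m)=O(k)$ CNOTs and single-qubit rotations, classically computed in $O(m2^m)=O(k\log k)$ time; since $k\le\poly(n)$ in all our applications this is $\poly(n,k)$. For $U_{\SEL}=\sum_i\ketbra{i}{i}\otimes U_i$, Lemma~\ref{lem:cost_select} builds it from $\lceil\log_2 k\rceil$ queries to the controlled unitaries $\textsf{con}U_i$. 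We have $\textsf{con}U_\Psi$ by hypothesis; for each Clifford $C_i$ we have its classical description, so Lemma~\ref{lem:clifford_synthesis} yields an $O(n^2)$-gate circuit for $C_i$, and controlling it gate-by-gate (decomposing the resulting Toffoli and controlled-rotation gates into $O(1)$ CNOTs plus single-qubit rotations each) gives $\textsf{con}C_i$ with $O(n^2)$ gates. Summing, $U_{\SEL}$ costs $O(kn^2)$ gates (or fewer with a binary-tree multiplexing, but $O(kn^2)$ suffices), hence the whole of $W:=(U_{\PREP}^\dagger\otimes I)U_{\SEL}(U_{\PREP}\otimes I)$ is implementable in $\poly(n,k)$ time.

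To finish, I would apply Lemma~\ref{lemma:lcu_basic} verbatim with $p=1/\|a\|_1^2$: for any input $\ket{\phi}$ (in our use $\ket{\phi}$ will itself be prepared as part of the circuit, but the statement holds for arbitrary $\ket{\phi}$),
\[
W\ket{0^m}\ket{\phi}=\sqrt{p}\,\ket{0^m}\,V\ket{\phi}+\ket{\perp},\qquad (\ketbra{0^m}{0^m}\otimes I)\ket{\perp}=0,
\]
so measuring the $m$ ancilla qubits and postselecting on the outcome $0^m$ leaves the system register in $V\ket{\phi}/\norm{V\ket{\phi}}$ with probability $p\,\norm{V\ket{\phi}}^2=\big(\norm{V\ket{\phi}}/\|a\|_1\big)^2$, exactly the claimed success probability. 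Bundling the ancilla preparation, the application of $W$, the measurement, and the (re)use of the postselected register into the single "unitary $W$" of the corollary statement — or, more precisely, a measure-and-postselect channel — completes the proof.

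The only mildly delicate point — not really an obstacle — is the phase-absorption step and ensuring that the controlled versions of the phased unitaries are still available at polynomial cost; for Cliffords a global phase $e^{i\theta_i}$ becomes a relative phase $e^{i\theta_i}\ketbra{1}{1}$ in the controlled version, which is a single-qubit $Z$-rotation on the control, and for $U_\Psi$ the same trick appends one extra controlled phase gate, so in all cases the $\textsf{con}U_i$ cost is unchanged up to an additive $O(1)$. Everything else is a direct citation of the lemmas above, and the runtime accounting ($O(k)$ for $\PREP$, $O(kn^2)$ for $\SEL$, classical preprocessing $O(k\log k + kn^2)$) gives the stated $\poly(n,k)$ bound.
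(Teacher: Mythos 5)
Your proposal is correct and follows essentially the same route the paper takes: the corollary is stated in the text as a direct consequence of Lemma~\ref{lemma:lcu_basic} combined with Lemma~\ref{lem:state_prep} (for $U_{\PREP}$), Lemma~\ref{lem:cost_select} (for $U_{\SEL}$), and the observation that controlled Cliffords can be built from their classical description via Lemma~\ref{lem:clifford_synthesis} with only constant per-gate overhead, and you reproduce exactly that chain of citations and the probability bookkeeping $p\,\|V\ket{\phi}\|^2 = (\|V\ket{\phi}\|/\|a\|_1)^2$.

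Two small remarks, neither of which affects correctness. First, the phase-absorption reduction at the start is superfluous: the corollary already hypothesizes $a_i > 0$, so there is nothing to absorb (and the non-positive case never arises in the paper's use of this corollary, since phases are always pushed into the Clifford unitaries before invoking LCU). Second, the corollary hypothesis already grants $\textsf{con}U_i$ access for all $U_i\in\calU$, so under a literal reading of the statement you need not re-derive the controlled Cliffords via Lemma~\ref{lem:clifford_synthesis}; that said, including it matches the paper's own surrounding discussion (which motivates the corollary by explaining how $\textsf{con}C_i$ is obtained from the classical description), so your extra paragraph is arguably more faithful to the paper's actual usage than the corollary's own hypothesis is.
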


\subsection{Error-free iterative self-correction}
In this section, we first present the iterative $\Selfcorrection$ algorithm assuming all the subroutines that we use (like Gowers norm estimation, $\Selfcorrection$, Hadamard test and the LCU lemma~\ref{lemma:lcu_basic}) are error-free and succeed  with probability $1$. This simplifies the presentation of the algorithm (and consequently the analysis), distilling the main idea in iterative $\Selfcorrection$. Our goal will be to prove Theorem~\ref{thm:iterSC_gen} in the error-free case.

\subsubsection{Algorithm}
We  present the error-free iterative $\Selfcorrection$ in Algorithm~\ref{alg:iterSC_error_free}. Given access to copies of $\ket{\psi}$ and state-preparation unitary $U_\psi$ (and controlled version $\textsf{con}U_\psi$), the  algorithm outputs a state $\ket{\widehat{\psi}}$ (not necessarily normalized) which can be expressed as a linear combination of stabilizer states
\begin{equation}\label{eq:final_estimate_iterSC}
    \ket{\widehat{\psi}} = \sum_{i=1}^k \beta_i \ket{\phi_i},
\end{equation}
where $\ket{\phi_i} \in \cal{S}$ are stabilizer states and $\beta \in \calB_\infty^k$ are their corresponding coefficients. Moreover, as indicated in Theorem~\ref{thm:self_correction},  the state proportional to $\ket{\psi} - \ket{\widehat{\psi}}$ is shown to have Gowers-$3$ norm $\leq \varepsilon^6$, which implies its stabilizer fidelity $\leq \varepsilon$. In Algorithm~\ref{alg:iterSC_error_free}, note that we output the classical description of $\ket{\widehat{\psi}}$ as a list of $\{\beta_i\}_{i \in [k]}$ and $\{\ket{\phi_i}\}_{i \in [k]}$ where each stabilizer state is described by its~$n$ generators.

The algorithm constructs $\ket{\widehat{\psi}}$, i.e., $\{\beta_i\}_{i \in [k]},\{\ket{\phi_i}\}_{i \in [k]}$ progressively across $k$ many iterations by learning $\ket{\phi_i}$ and $\beta_i$ one at a time in a sequential manner. In the $t$th iteration, the running estimate, denoted by $\ket{\widehat{\psi}^{(t)}}$, is thus
\begin{equation}\label{eq:running_estimate_iterSC}
    \ket{\widehat{\psi}^{(t)}} = \sum_{i=1}^t \beta_i \ket{\phi_i}.
\end{equation}
The corresponding residual state, denoted by $\ket{\psi_{t+1}}$, is then
\begin{equation}\label{eq:residual_state_iterSC}
    \ket{\psi_{t+1}} = \Big(\ket{\psi} - \ket{\widehat{\psi}^{(t)}}\Big)/\alpha_{t+1},
\end{equation}
where $\alpha_{t+1} \in \mathbb{R}$ is a normalization factor to ensure $\ket{\psi_{t+1}}$ is a valid quantum state. For the first iteration, we denote the residual state as $\ket{\psi_1} = \ket{\psi}$. We stop after $k$ iterations when either of the two following conditions are met
\begin{equation}\label{eq:stopping_cond_iterSC}
    \Exp_{x \sim q_{\Psi_k}}[|\la \psi_{k} | W_x | \psi_{k} \ra|^2] < \varepsilon^6 \quad \emph{or} \quad |\alpha_t|^2 < \varepsilon.
\end{equation}
The former condition of $\Exp_{x \sim q_{\Psi_k}}[|\la \psi_{k} | W_x | \psi_{k} \ra|^2] < \varepsilon^6$ implies that $\calF_\calS(\ket{\psi_k}) < \varepsilon$ by Fact~\ref{fact:lower_bound_stabilizer_fidelity}, hence is a valid stopping point in order to satisfy the requirements of our theorem statement. The latter condition $|\alpha_t|^2 < \varepsilon$ implies that the state $\ket{\widehat{\psi}^{(t)}}$ is already close to the unknown $\ket{\psi}$ in terms of $\ell_2$ distance, which again is a valid stopping point for the algorithm (since this is a stronger condition than self-correction even). So Algorithm~\ref{alg:iterSC_error_free}  checks these before applying $\Selfcorrection$ on $\ket{\psi_k}$ (see Lines~\ref{line_algo:stopping_cond_iterSC_error_free_tomography},\ref{line_algo:stopping_cond_iterSC_error_free}). As part of our analysis, we will show that the total number of iterations $k$ can be bounded. We now provide the details on how $\ket{\phi_i}$ and the corresponding coefficient $\beta_i$ is determined in each iteration of the algorithm by going through iterations $1,2$, before formally proving the correctness and complexity of the algorithm later. 
\begin{myalgorithm} 
\setstretch{1.35}
\begin{algorithm}[H]
    \caption{Error-free iterative self-correction}\label{alg:iterSC_error_free}
    \setlength{\baselineskip}{1.5em} 
    \DontPrintSemicolon 
    \KwInput{$\varepsilon \in (0,1)$, copies of  $\ket{\psi}$, access to $U_\psi$ (and $\textsf{con}U_\psi$) }
    \KwOutput{List of stabilizer states $L = \{\ket{\phi_i}\}_{i \in [k]}$, coefficients $B = \{\beta_i\}_{i \in [k]}$ for some $k \in \mathbb{N}$}
    
    Set $\eta = C_1 \varepsilon^{6C_2}$ (with constants $C_1, C_2$ as defined in Theorem~\ref{thm:self_correction}). \\
    Let $r_0=\alpha_1=1$. \label{line_algo:r0_def_iterSC_error_free} \\
    Set $L=\varnothing$, $B=\varnothing$, $t_{\max}=\left \lceil1 /\eta^2\right\rceil$.\\
    \For{$t=1$ \KwTo $t_{\max}$}{
        \lIf{$|\alpha_t|^2 < \varepsilon$}{Let $\ket{\phi^\perp} = \ket{\psi_t}$ and break from loop} \label{line_algo:stopping_cond_iterSC_error_free_tomography}
        Run $\LCU$ to prepare $\ket{\psi_{t}}=V_{t}\ket{0^n}$ where $V_{t} = (U - \sum_{j=1}^t \beta_j W_j)/\alpha_{t}$, for $t\geq 2$ and set $\ket{\psi_t} = \ket{\psi}$ for $t=1$. \\
        Estimate $\Exp_{x \sim q_{\Psi_t}}[|\la \psi_{t} | W_x | \psi_{t} \ra|^2]$ given copies of $\ket{\psi_{t}}$ using Lemma~\ref{lem:est_gowers3_norm_states} \\
        \lIf {$\Exp_{x \sim q_{\Psi_t}}[|\la \psi_{t} | W_x | \psi_{t} \ra|^2] < \varepsilon^6$}{Let $\ket{\phi^\perp} = \ket{\psi_t}$ and break from loop} \label{line_algo:stopping_cond_iterSC_error_free}
        Run $\Selfcorrection$ on copies of $\ket{\psi_t}$ to learn $\ket{\phi_t} \in \textsf{Stab}$ s.t. $|\la \phi_t | \psi_t \ra|^2 \geq \eta$. \\
        Obtain the Clifford unitary $W_t$ that prepares $\ket{\phi_t} = W_t \ket{0^n}$ using Lemma~\ref{lem:clifford_isotropic_subspace}.\\
        Estimate $\langle \phi_t|\psi \rangle$ via the Hadamard test in Lemma~\ref{lem:hadamardtest} using $\textsf{con}W_t$ and $\textsf{con}U_\psi$.\\
        Compute $\la \phi_t | \phi_j \ra$  for all $j \in [t-1]$ classically using Lemma~\ref{lem:inner_prod_stab_states}. \\
        Set $\beta_t = \la \phi_t| \psi \ra - \sum_{j=1}^{t-1} \beta_j \la \phi_t | \phi_j \ra$.\label{eq:defnofbetainalgo11} \\
        Set $c_t = \beta_t/\left(\prod_{j=0}^{t-1} r_t\right), r_t=\sqrt{1-|c_t|^2}$ and $\alpha_{t+1} = \prod_{j=0}^t |r_j|$. \\
    }
    \Return List of $k \leq 1/\eta^2$ stabilizer states $L=\{\ket{\phi_i}\}_{i \in [k]}$ along with their coefficients $B=\{\beta_\ell=c_\ell \prod_{t=1}^{\ell-1} r_t\}_{\ell\in [k]}$ held classically.
\end{algorithm}
\end{myalgorithm}

\paragraph{Iteration $t=1$.} When $t=1$, let $\ket{\psi_1} = \ket{\psi}$ and let $r_0 = 1$. Suppose $\Exp_{x \sim q_\Psi}[|\la \psi | W_x | \psi \ra|^2] \geq  \varepsilon^6$ then performing $\Selfcorrection$ on $\ket{\psi_1}$ gives us a stabilizer state $\ket{\phi_1}$ such that $|\la \phi_1 | \psi_1 \ra|^2 \geq \eta$ (from Theorem~\ref{thm:self_correction}) where $\eta = \varepsilon^{6C}$ (for an universal constant $C>1$).\footnote{We remark that to perform $\Selfcorrection$ for the first iteration, we do not require access to the unitary $U_\psi$.} We can find the Clifford unitary $W_1$ that prepares the stabilizer state $\ket{\phi_1}$ as $\ket{\phi_1} = W_1 \ket{0^n}$ using Lemma~\ref{lem:clifford_isotropic_subspace}. To determine the corresponding coefficient $\beta_1$ as in Eq.~\eqref{eq:running_estimate_iterSC}, we compute the inner product $\la \psi_1| \phi_1 \ra$ via the Hadamard test (Lemma~\ref{lem:hadamardtest}) using  $\textsf{con}U_\psi$ and $\textsf{con}W_t$. We will also denote $c_1 = \la \psi_1| \phi_1 \ra$ for convenience. The decomposition we can obtain so far is then
$$
\ket{\psi} = \beta_1 \ket{\phi_1} + r_1 \ket{\phi_1^\perp},
$$
where $\ket{\phi_1^\perp}$ is a state orthogonal to $\ket{\phi_1}$, $\beta_1 = c_1$, and $r_1 = \sqrt{1 - |\beta_1|^2}$ (we can absorb any phase into $\ket{\phi_1^\perp}$ without loss of generality). 

\paragraph{Iteration $t=2$.} We now consider iteration $t=2$. First observe that we only proceed to iteration $t=2$ if $|\alpha_2|^2 = |r_1|^2 \geq \varepsilon$ (which can be checked as $r_1$ was obtained in iteration $t=1$). We then need to check if $\Exp_{x \sim q_{\Psi_2}}[|\la \psi_2 | W_x | \psi_2 \ra|^2] \geq  \varepsilon^6$, which we can estimate using Lemma~\ref{lem:est_gowers3_norm_states}, but we require the ability to access copies of $\ket{\psi_2} = \ket{\phi_1^\perp}$ to continue. We observe that this can be obtained via the usual $\LCU$ lemma (in Lemma~\ref{lemma:lcu_basic}) on the unitary $V_2 = \frac{1}{r_1} \left(U_1 - c_1 W_1\right)$ since $\ket{\psi_2} = V_2 \ket{0^n}$. Furthermore, the probability of success of $\LCU$ is 
$$
\left(\frac{\norm{V_2 \ket{0^n}}}{\frac{1}{r_1} + \frac{|c_1|}{r_1}}\right)^2 = \left(\frac{r_1}{1 + |c_1|}\right)^2 \geq \frac{\varepsilon^2}{4},
$$
where we used that $V_2 \ket{0^n}$ produces a normalized quantum state $\ket{\psi_2}$ by construction and $|c_1| \leq 1$, $|r_1|^2 \geq \varepsilon$. We will assume sample access to $\ket{\psi_2}$ from now onwards (with an overhead of $1/\varepsilon^2$ in sample complexity). 

Suppose $\Exp_{x \sim q_{\Psi_2}}[|\la \psi_2 | W_x | \psi_2 \ra|^2] \geq  \varepsilon^6$. We then again carry out $\Selfcorrection$ on $\ket{\psi_2}$ to obtain the stabilizer state $\ket{\phi_2}$ such that $|\la \phi_2 | \psi_2 \ra|^2 \geq \eta$. Also, we can determine a Clifford unitary $W_2$ that prepares $\ket{\phi_2}$, i.e., $\ket{\phi_2} = W_2 \ket{0^n}$ using Lemma~\ref{lem:clifford_isotropic_subspace}. 
At this stage, the goal is to estimate the coefficient $\beta_2$ corresponding to $\ket{\phi_2}$ in the decomposition of $\ket{\psi}$. One approach is to determine the inner product $c_2 = \langle \phi_2 | \psi_2 \rangle$ and then set $\beta_2 = c_2 r_1$. If we were to estimate this directly using the Hadamard test, this would require $\textsf{con}W_2$ and $\textsf{con}V_2$. The former $\textsf{con}W_2$ can be constructed from the Clifford unitary $W_2$ by controllizing all the gates in $W_2$. However, to construct $\textsf{con}V_2$ via $\LCU$, we would require \emph{controlled-controlled} unitary $U$ which is not available to the algorithm.

Instead, we estimate $\beta_2$ via
\begin{equation}\label{eq:expression_beta2}
\beta_2 = c_2 r_1 = r_1 \la \phi_2 | \psi_2 \ra = \la \phi_2 | \psi \ra - c_1 \la \phi_2 | \phi_1 \ra, 
\end{equation}
where we have used $\ket{\psi_2} = (\ket{\psi} - c_1 \ket{\phi_1})/r_1$ and definition of $c_2:= \la \phi_2 | \psi_2 \ra$ in the third equality. We can estimate $\la \phi_2 | \psi \ra$ via the Hadamard test using $\textsf{con}W_2$ and $\textsf{con}U_\psi$. We can also estimate $\la \phi_2 | \phi_1 \ra$ classically (and exactly) via Lemma~\ref{lem:inner_prod_stab_states}. Using Eq.~\eqref{eq:expression_beta2}, we would then determine $\beta_2$ and can then set $c_2 = \beta_2/r_1$ with $r_1$ having been determined at the end of iteration $1$. Hence, the decomposition the algorithm obtains at the end of this iteration is
$$
\ket{\psi} = \underbrace{\beta_1}_{=c_1} \ket{\phi_1} + \underbrace{\beta_2}_{=c_2 r_1} \ket{\phi_2} + \underbrace{\alpha_3}_{=r_1 r_2} \ket{\phi_2^\perp},
$$
where $\ket{\phi_2^\perp}$ is a state orthogonal to $\ket{\phi_2}$ and $r_2 = \sqrt{1 - |c_2|^2}$ (absorbing the phase into $\ket{\phi_2^\perp}$). 

\paragraph{For subsequent iterations.} In the next iteration ($t=3$), we only proceed if $|\alpha_3| = |r_1 r_2|^2 \geq \varepsilon$ (which can be checked as $r_2$ was obtained in iteration $t=2$) and if $\Exp_{x \sim q_{\Psi_3}}[|\la \psi_3 | W_x | \psi_3 \ra|^2] \geq  \varepsilon$ (estimated using Lemma~\ref{lem:est_gowers3_norm_states}). To ensure access to copies of $\ket{\psi_3} = \ket{\phi_2^\perp}$, we proceed as we had in iteration $t=2$. We prepare $\ket{\psi_3}=V_3 \ket{0^n}$ via $\LCU$ (Lemma~\ref{lemma:lcu_basic}) on the unitary $V_3 = \frac{1}{\alpha_3} \left(U_1 - \beta_1 W_1 - \beta_2 W_2\right)$. The probability of success of $\LCU$ is 
$$
\left(\frac{\norm{V_3 \ket{0}^n}}{(\frac{1}{\alpha_3} + \frac{|\beta_1|}{\alpha_3} + \frac{|\beta_2|}{\alpha_3})}\right)^2 = \left(\frac{\alpha_3}{1 + |\beta_1| + |\beta_2|} \right)^2 \geq \frac{\varepsilon^2}{9},
$$ 
as $V_2 \ket{0^n}$ produces a normalized quantum state $\ket{\psi_3}$ by construction, $|\beta_1| = |c_1| \leq 1, |\beta_2| = |c_2 r_1| \leq 1$, and $|\alpha_3|^2 \geq \varepsilon$ (checked as part of the stopping condition). We are thus now ready to proceed through iteration $t=3$. We thus keep repeating this iterative process until we reach the $k$th iteration where the $\Exp_{x \sim q_{\Psi_k}}[|\la \psi_k | W_x | \psi_k \ra|^2] < \varepsilon^6$ or $|\alpha_k|^2 = \prod_{t=0}^{k-1} |r_j|^2 < \varepsilon$.

\subsubsection{Analysis (error-free)}
\label{sec:errorfreeSelfcorrection}
In this section, we will analyze the correctness, the number of steps the algorithm proceeds for and complexity of the iterative $\Selfcorrection$ procedure assuming $\LCU$, Hadamard test and Gowers-$3$ norm estimation procedures are \emph{error-free} and assume the subroutines succeed with probability $1$. In the section thereafter we will incorporate the errors introduced by these procedures. 
\paragraph{Stopping condition.} 
We now prove that the algorithm stops after $k\leq 1/\eta^2$ iterations.  Suppose we stop after $k$ iterations i.e., when $\Exp_{x \sim q_{\Psi_{k+1}}}[|\la \psi_{k+1} | W_x | \psi_{k+1} \ra|^2] < \varepsilon^6$ or $\prod_{j=1}^{k} |r_j|^2 < \varepsilon$.

Before we delve into the proofs, we first describe some notation that will be convenient. Recall that for every $t\geq 1$, we defined
$$
\ket{\psi_{t+1}}=\frac{\ket{\psi}-\sum_{j=1}^{t}\beta_j\ket{\phi_j}}{\prod_{j=1}^{t}|r_j|}
$$
by Line $(14)$ of the algorithm. Let us denote
$
\ket{\widehat{\psi}_t} = \sum_{j=1}^{t} \beta_j \ket{\phi_j}
$
where $\beta_j = c_j \prod_{i=1}^{j-1} r_i$. Additionally, recall that $\alpha_t=\prod_{j=1}^t|r_j|$ by Line $(13)$ of the algorithm. Using this notation, observe~that
\begin{align}
\label{eq:psit+1}
\ket{\psi_{t+1}}=\frac{\ket{\psi_t}-c_t\ket{\phi_t}}{|r_t|} \text{ and }\ket{\psi_{t+1}}=\frac{\ket{\psi}-\ket{\widehat{\psi}_t}}{\alpha_t}.
\end{align}
Both the expressions of the same state will be useful in the analysis. 
Below, it will be convenient to work with the unnormalized versions of the states $\ket{\psi_t}$, and to this end, denote the $\ket{\Psi_{t+1}} = \prod_{j=1}^{t} r_j \ket{\psi_{t}}=\alpha_t \ket{\psi_t}$. In the algorithm, observe that $\Psi_1 = \ket{\psi}$. Often, we will denote $\ket{\Psi_t}$ as simply $\Psi_t$. We first make a few observations about the iterative $\Selfcorrection$ procedure. Observe that for every $t$, the state $\ket{\psi_{t+1}}$ is orthogonal to the state $\ket{\phi_{t}}$. This is because
\begin{align}
\label{eq:psidecompositiononestep}
\ket{\psi_{t+1}} = \frac{1}{|r_{t}|}\left(\ket{\psi_{t}} - c_t \ket{\phi_{t}} \right),
\end{align}
and its inner product with $|\phi_{t}\rangle$ equals $\langle \phi_t| \psi_{t+1} \rangle = (\langle \phi_{t}|\psi_{t}\rangle -c_{t})/r_t$. Now recall that we let $c_t = \beta_t/\alpha_t$ during the algorithm which can be seen to be
\begin{align}
\label{eq:reformulationofct}
c_{t}=\frac{\beta_t}{\prod_{j=0}^{t-1}r_t}=\frac{\la \phi_t|\big(| \psi \ra - \sum_{j=1}^{t-1} \beta_j | \phi_j\big) \ra}{\prod_{j=0}^{t-1}r_t} = \frac{\la \phi_t|\big(| \psi \ra -| \widehat{\psi}_{t-1}\rangle\big)}{\alpha_{t-1}}=\langle \phi_{t}|\psi_{t}\rangle,
\end{align}
where we used the definitions of $\alpha_t,\ket{\widehat{\psi}_t}$ that we established above and Eq.~\eqref{eq:psit+1}.  Hence $\langle \phi_t| \psi_{t+1} \rangle=0$.
 Alternatively the decomposition in Eq.~\eqref{eq:psidecompositiononestep} be rewritten as
\begin{align}
\label{eq:psit+1psit}
\frac{1}{\prod_{j=1}^{t} r_j}\ket{\Psi_{t+1}} = \frac{1}{|r_{t}|}\left(\frac{1}{\prod_{j=1}^{t-1} r_j}\ket{\Psi_{t}} - c_t \ket{\phi_{t}} \right) \implies
\ket{\Psi_{t+1}} = \ket{\Psi_{t}} - c_t \prod_{j=1}^{t-1} r_j \ket{\phi_t},
\end{align}
where we used above that one can absorb the phase of $r_t$ into $\Psi_{t+1}$ without loss. Using this, one can recursively write Eq.~\eqref{eq:psit+1psit} as 
\begin{align}
\label{eq:defnofPsit}    
\ket{\Psi_t} = \ket{\psi} - \ket{\widehat{\psi}_{t-1}}.
\end{align}
In particular, Eq.~\eqref{eq:reformulationofct} implies that $|c_t|\leq 1$ and since $|r_j|=\sqrt{1-|c_j|^2}\leq 1$ for all $j$, we have that $|\beta_t|\leq 1$ for all $t$. Writing the decomposition in Eq.~\eqref{eq:psidecompositiononestep} out iteratively one can observe that we have decomposed the quantum state $\ket{\psi}$ as~follows 
$$
\ket{\psi} = c_1 \ket{\phi_1} + r_1 c_2 \ket{\phi_2} + r_1 r_2 c_3 \ket{\phi_3} + \cdots +\prod_{i=1}^k r_i \ket{\phi_k^\perp},
$$
where each $|r_j|^2 = 1 - |c_j|^2 \leq 1 - \eta$ since $|c_j|^2 \geq \eta$ by the promise of $\Selfcorrection$ in Theorem~\ref{thm:self_correction}. More concisely one can write $\ket{\psi}$ above as 
\begin{align}
\label{eq:psidecomposition}
\ket{\psi} = \sum_{t=1}^k \beta_t \ket{\phi_t} + \alpha_{k+1} \ket{\phi_k^\perp},
\end{align}
where $\beta_t = c_t \prod_{j=1}^{t-1} r_t$ for $t \leq k$ and $\alpha_{k+1} = \prod_{i=1}^k r_i$. Note that for all $t\leq k$, we have that
$$
|\beta_t| = |c_t \prod_{j=1}^{t-1} r_t| = |c_t| \prod_{j=1}^{t-1} |r_t| \geq \eta
$$
since $|c_t| \geq \sqrt{\eta}$ and $\prod_{j=1}^{t-1} |r_t| \geq \sqrt{\varepsilon} \geq \sqrt{\eta}$ (since $\eta = \varepsilon^{6C}$ for $C>1$) for all $t\leq k$ (by step $(6)$ requirement of the algorithm).

We are now ready to prove  two claims that will give our upper bound on the stopping criterion.

\begin{claim}\label{claim:diff_l2_norm_residuals_iterSC_errorfree}
For $t \leq k$, we have that
$$
\norm{\Psi_t}_2^2 - \norm{\Psi_{t+1}}_2^2 \geq \eta^2
$$    
\end{claim}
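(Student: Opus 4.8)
The plan is to use the Pythagorean decomposition that is already baked into the construction of the residual states. Recall from Eq.~\eqref{eq:psit+1psit}--\eqref{eq:defnofPsit} that $\ket{\Psi_{t+1}} = \ket{\Psi_t} - c_t\big(\prod_{j=1}^{t-1}r_j\big)\ket{\phi_t}$, and that $\ket{\phi_t}$ is orthogonal to $\ket{\Psi_{t+1}}$ (this is Eq.~\eqref{eq:psidecompositiononestep}--\eqref{eq:reformulationofct}, which shows $\langle\phi_t|\psi_{t+1}\rangle=0$, hence $\langle\phi_t|\Psi_{t+1}\rangle=0$). So $\ket{\Psi_t}$ is written as an orthogonal sum: the component $c_t\big(\prod_{j=1}^{t-1}r_j\big)\ket{\phi_t}$ along $\ket{\phi_t}$, plus $\ket{\Psi_{t+1}}$ which is orthogonal to $\ket{\phi_t}$.

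First I would invoke the Pythagorean identity on this orthogonal decomposition to get
\begin{align}
\norm{\Psi_t}_2^2 = \Big|c_t \prod_{j=1}^{t-1} r_j\Big|^2 \cdot \norm{\phi_t}_2^2 + \norm{\Psi_{t+1}}_2^2 = |\beta_t|^2 + \norm{\Psi_{t+1}}_2^2,
\end{align}
using $\norm{\phi_t}_2 = 1$ and the definition $\beta_t = c_t\prod_{j=1}^{t-1}r_j$. Rearranging gives $\norm{\Psi_t}_2^2 - \norm{\Psi_{t+1}}_2^2 = |\beta_t|^2$. Then I would lower bound $|\beta_t|^2 \geq \eta^2$ using the bound already derived right above the claim in the excerpt: for $t\le k$ we have $|\beta_t| = |c_t|\prod_{j=1}^{t-1}|r_j| \geq \sqrt{\eta}\cdot\sqrt{\eta} = \eta$, since $|c_t|^2\geq\eta$ is the $\Selfcorrection$ promise and $\prod_{j=1}^{t-1}|r_j|^2 = |\alpha_t|^2 \geq \varepsilon \geq \eta$ is the stopping-condition check that lets the loop reach iteration $t$. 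Squaring yields $|\beta_t|^2\geq\eta^2$, which completes the proof.

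I do not expect a genuine obstacle here — this is the clean error-free case and everything needed (orthogonality of $\ket{\phi_t}$ to $\ket{\Psi_{t+1}}$, the lower bound $|\beta_t|\geq\eta$) has already been established in the paragraphs immediately preceding the claim. The only point requiring a little care is making sure the orthogonality is applied to the \emph{unnormalized} states $\ket{\Psi_t}$, $\ket{\Psi_{t+1}}$ rather than the normalized $\ket{\psi_t}$; but since $\ket{\Psi_{t+1}}$ is just a nonnegative scalar multiple of $\ket{\psi_{t+1}}$, orthogonality to $\ket{\phi_t}$ transfers immediately. After this claim, summing the telescoping inequality over $t=1,\dots,k$ gives $k\eta^2 \leq \norm{\Psi_1}_2^2 - \norm{\Psi_{k+1}}_2^2 \leq \norm{\psi}_2^2 = 1$, hence $k\leq 1/\eta^2$, matching the bound asserted in the algorithm.
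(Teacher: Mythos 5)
Your proof is correct and follows essentially the same route as the paper's: both use the orthogonality of $\ket{\phi_t}$ to $\ket{\Psi_{t+1}}$ to write $\norm{\Psi_t}_2^2 - \norm{\Psi_{t+1}}_2^2 = \norm{\Psi_t - \Psi_{t+1}}_2^2 = |c_t|^2\prod_{j<t}|r_j|^2 = |\beta_t|^2$, and then lower bound the two factors by $\eta$ via the self-correction promise and the stopping condition. The only cosmetic difference is that you invoke the Pythagorean identity directly on the decomposition of $\ket{\Psi_t}$, whereas the paper expands $\norm{(\Psi_t-\Psi_{t+1})+\Psi_{t+1}}_2^2$ and drops the cross term; these are the same computation.
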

\begin{proof}
Firstly, we note that 
\begin{align*}
    \norm{\Psi_t}_2^2 - \norm{\Psi_{t+1}}_2^2 &= \norm{\Psi_t - \Psi_{t+1} + \Psi_{t+1}}_2^2 - \norm{\Psi_{t+1}}_2^2 \\
    &= \norm{\Psi_t - \Psi_{t+1}}_2^2 + \norm{\Psi_{t+1}}_2^2 - \norm{\Psi_{t+1}}_2^2\\
    &= \norm{\Psi_t - \Psi_{t+1}}_2^2,
\end{align*}
where the second equality follows from $\|a+b\|_2^2=\|a\|_2^2+\|b\|_2^2+2\textsf{Re}(\langle a,b\rangle)$ and noting that $\Psi_t - \Psi_{t+1}$ is orthogonal to $\Psi_{t+1}$, i.e., 
\begin{align}
\label{eq:psitandt+1orthogonal}
\la \Psi_t - \Psi_{t+1} | \Psi_{t+1} \ra=\la \widehat{\psi}_{t} - \widehat{\psi}_{t-1} | \Psi_{t+1} \ra = \beta_{t}^*\langle \phi_t | \Psi_{t+1} \ra=\beta_t^* \prod_{j=1}^{t} r_j^\star \cdot \la \phi_t | \psi_{t+1} \ra = 0
\end{align}
as $\la \phi_t | \psi_{t+1} \ra = 0$ by construction (see around Eq.~\eqref{eq:psidecompositiononestep} for a proof).
To conclude, we observe that
$$
\norm{\Psi_t}_2^2 - \norm{\Psi_{t+1}}_2^2 = \norm{\Psi_t - \Psi_{t+1}}_2^2 = \norm{c_t^\star \prod_{j=1}^{t-1} r_j^\star \ket{\phi_t}}_2^2 = |c_t|^2 \prod_{j=1}^{t-1} |r_j|^2 \geq \eta^2
$$
where we used $|c_t|^2 \geq \eta$ by the promise of self-correction and $\prod_{j=1}^{t-1} |r_j|^2 \geq \eta$ for all $t\leq k$.
\end{proof}

\begin{claim}\label{claim:ub_k_iterSC_errorfree}
$k \leq 1/\eta^2$.    
\end{claim}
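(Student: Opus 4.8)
The plan is to telescope the single-step decrement supplied by Claim~\ref{claim:diff_l2_norm_residuals_iterSC_errorfree}. First I would record the two boundary values of the sequence $\{\norm{\Psi_t}_2^2\}_{t}$: since $\Psi_1 = \ket{\psi}$ is a normalized quantum state we have $\norm{\Psi_1}_2^2 = 1$, and trivially $\norm{\Psi_{k+1}}_2^2 \geq 0$. Next I would invoke Claim~\ref{claim:diff_l2_norm_residuals_iterSC_errorfree}, which gives $\norm{\Psi_t}_2^2 - \norm{\Psi_{t+1}}_2^2 \geq \eta^2$ for every $t \leq k$; note this is applicable for all such $t$ because, by the definition of $k$ as the first iteration at which a stopping condition of Eq.~\eqref{eq:stopping_cond_iterSC} fires, the bounds $|c_t|^2 \geq \eta$ and $\prod_{j=1}^{t-1}|r_j|^2 \geq \eta$ both remain in force throughout the range $1 \leq t \leq k$.

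Summing this inequality over $t = 1,\ldots,k$, the left-hand side telescopes, so that
$$
k\eta^2 \;\leq\; \sum_{t=1}^{k}\big(\norm{\Psi_t}_2^2 - \norm{\Psi_{t+1}}_2^2\big) \;=\; \norm{\Psi_1}_2^2 - \norm{\Psi_{k+1}}_2^2 \;\leq\; 1,
$$
and rearranging yields $k \leq 1/\eta^2$. This is consistent with the a priori cap $t_{\max} = \ceil{1/\eta^2}$ enforced by the loop in Algorithm~\ref{alg:iterSC_error_free}, so in the error-free setting the loop terminates for genuine reasons (a stopping condition) rather than by hitting $t_{\max}$.

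There is essentially no obstacle to this argument; the only point that deserves a line of justification is the applicability of Claim~\ref{claim:diff_l2_norm_residuals_iterSC_errorfree} for \emph{every} index $t$ up to $k$, which I would dispatch exactly as above by appealing to the definition of the stopping index $k$. Everything else is the routine telescoping computation displayed above.
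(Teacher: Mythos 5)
Your proof is correct and follows essentially the same route as the paper's: telescoping the per-step decrement from Claim~\ref{claim:diff_l2_norm_residuals_iterSC_errorfree} and bounding the resulting difference $\norm{\Psi_1}_2^2-\norm{\Psi_{k+1}}_2^2$ by $1$. Your brief justification for why Claim~\ref{claim:diff_l2_norm_residuals_iterSC_errorfree} applies for all $t\le k$ (i.e., the stopping conditions have not fired) is a reasonable addition that the paper leaves implicit.
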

\begin{proof}
Using Claim~\ref{claim:diff_l2_norm_residuals_iterSC_errorfree} and adding over $t=1,\ldots,k$ gives us $\norm{\Psi_1}_2^2 - \norm{\Psi_{k+1}}_2^2 \geq k \eta^2$. Furthermore we have that
$$
\norm{\Psi_1}_2^2 - \norm{\Psi_{k+1}}_2^2 =\|\ket{\psi_1}\|_2^2-\prod_{j=1}^{k+1} |r_j|^2 \|\ket{\psi_{k+2}}\|_2^2 \leq \|\ket{\psi_1}\|_2^2 =1,
$$
where we used that $\Psi_1=\ket{\psi_1}$. Hence, we have that $k \eta^2 \leq 1$ and thus $k \leq 1/\eta^2$.
\end{proof}
This concludes the proof that the algorithm runs for at most $k\leq 1/\eta^2$ steps.

\textbf{Correctness.}
The correctness of the algorithm is immediate. In each step, $\Selfcorrection$ produces a stabilizer state (hence $\ket{\phi_1}.\ldots,\ket{\phi_k}$ are stabilizer states) and at the final step we stop because either $\Exp_{x \sim q_{\Psi_{k+1}}}[|\la \psi_{k+1} | W_x | \psi_{k+1} \ra|^2] < \varepsilon^6$ \emph{or} $|\alpha_{k+1}|^2 = \prod_{t=1}^{k} |r_t|^2 < \varepsilon$. In either case, we have 
$$
|\alpha_{k+1}|^2 \cdot \calF_{\calS}(\ket{\psi_{k+1}}) \leq \prod_{t=1}^{k} |r_t|^2 \cdot \Big(\Exp_{x \sim q_{\Psi_{k+1}}}[|\la \psi_{k+1} | W_x | \psi_{k+1} \ra|^2]\Big)^{1/6} \leq \varepsilon,
$$
where we have used Fact~\ref{fact:lower_bound_stabilizer_fidelity}. This proves the correctness of the theorem statement.

\textbf{Complexity of subroutines.} We now show that the overall cost of implementing $\LCU$, Hadamard test and estimating $\Exp_{x \sim q_\Psi}[|\la \psi | W_x | \psi \ra|^2]$ is $\poly(n,1/\varepsilon)$. With this and the complexity of $\Selfcorrection$ from Theorem~\ref{thm:self_correction}, the overall complexity of (error-free) iterative $\Selfcorrection$~follows. 
In the $t$th iteration we will use Corollary~\ref{cor:LCUfinal} in order to analyze the cost of $\LCU$ to prepare the residual state $\ket{\psi_{t+1}}$ and thereby the unitary $V_{t} = (U_\Psi - \sum_{j=1}^{t-1} \beta_j W_j)/\alpha_{t}$ where $W_j$ is a Clifford unitary preparing the stabilizer state $\ket{\phi_j}$. The probability of success of \LCU~is then
\begin{equation}\label{eq:prob_succ_LCU_iter}
\left(\frac{\norm{V_t \ket{0}^n}}{\frac{1}{\alpha_t} + \sum_{j=1}^{t-1} \frac{|\beta_j|}{\alpha_t}}\right)^2 = \left(\frac{\alpha_t}{1 + \sum_{j=1}^{t-1} |\beta_j|} \right)^2 \geq \frac{\varepsilon}{t^2} \geq \eta^4 \cdot \varepsilon,    
\end{equation}
where we  used $|\alpha_t|^2 \geq \varepsilon$ or we would not have proceeded to the $t$th iteration, $|\beta_i| \leq 1$ for all $i \in [t]$, and $t \leq k \leq 1/\eta^2$ from Claim~\ref{claim:ub_k_iterSC_errorfree}. 
Accounting for the probability of success of $\LCU$~in ~Eq.~\eqref{eq:prob_succ_LCU_iter} and summing over all iterations gives an overall query complexity of $O(1/(\eta^6 \cdot \varepsilon))$ to $\textsf{con}U_\Psi$ and time complexity of  $O(1/(\eta^6 \cdot \varepsilon))\cdot \poly(n)$. 
Using  $\eta=\poly(\varepsilon)$, the overall time complexity (including the cost of $\Selfcorrection$ at each step) is $\poly(n,1/\varepsilon)$.

\subsection{Errors in iterative self-correction}
\label{sec:errorSCsection}
So far we presented an iterative $\Selfcorrection$ algorithm assuming all the subroutines therein are error-free. In this section, we now take into account the errors that occur in the different subroutines and give guarantees on the overall algorithm. This will result in us proving the Theorem~\ref{thm:iterSC_gen} regarding the output of the iterative $\Selfcorrection$ procedure. The resulting algorithm is presented in Algorithm~\ref{alg:iterSC_main}. In order to give more intuition for the algorithm with errors, we structure this section as follows: first we give a high-level idea as to why the error-free algorithm that we described above cannot be \emph{trivially} ``robustified", next we give the new algorithm that can handle errors in the subroutines and describe the first two iterations like before and finally prove the correctness and analysis for the final algorithm.

\textbf{Why its not immediate to generalize the error-free case.} Recall that the  non-trivial aspect in the error-free case analysis was proving that Algorithm~\ref{alg:iterSC_error_free} terminates in $k\leq 1/\eta^2$ many steps. Now, consider an algorithm that incorporates the errors in the estimation steps. Suppose an algorithm with errors terminates in $k^*$ many steps. In Algorithm~\ref{alg:iterSC_error_free}, one of the errors arose from the computation of the $\beta_t$ (in particular $\langle \phi_t|\psi\rangle$). Naively, suppose one estimates these quantities  upto error $\varepsilon/k^*$ (with the  hope that one can use a triangle inequality over $k^*$ many iterations and get an eventual error of at most $\varepsilon$). The main issue here is we \emph{don't know what is $k^*$}.

Recall that in the analysis of the error-free case, we had to \emph{infer} an upper bound on $k^*$ based on the progress that one made in each step (i.e., Claim~\ref{claim:diff_l2_norm_residuals_iterSC_errorfree}). One could still try to replicate this analysis. Ofcourse, one could
 simply assume $k^*\leq O(1/\eta^2)$ like in the error-free case.  Now, if we incorporated errors in computation of $\beta_t$s and consequently $c_t,r_t$s in the proof of Claim~\ref{claim:diff_l2_norm_residuals_iterSC_errorfree}, we lose one property that we used crucially there, i.e., $\Psi_t-\Psi_{t+1}$ is orthogonal to $\Psi_{t+1}$ (in Eq.~\eqref{eq:psitandt+1orthogonal}).  With errors we can only say this quantity has inner product at most $t\varepsilon\cdot \eta^2$ (since need to do a triangle inequality over $(t-1)$ errors that accumulated until step $t$). With this one can only show Claim~\ref{claim:diff_l2_norm_residuals_iterSC_errorfree} is at least $\eta^2-t\varepsilon\eta^2$ and in the following claim, when we sum over $k^*$ iterations, we'd get $k^*\eta^2-(k^*)^2\varepsilon\eta^2$ but this quantity is \emph{always} at most $1$ (for constant $\varepsilon$), so one cannot show \emph{any upper bound} on the number of iterations $k^*$ that the algorithm runs for.

\textbf{Handling errors in the final algorithm.} Intuitively, the simple solution that we use to circumvent the issue mentioned above is, we don't assume \emph{any upper bound} on $k^*$ and instead in each round \emph{treat it as if it's the last round}. By this we mean that, if we are in round $t$, estimate all the quantities upto error $\varepsilon/t$, but if it so happens that the in the following round stopping conditions of $\gowers{\psi_t}{3}$ and $\prod_j|r_j|^2$ were not met (i.e, these quantities were large), we'd go to round $t+1$ and estimate \emph{all} the quantities $\beta_{1},\ldots,\beta_t$ once again with error $\varepsilon/(t+1)$. In particular, this means that in each iteration we'd create an ``error schedule", i.e., in $t$-th round ensure all the relevant quantities are estimated upto error $\varepsilon/(t+1)$.  With the above high-level idea we  show the following:
\begin{enumerate}[$(i)$]
    \item  Recomputing the quantities $\beta_j,c_j,r_j$, the deviation in the $t$-th and $(t')$-th iteration is small. 
    \item  One can show that the procedure above will eventually stop in $k\leq \poly(1/\eta)$ steps,
    \item  The sample and time complexity is polynomially worse than  the error-free case,
    \item Recall that above we only discussed the errors in estimating $\beta_i$s, but the similar idea can be used to show that the new algorithm above is also robust to all the errors that occur in Gowers-$3$ norm estimation, $\Selfcorrection$, Hadamard test and $\LCU$.
\end{enumerate}
Putting all these moving parts together is rather intensive calculations which we will present next. 

\subsubsection{Error-robust algorithm}
We give a brief intuition of the error-robust algorithm before we present the  algorithm itself. Without redescribing the algorithm, we compare it with the error-free $\Selfcorrection$~algorithm. Like in the error-free case, in the $t$th step, the algorithm checks if a $\varepsilon^6/2$-approximation of $\Exp_{x \sim q_{\Psi_t}}[|\la \psi_{t} | W_x | \psi_{t} \ra|^2]$ is at most $\varepsilon^6$ or $\delta$-approximation of $|\alpha_t|^2$ is at most $\varepsilon$. If so, the algorithm breaks. If neither of these conditions are met, we proceed with the $(t+1)$th iteration. At this point, we do the following update step, which will be the main difference between the error-free case and here. Recall that $\ket{\widehat{\psi}_t}=\sum_j\widetilde{\beta}_j^{(t)}\ket{\phi_i}$ and we had that $|\widetilde{\beta}_j^{(t)}-\beta_j|\leq \delta/t$). Now, the algorithm uses more copies of $\ket{\psi}$ and \emph{re-estimates} the values of $\beta_j$ (whose approximations are $\widetilde{\beta}_j$) upto error now $\delta/(t+1)$, i.e., $|\widetilde{\beta}_j^{(t+1)}-\beta_j|\leq \delta/(t+1)$). These new approximations are now referred to as  $\widetilde{\beta}_j^{(t+1)}$ and using these, we also recompute $\widetilde{r}_j^{(t+1)},\alpha_j^{(t+1)}$ and so on. In each step of these re-evaluations, we incorporate the errors coming from applications of $\LCU$, Hadamard test and $\Selfcorrection$.  We do this recomputation step in every iteration since it then allows us to show that the robust $\Selfcorrection$ procedure will terminate after $\poly(1/\eta)$ steps.  Proving that this recomputation step of all the $\beta_j,c_j,r_j,\alpha_j$ still solves the $\Selfcorrection$ task is technical and which we prove below. Let us first describe the different errors and how they will be circumvented in Algorithm~\ref{alg:iterSC_main} and the analysis.

\begin{myalgorithm} \setstretch{1.35}
\begin{algorithm}[H]
    \caption{Robust iterative $\Selfcorrection$}\label{alg:iterSC_main}
    \setlength{\baselineskip}{1.5em} 
    \DontPrintSemicolon 
    \KwInput{$\varepsilon, \in (0,1)$, copies of $n$-qubit state $\ket{\psi}$, access to $U_\psi$ and $\textsf{con}U_\psi$}
    \KwOutput{List of stabilizer states $L = \{\ket{\phi_i}\}_{i \in [k]}$, coefficients $B = \{\beta_i\}_{i \in [k]}$ for some $k \in \mathbb{N}$}
    
    Set $\eta = C_1 (\varepsilon^6/2)^{C_2}$ (with constants $C_1, C_2$ as defined in Theorem~\ref{thm:self_correction}) \\[0.5mm]
    Let $\ket{\psi_1}=\ket{\psi}$, $V_1 = U$, $r_0=1$. \\[0.5mm]
    Set $t_{\mathrm{max}} = \poly(1/\eta)$. Set $\delta = \eta^3/12$. \\
    Set $L=\varnothing$, $B=\varnothing$,  $\widetilde{r}_0=1$.\\[0.5mm]
    \For{$t = 1:t_{\mathrm{max}}$}{
        \lIf{$|\widetilde{\alpha}_t^{(t-1)}|^2 < \varepsilon$}{Let $\ket{\phi^\perp} = \ket{\psi_t}$ and break from loop} 
        Run $\LCU$ to prepare $\ket{\psi_{t}}=V_{t}\ket{0^n}$ where $V_{t} = (U - \sum_{j=1}^t \widetilde{\beta}_j^{(t-1)} W_j)/\widetilde{\alpha}_{t}^{(t-1)}$, for $t\geq 2$ and set $\ket{\psi_t} = \ket{\psi}$ for $t=1$. 
        \label{line_algo:residual_state_prep_iterSC_main}
        \\[0.5mm]
        Obtain $\varepsilon^6/2$-approximate estimate $\nu_t$ of $\Exp_{x \sim q_{\Psi_t}}[|\la \psi_{t} | W_x | \psi_{t} \ra|^2]$ given copies of $\ket{\psi_t}$ using Lemma~\ref{lem:est_gowers3_norm_states} \\[0.5mm]
        \lIf {$\nu_t < \varepsilon^6$}{Let $\ket{\phi^\perp} = \ket{\psi_t}$ and break from loop} 
        Run $\Selfcorrection$ using copies of $\ket{\psi_t}$ to learn $\ket{\phi_t} \in \textsf{Stab}$ s.t. $|\la \phi_t | \psi_t \ra|^2 \geq \eta$. \\[0.5mm]
        Obtain the Clifford unitary $W_t$ that prepares $\ket{\phi_t} = W_t \ket{0^n}$ using Lemma~\ref{lem:clifford_isotropic_subspace}.\\[0.5mm]
        \For{$j=1,\ldots,t$}{
        Estimate $\langle \phi_j|\psi \rangle$ up to error $\delta/(3 t^4)$ via the Hadamard test in Lemma~\ref{lem:hadamardtest} using $\textsf{con}W_j$ and $\textsf{con}U_\psi$. Call this estimate $\zeta_j^{(t)}$.\\[0.5mm]
        Compute $\la \phi_t | \phi_j \ra$  for all $j \in [t-1]$ classically (and exactly) using Lemma~\ref{lem:inner_prod_stab_states}. \\[0.5mm]
        Set $$\widetilde{\beta}_j^{(t)} = \zeta_j^{(t)} - \sum_{i=1}^{j-1} \widetilde{\beta}_i^{(t)} \la \phi_j | \phi_i \ra.$$ \\[0.5mm]
        Set $$\widetilde{c}_j^{(t)} = \widetilde{\beta}_j^{(t)}/\left(\prod_{i=0}^{j-1} \widetilde{r}_i^{(t)}\right), \quad   \widetilde{r}_j^{(t)}=\sqrt{1-|\widetilde{c}_j^{(t)}|^2},\quad \widetilde{\alpha}_{j+1}^{(t)} = \prod_{i=1}^j |\widetilde{r}_i^{(t)}|.$$ \\[0.5mm]
        }
    }
    \Return List of $k \leq O(1/\eta^2)$ stabilizer states $L=\{\ket{\phi_i}\}_{i \in [k]}$ along with their coefficients $B=\{\beta_\ell=c_\ell \prod_{t=1}^{\ell-1} r_t\}_{\ell\in [k]}$ held classically and a quantum state preparation of $\ket{\phi^\perp}$.
\end{algorithm}
\end{myalgorithm}

\subsubsection{Correctness} 
 For notational convenience, we will denote $\widetilde{\beta}_i^{(t)}$ to be the approximation of $\beta_i$ at the $t$th iteration (note that in each iteration, we will improve the error with which we estimate $\beta_i$).  
In more detail, in iteration $t$, we recompute all estimates $\la \phi_j | \psi \ra$ for all $j \in [t]$ up to error $\delta_t = \delta/(3t^4)$ and also re-approximate (with a smaller error) the value of $\beta_i$ which in the error-free case satisfies $\beta_i = c_i \prod_{j=1}^{i-1} r_i$ for $i \leq k$. We will accordingly update our stabilizer rank decomposition $\ket{\widehat{\psi}_j^{(t)}} = \sum_{i=1}^{j-1} \widetilde{\beta}_i^{(t)} \ket{\phi_i}$. The residual states at step $t$ can also then be defined using these new coefficients $\{\widetilde{\beta}_j^{(t)}\}_{j\in [t]}$ as
\begin{align}
\widetilde{\Psi}^{(t)}_j = \ket{\psi} - \sum_{i=1}^{j-1} \widetilde{\beta}_i^{(t)} \ket{\phi_i}, \quad \ket{ \widetilde{\psi}^{(t)}_j} = \frac{\widetilde{\Psi}^{(t)}_j}{\prod_{i=1}^{j-1} \widetilde{r}_i^{(t)}}=\frac{\widetilde{\Psi}^{(t)}_j}{\widetilde{\alpha}_{j}^{(t)}},
\end{align}
where $\widetilde{\Psi}_j^{(t)}$ (resp.~$\widetilde{\psi}_j$) is now the noisy version of $\Psi_j$ (resp.~$\psi_j$) which we had dealt with in the error-free case (Section~\ref{sec:errorfreeSelfcorrection}) as
$$
\Psi_j = \ket{\psi} - \sum_{i=1}^{j-1} \beta_i \ket{\phi_i}, \quad \psi_j = \frac{\Psi_j }{\prod_{i=1}^{j-1} r_i}.
$$
In the notation just introduced, the subscript is used to denote the state of interest from the specified iteration $j \leq t$ and the superscript $t$ is to denote the iteration number in which the re-evaluation is done. In particular, observe that in the $t$th iteration, we will work with the residual state $\widetilde{\Psi}_j^{(t)}$ for all $j\leq t$ by re-approximating these $\beta$s again (this time with error $\delta_t$). We should emphasize that these re-approximated residual states will not actually be reconstructed as part of Algorithm~\ref{alg:iterSC_main} but will be useful during the analysis. In line~\ref{line_algo:residual_state_prep_iterSC_main} of Algorithm~\ref{alg:iterSC_main}, we only construct the most recent residual state required $\ket{\widetilde{\psi}_t^{(t-1)}}$.
With this notation, observe that
\begin{align}
\label{eq:reformulationofctildet}
\widetilde{c}_{j}^{(t)}=\frac{\widetilde{\beta}_j^{(t)}}{\prod_{i=0}^{j-1}\widetilde{r}_i^{(t)}}=\frac{\zeta_j^{(t)}- \sum_{i=1}^{j-1} \widetilde{\beta}_i^{(t)} \langle \phi_j| \phi_i\rangle }{\prod_{i=0}^{j-1}\widetilde{r}_i^{(t)}},
\end{align}
where $\zeta_j^{(t)}$ is the $\delta_t$-approximate estimate of $\la \phi_j | \psi \ra$. We now have the following statement relating the noisy estimates $\widetilde{\beta}_j^{(t)}$, $\widetilde{c}_j^{(t)}$, and $\widetilde{r}_j^{(t)}$.
\begin{claim}\label{claim:c_r_expressions_beta}
Let $t\geq 1$ and $j\leq t$, then we have that
$$
\widetilde{c}_{j+1}^{(t)} = \frac{\widetilde{\beta}_{j+1}^{(t)}}{\sqrt{1 - \sum_{i=1}^j |\widetilde{\beta}_i^{(t)}|^2}}, \quad \left(\widetilde{r}_{j+1}^{(t)}\right)^2 = \frac{1 - \sum_{i=1}^{j+1} |\widetilde{\beta}_i^{(t)}|^2}{1 - \sum_{i=1}^j |\widetilde{\beta}_i^{(t)}|^2}
$$    
\end{claim}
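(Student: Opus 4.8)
The plan is to reduce both displayed identities to the single auxiliary fact
\begin{equation*}
\prod_{i=0}^{j}\bigl(\widetilde{r}_i^{(t)}\bigr)^2 \;=\; 1-\sum_{i=1}^{j}\bigl|\widetilde{\beta}_i^{(t)}\bigr|^2 \qquad \text{for all } 0\le j\le t,
\end{equation*}
which I will then combine with the definitions $\widetilde{c}_{j+1}^{(t)}=\widetilde{\beta}_{j+1}^{(t)}/\prod_{i=0}^{j}\widetilde{r}_i^{(t)}$ and $\widetilde{r}_{j+1}^{(t)}=\sqrt{1-|\widetilde{c}_{j+1}^{(t)}|^2}$ from Algorithm~\ref{alg:iterSC_main}. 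Write $R_j:=\prod_{i=0}^{j}\bigl(\widetilde{r}_i^{(t)}\bigr)^2$; since each $\widetilde{r}_i^{(t)}=\sqrt{1-|\widetilde{c}_i^{(t)}|^2}$ is a nonnegative real, we have $\bigl(\prod_{i=0}^{j}\widetilde{r}_i^{(t)}\bigr)^2=R_j$, and $R_0=1$ by the initialization $\widetilde{r}_0^{(t)}=1$.

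First I would prove the auxiliary fact by induction on $j$. The base case $j=0$ is just $R_0=1$. For the inductive step, assume $R_{j-1}=1-\sum_{i=1}^{j-1}|\widetilde{\beta}_i^{(t)}|^2$. From $\widetilde{c}_j^{(t)}=\widetilde{\beta}_j^{(t)}/\prod_{i=0}^{j-1}\widetilde{r}_i^{(t)}$ one gets $|\widetilde{c}_j^{(t)}|^2=|\widetilde{\beta}_j^{(t)}|^2/R_{j-1}$, and hence
\begin{equation*}
R_j \;=\; R_{j-1}\bigl(\widetilde{r}_j^{(t)}\bigr)^2 \;=\; R_{j-1}\bigl(1-|\widetilde{c}_j^{(t)}|^2\bigr) \;=\; R_{j-1}-|\widetilde{\beta}_j^{(t)}|^2 \;=\; 1-\sum_{i=1}^{j}|\widetilde{\beta}_i^{(t)}|^2,
\end{equation*}
completing the induction.

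Next, since $\prod_{i=0}^{j}\widetilde{r}_i^{(t)}=\sqrt{R_j}$, the first formula is immediate:
\begin{equation*}
\widetilde{c}_{j+1}^{(t)} \;=\; \frac{\widetilde{\beta}_{j+1}^{(t)}}{\prod_{i=0}^{j}\widetilde{r}_i^{(t)}} \;=\; \frac{\widetilde{\beta}_{j+1}^{(t)}}{\sqrt{1-\sum_{i=1}^{j}|\widetilde{\beta}_i^{(t)}|^2}},
\end{equation*}
and the second follows by putting $\bigl(\widetilde{r}_{j+1}^{(t)}\bigr)^2=1-|\widetilde{c}_{j+1}^{(t)}|^2$ over a common denominator:
\begin{equation*}
\bigl(\widetilde{r}_{j+1}^{(t)}\bigr)^2 \;=\; 1-\frac{|\widetilde{\beta}_{j+1}^{(t)}|^2}{1-\sum_{i=1}^{j}|\widetilde{\beta}_i^{(t)}|^2} \;=\; \frac{1-\sum_{i=1}^{j+1}|\widetilde{\beta}_i^{(t)}|^2}{1-\sum_{i=1}^{j}|\widetilde{\beta}_i^{(t)}|^2}.
\end{equation*}

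The computation is routine; the only point that deserves care, rather than a genuine obstacle, is that dividing by $R_{j-1}$ (equivalently by $\prod_{i=0}^{j-1}\widetilde{r}_i^{(t)}$) presupposes $R_{j-1}>0$, that is, $\sum_{i=1}^{j-1}|\widetilde{\beta}_i^{(t)}|^2<1$. I would dispatch this by maintaining it as a loop invariant of Algorithm~\ref{alg:iterSC_main}: it holds for $j=1$, and as long as the algorithm has not terminated, the stopping test on $|\widetilde{\alpha}_t^{(t-1)}|^2$ together with the accuracy of the Hadamard-test estimates of $\langle\phi_i|\psi\rangle$ keeps each $\widetilde{r}_j^{(t)}$ bounded away from $0$, so $R_j$ never vanishes; alternatively one may simply add $\sum_{i=1}^{j}|\widetilde{\beta}_i^{(t)}|^2<1$ as a hypothesis of the claim, since that is all that is used of it in the sequel.
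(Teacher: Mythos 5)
Your proof is correct and takes essentially the same route as the paper: both are inductions on $j$ that rest on the telescoping identity $\prod_{i=0}^{j}(\widetilde{r}_i^{(t)})^2 = 1 - \sum_{i=1}^{j}|\widetilde{\beta}_i^{(t)}|^2$, after which both displayed formulas fall out by substitution. Your organization via the auxiliary quantity $R_j$ is a slightly cleaner packaging of the same induction, and your closing remark about needing $R_{j-1}>0$ (guaranteed by the stopping test $|\widetilde{\alpha}_t^{(t-1)}|^2 \geq \varepsilon$) correctly flags a point the paper leaves implicit.
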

\begin{proof}
Note that $\widetilde{c}^{(t)}_1 = \widetilde{\beta}_1^{(t)}$ and $\widetilde{r}_1^{(t)} = \sqrt{1 - |\widetilde{\beta}_1^{(t)}|^2}$. We will now show the above is true for $\widetilde{c}^{(t)}_2$ and $\widetilde{r}_2^{(t)}$. The general result will then follow from induction.  By the definitions of these quantities in the algorithm, we have that 
$$
\widetilde{c}^{(t)}_2 = \frac{\widetilde{\beta}_2^{(t)}}{\widetilde{r}_1^{(t)}} = \frac{\widetilde{\beta}_2^{(t)}}{\sqrt{1 - |\widetilde{\beta}_1^{(t)}|^2}}, \quad \left(\widetilde{r}^{(t)}_2\right)^2 = 1 - |\widetilde{c}^{(t)}_2|^2 = \frac{1 - |\widetilde{\beta}_1^{(t)}|^2  - |\widetilde{\beta}_2^{(t)}|^2}{1 - |\widetilde{\beta}_1^{(t)}|^2},
$$
which is the claim statement for $j=1$. Assuming the claim statement is true for up to $j = \ell-1$, we will now show this is true for $j=\ell$:
$$
\widetilde{c}^{(t)}_\ell = \frac{\widetilde{\beta}_\ell^{(t)}}{\prod_{i=1}^{\ell-1} \widetilde{r}_{i}^{(t)}} = \frac{\widetilde{\beta}_\ell^{(t)}}{\sqrt{\prod_{m=1}^{\ell-1} \frac{1 - \sum_{i=1}^{m} |\widetilde{\beta}_i^{(t)}|^2}{1 - \sum_{i=1}^{m-1} |\widetilde{\beta}_i^{(t)}|^2}}} = \frac{\widetilde{\beta}_\ell^{(t)}}{\sqrt{1 - \sum_{i=1}^{\ell-1}|\widetilde{\beta}_i^{(t)}|^2}},
$$
where we used that the denominator is a telescoping sum wherein only the first term and last term are not canceled. The result for $(\widetilde{r}_{\ell}^{(t)})^2 = 1 - |\widetilde{c}^{(t)}_\ell|^2$ then follows immediately. This concludes the  induction step.
\end{proof}

We will now comment on the guarantees of how close our estimates are due to the use of an error schedule. Recall from Section~\ref{sec:errorfreeSelfcorrection} (in particular Line~\ref{eq:defnofbetainalgo11} of the algorithm), in our error-free algorithm, $\beta_t$ were given by
\begin{align}
\label{eq:defnofbetainerrorfreecase}
\beta_t = \la \phi_t| \psi \ra - \sum_{j=1}^{t-1} \beta_j \la \phi_t | \phi_j \ra.
\end{align}
We now show how the error schedule ensures that $\widetilde{\beta}_j^{(t)}$ is close to $\beta_j$ across iterations.

\begin{lemma}[Properties of $\beta$s]
\label{lem:properties_beta}
If in Algorithm~\ref{alg:iterSC_main}, we proceed to iteration $t \geq 1$, the following properties are true about $\widetilde{\beta}s$ for all $j \leq t$
\begin{enumerate}[(a)]
    \item $|\widetilde{\beta}^{(t)}_j - \beta_j|\leq \delta/(3t^2)$
    \item $ \Big|\sum_{i=1}^{j-1}|{\beta}_i|^2 - \sum_{i=1}^{j-1}|\widetilde{\beta}_i^{(j-1)}|^2\Big|\leq \delta/t$
    \item $\sum_{i=1}^{j-1}|{\beta}_i|^2\leq 1-\eta +\delta/t $
\end{enumerate}
\end{lemma}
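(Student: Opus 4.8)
## Proof proposal for Lemma~\ref{lem:properties_beta}

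The plan is to prove the three items \emph{simultaneously} by induction on the iteration index $t$, since the bounds on $\widetilde\beta$ feed into each other: the estimate of $\widetilde\beta_j^{(t)}$ is built recursively from $\widetilde\beta_1^{(t)},\dots,\widetilde\beta_{j-1}^{(t)}$ via the defining relation $\widetilde\beta_j^{(t)} = \zeta_j^{(t)} - \sum_{i<j}\widetilde\beta_i^{(t)}\la\phi_j|\phi_i\ra$, and item (c) is needed to guarantee that the denominators $\prod_i \widetilde r_i^{(t)}$ appearing implicitly in the $\LCU$ step and in $\widetilde c_j^{(t)}$ stay bounded away from $0$ (so that the algorithm is well-defined and the $\LCU$ success probability is $\geq \poly(\varepsilon)$). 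So within iteration $t$ I would run a \emph{nested} induction on $j$.

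For item (a): the estimate $\zeta_j^{(t)}$ of $\la\phi_j|\psi\ra$ is within $\delta_t := \delta/(3t^4)$ of the true value by the Hadamard test guarantee (Lemma~\ref{lem:hadamardtest}), and the inner products $\la\phi_j|\phi_i\ra$ are computed \emph{exactly} by Lemma~\ref{lem:inner_prod_stab_states}. Comparing the recursion for $\widetilde\beta_j^{(t)}$ with the error-free recursion $\beta_j = \la\phi_j|\psi\ra - \sum_{i<j}\beta_i\la\phi_j|\phi_i\ra$ of Eq.~\eqref{eq:defnofbetainerrorfreecase}, and using $|\la\phi_j|\phi_i\ra|\leq 1$, I get
$$
|\widetilde\beta_j^{(t)} - \beta_j| \;\leq\; \delta_t \;+\; \sum_{i=1}^{j-1} |\widetilde\beta_i^{(t)} - \beta_i| \;\leq\; \delta_t\Big(1 + \sum_{i=1}^{j-1}(\text{factor from induction})\Big).
$$
Unrolling this linear recursion gives a bound of roughly $\delta_t \cdot 2^{j}$ in the naive analysis, which is too weak; the fix is that in this setting one actually has $\sum_i |\la\phi_j|\phi_i\ra|^2 \leq O(1)$ type control, OR — more robustly — the error schedule was chosen as $\delta/(3t^4)$ precisely so that even a $\poly(t)$ blow-up (there are at most $t$ terms and $k\leq O(1/\eta^2)$, $t\leq t_{\max}=\poly(1/\eta)$) still yields $|\widetilde\beta_j^{(t)}-\beta_j| \leq t\cdot \delta/(3t^4)\cdot \poly(t) \leq \delta/(3t^2)$. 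I would write out the recursion carefully and verify that the exponent $4$ in $\delta_t = \delta/(3t^4)$ absorbs the accumulation; this is the one place to be careful with the polynomial bookkeeping.

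Items (b) and (c) then follow from (a) by telescoping. For (b): $\big|\sum_{i<j}|\beta_i|^2 - \sum_{i<j}|\widetilde\beta_i^{(j-1)}|^2\big| \leq \sum_{i<j}\big||\beta_i|^2-|\widetilde\beta_i^{(j-1)}|^2\big| \leq \sum_{i<j}(|\beta_i|+|\widetilde\beta_i^{(j-1)}|)\,|\beta_i-\widetilde\beta_i^{(j-1)}| \leq \sum_{i<j} 2\cdot \delta/(3(j-1)^2) \leq \delta/t$, using $|\beta_i|,|\widetilde\beta_i|\leq 1$ (the latter from (a) plus $|\beta_i|\leq 1$, which was established in the error-free analysis around Eq.~\eqref{eq:reformulationofct}) and $j-1\leq t$. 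For (c): from the error-free analysis we have $\prod_{i=1}^{j-1}|r_i|^2 = 1 - \sum_{i=1}^{j-1}|\beta_i|^2$ (Claim~\ref{claim:c_r_expressions_beta} applied in the error-free limit, or directly from the recursive structure), and since the algorithm only proceeded past iteration $j-1$ we have $|\alpha_{j-1}|^2 = \prod_{i=1}^{j-1}|r_i|^2 \geq \varepsilon$; moreover each $|r_i|^2 = 1-|c_i|^2 \leq 1-\eta$, so in particular $\prod_{i=1}^{j-1}|r_i|^2 \leq 1-\eta$ once $j\geq 2$, giving $\sum_{i=1}^{j-1}|\beta_i|^2 = 1 - \prod_{i=1}^{j-1}|r_i|^2 \geq \eta$ — wait, we want the \emph{upper} bound: $\sum_{i=1}^{j-1}|\beta_i|^2 = 1 - \prod_{i<j}|r_i|^2 \leq 1 - \eta$ is immediate when at least one factor is present, but to be safe across the boundary case I combine with (b) to get $\sum_{i<j}|\beta_i|^2 \leq \sum_{i<j}|\widetilde\beta_i^{(j-1)}|^2 + \delta/t \leq 1 - \eta + \delta/t$, where the bound $\sum_{i<j}|\widetilde\beta_i^{(j-1)}|^2 \leq 1-\eta$ holds because otherwise $\widetilde r_{j-1}^{(j-1)}$ would have been (near) imaginary / the computed $|\widetilde\alpha_{j}^{(j-1)}|^2$ would fall below $\varepsilon$, contradicting that we entered iteration $t\geq j$.

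The main obstacle I anticipate is not any single step but getting the \emph{error schedule / constants to line up}: one must check that the per-iteration estimation error $\delta/(3t^4)$, when propagated through the $O(t)$-term recursion defining $\widetilde\beta_j^{(t)}$, through the square roots defining $\widetilde r_j^{(t)}$ (which are Lipschitz only when the argument is bounded away from $0$ — this is exactly why (c) is needed and why (c) must be carried along in the induction), and through $t_{\max}=\poly(1/\eta)$ iterations, still leaves total error $\leq \delta/(3t^2)$ as claimed, with $\delta = \eta^3/12$ small enough that the downstream stopping-condition and $\LCU$-success-probability arguments (in the following subsection) go through. Keeping the nested induction honest — (a) at level $j$ uses (a) at levels $<j$, and (c) at level $j$ uses (a),(b) at level $j$ and (c) at levels $<j$ — is the delicate part; once the induction hypotheses are stated precisely, each individual inequality is a one-line triangle-inequality estimate.
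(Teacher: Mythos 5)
Your overall structure tracks the paper's proof closely: item (a) via a triangle inequality on the recursion defining $\widetilde\beta_j^{(t)}$, item (b) by telescoping from (a) using $|\beta_i|,|\widetilde\beta_i|\leq 1$, and item (c) by combining (b) with the observation that the algorithm did not hit the stopping condition $\prod_i(\widetilde r_i^{(j-1)})^2<\varepsilon$ (the paper uses $\geq\eta$ here, which is weaker than $\geq\varepsilon$ but suffices since $\eta\leq\varepsilon$). Items (b) and (c) are essentially correct and match the paper.

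The substantive issue is item (a), and you correctly sensed it. The recursion you and the paper both derive is $e_j\leq\delta_t+\sum_{i<j}e_i$ with $e_1\leq\delta_t$, and unrolling this does indeed give $e_j\leq 2^{j-1}\delta_t$, not a polynomial in $j$: setting $S_j=\sum_{i\leq j}e_i$ yields $S_{j+1}\leq 2S_j+\delta_t$, hence $S_j\leq(2^j-1)\delta_t$ and $e_j\leq 2^{j-1}\delta_t$. The paper's proof at this point asserts ``it can be shown by recursion that $|\widetilde\beta_i^{(t)}-\beta_i|\leq i\cdot\delta_t$'' and then uses this to conclude $|\widetilde\beta_j^{(t)}-\beta_j|\leq j^2\delta_t\leq\delta/(3t^2)$ --- but the claimed linear-in-$i$ bound does not follow from that recursion; indeed substituting the hypothesis $e_i\leq i\delta_t$ back into $e_j\leq\delta_t+\sum_{i<j}e_i$ produces $j^2\delta_t$, not $j\delta_t$, so the asserted induction step is not self-consistent. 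Your instinct that ``naive analysis gives $\sim 2^j\delta_t$'' is thus correct, and the paper's proof as written does not resolve it.

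Where you go wrong is in the proposed fixes. Neither of them works as stated. The Bessel-type inequality $\sum_i|\la\phi_j|\phi_i\ra|^2\leq O(1)$ is not available here: the $\ket{\phi_i}$ are stabilizer states coming from successive runs of $\Selfcorrection$ and there is no established near-orthogonality among them, only that $\ket{\phi_{t}}$ is orthogonal to the residual $\ket{\psi_{t+1}}$. And ``the $t^4$ in $\delta_t=\delta/(3t^4)$ absorbs a $\poly(t)$ blow-up'' cannot save an exponential-in-$j$ accumulation, since $j$ ranges up to $t$, and $2^{t-1}/t^4$ is not bounded by any fixed power of $t$. So neither your vague patches nor the paper's unjustified assertion closes the gap; if one genuinely wants $|\widetilde\beta_j^{(t)}-\beta_j|\leq\delta/(3t^2)$, one would need a better bound on the error accumulation in the recursion defining $\widetilde\beta_j^{(t)}$ (for example additional structure on the Gram matrix of the $\ket{\phi_i}$, or a different estimator for the coefficients that avoids the recursive feed-forward of errors). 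You were right to flag this as the delicate point, but you should not let yourself off the hook by gesturing at ``the exponent $4$ absorbs it''; that claim is false, and worth resolving cleanly rather than hand-waving.
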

\begin{proof}
Let the current iteration be $t$ and $\delta_t = \delta/(3t^4)$. 

$(a)$ For all $j \in [t]$, denote the error in the estimate $\la \phi_j | \psi \ra$ as $\varepsilon_j \in \mathbb{C}$ and $|\varepsilon_j| \leq \delta_t$, then
\begin{align}
\label{eq:tildebetainequality1}
& \widetilde{\beta}^{(t)}_j = \zeta_j^{(t)}  - \sum_{i=1}^{j-1} \widetilde{\beta}^{(t)}_i \la \phi_j | \phi_i \ra \implies |\widetilde{\beta}^{(t)}_j - \beta_j| \leq \delta_t + \sum_{i=1}^{j-1} |\widetilde{\beta}^{(t)}_i - \beta_i|,
\end{align}
where $|\zeta_j^{(t)}-\langle \phi_j|\psi\rangle|\leq \varepsilon_j$ and in the implication, we used Eq.~\eqref{eq:defnofbetainerrorfreecase}  and $|\la \phi_j | \phi_i \ra| \leq 1$. Noting that $|\widetilde{\beta}^{(t)}_1 - \beta_1| = |\varepsilon_1| \leq \delta_t$ and $|\widetilde{\beta}^{(t)}_2 - \beta_2| \leq |\varepsilon_2| + |\widetilde{\beta}^{(t)}_1 - \beta_1| \leq 2\delta_t$. It can be shown by recursion that $|\widetilde{\beta}^{(t)}_i - \beta_i| \leq i \cdot \delta_t$. We thus have that
$$
|\widetilde{\beta}^{(t)}_j - \beta_j| \leq \delta_t + \sum_{i=1}^{j-1} i \cdot \delta_t \leq j^2 \delta_t \leq {\delta/3t^2}  \text{ for all } j \in [t]
$$
That concludes the proof of item $(a)$.

$(b)$ First note that $|\beta_j|\leq 1$ for all $j$ (see the description below Eq.~\eqref{eq:defnofPsit}), which implies that $|\widetilde{\beta}_j^{(t)}|\leq 1+\delta/(3t^2)$ for all $j\leq t$. Hence, we have that
\begin{align}
    \Big||\widetilde{\beta}_i^{(t-1)}|^2 - |{\beta}_i|^2 \Big|\leq  (2+\delta/(3t^2))\cdot   \Big||\widetilde{\beta}_i^{(t-1)}|- |{\beta}_i|\Big|&\leq (2+\delta/(3t^2))\cdot |\widetilde{\beta}_i^{(t-1)}- {\beta}_i| \\
    &\leq (2\delta/3+\delta^2/(9t^2))/(t-1)^2 \leq \delta / (t(t-1)),
\end{align}
where the second inequality used the reverse triangle inequality of $\big||a|-|b|\big|\leq |a-b|$ and the third inequality follows from item $(a)$ proved earlier. This implies
$$
\Big|\sum_{i=1}^{t-1}|\widetilde{\beta}_i^{(t-1)}|^2 - \sum_{i=1}^{t-1}|{\beta}_i|^2 \Big|=\Big|\sum_{i=1}^{t-1}\big(|\widetilde{\beta}_i^{(t-1)}|^2 - |{\beta}_i|^2\big) \Big|\leq \sum_{i=1}^{t-1}\Big||\widetilde{\beta}_i^{(t-1)}|^2 - |{\beta}_i|^2 \Big| \leq \delta /t,
$$
so we have that
\begin{align}
    \label{eq:betatildesumminusbetasum}
    \sum_{i=1}^{t-1}|{\beta}_i|^2 + \delta/t \geq  \sum_{i=1}^{t-1}|\widetilde{\beta}_i^{(t-1)}|^2\geq \sum_{i=1}^{t-1} |\beta_i|^2 -\delta/t   
\end{align}

$(c)$ As we have not stopped before reaching iteration $t$, we have that
\begin{align}
\label{eq:errorinbetatildeeta}
\eta \leq \prod_{i=1}^{t-1} \left(\widetilde{r}_i^{(t-1)}\right)^2 = 1 - \sum_{i=1}^{t-1} |\widetilde{\beta}_i^{(t-1)}|^2.
\end{align}
where we used Claim~\ref{claim:c_r_expressions_beta} for the equality.
Using item $(b)$ proved earlier, we have that
\begin{align}
    \label{eq:implicationofsumbetalarge}
    &\eta \leq \prod_{i=1}^{j-1} \left(\widetilde{r}_i^{(j-1)}\right)^2 = 1 - \sum_{i=1}^{j-1 } |\widetilde{\beta}_i^{(j-1)}|^2\leq 1-\sum_{i=1}^{j-1}|{\beta}_i|^2 +\delta/t \implies \sum_{i=1}^{j-1}|{\beta}_i|^2\leq 1-\eta +\delta/t.
\end{align}
This concludes the proof of the lemma.
\end{proof}

We have the following immediate corollary regarding the estimates of $\widetilde{r}_j^{(t)}$.

\begin{corollary}[Properties about $\widetilde{r}$s]
\label{corr:properties_r}
If in Algorithm~\ref{alg:iterSC_main}, we proceed to iteration $t \geq 1$, the following is true about $\widetilde{r}_j^{(t)}$ for all $j \leq t$
$$
\left|\prod_{i=1}^j \left(\widetilde{r}_i^{(t)}\right)^2 - \prod_{i=1}^j r_i^2 \right| \leq \delta/t
$$
\end{corollary}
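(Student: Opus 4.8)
The plan is to reduce the statement to a bound on the coefficients alone, which Lemma~\ref{lem:properties_beta}(a) already supplies. The first step is to telescope the recursion of Claim~\ref{claim:c_r_expressions_beta}: since $\widetilde{r}_1^{(t)}=\sqrt{1-|\widetilde{\beta}_1^{(t)}|^2}$ and $(\widetilde{r}_{i+1}^{(t)})^2=(1-\sum_{m=1}^{i+1}|\widetilde{\beta}_m^{(t)}|^2)/(1-\sum_{m=1}^{i}|\widetilde{\beta}_m^{(t)}|^2)$, multiplying from $i=1$ to $j$ collapses to
$$
\prod_{i=1}^{j}\bigl(\widetilde{r}_i^{(t)}\bigr)^2 \;=\; 1 - \sum_{i=1}^{j}|\widetilde{\beta}_i^{(t)}|^2 .
$$
The identical computation in the error-free run (i.e.\ the tilde-free analogue of Claim~\ref{claim:c_r_expressions_beta}, which holds by the same recursion $r_1=\sqrt{1-|\beta_1|^2}$, $c_j=\beta_j/\prod_{i<j}r_i$, $r_j=\sqrt{1-|c_j|^2}$) gives $\prod_{i=1}^{j}r_i^2=1-\sum_{i=1}^{j}|\beta_i|^2$. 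Hence the quantity to control is exactly $\bigl|\sum_{i=1}^{j}|\beta_i|^2-\sum_{i=1}^{j}|\widetilde{\beta}_i^{(t)}|^2\bigr|$.

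The second step is a termwise comparison. By Lemma~\ref{lem:properties_beta}(a), for each $i\leq t$ we have $|\widetilde{\beta}_i^{(t)}-\beta_i|\leq \delta/(3t^2)$, and since $|\beta_i|\leq 1$ this forces $|\widetilde{\beta}_i^{(t)}|\leq 1+\delta/(3t^2)$. Using $\bigl||a|^2-|b|^2\bigr|=(|a|+|b|)\,\bigl||a|-|b|\bigr|\leq (|a|+|b|)\,|a-b|$ (reverse triangle inequality), we get $\bigl||\widetilde{\beta}_i^{(t)}|^2-|\beta_i|^2\bigr|\leq \bigl(2+\delta/(3t^2)\bigr)\cdot\delta/(3t^2)$. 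Summing over $i=1,\dots,j$ with $j\leq t$, and using $\delta=\eta^3/12\leq 1$ so that $\bigl(2+\delta/(3t^2)\bigr)/3\leq 7/9<1$,
$$
\Bigl|\sum_{i=1}^{j}|\widetilde{\beta}_i^{(t)}|^2-\sum_{i=1}^{j}|\beta_i|^2\Bigr| \;\leq\; t\cdot\frac{2+\delta/(3t^2)}{3t^2}\,\delta \;\leq\; \frac{\delta}{t}.
$$
Combining this with the two telescoping identities from the first step yields $\bigl|\prod_{i=1}^{j}(\widetilde{r}_i^{(t)})^2-\prod_{i=1}^{j}r_i^2\bigr|\leq \delta/t$, which is the claim.

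There is no real obstacle here beyond bookkeeping. The only point to watch is that the per-coordinate error $\delta/(3t^2)$ in Lemma~\ref{lem:properties_beta}(a) is calibrated precisely so that it survives a sum over the $\leq t$ coordinates with room to spare (the factor $t$ in the numerator cancels one of the two powers of $t$ in the denominator), and that the crude estimate $|\widetilde{\beta}_i^{(t)}|\leq 1+\delta/(3t^2)$ contributes only a harmless constant. Since both the hypothesis of Lemma~\ref{lem:properties_beta} and the conclusion here concern the same iteration index $t$ (the common superscript), no cross-iteration re-indexing is needed.
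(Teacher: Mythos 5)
Your proof is correct and takes essentially the same route as the paper: both reduce to the telescoping identity $\prod_{i=1}^j(\widetilde{r}_i^{(t)})^2 = 1-\sum_{i=1}^j|\widetilde{\beta}_i^{(t)}|^2$ from Claim~\ref{claim:c_r_expressions_beta} (and its tilde-free counterpart), and then compare the partial sums of squared coefficients. The one substantive difference is that the paper cites Lemma~\ref{lem:properties_beta}(b) for the last step, but that item as literally stated controls $\sum_{i=1}^{j-1}|\widetilde{\beta}_i^{(j-1)}|^2$ (superscript $(j-1)$, sum to $j-1$) rather than the $\sum_{i=1}^{j}|\widetilde{\beta}_i^{(t)}|^2$ (superscript $(t)$, sum to $j$) actually needed here; you instead re-derive the required bound directly from Lemma~\ref{lem:properties_beta}(a) — the per-coordinate estimate $|\widetilde{\beta}_i^{(t)}-\beta_i|\leq \delta/(3t^2)$ plus $||a|^2-|b|^2|\leq(|a|+|b|)|a-b|$ — which is exactly what the proof of item (b) does internally and sidesteps the superscript mismatch cleanly.
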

\begin{proof}
The proof follows from Lemma~\ref{lem:properties_beta}$(b)$:
$$
\left|\prod_{i=1}^j \left(\widetilde{r}_i^{(t)}\right)^2 - \prod_{i=1}^j r_i^2 \right| = \left|1 - \sum_{i=1}^j |\widetilde{\beta}_i^{(t)}|^2 - 1 + \sum_{i=1}^j |\beta_i|^2 \right| = \left|\sum_{i=1}^j |\widetilde{\beta}_i^{(t)}|^2 - \sum_{i=1}^j |\beta_i|^2 \right| \leq \delta/t,
$$
proving the corollary statement.
\end{proof}

We now have the following statement regarding the noisy estimates of $\widetilde{c}_j^{(t)}$.
\begin{lemma} [Properties about $\widetilde{c}$]
\label{claim:approxtildec}If in Algorithm~\ref{alg:iterSC_main}, we proceed to iteration $t \geq 1$, the following is true regarding $\widetilde{c}_j^{(t)}$ for all $j \leq t$
$$
\left| |\widetilde{c}_j^{(t)}| - |c_j| \right|\leq \eta^{1.5}/(2t).
$$
\end{lemma}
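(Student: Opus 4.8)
The plan is to reduce the claim to the closed-form expressions for $\widetilde{c}_j^{(t)}$ and $c_j$ in terms of the $\beta$-coefficients from Claim~\ref{claim:c_r_expressions_beta}, and then to estimate a quotient-of-square-roots using the already-established error bounds. Writing $D_j := 1-\sum_{i=1}^{j-1}|\beta_i|^2 = \prod_{i=1}^{j-1}r_i^2$ and $\widetilde{D}_j^{(t)} := 1-\sum_{i=1}^{j-1}|\widetilde{\beta}_i^{(t)}|^2 = \prod_{i=1}^{j-1}(\widetilde{r}_i^{(t)})^2$, Claim~\ref{claim:c_r_expressions_beta} (and its analogue for the error-free quantities) gives $c_j = \beta_j/\sqrt{D_j}$ and $\widetilde{c}_j^{(t)} = \widetilde{\beta}_j^{(t)}/\sqrt{\widetilde{D}_j^{(t)}}$, with the convention $D_1 = \widetilde{D}_1^{(t)} = 1$. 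Hence $\bigl| |\widetilde{c}_j^{(t)}| - |c_j| \bigr| = \bigl| |\widetilde{\beta}_j^{(t)}|/\sqrt{\widetilde{D}_j^{(t)}} - |\beta_j|/\sqrt{D_j} \bigr|$, and the whole task becomes controlling this expression.

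I would first collect the ingredients. From Lemma~\ref{lem:properties_beta}(a), $|\widetilde{\beta}_j^{(t)} - \beta_j| \le \delta/(3t^2)$ for all $j \le t$, and $|\beta_j| \le 1$ by the discussion following Eq.~\eqref{eq:defnofPsit}. Corollary~\ref{corr:properties_r}, applied with index $j-1 \le t$, gives $|\widetilde{D}_j^{(t)} - D_j| = \bigl|\prod_{i=1}^{j-1}(\widetilde{r}_i^{(t)})^2 - \prod_{i=1}^{j-1}r_i^2\bigr| \le \delta/t$ (trivially $0$ when $j=1$). Finally, since we reached iteration $t \ge j$ we passed the stopping check at iteration $j$, so Lemma~\ref{lem:properties_beta}(c) yields $\sum_{i=1}^{j-1}|\beta_i|^2 \le 1-\eta+\delta/t$, i.e.\ $D_j \ge \eta - \delta/t$; since $\delta = \eta^3/12$ this gives $D_j \ge \eta/2$ and, using the bound on $|\widetilde{D}_j^{(t)} - D_j|$, also $\widetilde{D}_j^{(t)} \ge \eta/3$ (again the $j=1$ case is immediate).

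Then I would apply the elementary inequality
$$
\Bigl| \tfrac{|a|}{\sqrt{x}} - \tfrac{|b|}{\sqrt{y}} \Bigr| \;\le\; \frac{|a-b|}{\sqrt{x}} + |b|\cdot\frac{|x-y|}{\sqrt{xy}\,(\sqrt{x}+\sqrt{y})}
$$
with $a = \widetilde{\beta}_j^{(t)}$, $b = \beta_j$, $x = \widetilde{D}_j^{(t)}$, $y = D_j$. Substituting $|a-b| \le \delta/(3t^2)$, $|b| \le 1$, $|x-y| \le \delta/t$ and $x,y \ge \eta/3$ produces a bound of the shape $O\!\bigl(\delta/(t^2\sqrt{\eta})\bigr) + O\!\bigl(\delta/(t\,\eta^{1.5})\bigr)$; plugging in $\delta = \eta^3/12$ turns each summand into $O(\eta^{1.5}/t)$ with small explicit constants (using $\eta < 1$ and $t \ge 1$ to absorb the lower-order first term into the second), and a short arithmetic check confirms the total is below $\eta^{1.5}/(2t)$ — this is exactly why the error schedule was fixed with $\delta = \eta^3/12$.

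The only genuinely delicate point — hence the main obstacle — is the bookkeeping of the denominators: one must verify that both $D_j$ and $\widetilde{D}_j^{(t)}$ remain bounded below by a fixed constant times $\eta$ for every iteration $t$ the algorithm reaches and every $j \le t$, which is precisely where the stopping condition $|\alpha_t^{(t-1)}|^2 \ge \varepsilon$ (together with $\varepsilon \ge \eta$ and $\delta \ll \eta$) is used, via Lemma~\ref{lem:properties_beta}(c). Everything else is a routine triangle-inequality estimate on a ratio, requiring no new input beyond Lemma~\ref{lem:properties_beta} and Corollary~\ref{corr:properties_r}.
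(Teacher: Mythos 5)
Your proposal is correct and follows essentially the same route as the paper: both express $c_j$ and $\widetilde{c}_j^{(t)}$ as $\beta_j/\sqrt{D_j}$ and $\widetilde{\beta}_j^{(t)}/\sqrt{\widetilde{D}_j^{(t)}}$ via Claim~\ref{claim:c_r_expressions_beta}, then combine Lemma~\ref{lem:properties_beta}(a)--(c), Corollary~\ref{corr:properties_r}, the lower bound $D_j, \widetilde{D}_j^{(t)} \gtrsim \eta$ coming from the stopping condition, and the parameter choice $\delta=\eta^3/12$. The only difference is cosmetic: you bound the two-sided deviation in one step using the algebraic identity $\bigl|\tfrac{|a|}{\sqrt{x}}-\tfrac{|b|}{\sqrt{y}}\bigr|\le\tfrac{|a-b|}{\sqrt{x}}+|b|\tfrac{|x-y|}{\sqrt{xy}(\sqrt{x}+\sqrt{y})}$, whereas the paper factors out $\sqrt{D_j}$ and uses the Taylor estimates $1/\sqrt{1\mp x}\lessgtr 1\pm x$ separately for the upper and lower bounds; your version is a modest cleanup but relies on exactly the same inputs.
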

\begin{proof}
Using Claim~\ref{claim:c_r_expressions_beta}, we can write
\begin{align}
\label{eq:ctildeminusc}
|\widetilde{c}_j^{(t)}| - |c_j| &=  \frac{|\widetilde{\beta}_j^{(t)}|}{\sqrt{1 - \sum_{i=1}^{j-1} |\widetilde{\beta}_i^{(t)}|^2}} - \frac{|{\beta}_j|}{\sqrt{1 - \sum_{i=1}^{j-1} |{\beta}_i|^2}}
\end{align}
Using Lemma~\ref{lem:properties_beta}(b), we can bound the denominator in the first term on the right hand side as
\begin{align*}
    \sqrt{1 - \sum_{i=1}^{j-1} |\widetilde{\beta}_i^{(t)}|^2} \geq \sqrt{1 - \sum_{i=1}^{j-1} |\beta_i|^2 - \delta/t} = \sqrt{1 - \sum_{i=1}^{j-1} |\beta_i|^2} \sqrt{1 - \frac{ \delta/t}{1 - \sum_{i=1}^{j-1} |\beta_i|^2}}
\end{align*}
Substituting the above equation into Eq.~\eqref{eq:ctildeminusc} gives us
\begin{align}
|\widetilde{c}_j^{(t)}| - |c_j| 
& \leq \frac{|\widetilde{\beta}_j^{(t)}|}{\sqrt{1 - \sum_{i=1}^{j-1} |\beta_i|^2}} \frac{1}{\sqrt{1 - \frac{ \delta/t}{1 - \sum_{i=1}^{j-1} |\beta_i|^2}}} - \frac{|{\beta}_j|}{\sqrt{1 - \sum_{i=1}^{j-1} |{\beta}_i|^2}} \\
& \leq \frac{|{\beta}_j| + \delta/t^2}{\sqrt{1 - \sum_{i=1}^{j-1} |\beta_i|^2}} \frac{1}{\sqrt{1 - \frac{ \delta/t}{1 - \sum_{i=1}^{j-1} |\beta_i|^2}}} - \frac{|{\beta}_j|}{\sqrt{1 - \sum_{i=1}^{j-1} |{\beta}_i|^2}} \\
& \leq \frac{|{\beta}_j| + \delta/t^2}{\sqrt{1 - \sum_{i=1}^{j-1} |\beta_i|^2}} \left({1 + \frac{ \delta/t}{1 - \sum_{i=1}^{j-1} |\beta_i|^2}}\right) - \frac{|{\beta}_j|}{\sqrt{1 - \sum_{i=1}^{j-1} |{\beta}_i|^2}} \\
& = \frac{\delta/t^2}{\sqrt{1 - \sum_{i=1}^{j-1} |\beta_i|^2}}+  \frac{\delta/t \cdot (|{\beta}_j| + \delta/t^2)}{({1 - \sum_{i=1}^{j-1} |\beta_i|^2})^{3/2}}\\
&\leq \frac{\delta/t^2}{\sqrt{(\eta-\delta/t)}}+  \frac{\delta/t \cdot (|{\beta}_j| + \delta/t^2)}{(\eta-\delta/t)^{3/2}}  \\
&\leq \frac{\eta^3/(12t^2)}{\sqrt{(\eta-\eta^3/(12t))}}+  \frac{2\eta^3/(12t) }{(\eta-\eta^3/(12t))^{3/2}}  \\
&= \frac{\eta^{2.5}/(12t^2)}{\sqrt{(1-\eta^2/(12t))}}+  \frac{2{\eta}^{1.5}/(12t) }{(1-\eta^2/(12t))^{3/2}}  \\
&\leq {\eta}^{1.5}/(2t),
\end{align}
where the first inequality used Lemma~\ref{lem:properties_beta}$(a)$. The third inequality used that $1/\sqrt{1-x} \leq (1+x)$ for $x \in (0,1/2]$ and we implicitly used that 
\begin{align}
\label{eq:validtaylor}
\frac{\delta/t}{1-\sum_i|\beta_i|^2}\leq \frac{\delta/t}{\eta-\delta/t}=\frac{\eta^3/(12t)}{\eta-\eta^3/(12t)}=\frac{\eta^2/(12t)}{1-\eta^2/(12t)}\leq 1/2,
\end{align}
where first inequality above used  Lemma~\ref{lem:properties_beta}$(c)$, first equality used $\delta=\eta^3/12$ (as fixed in the algorithm) and the final inequality used $\eta=\varepsilon^C\leq 2^{-C}$ by assumption that $\varepsilon\leq 1/2$). The fifth inequality again used Lemma~\ref{lem:properties_beta}$(c)$, sixth inequality used that $\delta=\eta^3/12$ and the final inequality used that $\eta \leq 1/2$ again. 

To complete the proof, we now give a lower bound on $|\widetilde{c}_j^{(t)}| - |c_j|$. We now use Lemma~\ref{lem:properties_beta}(b) to lower bound
\begin{align*}
    \sqrt{1 - \sum_{i=1}^{j-1} |\widetilde{\beta}_i^{(t)}|^2} \leq \sqrt{1 - \sum_{i=1}^{j-1} |\beta_i|^2 + \delta/t} = \sqrt{1 - \sum_{i=1}^{j-1} |\beta_i|^2} \sqrt{1 + \frac{ \delta/t}{1 - \sum_{i=1}^{j-1} |\beta_i|^2}}
\end{align*}

After substituting the above equation into Eq.~\eqref{eq:ctildeminusc}, we have that
\begin{align}
|\widetilde{c}_j^{(t)}| - |c_j| &\geq \frac{|\widetilde{\beta}_j^{(t)}|}{\sqrt{1 - \sum_{i=1}^{j-1} |\beta_i|^2}} \frac{1}{\sqrt{1 + \frac{ \delta/t}{1 - \sum_{i=1}^{j-1} |\beta_i|^2}}} - \frac{|{\beta}_j|}{\sqrt{1 - \sum_{i=1}^{j-1} |{\beta}_i|^2}} \\
& \geq \frac{|{\beta}_j| - \delta/t^2}{\sqrt{1 - \sum_{i=1}^{j-1} |\beta_i|^2}} \frac{1}{\sqrt{1 + \frac{ \delta/t}{1 - \sum_{i=1}^{j-1} |\beta_i|^2}}} - \frac{|{\beta}_j|}{\sqrt{1 - \sum_{i=1}^{j-1} |{\beta}_i|^2}} \\
& \geq \frac{|{\beta}_j| - \delta/t^2}{\sqrt{1 - \sum_{i=1}^{j-1} |\beta_i|^2}} \Big({1 - \frac{ \delta/(2t)}{1 - \sum_{i=1}^{j-1} |\beta_i|^2}}\Big) - \frac{|{\beta}_j|}{\sqrt{1 - \sum_{i=1}^{j-1} |{\beta}_i|^2}} \\
& = -\frac{\delta/t^2}{\sqrt{1 - \sum_{i=1}^{j-1} |\beta_i|^2}}-  \frac{\delta/(2t) \cdot (|{\beta}_j| - \delta/t^2)}{({1 - \sum_{i=1}^{j-1} |\beta_i|^2})^{3/2}}\\
& \geq -\frac{\delta/t}{\sqrt{\eta-\delta/t}}-  \frac{\delta/t }{({\eta-\delta/t})^{3/2}}\\
& = -\frac{\eta^3/(12t)}{\sqrt{\eta-\eta^3/(12t)}}-  \frac{\eta^3/(12t) }{({\eta-\eta^3/(12t)})^{3/2}}\\
& = -\frac{\eta^{2.5}/(12t)}{\sqrt{1-\eta^2/(12t)}}-  \frac{\eta^{1.5}/(12t) }{({1-\eta^2/(12t)})^{3/2}}\\
&\geq -\eta^{1.5}/(2t).
\end{align}
where we used Lemma~\ref{lem:properties_beta}$(a)$ in the second line, and $1/\sqrt{1+x}\geq 1-x/2$ for $x\leq 1/2$ (again using Eq.~\eqref{eq:validtaylor} to argue that $x\leq 1/2$ along with the fact that $\eta\leq 1/2$) in the third line. We used Lemma~\ref{lem:properties_beta}$(c)$ in the fifth line to argue 
$$
1 - \sum_{i=1}^{j-1} |\beta_i|^2 \geq \eta - \delta/t \implies -1/\sqrt{1 - \sum_{i=1}^{j-1} |\beta_i|^2} \geq -1/\sqrt{\eta-\delta/t},
$$
used $\delta=\eta^3/12$ in the sixth line, and that $\eta \leq 1/2$ again in the final line. 

We thus have shown
$$
-\eta^{1.5}/(2t)\leq |\widetilde{c}_j^{(t)}| - |c_j| \leq \eta^{1.5}/(2t) \implies  \left| |\widetilde{c}_j^{(t)}| - |c_j| \right|\leq \eta^{1.5}/(2t),
$$
proving the desired result.
\end{proof}

Using the above claim, we will show the promise on $\Selfcorrection$ in any iteration $j$ carries over to future iterations $t > j$. As we estimate $\Exp_{x \sim q_{\Psi_j}}[|\la \psi_{j} | W_x | \psi_{j} \ra|^2]$ for the residual state $\ket{\psi_j}$ (set to $\ket{\widetilde{\psi}_j^{(j-1)}}$) up to error $\varepsilon^6/2$ in iteration $j$, the true value satisfies
$$
\Exp_{x \sim q_{\Psi_j}}[|\la \psi_{j} | W_x | \psi_{j} \ra|^2] \geq \varepsilon^6/2.
$$
This in turn implies for the residual state $\ket{\widetilde{\psi}_j^{(j-1)}}$ that we would run with, we learn a stabilizer state $\ket{\phi_j}$ with fidelity $\eta = \Omega((\varepsilon^6/2)^C)$ for some constant $C>1$ as defined in Algorithm~\ref{alg:iterSC_main}. The promise is now formally stated below.

\begin{corollary}\label{claim:self_correction_promise}
Consider iteration $j \geq 1$. Suppose the algorithm continues to iteration $t > j$. If in iteration $j$, we obtained via $\Selfcorrection$ the stabilizer state $\ket{\phi_j}$ such that $|\la \phi_j | \widetilde{\psi}_j^{(j-1)} \ra|^2 \geq \eta$, then $|\widetilde{c}_j^{(t)}|^2 \geq \eta/2$ for all $t > j$.
\end{corollary}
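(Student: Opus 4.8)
The plan is to bootstrap the desired bound from the $\Selfcorrection$ guarantee --- which by hypothesis concerns the state $\ket{\widetilde{\psi}_j^{(j-1)}}$ that the algorithm actually prepares in iteration $j$ --- by first transferring it to the idealized coefficient $c_j = \la \phi_j | \psi_j \ra$ (with $\psi_j = \Psi_j / \prod_{i<j} r_i$ the error-free residual built from the idealized $\beta_i$, cf.~Eq.~\eqref{eq:reformulationofct}), and then invoking Claim~\ref{claim:approxtildec} to pass from $c_j$ to $\widetilde{c}_j^{(t)}$ for every $t > j$. Concretely it suffices to establish: (i) $\|\widetilde{\psi}_j^{(j-1)} - \psi_j\| = O(\eta^{3/2})$, which together with the hypothesis $|\la \phi_j | \widetilde{\psi}_j^{(j-1)}\ra| \geq \sqrt{\eta}$ gives $|c_j| \geq \sqrt{\eta} - O(\eta^{3/2})$ by the triangle inequality; and (ii) $|\widetilde{c}_j^{(t)}| \geq |c_j| - \eta^{3/2}/(2t) \geq \sqrt{\eta} - O(\eta^{3/2})$ by Claim~\ref{claim:approxtildec}, whence squaring and using that $\eta = C_1(\varepsilon^6/2)^{C_2}$ is small yields $|\widetilde{c}_j^{(t)}|^2 \geq \eta(1 - O(\eta)) \geq \eta/2$.

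For (i) (the case $j=1$ being trivial since $\widetilde{\psi}_1^{(0)} = \psi_1 = \ket{\psi}$), I would decompose $\widetilde{\psi}_j^{(j-1)} - \psi_j = (\prod_{i<j} r_i)^{-1}(\widetilde{\Psi}_j^{(j-1)} - \Psi_j) + \widetilde{\Psi}_j^{(j-1)}\big((\prod_{i<j}\widetilde{r}_i^{(j-1)})^{-1} - (\prod_{i<j} r_i)^{-1}\big)$ and bound the two terms. The numerator difference is $\widetilde{\Psi}_j^{(j-1)} - \Psi_j = \sum_{i<j}(\widetilde{\beta}_i^{(j-1)} - \beta_i)\ket{\phi_i}$, of norm at most $\sum_{i<j}|\widetilde{\beta}_i^{(j-1)} - \beta_i| \leq \delta$ by Lemma~\ref{lem:properties_beta}(a). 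Since the algorithm did not stop at step $j-1$, Eq.~\eqref{eq:errorinbetatildeeta} gives $\prod_{i<j}(\widetilde{r}_i^{(j-1)})^2 = 1 - \sum_{i<j}|\widetilde{\beta}_i^{(j-1)}|^2 \geq \eta$, and combining with Lemma~\ref{lem:properties_beta}(b) gives $\prod_{i<j} r_i^2 = 1 - \sum_{i<j}|\beta_i|^2 \geq \eta - \delta$; moreover $\|\widetilde{\Psi}_j^{(j-1)}\| = \prod_{i<j}|\widetilde{r}_i^{(j-1)}| \leq 1$. Feeding these into Corollary~\ref{corr:properties_r} (which bounds $|\prod_{i<j}(\widetilde{r}_i^{(j-1)})^2 - \prod_{i<j} r_i^2| \leq \delta$, hence $|\prod_{i<j}\widetilde{r}_i^{(j-1)} - \prod_{i<j} r_i| \leq \delta/\sqrt{\eta - \delta}$) bounds both terms by $O(\delta/\eta^{3/2}) = O(\eta^{3/2})$, using $\delta = \eta^3/12$.

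Assembling these estimates proves (i), after which (ii) is a one-line application of Claim~\ref{claim:approxtildec} followed by squaring. The only real content is step (i): carefully controlling the $\ell_2$-distance between the residual state the algorithm actually prepares at iteration $j$ and the idealized error-free residual, which requires chaining together the perturbation bounds on the $\widetilde{\beta}$s and on the products of $\widetilde{r}$s (Lemma~\ref{lem:properties_beta}, Corollary~\ref{corr:properties_r}) with the lower bound $\prod_{i<j}(\widetilde{r}_i^{(j-1)})^2 \geq \eta$ coming from the stopping condition not being triggered at step $j-1$; once this geometric estimate is in hand, the rest follows immediately from Claim~\ref{claim:approxtildec}.
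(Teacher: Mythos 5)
Your proof is correct, but it takes a genuinely different route from the paper's. The paper connects the hypothesis directly to the estimated coefficient $\widetilde{c}_j^{(j-1)}$ via Eq.~\eqref{eq:boundonctildej} (i.e., $|\widetilde{c}_j^{(j-1)}| \in |\la\phi_j|\widetilde\psi_j^{(j-1)}\ra| \pm \delta_{j-1}/\prod_i\widetilde{r}_i$, a one-line consequence of how $\widetilde{c}_j$ is defined from the Hadamard-test estimate), and then applies Lemma~\ref{claim:approxtildec} \emph{twice} — once to $\widetilde{c}_j^{(j-1)}$ and once to $\widetilde{c}_j^{(t)}$, both relative to the idealized $c_j$ — to bound $\bigl||\widetilde{c}_j^{(j-1)}|^2 - |\widetilde{c}_j^{(t)}|^2\bigr|$. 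You instead transfer the hypothesis to $c_j$ by an $\ell_2$-perturbation estimate $\|\widetilde\psi_j^{(j-1)} - \psi_j\|_2 = O(\eta^{3/2})$ and then apply Lemma~\ref{claim:approxtildec} once. Both routes pass through $c_j$ and through Lemma~\ref{claim:approxtildec}; the difference is whether the hypothesis-to-$c_j$ link is made via the algebraic identity Eq.~\eqref{eq:boundonctildej} or via a state-perturbation argument. Your route has the mild advantage of never invoking the cross-iteration object $\widetilde{c}_j^{(j-1)}$, which is slightly awkward because $\phi_j$ is only determined at iteration $j$ (so a ``$(j-1)$''-indexed estimate involving $\phi_j$ is not something the algorithm literally computes, and Lemma~\ref{claim:approxtildec} as stated is restricted to $j \leq t$). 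The trade-off is that you need to supply and prove the perturbation bound, which the paper does not, though all ingredients are already available: the numerator bound follows from Lemma~\ref{lem:properties_beta}(a), the denominator bound from Corollary~\ref{corr:properties_r} together with the stopping condition not triggering at step $j$, and your chaining of these yields $O(\delta/\eta^{3/2}) = O(\eta^{3/2})$ as you state (in fact a slightly tighter $O(\eta^2)$ if one uses $\|\widetilde\Psi_j^{(j-1)}\|_2 \leq \alpha+\delta$ rather than $\leq 1+\delta$, but $O(\eta^{3/2})$ already suffices for the squaring step).
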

\begin{proof}
First recall that in iteration $j$, we use $\Selfcorrection$ on the residual state $\ket{\widetilde{\psi}_j^{(j-1)}}$ to learn the stabilizer state $\ket{\phi_j}$ such that $|\la \phi_j | \widetilde{\psi}_j^{(j-1)} \ra|^2 \geq \eta$. Using Eq.~\eqref{eq:reformulationofctildet}, we can bound
\begin{align}
\left|\widetilde{c}_{j}^{(j)} - \la \phi_j | \widetilde{\psi}_j^{(j-1)}\ra \right| 
&= \left| \frac{\zeta_j^{(j)}- \sum_{i=1}^{j-1} \widetilde{\beta}_i^{(j)} \langle \phi_j| \phi_i\rangle - \la \phi_j | \widetilde{\Psi}_j^{(j-1)}\ra}{\widetilde{\alpha}_t^{(t-1)}} \right| \nonumber\\
&\leq \frac{1}{\widetilde{\alpha}_t^{(t-1)}}\left[ |\zeta_j^{(j)} - \la \phi_j | \psi \ra| + \sum_{i=1}^{j-1}|\widetilde{\beta}_i^{(j)} - \widetilde{\beta}_i^{(j-1)}| \right] \nonumber \\
&\leq \frac{1}{\widetilde{\alpha}_t^{(t-1)}}\left[ |\zeta_j^{(j)} - \la \phi_j | \psi \ra| + \sum_{i=1}^{j-1}|\widetilde{\beta}_i^{(j)} - \beta_i| + \sum_{i=1}^{j-1}|\beta_i - \widetilde{\beta}_i^{(j-1)}| \right] \nonumber \\
&\leq \frac{1}{\sqrt{\eta}}\left[ \delta/(3j^4) + 2\delta/(3(j-1)) \right] \nonumber \\
&\leq \eta^{2.5}/12,
\label{eq:interim_error_tildec}
\end{align}
where we used $\ket{\widetilde{\psi}_j^{(j-1)}} = \widetilde{\Psi}_j^{(j-1)}/\widetilde{\alpha}_j^{(j-1)}$ in the first line, used the expression of $\ket{\widetilde{\psi}_j^{(j-1)}}$ along with the triangle inequality in the second line, and noted that $\zeta_j^{(j)}$ is a $\delta/(3j^4)$-approximate estimate of $\la \phi_j | \psi \ra$ along with using Lemma~\ref{lem:properties_beta}$(a)$ in the fourth line and that $\widetilde{\alpha}_t^{(t-1)} \geq \sqrt{\eta}$ as we have not stopped. Finally, we noted the choice of $\delta = \eta^3/12$. We can now bound 
\begin{align}
\left| |\widetilde{c}_j^{(t)}|^2 - |\la \phi_j | \widetilde{\psi}_j^{(j-1)}\ra|^2 \right| \leq 2\left| |\widetilde{c}_j^{(t)}| - |c_j| \right| + 2\left| |\widetilde{c}_j^{(j)}| - |c_j| \right| + 2 \left| |\widetilde{c}_{j}^{(j)}| - |\la \phi_j | \widetilde{\psi}_j^{(j-1)}\ra| \right| \leq 13 \eta^{1.5}/6,
\end{align}
where we used Lemma~\ref{claim:approxtildec} for bounding the first two terms and Eq.~\eqref{eq:interim_error_tildec} for bounding the third term in the final inequality. This implies that 
$$
|\widetilde{c}_j^{(t)}|^2 \geq |\la \phi_j | \widetilde{\psi}_j^{(j-1)}\ra|^2 - 13 \eta^{1.5}/6 \geq \eta - 13 \eta^{1.5}/6 \geq \eta/2,
$$
where in the last inequality we used that $\varepsilon \in (0,1/2)$.
\end{proof}

We now prove the following claim which ensures that, assuming we are at iteration $t+1$, then the residual state after this round still has large $\ell_2$ norm. The reason we need to prove this is because we will eventually be running $\Selfcorrection$ on the residual state, so we require that the norm of this state is large enough for the output of $\Selfcorrection$ to to make sense.

\begin{claim}\label{claim:high_norm_residual_state}
Recall that $\widetilde{\Psi}_{t+1}^{(t)} = \ket{\psi} - \sum_{i=1}^t\widetilde{\beta}_i^{(t)}\ket{\phi_i}$ and $\ket{\widetilde{\psi}_{t+1}^{(t)}} = \widetilde{\Psi}_{t+1}^{(t)}/\widetilde{\alpha}_{t}^{(t)}$ be the residual state (possibly unnormalized) at the beginning of iteration $t+1$ after having checked the stopping criteria. Then, for each iteration $t \leq k-1$, $\norm{\psi_{t+1}^{(t)}}_2^2 \geq 7/8$.    
\end{claim}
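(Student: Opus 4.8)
The plan is to control the noisy residual state $\ket{\widetilde\psi_{t+1}^{(t)}}$ by comparing it with its error-free counterpart $\Psi_{t+1}=\ket{\psi}-\sum_{i\le t}\beta_i\ket{\phi_i}$, whose norm is known exactly, and then to check that the total perturbation introduced by the error schedule is $O(\delta)$, which is negligible next to the lower bound $|\widetilde\alpha_{t+1}^{(t)}|^2\ge\varepsilon$ guaranteed by the stopping test. The starting point is the identity from Section~\ref{sec:errorfreeSelfcorrection}: each $\ket{\phi_i}$ ($i\le t$) is orthogonal to $\Psi_{t+1}$ (since $\langle\phi_i|\Psi_{t+1}\rangle=0$ by the definition of $\beta_i$), so by Pythagoras $\norm{\Psi_{t+1}}_2^2=1-\sum_{i=1}^t|\beta_i|^2=\prod_{i=1}^t r_i^2$, the last equality being the telescoping relation underlying Claim~\ref{claim:c_r_expressions_beta} applied to the exact quantities.

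First I would bound the numerator: $\widetilde\Psi_{t+1}^{(t)}-\Psi_{t+1}=\sum_{i=1}^t(\beta_i-\widetilde\beta_i^{(t)})\ket{\phi_i}$ with each $\ket{\phi_i}$ of unit norm, so the triangle inequality together with Lemma~\ref{lem:properties_beta}$(a)$, which gives $|\widetilde\beta_i^{(t)}-\beta_i|\le\delta/(3t^2)$ for all $i\le t$, yields $\norm{\widetilde\Psi_{t+1}^{(t)}-\Psi_{t+1}}_2\le\delta/(3t)\le\delta/3$, hence $\bigl|\norm{\widetilde\Psi_{t+1}^{(t)}}_2^2-\norm{\Psi_{t+1}}_2^2\bigr|\le\delta$ using $\norm{\Psi_{t+1}}_2\le 1$. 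Then I would bound the normalization factor: by Corollary~\ref{corr:properties_r}, $\bigl|\prod_{i=1}^t(\widetilde r_i^{(t)})^2-\prod_{i=1}^t r_i^2\bigr|\le\delta/t\le\delta$, so combining the two bounds gives $\norm{\widetilde\Psi_{t+1}^{(t)}}_2^2\ge\prod_{i=1}^t r_i^2-\delta\ge\prod_{i=1}^t(\widetilde r_i^{(t)})^2-2\delta=(\widetilde\alpha_{t+1}^{(t)})^2-2\delta$. Finally, since the algorithm did not terminate at the start of iteration $t+1$, the stopping test ensures $(\widetilde\alpha_{t+1}^{(t)})^2=\prod_{i=1}^t(\widetilde r_i^{(t)})^2\ge\varepsilon$; dividing, $\norm{\psi_{t+1}^{(t)}}_2^2=\norm{\widetilde\Psi_{t+1}^{(t)}}_2^2/(\widetilde\alpha_{t+1}^{(t)})^2\ge 1-2\delta/\varepsilon$. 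Since $\delta=\eta^3/12$ with $\eta=C_1(\varepsilon^6/2)^{C_2}=\poly(\varepsilon)$, the quantity $2\delta/\varepsilon$ is below $1/8$ for $\varepsilon$ under an absolute constant (in particular for $\varepsilon\le 1/2$, which we may assume throughout), so $\norm{\psi_{t+1}^{(t)}}_2^2\ge 7/8$.

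The main point is not any individual estimate but the bookkeeping, and in particular avoiding circularity: this claim will be used downstream to argue that running $\Selfcorrection$ on the residual at iteration $t+1$ is meaningful, so its proof must rely only on quantities available up to and including iteration $t$. The design of Algorithm~\ref{alg:iterSC_main} makes this automatic: at every iteration $t$ all of $\widetilde\beta_1^{(t)},\dots,\widetilde\beta_t^{(t)}$ are re-estimated with error $\delta/(3t^2)$, an error schedule that does not reference how many further iterations occur, so Lemma~\ref{lem:properties_beta}$(a)$ and Corollary~\ref{corr:properties_r} apply unconditionally at iteration $t$ and the argument reduces to the single division above by the stopping-test lower bound $\varepsilon$.
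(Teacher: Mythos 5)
Your proposal is correct and arrives at the same conclusion using the same two ingredients (Lemma~\ref{lem:properties_beta}$(a)$ and Corollary~\ref{corr:properties_r}), but the bookkeeping is organized differently from the paper's proof. The paper compares the \emph{normalized} states directly, writing $\norm{\widetilde{\psi}_{t+1}^{(t)}-\psi_{t+1}}_2\le\sum_j\bigl|\widetilde\beta_j^{(t)}/\widetilde\alpha_t^{(t)}-\beta_j/\alpha\bigr|$ and then applying a quotient-rule expansion to each ratio, which requires separately controlling the numerator error $b_j=\widetilde\beta_j^{(t)}-\beta_j$, the denominator error $a=\widetilde\alpha_t^{(t)}-\alpha$, and lower bounds on both $\alpha$ and $\widetilde\alpha_t^{(t)}$. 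You instead compare the \emph{unnormalized} vectors $\widetilde\Psi_{t+1}^{(t)}$ and $\Psi_{t+1}$, whose difference is literally $\sum_i(\beta_i-\widetilde\beta_i^{(t)})\ket{\phi_i}$ and is bounded purely by Lemma~\ref{lem:properties_beta}$(a)$, and only divide by the single squared normalization $(\widetilde\alpha_{t+1}^{(t)})^2\ge\varepsilon$ at the very end. This is cleaner in one concrete respect: you stay with squared norms throughout, so Corollary~\ref{corr:properties_r} (which bounds $\bigl|\prod(\widetilde r_i^{(t)})^2-\prod r_i^2\bigr|$, not $|\widetilde\alpha_t^{(t)}-\alpha|$ directly) plugs in verbatim, whereas the paper's treatment of $a$ implicitly requires converting a bound on the difference of squares into a bound on the difference of the unsquared normalizations. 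Your version is also more transparent about \emph{where} the lower bound $\varepsilon$ from the stopping test enters (a single division at the end, rather than distributed over the $t$ ratio bounds).

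One statement in your write-up is incorrect and should be removed or replaced. You assert that ``each $\ket{\phi_i}$ ($i\le t$) is orthogonal to $\Psi_{t+1}$'' and invoke Pythagoras. In fact $\langle\phi_i|\Psi_{t+1}\rangle=-\sum_{j=i+1}^t\beta_j\langle\phi_i|\phi_j\rangle$, which vanishes only for $i=t$ (the stabilizer states $\ket{\phi_i}$ are not mutually orthogonal, so the Gram-Schmidt--like construction only orthogonalizes the most recent direction). The identity $\norm{\Psi_{t+1}}_2^2=1-\sum_{i\le t}|\beta_i|^2=\prod_{i\le t}r_i^2$ that you actually use \emph{is} correct, but the right justification is the one-step recursion $\norm{\Psi_{t+1}}_2^2=\norm{\Psi_t}_2^2-|\beta_t|^2$ coming from $\langle\phi_t|\Psi_t\rangle=\beta_t$ and $\Psi_{t+1}=\Psi_t-\beta_t\ket{\phi_t}$, together with the telescoping you already cite from Claim~\ref{claim:c_r_expressions_beta}; equivalently, $\Psi_{t+1}=\alpha_{t+1}\ket{\psi_{t+1}}$ with $\ket{\psi_{t+1}}$ normalized. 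Since the formula you rely on remains valid, this does not create a gap in the argument, but the false ``full orthogonality'' claim would mislead a reader and should be corrected.
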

\begin{proof}
Recall that in the error-free case $\ket{\psi_{t+1}}$ would have been the exact residual state at the end of iteration $t$ and satisfies $\norm{\ket{\psi_{t+1}}}_2 = 1$. We can bound the difference of the norm of the residual state obtained as part of the algorithm $\ket{\widetilde{\psi}_{t+1}^{(t)}}$ from the exact residual state $\ket{\psi_{t+1}}$ as
\begin{equation}
\label{eq:norm_diff_residual_states}
\left|\norm{\ket{\widetilde{\psi}_{t+1}^{(t)}}}_2 - \norm{\ket{\psi_{t+1}}}_2 \right| \leq \norm{\ket{\widetilde{\psi}_{t+1}^{(t)}} - \ket{\psi_{t+1}}}_2 = \norm{ \sum_{i=1}^t \left( \frac{\widetilde{\beta}_i^{(t)}}{\widetilde{\alpha}_{t}^{(t)}} - \frac{\beta_i}{\alpha}\right) \ket{\phi_i}}_2 \leq \sum_{j=1}^t \left| \frac{\widetilde{\beta}_j^{(t)}}{\widetilde{\alpha}_t^{(t)}} - \frac{\beta_j}{\alpha} \right|    
\end{equation}
where we have denoted $\alpha = \prod_{j=1}^t r_j$, $\widetilde{\alpha}_t^{(t)} = \prod_{j=1}^t \widetilde{r}_j^{(t)}$, we used the reverse triangle inequality of $\left| \norm{a} - \norm{b} \right| \leq \norm{a - b}$ in the first inequality and the triangle inequality of $\norm{\cdot}_2$ in the last inequality. Now, observe that for all $i \leq t$:
\begin{align}\label{eq:interim_error_coeffs_phi}
    \left|\frac{\widetilde{\beta}_i^{(t)}}{\widetilde{\alpha}_t^{(t)}} - \frac{\beta_i}{\alpha} \right| = \left|\frac{\beta_i + b_i}{\alpha + a} - \frac{\beta_i}{\alpha} \right| = \left|\frac{\alpha \beta_i + \alpha b_i - \alpha \beta_i - a \beta_i}{\alpha \widetilde{\alpha}_t^{(t)} } \right| \leq \frac{|b_i|}{|\widetilde{\alpha}_t^{(t)}|} + \frac{|a|}{|\alpha| \cdot |\widetilde{\alpha}_t^{(t)}|}
\end{align}
where we have let the error $\widetilde{\beta}_i^{(t)} - \beta_i = b_i$ and  and $\widetilde{\alpha}_t^{(t)} - \alpha = a$ for some $a,b_i \in \mathbb{C}$ which we will comment on shortly. Using Claim~\ref{lem:properties_beta}, we have that $|b_i| \leq \delta/(3t^2)$ and using Corollary~\ref{corr:properties_r}, we have that $|a| \leq \delta/t$. Moreover, as we have proceeded to iteration $t+1$ without stopping, we have that $\widetilde{\alpha}^{(t)} \geq \sqrt{\eta}$. Corollary~\ref{corr:properties_r} implies that $\alpha \geq \widetilde{\alpha}_t^{(t)} - a \geq \sqrt{\eta} - \delta/t \geq \sqrt{\eta}/2$ (since $\delta=O(\eta^3)$). Substituting in Eq.~\eqref{eq:interim_error_coeffs_phi} gives us
\begin{equation}
    \left|\frac{\widetilde{\beta}_i^{(t)}}{\widetilde{\alpha}_t^{(t)}} - \frac{\beta_i}{\alpha} \right| \leq \frac{\delta}{t^2 \sqrt{\eta}} + \frac{2 \delta }{t \eta} \leq \frac{\eta^2}{8 t},
\end{equation}
for the choice of $\delta=\eta^3/24$. Substituting the above result in the right hand side of Eq.~\eqref{eq:norm_diff_residual_states} gives~us
\begin{equation}
    \left|\norm{\tilde{\psi}_{t+1}^{(t)}}_2 - \norm{\psi_{t+1}}_2 \right| \leq \frac{1}{8} \eta^2 \implies \norm{\widetilde{\psi}_{t+1}^{(t)}}_2 \geq 1 - \frac{1}{8} \eta^2 \geq 7/8,
\end{equation}
proving the claim statement.
\end{proof}

We are now ready to argue about the progress on makes in iteration $t$ between the norms of the $\widetilde{\psi}_j^{(t)}$ across different $j$s.

\begin{claim}\label{claim:diff_l2_norm_residuals_iterSC}
Define $$\widetilde{\psi}^{(t)}_j = \ket{\psi} - \sum_{i=1}^{j-1} \widetilde{\beta}_i^{(t)} \ket{\phi_i}$$ and define $\psi_j = \ket{\psi} - \sum_{i=1}^{j-1} \beta_i \ket{\phi_i}$. In iteration $t$ for all $j \leq t$, we have that
\begin{align}
\norm{\widetilde{\psi}^{(t)}_j}_2^2 - \norm{\widetilde{\psi}^{(t)}_{j+1}}_2^2 \geq \eta^2/9.
\end{align}    
\end{claim}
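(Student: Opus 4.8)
The plan is to follow the skeleton of the error-free Claim~\ref{claim:diff_l2_norm_residuals_iterSC_errorfree}, but with the exact orthogonality of consecutive residuals replaced by \emph{approximate} orthogonality at the level of the iteration-$t$ error budget $\delta_t=\delta/(3t^4)$, and then to lower-bound $|\widetilde\beta_j^{(t)}|$ uniformly over $j\le t$. Since $\widetilde\psi^{(t)}_j-\widetilde\psi^{(t)}_{j+1}=\widetilde\beta_j^{(t)}\ket{\phi_j}$ and $\ket{\phi_j}$ is a unit vector, expanding $\|a+b\|_2^2=\|a\|_2^2+\|b\|_2^2+2\,\mathsf{Re}\langle a,b\rangle$ with $a=\widetilde\psi^{(t)}_{j+1}$ gives
\[
\norm{\widetilde\psi^{(t)}_j}_2^2-\norm{\widetilde\psi^{(t)}_{j+1}}_2^2=|\widetilde\beta_j^{(t)}|^2+2\,\mathsf{Re}\!\left(\widetilde\beta_j^{(t)}\,\overline{\langle\phi_j|\widetilde\psi^{(t)}_{j+1}\rangle}\right).
\]
The first step is to observe that the cross term is small: the update rule $\widetilde\beta_j^{(t)}=\zeta_j^{(t)}-\sum_{i<j}\widetilde\beta_i^{(t)}\la\phi_j|\phi_i\ra$ together with $\la\phi_j|\phi_j\ra=1$ forces $\la\phi_j|\widetilde\psi^{(t)}_{j+1}\ra=\la\phi_j|\psi\ra-\zeta_j^{(t)}$, which has magnitude at most the Hadamard-test accuracy $\delta_t$ used at iteration $t$; combined with $|\widetilde\beta_j^{(t)}|\le 2$ (Lemma~\ref{lem:properties_beta}(a) together with $|\beta_j|\le 1$) this yields $\norm{\widetilde\psi^{(t)}_j}_2^2-\norm{\widetilde\psi^{(t)}_{j+1}}_2^2\ge|\widetilde\beta_j^{(t)}|^2-4\delta_t$.

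It then remains to prove $|\widetilde\beta_j^{(t)}|^2\ge\eta^2/4$ for every $j\le t$. For $j<t$ I would factor $\widetilde\beta_j^{(t)}=\widetilde c_j^{(t)}\prod_{i<j}\widetilde r_i^{(t)}$ via Claim~\ref{claim:c_r_expressions_beta} and bound the two factors separately: $|\widetilde c_j^{(t)}|^2\ge\eta/2$ by Corollary~\ref{claim:self_correction_promise} (which applies precisely because $t>j$), while $\prod_{i<j}|\widetilde r_i^{(t)}|^2=1-\sum_{i<j}|\widetilde\beta_i^{(t)}|^2\ge\eta/2$, the latter following from Lemma~\ref{lem:properties_beta}(c) ($\sum_{i<j}|\beta_i|^2\le 1-\eta+\delta/t$) together with the elementary bound $\big||\widetilde\beta_i^{(t)}|^2-|\beta_i|^2\big|\le\delta/t^2$ obtained from the reverse triangle inequality and Lemma~\ref{lem:properties_beta}(a); this gives $|\widetilde\beta_j^{(t)}|^2\ge\eta^2/4$. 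The boundary case $j=t$ needs a separate, direct argument: $\Selfcorrection$ at iteration $t$ returns $\ket{\phi_t}$ with $|\la\phi_t|\psi_t\ra|^2\ge\eta$ for the $\LCU$-prepared normalized residual $\ket{\psi_t}=\widetilde\psi^{(t-1)}_t/\norm{\widetilde\psi^{(t-1)}_t}_2$, so Claim~\ref{claim:high_norm_residual_state} gives $|\la\phi_t|\widetilde\psi^{(t-1)}_t\ra|^2\ge(7/8)\eta$; since the vectors $\widetilde\psi^{(t-1)}_t$ and $\widetilde\psi^{(t)}_t$ are within $\ell_2$-distance at most $\delta$ (Lemma~\ref{lem:properties_beta}(a)), we still have $|\la\phi_t|\widetilde\psi^{(t)}_t\ra|\ge\sqrt{\eta/2}$; and the same rewriting as above, $\widetilde\beta_t^{(t)}=\la\phi_t|\widetilde\psi^{(t)}_t\ra+(\zeta_t^{(t)}-\la\phi_t|\psi\ra)$, then gives $|\widetilde\beta_t^{(t)}|\ge\sqrt{\eta/2}-\delta_t$ and hence $|\widetilde\beta_t^{(t)}|^2\ge\eta/4\ge\eta^2/4$. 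Plugging $|\widetilde\beta_j^{(t)}|^2\ge\eta^2/4$ into the display and using $\delta=\eta^3/12$ and $\eta\le 1/2$ (so that $4\delta_t\le\eta^3/9\le\eta^2/4-\eta^2/9$) finishes the bound at $\eta^2/9$.

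The main obstacle I expect is the boundary case $j=t$: for $j<t$ the self-correction promise was "locked in" at an earlier iteration and carried forward by Corollary~\ref{claim:self_correction_promise}, whereas at $j=t$ one must track, all at once, (i) that the state fed to $\Selfcorrection$ is the $\LCU$-prepared normalized residual rather than $\ket{\psi}$ itself, (ii) the loss of normalization quantified by Claim~\ref{claim:high_norm_residual_state}, and (iii) the drift between the iteration-$(t-1)$ and iteration-$t$ estimates of the $\beta_i$'s — and then verify that each of these costs only lower-order terms in $\eta$. A secondary, purely bookkeeping concern is to confirm that the additive slacks ($4\delta_t$ from the cross term, and the $\delta/t^2$-type errors entering $\prod_{i<j}|\widetilde r_i^{(t)}|^2\ge\eta/2$) leave enough room to land at exactly $\eta^2/9$ rather than merely $\eta^2/4-o(\eta^2)$; this is where the explicit choice $\delta=\eta^3/12$ and the assumption $\varepsilon\le 1/2$ (hence $\eta\le 1/2$) are used.
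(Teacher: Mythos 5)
Your overall strategy matches the paper's: expand $\norm{\widetilde\psi^{(t)}_j}_2^2-\norm{\widetilde\psi^{(t)}_{j+1}}_2^2$ into a main term $|\widetilde\beta_j^{(t)}|^2$ plus a cross term, bound the cross term by $O(\delta_t)$, and bound the main term by $\Omega(\eta^2)$ via Corollary~\ref{claim:self_correction_promise} and Claim~\ref{claim:high_norm_residual_state}. Your handling of the cross term is actually cleaner than the paper's: the identity $\la\phi_j|\widetilde\psi^{(t)}_{j+1}\ra=\la\phi_j|\psi\ra-\zeta_j^{(t)}$ follows directly from the update rule and gives magnitude $\le\delta_t=\delta/(3t^4)$, whereas the paper detours through exact orthogonality for the noiseless residual and a triangle inequality over the $\beta_i-\widetilde\beta_i^{(t)}$ errors, landing at the looser $2\delta/t$. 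You are also right to single out $j=t$ as a boundary case that Corollary~\ref{claim:self_correction_promise} (which requires $t>j$) does not cover; the paper cites that corollary without flagging this.

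However, your boundary-case estimate is off by a factor of $\eta$. You write that $\Selfcorrection$'s promise $|\la\phi_t|\psi_t\ra|^2\ge\eta$ together with Claim~\ref{claim:high_norm_residual_state} gives $|\la\phi_t|\widetilde\psi^{(t-1)}_t\ra|^2\ge(7/8)\eta$. But in the notation of Claim~\ref{claim:diff_l2_norm_residuals_iterSC}, $\widetilde\psi^{(t-1)}_t=\ket\psi-\sum_{i<t}\widetilde\beta_i^{(t-1)}\ket{\phi_i}$ is the \emph{unnormalized} residual, whose $\ell_2$ norm is $\approx\prod_{i<t}|r_i|=\Theta(\sqrt\eta)$, not $\approx 1$. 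The $7/8$ bound in Claim~\ref{claim:high_norm_residual_state} concerns the $\widetilde\alpha$-normalized state $\widetilde\Psi^{(t-1)}_t/\widetilde\alpha^{(t-1)}_{t-1}$, which is a different object. Since $\Selfcorrection$ receives the fully normalized $\ket{\psi_t}=\widetilde\psi^{(t-1)}_t/\norm{\widetilde\psi^{(t-1)}_t}_2$, one gets $|\la\phi_t|\widetilde\psi^{(t-1)}_t\ra|^2=|\la\phi_t|\psi_t\ra|^2\norm{\widetilde\psi^{(t-1)}_t}_2^2\ge\eta\cdot\Omega(\eta)=\Omega(\eta^2)$, one power of $\eta$ lower than you claim. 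Equivalently, your rewriting $\widetilde\beta_t^{(t)}=\la\phi_t|\widetilde\psi^{(t)}_t\ra+(\zeta_t^{(t)}-\la\phi_t|\psi\ra)$ holds only with the unnormalized $\widetilde\psi^{(t)}_t$, so the $\Theta(\sqrt\eta)$ normalization factor cannot be dropped. Fortunately the corrected bound $|\widetilde\beta_t^{(t)}|^2=\Omega(\eta^2)$ still clears $\eta^2/4-o(\eta^2)$ and hence $\eta^2/9$, so the proof survives, but the intermediate step $|\la\phi_t|\widetilde\psi^{(t-1)}_t\ra|^2\ge(7/8)\eta$ and the consequent $|\widetilde\beta_t^{(t)}|^2\ge\eta/4$ are overstatements that need repair.
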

\begin{proof}
Firstly, we note that 
\begin{align}
    \norm{\widetilde{\psi}^{(t)}_j}_2^2 - \norm{\widetilde{\psi}^{(t)}_{j+1}}_2^2 &= \norm{\widetilde{\psi}^{(t)}_j - \widetilde{\psi}^{(t)}_{j+1} + \widetilde{\psi}^{(t)}_{j+1}}_2^2 - \norm{\widetilde{\psi}^{(t)}_{j+1}}_2^2  \\
    &= \norm{\widetilde{\psi}^{(t)}_j - \widetilde{\psi}_{j+1}^{(t)}}_2^2 + 2 \textsf{Re}(\langle \widetilde{\psi}^{(t)}_j - \widetilde{\psi}^{(t)}_{j+1} | \widetilde{\psi}^{(t)}_{j+1} \rangle) + \norm{\widetilde{\psi}^{(t)}_{j+1}}_2^2 - \norm{\widetilde{\psi}^{(t)}_{j+1}}_2^2 \\
    &\geq \norm{\widetilde{\psi}_j^{(t)} - \widetilde{\psi}^{(t)}_{j+1}}_2^2 - 2 |\langle \widetilde{\psi}^{(t)}_j - \widetilde{\psi}^{(t)}_{j+1} | \widetilde{\psi}^{(t)}_{j+1} \ra|,
    \label{eq:L2_norm_diff_states}
\end{align}
where the second equality follows from $\|a+b\|_2^2=\|a\|_2^2+\|b\|_2^2+2\textsf{Re}(\langle a|b\rangle)$. Recall that in the ideal case when all the estimates of $\beta_j$ for all $j \in [k]$ match the true values, we observed in Claim~\ref{claim:diff_l2_norm_residuals_iterSC_errorfree} that $\psi_j^{(t)} - \psi_{j+1}^{(t)}$ is orthogonal to $\psi_{j+1}$ which is no longer true in the noisy case. To this end, observe~that for every $j\in [t]$, we have
\begin{align}
    \left|\Big\langle \widetilde{\psi}_j^{(t)}-\widetilde{\psi}_{j+1}^{(t)}| \widetilde{\psi}_{j+1}^{(t)}\Big\rangle \right| = \left|{\widetilde{\beta}_j^{(t)}} \la \phi_j | \widetilde{\psi}_{j+1}^{(t)} \ra \right| &\leq \left| \la \phi_j |\big( |\psi \ra - \sum_{i=1}^j \widetilde{\beta}_i^{(t)} \ket{ \phi_i}\big)\right| \\
    &=\left| \la \phi_j |\big( |\psi_{j+1} \ra +\sum_{i=1}^j\beta_i\ket{\phi_i} - \sum_{i=1}^j \widetilde{\beta}_i^{(t)} \ket{ \phi_i}\big)\right| \\
    &=\left| \sum_{i=1}^j\beta_i\la \phi_j\ket{\phi_i} - \sum_{i=1}^j \widetilde{\beta}_i^{(t)} \la \phi_j\ket{  \phi_i}\big)\right| \\
    &\leq \sum_{i=1}^j \left|\widetilde{\beta}_i^{(t)} - \beta_i \right| \\
    & \leq \frac{j \delta}{t^2} \leq \frac{2 \delta}{t},
\end{align}
where in the second inequality  we used $\widetilde{\psi}^{(t)}_{j+1} = \ket{\psi} - \sum_{i=1}^{j} \widetilde{\beta}_i^{(t)} \ket{\phi_i}$, $|{\widetilde{\beta}_j^{(t)}}|\leq 1$ and the second equality used that ${\psi}_{j+1} = \ket{\psi} - \sum_{i=1}^{j} {\beta}_i\ket{\phi_i}$, fourth equality that $\ket{\phi_j}$ and $\ket{\psi_{j+1}}$ are orthogonal (see the discussion around Eq.~\eqref{eq:psit+1})  and in the second-to-last inequality we used Lemma~\ref{lem:properties_beta}$(a)$ to comment on the error in $\beta_i$ estimate. Substituting the above into Eq.~\eqref{eq:L2_norm_diff_states} gives us
\begin{align}
\norm{\widetilde{\psi}_j^{(t)}}_2^2 - \norm{\widetilde{\psi}_{j+1}^{(t)}}_2^2 &\geq \norm{\widetilde{\psi}_j^{(t)} - \widetilde{\psi}_{j+1}^{(t)}}_2^2 - 2 |\langle \widetilde{\psi}_j^{(t)} - \widetilde{\psi}_{j+1}^{(t)} | \widetilde{\psi}_{j+1}^{(t)} \ra| \\
&\geq \norm{\widetilde{\beta}_{j}^{(t)}\ket{\phi_j}}_2^2 - \frac{2\delta}{t} \\
&= |\widetilde{c}_{j}^{(t)}|^2 \prod_{i=1}^{j-1} |\widetilde{r}_i^{(t)}|^2 - \frac{2\delta}{t} \\
&\geq \frac{\eta^2}{3} - \frac{2\delta \eta}{3t} - \frac{2\delta}{t} \\
& \geq \frac{\eta^2}{9},
\end{align}
where we used $|\widetilde{c}_j^{(t)}|^2 \geq \eta/2 \norm{\widetilde{\psi}_j^{(j-1)}} > \eta/3$ by the promise of self-correction~(Claim~\ref{claim:self_correction_promise}) and that the norm of the residual state is high~(Claim~\ref{claim:high_norm_residual_state}), and 
$$\prod_{i=1}^{j-1} |\widetilde{r}_i^{(t)}|^2 \geq \prod_{i=1}^{j-1} |r_i|^2 - \delta/t \geq \prod_{i=1}^{j-1} |\widetilde{r}_i^{(j-1)}|^2 - 2\delta/t \geq \eta - 2 \delta/t
$$
follows from the fact that we did not stop before proceeding through iteration $t$ which in particular includes iteration $j$ and using Corollary~\ref{corr:properties_r}. Finally, we conclude with the choice of $\delta = \eta^3/12$.
\end{proof}

With the progress bound in the previous claim, we now give an upper bound on the number of iterations the protocol needs to run before stopping.

\begin{claim}\label{claim:k_ub_iterSC†}
For $\delta=\eta^3/12$, we have $k \leq 9/\eta^2$.    
\end{claim}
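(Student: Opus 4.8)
The plan is to mirror the error-free argument of Claim~\ref{claim:ub_k_iterSC_errorfree}, replacing the exact progress bound of Claim~\ref{claim:diff_l2_norm_residuals_iterSC_errorfree} by its noisy counterpart Claim~\ref{claim:diff_l2_norm_residuals_iterSC}. Suppose, for contradiction, that the algorithm proceeds past iteration $t := \lceil 9/\eta^2 \rceil$; that is, in none of the first $t$ iterations did it break via the check $|\alpha_t^{(t-1)}|^2 < \varepsilon$ or via $\Exp_{x \sim q_{\Psi_t}}[|\la \psi_t | W_x | \psi_t \ra|^2] < \varepsilon^6$. Since this not-yet-stopped hypothesis is exactly what Claim~\ref{claim:diff_l2_norm_residuals_iterSC} presumes, all the recomputed quantities $\widetilde{\beta}_j^{(t)}, \widetilde{c}_j^{(t)}, \widetilde{r}_j^{(t)}$ for $j \leq t$ are available with the iteration-$t$ error schedule, and the bound $\norm{\widetilde{\psi}_j^{(t)}}_2^2 - \norm{\widetilde{\psi}_{j+1}^{(t)}}_2^2 \geq \eta^2/9$ holds for every $j \leq t$.

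First I would telescope this inequality over $j = 1, \ldots, t$: the left-hand sides sum to $\norm{\widetilde{\psi}_1^{(t)}}_2^2 - \norm{\widetilde{\psi}_{t+1}^{(t)}}_2^2$. Since $\widetilde{\psi}_1^{(t)} = \ket{\psi}$ is a unit vector, this equals $1 - \norm{\widetilde{\psi}_{t+1}^{(t)}}_2^2 \leq 1$, whereas the sum of the right-hand sides is $t \cdot \eta^2/9$. Hence
$$
t \cdot \frac{\eta^2}{9} \leq 1, \quad \text{i.e.,} \quad t \leq \frac{9}{\eta^2},
$$
contradicting $t = \lceil 9/\eta^2\rceil$. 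Therefore the algorithm must break in one of the first $\lceil 9/\eta^2\rceil$ iterations, so $k \leq 9/\eta^2$. Note $t_{\mathrm{max}} = \poly(1/\eta)$ is chosen large enough that the loop counter is never the binding constraint.

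The only point requiring care is the applicability of Claim~\ref{claim:diff_l2_norm_residuals_iterSC} at iteration $t$, which in turn relies on Corollary~\ref{claim:self_correction_promise} (the self-correction promise $|\widetilde{c}_j^{(t)}|^2 \geq \eta/2$), Claim~\ref{claim:high_norm_residual_state} (the residual state has $\ell_2$ norm $\geq 7/8$, so that running $\Selfcorrection$ on it is meaningful), and Lemma~\ref{lem:properties_beta} (closeness of the $\widetilde{\beta}_j^{(t)}$ to $\beta_j$) — all valid precisely under the not-yet-stopped hypothesis and the chosen error schedule $\delta = \eta^3/12$. Beyond checking these preconditions, the argument is a one-line telescoping computation, so I do not expect a genuine obstacle here; the real work was already done in establishing the per-step progress bound of Claim~\ref{claim:diff_l2_norm_residuals_iterSC}.
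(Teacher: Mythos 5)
Your argument is correct and is essentially the paper's own: telescope Claim~\ref{claim:diff_l2_norm_residuals_iterSC} over $j=1,\ldots,k$, use $\norm{\widetilde{\psi}_1^{(k)}}_2 = \norm{\ket{\psi}}_2 = 1$, and conclude $k\eta^2/9 \le 1$. The paper states this directly rather than via contradiction — the contradiction framing with $t=\lceil 9/\eta^2\rceil$ is slightly looser (it gives no contradiction when $9/\eta^2$ is an integer, and otherwise only $k\le\lceil 9/\eta^2\rceil$), so the direct form is cleaner, but the substance is identical.
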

\begin{proof}
Now, suppose we stop our algorithm after $k$ iterations. Then, in iteration $k$, using Claim~\ref{claim:diff_l2_norm_residuals_iterSC} and adding over $j=1,\ldots,k$ gives us $\norm{\widetilde{\psi}_1^{(k)}}_2^2 - \norm{\widetilde{\psi}_{k+1}^{(k)}}_2^2 \geq k \eta^2/9$.
Furthermore we have that
$$
\norm{\widetilde{\psi}_1^{(k)}}_2^2 - \norm{\widetilde{\psi}_{k+1}^{(k)}}_2^2 =\|\psi_1\|_2^2-\prod_{j=1}^{k} |\widetilde{r}^{(k)}_j|^2 \|\widetilde{\psi}^{(k)}_{k+1}\|_2^2 \,\, \leq 1,
$$
where we used that $\ket{\widetilde{\psi}^{(k)}_1}=\ket{\psi_1}=\ket{\psi}$ and $\norm{\ket{\psi}}=1$. Hence, we have that
$$
k \cdot \eta^2/9 \leq 1 \implies k \leq 9/\eta^2.
$$
\end{proof} 

\paragraph{Correctness of output.}
The correctness of the algorithm is immediate. In each step, Algorithm~\ref{alg:iterSC_main} produces a stabilizer state (hence $\ket{\phi_1}.\ldots,\ket{\phi_k}$ are stabilizer states) and at the final step we stop because either $\varepsilon^6/2$-estimate of $\Exp_{x \sim q_{\Psi_{k+1}}}[|\la \psi_{k+1} | W_x | \psi_{k+1} \ra|^2]$ is $< \varepsilon^6$ \emph{or} $\delta_k$-estimate of $|\alpha_{k+1}|^2 = \prod_{t=1}^{k} |r_t|^2$ is $< \varepsilon$. In either case, we have 
$$
|\alpha_{k+1}|^2 \cdot \calF_{\calS}(\ket{\psi_{k+1}}) \leq \prod_{t=1}^{k} |r_t|^2 \cdot \Big(\Exp_{x \sim q_{\Psi_{k+1}}}[|\la \psi_{k+1} | W_x | \psi_{k+1} \ra|^2]\Big)^{1/6} < 2 \varepsilon,
$$
where we have used Fact~\ref{fact:lower_bound_stabilizer_fidelity}. Thus, running with $\varepsilon/2$ everywhere would prove the correctness of the theorem statement.

\paragraph{Success probability.}
Let us now comment on the success probability of preparing the state $\ket{\widetilde{\psi}_t^{(t-1)}}$ on which we perform self-correction in iteration $t$. The state $\ket{\widetilde{\psi}_t^{(t-1)}}$ can be 
expressed as
$$
\ket{\widetilde{\psi}_t^{(t-1)}} = \frac{\ket{\psi} - \sum_{j=1}^{t-1} \widetilde{\beta}_j^{(t-1)} \ket{\phi_j}}{\widetilde{\alpha}_t^{(t-1)}},
$$
where $\widetilde{\alpha}_t^{(t-1)} = \prod_{j=1}^{t-1} \widetilde{r}_j^{(t-1)}$. Suppose $V_t$ is the corresponding state preparation unitary i.e., $V_t \ket{0}^n = \ket{\widetilde{\psi}_t^{(t-1)}}$ which we build via $\LCU$. In particular, $V_t$ takes the following form
$$
V_t = \frac{1}{\widetilde{\alpha}_t^{(t-1)}} U_{\Psi} - \sum_{j=1}^{t-1} \frac{\widetilde{\beta}_j^{(t-1)}}{\widetilde{\alpha}_t^{(t-1)}} W_j,
$$
where $W_j$ is the state preparation unitary of $\ket{\phi_j}$ and which we can be constructed using Lemma~\ref{lem:clifford_isotropic_subspace} in $O(n^2)$ time. The success probability of preparing $\ket{\widetilde{\psi}_t^{(t-1)}}$ via $\LCU$ using Corollary~\ref{cor:LCUfinal} is then
\begin{equation}\label{eq:prob_succ_LCU_iter_error}
\left( \frac{\|V_t \ket{0}\|_2}{\frac{1}{\widetilde{\alpha}_t^{(t-1)}}+\sum_{j=1}^{t-1}\frac{|\widetilde{\beta}_j^{(t-1)}|}{\widetilde{\alpha}_t^{(t-1)}}}\right)^2= \left( \frac{\widetilde{\alpha}_t^{(t-1)} \|\widetilde{\psi}_t^{(t-1)}\|_2}{1 + \sum_{j=1}^{t-1} |\widetilde{\beta}_j^{(t-1)}| }\right)^2 \geq \frac{\eta}{4t^2} \geq \frac{\eta^5}{324},    
\end{equation}
where we have used that $\|\widetilde{\psi}_t^{(t-1)}\|_2^2\geq 1/2$ for all $t$ from  Claim~\ref{claim:high_norm_residual_state} and that $\widetilde{\alpha}_t^{(t-1)}\geq \sqrt{\eta}$ from Corollary~\ref{corr:properties_r} since we have run the algorithm for $t$ rounds (so it must have been that this condition is satisfied for all $\widetilde{\alpha}_t^{(j)}$ for $j\leq t$). The last inequality follows from the upper bound on $\kappa$ from Claim~\ref{claim:k_ub_iterSC†}.

\subsubsection{Complexity}
Recall that in iteration $t$, we will recompute all the $\widetilde{\beta}$s from before as well as the current one
$$
\widetilde{\beta}_j^{(t)} = \zeta_j^{(t)} - \sum_{i=1}^{j-1} \widetilde{\beta}_i^{(t)} \la \phi_t | \phi_i \ra, 
$$
which in turn requires $\delta_t=\delta/t^4$-approximate estimates of $\langle \phi_i|\psi\rangle$. So in total that uses $O(t/\delta_t^2)$ of $U_\psi,\textsf{con}U_\psi$ and the Clifford circuits $W_i$ (which produced the stabilizer states $\ket{\phi_i}$). So in total across all iterations the overall cost is
$$
\sum_{t=1}^k t/\delta_t^2=\sum_{t=1}^k t^9/\delta^2\leq k^{10}/\delta^2\leq O(1/\eta^{26})=\poly(1/\varepsilon),
$$
where we used that $k\leq O(1/\eta^2)$ and $\delta=O(\eta^3)$. So this is the cost of getting all the $\widetilde{\beta}$s throughout the protocol.  We also need to consider the contribution to time complexity for preparation of the residual states $\ket{\psi_t}$ for all $t \in [\kappa]$. Accounting for the probability of success of $\LCU$~in ~Eq.~\eqref{eq:prob_succ_LCU_iter_error} and summing over all $O(1/\eta^2)$ iterations gives an overall query complexity of $O(1/\eta^7)$ to $\textsf{con}U_\Psi$ and time complexity of $O(1/\eta^7)\cdot \poly(n)$. 
Using  $\eta=\poly(\varepsilon)$, the overall time complexity (including the cost of $\Selfcorrection$ at each step) is $\poly(n,1/\varepsilon)$.

Putting together all the costs throughout all the iterations, the overall  complexity (which includes uses of the $U_\psi,\textsf{con}U_\psi$, gate complexity) is $\poly(n,1/\varepsilon,\log (1/\upsilon))$ proving our main theorem. 
\subsection{Iterative Stabilizer Bootstrapping}
\label{sec:boostrappingSC}
Depending on the choice of the algorithm $\calA$ in Theorem~\ref{thm:iterSC_gen}, the following are then true.
\begin{restatable}{theorem}{iterSCbootstrap}
\label{thm:iterSC_stab_bootstrap}(Iterative Stabilizer Bootstrapping)
Let $\varepsilon,\upsilon\in (0,1)$, $\eta(\varepsilon)=\varepsilon/2$. Let $\ket{\psi}$ be an unknown $n$-qubit quantum state. There is an algorithm that with probability $\geq 1-\upsilon$, satisfies the following:
Given access to $U_\psi,\textsf{con}U_\psi$, outputs 
$\beta \in \calB_\infty^k$ and stabilizer states $\{\ket{\phi_i}\}_{i\in [k]}$ such that one can write $\ket{\psi}$ as
$$
\ket{\psi}=\sum_{i\in [k]} \beta_i \ket{\phi_i}+\beta_{k+1}\ket{\phi^\perp},
$$
where the residual  state $\ket{\phi^\perp}$ satisfies $|\beta_{k+1}|^2\cdot \stabfidelity{\ket{\phi^\perp}} < \varepsilon$ and $k \leq O(1/\varepsilon^2)$. This algorithm uses the \textsf{Stabilizer Bootstrapping} algorithm (Theorem~\ref{thm:sitanbootstrapping}) $k$ times and the total runtime is 
$$
\poly\big(n,(1/\varepsilon)^{\log(1/\varepsilon)},\log(1/\upsilon)\big).
$$
\end{restatable}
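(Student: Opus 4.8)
The plan is to run the iterative framework of Theorem~\ref{thm:iterSC_gen}---concretely Algorithm~\ref{alg:iterSC_main}---with the base self-correction subroutine $\calA$ taken to be the stabilizer bootstrapping algorithm of Theorem~\ref{thm:sitanbootstrapping}, and with $\eta(\varepsilon)=\varepsilon/2$. The one point requiring care is that Theorem~\ref{thm:iterSC_gen} (and the stopping test of Algorithm~\ref{alg:iterSC_main}) is phrased in terms of the Gowers-$3$ proxy $\Exp_{x\sim q_\Psi}[2^n p_\Psi(x)]$, whereas the natural promise for Theorem~\ref{thm:sitanbootstrapping} is on $\stabfidelity{\cdot}$. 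Bridging the two via the inverse theorem (Theorem~\ref{thm:inverse_weyl_exp_states}) would cost a polynomial in the exponent and force $\eta=\poly(\varepsilon)$ rather than $\varepsilon/2$; to keep $\eta=\varepsilon/2$, I would instead, at iteration $t$, prepare the residual $\ket{\psi_t}$ via $\LCU$ exactly as in Algorithm~\ref{alg:iterSC_main}, run Theorem~\ref{thm:sitanbootstrapping} on $\ket{\psi_t}$ with threshold $\tau=\varepsilon$ and error parameter $\varepsilon/4$ to get a candidate stabilizer state $\ket{\phi_t}$, and estimate $|\la\phi_t|\psi_t\ra|^2$ to additive error $\varepsilon/16$ using classical shadows (Lemma~\ref{lem:classicalshadow}). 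If the estimate is at least $5\varepsilon/8$ we proceed, since then $|\la\phi_t|\psi_t\ra|^2\geq 9\varepsilon/16>\varepsilon/2=\eta$; if it is below $5\varepsilon/8$ then the true overlap is $<3\varepsilon/4=\tau-\varepsilon/4$, so by the contrapositive of the guarantee of Theorem~\ref{thm:sitanbootstrapping} we have $\stabfidelity{\ket{\psi_t}}<\varepsilon$ and we output $\ket{\phi^\perp}=\ket{\psi_t}$ and stop (and, as before, we also stop if the running normalization satisfies $|\alpha_t|^2<\varepsilon$). With this substitution the hypotheses of Theorem~\ref{thm:iterSC_gen} are met with $\eta(\varepsilon)=\varepsilon/2$, so its conclusion applies directly.

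The second step is to note that the convergence and error analysis of Sections~\ref{sec:errorfreeSelfcorrection}--\ref{sec:errorSCsection} is insensitive to the identity of $\calA$: Claims~\ref{claim:diff_l2_norm_residuals_iterSC_errorfree}--\ref{claim:ub_k_iterSC_errorfree} and their noisy counterparts in Section~\ref{sec:errorSCsection} use only the single guarantee $|\la\phi_t|\psi_t\ra|^2\geq\eta$ of a successful call together with the orthogonality $\la\phi_t|\psi_{t+1}\ra=0$ built into the construction of the residual states, so the bound $k\leq O(1/\eta^2)=O(1/\varepsilon^2)$ follows immediately for $\eta=\varepsilon/2$. The error-schedule bookkeeping for re-estimating the coefficients $\beta_j$ (Lemma~\ref{lem:properties_beta}, Corollary~\ref{corr:properties_r}, Lemma~\ref{claim:approxtildec}, Claim~\ref{claim:high_norm_residual_state}, Corollary~\ref{claim:self_correction_promise}), the Hadamard-test estimates of $\la\phi_j|\psi\ra$, and the $\LCU$ success-probability bound in Eq.~\eqref{eq:prob_succ_LCU_iter_error} all carry over verbatim with $\eta=\varepsilon/2$ and $\delta=\Theta(\eta^3)$; the only structural facts those arguments invoke are $\eta\leq\varepsilon\leq 1/2$ and $\delta=\Theta(\eta^3)$, which hold here \emph{a fortiori}.

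For the complexity: each iteration makes $\poly(1/\varepsilon,\log(1/\upsilon))$ calls to Theorem~\ref{thm:sitanbootstrapping} (for success-probability amplification and to supply the error schedule), and each such call costs $\poly(n,1/\varepsilon,(1/\varepsilon)^{\log(1/\varepsilon)})=\poly(n,(1/\varepsilon)^{\log(1/\varepsilon)})$ time and copies; the classical-shadow checks, Hadamard tests, Clifford syntheses and $\LCU$ state preparations add only a $\poly(n,1/\varepsilon,\log(1/\upsilon))$ overhead per iteration; and there are $k\leq O(1/\varepsilon^2)$ iterations. Multiplying these and union-bounding the total failure probability to be at most $\upsilon$ gives overall runtime $\poly(n,(1/\varepsilon)^{\log(1/\varepsilon)},\log(1/\upsilon))$, and the output is exactly the decomposition $\ket{\psi}=\sum_{i\in[k]}\beta_i\ket{\phi_i}+\beta_{k+1}\ket{\phi^\perp}$ with $|\beta_{k+1}|^2\cdot\stabfidelity{\ket{\phi^\perp}}<\varepsilon$: a ``stop'' verdict of the first kind certifies $\stabfidelity{\ket{\psi_{k+1}}}<\varepsilon$ together with $|\beta_{k+1}|^2\leq 1$, while the second stopping rule certifies $|\beta_{k+1}|^2<\varepsilon$ together with $\stabfidelity{\cdot}\leq 1$.

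The main obstacle I anticipate is bookkeeping rather than conceptual. One must check that the promise/stopping-rule swap described above is consistent with every place in Algorithm~\ref{alg:iterSC_main} and Section~\ref{sec:errorSCsection} where the Gowers-proxy threshold $\varepsilon^6$ or the self-correction constants $C_1,C_2$ of Theorem~\ref{thm:self_correction} appear, replacing them by the stabilizer-fidelity threshold $\varepsilon$ and the fixed $\eta=\varepsilon/2$, and verify that each inequality there still holds; and one must ensure the classical-shadows precision feeding the stopping rule is fine enough that a spurious ``stop'' (declaring $\stabfidelity{\ket{\psi_t}}<\varepsilon$ when in fact it is $\geq\varepsilon$) occurs only with probability absorbed into $\upsilon$. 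Neither issue is deep, but both must be discharged carefully.
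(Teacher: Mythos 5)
Your proposal follows the same high-level route as the paper: instantiate the iterative framework of Theorem~\ref{thm:iterSC_gen}/Algorithm~\ref{alg:iterSC_main} with stabilizer bootstrapping (Theorem~\ref{thm:sitanbootstrapping}) as the base subroutine $\calA$, with $\eta(\varepsilon)=\varepsilon/2$, and invoke the convergence and error-schedule analysis of Sections~\ref{sec:errorfreeSelfcorrection}--\ref{sec:errorSCsection} to get $k\leq O(1/\eta^2)=O(1/\varepsilon^2)$. The paper's own proof of this theorem is a short remark that ``all the statements made in Section~\ref{sec:errorSCsection} go through with this modified $\eta$.'' You correctly identify the one place where this is not literally a drop-in substitution: Algorithm~\ref{alg:iterSC_main}'s stopping rule and the hypothesis of Theorem~\ref{thm:iterSC_gen} are phrased via the Gowers-$3$ proxy with threshold $\varepsilon^6$, so when the algorithm does \emph{not} stop the only certified promise on $\ket{\psi_t}$ is $\Exp_{x\sim q_{\Psi_t}}[\cdot]\geq\varepsilon^6$, which by Theorem~\ref{thm:inverse_weyl_exp_states} yields only $\stabfidelity{\ket{\psi_t}}\geq\Omega(\varepsilon^{6C'})$, not $\geq\varepsilon$. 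Feeding that weaker promise to stabilizer bootstrapping would give $\eta=\poly(\varepsilon)$ with a large exponent, hence only $k\leq\poly(1/\varepsilon)$ rather than the $O(1/\varepsilon^2)$ stated in the theorem. Your fix---replace the Gowers-proxy stopping test by running bootstrapping with $\tau=\varepsilon$ and error $\varepsilon/4$, verifying the returned state's overlap via classical shadows (Lemma~\ref{lem:classicalshadow}), and using the contrapositive of Theorem~\ref{thm:sitanbootstrapping} to declare $\stabfidelity{\ket{\psi_t}}<\varepsilon$ and halt when the estimate is small---is exactly the right patch, keeps the output guarantee $|\beta_{k+1}|^2\cdot\stabfidelity{\ket{\phi^\perp}}<\varepsilon$, restores $\eta=\varepsilon/2$ on the iterations that proceed, and adds only a $\poly(n,1/\varepsilon,\log(1/\upsilon))$ overhead, so the final $\poly(n,(1/\varepsilon)^{\log(1/\varepsilon)},\log(1/\upsilon))$ bound is unaffected. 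Your check that the orthogonality $\la\phi_t|\psi_{t+1}\ra=0$ and the error-schedule lemmas (Lemma~\ref{lem:properties_beta}, Corollary~\ref{corr:properties_r}, Claims~\ref{claim:approxtildec}--\ref{claim:k_ub_iterSC†}) use only $|\la\phi_t|\psi_t\ra|^2\geq\eta$ and $\delta=\Theta(\eta^3)$, not any property specific to $\Selfcorrection$, is correct. In short: the approach matches the paper, and you have supplied a concrete resolution of a subtlety the paper leaves implicit.
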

In the previous sections, we used the $\Selfcorrection$ protocol iteratively to learn a structured decomposition of $\ket{\psi}$ leading to the proof. However, instead of using the $\Selfcorrection$ protocol as our base algorithm, we could have used the agnostic learner of stabilizer states in Theorem~\ref{thm:sitanbootstrapping} based on stabilizer bootstrapping~\cite{chen2024stabilizer}. The promise in each iteration is then $\eta(\varepsilon) = \varepsilon/2$ (where we set the error corresponding to Theorem~\ref{thm:sitanbootstrapping} to $\varepsilon/2$). Note that this promise is better than that of $\Selfcorrection$ which was $\varepsilon^C$ (for $C > 1$) but comes at the cost of a quasi-polynomial time complexity of $\poly(n) (1/\varepsilon)^{O(1/\varepsilon)}$. It can then be shown that all the statements made in Section~\ref{sec:errorSCsection} go through with this modified $\eta$ and leads to the proof of Theorem~\ref{thm:iterSC_stab_bootstrap}.

\section{Applications}
\label{sec:applications}
We have so far discussed how to learn a structured decomposition $\ket{\widetilde{\psi}}$ of any $n$-qubit quantum state $\ket{\psi}$ (Section~\ref{sec:iteratedselfcorrection}). There are two natural questions, what is the utility of such a decomposition and secondly, what if $\ket{\psi}$ itself was \emph{structured}, could we say anything about the properties of $\ket{\widetilde{\psi}}$? We answer both these questions below by giving a few applications of our main result.
\begin{enumerate}
    \item We first show the following: suppose we want to compute the inner product of an arbitrary state $\ket{\psi}$ with $\ket{\phi}$ where $\ket{\phi}$ has low stabilizer extent, then one can approximate this inner product by  using the ``mimicking state" $\ket{\widetilde{\psi}}$ in place of $\ket{\psi}$. 
    \item We give an algorithm to learn states with low stabilizer extent (up to constant trace distance).
\end{enumerate}
We state these results explicitly in the subsections below and give their proofs. 

\subsection{Mimicking state for  estimating stabilizer-extent fidelities}
\label{sec:mimickingfidelity}
We essentially show that for all states $\ket{\phi}$ with low stabilizer extent, if $\ket{\widetilde{\psi}}$ is the output of iterative $\Selfcorrection$ on input $\ket{\psi}$, then $\langle \phi|\psi\rangle$ is close to $\langle \phi |\widetilde{\psi}\rangle$ and also the stabilizer fidelities of $\ket{\psi}$ and $\ket{\widetilde{\psi}}$ are close. This is formalized in the lemma below.
\begin{lemma}
\label{cor:low_stab_extent_inner_prod}
Let $\xi > 0, \varepsilon' \in (0,1]$, $\eta=(\varepsilon'/\xi)^{2C}$, and $\calC(\xi)$ be the set of $n$-qubit states with stabilizer extent at most $\xi$. Assume the algorithmic $\PFR$ conjecture. For every $n$-qubit  state $\ket{\psi}$, we can find a~stabilizer-rank $(1/\eta^2)$ state~$\ket{\widetilde{\psi}}$ (up to a normalization) using $\poly(n,\xi,1/\varepsilon')$ queries to $U_\psi,\textsf{con}U_\psi$, copies of $\ket{\psi}$ and in $\poly(n,k,1/\varepsilon')$ time such~that
\begin{itemize}
    \item[$(i)$] $\left|\langle \phi|\psi\rangle - \langle \phi|\widetilde{\psi}\ra\right| \leq \varepsilon'$ for every  $\ket{\phi} \in \calC(\xi)$, 
    \item[$(ii)$] $\left|\calF_{\calC(\xi)}(\ket{\widetilde{\psi}}) - \calF_{\calC(\xi)}(\ket{\psi}) \right| \leq 3 \varepsilon',$
\end{itemize}
where $\calF_{\calC(\xi)}(\ket{\psi}) = \max_{\ket{\phi} \in \calC(\xi)} |\la \phi | \widetilde{\psi} \ra|^2$.
\end{lemma}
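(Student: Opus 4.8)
The plan is to reduce the statement to the iterated self-correction guarantee of Theorem~\ref{thm:iterSC_gen}, instantiated with the $\Selfcorrection$ procedure of Theorem~\ref{thm:self_correction} as the base algorithm, and then to upgrade the resulting stabilizer-level estimate to all low-extent states via a single triangle inequality. Concretely, first I would run the iterated self-correction algorithm on $\ket{\psi}$ with base algorithm $\calA$ the $\Selfcorrection$ procedure (so $\eta(\varepsilon)=\Omega(\varepsilon^{C})$ by Theorem~\ref{thm:restatementofselfcorr}) and error parameter $\varepsilon := (\varepsilon'/\xi)^2$; since $\xi\ge 1$ and $\varepsilon'\le 1$ we have $\varepsilon\in(0,1]$, $\calA$ runs in $\poly(n,1/\varepsilon)$ time and is invoked $k\le O(1/\eta(\varepsilon)^2)=O((\xi/\varepsilon')^{4C})$ times, so the whole procedure costs $\poly(n,\xi,1/\varepsilon',\log(1/\upsilon))$ queries to $U_\psi,\textsf{con}U_\psi$ and copies of $\ket{\psi}$. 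This produces stabilizer states $\{\ket{\phi_i}\}_{i\in[k]}$ and coefficients $\beta\in\calB_\infty^k,\alpha\in\calB_\infty$ with $\ket{\psi}=\sum_{i\in[k]}\beta_i\ket{\phi_i}+\alpha\ket{\phi^\perp}$ and $|\alpha|^2\cdot\calF_\calS(\ket{\phi^\perp})<\varepsilon$; I then set $\ket{\widetilde{\psi}}:=\sum_{i\in[k]}\beta_i\ket{\phi_i}$, which is the claimed stabilizer-rank-$(1/\eta^2)$ state and satisfies $\ket{\psi}-\ket{\widetilde{\psi}}=\alpha\ket{\phi^\perp}$.

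For part $(i)$, the key point is that for \emph{every} $n$-qubit stabilizer state $\ket{s}$,
\begin{equation*}
|\langle s|(\psi-\widetilde{\psi})\rangle|^2=|\alpha|^2\,|\langle s|\phi^\perp\rangle|^2\le |\alpha|^2\cdot\calF_\calS(\ket{\phi^\perp})<\varepsilon=(\varepsilon'/\xi)^2,
\end{equation*}
i.e.\ $|\langle s|(\psi-\widetilde{\psi})\rangle|<\varepsilon'/\xi$ uniformly over the (finite) set of stabilizer states. Because there are only finitely many $n$-qubit stabilizer states, the minimum defining the stabilizer extent is attained, so any $\ket{\phi}\in\calC(\xi)$ can be written as a finite sum $\ket{\phi}=\sum_j c_j\ket{s_j}$ with $\ket{s_j}$ stabilizer states and $\sum_j|c_j|=\xi(\ket{\phi})\le\xi$. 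The triangle inequality then gives
\begin{equation*}
|\langle\phi|\psi\rangle-\langle\phi|\widetilde{\psi}\rangle|\le\sum_j|c_j|\cdot|\langle s_j|(\psi-\widetilde{\psi})\rangle|<\frac{\varepsilon'}{\xi}\sum_j|c_j|\le\varepsilon',
\end{equation*}
which is $(i)$.

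For part $(ii)$, let $\ket{\phi^\star}\in\calC(\xi)$ attain $\calF_{\calC(\xi)}(\ket{\psi})=|\langle\phi^\star|\psi\rangle|^2$. By $(i)$ and $|\langle\phi^\star|\psi\rangle|\le 1$ we have $|\langle\phi^\star|\widetilde{\psi}\rangle|\le 1+\varepsilon'$, hence
\begin{equation*}
\big|\,|\langle\phi^\star|\psi\rangle|^2-|\langle\phi^\star|\widetilde{\psi}\rangle|^2\,\big|=\big|\,|\langle\phi^\star|\psi\rangle|-|\langle\phi^\star|\widetilde{\psi}\rangle|\,\big|\cdot\big(|\langle\phi^\star|\psi\rangle|+|\langle\phi^\star|\widetilde{\psi}\rangle|\big)\le\varepsilon'(2+\varepsilon')\le 3\varepsilon',
\end{equation*}
using $\varepsilon'\le 1$, so $\calF_{\calC(\xi)}(\ket{\widetilde{\psi}})\ge|\langle\phi^\star|\widetilde{\psi}\rangle|^2\ge\calF_{\calC(\xi)}(\ket{\psi})-3\varepsilon'$. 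Running the identical estimate with the $\calC(\xi)$-optimizer of $\ket{\widetilde{\psi}}$ in place of $\ket{\phi^\star}$ gives $\calF_{\calC(\xi)}(\ket{\psi})\ge\calF_{\calC(\xi)}(\ket{\widetilde{\psi}})-3\varepsilon'$, and combining the two inequalities proves $(ii)$.

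The main thing to get right is the calibration of the error handed to iterated self-correction: one needs the residual $\alpha\ket{\phi^\perp}$ to have overlap at most $\varepsilon'/\xi$ with \emph{every} stabilizer state after the lossy step $\eta(\varepsilon)=\Omega(\varepsilon^C)$ of Theorem~\ref{thm:iterSC_gen} is absorbed, which is exactly what forces $\varepsilon=(\varepsilon'/\xi)^2$ and $\eta=(\varepsilon'/\xi)^{2C}$ as in the statement. The only other subtlety is that $\ket{\widetilde{\psi}}$ is merely \emph{approximately} normalized, so that $\calF_{\calC(\xi)}(\ket{\widetilde{\psi}})$ is taken on an unnormalized vector; this is harmless because the bound $|\langle\phi|\widetilde{\psi}\rangle|\le 1+\varepsilon'$ needed in $(ii)$ already follows from $(i)$, so no separate normalization argument is required. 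I do not expect any genuinely hard step here — the content is entirely in assembling Theorem~\ref{thm:iterSC_gen} with the finiteness of the stabilizer-extent decomposition.
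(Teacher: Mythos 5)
Your proposal is correct and follows essentially the same route as the paper's proof: run iterated self-correction with $\varepsilon=(\varepsilon'/\xi)^2$, take $\ket{\widetilde{\psi}}$ to be the structured part, bound the overlap of the residual with each stabilizer by $\sqrt{\varepsilon}=\varepsilon'/\xi$, and lift to low-extent states by triangle inequality. The only (harmless) cosmetic difference is in part $(ii)$, where you work directly with squared overlaps of each optimizer, whereas the paper bounds the gap of the square roots first and then separately shows $\sqrt{\calF_{\calC(\xi)}(\ket{\widetilde{\psi}})}\le 1+\varepsilon'$; both yield the same $3\varepsilon'$ bound.
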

\begin{proof}
We will run iterative-$\Selfcorrection$  (Theorem~\ref{thm:iterSC_gen}) with $\varepsilon$ therein instantiated as $\varepsilon = (\varepsilon')^2/\xi^2$ and the corresponding value of $\eta = \varepsilon^C = (\varepsilon'/\xi)^{2C}$ as stated in the theorem statement.
The output is a stabilizer-rank $k\leq 1/\eta^2=\poly(\xi/\varepsilon')$ state $\ket{\widetilde{\psi}}$ (up to normalization)  such that
$$
\ket{\psi} = \ket{\widetilde{\psi}} + \beta_{k+1} \ket{\phi^\perp}
$$
and where $|\beta_{k+1}|^2 \cdot \calF_\calS(\ket{\phi^\perp}) \leq \varepsilon$. Now, for any stabilizer state $\ket{s} \in \textsf{Stab}$, we have
\begin{align}
\label{eq:interim_stab_fidelity}
    \la s | \psi \ra &= \la s | \widetilde{\psi} \ra + \beta_{k+1} \la s | \phi^\perp \ra \implies 
    \left|\la s | \psi \ra-\la s | \widetilde{\psi} \ra  \right| =  \left| \beta_{k+1} \right| \left| \la s | \phi^\perp \ra \right|\leq \sqrt{\varepsilon}=\varepsilon'/\xi, \end{align}
since $|\beta_{k+1}|^2 \cdot \stabfidelity{\ket{\phi^\perp}}=|\langle s|\phi^\perp\rangle|^2\leq \varepsilon$.  
Consider $\ket{\phi} \in \calC(\xi)$ expressed as
$$
\ket{\phi} = \sum_{i=1}^m c_i \ket{s_i},
$$
for some $m \in \mathbb{N}$, stabilizer states $\{\ket{\phi_i}\}_{i \in [m]}$, and coefficients $c_i \in \mathbb{C}$ for all $i \in [m]$ such that $\sum_{i=1}^m |c_i| \leq \xi$. We then observe
\begin{align}\label{eq:item1_stab_extent_ip_relations}
    \left| \la \psi | \phi \ra - \la \widetilde{\psi} | \phi \ra \right| = \left| \sum_{i=1}^m c_i \left( \la \psi | s_i \ra - \la \widetilde{\psi} | s_i \ra \right)  \right| \leq \sum_{i=1}^m |c_i| \cdot \left| \la \psi | s_i \ra - \la \widetilde{\psi} | s_i \ra \right| \leq \sum_{i=1}^m |c_i| \cdot \varepsilon'/\xi \leq \varepsilon',
\end{align}
where we used the triangle inequality in the second inequality, Eq.~\eqref{eq:interim_stab_fidelity} in the third inequality and that $\sum_{i=1}^m |c_i| \leq \xi$ in the fourth inequality. This proves item $(i)$ of the corollary.

We now prove item $(ii)$. Suppose $\ket{\phi_1} \in \calC(\xi)$ maximizes the fidelity with $\ket{\psi}$ among all states from $\calC(\xi)$ and $\ket{\phi_2} \in \calC(\xi)$ maximizes the fidelity with $\ket{\widetilde{\psi}}$ i.e.,
\begin{align}
    \ket{\phi_1} = \argmax_{\ket{\phi} \in \calC(\xi)} |\la \phi | \psi \ra|^2, \quad \ket{\phi_2} = \argmax_{\ket{\phi} \in \calC(\xi)} |\la \phi | \widetilde{\psi} \ra|^2.
\end{align}
In other words, $|\langle \phi_1 |\psi\rangle|=\sqrt{\calF_{\calC(\xi)}{\ket{\psi}}}$ and $|\langle \phi_2 |\widetilde{\psi}\rangle|=\sqrt{\calF_{\calC(\xi)}{\ket{\widetilde{\psi}}}}$. We then have that
\begin{equation}\label{eq:interim_phi1}
\sqrt{\calF_{\calC(\xi)}{\ket{\psi}}}-|\langle \phi_1|\widetilde{\psi}\rangle|= |\langle \phi_1 |\psi\rangle|-|\langle \phi_1 |\widetilde{\psi}\rangle|\leq \Big||\langle \phi_1|\psi\rangle|-|\langle \phi_1 |\widetilde{\psi}\rangle|\Big|\leq  \left|\la \phi_1 | \psi \ra-\la \phi_1 | \widetilde{\psi} \ra  \right|\leq \varepsilon', 
\end{equation}
where the second inequality used reverse triangle inequality and the last inequality used item $(i)$ (Eq.~\eqref{eq:item1_stab_extent_ip_relations}). Now, note that 
$$
|\langle \phi_1|\widetilde{\psi}\rangle|\leq |\langle \phi_2 |\widetilde{\psi}\rangle|=\sqrt{\calF_{\calC(\xi)}{\ket{\widetilde{\psi}}}},
$$
by definition of $\ket{\phi_2}$, which implies 
$$
\sqrt{\calF_{\calC(\xi)}{\ket{\psi}}}-\sqrt{\calF_{\calC(\xi)}{\ket{\widetilde{\psi}}}}\leq \sqrt{\calF_{\calC(\xi)}{\ket{\psi}}}-|\langle \phi_1|\widetilde{\psi}\rangle|\leq \varepsilon'.
$$
Similarly one can show $\sqrt{\calF_{\calC(\xi)}{\ket{\widetilde{\psi}}}}-\sqrt{\calF_{\calC(\xi)}{\ket{\psi}}}\leq \varepsilon'$ by starting with $\ket{\phi_2}$ in Eq.~\eqref{eq:interim_phi1}. Thus, we~have
$$
\left|\sqrt{\calF_{\calC(\xi)}{\ket{\psi}}}-\sqrt{\calF_{\calC(\xi)}{\ket{\widetilde{\psi}}}}\right| \leq \varepsilon'.
$$
Since $\ket{\widetilde{\psi}}$ is not a normalized state, we now upper  bound its maximal fidelity with $\calC(\xi)$. Consider a quantum state $\ket{\phi_2} \in \calC(\xi)$ written as $\ket{\phi_2} = \sum_{j=1}^m c_i \ket{s_i}$ for some $m \in \mathbb{N}$, stabilizer states $\{\ket{s_i}\}_{i\in [m]} \in \calS$ and corresponding coefficients $c_i \in \mathbb{C}$ such that $\sum_{i=1}^m |c_i| \leq \xi$. We now obtain
\begin{align}\label{eq:interim_ub_stab_extent_fidelity_tildepsi}
\sqrt{\calF_{\calC(\xi)}{\ket{\widetilde{\psi}}}} = \left|\la \phi_2 | \widetilde{\psi} \ra \right| 
&= \left| \la \phi_2 | \psi \ra - \beta_{k+1} \la \phi_2 | \phi^\perp \ra \right| \\
&\leq \left| \la \phi_2 | \psi \ra \right| + |\beta_{k+1}| \cdot \left| \sum_{i=1}^m c_i \la \phi^\perp | s_i \ra \right| \\
&\leq \left| \la \phi_2 | \psi \ra \right| + \sum_{i=1}^m |c_i| |\beta_{k+1}| \cdot \left|  \la \phi^\perp | s_i \ra \right| \\
&\leq \left| \la \phi_2 | \psi \ra \right| + \sum_{i=1}^m |c_i| \cdot (\varepsilon'/\xi) \\
&\leq 1 + \varepsilon',    
\end{align}
where we have used the triangle inequality and the expression for $\ket{\phi_2}$ in the second line. In the third line, we use the triangle inequality again and then used the fact that $|\beta_{k+1}| \cdot \left| \la s_i | \phi^\perp \ra \right| \leq |\beta_{k+1}| \cdot \sqrt{\stabfidelity{\ket{\phi^\perp}}} \leq \sqrt{\varepsilon} = \varepsilon'/\xi$ from the promise of iterative \Selfcorrection~(Theorem~\ref{thm:iterSC_gen}). Finally, we use that $\sum_{i=1}^m |c_i| \leq \xi$. We now observe\footnote{Here, we use $|a^2 - b^2| = |a+b|\cdot |a-b|$.}
\begin{align*}
\left|\calF_{\calC(\xi)}{\ket{\widetilde{\psi}}} - \calF_{\calC(\xi)}{\ket{\psi}} \right| 
&= \left|\sqrt{\calF_{\calC(\xi)}{\ket{\widetilde{\psi}}}} + \sqrt{\calF_{\calC(\xi)}{\ket{\psi}}} \right| \cdot \left|\sqrt{\calF_{\calC(\xi)}{\ket{\widetilde{\psi}}}} - \sqrt{\calF_{\calC(\xi)}{\ket{\psi}}} \right| \leq 3 \varepsilon',    
\end{align*}
where we used $\calF_{\calC(\xi)}{\ket{\psi}} \leq 1$ and Eq.~\eqref{eq:interim_ub_stab_extent_fidelity_tildepsi} in the second inequality. This proves the lemma.
\end{proof}

\subsection{Learning states with low stabilizer extent}\label{sec:learn_stab_extent}
In this section,  we first show that if $\ket{\psi}$ is promised to have low stabilizer extent i.e., $\xi(\ket{\psi}) \leq \xi$, then $\ket{\widetilde{\psi}}$ is a $\poly(n,\xi/\varepsilon)$ stabilizer rank state that is $(1/2-\varepsilon)$-close to the unknown state $\ket{\psi}$, thereby solving the task of state tomography for such states $\ket{\psi}$ up to constant trace distance. This result is formally stated below:
\begin{theorem}
\label{thm:learn_low_stab_extent_states}
Let $\varepsilon'\in (0,1),\xi\geq 1$. Let $\ket{\psi}$ be an unknown $n$-qubit  state such that $\xi(\ket{\psi}) \leq \xi$. Assuming algorithmic $\PFR$ conjecture, there is a $\poly(n,\xi,1/\varepsilon)$-time algorithm that, given access to  $U_\psi$, $conU_\psi$, copies of $\ket{\psi}$,  outputs $\ket{\phi}$ with $\chi(\ket{\phi}) = \poly(\xi,1/\varepsilon)$ such that
$$
|\la {\phi} | \psi \ra|^2 \geq 1/2 - \varepsilon',
$$
The algorithm also gives a $\poly(n,\xi,1/\varepsilon')$-sized circuit that prepares~$\ket{\phi}$.
\end{theorem}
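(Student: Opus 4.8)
The plan is to derive Theorem~\ref{thm:learn_low_stab_extent_states} as a direct consequence of the iterative $\Selfcorrection$ guarantee, together with a short renormalization argument; in fact most of the work is already packaged into Lemma~\ref{cor:low_stab_extent_inner_prod}. First I would invoke Theorem~\ref{thm:iterSC_gen} with the base routine $\calA$ taken to be the $\Selfcorrection$ algorithm of Theorem~\ref{thm:self_correction} (which is legitimate under the algorithmic $\PFR$ conjecture, with $\eta(\varepsilon) = \Omega(\varepsilon^C)$) and with the error parameter instantiated as $\varepsilon := (\varepsilon'/\xi)^2$ --- exactly the invocation used inside the proof of Lemma~\ref{cor:low_stab_extent_inner_prod}. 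It returns, in $\poly(n,\xi,1/\varepsilon')$ time using copies of $\ket{\psi}$ and queries to $U_\psi,\textsf{con}U_\psi$, a classical description $\{\beta_i\}_{i\in[k]},\{\ket{\phi_i}\}_{i\in[k]}$ of an (unnormalized) stabilizer-rank $k \le O(1/\eta^2) = \poly(\xi,1/\varepsilon')$ state $\ket{\widetilde{\psi}} = \sum_{i=1}^k \beta_i \ket{\phi_i}$ with $\beta \in \calB_\infty^k$, such that $\ket{\psi} = \ket{\widetilde{\psi}} + \alpha\ket{\phi^\perp}$ with $\alpha \in \calB_\infty$ and $|\alpha|^2\,\stabfidelity{\ket{\phi^\perp}} < \varepsilon$.

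Next I would use the promise $\xi(\ket{\psi})\le\xi$, i.e.\ $\ket{\psi}\in\calC(\xi)$, and apply item $(i)$ of Lemma~\ref{cor:low_stab_extent_inner_prod} with the test state taken to be $\ket{\psi}$ itself, obtaining $|1 - \langle\psi|\widetilde{\psi}\rangle| = |\langle\psi|\psi\rangle - \langle\psi|\widetilde{\psi}\rangle| \le \varepsilon'$. Setting $\delta := \langle\psi|\alpha\phi^\perp\rangle = 1 - \langle\psi|\widetilde{\psi}\rangle$ (so $|\delta|\le\varepsilon'$), I would output the renormalized state $\ket{\phi} := \ket{\widetilde{\psi}}/\norm{\widetilde{\psi}}_2$, whose stabilizer rank is still $\le k$. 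Its fidelity follows by expanding the two relevant quantities directly:
\begin{equation*}
\norm{\ket{\psi} - \alpha\ket{\phi^\perp}}_2^2 = 1 - 2\,\mathrm{Re}(\delta) + |\alpha|^2 \le 2 + 2\varepsilon', \qquad |\langle\widetilde{\psi}|\psi\rangle| = |1 - \bar\delta| \ge 1 - \varepsilon',
\end{equation*}
where the crucial point is that $|\alpha| < 1$, which is guaranteed since $\alpha\in\calB_\infty$. Hence $|\langle\phi|\psi\rangle|^2 = |\langle\widetilde{\psi}|\psi\rangle|^2/\norm{\widetilde{\psi}}_2^2 \ge (1-\varepsilon')^2/(2+2\varepsilon') \ge \tfrac12(1 - 3\varepsilon')$, and running the whole procedure with $\varepsilon'$ replaced by $2\varepsilon'/3$ (changing the resources only by constant factors) upgrades this to $|\langle\phi|\psi\rangle|^2 \ge 1/2 - \varepsilon'$.

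Finally I would construct an explicit circuit preparing $\ket{\phi}$. Since each $\ket{\phi_i}$ comes with its $O(n)$ stabilizer generators, Clifford synthesis (Lemma~\ref{lem:clifford_synthesis}) yields a Clifford circuit $W_i$ of $O(n^2)$ gates with $W_i\ket{0^n} = \ket{\phi_i}$; absorbing the phase of $\beta_i$ into $W_i$, I apply the $\LCU$ construction of Corollary~\ref{cor:LCUfinal} --- with $U_{\PREP}$ on $\lceil\log_2 k\rceil$ ancillas built via Lemma~\ref{lem:state_prep} from the nonnegative coefficients $|\beta_i|$, and $U_{\SEL}$ via Lemma~\ref{lem:cost_select} --- to the combination $V := \sum_{i=1}^k |\beta_i|\,W_i$. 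Conditioned on the $\ket{0^m}$ ancilla outcome this prepares $V\ket{0^n}/\norm{V\ket{0^n}}_2 = \ket{\widetilde{\psi}}/\norm{\widetilde{\psi}}_2 = \ket{\phi}$; the success probability is $(\norm{\widetilde{\psi}}_2/\norm{\beta}_1)^2 \ge (1-\varepsilon')^2/k^2 = \poly(\varepsilon'/\xi)$ (using $|\beta_i|\le1$, so $\norm{\beta}_1\le k$), so $O(k)$ rounds of amplitude amplification give a $\poly(n,k) = \poly(n,\xi,1/\varepsilon')$-gate circuit preparing $\ket{\phi}$ with high probability. Adding the $\poly(n,\xi,1/\varepsilon')$ cost of the iterative $\Selfcorrection$ call (Theorem~\ref{thm:iterSC_gen} with $\calT_\calA = \poly(n,1/\varepsilon,\log(1/\delta))$ from Theorem~\ref{thm:self_correction}) yields the stated complexity. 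I do not expect a genuine obstacle beyond bookkeeping: the one subtlety worth flagging is that the constant $1/2$ (rather than the trivial $1/4$ one would get from $\norm{\widetilde{\psi}}_2 \le 2$) is extracted precisely from the bound $\norm{\widetilde{\psi}}_2^2 \le 2+O(\varepsilon')$, which uses the explicit guarantee $|\alpha| < 1$ in the output of iterative $\Selfcorrection$, and that the error parameter of Theorem~\ref{thm:iterSC_gen} must be set so that the per-stabilizer inner-product error is $\varepsilon'/\xi$, matching Lemma~\ref{cor:low_stab_extent_inner_prod}.
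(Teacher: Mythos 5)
Your proposal is correct and follows essentially the same route as the paper: invoke iterative $\Selfcorrection$ with error parameter $\varepsilon=\Theta((\varepsilon'/\xi)^2)$, use item $(i)$ of Lemma~\ref{cor:low_stab_extent_inner_prod} applied to $\ket{\psi}\in\calC(\xi)$ to get $|1-\langle\psi|\widetilde{\psi}\rangle|$ small, bound $\norm{\widetilde{\psi}}_2^2\le 2+O(\varepsilon')$ using $|\alpha|\le 1$, renormalize, and prepare via Clifford synthesis plus $\LCU$. The only cosmetic divergences are that the paper instantiates $\varepsilon=(\varepsilon'/(2\xi))^2$ so the constants work out to $1/2-\varepsilon'$ directly, whereas you compensate by running with $2\varepsilon'/3$, and the paper argues the lower bound on $\norm{\widetilde{\psi}}_2$ via Cauchy--Schwarz rather than your (equivalent) direct computation of $|\langle\widetilde{\psi}|\psi\rangle|/\norm{\widetilde{\psi}}_2$; neither difference is substantive.
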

\begin{proof}[Proof of Theorem~\ref{thm:learn_low_stab_extent_states}]
We will run iterative-\Selfcorrection~(Theorem~\ref{thm:iterSC_gen}) on $\ket{\psi}$ with $\varepsilon$ therein instantiated as $\varepsilon = (\varepsilon'/(2\xi))^2$ and the corresponding value of $\eta = \varepsilon^C = (\varepsilon'/(2\xi))^{2C}$. The output state $\ket{\widetilde{\psi}}$ of iterative-\Selfcorrection~satisfies the following:
\begin{enumerate}[$(i)$]
    \item its stabilizer extent $\chi(\ket{\widetilde{\psi}}) = \poly(1/\eta) = \poly(\xi/\varepsilon')$
    \item the algorithm makes $\poly(n,\xi,1/\varepsilon')$ queries to $U_\psi,\textsf{con}U_\psi$ and runs in  time $\poly(n,\xi,1/\varepsilon')$ and 
    \item $ \left|\langle \varphi|\psi\rangle - \langle \varphi|\widetilde{\psi}\ra\right| \leq \varepsilon'/2$ for all $\ket{\varphi} \in \calC(\xi)$ as a consequence of Corollary~\ref{cor:low_stab_extent_inner_prod}. 
\end{enumerate}
Now,  since our known $\ket{\psi}$ is promised to have stabilizer extent $\xi$, i.e., $\ket{\psi} \in \calC(\xi)$, item $(iii)$ above in particular implies that
\begin{equation}
    \varepsilon'/2 \geq \left|\langle \psi|\psi\rangle - \langle \psi|\widetilde{\psi}\ra\right|=\left|1 - \langle \psi|\widetilde{\psi}\ra\right| \geq \left| 1 - |\la \psi | \widetilde{\psi} \ra| \right| \geq 1 - |\la \psi | \widetilde{\psi} \ra|,
\end{equation}
where the second inequality used reverse triangle inequality. The above now implies $ |\la \psi | \widetilde{\psi} \ra|^2 \geq (1-\varepsilon'/2)^2 \geq 1 - \varepsilon'$. 
Furthermore, this implies that the norm of $\ket{\widetilde{\psi}}$ can be lower bounded as
\begin{align}
\label{eq:normofpsitilde}
1 - \varepsilon'\leq |\la \psi | \widetilde{\psi} \ra|^2\leq \norm{\ket{\widetilde{\psi}}}^2 \cdot \norm{\ket{\psi}}^2  =
\norm{\ket{\widetilde{\psi}}}^2,
\end{align}
where the second inequality follows from Cauchy-Schwartz. Also, the norm of $\ket{\widetilde{\psi}}$ can be bounded from above as  follows: since $\ket{\widetilde{\psi}}=\ket{\psi}-\beta_{k+1}\ket{\phi^\perp}$, we have
$$
\norm{\ket{\widetilde{\psi}}}^2 = 1 + |\beta_{k+1}|^2 - 2 Re(\beta_{k+1} \la \psi | \phi^\perp \ra) \leq 1 + |\beta_{k+1}|^2 + 2 |\beta_{k+1}| \cdot |\la \psi | \phi^\perp \ra| \leq 2 + 2 \varepsilon'.
$$
Consider the state $\ket{\phi} = \ket{\widetilde{\psi}}/\norm{\ket{\widetilde{\psi}}}$ (which is known to the algorithm since $\ket{\widetilde{\psi}}$ is produced by the algorithm explicitly). We then have
$$
|\la \phi | \psi \ra|^2 = \frac{|\la \widetilde{\psi} | \psi \ra|^2}{\norm{\ket{\widetilde{\psi}}}^2} \geq \frac{1 - \varepsilon'}{2(1 + \varepsilon')} \geq \frac{(1-\varepsilon')^2}{2} \geq \frac{1}{2} - \varepsilon',
$$
which proves the desired result.

It remains to show that one can prepare this $\ket{\phi}$. Recall that $\ket{\widetilde{\psi}}=\sum_i\beta_i\ket{s_i}$ where $\ket{s_i}$ are stabilizer states. Now, we can prepare the Clifford unitary that prepares $\ket{s_i}$ (from the all-zero state) in time $O(n^2)$ (using Lemma~\ref{lem:inner_prod_stab_states}), call it $C_i$. If we apply Corollary~\ref{cor:LCUfinal} for the set of Cliffords $\{C_i\}_i$ and coefficients $\{\beta_i\}_i$, then $\LCU$ lemma prepares $\ket{\phi}=\ket{\widetilde{\psi}}/\|\ket{\widetilde{\psi}}\|$ with probability 
$$
\Big(\|\ket{\widetilde{\psi}}\|/\sum_i|\beta_i|\Big)^2\geq \big((1-\varepsilon')/\sum_i|\beta_i|\big)^2 \geq  \big((1-\varepsilon')/k\big)^2\geq \Omega(\eta^4),
$$
where the first inequality used Eq.~\eqref{eq:normofpsitilde}, second inequality used that $|\beta_i|\leq 1$ for all $i\in [k]$ and $k\leq 1/\eta^2$. So the algorithm now repeats the $\LCU$ lemma $O(1/\eta^4)$ many times and when we succeed we have prepared the state corresponding to $\ket{\phi}$ (recall that we know in Lemma~\ref{lemma:lcu_basic} when we have succeeded). The total number of gates used in this procedure is $\poly(1/\eta,n)$. 
\end{proof}
In the above theorem, instead of using iterative $\Selfcorrection$, we could have applied iterative stabilizer bootstrapping algorithm from Theorem~\ref{thm:iterSC_stab_bootstrap} (which does not assume algorithmic $\PFR$ conjecture). This leads to a learning algorithm whose time complexity is $\poly(n,(\xi/\varepsilon)^{\log(\xi/\varepsilon)},\log(1/\delta))$ which is summarized in Result~\ref{result:learn_stab_extent}.

The above theorem implies the following corollary for $\kappa$-rank stabilizer states where we leverage Theorem~\ref{thm:ub_stab_extent_stab_rank_states} which bounds the stabilizer extent of such states and allowing us to use Theorem~\ref{thm:learn_low_stab_extent_states}.
\begin{corollary}
Let $\varepsilon'\in (0,1),\kappa\geq 1$. Let $\ket{\psi}$ be an unknown $n$-qubit  state with stabilizer rank $\chi(\ket{\psi}) \leq \kappa$. Assuming the algorithmic $\PFR$ conjecture, there is a $\poly(n,\kappa^{O(\kappa)},1/\varepsilon')$-time algorithm that, given access to  $U_\psi$, $conU_\psi$, copies of $\ket{\psi}$ and  outputs $\ket{\phi}$ with $\chi(\ket{\phi}) = \poly(\kappa^{O(\kappa)},1/\varepsilon)$ such~that
$$
|\la {\phi} | \psi \ra|^2 \geq 1/2 - \varepsilon',
$$
The algorithm also gives a $\poly(n,\kappa^{O(\kappa)},1/\varepsilon')$-sized circuit that prepares~$\ket{\phi}$.
\end{corollary}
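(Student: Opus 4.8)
The plan is to obtain this as a direct composition of Theorem~\ref{thm:learn_low_stab_extent_states} with the known bound relating stabilizer rank to stabilizer extent. First I would invoke Theorem~\ref{thm:ub_stab_extent_stab_rank_states}, which guarantees that any $n$-qubit state with stabilizer rank $\kappa$ satisfies $\xi(\ket{\psi}) \leq \sqrt{e}\cdot 2\kappa^{(\kappa+1)/2}$; write $\xi = \sqrt{e}\cdot 2\kappa^{(\kappa+1)/2} = \kappa^{O(\kappa)}$. Since $\ket{\psi}$ is promised to have $\chi(\ket{\psi}) \leq \kappa$, this shows $\ket{\psi} \in \calC(\xi)$, i.e.\ it has stabilizer extent at most $\xi$, so the hypothesis of Theorem~\ref{thm:learn_low_stab_extent_states} is met with this value of $\xi$.

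Next I would apply Theorem~\ref{thm:learn_low_stab_extent_states} to $\ket{\psi}$ with stabilizer-extent parameter $\xi$ and error parameter $\varepsilon'$. That theorem, assuming the algorithmic $\PFR$ conjecture and using access to $U_\psi,\textsf{con}U_\psi$ and copies of $\ket{\psi}$, outputs a state $\ket{\phi}$ with $\chi(\ket{\phi}) = \poly(\xi,1/\varepsilon')$ satisfying $|\la \phi|\psi\ra|^2 \geq 1/2 - \varepsilon'$, together with a circuit of size $\poly(n,\xi,1/\varepsilon')$ preparing $\ket{\phi}$, all within time $\poly(n,\xi,1/\varepsilon')$. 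Substituting $\xi = \kappa^{O(\kappa)}$ into these three bounds converts them into $\poly(n,\kappa^{O(\kappa)},1/\varepsilon')$ for the runtime, the circuit size, and the stabilizer rank of $\ket{\phi}$, which is precisely the claimed statement. (The $\PFR$ assumption is inherited from Theorem~\ref{thm:learn_low_stab_extent_states}; routing instead through iterative stabilizer bootstrapping, Theorem~\ref{thm:iterSC_stab_bootstrap}, would remove it at the cost of a quasipolynomial overhead, giving the unconditional $\poly(n,\kappa^{\kappa^2})$-type bound mentioned in the introduction.)

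There is essentially no technical obstacle: the corollary is a black-box combination of two earlier results. The only point requiring a line of care is bookkeeping the super-exponential factor --- one must check that $\poly(\xi,1/\varepsilon')$ with $\xi = \kappa^{O(\kappa)}$ is still of the form $\poly(\kappa^{O(\kappa)},1/\varepsilon')$, which holds because raising $\kappa^{c\kappa}$ to a constant power $d$ yields $\kappa^{cd\kappa}$, again $\kappa^{O(\kappa)}$, so the constant in the exponent is merely inflated by a constant factor and all the $\poly$ expressions absorb cleanly. No further subtlety arises, and the proof is complete once these substitutions are made explicit.
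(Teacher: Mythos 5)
Your proposal is correct and matches the paper's argument exactly: both invoke Theorem~\ref{thm:ub_stab_extent_stab_rank_states} to bound $\xi(\ket{\psi})\leq\kappa^{O(\kappa)}$ and then apply Theorem~\ref{thm:learn_low_stab_extent_states} as a black box. The bookkeeping you spell out about $\poly(\kappa^{O(\kappa)})$ absorbing constant-power blowups is the only sanity check needed, and you handle it correctly.
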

\begin{proof}
This is immediate from Theorem~\ref{thm:learn_low_stab_extent_states} and application of Theorem~\ref{thm:ub_stab_extent_stab_rank_states} which gives an upper bound on the stabilizer extent of $\kappa$ stabilizer-rank states as $\xi(\ket{\psi}) \leq \sqrt{e} \cdot (2 \kappa)^{O(\kappa)}$.
\end{proof}

We can improve the above result when we know that the unknown state $\ket{\psi}$ is produced by a Clifford circuit with few non-Clifford $T$ gates by utilizing Lemma~\ref{lem:stab_extent_clifford_T_circs} which bounds the stabilizer extent for such states. Note that this result could have been obtained using \cite{grewal2023efficient}.
\begin{corollary}
Let $\varepsilon'\in (0,1/2),\kappa\geq 1$. Let  $\ket{\psi}$ is an unknown $n$-qubit quantum state produced by a circuit with Clifford gates and $t \in \mathbb{N}$ many $T$ gates. Assuming the algorithmic $\PFR$ conjecture, there is a $\poly(n,2^t,1/\varepsilon')$-time algorithm that, given access to  $U_\psi$, $conU_\psi$, copies of $\ket{\psi}$,  outputs $\ket{\phi}$ with $\chi(\ket{\phi}) = \poly(2^t,1/\varepsilon)$ such that
$$
|\la {\phi} | \psi \ra|^2 \geq 1/2 - \varepsilon',
$$
The algorithm also gives a $\poly(n,2^t,1/\varepsilon')$-sized circuit that prepares~$\ket{\phi}$.
\end{corollary}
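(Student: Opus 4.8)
The plan is to reduce this statement to Theorem~\ref{thm:learn_low_stab_extent_states} by first controlling the stabilizer extent of the unknown state. Since $\ket{\psi}$ is produced by a circuit consisting of Clifford gates and $t$ many $T$ gates, Lemma~\ref{lem:stab_extent_clifford_T_circs} gives immediately that
\[
\xi(\ket{\psi}) \leq \left(1 + \tfrac{1}{\sqrt{2}}\right)^t \leq 2^t =: \xi.
\]
So $\ket{\psi}$ lies in the class $\calC(\xi)$ of states with stabilizer extent at most $\xi = 2^t$.

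The next step is to feed this bound into the learning algorithm of Theorem~\ref{thm:learn_low_stab_extent_states}. Invoking that theorem with parameter $\xi = 2^t$ and error parameter $\varepsilon'$, and assuming the algorithmic $\PFR$ conjecture, produces a state $\ket{\phi}$ with $\chi(\ket{\phi}) = \poly(\xi, 1/\varepsilon') = \poly(2^t, 1/\varepsilon')$ such that $|\la \phi | \psi \ra|^2 \geq 1/2 - \varepsilon'$, using $\poly(n,\xi,1/\varepsilon') = \poly(n,2^t,1/\varepsilon')$ queries to $U_\psi, \textsf{con}U_\psi$, copies of $\ket{\psi}$, and running in $\poly(n,2^t,1/\varepsilon')$ time. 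The same theorem also supplies the $\poly(n,2^t,1/\varepsilon')$-sized circuit preparing $\ket{\phi}$ (via the $\LCU$-based preparation over the Clifford unitaries that generate the stabilizer states in the decomposition of $\ket{\widetilde{\psi}}$), so all clauses of the corollary follow directly.

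There is essentially no technical obstacle here: the only thing to verify is that substituting $\xi = 2^t$ into every $\poly(\cdot)$ appearing in Theorem~\ref{thm:learn_low_stab_extent_states} yields the claimed $\poly(n, 2^t, 1/\varepsilon')$ complexities, which is routine bookkeeping. One minor point worth a sentence in the writeup is that the corollary's access model matches that of Theorem~\ref{thm:learn_low_stab_extent_states} — we are given $U_\psi$, $\textsf{con}U_\psi$, and copies of $\ket{\psi}$ — so no additional assumption about knowing the Clifford$+T$ description of the circuit is needed; only the \emph{promise} on $t$ (equivalently, on the stabilizer extent) is used. I would also remark, as the paper already does, that an alternative route to this corollary goes through \cite{grewal2023efficient}.
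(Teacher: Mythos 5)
Your proof is correct and takes exactly the same route as the paper: bound $\xi(\ket{\psi}) \leq (1+1/\sqrt{2})^t \leq 2^t$ via Lemma~\ref{lem:stab_extent_clifford_T_circs}, then invoke Theorem~\ref{thm:learn_low_stab_extent_states} with $\xi = 2^t$. Your side remarks about the access model and the alternative route through \cite{grewal2023efficient} match the paper's own commentary.
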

\begin{proof}
This follows from Theorem~\ref{thm:learn_low_stab_extent_states} and the fact above which gives an upper bound on the stabilizer extent of states $\ket{\psi}$ produced by Clifford circuits with $t$ many $T$ gates as $\xi(\ket{\psi}) \leq 2^{t}$.
\end{proof}
We remark that except the work of~\cite{van2023quantum}, we are not aware of works that have looked at quantum state tomography of an unknown $\ket{\psi}$ given access to $U_\psi,\textsf{con}U_\psi$. It is an interesting open question if this access is necessary for proving the results above.

\subsection{Learning decompositions of high stabilizer-dimension states}
We now show how the iterative $\Selfcorrection$ protocol (Algorithm~\ref{alg:iterSC_main}) of Theorem~\ref{thm:iterSC_gen} can also be utilized to express an arbitrary $n$-qubit state as a \emph{structured} decomposition over stabilizer states and an \emph{unstructured} state which  has low fidelity with $\calS(n-t)$ (note that in the usual iterative $\Selfcorrection$ we demanded the same with $t=0$ and here we strengthen it for all $t\leq n$). In particular,  we show the following.

\begin{restatable}{theorem}{iterativeSC_high_stab_dim}
\label{thm:iterSC_high_stab_dim}
Let $\varepsilon,\upsilon\in (0,1)$, $\eta=\left(2^{-t}\varepsilon^6\right)^C$ for a constant~$C>1$. Assuming the algorithmic $\PFR$ conjecture, there is an algorithm that with probability $\geq 1-\upsilon$, satisfies the following: given access to $U_\psi,\textsf{con}U_\psi$, copies of $\ket{\psi}$, outputs 
$\beta \in \calB_\infty^{k},\alpha\in \calB_\infty$ for $i \in [k]$ and stabilizer states $\{\ket{\phi_i}\}_{i\in [k]}$ such that one can write $\ket{\psi}$ as
$$
    \ket{\psi}=\sum_{i\in [k]} \beta_i \ket{\phi_i}+\alpha\ket{\phi^\perp},
$$
where the residual state satisfies $|\alpha|^2\cdot \calF_{\calS(n-t)}{\ket{\phi^\perp}} \leq \varepsilon$. This algorithm uses $\Selfcorrection$ procedure $k$ times and the total runtime is $\poly(n,2^t,1/\varepsilon,\log(1/\upsilon))$.
\end{restatable}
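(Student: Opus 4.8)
The plan is to re-run the robust iterative self-correction algorithm (Algorithm~\ref{alg:iterSC_main}) essentially verbatim, with proper \Selfcorrection~(Theorem~\ref{thm:restatementofselfcorr}) as the base routine, but with the stopping criterion re-tuned to the class $\calS(n-t)$. Concretely, set $\gamma := 2^{-2t}\varepsilon^6$ and let $\eta := \Omega(\gamma^{C})$ be the \Selfcorrection~promise for that value of $\gamma$, so $1/\eta = 2^{O(t)}\cdot\poly(1/\varepsilon)$. In iteration $t'$ the algorithm holds a residual state $\ket{\psi_{t'}}$, prepared by \LCU~from $U_\psi,\textsf{con}U_\psi$ and the Clifford unitaries for the stabilizer pieces found so far (Corollary~\ref{cor:LCUfinal}); it estimates $\Exp_{x\sim q_{\Psi_{t'}}}[|\la\psi_{t'}|W_x|\psi_{t'}\ra|^2]$ to additive error $\gamma/2$ via Lemma~\ref{lem:est_weyl_exp} (costing $\poly(2^{2t}/\varepsilon)$ copies); it halts if this estimate is below $\gamma$, or if the accumulated normalization $|\alpha_{t'}|^2$ is below $\varepsilon$; otherwise it calls \Selfcorrection~on $\ket{\psi_{t'}}$ to obtain a stabilizer state $\ket{\phi_{t'}}$ with $|\la\phi_{t'}|\psi_{t'}\ra|^2\geq\eta$, updates the noisy coefficients $\widetilde\beta_j$ as in Algorithm~\ref{alg:iterSC_main}, and continues.

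The only genuinely new ingredient is the correctness of the halting rule, and it is precisely the completeness lemma for high-stabilizer-dimension states. By Lemma~\ref{lemma:completeness_stabdim}, $\calF_{\calS(n-t)}(\ket{\psi_{t'}})\geq\varepsilon$ forces $\Exp_{x\sim q_{\Psi_{t'}}}[|\la\psi_{t'}|W_x|\psi_{t'}\ra|^2]\geq 2^{-2t}\varepsilon^6=\gamma$; taking the contrapositive, whenever the algorithm halts on the first condition it certifies $\calF_{\calS(n-t)}(\ket{\psi_{t'}})<\varepsilon$, and when it halts on the second condition it certifies $|\alpha_{t'}|^2<\varepsilon$, hence in both cases $|\alpha_{t'}|^2\cdot\calF_{\calS(n-t)}(\ket{\psi_{t'}})<\varepsilon$ since fidelities are at most $1$. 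Thus the output $\ket{\psi}=\sum_{i\in[k]}\beta_i\ket{\phi_i}+\alpha\ket{\phi^\perp}$ with $\ket{\phi^\perp}$ the final (\LCU-prepared) residual satisfies $|\alpha|^2\cdot\calF_{\calS(n-t)}(\ket{\phi^\perp})\leq\varepsilon$. Because the estimate is only $(\gamma/2)$-accurate one compares against $\gamma/2$ versus $\gamma$ rather than a sharp threshold — exactly as in the tolerant tester of Theorem~\ref{thm:test_stab_dim_with_gowers} — which only affects constants, absorbed into $C$.

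The iteration bound and complexity then follow from the analysis already carried out for Theorem~\ref{thm:iterSC_gen}. In every non-halting round the promise $\Exp_{x\sim q_{\Psi_{t'}}}[\cdot]\geq\gamma$ holds, so \Selfcorrection~delivers $|\la\phi_{t'}|\psi_{t'}\ra|^2\geq\eta$, and the $\ell_2$-progress argument (Claim~\ref{claim:diff_l2_norm_residuals_iterSC} and the consequent bound on the number of rounds) gives $k\leq O(1/\eta^2)=2^{O(t)}\poly(1/\varepsilon)=\poly(n,2^t,1/\varepsilon)$. Per round the costs are: one \Selfcorrection~call in time $\poly(n,1/\gamma)=\poly(n,2^{2t},1/\varepsilon)$ (Theorem~\ref{thm:restatementofselfcorr}, under the algorithmic $\PFR$ Conjecture~\ref{conj:algopfrconjecture}); one Weyl-expectation estimate in $\poly(2^{2t}/\varepsilon)$; the \LCU~state preparation in $\poly(n,1/\eta)$ with success probability $\poly(\eta,\varepsilon)$ (Corollary~\ref{cor:LCUfinal}); plus $O(k)$ Hadamard tests to re-estimate the $\beta_j$ to the precision set by the error schedule. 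Multiplying through by $k$ and by a $\log(1/\upsilon)$ factor for a union bound over rounds (setting per-round failure probability $\upsilon/\poly(k)$) yields total runtime $\poly(n,2^t,1/\varepsilon,\log(1/\upsilon))$, and since the $\ket{\phi_i}$ come from proper \Selfcorrection~they are genuine stabilizer states.

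The main work — the only place one actually checks something rather than quoting a prior bound — is verifying that the error-robust machinery of Section~\ref{sec:errorSCsection} survives the new parameters. The error-schedule idea (never assume an a priori bound on $k$; in round $t'$ re-estimate all of $\la\phi_j|\psi\ra$, hence all $\widetilde\beta_j$, to error $\delta/{t'}^{4}$ with $\delta=\poly(\eta)$) is unchanged, as are Lemma~\ref{lem:properties_beta}, Corollary~\ref{corr:properties_r}, Lemma~\ref{claim:approxtildec} and Claim~\ref{claim:high_norm_residual_state}; one simply re-reads those proofs with $\eta=\Omega((2^{-2t}\varepsilon^6)^C)$, confirms the Taylor-expansion side conditions such as $\delta/\eta\leq 1/2$ still hold, and confirms that the \LCU~success probability stays $\geq\poly(\eta,\varepsilon)$. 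Since $1/\eta$ is polynomial in $n,2^t,1/\varepsilon$, none of these checks introduces a worse-than-polynomial blow-up, which is exactly why the $2^t$ dependence in the final bound is only polynomial rather than exponential in $t$. I expect this bookkeeping to be the only delicate part; the structural argument is a direct transplant of Theorems~\ref{thm:iterSC_gen} and~\ref{thm:inverse_gowers3_stab_dim}.
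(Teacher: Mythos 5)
Your proposal is correct and takes essentially the same route as the paper: the paper itself proves this result by a remark stating that one runs Algorithm~\ref{alg:iterSC_main} with the stopping threshold replaced by $2^{-2t}\varepsilon^6$ (justified exactly by Lemma~\ref{lemma:completeness_stabdim}) and then re-reads the analysis of Theorem~\ref{thm:iterSC_gen} with the re-scaled $\eta$. The extra bookkeeping you spell out (Weyl-expectation estimation cost, $\LCU$ success probability, error-schedule side conditions, union bound over rounds) is just the unpacking the paper elides, and your use of $\gamma=2^{-2t}\varepsilon^6$ matches the paper's remark and Claim~\ref{claim:stabdim_interim_result} (the $2^{-t}$ in the theorem statement's definition of $\eta$ versus the $2^{-2t}$ elsewhere is an internal inconsistency in the paper, absorbed into the unspecified constant $C$).
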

\begin{remark} The algorithm used in the above theorem is that of  Algorithm~\ref{alg:iterSC_main} with the modified stopping condition: we stop at an iteration $k$ if $\Exp_{x \sim q_\Psi}\left[ |\la \phi^\perp | W_x | \phi^\perp \ra|^2 \right] < 2^{-2t} \varepsilon^6$ or $\prod_{j=1}^{k-1} (\widetilde{r}_j^{(k-1)})^2 \leq \varepsilon$. We do not include the proof here since it is exactly the same as Theorem~\ref{thm:iterSC_gen} with replaced~parameters. 
\end{remark}
The above result assumes algorithmic $\PFR$. Without assuming this conjecture, we could have used the algorithmic for agnostic learning states with stabilizer dimension at least $n-t$ from \cite{chen2024stabilizer} and which would have given us a $\poly(n,(2^t/\varepsilon)^{O(\log(1/\varepsilon))},\log(1/\upsilon))$-time algorithm. We now have the following claim regarding the structured state $\ket{\widehat{\psi}}=\sum_{i\in [k]} \beta_i \ket{\phi_i}$ in Theorem~\ref{thm:iterSC_high_stab_dim}, which is the analogue of Lemma~\ref{cor:low_stab_extent_inner_prod} to the case where $t>0$.
\begin{claim} \label{claim:stabdim_interim_result}
Let $\varepsilon\in (0,1),\eta= \left(2^{-2t} \varepsilon^6 \right)^C$. For every $\ket{\psi}$, in $\poly(n,1/\varepsilon,1/\eta)$ time we can find a~stabilizer-rank $(1/\eta^2)=\poly(2^{2t}/\varepsilon)$ state~$\ket{\widehat{\psi}}$ (up to a normalization) using Algorithm~\ref{alg:iterSC_main} such that for any state $\ket{\varphi} \in \calS(n-t)$ we~have
$$
|\langle \varphi|\psi\rangle| - |\langle \varphi|\widehat{\psi}\ra| \leq \sqrt{\varepsilon}.
$$
\end{claim}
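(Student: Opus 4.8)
The plan is to mirror the proof of Lemma~\ref{cor:low_stab_extent_inner_prod}, but with the stopping condition calibrated for stabilizer dimension $n-t$ instead of $t=0$. First I would run the iterative $\Selfcorrection$ procedure of Theorem~\ref{thm:iterSC_high_stab_dim} (i.e., Algorithm~\ref{alg:iterSC_main} with the modified stopping criterion) with the error parameter therein instantiated at $\varepsilon$ as given in the claim, so that $\eta = (2^{-2t}\varepsilon^6)^C$ and the output is a stabilizer-rank $k \leq 1/\eta^2 = \poly(2^{2t}/\varepsilon)$ state $\ket{\widehat{\psi}}$ (up to normalization) with
$$
\ket{\psi} = \ket{\widehat{\psi}} + \alpha \ket{\phi^\perp}, \qquad |\alpha|^2 \cdot \calF_{\calS(n-t)}{\ket{\phi^\perp}} \leq \varepsilon.
$$
The complexity bound is then immediate from Theorem~\ref{thm:iterSC_high_stab_dim}.

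Next, for any state $\ket{\varphi} \in \calS(n-t)$, I would write
$$
\la \varphi | \psi \ra = \la \varphi | \widehat{\psi} \ra + \alpha \la \varphi | \phi^\perp \ra,
$$
so that by the reverse triangle inequality
$$
\Big| |\la \varphi | \psi \ra| - |\la \varphi | \widehat{\psi}\ra| \Big| \leq |\alpha| \cdot |\la \varphi | \phi^\perp \ra|.
$$
The key point is bounding $|\la \varphi | \phi^\perp \ra|$: since $\ket{\varphi} \in \calS(n-t)$, by definition $|\la \varphi | \phi^\perp \ra|^2 \leq \calF_{\calS(n-t)}{\ket{\phi^\perp}}$, hence $|\alpha| \cdot |\la \varphi | \phi^\perp \ra| \leq \sqrt{|\alpha|^2 \cdot \calF_{\calS(n-t)}{\ket{\phi^\perp}}} \leq \sqrt{\varepsilon}$. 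Combining the two displays gives exactly the claimed bound $|\la \varphi | \psi \ra| - |\la \varphi | \widehat{\psi} \ra| \leq \sqrt{\varepsilon}$.

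The main thing to verify carefully is not the final chain of inequalities (which is routine, essentially identical to Eq.~\eqref{eq:interim_stab_fidelity} in the proof of Lemma~\ref{cor:low_stab_extent_inner_prod}) but rather that the promise of the modified iterated algorithm is indeed $|\alpha|^2 \cdot \calF_{\calS(n-t)}{\ket{\phi^\perp}} \leq \varepsilon$ and that the rank/time bounds scale with $2^{2t}$ as stated. This follows from Theorem~\ref{thm:iterSC_high_stab_dim}, whose proof in turn relies on the completeness analysis for high stabilizer-dimension states (Lemma~\ref{lemma:completeness_stabdim}) in place of Fact~\ref{fact:lower_bound_stabilizer_fidelity}: the new stopping condition $\Exp_{x \sim q_\Psi}[|\la \phi^\perp | W_x | \phi^\perp \ra|^2] < 2^{-2t}\varepsilon^6$ guarantees via Lemma~\ref{lemma:completeness_stabdim} (in contrapositive) that $\calF_{\calS(n-t)}{\ket{\phi^\perp}} < \varepsilon$, and the other stopping condition $\prod_j (\widetilde{r}_j)^2 \leq \varepsilon$ controls $|\alpha|^2$ directly. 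So once Theorem~\ref{thm:iterSC_high_stab_dim} is in hand, the proof of this claim is a two-line corollary; the only genuine content is invoking the correct theorem with the correct parameter substitution.
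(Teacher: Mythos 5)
Your proposal is correct and follows essentially the same approach as the paper: invoke Theorem~\ref{thm:iterSC_high_stab_dim} (Algorithm~\ref{alg:iterSC_main} with the stopping condition recalibrated via Lemma~\ref{lemma:completeness_stabdim}) to obtain the decomposition with $|\alpha|^2 \cdot \calF_{\calS(n-t)}(\ket{\phi^\perp}) \leq \varepsilon$, then a triangle-inequality chain. The only cosmetic difference is that you use the combined product guarantee directly, while the paper cases on which stopping condition fired; these are interchangeable.
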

\begin{proof}
We run Algorithm~\ref{alg:iterSC_main} with $\eta= \left(2^{-2t} \varepsilon^6 \right)^C$ as stated in the theorem and wherein we stop at an iteration $k$ if $\Exp_{x \sim q_\Psi}\left[ |\la \phi^\perp | W_x | \phi^\perp \ra|^2 \right] < 2^{-2t} \varepsilon^6$ or $\prod_{j=1}^{k-1} (\widetilde{r}_j^{(k-1)})^2 \leq \varepsilon$. Invoking Theorem~\ref{thm:iterSC_high_stab_dim}, we would then find a stabilizer-rank $k\leq 1/\eta^2=\poly(2^{2t}/\varepsilon)$ state (up to some normalization) $\ket{\widehat{\psi}}$ such~that
$$
\ket{\psi} = \ket{\widehat{\psi}} + \alpha \ket{\phi^\perp}.
$$
Furthermore by the promise of Theorem~\ref{thm:iterSC_high_stab_dim}, either $\Exp_{x \sim q_\Psi}\left[ |\la \phi^\perp | W_x | \phi^\perp \ra|^2 \right] < 2^{-2t} \varepsilon^6$ (which by Lemma~\ref{lemma:completeness_stabdim} implies $\calF_{\calS(n-t)}(\ket{\phi^\perp}) < \varepsilon$)  or $|\alpha|^2 \leq \varepsilon$. Now, for any state $\ket{\varphi} \in \calS(n-t)$, we have
\begin{align*}
    \la \varphi | \psi \ra &= \la \varphi | \widehat{\psi} \ra + \alpha \la \varphi | \phi^\perp \ra \\
    \left|\la \varphi | \psi \ra \right| &\leq \left|\la \varphi | \widehat{\psi} \ra \right| + \left| \alpha \right| \left| \la \varphi | \phi^\perp \ra \right|\\
    \left|\la \varphi | \psi \ra \right| - \left|\la \varphi | \widehat{\psi} \ra \right| &\leq  \sqrt{\varepsilon},
\end{align*}
since $|\langle \varphi |\phi^\perp\rangle| \leq \sqrt{\calF_{\calS(n-t)}{\ket{\phi^\perp}}} = \sqrt{\varepsilon}$ or $|\alpha| \leq \sqrt{\varepsilon}$. This proves the claim.
\end{proof}

Similarly, one can generalize the proof of Theorem~\ref{thm:learn_low_stab_extent_states} to the setting where $t>0$, i.e., $\ket{\psi}$ is a sum of states whose stabilizer dimension is $n-t$. The proof of this corollary is exactly the same as Theorem~\ref{thm:learn_low_stab_extent_states}, except that one uses Claim~\ref{claim:stabdim_interim_result} in the proof.
\begin{corollary}
Let $t \in \mathbb{N}$ and $t < n$. Suppose $\ket{\psi}$ is an unknown $n$-qubit quantum state such that $\ket{\psi}=\sum_i c_i\ket{\phi_i}$ where each $\ket{\phi_i}$ has stabilizer dimension $(n-t)$ i.e., $\ket{\phi_i} \in \calS(n-t)$ and $\sum_i|c_i|\leq \xi$. Assuming the algorithmic $\PFR$ conjecture,  there exists an algorithm that outputs $\ket{\widetilde{\psi}}$ such that
$$
|\la \widetilde{\psi} | \psi \ra|^2 \geq \frac{1}{2} - \varepsilon,
$$
using queries to the state preparation unitary $U_\psi,\textsf{con}U_\psi$, copies of $\ket{\psi}$ and in $\poly(n,\xi,2^t,1/\varepsilon)$ time. The algorithm also outputs a circuit that prepares $\ket{\widetilde{\psi}}$ and with gate complexity $\poly(n,\xi,2^t,1/\varepsilon)$.
\end{corollary}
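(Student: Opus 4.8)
The plan is to mirror the proof of Theorem~\ref{thm:learn_low_stab_extent_states} almost verbatim, replacing the low-stabilizer-extent decomposition by the high-stabilizer-dimension one and substituting the bound of Claim~\ref{claim:stabdim_interim_result} for that of Corollary~\ref{cor:low_stab_extent_inner_prod}. First I would run the modified iterative $\Selfcorrection$ procedure of Theorem~\ref{thm:iterSC_high_stab_dim} (i.e., Algorithm~\ref{alg:iterSC_main} with the stopping condition phrased in terms of $\calS(n-t)$, which is where the algorithmic $\PFR$ conjecture is used) on $\ket{\psi}$ with the internal error parameter set to $\varepsilon_0 = (\varepsilon/(2\xi))^2$, and $\eta$ as in that theorem, so that $1/\eta = \poly(2^{t}, 1/\varepsilon_0) = \poly(2^{t}, \xi/\varepsilon)$. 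This outputs stabilizer states $\{\ket{\phi_i}\}_{i\in[k]}$ with $k \le 1/\eta^2 = \poly(2^{2t}\xi/\varepsilon)$, coefficients $\beta \in \calB_\infty^k$, $\alpha \in \calB_\infty$, and a preparation circuit for the residual $\ket{\phi^\perp}$, such that, writing $\ket{\widehat\psi} = \sum_{i\in[k]}\beta_i\ket{\phi_i}$, we have $\ket{\psi} = \ket{\widehat\psi} + \alpha\ket{\phi^\perp}$ with $|\alpha|^2 \cdot \calF_{\calS(n-t)}(\ket{\phi^\perp}) \le \varepsilon_0$. The runtime of this step is $\poly(n, 2^t, 1/\varepsilon_0, \log(1/\upsilon)) = \poly(n, \xi, 2^t, 1/\varepsilon)$ as required.

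The key step is to use the promise $\ket{\psi} = \sum_j c_j\ket{s_j}$ with $\ket{s_j} \in \calS(n-t)$ and $\sum_j|c_j| \le \xi$ to certify that $\ket{\widehat\psi}$ is already close to $\ket{\psi}$. Taking the inner product of $\ket{\psi} = \ket{\widehat\psi} + \alpha\ket{\phi^\perp}$ with $\ket{\psi}$ gives $1 - \la\psi|\widehat\psi\ra = \alpha\la\psi|\phi^\perp\ra$; expanding $\la\psi|\phi^\perp\ra = \sum_j\overline{c_j}\la s_j|\phi^\perp\ra$ and using that $|\alpha|\cdot|\la s_j|\phi^\perp\ra| \le \sqrt{|\alpha|^2\, \calF_{\calS(n-t)}(\ket{\phi^\perp})} \le \sqrt{\varepsilon_0}$ for each $\ket{s_j}\in\calS(n-t)$ (exactly the estimate inside the proof of Claim~\ref{claim:stabdim_interim_result}, now applied term-by-term), we obtain
$$
|1 - \la\psi|\widehat\psi\ra| \;\le\; \sum_j |c_j|\cdot |\alpha|\cdot|\la s_j|\phi^\perp\ra| \;\le\; \xi\sqrt{\varepsilon_0} \;=\; \varepsilon/2 ,
$$
hence $|\la\psi|\widehat\psi\ra|^2 \ge (1 - \varepsilon/2)^2 \ge 1 - \varepsilon$. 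From here the argument is identical to that of Theorem~\ref{thm:learn_low_stab_extent_states}: by Cauchy--Schwarz $\norm{\ket{\widehat\psi}}^2 \ge |\la\psi|\widehat\psi\ra|^2 \ge 1-\varepsilon$, while $\norm{\ket{\widehat\psi}}^2 = 1 + |\alpha|^2 - 2\,\mathsf{Re}(\alpha\la\psi|\phi^\perp\ra) \le 2 + \varepsilon$ using $|\alpha| < 1$; so the normalized state $\ket{\widetilde\psi} = \ket{\widehat\psi}/\norm{\ket{\widehat\psi}}$ satisfies $|\la\widetilde\psi|\psi\ra|^2 = |\la\widehat\psi|\psi\ra|^2/\norm{\ket{\widehat\psi}}^2 \ge (1-\varepsilon)/(2+\varepsilon) \ge 1/2 - \varepsilon$ (rescaling $\varepsilon$ by an absolute constant if one wants the exact constant in the statement).

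Finally, to output a circuit preparing $\ket{\widetilde\psi}$, I would use Lemma~\ref{lem:clifford_synthesis} to obtain, for each $i\in[k]$, a Clifford circuit $C_i$ with $O(n^2)$ gates preparing the stabilizer state $\ket{\phi_i}$, and then invoke the $\LCU$ corollary (Corollary~\ref{cor:LCUfinal}) on $\{C_i\}_{i\in[k]}$ with coefficients $\{\beta_i\}_{i\in[k]}$; this prepares $\ket{\widetilde\psi} = \ket{\widehat\psi}/\norm{\ket{\widehat\psi}}$ with success probability $(\norm{\ket{\widehat\psi}}/\sum_i|\beta_i|)^2 \ge ((1-\varepsilon)^{1/2}/k)^2 = \Omega(\eta^4)$, since $|\beta_i| \le 1$ and $k \le 1/\eta^2$. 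Repeating $O(1/\eta^4)$ times and detecting success as in Lemma~\ref{lemma:lcu_basic} yields the claimed preparation circuit, of gate complexity $\poly(n, 1/\eta) = \poly(n, \xi, 2^t, 1/\varepsilon)$. I expect the only real subtlety, as in Theorem~\ref{thm:learn_low_stab_extent_states}, is bookkeeping the parameter substitutions so that the $2^t$ dependence of $\eta$ (and hence of the stabilizer rank $1/\eta^2$ and of all three complexities) enters only polynomially; the rest is a direct transcription of the low-extent argument with $\calC(\xi)$ replaced by the class of $\xi$-bounded-$\ell_1$ combinations of $\calS(n-t)$ states.
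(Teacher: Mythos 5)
Your proposal is correct and follows the same route as the paper, which just says the proof is "exactly the same as Theorem~\ref{thm:learn_low_stab_extent_states}, except that one uses Claim~\ref{claim:stabdim_interim_result}." You simply spell out the details the paper leaves implicit: expanding $\ket{\psi}=\sum_j c_j\ket{s_j}$ with $\ket{s_j}\in\calS(n-t)$, bounding each $|\alpha||\la s_j|\phi^\perp\ra|\le\sqrt{\varepsilon_0}$ via the guarantee of Theorem~\ref{thm:iterSC_high_stab_dim}, summing with $\sum_j|c_j|\le\xi$, and then reusing the normalization and $\LCU$ steps from Theorem~\ref{thm:learn_low_stab_extent_states} unchanged.
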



\bibliographystyle{alpha}
\bibliography{references}

\appendix

\section{Proof of correctness of BSG test}\label{appsec:proof_bsg}
In this section, we prove Theorem~\ref{lemma:analysis_bsg}. For convenience, we first define the distribution $\Dpsi$ defined on the set $S$ as follows
$$
\Dpsi(v)=\frac{q_\Psi(v) 2^np_\Psi(v)}{\sum_{x \in S} q_\Psi(x) 2^np_\Psi(x)}.
$$
For a vertex $u\in \calV$, recall that we define the following sets 
\begin{align*}
N_\zeta(u) &= \left\{v \in S : (u,v) \in \calE_\zeta \right\}\\
Q_\zeta(u) &= \left\{v \in N_\zeta(u) : \Pr_{v_1\sim \Dpsi}\left[v_1 \in N_\zeta(u) \text{ and } \Pr_{v_2\sim \Dpsi} \left[v_2 \in N_\zeta(v) \cap N_\zeta(v_1) \right] \leq \rho_1 \right] > \rho_2 \right\}\\
T_\zeta(u) &= \left\{v \in N_\zeta(u) : \Pr_{v_1\sim \Dpsi}\left[v_1 \in N_\zeta(u) \text{ and } \Pr_{v_2\sim \Dpsi} \left[v_2 \in N_\zeta(v) \cap N_\zeta(v_1) \right] \leq \rho_1 \right] \leq \rho_2 \right\}.
\end{align*}
We now state the lemma, whose proof we adapt from~\cite{tulsiani2014quadratic} and prove under the Bell sampling distribution

\bsgtestlemma*

In order to prove the lemma, we first prove the claim that $|N_\zeta(u)|$ is large. To do this, we will require the following result regarding the number of length-$3$ paths in an approximate subgroup, which we will then use to show that the mass of $D_\Psi$ over $N_\zeta(u)$ is $\Omega(\poly(\gamma))$.

\begin{lemma}[{\cite[Lemma~7.13.11]{zhao2023graph}}]\label{lem:length3_paths}
Let $S$ be the approximate subgroup, i.e., $\Pr_{x,y\in S}[x+y\in S]\geq \Delta$. There exists $A' \subseteq S$ such that $|A'| \geq (\Delta^2/8) \cdot |S|$ and for every pair $(x,y) \in A' \times A'$, there are at least $(\Delta^6/1600) \cdot |S|^2$ pairs of points $(a,b) \in S \times S$ such that ``$xaby$" form a path of length $3$.
\end{lemma}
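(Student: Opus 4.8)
The plan is to recognize Lemma~\ref{lem:length3_paths} as the purely additive-combinatorial core of the Balog-Szemeredi-Gowers theorem and to prove it by the standard energy $+$ popular-sums $+$ dependent-random-choice route. Write $N = |S|$ and form the Cayley-type graph $G$ on vertex set $S$ in which $x \sim y$ iff $x + y \in S$; the hypothesis $\Pr_{x,y \in S}[x + y \in S] \ge \Delta$ says precisely that $G$ has at least $\Delta N^2$ ordered edges. The point of working with a Cayley graph is that $N_G(x) = x + S$ is a \emph{translate} of $S$, so that for fixed $a,b$ a length-$3$ walk $a - x - y - b$ (i.e.\ a tuple with $a + x, x + y, y + b \in S$) is in bijection with a representation $a + b = s_1 + s_2 + s_3$ with $s_i \in S$; writing $r_{kS}(w) := \#\{(s_1,\dots,s_k) \in S^k : s_1 + \dots + s_k = w\}$, the number of length-$3$ walks between $a$ and $b$ equals $r_{3S}(a+b)$. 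Thus the lemma is equivalent to: there is $A' \subseteq S$ with $|A'| \ge (\Delta^2/8)N$ such that $r_{3S}(a+b) \ge (\Delta^6/1600)N^2$ for all $a,b \in A'$.

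First I would use Cauchy-Schwarz to convert the edge count into an energy bound: $\Delta N^2 \le \#\{(x,y) \in S^2 : x+y \in S\} = \sum_{t \in S} r_{S+S}(t) \le |S|^{1/2}\big(\sum_t r_{S+S}(t)^2\big)^{1/2}$, so $E(S) := \sum_t r_{S+S}(t)^2 \ge \Delta^2 N^3$. Next, restrict to \emph{popular} sums: with $\rho := E(S)/(2N^2) \ge \Delta^2 N/2$ and $P := \{t : r_{S+S}(t) \ge \rho\}$, a Markov-type estimate gives $\sum_{t \notin P} r_{S+S}(t)^2 < \rho \sum_t r_{S+S}(t) = E(S)/2$, hence the popular sums still carry at least half the energy and $\sum_{t \in P} r_{S+S}(t) \ge \tfrac{1}{N}\sum_{t\in P} r_{S+S}(t)^2 \ge E(S)/(2N) \ge \Delta^2 N^2/2$. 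The observation linking this back to $G$ is that $t \in P$ means $r_{S+S}(t) \ge \rho$, i.e.\ any two vertices $u,v$ with $u + v \in P$ have at least $\rho$ common neighbours in $G$, since $|N_G(u) \cap N_G(v)| = |(u+S) \cap (v+S)| = r_{S+S}(u+v)$.

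For the extraction of $A'$ I would run dependent random choice against the popular-sum graph $G_1$ on $S$ (edges $u \sim_1 v$ iff $u + v \in P$, which by the above has $\Omega(\Delta^2 N^2)$ ordered edges): pick $z \in S$ uniformly and set $A' := N_{G_1}(z) = \{a \in S : a + z \in P\}$, so that $\Exp_z |A'| = \tfrac1N \sum_z |N_{G_1}(z)| \ge \Delta^2 N/2$; fixing a good $z$ (and allowing a small cleanup) gives $|A'| \ge (\Delta^2/8)N$. For $a,b \in A'$ we then have $|C_a| \ge \rho$ and $|C_b| \ge \rho$ where $C_a := N_G(a) \cap N_G(z)$ and $C_b := N_G(b) \cap N_G(z)$, both contained in the translate $z + S$, and every edge of $G$ between $C_a$ and $C_b$ yields a length-$3$ walk $a - c - c' - b$; counting such edges via one further Cauchy-Schwarz / inclusion-exclusion argument on the structured sets $C_a, C_b$ inside $z+S$ (noting $G[z+S]$ is isomorphic to $G$ itself) gives $r_{3S}(a+b) \ge (\Delta^6/1600)N^2$, the exponent $6$ being exactly the accumulated cost of the two energy-to-count Cauchy-Schwarz passes and the popular-sums truncation. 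The hard part, and the step I would expect to require the most care, is precisely this last one: guaranteeing the lower bound on the edge count between $C_a$ and $C_b$ for \emph{every} pair $a,b \in A'$ rather than only for a typical pair -- which is what forces the dependent-random-choice formulation above (together with an extra cleanup discarding the few $a \in A'$ whose representation systems $\{(s,s') : s+s' = a+z\}$ are too concentrated) in place of a naive averaging. Translating back through the bijection of the first paragraph then gives the lemma.
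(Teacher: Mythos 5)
The paper does not prove this lemma itself; it is quoted directly from Zhao's textbook, so there is no internal proof to compare against. Your general strategy --- converting the edge count into additive energy via Cauchy--Schwarz, restricting to popular common neighbourhoods, and extracting $A'$ by dependent random choice --- is indeed the standard route for the Balog--Szemer\'edi--Gowers paths lemma and is the family of ideas Zhao uses.

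There is, however, a genuine gap in the central identity on which your argument rests. You assert that $N_G(x) = x + S$ is a translate of $S$, and hence that the number of length-$3$ walks from $a$ to $b$ equals $r_{3S}(a+b)$ and that $|N_G(u)\cap N_G(v)| = |(u+S)\cap(v+S)| = r_{S+S}(u+v)$. All three statements are false in this setting, because the vertex set of $G$ is $S$ itself, not the whole ambient group. With vertex set $S$ one has $N_G(x) = S \cap (x+S)$, not $x+S$. A walk $x - a - b - y$ requires $a, b \in S$ in addition to $x+a,\ a+b,\ b+y \in S$, so the number of length-$3$ walks from $x$ to $y$ equals $\#\{(s_1,s_2,s_3) \in S^3 : s_1+s_2+s_3 = x+y,\ x+s_1 \in S,\ y+s_3 \in S\}$, which is at most but in general much smaller than $r_{3S}(x+y)$. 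Likewise $|N_G(u)\cap N_G(v)| = |S \cap (u+S) \cap (v+S)| \le r_{S+S}(u+v)$, and this inequality points the \emph{wrong} way: $a + z \in P$ (i.e.\ $r_{S+S}(a+z)\ge\rho$) gives only an upper bound on $|C_a|$, not the lower bound $|C_a|\ge\rho$ you later rely on. The closing remark that $G[z+S]$ is isomorphic to $G$ is also unjustified, since the relevant induced subgraph lives on $S \cap (z+S)$, not on $z+S$.

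To repair the argument you should work directly with the graph quantity $p(u,v) := |N_G(u)\cap N_G(v)|$ (the number of length-$2$ walks from $u$ to $v$ in $G$) rather than with $r_{S+S}$. The correct energy bound is $\sum_{u,v\in S} p(u,v) = \sum_{w\in S} |N_G(w)|^2 \ge \bigl(\sum_{w\in S}|N_G(w)|\bigr)^2/|S| \ge \Delta^2 |S|^3$; one then calls a pair $(u,v)$ popular when $p(u,v)\ge\rho$ and runs dependent random choice on the resulting popular-pair graph, together with the refinement that discards those $a \in N_G(z)$ for which too many $v\in N_G(z)$ give an unpopular pair with $a$. You correctly identify this cleanup as the hard step but do not carry it out, and it cannot be dispensed with: it is precisely what upgrades ``almost every pair of $A'$'' to ``every pair of $A'$'', which the lemma demands.
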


\sizeNu*
\begin{proof}
To see this lower bound, we first express the LHS of the inequality as follows
\begin{align*}
&\sum_{u,v\in S} D_\Psi(u)D_\Psi(v)[v\in N_{\gamma/4}(u)]\\
&=    \sum_{u,v\in S}[u+v\in S]\cdot  D_\Psi(u)D_\Psi(v)\\
&\geq 2^{2n}   \sum_{u,v\in S} [u+v\in S]\cdot \sum_{a}p_\psi(a)p_\psi(a+u)p_\psi(u)\cdot \sum_{b}p_\psi(b)p_\psi(b+v)p_\psi(v)\\
&\geq 2^{2n}   \sum_{u,v\in S} [u+v\in S]\cdot \sum_{a\in S}[a+u\in S]p_\psi(a)p_\psi(a+u)p_\psi(u)\cdot \sum_{b\in S}[b+v\in S]p_\psi(b)p_\psi(b+v)p_\psi(v)\\
&\geq (\gamma/4)^6\cdot 2^{-4n}\sum_{a,b\in S}\sum_{u,v\in S}[a+u,u+v,v+b\in S]\\
&\geq  (\gamma/4)^6\cdot 2^{-4n}\sum_{a,b\in A'\times A'}\sum_{u,v\in S}[auvb \text{ forms a  length-}3 \text{ path }]\\
&\geq  (\gamma/4)^6\cdot 2^{-4n}\sum_{a,b\in A'\times A'} (\gamma^{30}/(1.6 \times 10^9)) \cdot |S|^2\\
&=  (\gamma/4)^6\cdot 2^{-4n} \cdot (|A'|^2/2) \cdot (\gamma^{30}/(2^{10} \times 10^8)) \cdot |S|^2\\
&\geq  (\gamma/4)^6\cdot 2^{-4n}\gamma^{50} |S|^4/(2^{15} \times 10^{12}) \\
&\geq  \gamma^{64}/(2^{39} \times 10^{15}),
\end{align*}
where the third line follows from the definition of $D_\Psi$ (Eq.~\eqref{eq:induced_dist_SAMPLE}. Throughout, we used the fact that the graph defined on the vertices in $S$ contains an edge between $(a,b)$ if $a+b\in S$. In the sixth line, we consider the set $A' \subseteq S$ of Lemma~\ref{lem:length3_paths} which satisfies $|A'| \geq (\gamma^{10}/400) \cdot |S|$ as the density of the graph $\G(S,\E)$ is $\geq \gamma^5/20$. In the seventh line, we used Lemma~\ref{lem:length3_paths} to comment on the number of length-$3$ paths and the number of different pairs in $A'$ is $\binom{|A'|}{2} \geq |A'|^2/2$ in the eighth line. The final inequality follows from noting that $|S| \geq \gamma^2 \cdot 2^n/80$ from Lemma~\ref{lem:sample_bds}.

The result regarding the size of $N(u)$ is now immediate. We know that $N_{\gamma/4}(u) \subseteq N_\zeta(u)$ for $\zeta \in [0,\gamma/4]$. So it is enough to give a lower bound on the size of $N_{\gamma/4}$ which we do as follows. In Lemma~\ref{lem:Dpsilowerbound}, we have shown that
$$
\Exp_{u\sim D_\Psi}\Big[\sum_{v\in N_{\gamma/4}(u)}D_\Psi(v)\Big]\geq  \gamma^{64}/(2^{39} \times 10^{15}).
$$
Using the upper bound of $D_\Psi(x) \leq (2^{10} \cdot 10^2)/(\gamma^{10} \cdot |S|), \,\, \forall x \in S$ from Fact~\ref{fact:D_ub}, we then obtain
$$
\Exp_{u\sim D_\Psi}[|N_{\gamma/4}(u)|]\geq  \gamma^{74}/(2^{49} \times 10^{17}),
$$
which completes the proof.
\end{proof}

We now start proving claims directly corresponding to parts of Theorem~\ref{lemma:analysis_bsg}. We will use the values of the parameters as described in Section~\ref{sec:algo_BSG} and as used in Algorithm~\ref{algo:bsg_test}. Let us now introduce some useful claims that we will need. We first observe that for $\zeta_1',\zeta_2',\zeta_3',\rho_1',\rho_2'$, one has the following inclusion for any $u \in \calV$
$$
T(u,\zeta_1,\zeta_2,\zeta_3,\rho_1,\rho_2) \subseteq T(u, \zeta_1 - \zeta_1', \zeta_2 + \zeta_2', \zeta_3 - \zeta_3', \rho_1 - \Delta\rho_1, \rho_2 + \Delta\rho_2)
$$
For the instantiations of $\zeta_1=\zeta_3=\zeta+\mu$, $\zeta_2 = \zeta-\mu$, $\zeta_1'=\zeta_2'=\zeta_3'=\mu$ and $=\Delta\rho_1=\rho_1/10$ (i.e., $\rho_1'=9/10\cdot\rho_1$), $\Delta\rho_2=\rho_2/10$ (i.e., $\rho_2'=11/10\cdot\rho_2$), it is then clear that
$A^{(1)}(u) \subseteq A^{(2)}(u)$. 
We first item $(ii)$ of Lemma~\ref{lemma:lb_size_A1} regarding a lower bound on the size of $A^{(1)}(u)$ (as mentioned in the statement of Theorem~\ref{lemma:analysis_bsg}), which will be the core part of the main lemma proof. We restate and prove it below.
\lemmacoreBSG*
\begin{proof}
The proof of items $(i),(iii)$ were provided in the main text. We give the proof of item $(ii)$ here.
 Consider parameters $\rho_1,\rho_2 \in (0,1)$ to be chosen later. Let us define the set $Q'(u)$ as
\begin{align}\label{eq:set_Qprime_u}
    &Q'(u) \\
    &:= N_{\zeta + \mu}(u) \setminus T(u, \zeta + \mu,\zeta-\mu,\zeta+\mu,\rho_1,\rho_2) \nonumber \\
    &= \left\{ v \in N_{\zeta + \mu}(u) : \Pr_{v_1 \sim \Dpsi}\left[v_1 \in N_{\zeta - \mu}(u) \text{ and } \Pr_{v_2 \sim \Dpsi} \left[v_2 \in N_{\zeta+\mu}(v) \cap N_{\zeta+\mu}(v_1) \right] \leq \rho_1 \right] \geq \rho_2 \right\}\nonumber.   
\end{align}
Moreover, let us denote $H_{\gamma'}(u)$ corresponding to the parameter $\gamma' \in (0,1)$ (to be fixed later) as
\begin{equation}\label{eq:set_Hu1}
    H_{\gamma'}(u) := \{ v \in N_{\zeta + \mu}(u) : D_\Psi(v) \geq \gamma' \cdot |N_{\zeta + \mu}|^{-1} \}.
\end{equation}
First, note that by definition of $A^{(1)}(u)$  in Eq.~\eqref{eq:set_Qprime_u},  we have
\begin{equation}
\label{eq:exp_size_A1}
\Exp_{u \sim \Dpsi}\left[ |A^{(1)}(u) \cap H_{\gamma'}(u)| \right] = \Exp_{u \sim \Dpsi}\left[ |(N_{\zeta + \mu}(u) \cap H_{\gamma'}(u)) \setminus Q'(u)|\right] \geq \Exp_{u \sim \Dpsi}\left[ |H_{\gamma'}(u)|\right] - \Exp_{u \sim \Dpsi}\left[ |Q'(u)|\right],
\end{equation}
where the third inequality follows from noting that $H(u) \subseteq N_{\zeta + \mu}(u)$ from Eq.~\eqref{eq:set_Hu1}.
Let us bound the two terms on the RHS of Eq.~\eqref{eq:exp_size_A1} separately. We first upper bound the expected size of $Q'(u)$. To that end, we call a pair $(v,v_1)$ $\emph{bad}$ if \footnote{Note that this definition of a bad pair is different from that typically used in the classical setting e.g., in \cite{viola2011combinatorics,tulsiani2014quadratic}. If $D(x)=1/|S|$, the given definitions reduces to the one used in the mentioned works.}
\begin{equation}\label{eq:bad_pair}
\sum_{x \in S} D(x) [x \in N_{\zeta + \mu}(v) \cap N_{\zeta + \mu}(v_1)] \leq \rho_1.
\end{equation}
Indeed, we defined a bad pair as above so as to suggest the following expression for the set $Q'(u)$ from Eq.~\eqref{eq:set_Qprime_u}:
$$
Q'(u) = \left\{ v \in N_{\zeta + \mu}(u) : \Pr_{v_1 \sim \Dpsi}\left[v_1 \in N_{\zeta - \mu}(u) \text{ and } (v,v_1) \text{ is a bad pair } \right] \geq \rho_2 \right\}.
$$
We now have the following observation relating the number of bad pairs to $|Q'(u)|$:
\begin{align}
&\quad \left|\{\text{bad } (v,v_1) : v \in N_{\zeta+\mu}(u), v_1 \in N_{\zeta - \mu}(u)\}\right| \\
&= \sum_{v, v_1 \in S} [v \in N_{\zeta + \mu}(u)] [v_1 \in N_{\zeta - \mu}(u)] [(v,v_1) \text{ is a bad pair}] \\
&\geq \sum_{v \in Q'(u)} \sum_{v_1 \in S} [v_1 \in N_{\zeta - \mu}(u)] [(v,v_1) \text{ is a bad pair}] \\
&= \frac{\gamma^{10}|S|}{C_1}\cdot \sum_{v \in Q'(u)} \sum_{v_1 \in S} \frac{C_1}{\gamma^{10} \cdot |S|} \left[v_1 \in N_{\zeta - \mu}(u) \text{ and } (v,v_1) \text{ is a bad pair } \right]\\
&\geq \frac{\gamma^{10}|S|}{C_1}\cdot  \sum_{v \in Q'(u)} \sum_{v_1 \in S} D(v_1) \left[v_1 \in N_{\zeta - \mu}(u) \text{ and } (v,v_1) \text{ is a bad pair } \right]\\
&\geq  \frac{\gamma^{10} |S|}{C_1}\cdot \rho_2 \cdot |Q'(u)|, \label{eq:interim_size_Qprime_u}
\end{align}
where in the third line, we only consider the sum over the subset of $v \in S$ which are also in $Q'(u)$  (and by definition in Eq.~\eqref{eq:set_Qprime_u} this also satisfies  $v \in N_{\zeta + \mu}(u)$), in the second-to-last inequality we used that $D(x) \leq C_1/(\gamma^{10} \cdot |S|)$ by Fact~\ref{fact:D_ub} with the constant set as $C_1 = 2^{10} \cdot 10^2$ and the final inequality used the definition of $Q'(u)$ to simplify the inner and hence outer sum.

We now use the above as an intermediate step to give an \emph{upper bound} on $\Exp_{u \sim \Dpsi}[|Q'(u)|]$ by upper bounding the following
\begin{align}\label{eq:ub_size_Su}
& \left|\{\text{bad } (v,v_1) : v \in N_{\zeta+\mu}(u), v_1 \in N_{\zeta - \mu}(u)\}\right| \nonumber \\
&= \left|\{\text{bad } (v,v_1) : v \in N_{\zeta+\mu}(u), v_1 \in N_{\zeta + \mu}(u)\}\right| \nonumber \\ 
& \quad + \left|\{\text{bad } (v,v_1) : v \in N_{\zeta+\mu}(u), v_1 \in N_{\zeta - \mu}(u) \setminus N_{\zeta + \mu}(u) \}\right| \nonumber \\
&\leq \underbrace{\left|\{\text{bad } (v,v_1) : v \in N_{\zeta+\mu}(u), v_1 \in N_{\zeta + \mu}(u)\}\right|}_{(*)} + \underbrace{|S| \cdot |N_{\zeta - \mu}(u) \setminus N_{\zeta + \mu}(u)|}_{(**)},
\end{align}
where we used a simple upper bound of the second term in the third line considering all $v \in S$ and $v_1$ that satisfies the mentioned condition. We now analyze $\Exp_{u \sim \Dpsi}[(*)]$ and $\Exp_{u \sim \Dpsi}[(**)]$ separately. To bound $\Exp_{u \sim \Dpsi}[(*)]$, we make the following observation.
\begin{align}
\Exp_{u \sim \Dpsi}[(*)] 
&= \sum_{u \in S} D(u) \sum_{(v,v_1) \text{ bad pairs in } S^2} [ v \in N_{\zeta+\mu}(u), v_1 \in N_{\zeta - \mu}(u)] \\
&= \sum_{(v,v_1) \text{ bad pairs in } S^2} \sum_{u \in S} D(u) [ v \in N_{\zeta+\mu}(u), v_1 \in N_{\zeta - \mu}(u)] \\
&\leq \sum_{(v,v_1) \text{ bad pairs in } S^2} \rho_1 \\
&\leq \rho_1 |S|^2/2,
\label{eq:exp_star1}
\end{align}
where we used Eq.~\eqref{eq:bad_pair} for bad pairs $(v,v_1)$ in the third line and that there are at most $\binom{|S|}{2} \leq |S|^2/2$ bad pairs in the final inequality. 

To bound $\Exp_{u \sim \Dpsi}[(**)]$, we analyze 
$$
\Exp_{u \sim \Dpsi}[|N_{\zeta - \mu}(u) \setminus N_{\zeta + \mu}(u)|] = \Exp_{u \sim \Dpsi}[|N_{\zeta - \mu}(u)|] - \Exp_{u \sim \Dpsi}[|N_{\zeta + \mu}(u)|].
$$ 
Let $\rho_3 \in (0,1)$ be a parameter to be fixed later. We note that $|N_\zeta(u)|$ is monotonically decreasing in $\zeta$ for all $u \in S$. This implies that $\Exp_{u \sim \Dpsi}[|N_\zeta(u)|]$ is monotonically decreasing in $\zeta$. Recall from our choice of parameters for the $\BSG$ test (Section~\ref{sec:algo_BSG}), $|N_\zeta(u)| \leq |S|$ for $\zeta \in [0,\gamma/4]$ and in particular for $\zeta = \gamma/180$. We divide the interval $[\gamma/180,\gamma/18]$ into $1/\rho_3$ equally-sized consecutive sub-intervals of size $(\gamma \cdot \rho_3)/20$ each. Then, by pigeonhole principal at least one sub-interval (which we denote as $[\zeta-\mu,\zeta+\mu]$) in $[\gamma/180,\gamma/18]$ satisfies $\Exp_{u \sim \Dpsi}[|N_{\zeta - \mu}(u)|] - \Exp_{u \sim \Dpsi}[|N_{\zeta + \mu}(u)|] \leq (\gamma \cdot \rho_3/20)\cdot |S|$. As each sub-interval is equally likely to be chosen and there are $1/\rho_3$ many sub-intervals, we would have chosen the sub-interval $[\zeta-\mu,\zeta+\mu]$ with probability at least $\rho_3$. Thus, with probability at least $\rho_3$, we have
\begin{equation}\label{eq:exp_star2}
\Exp_{u \sim \Dpsi}[(**)] = |S| \cdot (\Exp_{u \sim \Dpsi}[|N_{\zeta - \mu}(u)|] - \Exp_{u \sim \Dpsi}[|N_{\zeta + \mu}(u)|]) \leq (\gamma \cdot \rho_3/20) \cdot |S|^2.
\end{equation}
Substituting Eqs.~\eqref{eq:exp_star1},\eqref{eq:exp_star2} into Eq.~\eqref{eq:ub_size_Su} and then using \eqref{eq:interim_size_Qprime_u} gives us that with probability at least $\rho_3$ (or for the choice of the sub-interval $[\zeta-\mu,\zeta+\mu]$) that
\begin{equation}
\label{eq:exp_size_Qu}
    \Exp_{u \sim \Dpsi}[|Q'(u)|] \leq \frac{C_1}{\rho_2 \gamma^{10}} \left(\frac{\rho_1}{2} + \frac{\rho_3 \gamma}{20} \right) \cdot |S|.
\end{equation}
Plugging Eq.~\eqref{eq:exp_size_Qu} and item $(i)$ of the lemma back (with constants explicitly described as part of the proof there) into Eq.~\eqref{eq:exp_size_A1}, for the sub-interval $[\zeta-\mu,\zeta+\mu]$ and using the definitions of $\zeta,\mu$, we have that 
\begin{align}
\Exp_{u \sim \Dpsi}[|A^{(1)}(u) \cap H_{\gamma'}(u)|] 
&\geq \Exp_{u \sim \Dpsi}[|H_{\gamma'}(u)|] - \Exp_{u \sim \Dpsi}[|Q'(u)|] \\
&\geq \left[\frac{\gamma^{138}}{4C_1 C_2^2} - \frac{C_1}{\rho_2 \gamma^{10}} \left(\frac{\rho_1}{2} + \frac{\rho_3 \gamma}{20} \right)\right] \cdot |S|.
\end{align}
By substituting and choosing the parameters in the above expression as
$$
C_1=2^{10}\cdot 10^2,\enspace C_2 = 2^{39} \times 10^{15}, \enspace \rho_1 = \frac{\gamma^{350}}{10240 C_1^3 C_2^5},\enspace\rho_2 = \frac{9\gamma^{202}}{2560 C_1 C_2^3}, \enspace \rho_3 = \frac{\gamma^{349}}{2560 C_1^3 C_2^5}
$$
we get that
\begin{equation}
    \Exp_{u \sim \Dpsi}[|A^{(1)}(u)|] \geq \gamma^{138} \cdot |S|/(8 C_1 C_2^2).
\end{equation}
In particular, this implies that $\Exp_{u \sim \Dpsi}[|A^{(1)}(u)|] \geq (\gamma^{138}/(8 C_1 C_2^2)) \cdot |S|$ with probability at least $\rho_3 = \gamma^{349}/(2560 C_1^3 C_2^5)$. Using Fact~\ref{fact:lowerboundexpectation}, we have that $|A^{(1)}(u)| \geq (\gamma^{138}/(16 C_1 C_2^2)) \cdot |S|$ with probability at least $\gamma^{138}/(16 C_1 C_2^2)$ over the choice of $u \sim D_\Psi$ and with probability at least $\gamma^{349}/(2560 C_1^3 C_2^5)$ over the choice of $\zeta,\mu$. Overall, the desired result then occurs with probability at least $\Omega(\gamma^{487})$ over the choice of $u \sim D_\Psi$ and parameters $\zeta,\mu$ (which are in turn used to define $\zeta_1,\zeta_2,\zeta_3$). This concludes the proof of item $(ii)$.
\end{proof}

An implication of Lemma~\ref{lemma:lb_size_A1} is that the set $A^{(2)}(u)$ is also pretty large with high probability over the choice of $u$ as $A^{(1)}(u) \subseteq A^{(2)}(u)$. To comment on the small doubling of the set $A^{(2)}(u)$ as stated in Theorem~\ref{lemma:analysis_bsg}, we adapt the result of \cite[Claim~4.6]{hatami2018cubic} to our setting.
\smalldoublingAtwo*
\begin{proof}
Recall that $\rho_1' = 10 \rho_1/11,\rho_2' = 10 \rho_2/9$. Suppose $a_1,a_2\in A^{(2)}(u)=T(u,\zeta,\zeta,\zeta,\rho_1',\rho_2')$ where $\zeta$ is the parameter corresponding to the interval for which Lemma~\ref{lemma:lb_size_A1} holds.  By definition of $T(u,\ldots)$, we then have
$$
\sum_{v_1 \in A^{(2)}(u)}D_\Psi(v_1) \Big[\underbrace{\sum_{v_2\in S}D_\Psi(v_2)[v_2\in N(a_i)\cap N(v_1)]\leq \rho_1'}_{E(a_i,v_1)}\Big] \leq \rho_2', \enspace \forall i \in \{1,2\},
$$
where we used the fact that $A^{(2)}(u)\subseteq N(u)$. Now, define the inner indicator as an events $E(a_1,v_1), E(a_2,v_1)$ for $a_1,a_2$ respectively. By a union bound we have that $[E(a_1,v_1)\vee E(a_2,v_1)]\leq [E(a_1,v_1)]+[E(a_2,v_1)]$, so we have that
\begin{align*}
\sum_{v_1 \in A^{(2)}(u)}D_\Psi(v_1) [E(a_1,v_1)\vee E(a_2,v_1)]\leq \sum_{b\in \{0,1\}}\sum_{v_1 \in A^{(2)}(u)}D_\Psi(v_1) [E(a_b,v_1)]\leq 2\rho_2'.
\end{align*}
In particular, the negation of the inequality above implies 
\begin{align}
\sum_{v_1 \in A^{(2)}(u)}D_\Psi(v_1) [\overline{E(a_1,v_1)}\wedge \overline{E(a_2,v_1)}]\geq \left(\sum_{x\in A^{(2)}(u)} D_\Psi(x) \right)-2\rho_2' \geq \frac{\gamma^{202}}{C_4},    
\end{align}
where we used Lemma~\ref{lemma:lb_size_A1} and definition of $\rho_2'$ in the second inequality (and $C_4>1$ is the appropriate constant by taking this difference). Using the upper bound of $D_\Psi(x) \leq C_1/(\gamma^{10} \cdot |S|)$ for all $x \in S$ (with $C_1 = 2^{10} \cdot 10^2)$ from Fact~\ref{fact:D_ub}, we then obtain
\begin{align}
\sum_{v_1 \in A^{(2)}(u)} \Big[\overline{E(a_1,v_1)}\wedge \overline{E(a_2,v_1)}\Big] \geq \frac{\gamma^{212}}{C_1 C_4} |S|,
\end{align}
i.e., for $\Omega(\gamma^{212})|S|$ many $w\in A^{(2)}(u)$, we have that
$$
\rho_1' \leq \sum_{x \in S} D(x)[x\in N(a_i)\cap N(w)] \leq \frac{C_1}{\gamma^{10} \cdot |S|} \sum_x [x\in N(a_i)\cap N(w)], \enspace \forall i \in \{1,2\},
$$
where the second inequality follows from noting $D(x) \leq C_1/(\gamma^{10} \cdot |S|)$ for all $x$ by Fact~\ref{fact:D_ub}. In particular, this implies that 
$$
|N(a_i)\cap N(w)|\geq \frac{\rho_1' \gamma^{10}}{C_1} |S|, \enspace \forall i \in \{1,2\}.
$$
For any such $w$ and $w_i \in N(a_i) \cap N(w)$, $(a_1,w_1,w,w_2,a_2)$ is a length-$4$ path in $\G(S,\E_{\eta + \mu})$. Thus, we have at least
$$
\frac{{\rho_1'}^2 \gamma^{20}}{C_1^2} \cdot \frac{\gamma^{212}}{C_1 C_4} \cdot |S|^3\geq \Omega(\gamma^{932})|S|^3
$$
distinct paths of length $4$ from $v_1$ to $v_2$ each differing in at least one vertex, where we substituted the definition of $\rho_1'$ above. For any such path $(v_1,w_1,w,w_2,v_2)$, we also have that
$$
v_1 + v_2 = (v_1 + w_1) + (w_1 + w) + (w + w_2) + (w_2 + v_2),
$$
where each of the collected terms on the right are in $S$ by the definition of edges in $\G(S,\E)$. We then have
\begin{equation}\label{eq:interim_small_doubling}
|A^{(2)}(u)+A^{(2)}(u)| \cdot \Omega(\gamma^{932}) \cdot |S|^3 \leq |A^{(2)}(u)|^4.
\end{equation}
Noting that $|A^{(2)}(u)| \leq |S|$ and substituting the above into Eq.~\eqref{eq:interim_small_doubling}, we get that 
$$
|A^{(2)}(u)+A^{(2)}(u)| \leq O(1/\gamma^{932}) \cdot |A^{(2)}(u)|.
$$
This proves the desired result.
\end{proof}
\end{document}